\newtheorem{rhp}{Riemann-Hilbert Problem}
\newtheorem{assume}{Assumption}
\newtheorem{lemma}{Lemma}
\newtheorem{proposition}{Proposition}
\newtheorem{theorem}{Theorem}
\newtheorem{corollary}{Corollary}
\newtheorem{definition}{Definition}
\newcommand{\trans}[1]{{#1}^{\ensuremath{\mathsf{T}}}}
\newcommand{\breather}{{\sf{B}}}
\newcommand{\kink}{{\sf{K}}}
\newcommand{\rotational}{{\sf{R}}}
\newcommand{\librational}{{\sf{L}}}
\newcommand{\bo}{\mathcal{O}}
\newcommand{\lo}{\mathfrak{o}}
\numberwithin{equation}{section}
\numberwithin{theorem}{section}
\numberwithin{proposition}{section}
\numberwithin{lemma}{section}
\numberwithin{corollary}{section}
\numberwithin{assume}{section}
\numberwithin{rhp}{section}
\numberwithin{figure}{section}
\numberwithin{definition}{section}
\title{The Sine-Gordon Equation in the Semiclassical Limit:  Dynamics
of Fluxon Condensates}
\author{Robert J. Buckingham}
\address[R. J. Buckingham]{Department of Mathematical Sciences\\ University of Cincinnati\\ PO Box 210025\\ Cincinnati, OH 45221.}
\email{buckinrt@uc.edu}
\urladdr{http://homepages.uc.edu/~buckinrt/}
\author{Peter D. Miller}
\address[P. D. Miller]{Department of Mathematics, University of Michigan\\East Hall\\530 Church St.\\Ann Arbor, MI 48109.}
\email{millerpd@umich.edu}
\urladdr{http://www.math.lsa.umich.edu/~millerpd/}
\date{\today}
\thanks{The authors were partially supported by the National Science Foundation under grant DMS-0807653. Additionally, R. J. Buckingham acknowledges the support of the Charles Phelps Taft Research Foundation.}
\begin{document}
\begin{abstract}
We study the Cauchy problem for the sine-Gordon equation in the semiclassical limit with pure-impulse
initial data of sufficient strength to generate 
both high-frequency rotational motion near the peak of the impulse profile and also high-frequency librational motion in the tails.
We show that for small times independent of
the semiclassical scaling parameter, 
both types of motion are accurately described by explicit formulae involving elliptic functions.  These
formulae
demonstrate consistency with predictions of Whitham's formal modulation theory in both the hyperbolic (modulationally stable) and elliptic (modulationally unstable) cases.
\end{abstract}
\maketitle
\tableofcontents

\section{Introduction}
The sine-Gordon equation in laboratory coordinates
\begin{equation}
\epsilon^2u_{tt}-\epsilon^2u_{xx}+\sin(u)=0,\quad x\in\mathbb{R},\quad t>0,
\label{eq:SG}
\end{equation}
is a model for magnetic flux propagation in long superconducting Josephson junctions \cite{ScottCR76}, but perhaps may be most easily
thought of as the equation describing mechanical vibrations of 
a ``ribbon'' pendulum (the continuum limit of a system of linearly arranged co-axial pendula with nearest-neighbor torsion coupling).  These and other applications are discussed in detail in the review article \cite{BaroneEMS71}.
The correct Cauchy problem for this equation involves determining the solution consistent with the given initial data
\begin{equation}
u(x,0)=F(x)\quad\text{and}\quad \epsilon u_t(x,0)=G(x).
\label{eq:IC}
\end{equation}
Here $F(x)$ and $G(x)$ are independent of the fixed parameter $\epsilon$.  
This Cauchy problem is
globally well-posed \cite{BuckinghamM08}: if $p\ge 1$ and $F$, $F'$,
and $G$ are functions in $L^p(\mathbb{R})$ then there is a unique
solution with $u$, $u_x$, and $u_t$ all lying in
$L^\infty_\mathrm{loc}(\mathbb{R}_+;L^p(\mathbb{R}))$.  Moreover if
the initial data have one more $L^p$ derivative, so that $F''$ and
$G'$ are functions in $L^p(\mathbb{R})$, then this further regularity
is preserved as well so that $u_{xx}$ and $u_{tx}$ lie in
$L^\infty_\mathrm{loc}(\mathbb{R}_+;L^p(\mathbb{R}))$.  These
well-posedness results also hold in a slightly modified form when the
initial displacement has nonzero asymptotic values:  $F(x)\to 2\pi
n_\pm$, $n_\pm\in\mathbb{Z}$, as $x\to\pm\infty$.  In this case the
topological charge $n_+-n_-$ is preserved for all time in the solution $u$.

If suitable initial conditions $F$ and $G$ are fixed, one may therefore in principle construct the unique global solution $u(x,t)$ of \eqref{eq:SG} subject to \eqref{eq:IC} for each positive
$\epsilon$.  Our interest is in the asymptotic behavior of this family of global solutions in the \emph{semiclassical limit} $\epsilon\to 0$.  The well-known elementary excitations
of the sine-Gordon equation include solitons of kink (or antikink) and breather type; these have a width proportional to $\epsilon$ while the length scales in the initial conditions
\eqref{eq:IC} are independent of $\epsilon$.  This suggests that when $\epsilon\ll 1$ the
initial conditions of the system can be viewed as preparing a ``condensate'' whose ultimate breakup will liberate approximately $1/\epsilon$ fundamental particles. 

The decay process will take some time to become complete, and during
the intermediate stages one may expect that some solitons may
partially emerge from the condensate moving with nearly identical
velocities, thus forming a modulated wavetrain.  The simplest models for these
wavetrains are the periodic (modulo $2\pi$) traveling wave exact solutions of
\eqref{eq:SG} of the form 
\begin{equation}
u(x,t)=U\left(\frac{\Phi(x,t)}{\epsilon}\right),\quad U(\zeta+2\pi)=U(\zeta)
\pmod{2\pi},\quad\Phi(x,t)=kx-\omega t,
\end{equation}
where $k$ is the \emph{wavenumber} and $\omega$ the \emph{frequency} of the wavetrain.
With this substitution, the sine-Gordon equation reduces to an ordinary
differential equation that can be integrated once to 
\begin{equation}
\frac{1}{2}(\omega^2-k^2)\left(\frac{dU}{d\zeta}\right)^2 -\cos(U)=\mathcal{E}
\label{eq:integratedODE}
\end{equation}
where $\mathcal{E}$ is an integration constant having the interpretation of \emph{energy}.  There are four types
of solutions subject to the periodicity condition:
\begin{itemize}
\item\emph{Superluminal librational wavetrains} correspond to
  $\omega^2>k^2$ and $|\mathcal{E}|<1$.  From a phase portrait it is
  evident that $U(\zeta+2\pi)=U(\zeta)$ if the nonlinear dispersion
  relation
\begin{equation}
\omega^2-k^2=2\pi^2\left[\int_{-\cos^{-1}(-\mathcal{E})}^{+\cos^{-1}(-\mathcal{E})}
\frac{d\phi}{\sqrt{\cos(\phi)+\mathcal{E}}}\right]^{-2}
\label{eq:NLdisprel1}
\end{equation}
is satisfied, and then $U$ oscillates about a mean value of $U=0\pmod{2\pi}$ with an amplitude strictly less than $\pi$.
\item\emph{Subluminal librational wavetrains} correspond to $\omega^2<k^2$
  and $|\mathcal{E}|<1$.  From a phase portrait it is evident that
  $U(\zeta+2\pi)=U(\zeta)$ if the nonlinear dispersion relation
\begin{equation}
\omega^2-k^2=-2\pi^2\left[\int_{-\cos^{-1}(\mathcal{E})}^{+\cos^{-1}(\mathcal{E})}
\frac{d\phi}{\sqrt{\cos(\phi)-\mathcal{E}}}\right]^{-2}
\label{eq:NLdisprel2}
\end{equation}
is satisfied,
and then $U$ oscillates about a mean value of $U=\pi\pmod{2\pi}$ with an amplitude strictly less than $\pi$.
\item\emph{Superluminal rotational wavetrains} correspond to $\omega^2>k^2$
  and $\mathcal{E}>1$.  Here the phase portrait indicates that
  $U(\zeta+2\pi)=U(\zeta)\pm 2\pi$ if the nonlinear dispersion
  relation
\begin{equation}
\omega^2-k^2=8\pi^2\left[\int_{-\pi}^\pi\frac{d\phi}{\sqrt{\cos(\phi)+\mathcal{E}}}\right]^{-2}
\label{eq:NLdisprel3}
\end{equation}
is satisfied, with $U'(\zeta)$ strictly nonzero being largest in
magnitude when $\zeta=0\pmod{2\pi}$.
\item\emph{Subluminal rotational wavetrains} correspond to $\omega^2<k^2$ and
  $\mathcal{E}<-1$.  Here the phase portrait indicates that
  $U(\zeta+2\pi)=U(\zeta)\pm 2\pi$ if the nonlinear dispersion
  relation
\begin{equation}
\omega^2-k^2=-8\pi^2\left[\int_{-\pi}^\pi\frac{d\phi}
{\sqrt{\cos(\phi)-\mathcal{E}}}\right]^{-2}
\label{eq:NLdisprel4}
\end{equation}
is satisfied, with $U'(\zeta)$ strictly nonzero being largest in
magnitude when $\zeta=\pi\pmod{2\pi}$.
\end{itemize}
In the classical mechanics literature \cite{Goldstein} the term \emph{libration} is used to characterize the
kind of motion in which both position and momentum are periodic functions while the term \emph{rotation} is used to characterize motions in which momentum is periodic but position is not because the momentum has a nonzero average value.  Furthermore, the dichotomy of subluminal waves versus superluminal waves is important because the sine-Gordon equation is strictly hyperbolic with characteristic velocities $v_\mathrm{p}=\pm 1$; thus subluminal waves have phase velocities bounded in magnitude by the characteristic velocity, while superluminal waves move faster than the (unit) characteristic speed.  In the superluminal (respectively, subluminal) case, the energy value of $\mathcal{E}=1$ (respectively, $\mathcal{E}=-1$) corresponds to the separatrix in the phase portrait of the simple pendulum, at which point the period (respectively, wavelength) of the waves tends to infinity.  In this limit, each of the four types of wavetrain degenerates into a train of well-separated kink-type solitons; for rotational waves the kinks all have the same topological charge, while for librational waves the pulses alternate from kink to antikink for zero net charge.

In a body of work beginning with his seminal 1965 paper
\cite{Whitham65}, Whitham developed a nonlinear theory of modulated wavetrains.
The main idea in the current context is that one seeks solutions of
the sine-Gordon equation \eqref{eq:SG} of the approximate form
\begin{equation}
u(x,t)=U\left(\frac{\Phi(x,t)}{\epsilon}\right) + \bo(\epsilon)
\label{eq:modulationansatz}
\end{equation}
over space and time intervals of $\bo(1)$ length, where two essential changes
are made in the leading term:
\begin{enumerate}
\item The parameters $k$, $\omega$, and $\mathcal{E}$ are no longer taken as
constant, but
are allowed to depend on $(x,t)$ as long as the appropriate nonlinear dispersion
relation is satisfied pointwise, and
\item The phase $\Phi(x,t)$ is replaced with a general (nonlinear) function
of $(x,t)$ and the local wavenumber and frequency are derived therefrom by the
relations
\begin{equation}
k(x,t):=\frac{\partial\Phi}{\partial x}\quad\text{and}\quad
\omega(x,t):=-\frac{\partial\Phi}{\partial t}.
\label{eq:komegadefine}
\end{equation}
\end{enumerate}
By consistency, the definition \eqref{eq:komegadefine} imposes that
$k$ and $\omega$ are necessarily linked by 
\begin{equation}
\frac{\partial k}{\partial t} +\frac{\partial\omega}{\partial x}=0,
\label{eq:conswaves}
\end{equation}
an equation that expresses \emph{conservation of waves}.  It then
follows that for the error term in \eqref{eq:modulationansatz} to
remain $\bo(\epsilon)$ formally, one additional partial differential
equation on $k$, $\omega$, and $\mathcal{E}$, only two of which are independent
due to the nonlinear dispersion relation, is required to hold.  This
equation may be derived by many different methods.  Perhaps the most
direct in this context 
is to appeal to an averaged variational principle \cite{Whitham74}.
The sine-Gordon equation \eqref{eq:SG} is the Euler-Lagrange equation for the
variational principle:
\begin{equation}
\frac{\delta}{\delta u}\iint L[u]\,dx\,dt = 0 \;\implies\;
\epsilon^2u_{tt}-\epsilon^2u_{xx}+\sin(u)=0
\end{equation}
where the Lagrangian density is
\begin{equation}
L[u]:=\frac{1}{2}\epsilon^2u_t^2 - \left[\frac{1}{2}\epsilon^2u_x^2-\cos(u)\right]
\end{equation}
having the interpretation of the difference between kinetic and
potential energy densities.  The procedure is to substitute the exact
wavetrain into $L$, using the differential equation
\eqref{eq:integratedODE} satisfied by $U$ to simplify the resulting
expression:
\begin{equation}
L\left[U\left(\frac{kx-\omega t}{\epsilon}\right)\right]=\frac{1}{2}
(\omega^2-k^2)U'(\zeta)^2 +\cos(U(\zeta)) = 
2(\mathcal{E}+\cos(U(\zeta)))-\mathcal{E},
\end{equation}
where for the exact solution $\epsilon \zeta=kx-\omega t$.  This expression is
periodic in $\zeta$ with period $2\pi$, so one may define its period average
as
\begin{equation}
\langle L\rangle := \frac{1}{\pi}\int_{-\pi}^{+\pi}(\mathcal{E}+\cos(U(\zeta)))\,d\zeta - \mathcal{E}.
\end{equation}
An exact expression for $U(\zeta)$ is not necessary; using the
differential equation \eqref{eq:integratedODE}, a shifted version of
$U$ may be used as the integration variable although the details are
slightly different in the four cases; defining integrals
\begin{equation}
I_\librational(\mathcal{E}):=\frac{\sqrt{2}}{\pi}
\int_{-\cos^{-1}(\mathcal{E})}^{+\cos^{-1}(\mathcal{E})}\sqrt{\cos(\phi)-\mathcal{E}}\,
d\phi>0,\quad -1<\mathcal{E}<1,
\label{eq:Ilibrational}
\end{equation}
and
\begin{equation}
I_\rotational(\mathcal{E}):=\frac{1}{\sqrt{2}\pi}
\int_{-\pi}^\pi\sqrt{\cos(\phi)-\mathcal{E}}\,d\phi>0,\quad
\mathcal{E}<-1,
\label{eq:Irotational}
\end{equation}
and noting for future reference that
\begin{equation}
\text{$I_\librational''(\mathcal{E})>0$ for $-1<\mathcal{E}<1$ while
$I_\rotational''(\mathcal{E})<0$ for $\mathcal{E}<-1$,}
\label{eq:IBKdoubleprime}
\end{equation}
the result is that
\begin{equation}
\langle L\rangle = J(\mathcal{E})\mu\sqrt{|\omega^2-k^2|}-\mathcal{E}
\end{equation}
where $\mu=\mathrm{sgn}(\omega^2-k^2)$ distinguishes the superluminal and subluminal
cases, and where 
\begin{equation}
J(\mathcal{E}):=I_\librational(-\mu \mathcal{E}) \quad \text{or}\quad
J(\mathcal{E}):=I_\rotational(-\mu \mathcal{E})
\label{eq:Jdefine}
\end{equation}
depending on whether we are considering librational or rotational wavetrains, 
respectively.
One then substitutes $k=\theta_x$ and $\omega=-\theta_t$ and formulates
the \emph{averaged variational principle}:
\begin{equation}
\frac{\delta}{\delta \mathcal{E}}
\iint \langle L\rangle\,dx\,dt= 0\quad\text{and}\quad
\frac{\delta}{\delta\theta}\iint\langle L\rangle\,dx\,dt = 0.
\end{equation}
The first of these two equations reproduces in each case the
corresponding nonlinear dispersion relation
\eqref{eq:NLdisprel1}--\eqref{eq:NLdisprel4} in the form
\begin{equation}
\frac{\partial\langle L\rangle}{\partial \mathcal{E}}=
J'(\mathcal{E})\mu\sqrt{|\omega^2-k^2|}-1
=0.
\label{eq:NLdisprelcommon}
\end{equation}
The second is a first-order
partial differential equation:
\begin{equation}
\frac{\partial}{\partial t}\left[-\frac{\partial\langle L\rangle}{\partial\omega}\right] + \frac{\partial}{\partial x}
\left[\frac{\partial\langle L\rangle}{\partial k}\right]=0.
\label{eq:WhithamII}
\end{equation}
This equation together with \eqref{eq:conswaves} and the nonlinear
dispersion relation \eqref{eq:NLdisprelcommon} to eliminate one of the
three variables yields a closed system of equations to determine these
fields as functions of $(x,t)$.  From the exposition in
\cite{ScottCR76} one learns to appreciate the utility of taking $\mathcal{E}$
and the \emph{phase velocity} 
\begin{equation}
v_\mathrm{p}:=\frac{\omega}{k} 
\end{equation}
as the two unknowns,
and thus the calculations go as follows.  Clearly, one has
\begin{equation}
-\frac{\partial\langle L\rangle}{\partial \omega}=
-\frac{\omega J(\mathcal{E})}{\sqrt{|\omega^2-k^2|}}\quad\text{and}\quad
\frac{\partial\langle L\rangle}{\partial k}=
-\frac{kJ(\mathcal{E})}{\sqrt{|\omega^2-k^2|}}.
\end{equation}
Using $\omega=v_\mathrm{p}k$ together with the nonlinear dispersion relation
in the form \eqref{eq:NLdisprelcommon} shows that $k$ and $\omega$ may be eliminated in favor of $v_\mathrm{p}$ and $\mathcal{E}$:
\begin{equation}
k=\frac{\sigma\mu}{J'(\mathcal{E})\sqrt{|v_\mathrm{p}^2-1|}}
\quad\text{and}\quad
\omega=v_\mathrm{p}k=\frac{\sigma\mu v_\mathrm{p}}
{J'(\mathcal{E})\sqrt{|v_\mathrm{p}^2-1|}},
\end{equation}
where $\sigma=\pm 1$ is an arbitrary sign whose role is to select different
branches of the dispersion relation.  Therefore, the variational modulation
equation \eqref{eq:WhithamII} becomes
\begin{equation}
\frac{\partial}{\partial t}\left[
\frac{v_\mathrm{p}J(\mathcal{E})}{\sqrt{|v_\mathrm{p}^2-1|}}\right] +
\frac{\partial}{\partial x}\left[
\frac{J(\mathcal{E})}{\sqrt{|v_\mathrm{p}^2-1|}}\right]=0
\end{equation}
and the conservation of waves equation \eqref{eq:conswaves} becomes
\begin{equation}
\frac{\partial}{\partial t}
\left[\frac{1}{J'(\mathcal{E})\sqrt{|v_\mathrm{p}^2-1|}}\right] +
\frac{\partial}{\partial x}
\left[\frac{v_\mathrm{p}}{J'(\mathcal{E})\sqrt{|v_\mathrm{p}^2-1|}}\right]=0
\end{equation}
(the sign $\sigma=\pm 1$ drops out in each case).  An application of the chain
rule puts the system in the form
\begin{equation}
\frac{\partial}{\partial t}\begin{bmatrix}v_\mathrm{p}\\\mathcal{E}\end{bmatrix}
+\frac{1}{\mathcal{V}(v_\mathrm{p},\mathcal{E})}
\begin{bmatrix}
v_\mathrm{p}[J(\mathcal{E})J''(\mathcal{E})+J'(\mathcal{E})^2] & 
(1-v_\mathrm{p}^2)^2J'(\mathcal{E})J''(\mathcal{E})\\
-J(\mathcal{E})J'(\mathcal{E}) & 
v_\mathrm{p}[J(\mathcal{E})J''(\mathcal{E})+J'(\mathcal{E})^2]\end{bmatrix}
\frac{\partial}{\partial x}\begin{bmatrix} v_\mathrm{p}\\
\mathcal{E}\end{bmatrix}=
\mathbf{0}
\label{eq:Whithamsystem_general}
\end{equation}
where
\begin{equation}
\mathcal{V}(v_\mathrm{p},\mathcal{E}):=J(\mathcal{E})J''(\mathcal{E})+
v_\mathrm{p}^2J'(\mathcal{E})^2.
\end{equation}
Thus the dependence on the sign $\mu=\pm 1$ also disappears except
from within the definition \eqref{eq:Jdefine} of $J(\mathcal{E})$.
Actually, this system as written has an apparent singularity if $v_\mathrm{p}$
blows up as can happen in the superluminal cases 
when the wavenumber $k$ vanishes.  In these cases it is better to introduce
the \emph{reciprocal phase velocity} 
\begin{equation}
n_\mathrm{p}:=\frac{1}{v_\mathrm{p}}
\end{equation}
and then in the overlap region where neither $v_\mathrm{p}$ nor $n_\mathrm{p}$ 
vanishes, \eqref{eq:Whithamsystem_general} takes the form
\begin{equation}
\frac{\partial}{\partial t}\begin{bmatrix}n_\mathrm{p}\\\mathcal{E}\end{bmatrix}
+\frac{1}{\mathcal{N}(n_\mathrm{p},\mathcal{E})}
\begin{bmatrix}
n_\mathrm{p}[J(\mathcal{E})J''(\mathcal{E})+J'(\mathcal{E})^2] &
-(1-n_p^2)^2J'(\mathcal{E})J''(\mathcal{E})\\
J(\mathcal{E})J'(\mathcal{E}) & n_\mathrm{p}
[J(\mathcal{E})J''(\mathcal{E})+J'(\mathcal{E})^2]
\end{bmatrix}\frac{\partial}{\partial x}
\begin{bmatrix}n_\mathrm{p}\\\mathcal{E}\end{bmatrix}=\mathbf{0}
\label{eq:Whithamsystem_rewrite}
\end{equation}
where
\begin{equation}
\mathcal{N}(n_\mathrm{p},\mathcal{E}):=n_p^2J(\mathcal{E})J''(\mathcal{E}) +J'(\mathcal{E})^2.
\label{eq:calNdef}
\end{equation}
This latter form of the Whitham modulation equations has no apparent
singularity when $n_\mathrm{p}\to 0$ corresponding to $v_\mathrm{p}\to\infty$.

The characteristic velocities $c=c_j$, $j=0,1$, are the
eigenvalues of the coefficient matrix of $x$-derivatives and therefore
are the roots of the quadratic equation
\begin{equation}
\left(v_\mathrm{p}[J(\mathcal{E})J''(\mathcal{E})+J'(\mathcal{E})^2]-
\mathcal{V}(v_\mathrm{p},\mathcal{E})c\right)^2 +
(1-v_\mathrm{p}^2)^2J(\mathcal{E})J'(\mathcal{E})^2J''(\mathcal{E})=0
\end{equation}
or, equivalently as the coefficient matrices in \eqref{eq:Whithamsystem_general}
and \eqref{eq:Whithamsystem_rewrite} are similar,
\begin{equation}
(n_\mathrm{p}[J(\mathcal{E})J''(\mathcal{E})+J'(\mathcal{E})^2]-\mathcal{N}(n_\mathrm{p},
\mathcal{E})c)^2 + (1-n_\mathrm{p}^2)^2J(\mathcal{E})J'(\mathcal{E})^2
J''(\mathcal{E})=0.
\end{equation}
Since $\mathcal{V}(v_\mathrm{p},\mathcal{E})$ and $\mathcal{N}(n_\mathrm{p},\mathcal{E})$ are 
real the Whitham
modulation system in either form \eqref{eq:Whithamsystem_general} 
or \eqref{eq:Whithamsystem_rewrite} is hyperbolic
(corresponding to real and distinct characteristic velocities) if and
only if $J(\mathcal{E})J''(\mathcal{E})<0$.  Using \eqref{eq:Jdefine}
and \eqref{eq:IBKdoubleprime} then shows that the Whitham systems
governing modulations of rotational waves (both types, superluminal and
subluminal) are hyperbolic, while those governing librational waves 
(again, both types) are elliptic.  The Whitham modulation theory has been generalized
to handle modulated \emph{multiphase} waves \cite{ForestM83,ErcolaniFM84,ErcolaniFMM87}
having any number of $2\pi$-periodic phase variables; however the full implications of the
resulting modulation equations generalizing \eqref{eq:Whithamsystem_rewrite} have apparently only
been understood for waves with a maximum of two phases.

The Whitham modulation theory makes predictions of so-called
``modulational stability'' of wavetrains on the basis of whether the
quasilinear system of modulation equations is hyperbolic
(modulationally stable) or elliptic (modulationally unstable).  One
should think of modulational stability as linear (neutral) stability
of perturbations to the wavetrain that have similar characteristic
wavelengths and periods to the unperturbed exact wavetrain solution.
Thus, hyperbolicity of the modulation equations suggests the absence
of a ``slow'' sideband instability, but does not necessarily rule out
instabilities to perturbations with wavenumbers far from the
unperturbed wavenumber $k$ or even sideband instabilities with
exponential growth rates that are far from the unperturbed frequency
$\omega$.  Among the candidate wavetrain types for stability of the
linearized equation
\begin{equation}
\epsilon^2v_{tt}-\epsilon^2v_{xx}+\cos\left(U\left(\frac{kx-\omega t}{\epsilon}
\right)\right)v=0
\label{eq:linearizedSG}
\end{equation}
(in the sense of a $L^2(\mathbb{R})$ estimate on $v$ and $\epsilon
v_t$ that depends on initial data but is independent of $t$) there are
thus only the subluminal and superluminal rotational wavetrains.  It is not
difficult to believe that the superluminal rotational wavetrains are 
linearly
unstable, since in the limiting case of $k=0$ and $\mathcal{E}\downarrow 1$ one
obtains an orbit homoclinic to the exact constant solution $u(x,t)\equiv (2m+1)\pi$,
$m\in\mathbb{Z}$ which is obviously unstable to small spatially
constant perturbations.  Indeed, such perturbations cause all of the
vertical pendula to ``drop'' simultaneously; the growth rate of the
perturbation is clearly large compared to the zero frequency of the
unperturbed solution explaining why the instability is not captured by
Whitham theory.  The definitive statement in the literature
\cite{Scott69} is that the subluminal rotational wavetrains are indeed 
linearly
stable (and the only stable type among the four types of traveling
waves), although we have not been able to verify the line of argument
in all details (it is not clear to us from the proof that the stable
solutions obtained in the case of subluminal rotational wavetrains form a 
basis of
$L^2(\mathbb{R})$, or that the exponentially growing solutions
obtained in the other three cases are relevant as they appear to be
unbounded in $x$).

In this paper, we will analyze the Cauchy problem for \eqref{eq:SG}
with initial data $F$ and $G$ independent of $\epsilon$, in the
semiclassical limit $\epsilon\to 0$.  We will show that for a general
class of ``pure impulse'' initial data, most of the real $x$-axis is
occupied for small time (independent of $\epsilon$) by modulated
superluminal wavetrains of either rotational or librational types. This result shows the relevance of the Whitham modulation
theory even in some cases when it results in elliptic modulation
equations.
(In a forthcoming paper \cite{BuckinghamMseparatrix} we will show that the
remaining part of the $x$-axis is occupied by more complicated
oscillations that nonetheless have a certain universal form for $t$
small.)

\subsection{Pure impulse initial data for the sine-Gordon equation.  Connection to the Zakharov-Shabat scattering problem}
\label{sec:pure-impulse}
Equation \eqref{eq:SG} is the compatibility condition for the Lax pair
\begin{equation}
4i\epsilon \mathbf{v}_x = \begin{bmatrix}
\displaystyle 4E(w)-\frac{i}{\sqrt{-w}}(1-\cos(u)) & \displaystyle \frac{i}{\sqrt{-w}}\sin(u)-i\epsilon(u_x+u_t)\\
\displaystyle \frac{i}{\sqrt{-w}}\sin(u)+i\epsilon(u_x+u_t) & \displaystyle -4E(w)+\frac{i}{\sqrt{-w}}(1-\cos(u))\end{bmatrix}\mathbf{v},
\label{eq:FT}
\end{equation}
\begin{equation}
4i\epsilon\mathbf{v}_t = \begin{bmatrix}
\displaystyle 4D(w)+\frac{i}{\sqrt{-w}}(1-\cos(u)) & \displaystyle -\frac{i}{\sqrt{-w}}\sin(u)-i\epsilon(u_x+u_t)\\
\displaystyle -\frac{i}{\sqrt{-w}}\sin(u)+i\epsilon(u_x+u_t) & \displaystyle -4D(w)-\frac{i}{\sqrt{-w}}(1-\cos(u))\end{bmatrix}\mathbf{v},
\end{equation}
where
\begin{equation}
E(w):=\frac{i}{4}\left[\sqrt{-w}+\frac{1}{\sqrt{-w}}\right]\quad\text{and}\quad
D(w):=\frac{i}{4}\left[\sqrt{-w}-\frac{1}{\sqrt{-w}}\right],
\label{eq:DE}
\end{equation}
and $w\in\mathbb{C}\setminus\mathbb{R}_+$ is the spectral parameter.
\emph{Here and throughout this paper, the radical refers to the principal branch of the square root.}
Analysis of the Cauchy problem for \eqref{eq:SG} posed with initial
data \eqref{eq:IC} may be carried out in some detail by means of the
inverse scattering transform based on the differential equation
\eqref{eq:FT}, the so-called Faddeev-Takhtajan eigenvalue problem.
A self-contained account of this analysis can be found in
our paper \cite{BuckinghamM08}, where the spectral parameter $z=i\sqrt{-w}$ is 
used; the utility of $w$ is related to an even symmetry of the spectrum
in the $z$-plane.

By \emph{pure impulse} initial data we simply mean data for which the
initial displacement $F(x)$ vanishes identically.  
An elementary observation is that if $F(x)\equiv 0$,
then for $t=0$ the Faddeev-Takhtajan eigenvalue
problem reduces to the Zakharov-Shabat eigenvalue problem:
\begin{equation}
\epsilon\mathbf{v}_x = \begin{bmatrix}
-i\lambda & \psi(x) \\
-\psi(x)^* & i\lambda\end{bmatrix}\mathbf{v},\quad \psi(x):=-\frac{1}{4}G(x),\quad \lambda:=E(w).
\label{eq:ZS}
\end{equation}
This is the eigenvalue problem in the Lax pair for the sine-Gordon 
equation in characteristic coordinates and the cubic focusing nonlinear 
Schr\"odinger equations.  Although when viewed as an eigenvalue problem 
of the form
$\mathcal{L}\mathbf{v}=\lambda\mathbf{v}$ the operator $\mathcal{L}$
is nonselfadjoint, several useful facts are known about the spectrum
of $\mathcal{L}$ in the case relevant here that $\psi(x)$ is real.  It
is easy to see that for real $\psi$ the discrete spectrum comes in
quartets symmetric with respect to the involutions
$\lambda\mapsto\lambda^*$ and $\lambda\mapsto -\lambda$.  Moreover,
Klaus and Shaw \cite{KlausS02} have proved that if $\psi\in
L^1(\mathbb{R})\cap C^1(\mathbb{R})$ is real, of one sign, and has a
single critical point (so that the graph of $|\psi(x)|$ is ``bell-shaped''), then the discrete spectrum is purely imaginary
and nondegenerate.

In the context of pure impulse initial data for which $G$ is a
nonpositive (without loss of generality) function of Klaus-Shaw type, the
necessarily purely imaginary eigenvalues $\lambda$ may be approximated
by a formal WKB method applicable when $\epsilon\ll 1$.  The result of
this analysis is that with the Klaus-Shaw function $G(x)$ one
associates the WKB phase integral
\begin{equation}
\Psi(\lambda):=
\frac{1}{4}\int_{x_-(\lambda)}^{x_+(\lambda)}\sqrt{G(s)^2+16\lambda^2}\,ds, 
\quad 0<y:=-i\lambda<\max_{x\in\mathbb{R}}\left(-\frac{1}{4}G(x)
\right),
\label{eq:WKBphase}
\end{equation}
where $x_-(\lambda)<x_+(\lambda)$ are the two roots of $G(s)^2+16\lambda^2$
when $\lambda$ is as indicated.
Then one defines approximate eigenvalues $\lambda^0_k$ by the
Bohr-Sommerfeld quantization rule
\begin{equation}
\Psi(\lambda^0_k)=\pi\epsilon\left(k+\frac{1}{2}\right),\quad k=0,1,2,\dots,N(\epsilon)-1,
\label{eq:BohrSommerfeld}
\end{equation}
where the asymptotic number of eigenvalues on the positive imaginary axis is
\begin{equation}
N(\epsilon)=\left\lfloor\frac{1}{2}+\frac{1}{4\pi\epsilon}
\|G\|_1\right\rfloor,
\label{eq:numberofeigenvalues}
\end{equation}
where $\|\cdot\|_1$ denotes the standard $L^1(\mathbb{R})$ norm.  To
each simple eigenvalue $\lambda$ of the Zakharov-Shabat eigenvalue
problem \eqref{eq:ZS} there corresponds a \emph{proportionality
  constant} $\gamma$ relating the solution having normalized decaying
asymptotics as $x\to -\infty$ with the solution having normalized
decaying asymptotics as $x\to +\infty$.  If $G$ is a real even
function of $x$, then one can show by symmetry that $\gamma=\pm 1$,
and WKB theory for Klaus-Shaw potentials suggests that the
proportionality constants alternate in sign along the imaginary axis.
Thus for even $G$, to the approximate eigenvalue $\lambda_k^0$ we
associate the approximate proportionality constant $\gamma_k^0$ given
by
\begin{equation}
\gamma_k^0:=(-1)^{k+1},\quad k=0,1,2,\dots,N(\epsilon)-1,\quad
\text{for even $G$.}
\end{equation}
The final ingredient of the scattering data for $G$ in the
Zakharov-Shabat problem \eqref{eq:ZS} is the reflection coefficient
defined for real $\lambda$.  According to the WKB approximation, the
reflection coefficient is small pointwise for $\lambda\neq 0$ when
$\epsilon\ll 1$.  

In this paper, we study even, pure impulse initial data of
Klaus-Shaw type for the sine-Gordon equation \eqref{eq:SG}.
Thus we assume
\begin{assume}
The initial conditions \eqref{eq:IC} for \eqref{eq:SG} satisfy
$F(x)\equiv 0$.
\label{assume:pureimpulse}
\end{assume}
\begin{assume}
  In the initial condition \eqref{eq:IC} for $\epsilon u_t$, the
  function $G(x)$ is a nonpositive function of Klaus-Shaw type, that
  is, $G\in L^1(\mathbb{R})\cap C^1(\mathbb{R})$ and $G$ has a unique
  local (and global) minimum.
\label{assume:KlausShaw}
\end{assume}
\begin{assume}
The function $G$ is even in $x$:  $G(-x)=G(x)$, placing the
unique minimum of $G$ at $x=0$.  
\label{assume:evenness}
\end{assume}
Note that under Assumptions~\ref{assume:KlausShaw} and
\ref{assume:evenness}, $G$ restricted to $\mathbb{R}_+$ has a unique
inverse function $G^{-1}:(G(0),0)\to\mathbb{R}_+$, and in terms of it we
may rewrite the WKB phase integral in the form
\begin{equation}
\Psi(\lambda)=\frac{1}{2}\int_0^{G^{-1}(-v)}\sqrt{G(s)^2-v^2}\,ds,\quad
\lambda=\frac{iv}{4},\quad 0<v<-G(0).
\label{eq:WKBphaserewrite}
\end{equation}
Thus, $\Psi(iv/4)$ defined on $(0,-G(0))$ is an Abel-type integral
transform of the nondecreasing function $G(x)<0$ defined on
$\mathbb{R}_+$.
\begin{proposition}
On its range, the inverse of the transform \eqref{eq:WKBphaserewrite} is
given by the formula
\begin{equation}
G^{-1}(w) = -\frac{4}{\pi}\int_{-w}^{-G(0)}\frac{\varphi(v)\,dv}
{\sqrt{v^2-w^2}},\quad
G(0)<w<0,
\label{eq:inversetransform}
\end{equation}
where
\begin{equation}
\varphi(v):=\frac{d}{dv}\Psi(\lambda),\quad\lambda=\frac{iv}{4}.
\end{equation}
\label{prop:AbelInverse}
\end{proposition}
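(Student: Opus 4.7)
The plan is to recognize \eqref{eq:WKBphaserewrite} as an Abel-type integral equation for $F(u):=G^{-1}(-u)$, $u\in(0,-G(0))$, and to invert it by reducing to the classical Abel equation via the substitution $X=u^2$, $Y=v^2$. Under Assumptions~\ref{assume:KlausShaw} and \ref{assume:evenness}, $G$ is strictly increasing on $\mathbb{R}_+$ from $G(0)$ to $0$, so $F$ is a well-defined decreasing $C^1$ function with $F(-G(0))=0$.

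First I would substitute $u=-G(s)$ in \eqref{eq:WKBphaserewrite} to obtain
\[
\Psi(iv/4) = -\tfrac{1}{2}\int_v^{-G(0)}\sqrt{u^2-v^2}\,F'(u)\,du,
\]
and then integrate by parts — the boundary term at $u=-G(0)$ vanishes because $F(-G(0))=0$, and the one at $u=v$ because $\sqrt{u^2-v^2}$ does — to get $\Psi(iv/4)=\tfrac{1}{2}\int_v^{-G(0)}u\,F(u)(u^2-v^2)^{-1/2}\,du$. To differentiate in $v$, I would regularize the singularity at $u=v$ with the substitution $u=\sqrt{v^2+\xi^2}$, giving
\[
\Psi(iv/4)=\tfrac{1}{2}\int_0^{\sqrt{G(0)^2-v^2}}F\bigl(\sqrt{v^2+\xi^2}\bigr)\,d\xi,
\]
differentiate under the integral sign (the boundary term at $\xi=\sqrt{G(0)^2-v^2}$ again vanishes by $F(-G(0))=0$), and undo the substitution to arrive at
\[
\frac{2\varphi(v)}{v}=\int_v^{-G(0)}\frac{F'(u)\,du}{\sqrt{u^2-v^2}}.
\]

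The substitution $X=u^2$, $Y=v^2$ converts this into the classical Abel integral equation $\tilde g(Y)=\int_Y^{G(0)^2}\tilde f(X)(X-Y)^{-1/2}\,dX$ with $\tilde g(Y)=2\varphi(\sqrt{Y})/\sqrt{Y}$ and $\tilde f(X)=F'(\sqrt{X})/(2\sqrt{X})$, and the standard inversion formula with exponent $\tfrac{1}{2}$ gives, after translating back,
\[
F'(u)=-\frac{4}{\pi}\frac{d}{du}\int_u^{-G(0)}\frac{\varphi(v)\,dv}{\sqrt{v^2-u^2}}.
\]
Integrating from $u$ to $-G(0)$ and invoking $F(-G(0))=0$ produces $F(u)=-\tfrac{4}{\pi}\int_u^{-G(0)}\varphi(v)(v^2-u^2)^{-1/2}\,dv$; the substitution $u=-w$ then yields \eqref{eq:inversetransform}.

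I expect the main obstacle to be the careful justification of the manipulations near the integrable square-root singularities at the endpoints. At $u=-G(0)$, evenness forces $G'(0)=0$, so $G^{-1}$ has a square-root branch and $F(u)\sim c\sqrt{-G(0)-u}$; this asymptotic controls that $F'$ is $L^1$ locally and that the boundary terms in the integration by parts and in the differentiation under the integral sign truly vanish. A parallel endpoint analysis of $\varphi(v)$ as $v\to -G(0)^-$ is needed to justify the final integration recovering $F$ from $F'$.
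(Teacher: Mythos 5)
Your derivation is correct, but it takes a genuinely different and longer route than the paper's. The paper proves the proposition by \emph{direct verification}: it computes $\varphi(v)=-\tfrac{v}{2}\int_0^{G^{-1}(-v)}\bigl(G(s)^2-v^2\bigr)^{-1/2}\,ds$ by differentiating \eqref{eq:WKBphaserewrite}, substitutes this into the right-hand side of \eqref{eq:inversetransform}, exchanges the two integrals (justified by Tonelli since the integrand is nonnegative), and observes that after the substitution $\tau=v^2$ the inner integral is identically $\pi$; the result is $G^{-1}(w)$ in a few lines with minimal hypotheses. Your argument instead \emph{derives} the inversion from scratch: you change variables $u=-G(s)$ to rewrite \eqref{eq:WKBphaserewrite} as an integral equation in $F:=G^{-1}(-\,\cdot\,)$, differentiate, substitute $X=u^2$, $Y=v^2$ to recover the textbook Abel equation, invert via the classical formula, and integrate back. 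This has the pedagogical advantage of explaining \emph{why} the kernel $(v^2-w^2)^{-1/2}$ and the constant $4/\pi$ appear, which the paper's verification does not, but it forces you to justify an integration by parts and a differentiation under the integral sign near the endpoint $u=-G(0)$, where $F'$ is unbounded. One correction to your closing remark: the asymptotic $F(u)\sim c\sqrt{-G(0)-u}$ would require a nondegenerate quadratic minimum of $G$ at $x=0$, which Assumptions~\ref{assume:KlausShaw} and \ref{assume:evenness} (only $G\in C^1$) do not supply; but you do not actually need any rate, since monotonicity of $F$ already gives $F'\in L^1$, and the boundary terms vanish purely because $F(-G(0))=0$ and $\sqrt{u^2-v^2}$ vanishes at $u=v$. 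The paper's Fubini computation sidesteps all of these endpoint concerns at once.
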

The proof of this proposition is a rather straightforward application of Fubini's Theorem and is given in Appendix~\ref{app:initialdata}.

We will require that the WKB phase integral have certain analyticity 
properties to be outlined in Proposition~\ref{prop:theta0} below.  We now make an 
assumption on $G$ that will be sufficient to establish 
Proposition~\ref{prop:theta0} and that can easily be checked 
for a given $G$:
\begin{assume}
The function $G$ is strictly increasing and real-analytic at
  each $x>0$, and the positive and real-analytic function
\begin{equation}
\mathscr{G}(m):=\frac{\sqrt{m}\sqrt{G(0)^2-m}}{2G'(G^{-1}(-\sqrt{m}))},\quad 0<m<G(0)^2
\label{eq:hmdef}
\end{equation}
can be analytically continued to neighborhoods of $m=0$ and $m=G(0)^2$, with
$\mathscr{G}(0)>0$ and $\mathscr{G}(G(0)^2)>0$.  
\label{assume:h}
\end{assume}
We point out that the class of functions $\mathscr{G}(m)$ 
satisfying Assumption~\ref{assume:h} obviously parametrizes a
corresponding class of admissible functions $G(x)$ by simply viewing \eqref{eq:hmdef} 
as an equation to be solved for 
$x=G^{-1}$ given $\mathscr{G}$.  The solution is:
\begin{equation}
x=\int_{G^2}^{G(0)^2}\frac{\mathscr{G}(m)\,dm}{m\sqrt{G(0)^2-m}}.
\label{eq:inversefunctionhgeneral}
\end{equation}
For example, the function  $\mathscr{G}(m)\equiv C>0$ clearly satisfies the analyticity and positivity conditions on $\mathscr{G}$ listed in
Assumption~\ref{assume:h}, and in this 
case the integral in \eqref{eq:inversefunctionhgeneral} can be evaluated
in terms of elementary functions and the resulting function $x=G^{-1}(G)$ can
be inverted to yield
\begin{equation}
G=G(0)\,\mathrm{sech}\left(\frac{G(0)}{2C}x\right),
\end{equation}
perhaps the simplest example of an admissible initial condition.

\begin{proposition}
Suppose that Assumptions \ref{assume:KlausShaw}--\ref{assume:h} hold.  Then the function 
$\Psi(\lambda)$ defined by \eqref{eq:WKBphaserewrite} for $\lambda=iv/4$ and
$0<v<-G(0)$ is positive and strictly decreasing (to zero) in $v$.
Furthermore, $\Psi$ is
real-analytic for $0<v<-G(0)$ and 
has an analytic continuation to neighborhoods of $v=0$ and $v=-G(0)$, for
which 
\begin{equation}
\Psi(\lambda)=0 \quad\text{and}\quad
\frac{d}{dv}\Psi(\lambda)<0,\quad\text{for $\lambda=-iG(0)/4$,}
\label{eq:theta0neartop}
\end{equation}
and, for some $\delta>0$,
\begin{equation}
\Psi(\lambda)=\frac{1}{4}\|G\|_1 + i\alpha\lambda +
\sum_{n=1}^\infty \beta_n\lambda^{2n},\quad |\lambda|<\delta
\label{eq:theta0series0}
\end{equation}
where $\alpha>0$ and $\beta_n\in\mathbb{R}$ for all $n$.  
\label{prop:theta0}
\end{proposition}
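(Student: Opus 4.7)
The plan is to first rewrite $\Psi$ via the change of variables $m = G(s)^2$, which under Assumption~\ref{assume:h} yields
$\Psi(iv/4) = \tfrac{1}{2}\int_{v^2}^{G(0)^2}\sqrt{m-v^2}\,\mathscr{G}(m)\,dm/(m\sqrt{G(0)^2-m})$.
A second substitution $m-v^2 = (G(0)^2-v^2)\sin^2\theta$ with $\theta\in[0,\pi/2]$ removes the algebraic singularities and produces
$\Psi(iv/4) = (G(0)^2-v^2)\int_0^{\pi/2}\sin^2\theta\,\mathscr{G}(\mu)\,d\theta/\mu$,
where $\mu := v^2\cos^2\theta + G(0)^2\sin^2\theta$. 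Positivity of $\Psi$ is immediate from the sign of the integrand in \eqref{eq:WKBphaserewrite}, and strict monotonicity follows by differentiating: the boundary contribution at $s = G^{-1}(-v)$ vanishes because the radicand is zero there, leaving $d\Psi/dv = -(v/2)\int_0^{G^{-1}(-v)}ds/\sqrt{G(s)^2-v^2} < 0$ on $(0,-G(0))$. Real-analyticity of $\Psi$ on this open interval is inherited from the $\theta$-integrand, which is real-analytic in $v^2$ on compact subintervals of $(0, G(0)^2)$.

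For the analytic continuation to a neighborhood of $v = -G(0)$, the quantity $\mu$ remains in a complex neighborhood of $G(0)^2$ where $\mathscr{G}$ is analytic by Assumption~\ref{assume:h}, so the $\theta$-integral is analytic in $v^2$ near $v^2 = G(0)^2$; the prefactor $(G(0)^2-v^2)$ then supplies the simple zero at $v = \pm|G(0)|$. Thus $\Psi(-iG(0)/4) = 0$. Differentiating at $v = |G(0)|$, only the prefactor term contributes, yielding $d\Psi/dv\big|_{v=|G(0)|} = -\pi\mathscr{G}(G(0)^2)/(2|G(0)|) < 0$ using $\mathscr{G}(G(0)^2) > 0$.

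The delicate step is the continuation to a neighborhood of $v = 0$. The $\theta$-integral is \emph{not} analytic in $v^2$ at $v^2 = 0$, because $1/\mu$ blows up at $\theta = 0$ when $v = 0$, producing a nonanalytic $|v|$-term that will be the source of the linear coefficient $\alpha$. To expose this term, decompose $\mathscr{G}(m)/m = \mathscr{G}(0)/m + \widetilde{\mathscr{G}}(m)$ with $\widetilde{\mathscr{G}}(m) := (\mathscr{G}(m)-\mathscr{G}(0))/m$; by Assumption~\ref{assume:h} together with the real-analyticity of $G$ on $x>0$, $\widetilde{\mathscr{G}}$ is analytic on a complex neighborhood of $[0, G(0)^2]$. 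The $\widetilde{\mathscr{G}}$-piece gives $(G(0)^2-v^2)\int_0^{\pi/2}\sin^2\theta\,\widetilde{\mathscr{G}}(\mu)\,d\theta$, which is analytic in $v^2$ (hence in $v$, and even in $v$) on a complex disk around $v = 0$. The $\mathscr{G}(0)$-piece reduces to the elementary integral $\tfrac{\mathscr{G}(0)}{2}\int_{v^2}^{G(0)^2}\sqrt{m-v^2}\,dm/(m\sqrt{G(0)^2-m})$, evaluable in closed form (via $t = \tan\theta$ after the $\theta$-change, then partial fractions) as $\mathscr{G}(0)\pi(|G(0)|-v)/(2|G(0)|)$, linear in $v$. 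Substituting $v = -4i\lambda$ and matching constant, linear, and higher-order powers yields \eqref{eq:theta0series0} with $\alpha = 2\pi\mathscr{G}(0)/|G(0)| > 0$ and $\beta_n\in\mathbb{R}$. The main obstacle is precisely this last manipulation: recognizing that the apparently $v^2$-only dependence of the integrand conceals a nonanalytic $|v|$ structure concentrated in the $\mathscr{G}(0)/m$ part, which must be isolated and evaluated explicitly to produce the linear-in-$\lambda$ term.
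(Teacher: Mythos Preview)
Your proof is correct and takes a genuinely different route from the paper's. Both begin with the change of variables $m=G(s)^2$ producing $\Psi(iv/4)=\tfrac{1}{2}\int_{v^2}^{G(0)^2}\sqrt{(m-v^2)/(G(0)^2-m)}\,\mathscr{G}(m)\,dm/m$. From there the paper works in the complex $m$-plane: it introduces the branched square root $S(m;v)$ of $(m-v^2)/(G(0)^2-m)$, rewrites $\Psi$ as a loop integral $\tfrac{1}{4}\oint_{L_1}S(m;v)\mathscr{G}(m)\,dm/m$ around the cut $[v^2,G(0)^2]$, and deforms the loop across $m=0$ to a larger loop $L_2$, picking up the residue $\tfrac{i\pi}{2}S(0;v)\mathscr{G}(0)=\pi v\mathscr{G}(0)/(2G(0))$ --- exactly your linear-in-$v$ term. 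The remaining $L_2$-integral is handled by expanding $S(m;v)$ in a binomial series in $v^2/m$ (near $v=0$) or in $(v+G(0))$ (near $v=-G(0)$), which converges uniformly on $L_2$ since that contour stays away from the moving branch points. You instead stay on the real axis: the $\theta$-substitution removes the algebraic endpoint singularities, analyticity in $v^2$ follows by inspection of the (now smooth) integrand, and the odd-in-$v$ contribution is isolated by the algebraic splitting $\mathscr{G}(m)/m=\mathscr{G}(0)/m+\widetilde{\mathscr{G}}(m)$ and a closed-form evaluation of the resulting elementary integral. The two mechanisms for generating the linear term are the same idea seen from different angles --- the residue at $m=0$ is precisely the $1/m$ pole you subtract off --- but your argument is more elementary (no contour integrals), while the paper's loop representation treats both endpoints with a single device and makes the series structure of the even part somewhat more transparent.
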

We provide the proof of this statement in Appendix~\ref{app:initialdata}.
In particular, Proposition~\ref{prop:theta0} guarantees the existence of
a simply-connected open set $\Xi\subset\mathbb{C}$ containing the closed
imaginary interval $0\le -4i\lambda\le -G(0)$ in which $\Psi(\lambda)$
may be considered as a holomorphic function of $\lambda$ whose restriction
to that interval is a real-valued function given by \eqref{eq:WKBphase}.
By Schwartz reflection we therefore will have
\begin{equation}
\Psi(-\lambda^*)=\Psi(\lambda)^*,\quad \lambda\in \Xi,\quad
-\lambda^*\in \Xi,
\label{eq:Schwartztheta0}
\end{equation}
which also shows that without loss of generality we may simply take
$\Xi$ to be symmetric with respect to reflection through the imaginary
axis.  By the strict monotonicity and reality of $\Psi(\lambda)$
for $0\le -4i\lambda\le -G(0)$ it follows from the Cauchy-Riemann
equations that there exists some positive number $\delta_1<\delta$ such that for
$\lambda$ in the open rectangle $D_+:=\{\lambda\in\Xi,\; 0<
\Re\{\lambda\}<\delta_1,\; 0<\Im\{\lambda\}<-G(0)/4\}$ the inequality
$\Im\{\Psi(\lambda)\}>0$ holds, with the strict inequality failing
only as $\lambda$ approaches the imaginary axis.  According to
\eqref{eq:Schwartztheta0}, if $\lambda\in D_-:=-D_+^*$, then
$\Im\{\Psi(\lambda)\}< 0$.

Our analysis will require an assumption about $\epsilon>0$:
\begin{assume}
The small number $\epsilon$ lies in the infinite sequence
\begin{equation}
\epsilon=\epsilon_N:=\frac{\|G\|_1}{4\pi N},\quad
N=1,2,3,\dots.
\label{eq:epsilonNgeneral}
\end{equation}
For such $\epsilon$ we have from \eqref{eq:numberofeigenvalues} that
$N(\epsilon)=N$.
\label{assume:epsilonNgeneral}
\end{assume}
Clearly, according to \eqref{eq:theta0series0},
Assumption~\ref{assume:epsilonNgeneral} implies that 
\begin{equation}
\frac{\Psi(0)}{\epsilon_N}=\pi N, \quad N=1,2,3,\dots.
\end{equation}
Also, according to
WKB theory, another implication of
Assumption~\ref{assume:epsilonNgeneral} is that the reflection
coefficient is \emph{uniformly} small for $\lambda\in\mathbb{R}$,
\emph{i.e.}  the choice \eqref{eq:epsilonNgeneral} makes the reflection
coefficient negligible in a neighborhood of $\lambda=0$.

In fact, our strategy will be to replace the scattering data
corresponding to initial data of the above type with its WKB
approximation, admittedly an \emph{ad hoc} step, and then to carry out
rigorous analysis of the inverse-scattering problem in the limit
$\epsilon\to 0$.  The sequence of exact solutions of \eqref{eq:SG}
generated by the spectral approximation procedure and indexed by
$N=1,2,3,\dots$ is an example of a semiclassical soliton ensemble in
the sense of \cite{Miller02}.  In the context of the sine-Gordon
equation and its application to superconducting Josephson junctions we
will call this sequence $\{u_N(x,t)\}$ of exact solutions the 
\emph{fluxon condensate} associated with the impulse profile $G(x)$.  The
accuracy of this procedure for studying the Cauchy problem is
suggested by the fact that the fluxon condensate recovers the initial
data at $t=0$ to within an error of $\bo(\epsilon)$ (see Corollary~\ref{corr:ICapproximate} below).
Also, there exist special cases for which the fluxon condensate
represents the \emph{exact} solution of the Cauchy problem when
$\epsilon$ lies in the sequence $\epsilon=\epsilon_N$ (see
\eqref{eq:epsilonNgeneral}), giving further justification to the
procedure.

\subsection{Exact solutions.  Impulse threshold for generation of rotational waves}
\label{sec:exactsolutions}
If furthermore we suppose that the sine-Gordon system \eqref{eq:SG} on
$\mathbb{R}$ is set into motion at $t=0$ by pure impulse initial data
of the special form
\begin{equation}
F(x)\equiv 0\quad\text{and}\quad G(x)=-4A\,\mathrm{sech}(x)
\label{eq:specialdata}
\end{equation}
for some $A>0$, the Zakharov-Shabat problem reduces further to a
special case that was studied by Satsuma and Yajima \cite{SatsumaY74}.
From their work it follows that if $\epsilon=\epsilon_N:=A/N$ for
$N=1,2,3,\dots$, the reflection coefficient for \eqref{eq:ZS} vanishes
identically as a function of $\lambda$, and the eigenvalues in the
upper half-plane are the purely imaginary numbers
$\lambda=iA-i(k+\tfrac{1}{2})\epsilon_N$, for $k=0,1,\dots,N-1$.  The
auxiliary scattering data consist of the proportionality constants
linking eigenfunctions with prescribed decay as $x\to -\infty$ with
those having prescribed decay as $x\to +\infty$: these are simply
alternating signs $(-1)^{k+1}$.  It is easily confirmed that these
scattering data correspond exactly to the WKB approximation described
above when $G$ is defined as above; in particular, the phase integral
\eqref{eq:WKBphase} evaluates to
\begin{equation}
\Psi(\lambda)=i\pi\lambda+\pi A,\quad\text{if $G(x)=-4A\,\mathrm{sech}(x)$.}
\label{eq:specialtheta0}
\end{equation}

In such a reflectionless situation, it becomes possible to solve the
inverse-scattering problem by finite-dimensional linear algebra, and
thus we obtain an exact solution $u(x,t)$ of \eqref{eq:SG} with
$\epsilon=\epsilon_N$ and $\epsilon$-independent initial data
\eqref{eq:specialdata} for each positive integer $N$.  We will give
more details about this procedure later (see the final paragraph of 
\S \ref{section-formulation}), 
but for now we discuss the
results of an empirical study of these exact solutions.

Figure~\ref{fig:Ap25} shows plots of the exact solution of the Cauchy problem for
\eqref{eq:SG} with initial conditions given by \eqref{eq:specialdata} with $A=1/4$.
\begin{figure}[h]
\begin{center}
\includegraphics[width=0.3\linewidth]{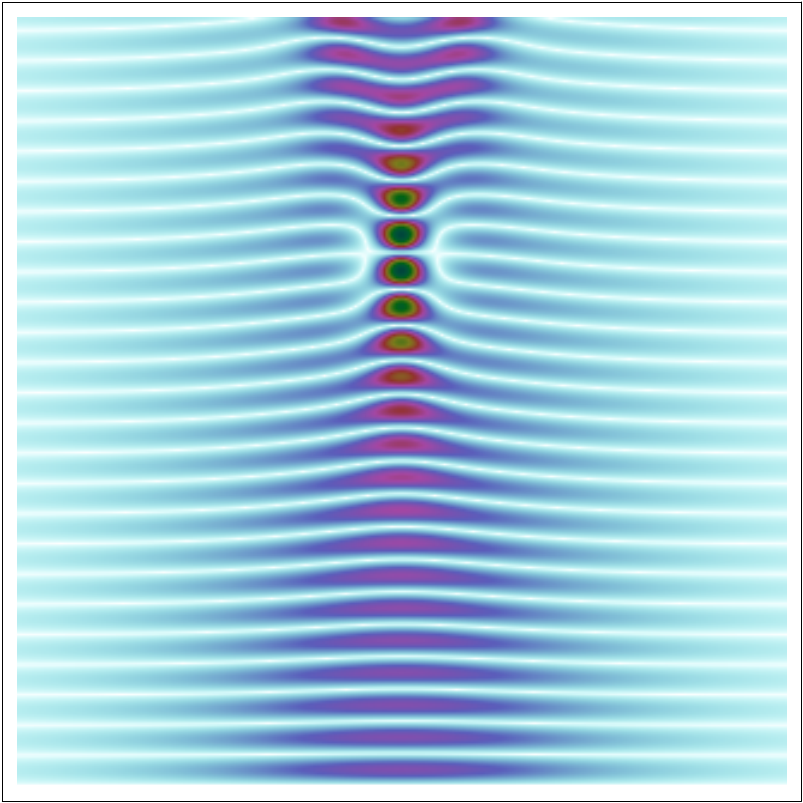}
\includegraphics[width=0.3\linewidth]{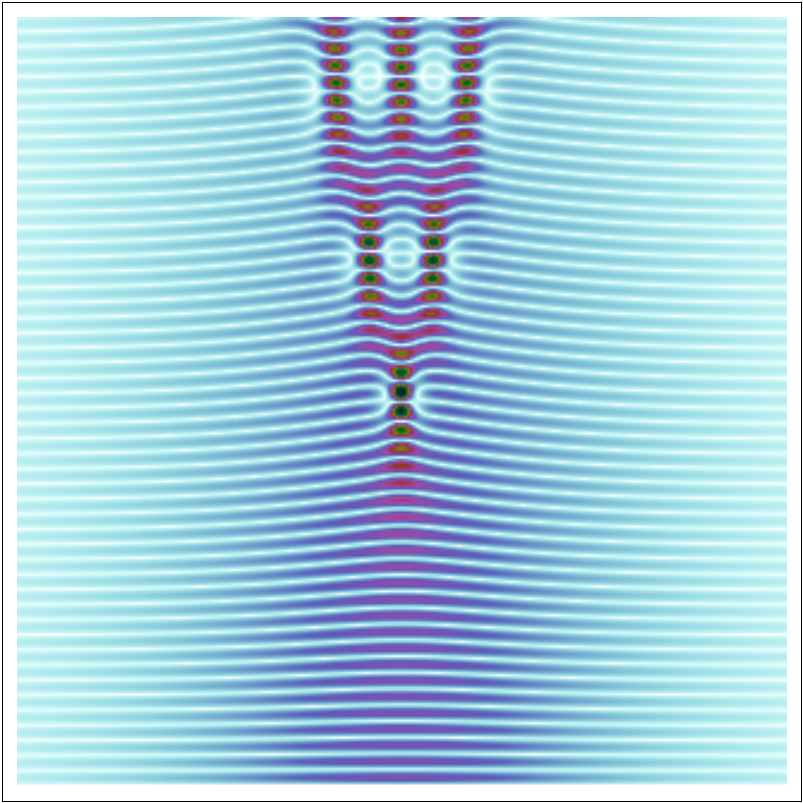}
\includegraphics[width=0.3\linewidth]{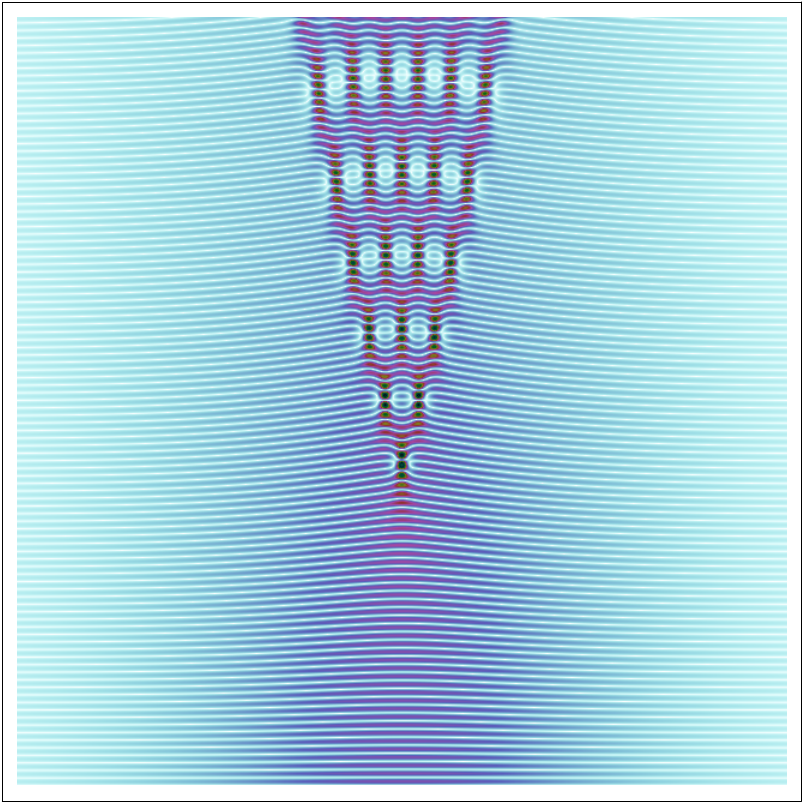}\\
\includegraphics{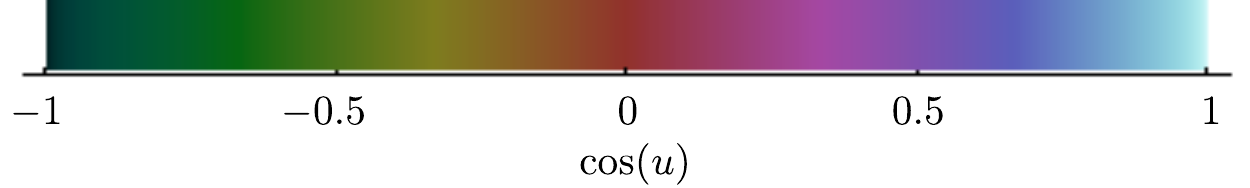}
\end{center}
\caption{\emph{Plots of $\cos(u)$ with $A=1/4$ over the $(x,t)$-plane.  The horizontal axis is $-2.5<x<2.5$ and the vertical axis is $0<t<5$.  Left:  $N=4$.  Center:  $N=8$.  Right:  $N=16$.
}}
\label{fig:Ap25}
\end{figure}
These plots show that the initial impulse sets the pendula into nearly synchronous librational
motion of frequency proportional to $N$ (inversely proportional to $\epsilon$).  However,
an instability seems to appear in the modulational pattern, leading to a kind of focusing of wave
energy near $x=0$.  In a region of the $(x,t)$-plane that seems to become more well-defined as
$N$ increases, the focused waves take on a different character; in particular, the synchrony
of nearby pendula is lost as spatial structures with wavelengths inversely proportional to $N$
spontaneously appear.  Nonetheless, the oscillations present after the focusing event are still
fairly regular and result in an approximately quasiperiodic spatiotemporal pattern.  This type of
dynamics is qualitatively similar to what is known to occur in the semiclassical limit of the
focusing nonlinear Schr\"odinger equation \cite{MillerK98,KamvissisMM03,TovbisVZ04,LyngM07}, another problem that has elliptic modulation equations
as is expected here before the focusing due to the apparent librational motion.

In Figure~\ref{fig:Ap75} we give plots analogous to those in Figure~\ref{fig:Ap25} but now
we increase the amplitude of the impulse by choosing $A=3/4$.
\begin{figure}[h]
\begin{center}
\includegraphics[width=0.3\linewidth]{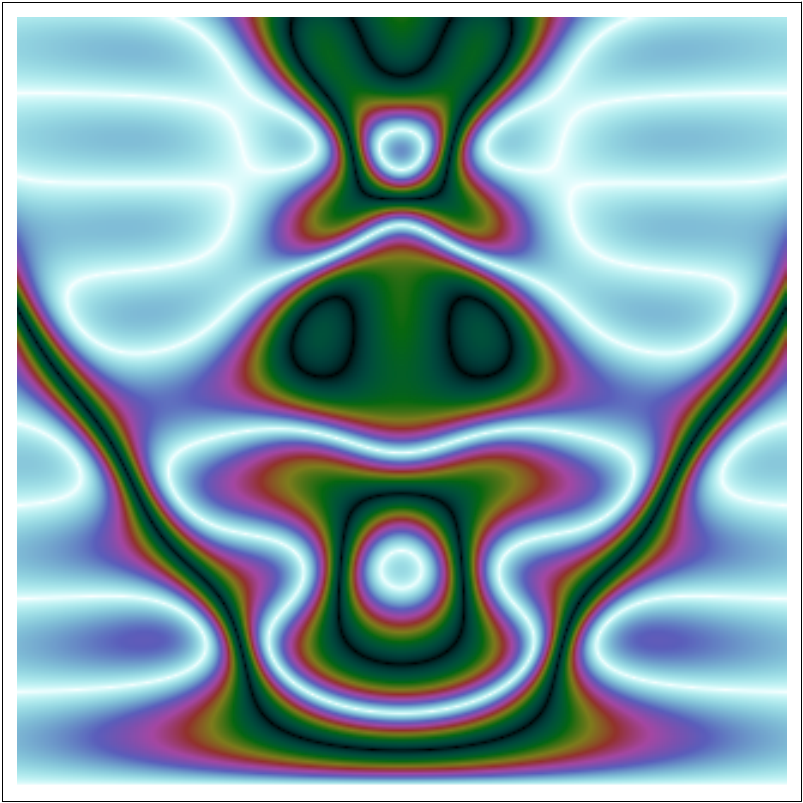}
\includegraphics[width=0.3\linewidth]{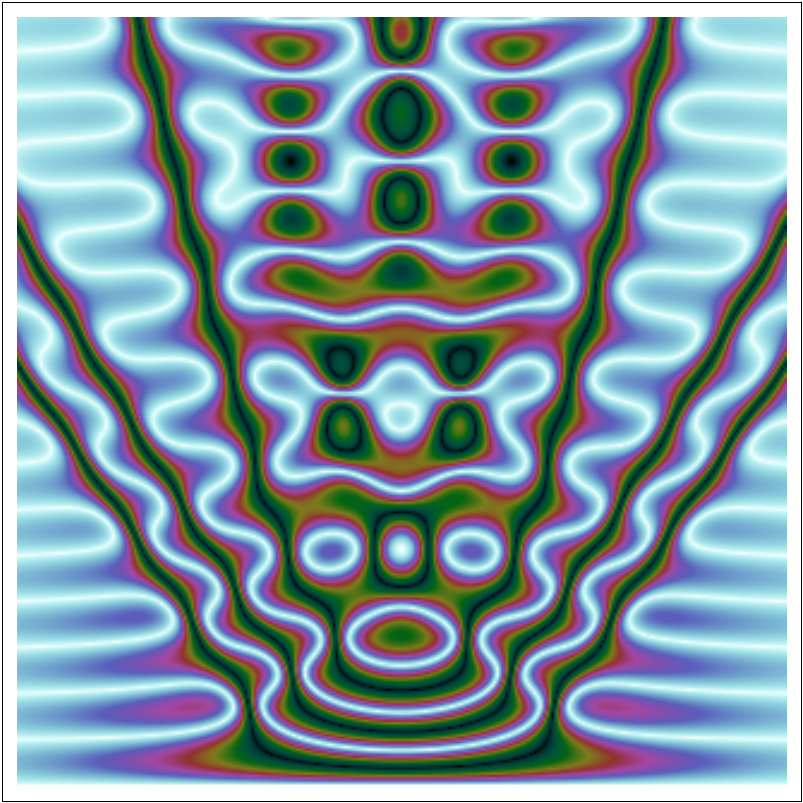}
\includegraphics[width=0.3\linewidth]{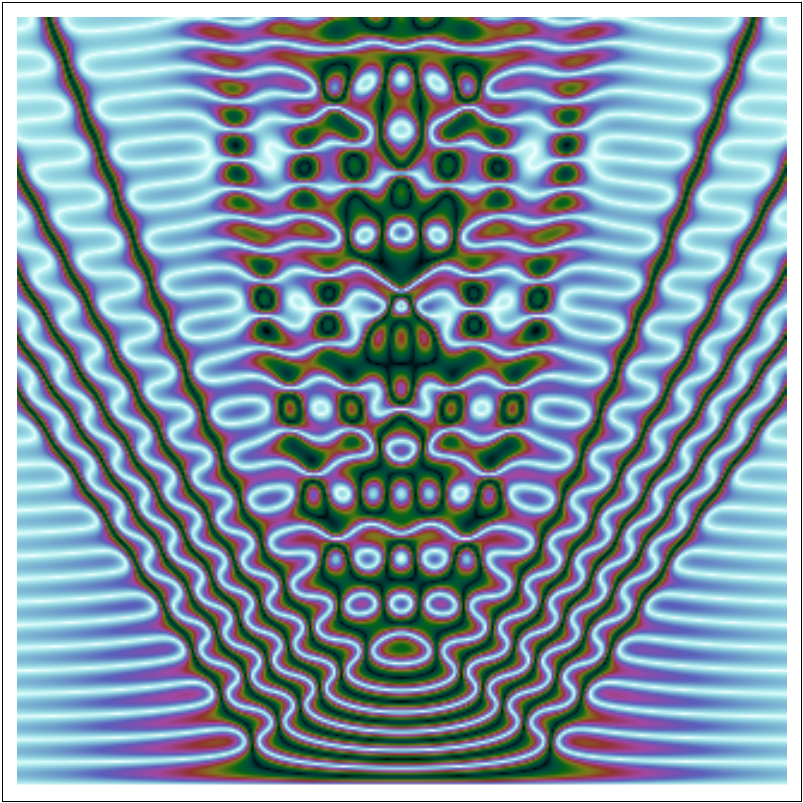}\\
\includegraphics{colormap.pdf}
\end{center}
\caption{\emph{Same as in Figure~\ref{fig:Ap25} but now $A=3/4$.}}
\label{fig:Ap75}
\end{figure}
The dynamics are evidently quite different: one clearly can identify three types of behavior near
$t=0$:
\begin{itemize}
\item Nearly synchronous librational motion of the pendula is apparent for large $|x|$, where the initial impulse is weak.
\item Nearly synchronous rotational motion of the pendula is apparent for small $|x|$, where the initial impulse is strongest.
\item Strongly asynchronous motion of undetermined type is apparent near two transitional values
of $x$.  These transitional points appear to shed kinks and antikinks.
\end{itemize}

These plots therefore suggest that some sort of transition in the dynamics
occurs when the maximum amplitude of the impulse, $A$, exceeds some
threshold value between $A=1/4$ and $A=3/4$.  In our paper
\cite{BuckinghamM08} we observed that in similar families of exact
solutions an analogous transition occurs if the initial data $(F,G)$ when viewed as
a curve (parametrized by $x$) in the phase portrait of the simple pendulum equation 
$\epsilon^2u_{tt} +\sin(u)=0$ crosses the separatrix (and the transition appears to occur near the
specific $x$-values where crossings occur).  Thus one expects that the existence of
$x\in\mathbb{R}$ where
\begin{equation}
(1-\cos(F(x))) + \frac{1}{2}G(x)^2=2
\end{equation}
may lead to a different kind of dynamics in a small time interval near
$t=0$ of length independent of $\epsilon$.  This equation has real solutions when $F(x)$ and $G(x)$ 
are given by \eqref{eq:specialdata} if $A>1/2$.  From the plots in Figure~\ref{fig:Ap75} it is clear
that when $A>1/2$ there is a symmetrical interval around $x=0$ in
which the impulse is sufficiently large to cause rotation of the angle
$u$ outside of the fundamental range $-\pi<u<\pi$, which leads to an
emission of kinks carrying positive and negative topological charge in
opposite directions from the ends of this (shrinking in time)
interval, which bound a triangular region in the $(x,t)$-plane.  
This region therefore
appears to contain a modulated superluminal rotational wavetrain, 
while for $|x|$
sufficiently large one observes what appear to be modulated superluminal
librational waves.

The goal of this paper is to show that this type of structure is
universal for pure-impulse fluxon condensates with sufficient impulse present
at $t=0$.
Thus we further impose
\begin{assume}
The function $G(x)$ satisfies $G(0)<-2$.
\label{assume:rotational}
\end{assume}
By Assumptions~\ref{assume:KlausShaw}, \ref{assume:evenness}, and \ref{assume:rotational}, there exists a positive number $x_\mathrm{crit}>0$ defined by 
\begin{equation}
x_\mathrm{crit}:=G^{-1}(-2),
\label{eq:xcrit}
\end{equation}
and based upon the above heuristic discussion we may expect the dynamics of pure-impulse fluxon condensates to be of a different character
for $|x|<x_\mathrm{crit}$ than for $|x|>x_\mathrm{crit}$.
  
\subsection{Statement of results}
\label{sec:results}
Our results concern the asymptotic behavior, in the limit $N\uparrow\infty$ equivalent to $\epsilon_N\downarrow 0$, of the functions $u_N(x,t)$ making up the fluxon condensate
associated with the pure-impulse initial condition of impulse
profile $G(\cdot)$.  As mentioned above, for fixed $N$, $u_N(x,t)$  is not exactly the solution of
the Cauchy initial-value problem with the corresponding initial data (although it is an exact solution of \eqref{eq:SG}), and
the proper definition will be given below in Definition~\ref{def:condensate}.
The statements below concern two regions of the $(x,t)$-plane depending on $G(\cdot)$
but not on $\epsilon=\epsilon_N$.  The region $S_\librational$ is specified in terms of a continuous time-horizon function
$T_\librational (x)>0$ defined for $|x|>x_\mathrm{crit}$ with $\lim_{|x|\downarrow x_\mathrm{crit}}T_\librational (x)=0$; then $(x,t)\in S_\librational$ if and only if $|x|>x_\mathrm{crit}$ and
$|t|<T_\librational (x)$.  The region $S_\rotational$ is specified in terms of a continuous time-horizon function $T_\rotational (x)>0$ defined for $|x|<x_\mathrm{crit}$ with $\lim_{|x|\uparrow x_\mathrm{crit}}T_\rotational (x)=0$, as well as two curves $t=t_\pm(x)$ with $t_\pm(0)=0$,  $t_+'(x)>0$, and $t_-'(x)<0$.  Then $(x,t)\in S_\rotational$ if and only if $|x|<x_\mathrm{crit}$, $|t|<T_\rotational (x)$, and 
$t\neq t_\pm(x)$.  See Figure~\ref{fig:BKregions}.
\begin{figure}[h]
\begin{center}
\includegraphics{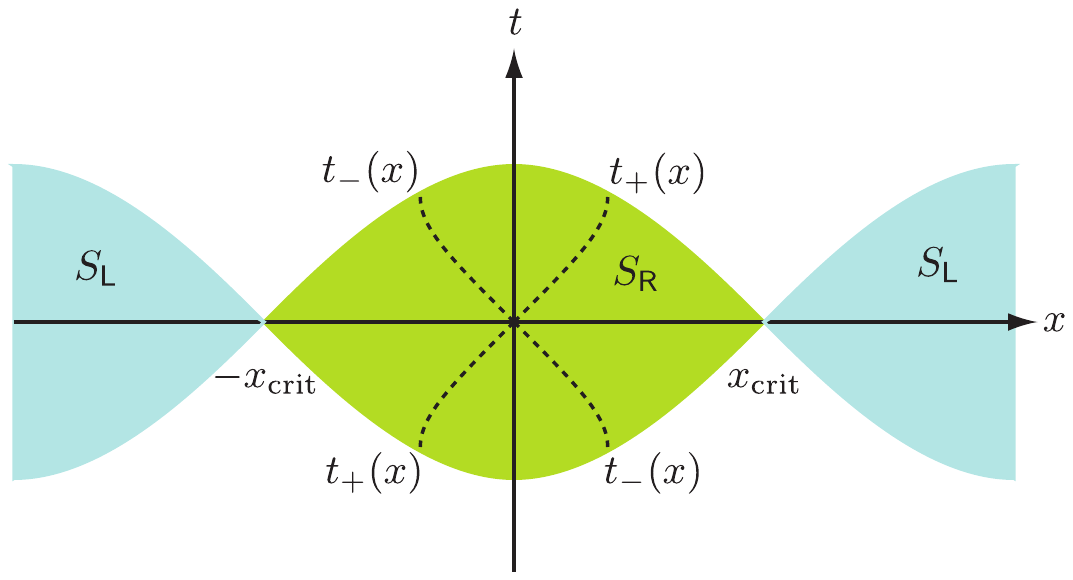}
\end{center}
\caption{\emph{The regions $S_\librational$\and $S_\rotational$.} }
\label{fig:BKregions}
\end{figure}

Let $K(\cdot)$ denote the complete elliptic integral of the first kind:
\begin{equation}
K(m):=\int_0^1\frac{ds}{\sqrt{(1-s^2)(1-ms^2)}},\quad 0<m<1.
\label{eq:ellipticKdef}
\end{equation}
We will prove the following two results under Assumptions~\ref{assume:pureimpulse}--\ref{assume:rotational},
and also under the more technical Assumption~\ref{assume:nodoublegeneral} to be presented shortly.
The asymptotic formulae for the fluxon condensate are in fact quite explicit in terms of the Jacobi elliptic functions $\mathrm{sn}(z;m)$, $\mathrm{cn}(z;m)$, and $\mathrm{dn}(z;m)$, for which we cite
the text by Akhiezer \cite{Akhiezer} as a reference.  
\begin{theorem}[Small-Time Librational Asymptotics]
There exist well-defined differentiable functions $n_\mathrm{p}:S_\librational\to
(-1,1)$ and $\mathcal{E}:S_\librational\to(-1,1)$ satisfying the initial conditions
$n_\mathrm{p}(x,0)=0$ and $\mathcal{E}(x,0)=\tfrac{1}{2}G(x)^2-1$ and
the elliptic Whitham 
system \eqref{eq:Whithamsystem_rewrite} where $J(\mathcal{E}):=I_\librational (-\mathcal{E})$
and $I_\librational(\cdot)$ is given by \eqref{eq:Ilibrational}, as well as the inequality
\begin{equation}
x\frac{\partial n_\mathrm{p}}{\partial t}(x,0)<0,\quad |x|>x_\mathrm{crit}.
\label{eq:Librationalnpineq}
\end{equation}
Defining an elliptic parameter $m=m(x,t)$ by
\begin{equation}
m(x,t)=m_\librational (x,t):=\frac{1+\mathcal{E}(x,t)}{2}\in (0,1),\quad (x,t)\in S_\librational,
\label{eq:mlibrationalE}
\end{equation}
and a real phase $\Phi(x,t)$ by 
\begin{equation}
\Phi(x,t):=-\int_0^t\omega(x,t')\,dt',
\end{equation}
where
\begin{equation}
\omega(x,t):=-\frac{\pi}{2K(m(x,t))}\frac{1}{\sqrt{1-n_\mathrm{p}(x,t)^2}},
\end{equation}
the following asymptotic formulae hold pointwise for $(x,t)\in S_\librational$:
\begin{equation}
\begin{split}
\cos\left(\frac{1}{2}u_N(x,t)\right)&=\mathrm{dn}\left
(\frac{2\Phi(x,t)K(m(x,t))}{\pi\epsilon_N};m(x,t)\right) + \bo(\epsilon_N)\\
\sin\left(\frac{1}{2}u_N(x,t)\right) &= -\sqrt{m(x,t)}\,\mathrm{sn}
\left(\frac{2\Phi(x,t) K(m(x,t))}{\pi\epsilon_N};m(x,t)\right) + 
\bo(\epsilon_N)\\
\epsilon_N \frac{\partial u_N}{\partial t}(x,t)&=
-\frac{4K(m(x,t))}{\pi}\frac{\partial\Phi}{\partial t}
\sqrt{m(x,t)}\,\mathrm{cn}\left(\frac{2\Phi(x,t)K(m(x,t))}{\pi\epsilon_N};m(x,t)
\right) +
\bo(\epsilon_N).
\end{split}
\label{eq:Basymptoticresults}
\end{equation}
Moreover, the error terms are uniform for $(x,t)$ in compact subsets of 
$S_\librational$.  The phase $\Phi(x,t)$ also satisfies $\partial\Phi/\partial x = k(x,t):=\omega(x,t)n_\mathrm{p}(x,t)$.
\label{thm:librational}
\end{theorem}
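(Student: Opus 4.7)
The plan is to carry out a Deift--Zhou nonlinear steepest descent analysis of the Riemann--Hilbert problem arising from the reflectionless scattering data of the fluxon condensate. By Assumption~\ref{assume:epsilonNgeneral} these data consist of $N$ purely imaginary eigenvalues quantized by Bohr--Sommerfeld together with alternating proportionality constants; the associated RHP has residue conditions at these eigenvalues and their images under the symmetries $\lambda\mapsto\lambda^*$ and $\lambda\mapsto-\lambda$. First I would perform the standard semiclassical soliton ensemble interpolation, replacing the discrete residues by a holomorphic jump on a symmetric quadruple of contours enclosing the imaginary segments where the eigenvalues accumulate. The jump matrix carries the factor $e^{2iN\Psi(\lambda)/\|G\|_1}$ coming from the WKB quantization, together with an oscillatory exponential $e^{2i\theta(\lambda;x,t)/\epsilon_N}$ in which $\theta$ is built from $E(w)x + D(w)t$ from the Lax pair \eqref{eq:FT}.

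Second, I would introduce a genus-one $g$-function $g(\lambda;x,t)$ chosen to neutralize these exponentials on the jumps. In $S_\librational$, $g$ should correspond to a single band (and its three symmetric images) with endpoints $\pm\alpha(x,t)$, $\pm\alpha(x,t)^*$ determined implicitly by two complex moment conditions. The endpoint $\alpha(x,t)$ carries the modulation variables: its modulus and argument combine to yield $m(x,t)$ (equivalently $\mathcal{E}(x,t)$ via \eqref{eq:mlibrationalE}) and $n_\mathrm{p}(x,t)$. Specializing the moment conditions at $t=0$ gives $\alpha(x,0)=-iG(x)/4$, recovering the stated initial values $n_\mathrm{p}(x,0)=0$ and $\mathcal{E}(x,0)=\tfrac12 G(x)^2-1$. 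Differentiating the moment conditions in $(x,t)$ and simplifying using standard contour-integral identities on the genus-one curve produces the elliptic Whitham system \eqref{eq:Whithamsystem_rewrite} with $J(\mathcal{E})=I_\librational(-\mathcal{E})$.

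With $g$ in hand, the analysis proceeds in textbook fashion: conjugate by $e^{iNg\sigma_3/\|G\|_1}$, open lenses on either side of the bands exploiting the sign of $\Im\{ig\}$, approximate by an outer parametrix built from Riemann theta functions on the genus-one curve (expressible through Jacobi $\mathrm{sn},\mathrm{cn},\mathrm{dn}$), and match at the band endpoints by Airy parametrices. A small-norm estimate yields a relative error of order $\bo(\epsilon_N)$ uniformly on compact subsets of $S_\librational$. Substituting the parametrix into the reconstruction formulas that extract $\cos(u/2)$, $\sin(u/2)$, and $\epsilon u_t$ from the behavior of the RHP solution at the two singular points $\lambda=0$ and $\lambda=\infty$ of the spectral plane, and converting the resulting theta-function ratios into Jacobi elliptic functions of argument $2\Phi(x,t)K(m(x,t))/(\pi\epsilon_N)$, produces \eqref{eq:Basymptoticresults}.

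The main obstacle is the existence and regularity of $g$ throughout $S_\librational$. Because the modulation system is elliptic in the librational case, the endpoints $\alpha(x,t)$ leave the imaginary axis for $t\neq 0$ and cannot be characterized by a real variational principle; they must instead be obtained by a complex implicit-function-theorem argument. One must simultaneously verify that the chosen band contour lies inside the domain $\Xi$ of analyticity of $\Psi$ from Proposition~\ref{prop:theta0}; that the sign conditions on $\Im\{ig\}$ required for lens opening hold strictly on the gaps; and that no pole of the outer parametrix collides with the reconstruction points at $\lambda=0$ or $\lambda=\infty$. These three requirements jointly determine the time horizon $T_\librational(x)$ and fail as $|x|\downarrow x_\mathrm{crit}$, explaining why the region cannot be extended globally. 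The inequality \eqref{eq:Librationalnpineq} is then extracted from $\partial\alpha/\partial t$ evaluated at $t=0$ by implicit differentiation of the moment conditions, using the monotonicity of $\Psi$ from Proposition~\ref{prop:theta0}.
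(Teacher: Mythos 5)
Your proposal correctly describes the Deift--Zhou steepest-descent route the paper uses: residue interpolation, a genus-one $g$-function fixed by moment and integral conditions and continued from $t=0$ via the implicit function theorem, lens opening, Airy local parametrices, a theta-function outer parametrix reduced to Jacobi elliptic functions, a small-norm error estimate, and the Whitham system obtained by differentiating the endpoint conditions. One point to be careful about is that the $t$-evolution phase involves $D(w)$ as well as $E(w)=\lambda$ and $D$ does \emph{not} descend to a function of $\lambda$, so the whole construction must be carried out in the $w$-plane (or $z=i\sqrt{-w}$) as the paper does --- with the moment and integral conditions posed there, the band with only two branch points $\mathfrak{p}\pm i\sqrt{-\mathfrak{q}}$, and the reconstruction of $\cos(u/2)$, $\sin(u/2)$, $\epsilon u_t$ performed from expansions at $w=0$ and $w=\infty$ (both covering $\lambda=\infty$) rather than at $\lambda=0$ and $\lambda=\infty$.
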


The accuracy of the asymptotic formulae \eqref{eq:Basymptoticresults} is illustrated in Figure~\ref{fig:librationalaccuracy}.
\begin{figure}[h]
\begin{center}
\includegraphics{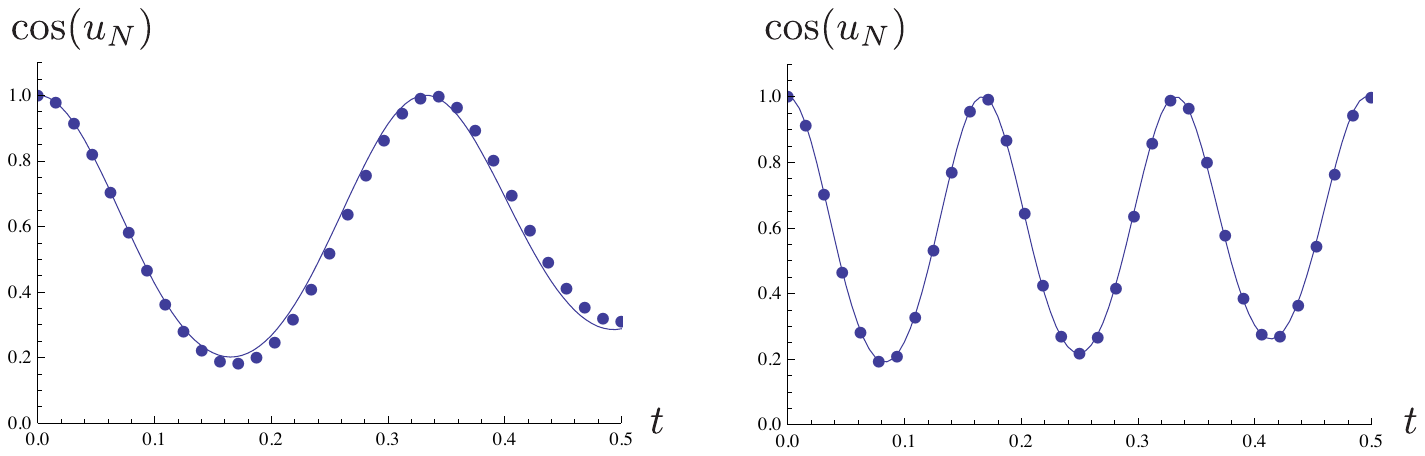}
\end{center}
\caption{\emph{The cosine of the exact solution $u_N(x,t)$ for the special initial data described in
\S\ref{sec:exactsolutions} for $A=3/4$ plotted with points, and the corresponding asymptotic formula plotted with curves, for fixed $x=1.5$ and $0<t<0.5$ so that $(x,t)\in S_\librational$ yielding
librational motion as described by Theorem~\ref{thm:librational}.  Left:  $N=8$, or equivalently
$\epsilon_N=0.09375$.  Right:  $N=16$, or equivalently $\epsilon_N=0.046875$. }}
\label{fig:librationalaccuracy}
\end{figure}

\begin{theorem}[Small-Time Rotational Asymptotics]
There exist well-defined differentiable functions $n_\mathrm{p}:S_\rotational\to(-1,1)$ and $\mathcal{E}:S_\rotational\to(1,+\infty)$ satisfying the initial conditions $n_\mathrm{p}(x,0)=0$ and $\mathcal{E}(x,0)=\tfrac{1}{2}G(x)^2-1$ and the hyperbolic
Whitham system \eqref{eq:Whithamsystem_rewrite} where $J(\mathcal{E}) = I_\rotational(-\mathcal{E})$
and $I_\rotational(\cdot)$ is given by \eqref{eq:Irotational}, as well as the inequality 
\begin{equation}
x\frac{\partial n_\mathrm{p}}{\partial t}(x,0)>0,\quad 0<|x|<x_\mathrm{crit}.
\label{eq:Rotationalnpineq}
\end{equation}
Moreover, the functions $n_\mathrm{p}(x,t)$ and $\mathcal{E}(x,t)$ extend continuously along with
their first partial derivatives to the curves $t=t_\pm(x)$, and they obey the $(x,t)$-independent inequalities
\begin{equation}
0\le \frac{1-n_\mathrm{p}(x,t)}{1+n_\mathrm{p}(x,t)}\left(\mathcal{E}(x,t)+\sqrt{\mathcal{E}(x,t)^2-1}\right)<
\frac{1}{4}\left(G(0)-\sqrt{G(0)^2-4}\right)^2
\label{eq:rotationalineq1}
\end{equation}
and
\begin{equation}
\frac{1-n_\mathrm{p}(x,t)}{1+n_\mathrm{p}(x,t)}\left(\mathcal{E}(x,t)-\sqrt{\mathcal{E}(x,t)^2-1}\right)>\frac{1}{4}\left(G(0)+\sqrt{G(0)^2-4}\right)^2>0.
\label{eq:rotationalineq2}
\end{equation}
Defining an elliptic parameter $m=m(x,t)$ by
\begin{equation}
m(x,t)=m_\rotational(x,t):=\frac{2}{1+\mathcal{E}(x,t)}\in (0,1),\quad (x,t)\in S_\rotational,
\label{eq:mrotationalE}
\end{equation}
and a real phase $\Phi(x,t)$ by 
\begin{equation}
\Phi(x,t):=-\int_0^t\omega(x,t')\,dt',
\end{equation}
 where
\begin{equation}
\omega(x,t):=
-\frac{\pi}{2K(m(x,t))}\frac{\sqrt{\mathcal{E}(x,t)+\sqrt{\mathcal{E}(x,t)^2-1}}+
\sqrt{\mathcal{E}(x,t)-\sqrt{\mathcal{E}(x,t)^2-1}}}{2}\frac{1}{\sqrt{1-n_\mathrm{p}(x,t)^2}},
\end{equation}
the following asymptotic formulae hold pointwise for $(x,t)\in S_\rotational$:
\begin{equation}
\begin{split}
\cos\left(\frac{1}{2}u_N(x,t)\right)&=\mathrm{cn}\left
(\frac{2\Phi(x,t)K(m(x,t))}{\pi\epsilon_N};m(x,t)\right) + \bo(\epsilon_N)\\
\sin\left(\frac{1}{2}u_N(x,t)\right) &= -\mathrm{sn}
\left(\frac{2\Phi(x,t)K(m(x,t))}{\pi\epsilon_N};m(x,t)\right) + \bo(\epsilon_N)\\
\epsilon_N \frac{\partial u_N}{\partial t}(x,t)&=-\frac{4K(m(x,t))}{\pi}
\frac{\partial\Phi}{\partial t}
\,\mathrm{dn}\left(\frac{2\Phi(x,t)K(m(x,t))}{\pi\epsilon_N};m(x,t)\right) +
\bo(\epsilon_N).
\end{split}
\label{eq:Kasymptoticresults}
\end{equation}
Moreover, the error terms are uniform for $(x,t)$ in compact subsets of $S_\rotational$.  The
phase $\Phi(x,t)$ also satisfies $\partial\Phi/\partial x = k(x,t):=\omega(x,t)n_\mathrm{p}(x,t)$.
\label{thm:rotational}
\end{theorem}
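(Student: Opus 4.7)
The plan is to analyze the Riemann--Hilbert problem (RHP) characterizing the fluxon condensate $u_N(x,t)$ by the Deift--Zhou nonlinear steepest-descent method, following the same broad scheme as the proof of Theorem~\ref{thm:librational} but with the g-function and outer model adapted to the hyperbolic (rotational) regime in which $\mathcal{E}>1$. The starting point is the RHP encoding the WKB-approximated scattering data: $N$ discrete residues at the Bohr--Sommerfeld points $\lambda_k^0$ on the imaginary axis with proportionality constants $(-1)^{k+1}$ and vanishing reflection coefficient. I would first recast this as a RHP with jumps on short contour segments around the eigenvalue interval, using the standard interpolation of discrete residues as developed in the semiclassical focusing NLS treatment of \cite{KamvissisMM03}. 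The resulting jump matrices carry exponential factors controlled by $\Psi(\lambda)$ from the eigenvalue density and by the $(x,t)$-dependent linear-plus-reciprocal phase $\lambda x + t/(4\lambda)$ arising from the sine-Gordon time evolution.

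The central step is constructing a genus-one g-function $g(\lambda;x,t)$ holomorphic off a union of two bands on the imaginary axis, symmetric under $\lambda\mapsto-\lambda^*$, whose endpoints satisfy moment-vanishing conditions that enforce cancellation of the fast oscillations after conjugation of the RHP by $\exp(-g/\epsilon_N)$. I would parametrize the endpoints rationally via $(n_\mathrm{p},\mathcal{E})$ using the combinations $\mathcal{E}\pm\sqrt{\mathcal{E}^2-1}$ and $1\pm n_\mathrm{p}$ that already appear in \eqref{eq:rotationalineq1}--\eqref{eq:rotationalineq2}. Under this reparametrization the moment equations become equivalent to the hyperbolic Whitham system \eqref{eq:Whithamsystem_rewrite} with $J(\mathcal{E})=I_\rotational(-\mathcal{E})$. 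At $t=0$ the system collapses, via the Fubini-type computation of Proposition~\ref{prop:AbelInverse}, to the explicit initial data $n_\mathrm{p}(x,0)=0$ and $\mathcal{E}(x,0)=\tfrac{1}{2}G(x)^2-1$; hyperbolicity (guaranteed by $I_\rotational I_\rotational''<0$ from \eqref{eq:IBKdoubleprime}) combined with the implicit function theorem then yields the existence and differentiability of $(n_\mathrm{p},\mathcal{E})$ on $S_\rotational$, with the sign \eqref{eq:Rotationalnpineq} obtained by explicit differentiation at $t=0$ and the inequalities \eqref{eq:rotationalineq1}--\eqref{eq:rotationalineq2} propagated from $t=0$ along characteristics.

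With the g-function in hand, I verify the variational inequalities off the bands (appropriate signs of $\Re\{g\}$ plus the phase terms) and strict positivity of $\Im\{g_+ - g_-\}$ on the bands; these authorize opening lenses around the bands, after which the RHP decomposes into a global outer model and local parametrices at the four band endpoints. The genus-one outer model, by the symmetry of the band structure, is solved explicitly in Jacobi theta functions, and under the uniformization of the underlying elliptic curve with modulus $m=2/(1+\mathcal{E})$ -- the rotational branch of the uniformization -- these reduce to the Jacobi elliptic functions $\mathrm{sn}$, $\mathrm{cn}$, $\mathrm{dn}$. Reading the relevant entries of the reconstructed RHP solution into the standard recovery formulae for $\cos(u/2)$, $\sin(u/2)$, and $\epsilon u_t$ produces exactly the expressions in \eqref{eq:Kasymptoticresults}, with an $\bo(\epsilon_N)$ error obtained by the usual small-norm argument applied to the error RHP, uniform on compact subsets of $S_\rotational$ on which the variational inequalities are strict.

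The principal obstacle is verifying the variational inequalities and the existence of the g-function uniformly on $S_\rotational$, and in particular across the exceptional curves $t=t_\pm(x)$. These curves correspond to $(x,t)$ where an auxiliary sign in the variational inequalities off the bands changes -- specifically, where the local wavenumber $k(x,t)=\omega(x,t)n_\mathrm{p}(x,t)$ vanishes -- forcing a reorientation of the lens contours and an adjustment of the triangular factorization of the jump matrix, even though $(n_\mathrm{p},\mathcal{E})$ themselves remain smooth across $t_\pm$. Handling this degeneration by a case-by-case determination of the lens geometry on either side of each $t_\pm(x)$, and thereby justifying continuity of the asymptotic formulae across these curves, is the main technical complication not already present in the librational case; once disposed of, the remainder of the argument is routine steepest-descent machinery.
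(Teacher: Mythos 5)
Your broad outline (Deift--Zhou steepest descent, interpolation of residues, genus-one $g$-function with real band endpoints, outer model via elliptic theta functions reduced to Jacobi functions) matches the paper's architecture, but your treatment of the specific technical obstructions in the rotational case contains two substantive misidentifications, and one step as proposed would not go through.

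First, you misidentify what the curves $t=t_\pm(x)$ are. They are \emph{not} the locus where the local wavenumber $k=\omega n_\mathrm{p}$ vanishes (that locus is $t=0$, since $n_\mathrm{p}(x,0)\equiv 0$), nor are they where a sign in the variational inequalities off the bands changes in the ordinary way. They are the curves along which a real band endpoint $w_\prec=\mathfrak{p}-\sqrt{\mathfrak{q}}$ or $w_\succ=\mathfrak{p}+\sqrt{\mathfrak{q}}$ coincides with an endpoint $\mathfrak{a}$ or $\mathfrak{b}$ of the kink-pole accumulation interval $[\mathfrak{a},\mathfrak{b}]$. At such $(x,t)$ the discrete-to-continuous interpolation must be altered: the split $P_N=\nabla\cup\Delta$ has to change, introducing a transition point $\tau_N$ and a different residue-removal formula. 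This is a change at the level of the initial interpolation step, not merely a reorientation of lenses. A related consequence is that the paper deliberately excludes the curves $t=t_\pm(x)$ from $S_\rotational$ and only claims the asymptotic formulae away from them (it only claims continuous extension of $n_\mathrm{p}$, $\mathcal{E}$ and their first derivatives up to these curves), whereas your final paragraph proposes to establish the asymptotics across them -- that is a stronger statement than what is being proven.

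Second, and more seriously, you have not noticed the degeneracy at the base point $(x,t)=(0,0)$ that is the actual hard part of the rotational case. At $t=0$ and $x\to 0$ the two band endpoints converge exactly to $\mathfrak{a}$ and $\mathfrak{b}$, and then the moment and integral functionals $M,I$ fail to be differentiable in the endpoints because of the square-root singularities of $\theta_0$ at those points. The implicit function theorem, as you invoke it, does not apply there. The paper's resolution is to pass to the branched double cover: introduce coordinates $k_\prec=(w_\prec-\mathfrak{a})^{1/2}$, $k_\succ=(\mathfrak{b}-w_\succ)^{1/2}$, show by a monodromy argument (the two monodromy operators around $\mathfrak{a}$ and $\mathfrak{b}$ permute the six admissible choices of $\Delta$) that $(\hat M,\hat I)$ extends to a single-valued analytic pair on $\Gamma_\prec\times\Gamma_\succ$, and compute a nonvanishing Jacobian $\hat{\mathscr J}(0,0)$ in these square-root variables. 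Only then does the implicit function theorem apply near $(0,0)$. Without this, the existence of $(n_\mathrm{p},\mathcal{E})$ on a full neighborhood of $(0,0)$ -- and hence on $S_\rotational$ -- is not established. Incidentally, this is one reason the paper formulates the inverse problem in the $w$-plane rather than the $\lambda$-plane as you do: in $\lambda$, the two distinct endpoints $\mathfrak{a}$ and $\mathfrak{b}$ collapse to the single image $\lambda=-iG(0)/4$, and the double-cover structure needed for the monodromy argument is obscured.

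Third, the inequalities \eqref{eq:rotationalineq1}--\eqref{eq:rotationalineq2} are not obtained by ``propagation along characteristics'' of the Whitham system. In the paper they are immediate algebraic consequences of the reality of $k_\prec$, $k_\succ$: $w_\prec=\mathfrak{a}+k_\prec^2\ge\mathfrak{a}$ and $w_\succ=\mathfrak{b}-k_\succ^2\le\mathfrak{b}$, translated into $(n_\mathrm{p},\mathcal{E})$. The paper is in fact explicit that it nowhere applies PDE techniques (e.g.\ the Cauchy--Kovalevskaya method or characteristics) to solve the Whitham system; the fields $(n_\mathrm{p},\mathcal{E})$ are constructed by solving the algebraic moment/integral conditions, and the fact that they satisfy the Whitham system is, from the point of view of the proof, a derived coincidence rather than a construction tool. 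Your proposed characteristic-propagation step would require a separate well-posedness argument for the hyperbolic Whitham initial-value problem with nonsmooth initial data at $|x|=x_\mathrm{crit}$, which is not needed.
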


The accuracy of the asymptotic formulae \eqref{eq:Kasymptoticresults} is illustrated in Figure~\ref{fig:rotationalaccuracy}.
\begin{figure}[h]
\begin{center}
\includegraphics{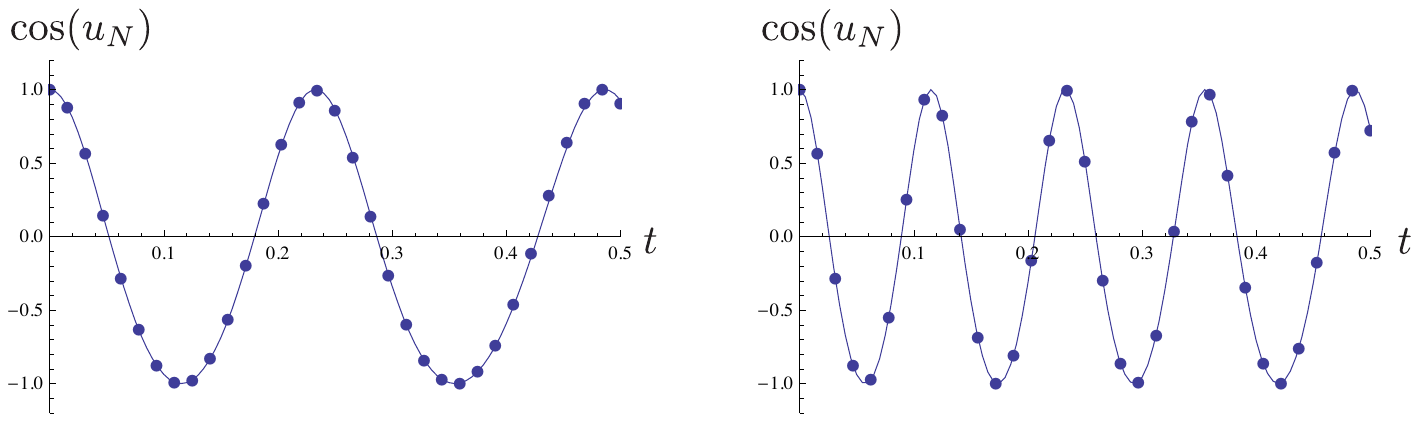}
\end{center}
\caption{\emph{The cosine of the exact solution $u_N(x,t)$ for the special initial data described in
\S\ref{sec:exactsolutions} for $A=3/4$ plotted with points, and the corresponding asymptotic formula plotted with curves, for fixed $x=-0.15625$ and $0<t<0.5$ so that $(x,t)\in S_\rotational$ yielding
rotational motion as described by Theorem~\ref{thm:rotational}.  Left:  $N=8$, or equivalently $\epsilon_N=0.09375$.  Right:  $N=16$, or equivalently $\epsilon_N=0.046875$.  Note that for the given value of $x$, $t_-(x)\approx 0.153$, but that there is no evidence of any change in behavior near this value of $t$ in either the exact solution
or the asymptotic solution.}}
\label{fig:rotationalaccuracy}
\end{figure}

We now make several observations about
these results:
\begin{itemize}
\item The asymptotic formulae \eqref{eq:Basymptoticresults} correspond to a high-frequency
superluminal librational wavetrain that is relatively slowly modulated through the $(x,t)$-dependence of
the quantities $n_\mathrm{p}$ and $\mathcal{E}$.  Indeed, Taylor expansion of these formulae about
a fixed point $(x_0,t_0)\in S_\librational$ shows that if one sets $x=x_0+\epsilon_N\tilde{x}$ and $t=t_0+\epsilon_N\tilde{t}$ and takes the limit $\epsilon_N\downarrow 0$ holding $\tilde{x}$ and $\tilde{t}$ fixed, we recover an exact superluminal librational wavetrain solution of the sine-Gordon equation as a function of $x$ and $t$ characterized by the fixed energy $\mathcal{E}(x_0,t_0)$ and the linear phase
$\Phi(x_0,t_0)+ k(x_0,t_0)(x-x_0) -\omega(x_0,t_0)(t-t_0)$.  It is clear in this case that
the function $u_N(x,t)$ is confined to the range $(-\pi,\pi)$  and that both $u_N$ and its time derivative are periodic functions of the linear phase as is consistent with librational motion.  Similarly, the asymptotic
formulae \eqref{eq:Kasymptoticresults} correspond to a high-frequency and slowly modulated superluminal rotational wavetrain, and in particular $u_N$ is monotonic while its derivative is periodic, as is consistent with rotational motion.
\item We have excluded the curves $t=t_\pm(x)$ from $S_\rotational$ as a matter of complete academic
honesty, as our proof would require a technical modification to extend pointwise asymptotics to these
curves, and to have uniformity for $t\approx t_\pm(x)$ is yet a further matter requiring a double-scaling
limit.  However, it is easy to see that the explicit terms in the asymptotic formulae \eqref{eq:Kasymptoticresults} undergo no phase transition in the vicinity of the curves $t=t_\pm(x)$,
and while we cannot honestly exclude the possibility that there is some phenomenon here to be captured by more detailed analysis, there is no indication of such in the plots shown in \S\ref{sec:exactsolutions}.
\item  As part of our proof, we show that in each case the asymptotic formula for $\epsilon_N\partial u_N/\partial t$ is consistent with those for $\cos(\tfrac{1}{2}u_N)$ and $\sin(\tfrac{1}{2}u_N)$ in the sense that differentiation of the latter with respect to $t$ assuming that the error terms remain subdominant after
differentiation yields the former up to terms of order $\bo(\epsilon_N)$.
\item The inequalities \eqref{eq:Librationalnpineq} and \eqref{eq:Rotationalnpineq} together with the initial condition $n_\mathrm{p}(x,0)=0$ give information about
the direction of motion of the waves.  For example, if $x>x_\mathrm{crit}$ and $t>0$, the fluxon condensate behaves like a train of superluminal 
librational waves propagating rapidly to the left (since the phase velocity $v_\mathrm{p}$ is large and negative), and if $0<x<x_\mathrm{crit}$ and $t>0$
the condensate behaves like a train of superluminal rotational waves propagating rapidly to the right.
\item The fields $n_\mathrm{p}(x,t)$ and $\mathcal{E}(x,t)$ have in each case exactly the interpretation of reciprocal
phase velocity and energy as explained earlier in the context of formal modulation theory.  Although
we conclude that these quantities satisfy the Whitham system \eqref{eq:Whithamsystem_rewrite},
we wish to emphasize that at no point in our proof do we apply techniques from the theory of partial differential equations (\textit{e.g.}, the Cauchy-Kovalevskaya method) to solve the Whitham system with
the specified initial data.  Instead, the fields $n_\mathrm{p}(x,t)$ and $\mathcal{E}(x,t)$ are constructed
by means of the solution of a system of nonlinear algebraic equations via the Implicit Function Theorem.
We first obtain functions $\mathfrak{p}=\mathfrak{p}(x,t)$ and $\mathfrak{q}=\mathfrak{q}(x,t)$ by solving the equations  $M=I=0$ (see Proposition~\ref{prop:tneq0continuegeneral}), or
for $(x,t)\in S_\rotational$ the equations $\hat{M}=\hat{I}=0$ (see Proposition~\ref{prop:origin}),
and then recover $n_\mathrm{p}(x,t)$ and $\mathcal{E}(x,t)$
therefrom via \eqref{eq:phasevelocity} and \eqref{eq:Euv} respectively.
The fact that these functions satisfy the quasilinear partial differential equations \eqref{eq:Whithamsystem_rewrite} is, from the point of view of our methodology, more or less a coincidence.
\item Interestingly, the asymptotics of the fluxon condensate are accurately described by superluminal wavetrains, even though the fundamental particles of the condensate are
all subluminal solitons of kink and breather type.   
\item The semiclassical asymptotics of the sine-Gordon equation are, at least in the case of
a sufficiently strong initial impulse consistent with Assumption~\ref{assume:rotational}, much more complicated
even for small times $t$ (independent of $\epsilon_N$) than in the case of more well-known integrable
partial differential equations like Korteweg-de Vries (see \cite{DeiftVZ97})  and focusing nonlinear
Schr\"odinger (see \cite{KamvissisMM03,TovbisVZ04}).  For these latter problems, there is an initial
stage of the evolution where the asymptotics are described for all $x\in\mathbb{R}$ in terms of 
elementary functions.  Here, we require two different types of formulae both involving higher transcendental functions, and even these do not suffice to describe the dynamics near the transitional points $x=\pm x_\mathrm{crit}$.
\item If, in contrast to Assumption~\ref{assume:rotational} we instead take $-2<G(0)\le 0$, then it will be clear from our proof that the region $S_\librational$ extends to include a full neighborhood of $(x,t)=(0,0)$, and hence becomes connected, including the entire $t=0$ axis.  Then Theorem~\ref{thm:librational} will suffice to describe the semiclassical asymptotics for small time uniformly for
$x\in\mathbb{R}$.  This is consistent with the dynamics pictured in Figure~\ref{fig:Ap25}.
\item  Our results show that the two superluminal types of modulated single-phase wave behavior observed in Figure~\ref{fig:Ap75} for small time and $x$ bounded away from $\pm x_\text{crit}$ are 
universal for the class of initial data we consider.  It is reasonable to conjecture that 
the qualitatively different behavior observed in the space-time plane on the other side of nonlinear caustic curves might also be universal, and that it might correspond to various types of modulated \emph{multiphase} waves.   We note that, with appropriate modifications and additional 
work, the methods described in this paper are capable of handling these other cases.  
Specifically, for $(x,t)$ in these regions the model Riemann-Hilbert problems \\ref{rhp:wOdotlibrational} and \ref{rhp:wOdotrotational} will need to be generalized to ones that are 
solved using the function theory of higher-genus hyperelliptic Riemann surfaces.  The associated Whitham equations
that generalize \eqref{eq:Whithamsystem_rewrite} will also have a correspondingly increased number of dependent variables (see \cite{ForestM83,ErcolaniFM84,ErcolaniFMM87}).
\end{itemize}

The relevance of the fluxon condensate $\{u_N(x,t)\}$ 
to the Cauchy problem for
\eqref{eq:SG} with $\epsilon=\epsilon_N$ and with pure-impulse initial data characterized by the even Klaus-Shaw function $G$
is then the following result:
\begin{corollary}
When $t=0$, the fluxon condensate  $\{u_N(x,t)\}$ associated with the pure-impulse initial condition of impulse profile $G(\cdot)$ satisfies
\begin{equation}
u_N(x,0)=\bo(\epsilon_N)\pmod{4\pi}\quad\text{and}\quad\epsilon_N\frac{\partial u_N}{\partial t}(x,0)=
G(x)+\bo(\epsilon_N)
\end{equation}
where the error estimates are valid pointwise for $x\neq 0$ and $|x|\neq x_\mathrm{crit}$, and
uniformly on compact subsets of the set of pointwise validity.
\label{corr:ICapproximate}
\end{corollary}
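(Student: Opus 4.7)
The plan is to derive both estimates directly from the asymptotic formulae \eqref{eq:Basymptoticresults} and \eqref{eq:Kasymptoticresults}, specialized to $t=0$. First I verify that the $t=0$ slice, minus the three exceptional points $x=0,\pm x_\mathrm{crit}$, is indeed covered by $S_\librational\cup S_\rotational$: if $|x|>x_\mathrm{crit}$ this follows from $T_\librational(x)>0$, while if $0<|x|<x_\mathrm{crit}$ it follows from $T_\rotational(x)>0$ combined with the monotonicity of $t_\pm$ and $t_\pm(0)=0$, which together force $t_\pm(x)\neq 0$ whenever $x\neq 0$. Thus both theorems apply on the whole relevant set.

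Next I observe that the definition $\Phi(x,t)=-\int_0^t\omega(x,t')\,dt'$ immediately gives $\Phi(x,0)=0$, so all Jacobi elliptic functions appearing in \eqref{eq:Basymptoticresults} and \eqref{eq:Kasymptoticresults} are evaluated at argument zero, where $\mathrm{sn}(0;m)=0$ and $\mathrm{cn}(0;m)=\mathrm{dn}(0;m)=1$. Consequently, in both regions, $\cos(\tfrac12 u_N(x,0))=1+\bo(\epsilon_N)$ and $\sin(\tfrac12 u_N(x,0))=\bo(\epsilon_N)$, which forces $\tfrac12 u_N(x,0)=\bo(\epsilon_N)\pmod{2\pi}$ and hence $u_N(x,0)=\bo(\epsilon_N)\pmod{4\pi}$.

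For the time derivative I substitute $\partial\Phi/\partial t|_{t=0}=-\omega(x,0)$ and use the initial conditions $n_\mathrm{p}(x,0)=0$ and $\mathcal{E}(x,0)=\tfrac12 G(x)^2-1$ to simplify the coefficient multiplying the Jacobi function. In the librational region one has $m(x,0)=G(x)^2/4$, so $\sqrt{m(x,0)}=|G(x)|/2$ and $\omega(x,0)=-\pi/(2K(m(x,0)))$; the factors $K(m(x,0))$ cancel and, because $G\le 0$, the whole expression collapses to $-|G(x)|=G(x)$. In the rotational region one has $m(x,0)=4/G(x)^2$ and the key algebraic step is the nested-radical identity $\sqrt{\mathcal{E}(x,0)\pm\sqrt{\mathcal{E}(x,0)^2-1}}=(|G(x)|\pm\sqrt{G(x)^2-4})/2$, verified directly by squaring; the half-sum is then $|G(x)|/2$, so $\omega(x,0)=-\pi|G(x)|/(4K(m(x,0)))$ and the same cancellation yields $\epsilon_N\,\partial u_N/\partial t(x,0)=G(x)+\bo(\epsilon_N)$.

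Beyond the nested-radical simplification, which is the one step that is not a mere substitution, no real obstacle arises: the corollary is essentially a consistency check that the elliptic-function representations of Theorems~\ref{thm:librational} and \ref{thm:rotational} collapse at $t=0$ to the specified initial data. Pointwise validity on $\{x:x\neq 0,\;|x|\neq x_\mathrm{crit}\}$ and uniformity on compact subsets of that set are inherited verbatim from the corresponding uniformity statements of those two theorems, since taking $t=0$ defines a closed slice of any compact subset of $S_\librational$ or $S_\rotational$.
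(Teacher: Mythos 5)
Your proof is correct and follows exactly the route the paper intends: the corollary is stated without proof precisely because it is a direct specialization of Theorems~\ref{thm:librational} and \ref{thm:rotational} to $t=0$, using $\Phi(x,0)=0$, the initial values of $n_\mathrm{p}$ and $\mathcal{E}$, and the fact that the Jacobi functions at zero argument take the values $\mathrm{sn}=0$, $\mathrm{cn}=\mathrm{dn}=1$. Your algebraic simplifications (in particular $m(x,0)=G(x)^2/4$ in case \librational, $m(x,0)=4/G(x)^2$ in case \rotational, and the nested-radical identity collapsing $\tfrac{1}{2}\bigl(\sqrt{\mathcal{E}+\sqrt{\mathcal{E}^2-1}}+\sqrt{\mathcal{E}-\sqrt{\mathcal{E}^2-1}}\bigr)$ to $|G(x)|/2$) are all correct, and your observation that a compact $K\subset\{x:x\neq 0,\ |x|\neq x_\mathrm{crit}\}$ yields a compact $K\times\{0\}\subset S_\librational\cup S_\rotational$ correctly transfers the uniformity.
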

In this sense, the fluxon condensate approximates the solution of the Cauchy problem for \eqref{eq:SG} with initial data \eqref{eq:IC} when $\epsilon=\epsilon_N$ and $N$ is large.

\subsection{Outline of the rest of the paper}
\label{sec:outline}
The remaining sections of the paper are devoted to the proofs of 
Theorems \ref{thm:librational} (small-time librational asymptotics) 
and \ref{thm:rotational} (small-time rotational asymptotics).  These are 
proven using 
the well-developed inverse-scattering method (see the brief 
discussion in \S\ref{sec:pure-impulse} and the more detailed exposition 
in our paper \cite{BuckinghamM08}).  The fluxon condensates $\{u_N(x,t)\}$ 
we study are defined by their scattering data, effectively skipping the 
forward-scattering transform.  Thus all of our analysis concerns the 
inverse-scattering transform.  

Our approach is to use the Riemann-Hilbert problem formulation of the 
inverse-scattering transform.  Since the scattering data are reflectionless 
(that is, comprised of only eigenvalues and the corresponding proportionality 
constants), the associated Riemann-Hilbert problem for the 
$2\times 2$ matrix-valued function $\mathbf{J}(w)$ has only poles 
and a 
completely trivial jump on the positive real axis (which could be removed 
at the cost of artificially doubling the number of poles through the transformation $w=z^2$).
This setup and 
parts of the subsequent analysis are similar to an analagous work on 
semiclassical soliton ensembles for the focusing nonlinear 
Schr\"odinger equation \cite{KamvissisMM03}.  

Our first step is to make a local change of variables 
$\mathbf{J}(w)\to\mathbf{M}(w)$ in the Riemann-Hilbert 
problem in \S\ref{sec:interpolation} that removes the poles at the price 
of introducing further jump contours (which are more amenable to analysis).  
Depending on the value of $x$ and $t$, different transformations are 
used in different parts of the spectral $w$-plane.  These different 
transformations are illustrated in Figures 
\ref{fig:Ann_DeltaEmpty}--\ref{fig:Ann_NablaNearMinusOneOverM}.  
Note that only one of these cases ($\Delta=\emptyset$ shown in 
Figure \ref{fig:Ann_DeltaEmpty}) is required to 
analyze solutions with even initial data for small times and $x$ bounded away from 
the origin.  

In \S\ref{sec:g-function} we make another transformation 
$\mathbf{M}(w)\to\mathbf{N}(w)$ involving a $g$-function, 
a standard tool (first introduced in \cite{DeiftVZ97})
for controlling the behavior of jump matrices in the 
Riemann-Hilbert problem.  The details of the construction of the 
$g$-function are contained in \S\ref{sec:construction-of-g}.  
Finding the $g$-function involves identifying parts of the jump 
contours as a \emph{band} $\beta$ and parts as a \emph{gap} $\gamma$.  
There are two possible topological configurations for the band 
$\beta$ (see Figures \ref{fig:betagammaL} and \ref{fig:betagammaR}).  
In case $\librational$ (leading to librational wavetrains), the 
band endpoints are a complex-conjugate pair ($w=\mathfrak{p}\pm i\sqrt{-\mathfrak{q}}$).   
In case $\rotational$ (leading to rotational wavetrains), the 
band endpoints are real ($w=\mathfrak{p}\pm\sqrt{\mathfrak{q}}$).  The band endpoints are 
chosen to enforce certain conditions making the jump matrices easy to analyze 
in the limit $\epsilon_N\to 0$.  The key part of the analysis is 
proving there exists a $g$-function \emph{independent of $\epsilon_N$} 
such that these conditions are satisfied for some fixed small time.
This analysis is particularly delicate in a neighborhood of 
$(x,t)=(0,0)$, as described in \S\ref{continuation-from-x=0}.
In \S\ref{sec:Whitham} we show that 
the band endpoints associated to an admissable $g$-function satisfy the 
Whitham modulation equations \eqref{eq:Whithamsystem_rewrite} in Riemann invariant (diagonal) form.

In \S\ref{sec:opening-a-lens} we open lenses in the Riemann-Hilbert 
problem by making the transformation $\mathbf{N}(w)\to\mathbf{O}(w)$ 
as illustrated in Figures \ref{fig:BcaseOcontour}--\ref{fig:KcaseOMinusOOM}.  
This has the effect of making the jump matrices exponentially 
close to the identity
in $\epsilon_N$ except on the band $\beta$, the positive real axis, and in 
small neighborhoods of the band endpoints.  In \S\ref{sec:global-parametrix} 
we construct the \emph{global parametrix} $\dot{\mathbf{O}}(w)$, 
which is 
an explicit approximation of $\mathbf{O}(w)$ that is valid in the 
whole complex $w$-plane.  In 
\S\ref{sec:error} we prove rigorously that $\dot{\mathbf{O}}(w)$ is an 
$\bo(\epsilon_N)$-approximation of $\mathbf{O}(w)$.  Thus we can 
use $\dot{\mathbf{O}}(w)$ to compute the solution $u_N(x,t)$ to the 
sine-Gordon equation modulo errors of size $\mathcal{O}(\epsilon_N)$.  Finally, 
Appendix 
\ref{app:outer} contains the details of calculating the solutions 
$u_N(x,t)$ from the outer parametrix $\dot{\mathbf{O}}(w)$.
A significant portion of the effort in Appendix \ref{app:outer} 
is devoted to translating more or less standard formulae involving 
Riemann theta functions of genus one into expressions involving 
Jacobi elliptic functions with suitable moduli.  This is necessary to 
establish the simple form of the asymptotic formulae appearing in the statements of Theorems \ref{thm:librational}
and \ref{thm:rotational}.

\subsection{Notation and terminology}
\label{sec:notation}
With the exception of the identity matrix $\mathbb{I}$ and
the three Pauli matrices,
\begin{equation}
\sigma_1:=\begin{bmatrix}0 & 1\\1 & 0\end{bmatrix},\quad
\sigma_2:=\begin{bmatrix}0 & -i\\i & 0\end{bmatrix},\quad
\sigma_3:=\begin{bmatrix}1 & 0 \\ 0 & -1\end{bmatrix},
\label{eq:Pauli}
\end{equation}
we will denote matrices by bold capital letters (\emph{e.g.} $\mathbf{M}$)
and vectors by bold lowercase letters (\emph{e.g.} $\mathbf{v}$).
We will use $\overline{A}$ for the closure of a set $A\subset\mathbb{R}^2$,
and denote complex conjugation with an asterisk.

In what follows, by a planar \emph{arc} we will mean the image of a
continuous and piecewise-smooth one-to-one map $(0,1)\ni t\mapsto
w(t)\in\mathbb{C}\cong\mathbb{R}^2$ with parameter $t$ and
nonvanishing derivative.  By a planar \emph{contour} we will mean the
$\mathbb{R}^2$-closure of a finite union of arcs.  Thus a contour is
always a closed set in the topology of $\mathbb{R}^2$ and may contain
self-intersection points where various arcs meet.  An \emph{oriented
  contour} is a contour $K$ written in a particular way as the
$\mathbb{R}^2$-closure of a finite union, denoted $\vec{K}$, of
pairwise-disjoint arcs each of which is assigned an orientation in the
obvious way according to its parametrization by $t$.  An oriented
contour may include at most a finite number of points at which the
orientation is not properly defined.  These may be self-intersection
points, endpoints, or points dividing an arc into oppositely-oriented
sub-arcs.  

If $K$ is an oriented contour and $f:\mathbb{C}\setminus
K\to \mathbb{C}$ is an analytic function, we denote by $f_+(\xi)$
(respectively $f_-(\xi)$) the boundary value taken by $f(w)$ as
$w\to\xi\in\vec{K}$ from the left (right) according to local
orientation, if it exists.  We use analogous notation for
vector-valued functions (\emph{e.g.} $\mathbf{v}_\pm(\xi)$) and
matrix-valued functions (\emph{e.g.} $\mathbf{M}_\pm(\xi)$).
Given a contour $K$ we define a metric $d_K$ on $\mathbb{C}\setminus K$
as follows:
\begin{equation}
d_K(w_1,w_2):=\mathop{\inf_{P\subset\mathbb{C}\setminus K}}_{P:w_1\to w_2}
\mathrm{length}(P)
\end{equation}
where the infimum is taken over paths $P\subset\mathbb{C}\setminus K$
connecting the points $w$ and $z$.  If $0<\alpha\le 1$, an analytic
function $f:\mathbb{C}\setminus K\to\mathbb{C}$ is said to be
\emph{uniformly H\"older-$\alpha$ continuous} if
\begin{equation}
\sup_{w_1,w_2\in\mathbb{C}\setminus K}\frac{|f(w_1)-f(w_2)|}
{d_K(w_1,w_2)^\alpha}<\infty.
\end{equation}
This definition generalizes in the obvious way to vector-valued or matrix-valued
functions, and also may be analogously defined relative to open domains
$U\setminus K$, $U\subset\mathbb{C}$, $U$ open.

\section{Formulation of the Inverse Problem for Fluxon Condensates}
\label{section-formulation}
Let a function $G:\mathbb{R}\to\mathbb{R}$ satisfying the conditions
set out in the introduction be given, and let a decreasing sequence
$\{\epsilon_N\}_{N=1}^\infty$ be defined by
\eqref{eq:epsilonNgeneral}.  For each integer $N>0$, define numbers
$\lambda_k^0=\lambda_{N,k}^0$ on the positive imaginary axis by the
Bohr-Sommerfeld quantization rule \eqref{eq:BohrSommerfeld} with the
WKB phase integral \eqref{eq:WKBphase} and the conditions
$\epsilon=\epsilon_N$ and $N(\epsilon)=N$.  Regardless of the value of
$N=1,2,3,\dots$, the numbers $\{\lambda_{N,k}^0\}_{k=0}^{N-1}$ will
each have exactly two distinct preimages in the $w$-plane with
$|\arg(-w)|<\pi$ under the map $\lambda=E(w)$ if the following
additional condition is satisfied.
\begin{assume} 
The fraction $\Psi(i/2)/\|G\|_1$ is irrational.  
\label{assume:nodoublegeneral}
\end{assume}
This simply guarantees that none of the numbers
$\{\lambda_{N,k}^0\}_{k=0}^{N-1}$ coincide with $\lambda= i/2$,
the unique critical value of $E(w)$ for $|\arg(-w)|<\pi$, which 
ensures the poles that will appear in the definition of Riemann-Hilbert Problem 
\ref{basic-rhp} below are all simple.  
Rationality of
$\Psi(i/2)/\|G\|_1$ can also be admitted with the same effect
at the cost of passing to a subsequence of values of $N$.

Let a function $Q(w)$ be defined for $|\arg(-w)|<\pi$ as follows:
\begin{equation}
Q(w)=Q(w;x,t):=E(w)x+D(w)t,
\label{eq:Qw}
\end{equation}
and define
\begin{equation}
\Pi_N(w):=\prod_{k=0}^{N-1}\frac{E(w)+\lambda^0_{N,k}}{E(w)-\lambda^0_{N,k}},
\quad |\arg(-w)|<\pi.
\label{eq:PiNw}
\end{equation}
According to Assumption~\ref{assume:nodoublegeneral}, the denominator
of $\Pi_N(w)$ has $2N$ simple zeros in the domain of definition, while
the numerator of $\Pi_N(w)$ is analytic and nonvanishing.  
Let $P_N$ denote the set of poles of $\Pi_N(w)$, $2N$ points consisting
of $2N_\mathsf{B}$ points in complex conjugate pairs on the unit circle $S^1$
and $2N_\mathsf{K}$ points on the negative real axis in involution with 
respect to the map $w\to 1/w$.
We may write $\Pi_N(w)$
equivalently in the form
\begin{equation}
\Pi_N(w)=
\prod_{y\in P_N}\frac{\sqrt{-w}+\sqrt{-y}}{\sqrt{-w}-\sqrt{-y}}.
\label{eq:PiNwagain}
\end{equation}
The basic Riemann-Hilbert problem of inverse scattering to construct
the fluxon condensate corresponding to $G$ is then the following.  We
are following the description of the inverse-scattering problem given
in Appendix A of \cite{BuckinghamM08} but we are exploiting the
symmetry $z\mapsto -z$ to formulate the problem in terms of the
complex variable $w=z^2$, which introduces a jump on $\mathbb{R}_+$.
\begin{rhp}[Basic Problem of Inverse Scattering]
\label{basic-rhp}
Find a $2\times 2$ matrix function $\mathbf{H}(w)=\mathbf{H}_N(w;x,t)$
of the complex variable $w$ with the following properties:
\begin{itemize}
\item[]\textbf{Analyticity:} $\mathbf{H}(w)$ is analytic for $w\in
\mathbb{C}\setminus(P_N\cup\mathbb{R}_+)$.
\item[]\textbf{Jump Condition:} There is a neighborhood $U=U_N$ of $\mathbb{R}_+$ such that $\mathbf{H}(w)$ is uniformly H\"older-$\alpha$
continuous for all $\alpha\in (0,1]$ on $U\setminus\mathbb{R}_+$.  Letting
$\mathbb{R}_+$ be oriented from left to right, the boundary values
taken by $\mathbf{H}(w)$ on $\mathbb{R}_+$ are related by the jump
condition
\begin{equation}
\mathbf{H}_+(\xi)=\sigma_2\mathbf{H}_-(\xi)\sigma_2,\quad \xi\in\vec{\mathbb{R}}_+.
\end{equation}
\item[]\textbf{Singularities:}  Each of the points of $P_N$ is
a simple pole of $\mathbf{H}(w)$.  If $y\in P_N$ with $E(y)=\lambda^0_{N,k}$
for $k=0,\dots,N-1$, then
\begin{equation}
\mathop{\mathrm{Res}}_{w=y}\mathbf{H}(w)=\lim_{w\to y}
\mathbf{H}(w)\begin{bmatrix} 0 & 0 \\
\displaystyle 
(-1)^{k+1}\mathop{\mathrm{Res}}_{w=y}e^{2iQ(w;x,t)/\epsilon_N}
\Pi_N(w) & 0
\end{bmatrix}.
\end{equation}
These amount to one matrix-valued condition on the residue of $\mathbf{H}(w)$
at each of its poles.
\item[]\textbf{Normalization:}  The following normalization condition holds:
\begin{equation}
\lim_{w\to\infty}\mathbf{H}(w)=\mathbb{I},
\label{eq:Hnorm}
\end{equation}
where the limit is uniform with respect to angle for $|\arg(-w)|<\pi$.
\end{itemize}
\label{rhp:basicw}
\end{rhp}
It is an easy application of Liouville's Theorem that any solution of
this problem must satisfy $\det(\mathbf{H}(w))\equiv 1$, from which it
follows that if $\mathbf{H}_1(w)$ and $\mathbf{H}_2(w)$ are any two
solutions, the matrix ratio
$\mathbf{R}(w):=\mathbf{H}_1(w)\mathbf{H}_2(w)^{-1}$ is analytic for
$|\arg(-w)|<\pi$ (has removable singularities at the points of $P_N$
and acquires no additional singularities from inversion of
$\mathbf{H}_2(w)$), is H\"older-$\alpha$ continuous in
$U\setminus\mathbb{R}_+$, and tends to the identity matrix as
$w\to\infty$.  The boundary values taken by $\mathbf{R}(w)$ on
$\vec{\mathbb{R}}_+$ satisfy
$\mathbf{R}_+(\xi)=\sigma_2\mathbf{R}_-(\xi)\sigma_2$.  If we set
\begin{equation}
\mathbf{S}(z):=\begin{cases}\mathbf{R}(z^2),\quad&\Im\{z\}>0,\\
\sigma_2\mathbf{R}(z^2)\sigma_2,\quad &\Im\{z\}<0,
\end{cases}
\label{eq:zw-unfold}
\end{equation}
then it is clear that $\mathbf{S}(z)$ extends to an entire function of
$z\in\mathbb{C}$ with identity asymptotics as $z\to\infty$.  Hence
Liouville's Theorem shows that $\mathbf{S}(z)\equiv\mathbb{I}$ and so
by restriction to $\Im\{z\}>0$, $\mathbf{H}_1(w)\equiv\mathbf{H}_2(w)$
for $|\arg(-w)|<\pi$.  Therefore, solutions to Riemann-Hilbert
Problem~\ref{rhp:basicw} are necessarily unique if they exist.

Given any solution $\mathbf{H}(w)$ of Riemann-Hilbert
Problem~\ref{rhp:basicw}, another is easily generated by setting
$\mathbf{H}^\sharp(w):=\mathbf{H}(w^*)^*$.  By uniqueness it follows
that $\mathbf{H}^\sharp(w)\equiv\mathbf{H}(w)$, or equivalently, that
the unique solution of Riemann-Hilbert Problem~\ref{rhp:basicw}
necessarily satisfies 
\begin{equation}
\mathbf{H}(w^*)=\mathbf{H}(w)^*.
\label{eq:Hsymmetryw}
\end{equation}

If for some $(x,t)\in\mathbb{R}^2$, $\mathbf{H}(w)$ is the unique
solution to Riemann-Hilbert Problem~\ref{rhp:basicw} for all
integer $N\geq N_0$ for 
sufficiently large $N_0$, then a construction similar to
\eqref{eq:zw-unfold} involving the variable $z$ such that $w=z^2$
shows that if it exists the solution $\mathbf{H}(w)$ has convergent
series expansions of the form
\begin{equation}
\mathbf{H}(w)=\sum_{k=0}^\infty\mathbf{H}_N^{0,k}(x,t)(\sqrt{-w})^k,\quad
|w|<r
\label{eq:wzeroexpansion}
\end{equation}
and
\begin{equation}
\mathbf{H}(w)=\mathbb{I}+\sum_{k=1}^\infty\mathbf{H}_N^{\infty,k}(x,t)
(\sqrt{-w})^{-k},\quad |w|>R
\label{eq:winfinityexpansion}
\end{equation}
for suitable numbers $r$ and $R$ independent of $N$.
We will use the notation
\begin{equation}
\mathbf{A}_N(x,t):=\mathbf{H}^{0,0}_N(x,t),\quad\mathbf{B}^0_N(x,t):=
\mathbf{H}^{0,0}_N(x,t)^{-1}\mathbf{H}^{0,1}_N(x,t),\quad
\mathbf{B}^\infty_N(x,t):=\mathbf{H}^{\infty,1}_N(x,t),
\label{eq:ABsdef}
\end{equation}
defining
three matrices
depending parametrically on $(x,t)\in \mathbb{R}^2$ and the integer
$N\ge N_0$.  These matrices necessarily satisfy the
conditions
\begin{equation}
\det(\mathbf{A}_N(x,t))=1,\quad
\mathbf{A}_N(x,t)=\sigma_2 \mathbf{A}_N(x,t)\sigma_2,\quad\text{and}
\quad
\mathbf{A}_N(x,t)=\mathbf{A}_N(x,t)^*,
\label{eq:Hzeroidentities}
\end{equation}
\begin{equation}
\mathrm{tr}(\mathbf{B}_N^\infty(x,t))=0,\quad
\mathbf{B}^\infty_N(x,t)=-\sigma_2\mathbf{B}^\infty_N(x,t)\sigma_2,\quad
\text{and}\quad\mathbf{B}^\infty_N(x,t)=\mathbf{B}^\infty_N(x,t)^*,
\label{eq:Honeidentities}
\end{equation}
and
\begin{equation}
\mathrm{tr}(\mathbf{B}^0_N(x,t))=0,\quad
\mathbf{B}^0_N(x,t)=-\sigma_2\mathbf{B}^0_N(x,t)\sigma_2,\quad
\text{and}\quad\mathbf{B}^0_N(x,t)=\mathbf{B}^0_N(x,t)^*.
\label{eq:Bzeroidentities}
\end{equation}

\begin{definition}[Fluxon condensates]
Given $(x,t)\in\mathbb{R}^2$, suppose that Riemann-Hilbert Problem~\ref{rhp:basicw} has a solution for all integer $N\ge N_0$.  Then, 
the fluxon condensate $\{u_N(x,t)\}_{N=N_0}^\infty$ associated with
the impulse profile $G$ is given (modulo addition of
arbitrary integer multiples of $4\pi$) in terms of
$\mathbf{A}_N(x,t)$ as follows:
\begin{equation}
\cos\left(\frac{1}{2}u_N(x,t)\right)
= A_{N,11}(x,t)
\quad\text{and}\quad
\sin\left(\frac{1}{2}u_N(x,t)\right)=
A_{N,21}(x,t).
\label{eq:cossinuN}
\end{equation}
Note that the identities \eqref{eq:Hzeroidentities} ensure the
the reality of these expressions 
as well as the Pythagorean identity
$\sin(\tfrac{1}{2}u_N(x,t))^2+\cos(\tfrac{1}{2}u_N(x,t))^2=1$.
\label{def:condensate}
\end{definition}
\begin{proposition}
  Suppose that $\Omega\subset\mathbb{R}^2$ is open and 
an integer $N_0>0$ is given such that 
Riemann-Hilbert Problem~\ref{rhp:basicw} has
a solution whenever $(x,t)\in \Omega$ and $N\ge N_0$, and that the fluxon
condensate $\{u_N(x,t)\}_{N=N_0}^\infty$ is defined as above for
$(x,t)\in \Omega$.  Then for each integer $N\ge N_0$, $u=u_N(x,t)$ is an
exact real-valued solution of the sine-Gordon equation \eqref{eq:SG}
with $\epsilon=\epsilon_N$.  Moreover, 
\begin{equation}
\epsilon_N\frac{\partial u_N}{\partial t}(x,t)=B^0_{N,12}(x,t)+B^\infty_{N,12}(x,t).
\label{eq:epsilonut}
\end{equation}
\end{proposition}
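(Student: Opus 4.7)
The plan is to carry out the standard inverse-scattering dressing procedure. Set
\begin{equation*}
\mathbf{\Phi}(w;x,t) := \mathbf{H}(w;x,t)\,e^{-iQ(w;x,t)\sigma_3/\epsilon_N}.
\end{equation*}
I claim that $\mathbf{\Phi}$ is a fundamental matrix solution of the Faddeev--Takhtajan Lax pair \eqref{eq:FT} with $\epsilon=\epsilon_N$, for a function $u=u_N(x,t)$ that will be forced to agree with Definition~\ref{def:condensate}. The compatibility condition of \eqref{eq:FT} is the sine-Gordon equation \eqref{eq:SG}, which yields the first assertion; the identity \eqref{eq:epsilonut} will emerge from matching the $O(1)$ Laurent coefficients of $\mathbf{L}^x:=\epsilon_N\mathbf{\Phi}_x\mathbf{\Phi}^{-1}$ at $w=\infty$ and at $w=0$.

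First I would show that $\mathbf{L}^x$ and $\mathbf{L}^t:=\epsilon_N\mathbf{\Phi}_t\mathbf{\Phi}^{-1}$ are rational in $\sqrt{-w}$ with singularities only at $w=0$ and $w=\infty$. The $(x,t)$-independent jump $\mathbf{H}_+=\sigma_2\mathbf{H}_-\sigma_2$ is a pure conjugation, so after unfolding via \eqref{eq:zw-unfold} to the $z$-plane the cut across $\mathbb{R}_+$ disappears entirely. At each pole $y\in P_N$, the $(x,t)$-dependence of the residue condition enters only through the scalar $e^{2iQ(w;x,t)/\epsilon_N}$ inside the rank-one nilpotent factor $N_y(x,t)$ on the right, and a direct computation gives $\partial_x N_y = (2iE(y)/\epsilon_N)N_y$ and an analogous formula for $\partial_t N_y$, so $\partial_x N_y$ and $\partial_t N_y$ are proportional to $N_y$ itself. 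Combined with the facts that the residue condition forces the Laurent coefficient $\mathbf{H}_{-1}$ at $y$ to be rank-one with $\mathbf{H}_{-1}N_y=0$ and that $N_y^2=0$, this implies that the naive second-order poles of $\mathbf{L}^x$ and $\mathbf{L}^t$ at $y$ cancel, leaving removable singularities. Liouville's theorem (applied to the entire continuation obtained via \eqref{eq:zw-unfold}) then reduces $\mathbf{L}^x$ and $\mathbf{L}^t$ to the sums of their principal parts at $w=0$ and $w=\infty$, which are computed from the expansions \eqref{eq:wzeroexpansion}--\eqref{eq:winfinityexpansion} together with $4E(w)=i(\sqrt{-w}+1/\sqrt{-w})$ and $4D(w)=i(\sqrt{-w}-1/\sqrt{-w})$.

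The identification is performed by matching these principal parts to those of $-\tfrac{i}{4}\mathbf{X}(w)$ and $-\tfrac{i}{4}\mathbf{T}(w)$ read off from \eqref{eq:FT}. The $1/\sqrt{-w}$ matching at $w=0$ forces $\mathbf{A}_N\sigma_3\mathbf{A}_N^{-1}=\cos(u_N)\sigma_3+\sin(u_N)\sigma_1$, which by \eqref{eq:Hzeroidentities} is equivalent (modulo $4\pi$) to \eqref{eq:cossinuN} with $u_N$ real-valued. The $O(1)$ matching at $w=\infty$, using $[\mathbf{B}_N^\infty,\sigma_3]=-2iB^\infty_{N,12}\sigma_2$ (valid by \eqref{eq:Honeidentities}), gives $B^\infty_{N,12}=\tfrac{1}{2}\epsilon_N(\partial_xu_N+\partial_tu_N)$; and the $O(1)$ matching at $w=0$, which involves $\epsilon_N\mathbf{A}_{N,x}\mathbf{A}_N^{-1}=-\tfrac{1}{2}i\epsilon_N(\partial_xu_N)\sigma_2$ together with $\tfrac{1}{4}\mathbf{A}_N[\mathbf{B}_N^0,\sigma_3]\mathbf{A}_N^{-1}$ simplified via the rotation identity $\mathbf{A}_N\sigma_2\mathbf{A}_N^{-1}=\sigma_2$, yields $B^0_{N,12}=\tfrac{1}{2}\epsilon_N(\partial_tu_N-\partial_xu_N)$. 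Adding these produces \eqref{eq:epsilonut}. The main technical obstacle is the pole-cancellation step at $y\in P_N$, which requires expanding $\mathbf{H}$ to two orders near $y$ and exploiting the combination of the rank-one structure of $\mathbf{H}_{-1}$, the relation $\mathbf{H}_{-1}N_y=0$, and the proportionality of $\partial_{x,t}N_y$ to $N_y$; once this is done and the Lax pair is in place, the zero-curvature condition $\mathbf{X}_t-\mathbf{T}_x+(4i\epsilon_N)^{-1}[\mathbf{X},\mathbf{T}]=0$, expanded in powers of $\sqrt{-w}$, reproduces \eqref{eq:SG}, while reality of $u_N$ follows from \eqref{eq:Hsymmetryw} via \eqref{eq:Hzeroidentities}.
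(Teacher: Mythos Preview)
Your proposal is correct and follows essentially the same dressing/zero-curvature approach as the paper: define $\mathbf{\Phi}=\mathbf{H}e^{-iQ\sigma_3/\epsilon_N}$, show $\epsilon_N\mathbf{\Phi}_x\mathbf{\Phi}^{-1}$ and $\epsilon_N\mathbf{\Phi}_t\mathbf{\Phi}^{-1}$ are Laurent polynomials in $\sqrt{-w}$ by pole cancellation and Liouville, then match coefficients to the Lax pair and read off the sine-Gordon equation and \eqref{eq:epsilonut}. The only notable organizational difference is in the pole-cancellation step: the paper observes directly that the residue conditions for $\mathbf{\Phi}$ (your $\mathbf{F}$) are $(x,t)$-\emph{independent}, so $\mathbf{\Phi}_x\mathbf{\Phi}^{-1}$ is automatically holomorphic at each $y\in P_N$, whereas you carry the $(x,t)$-dependent nilpotents $N_y$ along and cancel the simple-pole residues of $\epsilon_N\mathbf{H}_x\mathbf{H}^{-1}$ against those of $-iE(w)\mathbf{H}\sigma_3\mathbf{H}^{-1}$ by hand using $\partial_xN_y=(2iE(y)/\epsilon_N)N_y$; both are valid, but the paper's version is shorter. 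Your extraction of \eqref{eq:epsilonut} via the separate identities $B^\infty_{N,12}=\tfrac12\epsilon_N(u_{N,x}+u_{N,t})$ and $B^0_{N,12}=\tfrac12\epsilon_N(u_{N,t}-u_{N,x})$ is equivalent to the paper's reading of the constant term in $4i\epsilon_N\mathbf{H}_t+4D\mathbf{H}\sigma_3=\mathbf{V}\mathbf{H}$.
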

\begin{proof}
Consider the matrix $\mathbf{F}(w)$ defined in terms of the solution
$\mathbf{H}(w)$ of Riemann-Hilbert Problem~\ref{rhp:basicw} by 
\begin{equation}
\mathbf{F}(w):=\mathbf{H}(w)e^{-iQ(w;x,t)\sigma_3/\epsilon_N}.
\end{equation}
Aside from an introduced essential singularity at $w=0$, the matrix
$\mathbf{F}(w)$ is analytic exactly where $\mathbf{H}(w)$ is, and has
similar properties where analyticity is violated.  Namely,
$\mathbf{F}(w)$ is H\"older-$\alpha$ continuous in
$U\setminus\mathbb{R}_+$ except at $w=0$ where the exponential factor
has an essential singularity, and as a consequence of the identity
$Q_+(\xi)=-Q_-(\xi)$ for $\xi\in\vec{\mathbb{R}}_+$ one has
that the boundary values of $\mathbf{F}(w)$ satisfy the jump condition
$\mathbf{F}_+(\xi)=\sigma_2\mathbf{F}_-(\xi)\sigma_2$ for
$\xi\in\vec{\mathbb{R}}_+$.  Also, $\mathbf{F}(w)$ has simple poles
at the points of $P_N$, and if $y\in P_N$ with $E(y)=\lambda^0_{N,k}$
then
\begin{equation}
\mathop{\mathrm{Res}}_{w=y}\mathbf{F}(w)=\lim_{w\to y}
\mathbf{F}(w)\begin{bmatrix}0 & 0\\
\displaystyle (-1)^{k+1}\mathop{\mathrm{Res}}_{w=y}\Pi_N(w) & 0
\end{bmatrix}.
\end{equation}
Note that neither the jump nor residue conditions involve
$(x,t)\in\mathbb{R}^2$, from which it follows that the matrices
\begin{equation}
\mathbf{U}(w):=4i\epsilon_N\mathbf{F}_x(w)\mathbf{F}(w)^{-1}=
4i\epsilon_N\mathbf{H}_x(w)\mathbf{H}(w)^{-1} + 4E(w)\mathbf{H}(w)\sigma_3\mathbf{H}(w)^{-1}
\end{equation}
and
\begin{equation}
\mathbf{V}(w):=4i\epsilon_N\mathbf{F}_t(w)\mathbf{F}(w)^{-1}=
4i\epsilon_N\mathbf{H}_t(w)\mathbf{H}(w)^{-1} + 4D(w)\mathbf{H}(w)\sigma_3\mathbf{H}(w)^{-1}
\end{equation}
are functions of $z=i\sqrt{-w}$ that are analytic for
$z\in\mathbb{C}\setminus\{0\}$.  These functions have the following asymptotic
behavior as $w\to 0$ and $w\to\infty$:
\begin{equation}
\begin{split}
\mathbf{U}(w)&=\begin{cases}
\displaystyle 
\frac{1}{\sqrt{-w}}i\mathbf{A}_N(x,t)\sigma_3\mathbf{A}_N(x,t)^{-1} + \bo(1),
\quad & w\to 0,\\
\sqrt{-w}i\sigma_3 + i[\mathbf{B}^\infty_N(x,t),\sigma_3] + \lo(1),\quad &w\to\infty,
\end{cases}\\
\mathbf{V}(w)&=\begin{cases}
\displaystyle
-\frac{1}{\sqrt{-w}}i\mathbf{A}_N(x,t)\sigma_3\mathbf{A}_N(x,t)^{-1} + \bo(1),
\quad & w\to 0,\\
\sqrt{-w}i\sigma_3 + i[\mathbf{B}^\infty_N(x,t),\sigma_3] + \lo(1),\quad &w\to\infty.
\end{cases}
\end{split}
\end{equation}
It then follows from Liouville's Theorem applied
in the $z$-plane that in fact $\mathbf{U}(w)$ and $\mathbf{V}(w)$ are
Laurent polynomials in $z$ of degree $(1,1)$:
\begin{equation}
\begin{split}
\mathbf{U}(w)&=\sqrt{-w}i\sigma_3 + i[\mathbf{B}^\infty_N(x,t),\sigma_3]
+\frac{1}{\sqrt{-w}}i\mathbf{A}_N(x,t)\sigma_3\mathbf{A}_N(x,t)^{-1}\\
\mathbf{V}(w)&=\sqrt{-w}i\sigma_3 + i[\mathbf{B}^\infty_N(x,t),\sigma_3]
-\frac{1}{\sqrt{-w}}i\mathbf{A}_N(x,t)\sigma_3\mathbf{A}_N(x,t)^{-1}.
\end{split}
\end{equation}
According to \eqref{eq:Hzeroidentities}, we may write $\mathbf{A}_N(x,t)$
in the form
\begin{equation}
\mathbf{A}_N(x,t)=\begin{bmatrix} \cos(\phi) & -\sin(\phi)\\
\sin(\phi) &\cos(\phi)
\end{bmatrix}
\label{eq:H0form}
\end{equation}
where $\phi=\phi_N(x,t)$ is a real angle.  Likewise, according to
\eqref{eq:Honeidentities} we may write $\mathbf{B}^\infty_N(x,t)$ in the
form
\begin{equation}
\mathbf{B}^\infty_N(x,t)=\begin{bmatrix}-d & c\\c & d\end{bmatrix}
\end{equation}
where $c=c_N(x,t)$ and $d=d_N(x,t)$ are real-valued fields.  In terms
of these we therefore have
\begin{equation}
\begin{split}
\mathbf{U}(w)&=\sqrt{-w}i\sigma_3 + 2c\sigma_2  +\frac{1}{\sqrt{-w}}
\left(i\cos(2\phi)\sigma_3+i\sin(2\phi)\sigma_1\right)\\
\mathbf{V}(w)&=\sqrt{-w}i\sigma_3 + 2c\sigma_2  -\frac{1}{\sqrt{-w}}
\left(i\cos(2\phi)\sigma_3+i\sin(2\phi)\sigma_1\right).
\end{split}
\end{equation}
Since when considered as a function of $x$ and $t$, the matrix
$\mathbf{F}(w)$ is a simultaneous fundamental solution matrix of the
first-order overdetermined system
\begin{equation}
4i\epsilon_N\mathbf{F}_x = \mathbf{U}\mathbf{F}\quad\text{and}\quad
4i\epsilon_N\mathbf{F}_t = \mathbf{V}\mathbf{F},
\end{equation}
it follows that this system is compatible, implying that the
\emph{zero-curvature condition}
\begin{equation}
4i\epsilon_N\mathbf{U}_t-4i\epsilon_N\mathbf{V}_x +[\mathbf{U},\mathbf{V}]=\mathbf{0}
\end{equation}
holds.  Since
$\{\sigma_1/\sqrt{-w},\sigma_3/\sqrt{-w},\sigma_2\}$
is a linearly independent set, the zero-curvature equation implies the following three equations:
\begin{equation}
\begin{split}
\left(\epsilon_N\phi_t +\epsilon_N\phi_x -c\right)\cos(2\phi)&=0\\
\left(\epsilon_N\phi_t +\epsilon_N\phi_x -c\right)\sin(2\phi)&=0\\
\epsilon_N c_t - \epsilon_N c_x+\frac{1}{2}\sin(2\phi)&=0.
\end{split}
\end{equation}
The first two are together equivalent to the single equation
$\epsilon_N\phi_t+\epsilon_N\phi_x-c=0$, which may be used to eliminate
$c$ from the third equation, yielding
\begin{equation}
\epsilon_N^2\phi_{tt}-\epsilon_N^2\phi_{xx}+\frac{1}{2}\sin(2\phi)=0,
\end{equation}
which is the sine-Gordon equation \eqref{eq:SG} for $u_N=2\phi$ with
$\epsilon=\epsilon_N$.

Obviously we have
\begin{equation}
\begin{split}
\sin\left(\frac{1}{2}u_N\right)&=\sin(\phi)=A_{N,21}(x,t)\\
\cos\left(\frac{1}{2}u_N\right)&=\cos(\phi)=A_{N,11}(x,t)
\end{split}
\end{equation}
in accordance with \eqref{eq:H0form} and \eqref{eq:cossinuN}.
Finally, the identity \eqref{eq:epsilonut} arises from the constant
term in the Laurent expansion (in powers of $\sqrt{-w}$) of the differential 
equation $4i\epsilon_N \mathbf{H}_t(w) + 4D(w)\mathbf{H}(w)\sigma_3=
\mathbf{V}(w)\mathbf{H}(w)$ equivalent to $4i\epsilon_N\mathbf{F}_t(w)=
\mathbf{V}(w)\mathbf{F}(w)$.
\end{proof}

Note that since $P_N$ consists of points $y$ with $|\arg(-y)|<\pi$, it
is a consequence of the definitions \eqref{eq:DE} and \eqref{eq:Qw}
involving the principal branch of the square root that for each fixed
$N$ and $t$ we have
\begin{equation}
\lim_{x\to +\infty}\mathop{\mathrm{Res}}_{w=y \in P_N}
e^{2iQ(w;x,t)/\epsilon_N}\Pi_N(w) = 0.
\end{equation}
From this it can be shown that for each
$w\in\mathbb{C}\setminus (P_N\cup \mathbb{R}_+)$ and for each
$t\in\mathbb{R}$, $\mathbf{H}_N(w;x,t)\to\mathbb{I}$ as $x\to +\infty$.
By contrast, the asymptotic behavior of $\mathbf{H}$ as $x\to -\infty$
is not at all clear, and this is more than merely a technical
difficulty.  The ``preference'' that $\mathbf{H}$ has for large
positive $x$ stems from a choice in defining the spectral theory for
which Riemann-Hilbert Problem~\ref{rhp:basicw} is the inverse-spectral
problem in terms of scattering ``from the right''.  Given this
asymmetry, it should be no surprise if asymptotic analysis of
$\mathbf{H}$ in the semiclassical limit of large $N$ must take a
different route for $-x$ than it does for $x$.  
This asymmetry can be addressed in two different ways:
\begin{itemize}
\item If the initial conditions $F(x):=u(x,0)$ and $G(x):=\epsilon_Nu_t(x,0)$ 
for the sine-Gordon Cauchy problem are both either even
or odd functions then this symmetry is preseved in time and knowledge of the
solution for, say, positive $x$ is sufficient.  The special initial conditions
under consideration in this paper have even symmetry.
Moreover, it can be shown
directly from the conditions of Riemann-Hilbert Problem~\ref{rhp:basicw} that
the functions $\{u_N(x,t)\}_{N=N_0}^\infty$ approximating the solution
to the Cauchy problem are all even functions of $x$.  
\item The inverse problem may be reformulated in terms of scattering ``from
the left''.  
\end{itemize}

In fact it will turn out that for certain values of
$(x,t)\in\mathbb{R}^2$ of interest, neither of the above two
approaches will be of much help.  However, a modification of the
second approach that can be thought of as formulating the inverse
problem in terms of scattering theory ``partly from the left and
partly from the right'' will indeed succeed.  The idea is to select a
subset $\Delta\subset P_N$ of the pole divisor of $\mathbf{H}(w)$ and
set $\nabla:= P_N\setminus\Delta$.  Then define from the solution
$\mathbf{H}_N(w;x,t)$ of Riemann-Hilbert Problem~\ref{rhp:basicw} a
new matrix function given by
\begin{equation}
\mathbf{J}_N(w;x,t):=\mathbf{H}_N(w;x,t)
\left(\prod_{y\in\Delta}\frac{\sqrt{-w}+\sqrt{-y}}{\sqrt{-w}-\sqrt{-y}}
\right)^{-\sigma_3}.
\label{eq:Htildedef}
\end{equation}
It is easy to see that if $\mathbf{H}_N(w;x,t)$ satisfies Riemann-Hilbert
Problem~\ref{rhp:basicw} then $\mathbf{J}_N(w;x,t)$ satisfies the 
following equivalent problem.
\begin{rhp}[Modified Problem of Inverse Scattering]
  Find a $2\times 2$ matrix function
  $\mathbf{J}(w)=\mathbf{J}_N(w;x,t)$ of the complex
  variable $w$ with the following properties:
\begin{itemize}
\item[]\textbf{Analyticity:} $\mathbf{J}(w)$ is analytic for $w\in
\mathbb{C}\setminus(P_N\cup\mathbb{R}_+)$.
\item[]\textbf{Jump Condition:} There is a neighborhood $U=U_N$ of 
$\mathbb{R}_+$ such that $\mathbf{J}(w)$ is uniformly H\"older-$\alpha$
continuous for all $\alpha\in (0,1]$ on $U\setminus\mathbb{R}_+$.  Letting
$\mathbb{R}_+$ be oriented from left to right, the boundary values
taken by $\mathbf{J}(w)$ on $\mathbb{R}_+$ are related by the jump
condition
\begin{equation}
\mathbf{J}_+(\xi)=\sigma_2\mathbf{J}_-(\xi)\sigma_2,\quad \xi\in
\vec{\mathbb{R}}_+.
\end{equation}
\item[]\textbf{Singularities:}  Each of the points of $P_N$ is
a simple pole of $\mathbf{J}(w)$.  
If $y\in \nabla\subset P_N$ with $E(y)=\lambda^0_{N,k}$
for $k=0,\dots,N-1$, then
\begin{equation}
\mathop{\mathrm{Res}}_{w=y}\mathbf{J}(w)=\lim_{w\to y}
\mathbf{J}(w)\begin{bmatrix} 0 & 0 \\
\displaystyle 
(-1)^{k+1}\mathop{\mathrm{Res}}_{w=y}e^{2iQ(w;x,t)/\epsilon_N}\Pi_N(w) & 0
\end{bmatrix},
\label{eq:Jresiduenabla}
\end{equation}
and if $y\in\Delta\subset P_N$ with $E(y)=\lambda^0_{N,k}$ for $k=0,\dots,N-1$,
then
\begin{equation}
\mathop{\mathrm{Res}}_{w=y}\mathbf{J}(w)=\lim_{w\to y}
\mathbf{J}(w)\begin{bmatrix} 0 &  \displaystyle 
(-1)^{k+1}\mathop{\mathrm{Res}}_{w=y}e^{-2iQ(w;x,t)/\epsilon_N}\Pi_N(w)^{-1} \\
0 & 0
\end{bmatrix},
\label{eq:Jresiduedelta}
\end{equation}
where $\Pi_N(w)$ is re-defined as
\begin{equation}
\Pi_N(w):=\prod_{p\in\nabla}\frac{\sqrt{-w}+\sqrt{-p}}{\sqrt{-w}-\sqrt{-p}}
\cdot\prod_{q\in\Delta}\frac{\sqrt{-w}-\sqrt{-q}}{\sqrt{-w}+\sqrt{-q}},
\end{equation}
(note that this definition reduces to the earlier one, see
\eqref{eq:PiNw}
and \eqref{eq:PiNwagain}, 
in the special case when $\Delta=\emptyset$ and
hence $\nabla=P_N$).  These amount to one matrix-valued condition on
the residue of $\mathbf{J}(w)$ at each of its poles.
\item[]\textbf{Normalization:}  The following normalization condition holds:
\begin{equation}
\lim_{w\to\infty}\mathbf{J}(w)=\mathbb{I},
\end{equation}
where the limit is uniform with respect to angle for $|\arg(-w)|<\pi$.
\end{itemize}
\label{rhp:modifiedw}
\end{rhp}
It is clear that in passing from $\mathbf{H}$ to $\mathbf{J}$ the
nature of the residue conditions is changed near those points 
$y\in \Delta\subset P_N$ and is left unchanged near those points
$y\in\nabla=P_N\setminus\Delta$.
The special case of $\Delta=P_N$ and $\nabla=\emptyset$ corresponds to
reformulating the inverse problem in terms of scattering theory ``from
the left''.  Indeed, we see that in this case the exponential
$e^{2iQ(w;x,t)/\epsilon_N}$ has been completely replaced by its
reciprocal, which makes the limit $x\to -\infty$ particularly
transparent; from the conditions of this problem it is easy to prove
that if $\Delta=P_N$ then $\mathbf{J}_N(w;x,t)\to\mathbb{I}$
as $x\to -\infty$ whenever $t\in\mathbb{R}$ and
$w\in\mathbb{C}\setminus (P_N\cup \mathbb{R}_+)$.  More generally,
this calculation suggests that if $x\in\mathbb{R}$ is a value for
which semiclassical asymptotic analysis of $\mathbf{H}(w)$ is
difficult, the problem may be resolved by analyzing the equivalent
matrix $\mathbf{J}(w)$ instead for a particular choice of
$\Delta$.

From now on we will take Riemann-Hilbert Problem~\ref{rhp:modifiedw}
as the basic object of study in the limit $N\to\infty$, where
the set $\Delta$ is to be chosen differently for different
$(x,t)\in\mathbb{R}^2$ to facilitate the asymptotic analysis.  

Having formulated Riemann-Hilbert Problem~\ref{rhp:modifiedw}, we can
now explain how the plots in \S\ref{sec:exactsolutions} were made.  For any
choice of $\Delta\subset P_N$ it is easy to see that $\mathbf{J}(w)$
is a rational function of $z=i\sqrt{-w}$ and it may therefore be
written as a finite partial fraction expansion with simple
denominators and constant term $\mathbb{I}$ to satisfy the
normalization condition.  The matrix coefficients in the expansion are
then determined from the residue conditions \eqref{eq:Jresiduenabla}
and \eqref{eq:Jresiduedelta}, which imply a square system of linear
equations to be solved for the coefficients.  In this way, the
construction of $\sin(\tfrac{1}{2}u_N(x,t))$ and $\cos(\tfrac{1}{2}u_N(x,t))$ 
may be reduced
to a finite-dimensional (of dimension proportional to $N$) linear
algebra problem parametrized explicitly by $(x,t)\in\mathbb{R}^2$.
The linear algebra problem is ill-conditioned when $N$ is large (this
difficulty can be partly ameliorated by judicious choice of
$\Delta$ given $x$ and $t$), but nonetheless implementing this approach
numerically allows one to explore the phenomenology of the
semiclassical limit while avoiding many traditional pitfalls (for
example, stiffness and propagation of errors) of direct numerical
simulation of the Cauchy problem for the sine-Gordon equation
\eqref{eq:SG} when $\epsilon$ is small.  For more details about
implementation of this direct approach to inverse scattering see our
recent paper \cite{BuckinghamM08}.

\section{Elementary Transformations of $\mathbf{J}(w)$}
We now embark upon a sequence of explicit invertible transformations
with the aim of converting Riemann-Hilbert Problem~\ref{rhp:modifiedw}
into an equivalent one that is better suited to asymptotic analysis in
the limit $N\to\infty$.  Throughout the rest of the paper we will use the following notation
for the composition of the WKB phase integral $\Psi$ with the function $E$:
\begin{equation}
\theta_0(w):=\Psi(E(w)).
\label{eq:theta0def}
\end{equation}
\subsection{Choice of $\Delta$}
\label{sec:choiceofDelta}
The set $P_N$ can be decomposed into a disjoint union
$P_N=P_N^\breather \cup P_N^\kink$ with $P_N^\breather\cap
P_N^\kink=\emptyset$.  Here $P_N^\breather$ consists of $N_\breather$
nonreal complex-conjugate pairs of complex numbers on the unit circle $S^1$
in the $w$-plane, while $P_N^\kink$ consists of $N_\kink$ pairs of
reciprocal negative real numbers (\emph{i.e.} of the form $(w,1/w)$
with $w<0$), none equal to $-1$.  Since each conjugate pair of poles
contributes to the fluxon condensate one breather soliton while each
negative pole contributes one kink soliton, the fluxon condensate may
be viewed as a nonlinear superposition of $2N_\kink$ kinks and
$N_\breather$ breathers.  As a consequence of
Assumption~\ref{assume:rotational}, both $N_\kink$ and $N_\breather$ are
proportional to $N$ and as $N\to\infty$, $P_N^\breather$ fills out the
whole unit circle while $P_N^\kink$ fills out a negative interval of
the form $[\mathfrak{a},\mathfrak{b}]$ where:
\begin{equation}
\mathfrak{a}:=-\frac{1}{4}\left(\sqrt{G(0)^2-4}-G(0)\right)^2, \quad
\mathfrak{b}:=-\frac{1}{4}\left(\sqrt{G(0)^2-4}+G(0)\right)^2=\frac{1}{\mathfrak{a}}.
\label{eq:abdef}
\end{equation}
Note that both $\mathfrak{a}$ and $\mathfrak{b}$ are independent of $N$ and that $\mathfrak{a}<-1<\mathfrak{b}<0$.
If Assumption~\ref{assume:rotational} were not satisfied, that is, if 
$G(0)>-2$, then $P_N^\kink$ would be empty and $P_N^\breather$ would 
fill out a proper sub-arc of the unit circle as $N\to\infty$.

We will consider six different configurations for the subset
$\Delta\subset P_N$.  To each fixed number $\tau_\infty\in (\mathfrak{a},-1)\cup
(-1,\mathfrak{b})$ we associate a sequence $\{\tau_N\}_{N=N_0}^\infty$ of real
numbers with limit $\tau_\infty$ by
\begin{equation}
\theta_0(\tau_N)=\pi\epsilon_N\left\lfloor
\frac{\theta_0(\tau_\infty)}{\pi\epsilon_N}
\right\rfloor,\quad N\ge N_0.
\label{eq:transitionpoint}
\end{equation}
(This equation can be solved for $\tau_N$ near 
$\tau_\infty$ for $N\ge N_0$
with $N_0$ sufficiently large by the Implicit Function Theorem since
the only critical point of $\theta_0(w)$ in $(\mathfrak{a},\mathfrak{b})$, namely
$w=-1$, has been excluded.)  According to the Bohr-Sommerfeld
quantization rule \eqref{eq:BohrSommerfeld}, when $N$ is large, $\tau_N$
is approximately halfway between two neighboring points in
$P_N^\kink$.  The six choices of $\Delta$ we consider are the
following:
\begin{itemize}
\item \underline{$\Delta=\emptyset$}. 
 In this case $\mathbf{J}(w)=\mathbf{H}(w)$ and
Riemann-Hilbert Problems~\ref{rhp:basicw} and~\ref{rhp:modifiedw} coincide.
\item \underline{$\Delta=P^{\prec\kink}_{N}$}. 
In this case we choose $\mathfrak{a}<\tau_\infty<-1$ and set 
$\Delta=P_{N}^{\prec\kink}:=P_N^\kink\cap [\mathfrak{a},\tau_N]$.  Thus $\Delta$ is
localized near $w=\mathfrak{a}$.
\item \underline{$\Delta=P^{\kink\succ}_N$}. 
In this case we choose $-1<\tau_\infty<\mathfrak{b}$ and set
$\Delta=P_N^{\kink\succ}:=P_N^\kink\cap [\tau_N,\mathfrak{b}]$.  Thus $\Delta$
is localized near $w=\mathfrak{b}$.
\item \underline{$\nabla=\emptyset$}.
This is complementary to the case when $\Delta=\emptyset$.
\item \underline{$\nabla=P^{\prec\kink}_N$}.
In this case we choose $\mathfrak{a}<\tau_\infty<-1$ and set 
$\nabla=P^{\prec\kink}_N:=P_N^\kink\cap [\mathfrak{a},\tau_N]$.  Thus $\nabla$
is localized near $w=\mathfrak{a}$.
\item \underline{$\nabla=P_N^{\kink\succ}$}. 
In this case we choose $-1<\tau_\infty<\mathfrak{b}$ and set
$\nabla=P_N^{\kink\succ}:=P_N^\kink\cap [\tau_N,\mathfrak{b}]$.  Thus
$\nabla$ is localized near $w=\mathfrak{b}$.
\end{itemize}
We will refer to $\tau_N$ 
(and sometimes also its limit $\tau_\infty$ as $N\to\infty$) as a
\emph{transition point}.  One consequence of the definition 
\eqref{eq:transitionpoint}
in relation to the Bohr-Sommerfeld quantization rule \eqref{eq:BohrSommerfeld}
and the fact that $P_N$ contains an even number of points is the identity
\begin{equation}
\frac{\theta_0(\tau_N)}{\pi\epsilon_N} = \#\Delta \pmod{2}
\label{eq:cardDelta}
\end{equation}
where $\#\Delta$ denotes the number of points in $\Delta$.

For most $(x,t)$, specifically outside of a small neighborhood of $(x,t)=(0,0)$, the only 
cases we will use are $\Delta=\emptyset$ and $\nabla=\emptyset$.  
In fact, $\Delta=\emptyset$ is the only case needed to analyze the 
behavior for small times and positive $x$ bounded away from 
the origin.  It is then possible to use evenness of the 
Cauchy data (Assumption 
\ref{assume:evenness}) to obtain results for negative $x$.  However, we 
include the case $\nabla=\emptyset$ for completeness (and to admit future generalizations of our methodology to non-even Cauchy data). The analysis of the neighborhood of the origin that relies 
on the four additional cases in which neither $\Delta$ nor $\nabla$ is empty is somewhat more complicated and is presented in a self-contained fashion in \S \ref{continuation-from-x=0}.  A reader
not interested in these details can safely ignore all of the cases except for $\Delta=\emptyset$
and $\nabla=\emptyset$ along with references to any transition points, and may also skip most of \S\ref{continuation-from-x=0}.

\subsection{Interpolation of residues.  Removal of poles}
\label{sec:interpolation}
As a consequence of the Bohr-Sommerfeld quantization rule
\eqref{eq:BohrSommerfeld}, whenever $y\in P_N$ with $E(y)=\lambda^0_{N,k}$
then
\begin{equation}
\mp ie^{\pm i\theta_0(y)/\epsilon_N}=(-1)^k,\quad k=0,\dots,N-1.
\label{eq:interpolatew}
\end{equation}
Thus the exponential function on the left-hand side analytically
interpolates the signs $(-1)^k$ at the corresponding poles of
$\mathbf{J}$.  This gives two different ways to combine the sign
$(-1)^{k+1}$ appearing in the nilpotent residue matrices in the
conditions \eqref{eq:Jresiduenabla} and \eqref{eq:Jresiduedelta} with
the residue factor, a fact we can use to formulate an invertible transformation
of $\mathbf{J}$ into another matrix $\mathbf{M}$ that will not have any
point singularities whatsoever.

\begin{figure}[h]
\begin{center}
\includegraphics{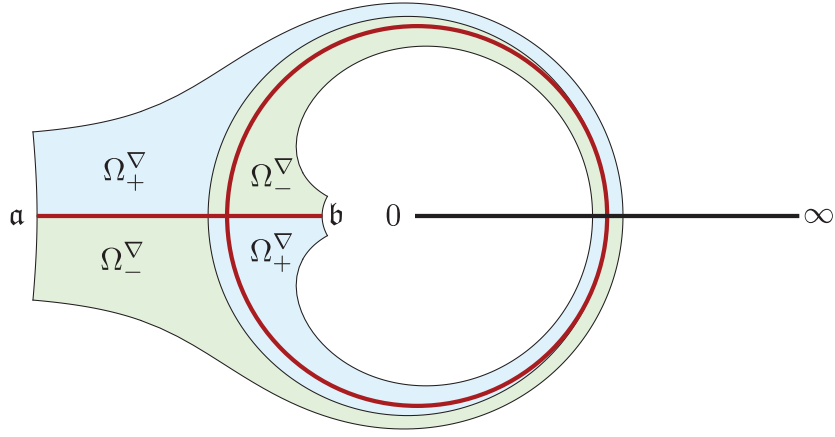}
\end{center}
\caption{\emph{The case of $\Delta=\emptyset$.  Here
    $\Omega_\pm^\Delta=\emptyset$.  The set
    $P_\infty:=[\mathfrak{a},\mathfrak{b}]\cup S^1$ in which $P_N$ accumulates for large $N$ is
    shown in dark red for reference, and the branch cut $\mathbb{R}_+$
    is shown with a black line.}}
\label{fig:Ann_DeltaEmpty}
\end{figure}

\begin{figure}[h]
\begin{center}
\includegraphics{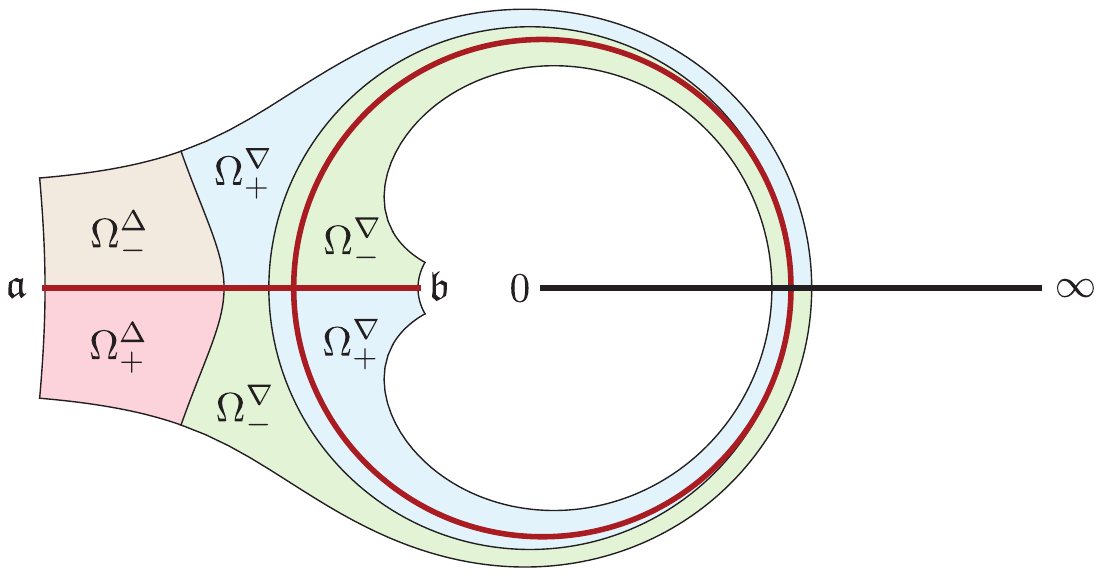}
\end{center}
\caption{\emph{The case of $\Delta=P^{\prec\kink}_N$.  
The regions $\Omega_+^\Delta$
and $\Omega_-^\Delta$ abut the real axis in the interval $(\mathfrak{a},\tau_N)$ where
$\mathfrak{a}<\tau_\infty<\min(w^+,-1)$.}}
\label{fig:Ann_DeltaNearMinusM}
\end{figure}

\begin{figure}[h]
\begin{center}
\includegraphics{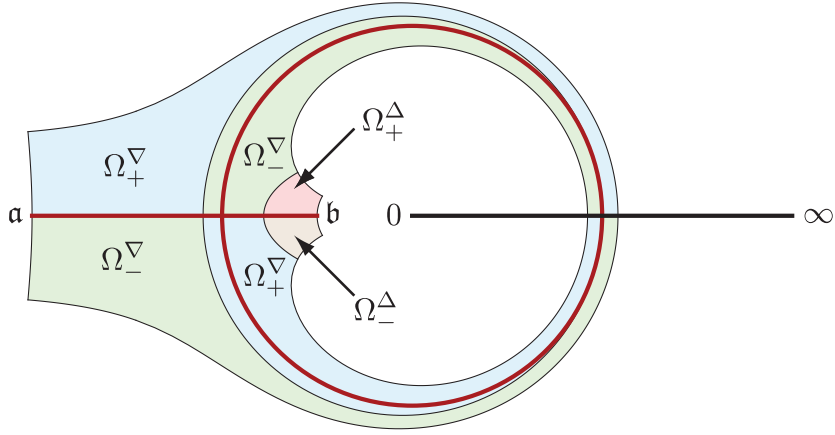}
\end{center}
\caption{\emph{The case of $\Delta=P^{\kink\succ}_N$.  The regions
    $\Omega_+^\Delta$ and $\Omega_-^\Delta$ abut the real axis in the
    interval $(\tau_N,\mathfrak{b})$ where
    $\max(w^+,-1)<\tau_\infty<\mathfrak{b}$.}}
\label{fig:Ann_DeltaNearMinusOneOverM}
\end{figure}

\begin{figure}[h]
\begin{center}
\includegraphics{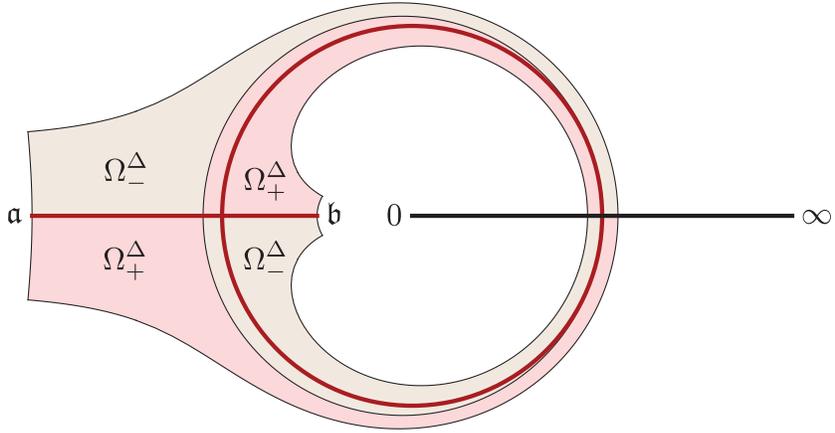}
\end{center}
\caption{\emph{The case of $\nabla=\emptyset$.  Here
    $\Omega_\pm^\nabla=\emptyset$.}}
\label{fig:Ann_NablaEmpty}
\end{figure}

\begin{figure}[h]
\begin{center}
\includegraphics{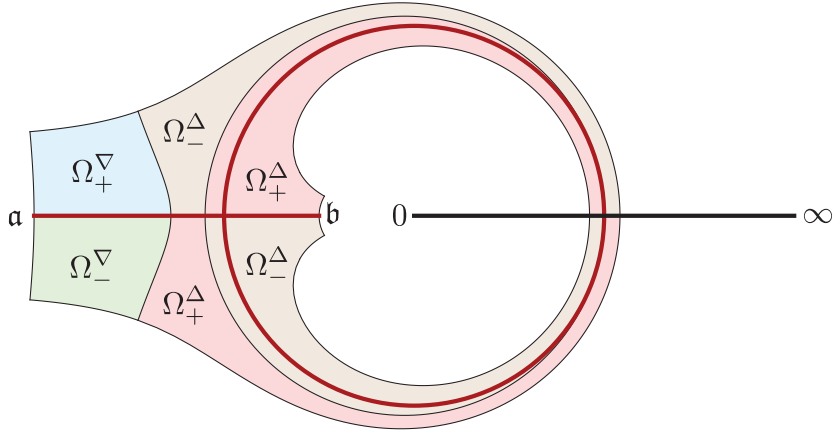}
\end{center}
\caption{\emph{The case of $\nabla=P^{\prec\kink}_N$.  
The regions $\Omega_+^\nabla$
and $\Omega_-^\nabla$ abut the real axis in the interval $(\mathfrak{a},\tau_N)$ where
$\mathfrak{a}<\tau_\infty<\min(w^+,1)$.}}
\label{fig:Ann_NablaNearMinusM}
\end{figure}

\begin{figure}[h]
\begin{center}
\includegraphics{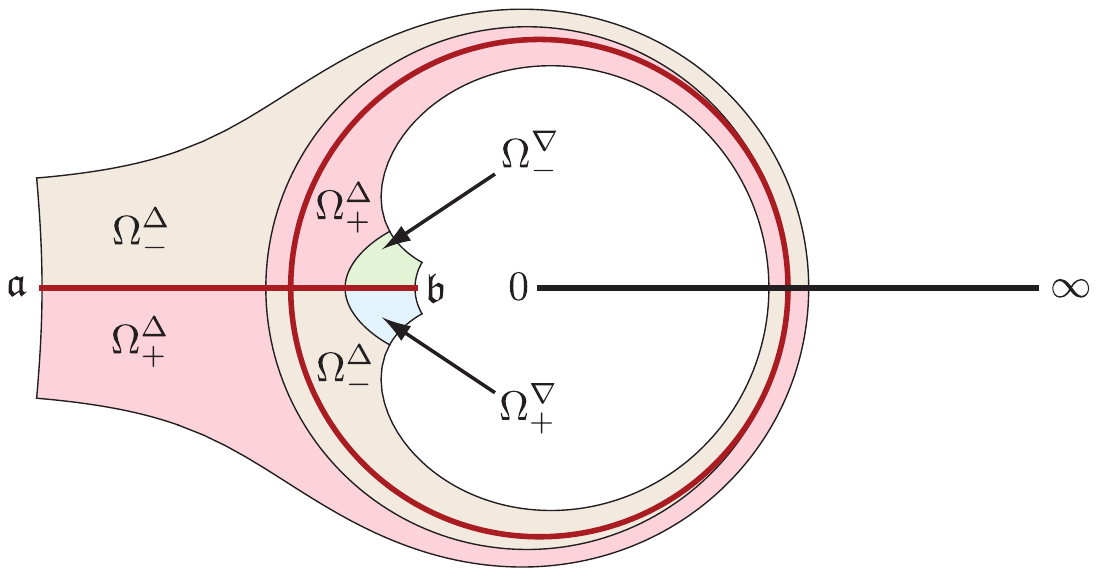}
\end{center}
\caption{\emph{The case of $\nabla=P^{\kink\succ}_N$.  
The regions $\Omega_+^\nabla$
and $\Omega_-^\nabla$ abut the real axis in the interval $(\tau_N,\mathfrak{b})$
where $\max(w^+,1)<\tau_\infty<\mathfrak{b}$.}}
\label{fig:Ann_NablaNearMinusOneOverM}
\end{figure}

We now introduce four disjoint open regions of the $w$-plane with
$|\arg(-w)|<\pi$, denoted $\Omega^\nabla_\pm$ and $\Omega^\Delta_\pm$,
such that $\overline{E(\Omega)}=\overline{D_+\cup D_-}$, where
$\Omega:=\Omega_+^\nabla\cup\Omega_-^\nabla\cup\Omega_+^\Delta\cup\Omega_-^\Delta$,
and where the open rectangles $D_\pm$ are as defined following the
statement of Proposition~\ref{prop:theta0}.  The details of the
definitions of these regions are different depending on which of the
six cases of choice of $\Delta$ we are considering (see
Figures~\ref{fig:Ann_DeltaEmpty}--\ref{fig:Ann_NablaNearMinusOneOverM}),
but the following features are common to all cases:
\begin{itemize}
\item $\overline{\Omega}$
is independent of $N$, $x$, and $t$, and always contains the pole
locus $P_N$ for all $N$.  Moreover, $\overline{\Omega^\nabla}$
contains $\nabla\subset P_N$ while $\overline{\Omega^\Delta}$ contains
$\Delta\subset P_N$, where $\Omega^\nabla:=\Omega_+^\nabla\cup
\Omega_-^\nabla$ and $\Omega^\Delta:=\Omega_+^\Delta\cup
\Omega_-^\Delta$.
This will
allow us to remove the poles from $\mathbf{J}$ by making appropriate
substitutions based on \eqref{eq:interpolatew} in $\Omega^\nabla_\pm$
and $\Omega^\Delta_\pm$.  
\item Schwartz symmetry is present:  $\Omega_-^\nabla=\Omega_+^{\nabla *}$
and $\Omega_-^\Delta=\Omega_+^{\Delta *}$.
\item Either the common boundary of $\Omega_+^\nabla$ with $\Omega_-^\nabla$
or that of $\Omega_+^\Delta$ with $\Omega_-^\Delta$ (but not both) contains
a Schwartz-symmetric closed curve that meets the real axis at $w=1$ and
exactly one other point, $w=w^+\in(\mathfrak{a},\mathfrak{b})$.  This closed curve
is allowed to cross the unit circle at points other than $w=1$, perhaps
nontangentially.
\item
Only the common boundary of $\Omega^\nabla_+$ and $\Omega^\Delta_-$,
or the common boundary of $\Omega^\Delta_+$ and $\Omega^\nabla_-$, 
will depend on $N$ (necessarily through the transition point $\tau_N$
where these curves meet the real axis); the other curves must be independent
of $N$. 
\end{itemize}
For convenience, we assume that the common boundary of $\Omega_+^\nabla$
and $\Omega_-^\Delta$, and also the common boundary of $\Omega^\Delta_+$
and $\Omega^\nabla_-$, are arcs of level curves of $\Im\{E(w)\}$ as
illustrated in 
Figures~\ref{fig:Ann_DeltaEmpty}--\ref{fig:Ann_NablaNearMinusOneOverM}.

In each of the six cases we will now transform the matrix $\mathbf{J}(w)$
into an equivalent matrix $\mathbf{M}(w)=\mathbf{M}_N(w;x,t)$ by the
following explicit formula:
\begin{equation}
\mathbf{M}(w):=\begin{cases}
\displaystyle
\mathbf{J}(w)\begin{bmatrix} 1 & 0 \\
\mp i\Pi_N(w)e^{[2iQ(w;x,t)\pm i\theta_0(w)]/\epsilon_N} & 1\end{bmatrix},
\quad & w\in\Omega_\pm^\nabla,\\\\
\displaystyle
\mathbf{J}(w)\begin{bmatrix}1 & \pm i\Pi_N(w)^{-1}e^{-[2iQ(w;x,t)\pm i
\theta_0(w)]/\epsilon_N}\\0 & 1\end{bmatrix},\quad
&w\in\Omega_\pm^\Delta,\\\\
\mathbf{J}(w),&w\in\mathbb{C}\setminus (\overline{\Omega}\cup\mathbb{R}_+).
\end{cases}
\label{eq:Mdefinew}
\end{equation}
Note that by this transformation, $\mathbf{M}$ inherits from
$\mathbf{J}$ the Schwartz symmetry $\mathbf{M}(w^*)=\mathbf{M}(w)^*$.

It is a consequence of the interpolation formula
\eqref{eq:interpolatew} and the residue conditions
\eqref{eq:Jresiduenabla} and \eqref{eq:Jresiduedelta} that
$\mathbf{M}(w)$ has only removable singularities in and up to the
boundary of each of the regions of its definition, and thus it may be
considered as a sectionally analytic function of $w$ taking continuous
boundary values on but having jump
discontinuities across an oriented contour $\Sigma$ that is
the positively-oriented boundary of
$\Omega_+^\nabla\cup\Omega_+^\Delta$ together with the
negatively-oriented boundary of $\Omega_-^\nabla\cup\Omega_-^\Delta$
and two intervals of the positive real axis, which we take to be both
oriented to the right.  

We introduce notation for various subcontours of $\Sigma$ 
as follows:
\begin{itemize}
\item The part of $\Sigma$ consisting of the common boundaries of
  $\Omega^\nabla_\pm$ (respectively $\Omega^\Delta_\pm$) omitting only
  the segments on the positive real axis we denote by $\Sigma^\nabla$
  (respectively $\Sigma^\Delta$).  
\item
The common boundary of $\Omega_+^\nabla$ and $\Omega_-^\Delta$ will
be denoted $\Sigma^{\nabla\Delta}$.  The common boundary of $\Omega_+^\Delta$
and $\Omega_-^\nabla$ will be denoted $\Sigma^{\Delta\nabla}$.  
\item The common boundary of $\Omega_\pm^\nabla$ (respectively
$\Omega_\pm^\Delta$) and $\overline{\Omega}$ will be denoted
$\Sigma_\pm^\nabla$ (respectively $\Sigma_\pm^\Delta$).  These are
just the non-real arcs of $\Sigma\setminus(\Sigma^\nabla\cup
\Sigma^\Delta\cup\Sigma^{\nabla\Delta}\cup\Sigma^{\Delta\nabla})$.
\item The part of the common boundary of the sets $\Omega_\pm^\nabla$ 
(respectively $\Omega_\pm^\Delta$) 
on the positive real line will be denoted $\Sigma^\nabla_{>0}$
(respectively $\Sigma^\Delta_{>0}$).
\item The contour $\mathbb{R}_+\setminus(\Sigma^\nabla_{>0}\cup
\Sigma^\Delta_{>0})$ will be denoted simply $\Sigma_{>0}$.  
\end{itemize}
The contour $\Sigma$ and its components are illustrated in Figure~\ref{fig:Sigma} for the case of $\Delta=P_N^{\prec\kink}$.
\begin{figure}[h]
\begin{center}
\includegraphics{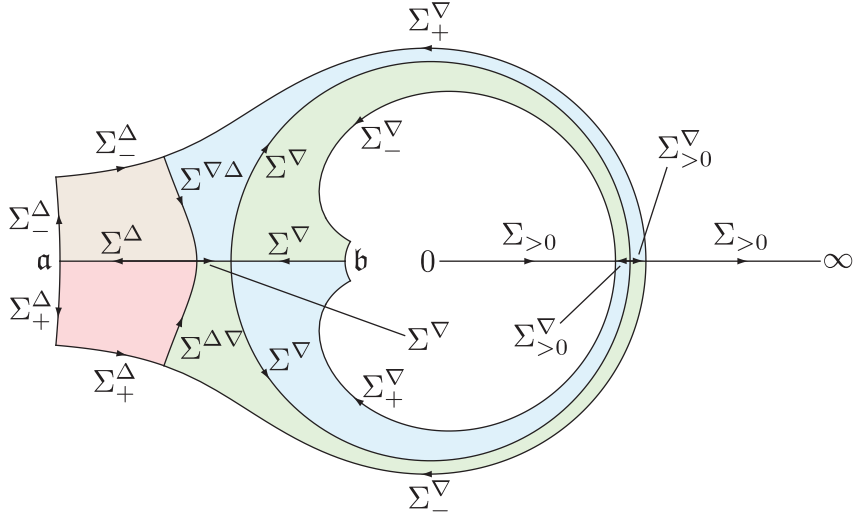}
\end{center}
\caption{\emph{The components of the contour $\Sigma$ in the case $\Delta=P_N^{\prec\kink}$.  Compare with Figure~\ref{fig:Ann_DeltaNearMinusM}.  Since in this case it is $\Omega_\pm^\nabla$
and not $\Omega_\pm^\Delta$ that meet the positive real axis near $w=1$, the contour component $\Sigma^\Delta_{>0}$ does not exist.  It is generally the case that, aside from the two disconnected components of $\Sigma_{>0}$ which
are always taken to be oriented to the right, all contour arcs are oriented with regions labelled ``$+$''
on the left and regions labelled ``$-$'' on the right.}}
\label{fig:Sigma}
\end{figure}
We also make the accumulation set of $P_N$,
 $P_\infty:=[\mathfrak{a},\mathfrak{b}]\cup S^1$, into an oriented contour by using
homotopy to $\Sigma^\nabla\cup\Sigma^\Delta$ fixing the points $w=\mathfrak{a}$,
$w=\tau_N$, and $w=\mathfrak{b}$ to assign orientation.  Note that
$P_\infty$, taken without regard to orientation, is the inverse image
under $\lambda=E(w)$ of the interval $0\le -4i\lambda\le -G(0)$.  The
contour $\Sigma^\nabla\cup \Sigma^\Delta$ followed by $-P_\infty$
forms the boundary of a bounded region that we denote by
$Z\subset\mathbb{C}$.  We write $Z=Z_+\cup Z_-$, where according to
the orientation of $\Sigma^\nabla\cup\Sigma^\Delta\cup (-P_\infty)$,
$\partial Z_+$ is positively oriented while $\partial Z_-$ is
negatively oriented.

\subsection{Analysis of the jump conditions for $\mathbf{M}(w)$}
The jump conditions satisfied by the continuous boundary values 
of $\mathbf{M}(w)$ across the contour $\Sigma$ will involve
the product $\Pi_N(w)$, and to formulate the jump conditions
in a concise way it will be convenient to consider the asymptotic
behavior of $\Pi_N(w)$ as $N\to\infty$.  But first, we introduce some
elementary logarithm-type functions.
For $y\not\in\mathbb{R}_+$, let
\begin{equation}
l(w;y):=\log\left(\frac{\sqrt{-w}+\sqrt{-y}}{\sqrt{-w}-\sqrt{-y}}\right)
\end{equation}
denote the principal branch, obtained by composing the principal
branches of the logarithm and the square roots.  For fixed $y$, this
is a function of $w$ that is single-valued and analytic except along a
piecewise-linear branch cut consisting of a finite line segment from
$w=y$ to $w=0$ along which we have the logarithmic jump condition
$l_+(w;y)-l_-(w;y)=2\pi i$ and the semi-infinite ray
$\mathbb{R}_+$ along which we have $l_+(w;y)+l_-(w;y)=0$.  As
$w\to\infty$ in the domain of analyticity, $l(w;y)\to 0$.  For $y<0$,
we define a new function by setting
\begin{equation}
m^\kink(w;y):=l(w;y)+\begin{cases}-i\pi,\quad & \text{$|w|<1$ and $\Im\{w\}>0$},\\
i\pi,\quad &\text{$|w|<1$ and $\Im\{w\}<0$},\\
0,\quad &\text{$|w|>1$.}
\end{cases}
\end{equation}
If $y\le -1$, then $m^\kink(w;y)$ has a unique analytic continuation to 
the interval
$w\in (-1,0)$ and has branch cuts consisting of the intervals $[y,-1]$ and
$\mathbb{R}_+$ as well as the upper and lower arcs of the unit circle.
If $-1\le y<0$, then $m^\kink(w;y)$ has a unique analytic continuation to the
interval $w\in (y,0)$ and has branch cuts consisting of the intervals $[-1,y]$
and $\mathbb{R}_+$ as well as the upper and lower arcs of the unit circle.
We also define a function $m^{\kink,\Sigma}(w;y)$ for $y<0$ by a completely 
analogous
formula in which the conditions $|w|<1$ and $|w|>1$ are respectively replaced
by the conditions that $w$ lie inside and outside of the region bounded by the
nonreal arcs of $\Sigma^\nabla\cup\Sigma^\Delta$.  Finally, if $y=-e^{i\omega}$
with $-\pi<\omega < \pi$, then we set
\begin{equation}
m^\breather(w;y):=l(w;y)+\begin{cases}-i\pi,\quad &\text{$|w|<1$ and $-\pi<\arg(-w)<\omega$},\\
i\pi,\quad &\text{$|w|<1$ and $\omega<\arg(-w)<\pi$},\\
0,\quad & \text{$|w|>1$.}
\end{cases}
\end{equation}
This function has a unique analytic continuation to the line segment
connecting $w=y$ with $w=0$, and its branch cuts consist of the unit
circle and the interval $\mathbb{R}_+$.

Now, set
\begin{equation}
L^0_N(w):=\mathop{\sum_{y\in\nabla}}_{\Im\{y\}=0}m^\kink(w;y)\epsilon_N -
\mathop{\sum_{y\in\Delta}}_{\Im\{y\}=0}m^\kink(w;y)\epsilon_N +
\mathop{\sum_{y\in\nabla}}_{\Im\{y\}\neq 0}m^\breather(w;y)\epsilon_N -
\mathop{\sum_{y\in\Delta}}_{\Im\{y\}\neq 0}m^\breather(w;y)\epsilon_N.
\end{equation}
(For the various choices of $\Delta$ under consideration, only one of
the latter two sums will be present in any one case.)  This function
is analytic for $w\in\mathbb{C}\setminus (P_\infty\cup\mathbb{R}_+)$
because each summand is.  It also satisfies the identity
$L^0_N(w^*)=L^0_N(w)^*$; indeed this holds term-by-term in the first
two sums, and in each of the last two sums (whichever one is present)
we may group the indices $y$ in complex-conjugate pairs and the
desired Schwartz symmetry holds for each pair.  The boundary values taken
by $L^0_N(w)$ on $\mathbb{R}_+$ satisfy $L^0_{N+}(w)+L^0_{N-}(w)=0$ as this
identity holds term-by-term in each of the sums.  Finally, we have the
identity
\begin{equation}
\Pi_N(w)=e^{L_N^0(w)/\epsilon_N},\quad w\in\mathbb{C}\setminus (P_\infty\cup\mathbb{R}_+).
\end{equation}
This identity would be completely obvious were it not for the $\pm i\pi$
contributions in the definitions of $m^\kink(w;y)$ and
$m^\breather(w;y)$; that it holds in the presence of these contributions
follows from the fact that $P_N=\nabla\cup\Delta$ consists of an even number
of points.

Along any arc of $P_\infty^\nabla$, the part of $P_\infty$ that is the
accumulation set of $\nabla$, we may enumerate the points of $\nabla$
in order $\{\dots,y_{n-1},y_n,y_{n+1},\dots\}$ consistent with the
local orientation of $P_\infty$.  Note that this order is the same as that of
increasing $k$ in the Bohr-Sommerfeld quantization rule
\eqref{eq:BohrSommerfeld}.  Thus, from \eqref{eq:BohrSommerfeld} we
have
\begin{equation}
\theta_0(y_{n+1})-\theta_0(y_{n-1})=2\pi\epsilon_N.
\end{equation}
Expanding the left-hand side around $y_n$ gives the spacing between
consecutive points of $\nabla$ as
\begin{equation}
\Delta y(y_n):=\frac{y_{n+1}-y_{n-1}}{2} =
\frac{\pi\epsilon_N}{\theta_0'(y_n)}
+ \bo(\epsilon_N^3).
\end{equation}
Therefore, for each $w\in \mathbb{C}\setminus (P_\infty\cup\mathbb{R}_+)$,
\begin{multline}
\mathop{\sum_{y\in\nabla}}_{\Im\{y\}=0}m^\kink(w;y)\epsilon_N +
\mathop{\sum_{y\in\nabla}}_{\Im\{y\}\neq 0}m^\breather(w;y)\epsilon_N \\
\begin{split}
{}&= 
\frac{1}{\pi}\mathop{\sum_{y\in\nabla}}_{\Im\{y\}=0}
\theta_0'(y)
m^\kink(w;y)\,\Delta y(y) +
\frac{1}{\pi}\mathop{\sum_{y\in\nabla}}_{\Im\{y\}\neq 0}
\theta_0'(y)
m^\breather(w;y)\,\Delta y(y)\\
{}&= \frac{1}{\pi}\int_{P_\infty^\nabla\cap\mathbb{R}}
\theta_0'(y)
m^\kink(w;y)\,dy + \frac{1}{\pi}\int_{P_\infty^\nabla\cap(\mathbb{C}\setminus\mathbb{R})}
\theta_0'(y)
m^\breather(w;y)\,dy + \bo(\epsilon_N^2),
\end{split}
\end{multline}
where the second-order accuracy comes from the fact that the sum is a
midpoint rule approximation to the integral away from $y=-1$ (the
extreme sample points are asymptotically half as far from the
endpoints of $P_\infty^\nabla$ as they are from their neighbors due to
the $1/2$ in the Bohr-Sommerfeld quantization rule
\eqref{eq:BohrSommerfeld} and the choice \eqref{eq:transitionpoint} of
the transition point $w=\tau_N$) and the fact that $E'(-1)=0$.  
In a similar way, but taking into
account that $P_\infty^\Delta = \overline{P_\infty\setminus
  P_\infty^\nabla}$ is oriented \emph{oppositely} to the direction of
increasing $k$ in \eqref{eq:BohrSommerfeld},
\begin{multline}
-\mathop{\sum_{y\in\Delta}}_{\Im\{y\}=0}m^\kink(w;y)\epsilon_N 
-\mathop{\sum_{y\in\Delta}}_{\Im\{y\}\neq 0}m^\breather(w;y)\epsilon_N \\
{}= 
\frac{1}{\pi}\int_{P_\infty^\Delta\cap\mathbb{R}}
\theta_0'(y)
m^\kink(w;y)\,dy + 
\frac{1}{\pi}\int_{P_\infty^\Delta\cap(\mathbb{C}\setminus\mathbb{R})}
\theta_0'(y)
m^\breather(w;y)\,dy + \bo(\epsilon_N^2).
\end{multline}
The error terms are in fact uniform in $w$ for $w$
bounded away from $P_\infty$, and in particular, for such $w$,
\begin{equation}
L^0_N(w)=L^0(w)+\bo(\epsilon_N^2)\quad\text{and}\quad
\Pi_N(w)e^{-L^0(w)/\epsilon_N}=1+\bo(\epsilon_N),\quad N\to\infty,
\label{eq:PiNDeltaOuterApprox}
\end{equation}
where
\begin{equation}
L^0(w):=\frac{1}{\pi}\int_{P_\infty\cap\mathbb{R}}
\theta_0'(y)
m^\kink(w;y)\,dy + \frac{1}{\pi}\int_{P_\infty\cap(\mathbb{C}\setminus\mathbb{R})}
\theta_0'(y)
m^\breather(w;y)\,dy.
\end{equation}
This ``continuum limit'' of $L^0_N(w)$ is analytic for 
$w\in\mathbb{C}\setminus (P_\infty\cup\mathbb{R}_+)$ and shares the same 
symmetry properties
as its discretization:  $L^0(w^*)=L^0(w)^*$, and $L^0_+(w)+L^0_-(w)=0$
for $w\in\mathbb{R}_+$.

Another useful form of $L^0(w)$ is easily obtained by integrating by
parts.  In the integral over $P_\infty\cap\mathbb{R}$, 
one expects endpoint contributions 
from the two
points $y=\mathfrak{a}$ and $y=\mathfrak{b}$, as well as from the points $y=-1$ and 
$y=\tau_N$ (the latter only if neither $\Delta$ nor $\nabla$ is empty so
that a transition point exists)
where the orientation of $P_\infty$ changes.  
In the integral over $P_\infty\cap(\mathbb{C}\setminus\mathbb{R})$
one expects endpoint contributions from the two points at
$y=1$ on opposite sides of the branch cut $\mathbb{R}_+$
and from the two points at $y=-1$.  
However, since $m^\kink(w;-1)=m^\breather(w;-1)$ the contributions from $y=-1$ 
will cancel between the two integrals.  
Also, in the integral over $P_\infty\cap(\mathbb{C}\setminus\mathbb{R})$
the sum of the contributions from the two endpoints at $y=1$ vanishes 
because $E(y)=0$ for both points and $m^\breather(w;1+)+m^\breather(w;1-)=0$.  
Furthermore, in the integral over $P_\infty\cap\mathbb{R}$ the contributions 
from
the endpoints $y=\mathfrak{a}$ and $y=\mathfrak{b}$ both vanish individually because both
correspond to $E(y)=-iG(0)/4$, the point at which $\theta_0(y)$
vanishes.  On the other hand, the contribution from $y=\tau_N$ generally
survives, but it takes a particularly simple form: $\pm
2\theta_0(\tau_N)m^\kink(w;\tau_N)/\pi$, 
where the ``$+$'' sign (respectively
``$-$'' sign) corresponds to the case when $P_\infty$ is oriented toward
(respectively away from) the transition point $y=\tau_N$.  
But according
to the condition \eqref{eq:transitionpoint} characterizing the
transition point $\tau_N$, this contribution can be written as
$2\epsilon_Nn m^\kink(w;\tau_N)$ where $n\in\mathbb{Z}$, and according to 
\eqref{eq:cardDelta}, $n=\#\Delta\pmod{2}$.  Explicitly
differentiating $m^\kink(w;y)$ and $m^\breather(w;y)$ with respect to $y$ 
to integrate by parts then yields
\begin{equation}
L^0(w)=\frac{\sqrt{-w}}{\pi}\int_{P_\infty}\frac{\theta_0(y)}{\sqrt{-y}}
\frac{dy}{y-w} + 2\epsilon_Nnm^\kink(w;\tau_N).
\end{equation}
This form of $L^0(w)$ allows us to exploit analyticity of
$\theta_0(\cdot)$ on $P_\infty$ to apply Cauchy's Theorem, showing that
\begin{equation}
L^0(w)=L(w)+\begin{cases} \mp 2i\theta_0(w)\pmod{2\pi i\epsilon_N},\quad &
w\in Z_\pm\\
0,\quad &w\in\mathbb{C}\setminus\overline{Z},
\end{cases}
\label{eq:L0Lrelation}
\end{equation}
where
\begin{equation}
L(w):=\frac{\sqrt{-w}}{\pi}\int_{\Sigma^\nabla\cup
\Sigma^\Delta}\frac{\theta_0(y)}{\sqrt{-y}}\frac{dy}{y-w}
+2\epsilon_Nnm^{\kink,\Sigma}(w;\tau_N). 
\label{eq:L0L}
\end{equation}
In this definition of $L(w)$, the function $\theta_0(y)$ in the
integrand denotes the analytic continuation of the function of the
same name from $P_\infty$. The domain of analyticity of $L(w)$ is
therefore $\mathbb{C}\setminus
(\Sigma^\nabla\cup\Sigma^\Delta\cup\mathbb{R}_+)$.  
Like $L^0(w)$, $L(w)$ has Schwartz symmetry ($L(w^*)=L(w)^*$) and satisfies
$L_+(w)+L_-(w)=0$ for $w\in\mathbb{R}_+$.  Furthermore, 
\begin{equation}
L_+(\xi)-L_-(\xi)=2i\theta_0(\xi)+\begin{cases}
2\pi i\epsilon_N\#\Delta \pmod{4\pi i\epsilon_N},\quad &\xi\in
(\Sigma^\nabla\cup\Sigma^\Delta)\cap(\mathbb{C}\setminus\mathbb{R})\\
0\pmod{4\pi i\epsilon_N},\quad &\xi\in(\Sigma^\nabla\cup\Sigma^\Delta)\cap\mathbb{R},
\end{cases}
\label{eq:Lplusminusdiff}
\end{equation}
and moreover the exact relation $L_+(\xi)-L_-(\xi)=2i\theta_0(\xi)$ holds
for those $\xi\in (\Sigma^\nabla\cup\Sigma^\Delta)\cap\mathbb{R}$ 
that do not lie on the branch cut of $m^{\kink,\Sigma}(w;\tau_N)$. 
We define for future reference
\begin{equation}
Y(w):=\Pi_N(w)e^{-L(w)/\epsilon_N},
\end{equation}
a quantity that by \eqref{eq:PiNDeltaOuterApprox} and \eqref{eq:L0L}
is uniformly
$1+\bo(\epsilon_N)$ for $w$ bounded away from $Z$ and $P_\infty$.

Let $\overline{L}(\xi)$ denote the average of boundary values taken by
$L(w)$ on $\Sigma^\nabla\cup\Sigma^\Delta$:
\begin{equation}
\overline{L}(\xi):=\frac{1}{2}\left(L_+(\xi)+L_-(\xi)\right),\quad
\xi\in\Sigma^\nabla\cup\Sigma^\Delta.
\end{equation}
We now introduce two quantities closely related to this average:
\begin{equation}
\varphi^\nabla(\xi):=\overline{L}(\xi)+
\begin{cases}
i\pi n\epsilon_N,\quad &\xi\in\Sigma^\nabla,\quad\Im\{\xi\}>0,\\
0,\quad & \xi\in\Sigma^\nabla,\quad\Im\{\xi\}=0,\\
-i\pi n\epsilon_N,\quad &\xi\in\Sigma^\nabla,\quad\Im\{\xi\}<0,
\end{cases}
\label{eq:varphinabla}
\end{equation}
and
\begin{equation}
\varphi^\Delta(\xi):=\overline{L}(\xi)+
\begin{cases}
i\pi n\epsilon_N,\quad &\xi\in\Sigma^\Delta,\quad\Im\{\xi\}>0,\\
0,\quad & \xi\in\Sigma^\Delta,\quad\Im\{\xi\}=0,\\
-i\pi n\epsilon_N,\quad &\xi\in\Sigma^\Delta,\quad\Im\{\xi\}<0.
\end{cases}
\label{eq:varphiDelta}
\end{equation}
It follows from Proposition~\ref{prop:theta0} that $\varphi^\nabla$ and
$\varphi^\Delta$ are analytic functions on each arc of $\Sigma^\nabla$
and $\Sigma^\Delta$ respectively.  The self-intersection point $\xi=w^+$
lies either in $\Sigma^\nabla$ or $\Sigma^\Delta$, and the additive
constants in the definitions \eqref{eq:varphinabla} and \eqref{eq:varphiDelta}
ensure that for the relevant function 
analyticity extends to a full complex neighborhood of $\xi=w^+$.
Now set
\begin{equation}
T^\nabla(\xi):=2\Pi_N(\xi)\cos\left(\epsilon_N^{-1}\theta_0(\xi)\right)
e^{-\varphi^\nabla(\xi)/\epsilon_N},\quad\xi\in\vec{\Sigma}^\nabla
\label{eq:Tnablanew}
\end{equation}
and
\begin{equation}
T^\Delta(\xi):=2\Pi_N(\xi)^{-1}\cos\left(\epsilon_N^{-1}\theta_0(\xi)\right)
e^{\varphi^\Delta(\xi)/\epsilon_N},\quad\xi\in\vec{\Sigma}^\Delta.
\label{eq:TDeltanew}
\end{equation}
According to the Bohr-Sommerfeld quantization rule
\eqref{eq:BohrSommerfeld}, these functions are 
analytic where defined (all singularities
are removable), and we shall denote the analytic continuations from the
contours of definition by the same symbols.  Moreover, it
can be proved that both $T^\nabla(w)\approx 1$ and
$T^\Delta(w)\approx 1$, a fact whose proof is only slightly more
subtle than the analysis of $Y(w)$ presented above (the proof given in
\cite{BaikKMM07} relates the product $\Pi_N(w)$ to the Gamma
function and applies the reflection identity
$\Gamma(z)\Gamma(1-z)\sin(\pi z)=\pi$ and Stirling's formula).  We
formalize our results concerning the functions $Y$, $T^\nabla$, and
$T^\Delta$ with this proposition:
\begin{proposition}[Baik \emph{et.\@ al.},\@ \cite{BaikKMM07}]
  The function $Y(w)$ is analytic for $w\in\mathbb{C}\setminus
  (\Sigma^\nabla\cup\Sigma^\Delta\cup P_\infty\cup \mathbb{R}_+)$.
  Uniformly on compact sets in the domain of analyticity disjoint from
  the region $Z$ in between $P_\infty$ and $\Sigma^\nabla\cup\Sigma^\Delta$,
\begin{equation}
Y(w)=1+\bo(\epsilon_N).
\label{eq:Yalmostone}
\end{equation}
This estimate fails if $w$ approaches either $\mathfrak{a}$ or $\mathfrak{b}$ from 
within the above region of definition, but nonetheless both $Y(w)$
and $Y(w)^{-1}$ remain bounded if these points are approached nontangentially
to the real axis.
On the other hand, uniformly on compact subsets of $Z_\pm$,
\begin{equation}
Y(w)=e^{\mp 2i\theta_0(w)/\epsilon_N}\left(1+\bo(\epsilon_N)\right).
\label{eq:YZpm}
\end{equation}

The function $T^\nabla(w)$ is analytic for $w\in \overline{\Omega_+^\nabla
\cup\Omega_-^\nabla}\setminus(\mathbb{R}_+\cup\{\mathfrak{a},\mathfrak{b},\tau_N\})$, 
and uniformly
on each compact set in the interior of this region,
\begin{equation}
T^\nabla(w)=1+\bo(\epsilon_N).
\label{eq:Tnablaalmostone}
\end{equation}
This estimate fails if $w$ approaches either $\mathfrak{a}$ or $\mathfrak{b}$ from within
the above region of definition, but nonetheless
$T^\nabla(w)$ remains uniformly bounded near
these points as $\epsilon_N\downarrow 0$.

The function $T^\Delta(w)$ is analytic for
$w\in\overline{\Omega_+^\Delta
  \cup\Omega_-^\Delta}\setminus(\mathbb{R}_+\cup\{\mathfrak{a},\mathfrak{b},\tau_N\})$, and
uniformly on each compact set in the interior of this region,
\begin{equation}
T^\Delta(w)=1+\bo(\epsilon_N).
\label{eq:TDeltaalmostone}
\end{equation}
As above, this estimate fails if 
$w$ approaches either $\mathfrak{a}$ or $\mathfrak{b}$ from within
the region where $T^\Delta(w)$ is defined but still $T^\Delta(w)$
remains bounded.
 
Finally, we have the algebraic relations
\begin{equation}
T^\nabla(w)=Y(w)\left(1+e^{\pm 2i\theta_0(w)/\epsilon_N}\right),\quad
w\in \overline{\Omega^\nabla_\pm}\setminus\mathbb{R},
\label{eq:TnablaoverY}
\end{equation}
and
\begin{equation}
T^\Delta(w)=
Y(w)^{-1}\left(1+e^{\pm 2i\theta_0(w)/\epsilon_N}\right),\quad w\in
\overline{\Omega^\Delta_\mp}\setminus\mathbb{R}.
\label{eq:TDeltatimesY}
\end{equation}
\label{prop:YTnablaDelta}
\end{proposition}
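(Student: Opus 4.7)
The proposition has three components, and the cleanest order of attack is the $Y$-asymptotics first, the algebraic relations \eqref{eq:TnablaoverY}--\eqref{eq:TDeltatimesY} second, and the $T^\nabla$, $T^\Delta$ asymptotics deduced from the first two. The estimates on $Y$ are essentially immediate from prior work: since $\Pi_N(w) = e^{L^0_N(w)/\epsilon_N}$ exactly by construction, one has $Y(w) = e^{(L^0_N(w) - L(w))/\epsilon_N}$, and the integration-by-parts identity \eqref{eq:L0Lrelation} combined with the continuum approximation \eqref{eq:PiNDeltaOuterApprox} of $L^0_N$ by $L^0$ gives $Y = 1 + \bo(\epsilon_N)$ outside $\overline{Z}$ (where $L^0 = L$) and $Y = e^{\mp 2i\theta_0/\epsilon_N}(1 + \bo(\epsilon_N))$ in $Z_\pm$ (where $L^0 = L \mp 2i\theta_0 \pmod{2\pi i\epsilon_N}$, with the modular ambiguity exponentiating to $+1$). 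Nontangential boundedness of $Y$ near $w = \mathfrak{a}, \mathfrak{b}$ requires a supplementary local analysis of $\Pi_N$ via its representation in terms of Gamma functions---using Stirling and the reflection identity $\Gamma(z)\Gamma(1-z)\sin(\pi z) = \pi$---since the midpoint-rule accuracy of \eqref{eq:PiNDeltaOuterApprox} degrades at the extreme Bohr--Sommerfeld grid points; the vanishing $\theta_0(\mathfrak{a}) = \theta_0(\mathfrak{b}) = 0$ prevents blowup.

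The algebraic relations come from unpacking the definitions \eqref{eq:Tnablanew}--\eqref{eq:TDeltanew}. Writing $2\cos(\theta_0/\epsilon_N) = e^{i\theta_0/\epsilon_N} + e^{-i\theta_0/\epsilon_N}$ and expressing $\overline{L}$ in terms of its one-sided boundary values via $\overline{L} = L_\pm \mp \tfrac{1}{2}(L_+ - L_-)$ (analytically continued off $\Sigma^\nabla$ or $\Sigma^\Delta$), the jump relation \eqref{eq:Lplusminusdiff} yields $L_+ - L_- = 2i\theta_0 + 2\pi i\epsilon_N\#\Delta \pmod{4\pi i \epsilon_N}$ on nonreal arcs and the cleaner $L_+ - L_- = 2i\theta_0$ on the real arcs off the branch cut of $m^{\kink,\Sigma}$. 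The crux is that, with $n \equiv \#\Delta \pmod 2$ from \eqref{eq:cardDelta}, the $\pm i\pi n\epsilon_N$ additive constants built into \eqref{eq:varphinabla}--\eqref{eq:varphiDelta} absorb the $2\pi i\epsilon_N\#\Delta$ modular correction precisely (since $e^{i\pi n} = (-1)^{\#\Delta}$), producing $e^{\mp\varphi^\nabla/\epsilon_N}\Pi_N = Y e^{\pm i\theta_0/\epsilon_N}$ on each side of $\Sigma^\nabla$ and hence \eqref{eq:TnablaoverY}; the derivation of \eqref{eq:TDeltatimesY} is structurally identical.

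The asymptotic estimates \eqref{eq:Tnablaalmostone}--\eqref{eq:TDeltaalmostone} now follow by combining the algebraic relations with the $Y$-asymptotics. In $\overline{\Omega^\nabla_\pm}$ the Cauchy--Riemann reasoning that defines $D_\pm$ at the end of \S\ref{sec:pure-impulse} gives $\Im\{\pm\theta_0(w)\} \ge 0$, so $|e^{\pm 2i\theta_0(w)/\epsilon_N}| \le 1$ with exponential smallness away from $\Sigma^\nabla$ itself. For any compact subset $K$ of the interior of $\overline{\Omega^\nabla_+ \cup \Omega^\nabla_-}$, the portion of $K$ outside $\overline{Z}$ gives $Y = 1 + \bo(\epsilon_N)$ via \eqref{eq:Yalmostone} and hence $T^\nabla = 1 + \bo(\epsilon_N)$ immediately, while the portion inside $Z_\mp$ gives $Y \approx e^{\pm 2i\theta_0/\epsilon_N}$ via \eqref{eq:YZpm} and hence $Y(1 + e^{\pm 2i\theta_0/\epsilon_N}) = Y + Y e^{\pm 2i\theta_0/\epsilon_N} \approx e^{\pm 2i\theta_0/\epsilon_N} + 1$, whose first term is exponentially small by the sign of $\Im\theta_0$. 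Either way $T^\nabla = 1 + \bo(\epsilon_N)$, and the estimate for $T^\Delta$ is obtained identically.

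The two genuine obstacles worth flagging are, first, the painstaking bookkeeping of $2\pi i\epsilon_N$- and $4\pi i\epsilon_N$-periods required to justify the algebraic relations---in particular checking that the $\pm i\pi n\epsilon_N$ constants in $\varphi^\nabla$, $\varphi^\Delta$ not only absorb the $2\pi i\epsilon_N \#\Delta$ modular correction but also render $\varphi^\nabla$ and $\varphi^\Delta$ analytic through the self-intersection point $w^+$---and, second, the supplementary boundary analysis at $w = \mathfrak{a}, \mathfrak{b}, \tau_N$ where \eqref{eq:PiNDeltaOuterApprox} is not accurate enough. Both steps are carried out in detail in \cite{BaikKMM07} in a closely analogous setting, and the arguments there transfer directly to the present formulation.
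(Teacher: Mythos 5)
Your proposal tracks the paper's own argument essentially step for step: the $Y$-asymptotics \eqref{eq:Yalmostone} and \eqref{eq:YZpm} from the continuum-limit analysis of $L^0_N$ already carried out, the algebraic relations \eqref{eq:TnablaoverY}--\eqref{eq:TDeltatimesY} from the definitions together with the jump condition \eqref{eq:Lplusminusdiff} and cancellation of the $\pm i\pi n\epsilon_N$ constants, the $T^\nabla$, $T^\Delta$ estimates by combining those two pieces with the sign of $\Im\{\theta_0\}$ in $E^{-1}(D_\pm)$ (the paper works out exactly this case analysis with $w\in\Omega^\Delta_-$ as the example), and the deferral to \cite{BaikKMM07} of the delicate behavior near $P_\infty$ and the endpoints $\mathfrak{a},\mathfrak{b}$ via the Gamma-function/Stirling representation of $\Pi_N$. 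Your explicit flagging of the nontangential boundedness of $Y$ near $\mathfrak{a},\mathfrak{b}$ as not covered by the earlier ``uniform away from $P_\infty$'' estimate, and hence requiring the same Stirling-type local analysis as $T^\nabla,T^\Delta$, is a correct reading of what the paper leaves to the reference.
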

The proof of \eqref{eq:Yalmostone} and \eqref{eq:YZpm} has essentially
been given above, and the algebraic relations \eqref{eq:TnablaoverY}
and \eqref{eq:TDeltatimesY}
follow directly from the definitions of $Y(w)$, $T^\nabla(w)$, and
$T^\Delta(w)$ with the use of the jump condition
\eqref{eq:Lplusminusdiff}.  The asymptotic relations
\eqref{eq:Tnablaalmostone} and \eqref{eq:TDeltaalmostone} are easily
proved from the other results \emph{as long as $w$ is kept bounded
  away from $P_\infty$}.  For example, suppose $w\in \Omega_-^\Delta$, in
which case $T^\Delta(w)=Y(w)^{-1}(1+e^{2i\theta_0(w)/\epsilon_N})$
according to \eqref{eq:TDeltatimesY}, and $w$ lies on the right of
$\Sigma^\Delta$ so $w\not\in Z_+$.  If also 
$w\in E^{-1}(D_+)$, then $w$ lies
on the right of $P_\infty$ so $w\not\in Z_-$, in which case
\eqref{eq:Yalmostone} applies and we obtain \eqref{eq:TDeltaalmostone}
from the inequality $\Im\{\theta_0(w)\}>0$.  On the other hand if
also $w\in E^{-1}(D_-)$, then $w$ lies on the left of $P_\infty$ so $w\in
Z_-$, in which case \eqref{eq:YZpm} applies and we obtain
\eqref{eq:TDeltaalmostone} from the inequality
$\Im\{\theta_0(w)\}<0$.  Note, however, that the case when $w$ is
near $P_\infty$ must be considered separately; see \cite{BaikKMM07} for 
details.

With the notation of $Y(w)$, $T^\nabla(w)$, and $T^\Delta(w)$ established,
we may write down the jump condition satisfied by $\mathbf{M}(w)$ across
the various arcs of $\Sigma$ in a simple form.  Indeed,
we have
\begin{equation}
\mathbf{M}_+(\xi)=\mathbf{M}_-(\xi)\begin{bmatrix}
1 & 0 \\ -iT^\nabla(\xi)e^{[2iQ(\xi)+\varphi^\nabla(\xi)]/\epsilon_N} & 1
\end{bmatrix},
\quad \xi\in\vec{\Sigma}^\nabla,
\end{equation}
\begin{equation}
\mathbf{M}_+(\xi)=\mathbf{M}_-(\xi)\begin{bmatrix}
1 & iT^\Delta(\xi)e^{-[2iQ(\xi)+\varphi^\Delta(\xi)]/\epsilon_N} \\ 0 & 1
\end{bmatrix},
\quad \xi\in\vec{\Sigma}^\Delta,
\end{equation}
\begin{equation}
\mathbf{M}_+(\xi)=\mathbf{M}_-(\xi)\begin{bmatrix}
1+e^{2i\theta_0(\xi)/\epsilon_N} & iY(\xi)^{-1}e^{-[2iQ(\xi)+L(\xi)-
i\theta_0(\xi)]/\epsilon_N}\\
-iY(\xi)e^{[2iQ(\xi)+L(\xi)+i\theta_0(\xi)]/\epsilon_N} & 1\end{bmatrix}, \quad \xi\in \vec{\Sigma}^{\nabla\Delta},
\end{equation}
\begin{equation}
\mathbf{M}_+(\xi)=\mathbf{M}_-(\xi)\begin{bmatrix}
1 & iY(\xi)^{-1}e^{-[2iQ(\xi)+L(\xi)+i\theta_0(\xi)]/\epsilon_N}\\
-iY(\xi)e^{[2iQ(\xi)+L(\xi)-i\theta_0(\xi)]/\epsilon_N} & 
1+e^{-2i\theta_0(\xi)/\epsilon_N}
\end{bmatrix}, \quad \xi\in\vec{\Sigma}^{\Delta\nabla},
\end{equation}
\begin{equation}
\mathbf{M}_+(\xi)=\mathbf{M}_-(\xi)\begin{bmatrix}
1 & 0\\-iY(\xi)e^{[2iQ(\xi)+L(\xi)\pm i\theta_0(\xi)]/\epsilon_N} & 
1\end{bmatrix},
\quad \xi\in \vec{\Sigma}_\pm^\nabla,
\label{eq:MjumpSigmanablapm}
\end{equation}
\begin{equation}
\mathbf{M}_+(\xi)=\mathbf{M}_-(\xi)\begin{bmatrix}
1 & -iY(\xi)^{-1}e^{-[2iQ(\xi)+L(\xi)\mp i\theta_0(\xi)]/\epsilon_N}
\\ 0 & 1\end{bmatrix},
\quad \xi\in\vec{\Sigma}_\pm^\Delta,
\label{eq:MjumpSigmaDeltapm}
\end{equation}
\begin{multline}
\mathbf{M}_+(\xi)=\sigma_2\mathbf{M}_-(\xi)\sigma_2
\begin{bmatrix} 1+e^{i[\theta_{0+}(\xi)-\theta_{0-}(\xi)]/\epsilon_N} &
iY_-(\xi)e^{[2iQ_-(\xi)+L_-(\xi)-i\theta_{0-}(\xi)]/\epsilon_N}\\
-iY_+(\xi)e^{[2iQ_+(\xi)+L_+(\xi)+i\theta_{0+}(\xi)]/\epsilon_N} 
& 1\end{bmatrix},
\\
\xi\in\vec{\Sigma}^\nabla_{>0},
\label{eq:MjumpSigmanablagt0}
\end{multline}
\begin{multline}
\mathbf{M}_+(\xi)=\sigma_2\mathbf{M}_-(\xi)\sigma_2
\begin{bmatrix}
1 & iY_+(\xi)^{-1}e^{-[2iQ_+(\xi)+L_+(\xi)+i\theta_{0+}(\xi)]/\epsilon_N}\\
-iY_-(\xi)^{-1}e^{-[2iQ_-(\xi)+L_-(\xi)-i\theta_{0-}(\xi)]/\epsilon_N}
&
1+e^{-i[\theta_{0+}(\xi)-\theta_{0-}(\xi)]/\epsilon_N}\end{bmatrix},\\
\xi\in\vec{\Sigma}^\Delta_{>0},
\label{eq:MjumpSigmadeltagt0}
\end{multline}
and finally, 
\begin{equation}
\mathbf{M}_+(\xi)=\sigma_2\mathbf{M}_-(\xi)\sigma_2,\quad
\xi\in\vec{\Sigma}_{>0}.
\label{eq:MjumpSigmagt0}
\end{equation}
In deriving the jump conditions \eqref{eq:MjumpSigmanablagt0} and
\eqref{eq:MjumpSigmadeltagt0} we used the fact that for $\xi\in
\vec{\mathbb{R}}_+$, $Q_+(\xi)+Q_-(\xi)\equiv 0$ and
$\Pi_{N+}(\xi)\Pi_{N-}(\xi)\equiv 1$.  Written this
way, the jump relations display the key importance of the exponents
$2iQ(\xi)+L(\xi)\pm i\theta_0(\xi)$,
$2iQ(\xi)+\varphi^\nabla(\xi)$, and $2iQ(\xi)+\varphi^\Delta(\xi)$.  
It remains to introduce a mechanism
to control the corresponding exponentials, and this is the purpose of
the next transformation.

\subsection{Control of exponentials.  The so-called $g$-function}
\label{sec:g-function}
Let $g(w)$ be a scalar function analytic for
$w\in\mathbb{C}\setminus(\Sigma^\nabla\cup\Sigma^\Delta\cup\mathbb{R}_+)$.
In terms of this to-be-determined function, a new unknown can be
obtained from $\mathbf{M}(w)$ as follows:
\begin{equation}
\mathbf{N}(w):=\mathbf{M}(w)e^{-g(w)\sigma_3/\epsilon_N}.
\label{eq:NMw}
\end{equation}
It follows that
\begin{equation}
\mathbf{N}_+(\xi)=\mathbf{N}_-(\xi)\begin{bmatrix}
e^{-i\theta(\xi)/\epsilon_N} & 0\\-iT^\nabla(\xi)e^{\phi(\xi)/\epsilon_N} &
e^{i\theta(\xi)/\epsilon_N}\end{bmatrix},\quad \xi\in\vec{\Sigma}^\nabla,\quad
\Im\{\xi\}=0,
\label{eq:Njumpnabla}
\end{equation}
\begin{equation}
\mathbf{N}_+(\xi)=\mathbf{N}_-(\xi)\begin{bmatrix}
e^{-i\theta(\xi)/\epsilon_N} & 0\\-iT^\nabla(\xi)e^{[\phi(\xi)+i\pi \epsilon_N\#\Delta]/\epsilon_N} &
e^{i\theta(\xi)/\epsilon_N}\end{bmatrix},\quad \xi\in\vec{\Sigma}^\nabla,\quad
\Im\{\xi\}\neq 0,
\label{eq:Njumpnablacomplex}
\end{equation}
and
\begin{equation}
\mathbf{N}_+(\xi)=\mathbf{N}_-(\xi)\begin{bmatrix}
e^{-i\theta(\xi)/\epsilon_N} & iT^\Delta(\xi)e^{-\phi(\xi)/\epsilon_N}\\
0 & e^{i\theta(\xi)/\epsilon_N}\end{bmatrix},\quad
\xi\in\vec{\Sigma}^\Delta,\quad\Im\{\xi\}=0,
\label{eq:NjumpDelta}
\end{equation}
\begin{equation}
\mathbf{N}_+(\xi)=\mathbf{N}_-(\xi)\begin{bmatrix}
e^{-i\theta(\xi)/\epsilon_N} & iT^\Delta(\xi)e^{-[\phi(\xi)+i\pi\epsilon_N\#\Delta]/\epsilon_N}\\
0 & e^{i\theta(\xi)/\epsilon_N}\end{bmatrix},\quad
\xi\in\vec{\Sigma}^\Delta,\quad\Im\{\xi\}\neq 0,
\label{eq:NjumpDeltacomplex}
\end{equation}
where
\begin{equation}
\theta(\xi):=-i(g_+(\xi)-g_-(\xi))\quad\text{and}\quad
\phi(\xi):=2iQ(\xi)+\overline{L}(\xi)-g_+(\xi)-g_-(\xi),\quad
\xi\in\vec{\Sigma}^\nabla\cup\vec{\Sigma}^\Delta.
\label{eq:wthetaphidef}
\end{equation}
These definitions assume that the indicated boundary values of $g$
exist unambiguously (independent of direction of approach).  The key
to the Deift-Zhou steepest-descent method in this context is to choose
$g(w)$ and the non-real arcs of the contour $\Sigma^\nabla$ or
$\Sigma^\Delta$ so that the functions $\theta(\xi)$ and
$\phi(\xi)$ have properties that can be exploited to make the
matrix $\mathbf{N}(w)$ easy to approximate in the limit $\epsilon_N\to
0$.

Consider the analytic matrix functions defined as follows:
\begin{equation}
\mathbf{L}^\nabla(w):=\begin{cases}
\displaystyle 
T^\nabla(w)^{-\sigma_3/2}\begin{bmatrix}
1 & - ie^{-[2iQ(w)+L(w)- i\theta_0(w)-2g(w)]/\epsilon_N}\\
0 & 1\end{bmatrix},\quad & w\in\Omega^\nabla_+,\\\\
\displaystyle T^\nabla(w)^{-\sigma_3/2}\begin{bmatrix}
1 & ie^{-[2iQ(w)+L(w)+i\theta_0(w)-2g(w)]/\epsilon_N}\\
0 & 1\end{bmatrix},\quad & w\in\Omega^\nabla_-,
\end{cases}
\label{eq:Lnabla}
\end{equation}
\begin{equation}
\mathbf{L}^\Delta(w):=\begin{cases}
\displaystyle 
T^\Delta(w)^{\sigma_3/2}\begin{bmatrix}
1 & 0 \\
 ie^{[2iQ(w) + L(w) - i\theta_0(w)-2g(w)]/\epsilon_N} & 1\end{bmatrix},
\quad & w\in\Omega^\Delta_+,\\\\
\displaystyle
T^\Delta(w)^{\sigma_3/2}\begin{bmatrix}
1 & 0 \\
 -ie^{[2iQ(w) + L(w) + i\theta_0(w)-2g(w)]/\epsilon_N} & 1\end{bmatrix},
\quad & w\in\Omega^\Delta_-,
\end{cases}
\label{eq:LDelta}
\end{equation}
where the square roots $T^\nabla(w)^{1/2}$ and $T^\Delta(w)^{1/2}$ are
defined to be the principal branches, so that
in view of \eqref{eq:Tnablaalmostone} and \eqref{eq:TDeltaalmostone} from
Proposition~\ref{prop:YTnablaDelta} we have
$T^\nabla(w)^{1/2}\approx
1$ and $T^\Delta(w)^{1/2}\approx 1$ throughout the domain of definition
of $\mathbf{L}^\nabla(w)$ and $\mathbf{L}^\Delta(w)$ respectively.
\begin{proposition}
The jump conditions for $\mathbf{N}(w)$ on the contours $\Sigma^{\nabla\Delta}$
and $\Sigma^{\Delta\nabla}$ may be written in the form
\begin{equation}
\mathbf{N}_+(\xi)\mathbf{L}^\nabla_+(\xi) = 
\mathbf{N}_-(\xi)\mathbf{L}^\Delta_-(\xi),\quad\xi\in
\vec{\Sigma}^{\nabla\Delta}
\label{eq:OjumpnablaDelta}
\end{equation}
and
\begin{equation}
\mathbf{N}_+(\xi)\mathbf{L}^\Delta_+(\xi)=
\mathbf{N}_-(\xi)\mathbf{L}^\nabla_-(\xi),\quad 
\xi\in\vec{\Sigma}^{\Delta\nabla}.
\label{eq:OjumpDeltanabla}
\end{equation}
\label{prop:nabladelta}
\end{proposition}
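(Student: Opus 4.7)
The plan is to reduce both identities to a purely algebraic matrix check, using the analyticity of $g$ across the contours together with the algebraic identities supplied by Proposition~\ref{prop:YTnablaDelta}. The first observation is that $\Sigma^{\nabla\Delta}$ and $\Sigma^{\Delta\nabla}$ are disjoint from $\Sigma^\nabla\cup\Sigma^\Delta\cup\mathbb{R}_+$, so $g(w)$ is holomorphic in a full neighborhood of each. Consequently the relation $\mathbf{N}(w)=\mathbf{M}(w)e^{-g(w)\sigma_3/\epsilon_N}$ yields on each such contour the jump $\mathbf{N}_+(\xi)=\mathbf{N}_-(\xi)V_N(\xi)$ with $V_N(\xi):=e^{g(\xi)\sigma_3/\epsilon_N}V_M(\xi)e^{-g(\xi)\sigma_3/\epsilon_N}$, where $V_M$ is the corresponding jump matrix for $\mathbf{M}$ displayed in \S3.3. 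Since $\det\mathbf{N}_-\equiv 1$, each claim is then equivalent to the pointwise matrix identity $V_N(\xi)\mathbf{L}^\nabla_+(\xi)=\mathbf{L}^\Delta_-(\xi)$ on $\Sigma^{\nabla\Delta}$ and $V_N(\xi)\mathbf{L}^\Delta_+(\xi)=\mathbf{L}^\nabla_-(\xi)$ on $\Sigma^{\Delta\nabla}$.

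Next, I would substitute the explicit definitions \eqref{eq:Lnabla}--\eqref{eq:LDelta}, reading off the boundary values from the regions abutting each contour according to the orientation convention. On $\Sigma^{\nabla\Delta}$, setting $A(\xi):=e^{-[2iQ+L-i\theta_0-2g]/\epsilon_N}$ and $B(\xi):=e^{[2iQ+L+i\theta_0-2g]/\epsilon_N}$, a short computation gives $AB=e^{2i\theta_0/\epsilon_N}$ together with
\begin{equation*}
V_N=\begin{bmatrix}1+AB & iY^{-1}A\\-iYB & 1\end{bmatrix},\quad
\mathbf{L}^\nabla_+=(T^\nabla)^{-\sigma_3/2}\begin{bmatrix}1 & -iA\\0 & 1\end{bmatrix},\quad
\mathbf{L}^\Delta_-=(T^\Delta)^{\sigma_3/2}\begin{bmatrix}1 & 0\\-iB & 1\end{bmatrix}.
\end{equation*}
Expanding $V_N\mathbf{L}^\nabla_+$ and comparing entry by entry with $\mathbf{L}^\Delta_-$ reduces each of the four matrix entries, after clearing common factors of $A$, $B$, $(T^\nabla)^{\pm 1/2}$, to the two algebraic identities $T^\nabla=Y(1+e^{2i\theta_0/\epsilon_N})=Y(1+AB)$ and $T^\Delta=Y^{-1}(1+e^{2i\theta_0/\epsilon_N})=Y^{-1}(1+AB)$, which are precisely the relations \eqref{eq:TnablaoverY}--\eqref{eq:TDeltatimesY} with the upper sign appropriate to the abutting regions $\Omega^\nabla_+$ and $\Omega^\Delta_-$. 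The identity on $\Sigma^{\Delta\nabla}$ is checked by an entirely parallel calculation: the regions $\Omega^\Delta_+$ and $\Omega^\nabla_-$ meeting this contour now select the lower sign in Proposition~\ref{prop:YTnablaDelta}, and one uses $T^\nabla=Y(1+e^{-2i\theta_0/\epsilon_N})$ and $T^\Delta=Y^{-1}(1+e^{-2i\theta_0/\epsilon_N})$ in place of the upper-sign versions.

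The only conceptual point requiring care—and the main potential obstacle—is the consistency of the principal branches of $(T^\nabla)^{1/2}$ and $(T^\Delta)^{1/2}$: the algebra produces the relation $(T^\nabla T^\Delta)^{1/2}=1+e^{\pm 2i\theta_0/\epsilon_N}$, and one must ensure that the factorization into principal square roots of $T^\nabla$ and $T^\Delta$ separately gives the correct sign. This is resolved by invoking the uniform estimates \eqref{eq:Tnablaalmostone}--\eqref{eq:TDeltaalmostone}: both $T^\nabla$ and $T^\Delta$ lie in a neighborhood of $1$ along the relevant portions of $\Sigma^{\nabla\Delta}$ and $\Sigma^{\Delta\nabla}$, so their principal square roots are unambiguous and close to $1$, pinning down the sign. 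Apart from this bookkeeping the verification is purely mechanical.
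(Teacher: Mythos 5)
Your proposal is correct and follows the same route as the paper's proof: reduce to a pointwise matrix identity using the continuity of $g$, $Q$, $L$, $Y$, and $\theta_0$ across $\Sigma^{\nabla\Delta}$ and $\Sigma^{\Delta\nabla}$, then verify it entry-by-entry via the algebraic identities \eqref{eq:TnablaoverY}--\eqref{eq:TDeltatimesY}, taking principal branch square roots. The paper's one-sentence justification of the branch consistency (the contours are disjoint from $Z$) and your invocation of \eqref{eq:Tnablaalmostone}--\eqref{eq:TDeltaalmostone} are just two ways of making the same point, namely that all the factors being square-rooted lie near $1$; you have simply written out the matrix algebra the paper leaves implicit.
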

\begin{proof}
  Note that on $\vec{\Sigma}^{\nabla\Delta}$ and $\vec{\Sigma}^{\Delta\nabla}$ 
 we
  have $Q_+(\xi)=Q_-(\xi)$, $L_+(\xi)=L_-(\xi)$,
  $g_+(\xi)=g_-(\xi)$, $Y_+(\xi)=Y_-(\xi)$, and
  $\theta_{0+}(\xi)=\theta_{0-}(\xi)$.  The formulae
  \eqref{eq:OjumpnablaDelta} and \eqref{eq:OjumpDeltanabla} then
  follow from taking principal branch square roots factor-by-factor in
the algebraic identities \eqref{eq:TnablaoverY} and
\eqref{eq:TDeltatimesY}, a meaningful step since $\Sigma^{\nabla\Delta}$
and $\Sigma^{\Delta\nabla}$ are disjoint from $Z$.
\end{proof}
\begin{proposition}
  Suppose that 
\begin{equation}
g_+(\xi)+g_-(\xi)=0,\quad
\xi\in\vec{\mathbb{R}}_+.
\label{eq:gplusgminusRplus}
\end{equation}
Then, the jump
  conditions for $\mathbf{N}(w)$ on the contours $\Sigma^\nabla_{>0}$
  and $\Sigma^\Delta_{>0}$ may be written in the form
\begin{multline}
\mathbf{N}_+(\xi)\mathbf{L}_+^\nabla(\xi)=\sigma_2
\mathbf{N}_-(\xi)\mathbf{L}_-^\nabla(\xi)\sigma_2
\begin{bmatrix}
1+A^\nabla(\xi) & B^\nabla(\xi)e^{-[2iQ_+(\xi)+L_+(\xi)-2g_+(\xi)]/\epsilon_N}\\
B^\nabla(\xi)e^{[2iQ_+(\xi)+L_+(\xi)-2g_+(\xi)]/\epsilon_N} & 1+A^\nabla(\xi)
\end{bmatrix},\\\xi\in \vec{\Sigma}^\nabla_{>0},
\label{eq:OjumpSigmanablagt0exact}
\end{multline}
where
\begin{equation}
A^\nabla(\xi)=\bo(\epsilon_N)\quad\text{and}\quad
B^\nabla(\xi)=\bo\left(\epsilon_N\frac{\lambda^2}{\epsilon_N^2}e^{-\alpha\lambda/
\epsilon_N}\right),\quad\lambda=E_+(\xi)>0,
\label{eq:OjumpSigmanablagt0}
\end{equation}
and
\begin{multline}
\mathbf{N}_+(\xi)\mathbf{L}_+^\Delta(\xi)=\sigma_2\mathbf{N}_-(\xi)
\mathbf{L}_-^\Delta(\xi)\sigma_2
\begin{bmatrix}
1+A^\Delta(\xi) & B^\Delta(\xi)e^{-[2iQ_+(\xi)+L_+(\xi)-2g_+(\xi)]/\epsilon_N}\\
B^\Delta(\xi)e^{[2iQ_+(\xi)+L_+(\xi)-2g_+(\xi)]/\epsilon_N} & 1+A^\Delta(\xi)
\end{bmatrix},\\
\xi\in\vec{\Sigma}^\Delta_{>0},
\label{eq:OjumpSigmadeltagt0exact}
\end{multline}
where
\begin{equation}
A^\Delta(\xi)=\bo(\epsilon_N)\quad\text{and}\quad
B^\Delta(\xi)=\bo\left(\epsilon_N\frac{\lambda^2}{\epsilon_N^2}
e^{-\alpha\lambda/\epsilon_N}\right),\quad\lambda=E_-(\xi)>0.
\label{eq:OjumpSigmadeltagt0}
\end{equation}
In both \eqref{eq:OjumpSigmanablagt0} and \eqref{eq:OjumpSigmadeltagt0}
the parameter $\alpha>0$ is defined in Proposition~\ref{prop:theta0}.
Also,
\begin{equation}
\mathbf{N}_+(\xi)=\sigma_2\mathbf{N}_-(\xi)\sigma_2,\quad
\xi\in\vec{\Sigma}_{>0}.
\label{eq:JumpNSigmagt0}
\end{equation}
\label{prop:JumpORplus}
\end{proposition}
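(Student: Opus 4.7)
The plan is to extract the jump condition for $\mathbf{N}(w)$ piece by piece along $\mathbb{R}_+$, translating \eqref{eq:MjumpSigmagt0}, \eqref{eq:MjumpSigmanablagt0}, and \eqref{eq:MjumpSigmadeltagt0} through the substitution $\mathbf{N}(w)=\mathbf{M}(w)e^{-g(w)\sigma_3/\epsilon_N}$. On $\vec{\Sigma}_{>0}$ this is immediate: combining \eqref{eq:MjumpSigmagt0} with the hypothesis \eqref{eq:gplusgminusRplus} and the identity $\sigma_2 e^{c\sigma_3}\sigma_2=e^{-c\sigma_3}$ yields \eqref{eq:JumpNSigmagt0} with no estimation required.

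For $\vec{\Sigma}^\nabla_{>0}$ I first conjugate the $\mathbf{M}$-jump matrix in \eqref{eq:MjumpSigmanablagt0} by $e^{g_+\sigma_3/\epsilon_N}$, using $g_++g_-=0$ together with the Schwartz-type identities $Q_++Q_-=0$ and $L_++L_-=0$ on $\vec{\mathbb{R}}_+$. This produces $\mathbf{N}_+=\sigma_2\mathbf{N}_-\sigma_2\,\tilde{\mathbf{V}}$ with
\begin{equation*}
\tilde{\mathbf{V}}(\xi)=\begin{bmatrix}1+\rho & iY_-e^{(-E_+-i\theta_{0-})/\epsilon_N}\\ -iY_+e^{(E_++i\theta_{0+})/\epsilon_N} & 1\end{bmatrix},
\end{equation*}
where $E_+(\xi):=2iQ_+(\xi)+L_+(\xi)-2g_+(\xi)$ and $\rho(\xi):=e^{i(\theta_{0+}-\theta_{0-})/\epsilon_N}$. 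Inserting $[\sigma_2\mathbf{L}^\nabla_-\sigma_2][\sigma_2\mathbf{L}^\nabla_-\sigma_2]^{-1}=\mathbb{I}$ recasts the jump as $\mathbf{N}_+\mathbf{L}^\nabla_+=\sigma_2\mathbf{N}_-\mathbf{L}^\nabla_-\sigma_2\cdot\mathbf{W}$ with $\mathbf{W}:=[\sigma_2\mathbf{L}^\nabla_-\sigma_2]^{-1}\tilde{\mathbf{V}}\mathbf{L}^\nabla_+$. Expanding this product using the triangular form \eqref{eq:Lnabla} and invoking the algebraic identities \eqref{eq:TnablaoverY} together with $Y_+(\xi)Y_-(\xi)=1$ (which follows from $L_++L_-=0$ and $\Pi_{N+}\Pi_{N-}=1$ on $\mathbb{R}_+$), a direct but careful calculation collapses the result to
\begin{equation*}
\mathbf{W}=\frac{1}{P}\begin{bmatrix}1+\rho & i\mu\, e^{-E_+/\epsilon_N}\\ i\mu\, e^{E_+/\epsilon_N} & 1+\rho\end{bmatrix},\quad P:=\sqrt{T^\nabla_+(\xi)T^\nabla_-(\xi)},\quad \mu:=e^{-i\theta_{0-}/\epsilon_N}-e^{i\theta_{0+}/\epsilon_N},
\end{equation*}
from which one reads off $1+A^\nabla=(1+\rho)/P$ and $B^\nabla=i\mu/P$.

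To bound these I exploit the Schwartz symmetry $\theta_{0-}(\xi)=\theta_{0+}(\xi)^*$ on $\mathbb{R}_+$, the series \eqref{eq:theta0series0}, and Assumption~\ref{assume:epsilonNgeneral} in the form $\Psi(0)/\epsilon_N=\pi N\in\pi\mathbb{Z}$. These give $\rho=e^{-2\alpha\lambda/\epsilon_N}$ and $\Re\theta_{0+}/\epsilon_N=\pi N+O(\lambda^2/\epsilon_N)$, where $\lambda:=E_+(\xi)>0$. Writing $\mu=-2ie^{-\alpha\lambda/\epsilon_N}\sin(\Re\theta_{0+}/\epsilon_N)$, the identity $\sin(\pi N+x)=(-1)^N\sin(x)$ forces $|\sin(\Re\theta_{0+}/\epsilon_N)|=O(\lambda^2/\epsilon_N)$, which together with $P=1+o(1)$ from Proposition~\ref{prop:YTnablaDelta} produces the bound $B^\nabla=O(\epsilon_N(\lambda/\epsilon_N)^2 e^{-\alpha\lambda/\epsilon_N})$. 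For $A^\nabla$, the algebraic identity $(1+\rho)^2-P^2=2\rho[1-\cos(s)]$, with $s:=(\theta_{0+}+\theta_{0-})/\epsilon_N-2\pi N=O(\lambda^2/\epsilon_N)$, gives $(1+\rho)-P=O(\rho\lambda^4/\epsilon_N^2)$; an elementary one-variable argument (the function $\lambda\mapsto\lambda^4 e^{-2\alpha\lambda/\epsilon_N}/\epsilon_N^2$ attains its maximum at $\lambda=O(\epsilon_N)$ with maximal value $O(\epsilon_N^2)$) then yields $A^\nabla=O(\epsilon_N)$ uniformly in $\lambda>0$.

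The analysis on $\vec{\Sigma}^\Delta_{>0}$ proceeds identically, using \eqref{eq:LDelta} and \eqref{eq:TDeltatimesY} in place of \eqref{eq:Lnabla} and \eqref{eq:TnablaoverY} and taking $\lambda=E_-(\xi)$. The main difficulty in the entire argument is the algebraic reduction giving $\mathbf{W}$ the symmetric diagonal form: recognizing and exploiting the cancellation that collapses the four-parameter expression into the common diagonal entry $(1+\rho)/P$ requires careful bookkeeping of the eight distinct exponentials appearing in $\tilde{\mathbf{V}}\mathbf{L}^\nabla_+$ and in $[\sigma_2\mathbf{L}^\nabla_-\sigma_2]^{-1}$, and repeated use of $Y_+Y_-=1$ together with \eqref{eq:TnablaoverY}. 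Once this structure is established, the estimates are routine applications of Propositions~\ref{prop:theta0} and \ref{prop:YTnablaDelta}.
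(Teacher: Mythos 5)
Your argument follows essentially the same route as the paper's proof: you conjugate the jump \eqref{eq:MjumpSigmanablagt0} by $e^{g_+\sigma_3/\epsilon_N}$ using $g_++g_-=Q_++Q_-=L_++L_-=0$, absorb the triangular factors to form $\mathbf{W}$, and the resulting formulae $1+A^\nabla=(1+\rho)/P$ and $B^\nabla=i\mu/P$ coincide exactly with the expressions for $A^\nabla$ and $B^\nabla$ that the paper records in its proof. The algebraic collapse of $\mathbf{W}$ to a matrix with equal diagonal entries is correct, and your estimate for $A^\nabla$ via the identity $(1+\rho)^2-P^2=2\rho(1-\cos s)$ is even slightly sharper than the paper's intermediate bound.

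There is, however, one unjustified step. The claim that $P=\sqrt{T^\nabla_+T^\nabla_-}=1+o(1)$ by Proposition~\ref{prop:YTnablaDelta} is not correct: that proposition explicitly excludes $\mathbb{R}_+$ from the region where $T^\nabla(w)=1+\bo(\epsilon_N)$ holds, so it says nothing about the boundary values $T^\nabla_\pm(\xi)$ on $\vec{\Sigma}^\nabla_{>0}$. In fact $T^\nabla_\pm(\xi)\to 2$ (hence $P\to 2$, not $1$) as $\xi$ approaches the endpoint $w=1$, where $\lambda=E_+(\xi)\to 0$ and $e^{2i\theta_{0\pm}/\epsilon_N}\to e^{2\pi i N}=1$. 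Fortunately what you actually need is only that $1/P$ is uniformly bounded, and this follows from your own computation: since
\begin{equation*}
2\rho\bigl(1-\cos s\bigr)\le\rho s^2=\bo\!\left(\frac{\lambda^4}{\epsilon_N^2}e^{-2\alpha\lambda/\epsilon_N}\right)=\bo(\epsilon_N^2)
\end{equation*}
uniformly in $\lambda>0$, you have $P^2=(1+\rho)^2-2\rho(1-\cos s)\ge 1-\bo(\epsilon_N^2)$. With this repair in place the proof is complete and matches the paper's.
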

\begin{proof}
The relation \eqref{eq:JumpNSigmagt0} follows directly from 
\eqref{eq:MjumpSigmagt0} and \eqref{eq:NMw} taking into account the
given condition \eqref{eq:gplusgminusRplus} on the boundary values of $g$.  
To prove \eqref{eq:OjumpSigmanablagt0} and \eqref{eq:OjumpSigmadeltagt0},
note firstly that from \eqref{eq:DE}, \eqref{eq:Qw}, and \eqref{eq:L0L} one has
both
\begin{equation}
Q_+(\xi)+Q_-(\xi)=0\quad\text{and} \quad
e^{\tfrac{1}{2}[L_+(\xi)+L_-(\xi)]/\epsilon_N}=1, \quad
\xi\in\vec{\mathbb{R}}_+.
\end{equation}  
From the latter it follows also that $Y_+(\xi)Y_-(\xi)=1$ for
$\xi\in\vec{\mathbb{R}}_+$.  Since $\mathbb{R}_+$ is disjoint from $Z$, we
may define $Y(w)^{1/2}$ as the principal branch of the square root for
$w$ just above and below $\mathbb{R}_+$ and it follows from \eqref{eq:Yalmostone} from
Proposition~\ref{prop:YTnablaDelta} that
$Y_+(\xi)^{1/2}Y_-(\xi)^{1/2}=1$ for $\xi\in\vec{\mathbb{R}}_+$ also.
Now from taking principal branch square roots factor-by-factor in
\eqref{eq:TnablaoverY} and \eqref{eq:TDeltatimesY} one can express 
the factors $T^\nabla(w)^{1/2}$ and $T^\Delta(w)^{1/2}$
appearing in \eqref{eq:Lnabla} and \eqref{eq:LDelta} in terms of
$Y(w)^{1/2}$ and the principal branches $(1+e^{\pm
  \theta_0(w)/\epsilon_N})^{1/2}$, which are also well-defined for
$w\in\mathbb{R}_+$.

One may now combine these facts and definitions with the jump conditions
\eqref{eq:MjumpSigmanablagt0} and \eqref{eq:MjumpSigmadeltagt0} to 
see that \eqref{eq:OjumpSigmanablagt0exact} holds 
where for $\xi\in\vec{\Sigma}^\nabla_{>0}$,
\begin{equation}
\begin{split}
A^\nabla(\xi)&:=
\frac{1+e^{i[\theta_{0+}(\xi)-\theta_{0-}(\xi)]/\epsilon_N}}
{(1+e^{2i\theta_{0+}(\xi)/\epsilon_N})^{1/2}
(1+e^{-2i\theta_{0-}(\xi)/\epsilon_N})^{1/2}}-1\\
B^\nabla(\xi)&:=-i\frac{e^{i\theta_{0+}(\xi)/\epsilon_N}-
e^{-i\theta_{0-}(\xi)/\epsilon_N}}
{(1+e^{2i\theta_{0+}(\xi)/\epsilon_N})^{1/2}
(1+e^{-2i\theta_{0-}(\xi)/\epsilon_N})^{1/2}},
\end{split}
\end{equation}
and that \eqref{eq:OjumpSigmadeltagt0exact} holds
where for $\xi\in\vec{\Sigma}^\Delta_{>0}$,
\begin{equation}
\begin{split}
A^\Delta(\xi)&:=
\frac{1+e^{-i[\theta_{0+}(\xi)-\theta_{0-}(\xi)]/\epsilon_N}}
{(1+e^{-2i\theta_{0+}(\xi)/\epsilon_N})^{1/2}
(1+e^{2i\theta_{0-}(\xi)/\epsilon_N})^{1/2}}-1 \\
B^\Delta(\xi)&:=
-i\frac{e^{i\theta_{0-}(\xi)/\epsilon_N}-e^{-i\theta_{0+}(\xi)/\epsilon_N}}
{(1+e^{-2i\theta_{0+}(\xi)/\epsilon_N})^{1/2}
(1+e^{2i\theta_{0-}(\xi)/\epsilon_N})^{1/2}}.
\end{split}
\end{equation} 
The desired estimates on $A^\nabla(\xi)$, $B^\nabla(\xi)$, $A^\Delta(\xi)$,
and $B^\Delta(\xi)$ now follow from 
Proposition~\ref{prop:theta0}, in particular from the Taylor series of
$\Psi(\lambda)$ about $\lambda=0$.  Indeed, given
$\xi\in\mathbb{R}_+$, $E_+(\xi)+E_-(\xi)=0$, and moreover
$E_+(\xi)$ is real and positive for $\xi\in\vec{\Sigma}^\nabla_{>0}$ and
real and negative for $\xi\in\vec{\Sigma}^\Delta_{>0}$ (the change of sign
comes from the reversal of orientation).  Therefore, recalling \eqref{eq:theta0def}, using
\eqref{eq:theta0series0} from Proposition~\ref{prop:theta0} together 
with Assumption~\ref{assume:epsilonNgeneral}, and writing
\begin{equation}
\nu(\lambda^2):=\sum_{n=1}^\infty\beta_n\lambda^{2n},
\end{equation}
we find
\begin{equation}
A^\nabla(\xi)=
\left|1+\frac{e^{-2\alpha\lambda/\epsilon_N}}{1+e^{-2\alpha\lambda/\epsilon_N}}\left[e^{2i\nu(\lambda^2)/\epsilon}-1\right]\right|^{-1}-1 = 
\bo\left(\frac{\lambda^2}{\epsilon_N}e^{-2\alpha\lambda/\epsilon_N}\right)
=  \bo(\epsilon_N),
\quad \lambda=E_+(\xi)>0
\end{equation}
and
\begin{equation}
B^\nabla(\xi)=2(-1)^N\frac{e^{-\alpha\lambda/\epsilon_N}\sin(\epsilon_N^{-1}
\nu(\lambda^2))}{|1+e^{-2\alpha\lambda/\epsilon_N}e^{2i\nu(\lambda^2)/\epsilon_N}|}
=\bo\left(\frac{\lambda^2}{\epsilon_N}e^{-\alpha\lambda/\epsilon_N}\right),
\quad \lambda=E_+(\xi)>0,
\end{equation}
therefore proving \eqref{eq:OjumpSigmanablagt0}.
In exactly the same way one 
obtains the estimates \eqref{eq:OjumpSigmadeltagt0}, thereby completing
the proof.
\end{proof}
Note that in the special case in which $\Psi(\lambda)$ is given by
the formula \eqref{eq:specialtheta0}, the error terms in
\eqref{eq:OjumpSigmanablagt0} and \eqref{eq:OjumpSigmadeltagt0} vanish
identically, a fact that was exploited to simplify the analysis of the semiclassical
limit of the focusing nonlinear Schr\"odinger equation with corresponding
special initial data in \cite{LyngM07}.

\section{Construction of $g(w)$}
\label{sec:construction-of-g}

We proceed under the assumption that the Riemann-Hilbert problem for 
$\mathbf{N}$ will reduce, under appropriate changes of variables, to 
a problem solved using a genus-1 Riemann surface (\textit{i.e.}, an elliptic curve), at least for small $t$ 
and $x$ bounded away from $\pm x_\text{crit}$.  This assumption is 
consistent with the qualitative behavior observed in Figure~\ref{fig:Ap75}, 
in which the solution appears to have one oscillatory phase for these 
values of $x$ and $t$.  In other parts of the space-time plane, the 
solution appears to have more than one oscillatory phase, suggesting that the 
corresponding Riemann-Hilbert problem will be solved using a Riemann surface 
of genus greater than one, which will require modifying the following 
calculations.

\subsection{Two types of ``genus-1'' ansatz for $g(w)$}
Let $\mathfrak{p}$ and $\mathfrak{q}$ be real parameters and consider the quadratic polynomial 
$R(w;\mathfrak{p},\mathfrak{q})^2$ given by
\begin{equation}
R(w;\mathfrak{p},\mathfrak{q})^2 := (w-\mathfrak{p})^2-\mathfrak{q}.
\end{equation}
When working in a region of the space-time plane where more than 
one nonlinear phase is expected, it is necessary to choose $R(w)^2$ to 
be a higher-degree polynomial (with a corresponding increase in the 
number of parameters $\mathfrak{p}_j,\mathfrak{q}_j$ specifying the roots 
of $R(w)^2$).

We must distinguish two cases, which we label as ``\librational'' and 
``\rotational'' as these will correspond ultimately to local asymptotics for
the fluxon condensate $u_N(x,t)$ in terms of periodic 
librational and rotational wavetrains, respectively.
\begin{itemize}
\item[\fbox{\librational}] This case is defined by the inequality $\mathfrak{q}<0$.
  The quadratic $R(w;\mathfrak{p},\mathfrak{q})^2$ has distinct roots forming a
  complex-conjugate pair $w=\mathfrak{p}\pm i\sqrt{-\mathfrak{q}}$. The roots are assumed to
  lie on the nonreal arcs of the contour
  $\Sigma^\nabla\cup\Sigma^\Delta$ (or, rather, given $\mathfrak{p}$ and $\mathfrak{q}$ with $\mathfrak{q}<0$
the regions $\Omega_\pm^\nabla$ and $\Omega_\pm^\Delta$ are assumed to
be positioned so that this holds).  We define a subcontour
$\beta\subset\Sigma^\nabla\cup\Sigma^\Delta$ consisting of the closure of
the arc of $\Sigma$ connecting the two roots of $R(w;\mathfrak{p},\mathfrak{q})^2$ via
$w=1$.  Thus $\beta$ is a simple contour (with no self-intersection
points).  Whenever we are in case \librational, we will assume that
there is no transition point, \textit{i.e.}, either $\Delta=\emptyset$
or $\nabla=\emptyset$.
\item[\fbox{\rotational}] This case is defined by the inequalities $\mathfrak{q}>0$
and 
\begin{equation}
\mathfrak{a}\le \mathfrak{p}-\sqrt{\mathfrak{q}} <\mathfrak{p}+\sqrt{\mathfrak{q}}\le \mathfrak{b}.
\label{eq:wrotationalbounds}
\end{equation}
The distinct real roots $w=\mathfrak{p}\pm\sqrt{\mathfrak{q}}$ therefore lie in the real
interval $[\mathfrak{a},\mathfrak{b}]$ of $\Sigma^\nabla\cup\Sigma^\Delta$.  We then assume further
that the point $w=w^+$ where the nonreal arcs of
$\Sigma^\nabla\cup\Sigma^\Delta$ meet the interval $[\mathfrak{a},\mathfrak{b}]$ lies
between the two roots.  We define a subcontour
$\beta\subset\Sigma^\nabla\cup\Sigma^\Delta$ as the union of the closure
of the nonreal arcs of $\Sigma^\nabla\cup\Sigma^\Delta$ with the real
interval $[\mathfrak{p}-\sqrt{\mathfrak{q}}, \mathfrak{p}+\sqrt{\mathfrak{q}}]$.  Thus $\beta$ is a non-simple
(self-intersecting) contour having a single self-intersection point (a
simple crossing) at $w=w^+$.  Whenever we are in case \rotational,
we will assume that if there is a transition point $w=\tau_N$, it
lies in $\beta$.
\end{itemize}
In both cases, we define
$\gamma:=\overline{(\Sigma^\nabla\cup\Sigma^\Delta)\setminus \beta}$,
and we assume the contours $\beta$ and $\gamma$ inherit the orientation
of $\Sigma$. See Figures~\ref{fig:betagammaL} and \ref{fig:betagammaR} for
illustrations of $\beta$ and $\gamma$ in cases \librational\ and \rotational\ respectively.
\begin{figure}[h]
\begin{center}
\includegraphics{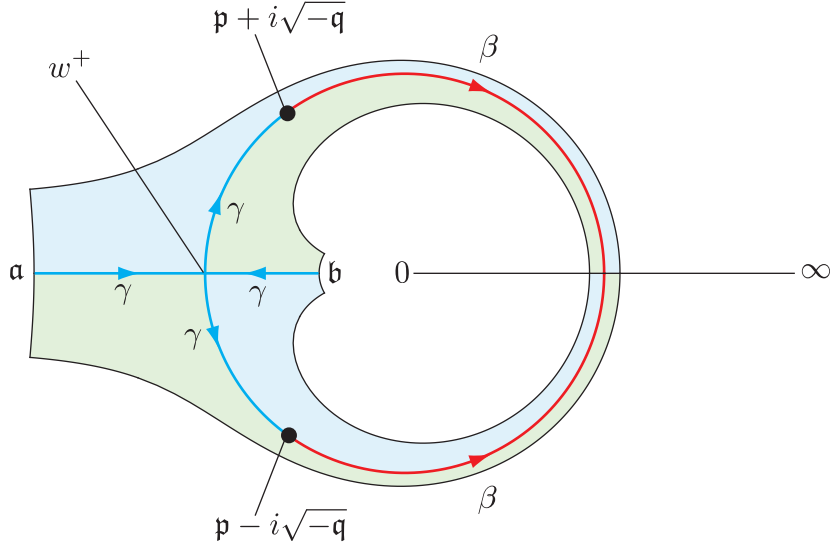}
\end{center}
\caption{\emph{The subcontours $\beta$ (red) and $\gamma$ (blue) for a configuration
of type \librational.  The orientations here correspond to the case $\Delta=\emptyset$ (see Figure~\ref{fig:Ann_DeltaEmpty}).}}
\label{fig:betagammaL}
\end{figure}

\begin{figure}[h]
\begin{center}
\includegraphics{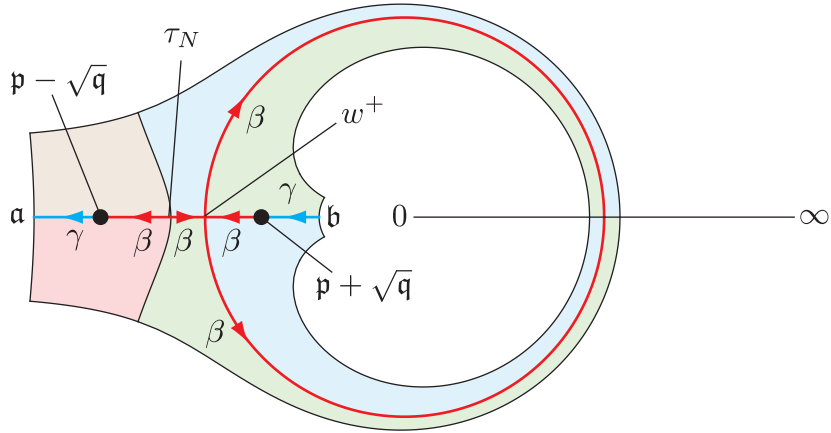}
\end{center}
\caption{\emph{The subcontours $\beta$ (red) and $\gamma$ (blue) for a configuration of type \rotational.  The orientations here correspond to the case $\Delta=P_N^{\prec\kink}$ (see Figure~\ref{fig:Ann_DeltaNearMinusM}).  This illustrates the fact that if a transition point $\tau_N$ is present
it is assumed to lie in the subcontour $\beta$.}}
\label{fig:betagammaR}
\end{figure}

We define an analytic branch $R(w)=R(w;\mathfrak{p},\mathfrak{q})$ of the square root of the
quadratic $R(w;\mathfrak{p},\mathfrak{q})^2$ by taking the branch cut to coincide with $\beta$
and choosing the sign so that $R(w)=w+\bo(1)$ as $w\to\infty$.  Note that
in case \rotational\ as well as the borderline case $v=0$, $R(w)$ is a sectionally
analytic function of $w$ because the branch cut locus $\beta$ has a
self-intersection point at $w=w^+$.

It will be useful below to have available some compact notation for
certain sums of contour integrals.  We therefore define
\begin{equation}
\int_CF(\xi)\,d\xi:=\int_{\partial\Omega_+^\nabla\setminus\Sigma^\nabla}F(\xi)\,d\xi +
\int_{\partial\Omega_-^\nabla\setminus\Sigma^\nabla}F(\xi)\,d\xi +
\int_{\partial\Omega_+^\Delta\setminus\Sigma^\Delta}F(\xi)\,d\xi
+\int_{\partial\Omega_-^\Delta\setminus\Sigma^\Delta}F(\xi)\,d\xi 
\label{eq:intC}
\end{equation}
and we will use this formula in some situations where $F(\xi)$ is to be
understood along the four contours on the right-hand side in the sense
of taking a boundary value from within the region whose boundary is
involved in the integration.  For example, if the regions
$\Omega_\pm^\nabla$ meet the positive real axis in the contour
$\Sigma^\nabla_{>0}$ and if $F$ is a function having a jump
discontinuity across this contour, then the two terms on the
right-hand side of \eqref{eq:intC} involving integration over
$\partial\Omega_\pm^\nabla\setminus \Sigma^\nabla$ require use of two
different boundary values $F_\pm(\xi)$ taken by $F$ on
$\vec{\Sigma}^\nabla_{>0}$.  If $F(\xi)$ depends on a parameter $w$
through a Cauchy factor $(\xi-w)^{-1}$ then $\int_CF(\xi)\,d\xi$
will be analytic in $w$ in a bounded domain we denote as $\Omega^\circ:=
(\Omega\cup\Sigma^\nabla\cup\Sigma^\Delta)\setminus\{\mathfrak{a},\tau_N,\mathfrak{b},1\}$.

In order to construct suitable functions $g(w)$ with which to
transform the matrix $\mathbf{M}(w)$ into $\mathbf{N}(w)$, we will
need to impose certain relations between the real parameters $\mathfrak{p}$ and
$\mathfrak{q}$ and the independent variables $x$ and $t$.  
We will define two functions $M(\mathfrak{p},\mathfrak{q},x,t)$ and $I(\mathfrak{p},\mathfrak{q},x,t)$.  
Then the ``moment condition'' $M(\mathfrak{p},\mathfrak{q},x,t)=0$ and the ``integral condition'' 
$I(\mathfrak{p},\mathfrak{q},x,t)=0$ will be used to study the dependence of the roots of 
$R(w;\mathfrak{p},\mathfrak{q})^2$ on $x$ and $t$.
For fixed $x$ and $t$, the solvability of the equations $M=0$ and 
$I=0$ for $\mathfrak{p}$ and $\mathfrak{q}$ will be a necessary condition 
for asymptotic reduction to a model Riemann-Hilbert problem solvable in terms of elliptic functions.  However, this will not be a sufficient condition.  It 
may happen that for some $(x,t)$ this system is solvable but certain necessary inequalities 
(see Proposition \ref{prop:inequalities}) fail along arcs of the 
subcontour $\gamma$.  In this case, it becomes necessary to introduce new arcs of the subcontour $\beta$ near the points where the inequalities have failed in $\gamma$.  Thus, $\beta$ becomes disconnected, and there are more endpoints (\textit{i.e.}, roots of $R(w)^2$).  To determine these additional parameters it then becomes necessary to also include further
equations $M_j=0$ and $I_j=0$.  Ultimately this will effect an asymptotic reduction to a model Riemann-Hilbert problem solvable in terms of higher-genus hyperelliptic functions.

For the moment, we
think of $\trans{(\mathfrak{p},\mathfrak{q},x,t)}\in\mathbb{R}^4$ as a parameter vector, and
begin by defining a function $M=M(\mathfrak{p},\mathfrak{q},x,t)$ as
\begin{equation}
M:=\frac{x-t}{\sqrt{\mathfrak{p}^2-\mathfrak{q}}}+x+t-\frac{2}{\pi}\int_C
\frac{\theta_0'(\xi)\sqrt{-\xi}\,d\xi}{R(\xi;\mathfrak{p},\mathfrak{q})}.
\label{eq:wM0Cdef}
\end{equation}
Now let $H(w)=H(w;\mathfrak{p},\mathfrak{q},x,t)$ be the function 
defined by
\begin{equation}
H(w):=-\frac{1}{4\sqrt{-w}}\left[\frac{x-t}{w\sqrt{\mathfrak{p}^2-\mathfrak{q}}}+\frac{2}{\pi}
\int_C
\frac{\theta_0'(\xi)\sqrt{-\xi}\,d\xi}{R(\xi;\mathfrak{p},\mathfrak{q})
(\xi-w)}\right],\quad w\in\Omega^\circ.
\label{eq:wgeneralGdef}
\end{equation}
This function is analytic where it is defined (the singularity at
$w=0$ and corresponding branch cut along $\mathbb{R}_+$ are excluded
from $\Omega^\circ$), but it does have jump discontinuities across the
the contours $\Sigma^\nabla_{>0}$ or $\Sigma^\Delta_{>0}$ (depending
on which of the six cases of $\Delta$ we are considering), as well as
across $\Sigma^{\nabla\Delta}$ and $\Sigma^{\Delta\nabla}$ if there
exists a transition point.  Note that $H(w)$ satisfies 
$H(w^*)^*=H(w)$, so that in particular the zeros of $H(w)$ in its
domain of definition either lie on the negative real axis 
or come in complex-conjugate pairs.
We may now define a second function
$I=I(\mathfrak{p},\mathfrak{q},x,t)$ as
\begin{equation}
\begin{split}
I:=&\Re\left\{\int_{\beta\cap\mathbb{C}_+}R_+(\xi)H(\xi)\,d\xi\right\}\\
{}=& \frac{1}{2}\int_{\beta\cap\mathbb{C}_+}R_+(\xi)H(\xi)\,d\xi
+\frac{1}{2}\int_{\beta\cap\mathbb{C}_-}R_-(\xi)H(\xi)\,d\xi.
\end{split}
\label{eq:wgeneralI0def}
\end{equation}
Finally, let
$f(w)=f(w;\mathfrak{p},\mathfrak{q},x,t)$ be given by the formula
\begin{equation}
f(w):=i\frac{dQ}{dw}(w;x,t)+\frac{x-t}{8\sqrt{\mathfrak{p}^2-\mathfrak{q}}}\frac{R(w;\mathfrak{p},\mathfrak{q})}{w\sqrt{-w}}
-\frac{R(w;\mathfrak{p},\mathfrak{q})}{2\pi\sqrt{-w}}
\int_\gamma
\frac{\theta_0'(\xi)\sqrt{-\xi}\,d\xi}{R(\xi;\mathfrak{p},\mathfrak{q})(\xi-w)} +
\frac{1}{2}\frac{dL}{dw}(w).
\label{eq:wgprimedef}
\end{equation}

\begin{proposition}
  Let parameters $\mathfrak{p}$, $\mathfrak{q}$, $x$, and $t$ be given so that the quadratic 
$R(w;\mathfrak{p},\mathfrak{q})^2$ is in case \librational, case \rotational, or the borderline case of $\mathfrak{q}=0$,
 and assume that the regions $\Omega_\pm^\nabla$ and $\Omega_\pm^\Delta$
are chosen so that the contour $\Sigma^\nabla\cup\Sigma^\Delta$ is
consistent with
the roots of the quadratic with well-defined subcontours $\beta$ and
$\gamma$.  Suppose also that the
  moment condition $M=0$ and the integral condition  $I=0$ both hold.  Then an analytic
  function $g(w)$ is defined for $w\in
  \mathbb{C}\setminus (\beta\cup\mathbb{R}_+)$ as follows.
\begin{itemize}
\item[\fbox{\librational}] In this case we set
\begin{equation}
g(w):=\int_0^wf(w')\,dw'
\label{eq:gdeflibrational}
\end{equation}
where the path of integration is arbitrary in $\mathbb{C}\setminus
(\beta\cup\mathbb{R}_+)$.
\item[\fbox{\rotational}] In this case we set
\begin{equation}
g(w):=\begin{cases}\displaystyle \int_0^wf(w')\,dw',\quad & w\in \Upsilon_0,\\\\
\displaystyle\int_\infty^wf(w')\,dw',\quad & w\in\Upsilon_\infty,
\end{cases}
\label{eq:gdefrotational}
\end{equation}
where $\Upsilon_0$ and $\Upsilon_\infty$ are respectively the bounded and unbounded
connected components of $\mathbb{C}\setminus
(\beta\cup\mathbb{R}_+)$,
and in each case the path of
integration is arbitrary in the given domain.
\end{itemize}
(In the borderline case of $\mathfrak{q}=0$ we also use the definition
\eqref{eq:gdefrotational}.)  
The function $g(w)$ so-defined satisfies in all
cases the Schwartz-symmetry condition
\begin{equation}
g(w^*)=g(w)^*
\label{eq:wgsymmetry}
\end{equation}
and is a uniformly H\"older-$\tfrac{1}{2}$ continuous map
$\mathbb{C}\setminus(\beta\cup\mathbb{R}_+)\to\mathbb{C}$.  The
function $\theta:\vec{\Sigma}^\nabla\cup\vec{\Sigma}^\Delta
\to\mathbb{C}$ defined by
\eqref{eq:wthetaphidef} satisfies
\begin{equation}
\Im\{\theta(\xi)\}\equiv 0,\quad\xi\in(\vec{\Sigma}^\nabla\cup\vec{\Sigma}^\Delta)\cap\mathbb{R},
\label{eq:wthetareal}
\end{equation}
\begin{equation}
\theta(\xi)\equiv 0,\quad\xi\in\vec{\gamma},
\label{eq:wthetazerogamma}
\end{equation}
and
\begin{equation}
\frac{d\theta}{d\xi}(\xi)=iR_+(\xi;\mathfrak{p},\mathfrak{q})H(\xi),\quad\xi\in\vec{\beta}.
\label{eq:wthetaprimebeta}
\end{equation}
Also, we have
\begin{equation}
g_+(\xi)+g_-(\xi)=0,\quad \xi\in\vec{\mathbb{R}}_+.
\label{eq:wgsumzeroR}
\end{equation}
The function $\phi:\vec{\Sigma}^\nabla\cup\vec{\Sigma}^\Delta
\to\mathbb{C}$
defined by \eqref{eq:wthetaphidef} satisfies
\begin{equation}
\Im\{\phi(\xi)\}\equiv 0,\quad \xi\in(\vec{\Sigma}^\nabla\cup
\vec{\Sigma}^\Delta)\cap\mathbb{R},
\label{eq:wphireal}
\end{equation}
\begin{equation}
\phi(\xi)\equiv \pm i\Phi,\quad\xi\in\vec{\beta}\cap\mathbb{C}_\pm,
\label{eq:wphiconstbetaCpm}
\end{equation}
for some real number $\Phi$, 
\begin{equation}
\phi(\xi)\equiv 0,\quad \xi\in\vec{\beta}\cap\mathbb{R},
\label{eq:wphizero}
\end{equation}
and
\begin{equation}
\frac{d\phi}{d\xi}(\xi) = R(\xi;\mathfrak{p},\mathfrak{q})H(\xi),\quad\xi\in\vec{\gamma}.
\label{eq:wphiprimegamma}
\end{equation}
Finally, $g$ satisfies the decay condition
\begin{equation}
\lim_{w\to\infty}g(w)=0.
\label{eq:wgdecay}
\end{equation}
\label{prop:wgbasicproperties}
\end{proposition}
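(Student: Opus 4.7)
My plan is to verify each asserted property of $g$ by direct analysis of the integral definitions \eqref{eq:gdeflibrational} and \eqref{eq:gdefrotational}. I would begin by establishing that $f$, as given by \eqref{eq:wgprimedef}, extends to an analytic function on $\mathbb{C}\setminus(\beta\cup\mathbb{R}_+)$. The four terms in $f$ present potential discontinuities across $\gamma$ from two sources: the Cauchy integral $-R(w)/(2\pi\sqrt{-w})\int_\gamma\theta_0'(\xi)\sqrt{-\xi}\,d\xi/[R(\xi)(\xi-w)]$ contributes a jump of $-i\theta_0'(w)$ across $\vec{\gamma}$ by Sokhotski--Plemelj (since $R$ and $\sqrt{-w}$ are continuous across $\gamma$, being separated from their branch cuts $\beta$ and $\mathbb{R}_+$), while $\tfrac12 L'(w)$ contributes an opposing jump $+i\theta_0'(w)$ from \eqref{eq:Lplusminusdiff}. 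These cancel, so $f$ is analytic on $\mathbb{C}\setminus(\beta\cup\mathbb{R}_+)$. A large-$w$ expansion using $R(w)=w-\mathfrak{p}+O(w^{-1})$ and the geometric expansion of $1/(\xi-w)$ shows that the coefficient of $1/\sqrt{-w}$ in $f(w)$ at infinity is proportional to $M(\mathfrak{p},\mathfrak{q},x,t)$, so the moment condition $M=0$ forces $f(w)=O(w^{-2})$ as $w\to\infty$. This ensures convergence of the improper integral $\int_\infty^w f$ used in case \rotational\ and yields \eqref{eq:wgdecay}.

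Next I would verify single-valuedness of $g$ by showing that the integral of $f$ vanishes over every nontrivial cycle in $\mathbb{C}\setminus(\beta\cup\mathbb{R}_+)$. The Schwartz symmetry $f(w^*)^*=f(w)$ is immediate term-by-term from the analogous symmetries of $Q,R,L$ and the Schwartz-symmetry of $\gamma$; combined with the decay at infinity and $f_+(\xi)+f_-(\xi)=0$ on $\vec{\mathbb{R}}_+$ (from $Q_++Q_-=0$, $R_++R_-=0$, and $L_++L_-=0$ there), it forces the period around $\mathbb{R}_+$ to vanish. The period around $\beta$ equals $\int_\beta(f_+-f_-)\,d\xi$. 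A careful computation---deforming the auxiliary contour $C$ appearing in the definition \eqref{eq:wgeneralGdef} of $H$ onto $\gamma$ by Cauchy's theorem and collecting the contributions from the $L$-jumps---yields the crucial identity
\begin{equation*}
f_+(\xi)-f_-(\xi)=-R_+(\xi)H(\xi),\quad \xi\in\vec{\beta}.
\end{equation*}
Schwartz symmetry then reduces $\int_\beta R_+(\xi)H(\xi)\,d\xi$ to $2\operatorname{Re}\int_{\beta\cap\mathbb{C}_+}R_+(\xi)H(\xi)\,d\xi=2I$, so the integral condition $I=0$ is precisely what makes the $\beta$-period vanish in case \librational. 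In case \rotational\ (and the borderline $\mathfrak{q}=0$), the set $\mathbb{C}\setminus(\beta\cup\mathbb{R}_+)$ splits into two components $\Upsilon_0,\Upsilon_\infty$ on each of which $g$ is defined separately from its own anchor; $M=0$ and $I=0$ are still needed for the jump identities below.

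The jump relations then follow. On $\gamma$, the analyticity of $f$ gives $g_+=g_-$, proving \eqref{eq:wthetazerogamma}. On $\beta$, differentiating $\theta=-i(g_+-g_-)$ and using the jump identity above yields \eqref{eq:wthetaprimebeta}. For $\phi$, observe that on $\beta$ the $R$-containing terms of $f$ satisfy $R_+=-R_-$, so they cancel from the sum, leaving $f_+(\xi)+f_-(\xi)=2iQ'(\xi)+\overline{L}'(\xi)$; integrating and using $\phi=2iQ+\overline{L}-g_+-g_-$ shows $\phi$ is constant on each connected component of $\beta$. Schwartz symmetry then forces this constant to be purely imaginary of the form $\pm i\Phi$ on $\beta\cap\mathbb{C}_\pm$, establishing \eqref{eq:wphiconstbetaCpm}, and to equal $0$ on $\beta\cap\mathbb{R}$, establishing \eqref{eq:wphizero}; the reality assertions \eqref{eq:wthetareal} and \eqref{eq:wphireal} follow from Schwartz symmetry of $g$ itself, inherited from that of $f$ and the Schwartz-invariance of the anchor points $0$ and $\infty$. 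The relation \eqref{eq:wgsumzeroR} follows by integrating $f_++f_-=0$ on $\vec{\mathbb{R}}_+$ from the symmetric anchor. Finally, the H\"older-$\tfrac12$ continuity of $g$ stems from the $\sqrt{w-w_0}$ behavior of $R$ at its endpoints together with the integrability of $f$ near $w=0$ and $w=\infty$.

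The hardest step will be the jump identity $f_+-f_-=-R_+H$ on $\beta$, whose verification hinges on a delicate contour-deformation linking the auxiliary contour $C$ in the definition of $H$ (which runs along the outer boundary of $\Omega$) to the gap $\gamma$ appearing in $f$. This deformation must carefully track the $L$-jumps---which by \eqref{eq:Lplusminusdiff} are piecewise constant modulo $2\pi i\epsilon_N\#\Delta$---and the contributions from the transition point $\tau_N$ when present, particularly in the cases where both $\Delta$ and $\nabla$ are nonempty. While conceptually routine once the residues are correctly identified, the bookkeeping is arithmetically the most involved part of the proof.
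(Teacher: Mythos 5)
Your proposal captures the essential structure of the paper's argument correctly: the analyticity of $f$ across $\gamma$ via cancellation between the Plemelj jump of the Cauchy integral and the jump of $\tfrac12 L'$, the role of $M=0$ in producing $O(w^{-2})$ decay of $f$, and especially the crucial jump identity $f_+(\xi)-f_-(\xi)=-R_+(\xi)H(\xi)$ on $\vec\beta$, whose integral you rightly recognize as $\pm 2I$ by Schwartz symmetry. Your framing of this last fact in terms of ``periods'' is a bit misleading---in case \librational\ the complement $\mathbb{C}\setminus(\beta\cup\mathbb{R}_+)$ is simply connected since $\beta\cup\mathbb{R}_+$ is a tree with a single crossing at $w=1$, so there are no cycles to worry about; what $I=0$ really delivers in that case is the decay $g(\infty)=0$---but the underlying calculation is the same as the paper's.

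However, there is a genuine gap in your proof of \eqref{eq:wphizero}. You assert that Schwartz symmetry forces the constant value of $\phi$ on $\vec\beta\cap\mathbb{R}$ to be zero, but Schwartz symmetry only gives that $\phi(\xi)$ is a \emph{real} constant on each real arc of $\beta$---it says nothing about its value. Nothing in the jump relations you derive pins that constant down. The correct argument, and in fact the main place where $I=0$ enters in case \rotational, runs through the self-intersection point $w^+$ of $\beta$: one uses the uniform H\"older-$\tfrac12$ continuity of $g$ in a punctured neighborhood of $w^+$ to equate the limiting value of $\Re\{g_+\}$ taken along a real arc of $\beta$ with the limiting value taken along a complex arc. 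Combined with the purely imaginary constancy $\phi=\pm i\Phi$ on $\vec\beta\cap\mathbb{C}_\pm$ already established, this shows that for $\xi\in\vec\beta\cap\mathbb{R}$ one has $\Re\{\phi(\xi)\}=\lim_{\xi'\to w^+,\,\xi'\in\vec\beta\cap\mathbb{C}_\pm}\Re\{i\theta(\xi')\}$. Since $\theta$ takes a real limiting value at the other end $\xi=1$ of that complex arc (because $g_\pm$ are purely imaginary on $\mathbb{R}_+$) and since $d\theta/d\xi=iR_+H$ along $\beta$, the quantity $\Re\{i\theta(w^+)\}$ is precisely $\mp\Re\int_{\beta\cap\mathbb{C}_+}R_+H\,d\xi=\mp I$, which vanishes by hypothesis. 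Your proposal replaces this continuity argument and use of $I=0$ with an appeal to Schwartz symmetry alone, which does not suffice; as written, \eqref{eq:wphizero} does not follow. You should also note that in case \rotational\ the role of $I=0$ is exactly this (your claim that $I=0$ is ``still needed for the jump identities'' there is not accurate---the jump identities hold regardless).
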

\begin{proof}
  Even without the conditions $M=0$ and $I=0$, it is obvious from
  the definition \eqref{eq:wgprimedef} that the function $f(w)$ is
  analytic at least for
  $w\in\mathbb{C}\setminus(\Sigma^\nabla\cup\Sigma^\Delta\cup\mathbb{R}_+)$.
  From the Plemelj formula and the definition \eqref{eq:L0L} of $L(w)$,
  one finds that
\begin{equation}
f_+(\xi)-f_-(\xi)=0,\quad \xi\in\vec{\gamma},
\end{equation}
which, upon taking into account the continuity of the boundary values taken
by $f$ along $\vec{\gamma}$, shows that $f$ is analytic in the larger
domain $\mathbb{C}\setminus(\beta\cup\mathbb{R}_+)$.  

Now $f(w)$ is automatically integrable at $w=0$.  Indeed,
\eqref{eq:L0L} shows that $L(w)$ has a well-defined limiting value as
$w\to 0$ with $|\arg(-w)|<\pi$ (and is in fact uniformly
H\"older-$\tfrac{1}{2}$ for such $w$), the term involving the integral
over $\gamma$ is clearly $\bo((-w)^{-1/2})$, and the sum of the remaining two 
terms
is as well (after canceling a term proportional to
$(-w)^{-3/2}$ between them).  This fact gives sense
to the formulae defining $g(w)$ in case \librational\ and also in the
domain $\Upsilon_0$ in case \rotational.  But the moment condition $M=0$ also makes
$f$ integrable at $w=\infty$:
\begin{equation}
f(w)=\frac{1}{8\sqrt{-w}}\left[
\frac{x-t}{\sqrt{\mathfrak{p}^2-\mathfrak{q}}}+x+t+\frac{4}{\pi}
\int_\gamma
\frac{\theta_0'(\xi)\sqrt{-\xi}\,d\xi}
{R(\xi;\mathfrak{p},\mathfrak{q})}\right] + \bo\left((-w)^{-3/2}\right),\quad w\to\infty.
\label{eq:wfwlarge}
\end{equation}
By a simple contour deformation argument (using the fact that
$R(\xi;\mathfrak{p},\mathfrak{q})$ changes sign across $\vec{\beta}$), we may write the integral
over $\gamma$ as minus one-half of the corresponding integral 
over $C$ (recall \eqref{eq:intC}) and thus identify the term
in brackets as the ``moment'' $M(\mathfrak{p},\mathfrak{q},x,t)$.
Therefore in case \rotational, $g(w)$ is well-defined and analytic for
$w\in\Upsilon_\infty$, and in case \librational\ we observe that $g$ has a
well-defined limiting value as $w\to\infty$.

The Schwartz symmetry \eqref{eq:wgsymmetry} of $g$ now follows
immediately from the definition of $g$ and the corresponding symmetry
$f(z^*)=f(z)^*$ obvious from \eqref{eq:wgprimedef}, and the
H\"older-$\tfrac{1}{2}$ continuity of $g$ can be read off from the
formula for $f=g'$.  Also, since by definition $g$ has no jump across
the contour $\vec{\gamma}$ we see from the definition
\eqref{eq:wthetaphidef} of $\theta$ that \eqref{eq:wthetazerogamma}
holds.

Next, observe that 
\begin{equation}
\begin{split}
f_+(\xi)-f_-(\xi)&=R_+(\xi;\mathfrak{p},\mathfrak{q})
\left[\frac{x-t}{4\sqrt{\mathfrak{p}^2-\mathfrak{q}}\,\xi\sqrt{-\xi}}-\frac{1}{\pi\sqrt{-\xi}}
\int_\gamma
\frac{\theta_0'(\xi')\sqrt{-\xi'}\,d\xi'}{R(\xi';\mathfrak{p},\mathfrak{q})
(\xi'-\xi)}\right]+i\theta_0'(\xi)
\\
&=-R_+(\xi)H(\xi),\quad \xi\in\vec{\beta},
\end{split}
\label{eq:wfdiffbeta}
\end{equation}
where the second line follows from a residue calculation and a contour
deformation like that used in identifying the leading term in
\eqref{eq:wfwlarge} with a multiple of $M(\mathfrak{p},\mathfrak{q},x,t)$.  
Similarly,
\begin{equation}
2i\frac{dQ}{d\xi}(\xi;x,t)+\frac{d\overline{L}}{d\xi}(\xi)
-f_+(\xi)-f_-(\xi) = R(\xi)H(\xi),\quad \xi\in\vec{\gamma}.
\end{equation}
Together with the definitions \eqref{eq:wthetaphidef} of $\theta$
and $\phi$, these two relations establish \eqref{eq:wthetaprimebeta}
and \eqref{eq:wphiprimegamma}.

Simpler calculations show
that 
\begin{equation}
f_+(\xi)+f_-(\xi)=0,\quad \xi\in\vec{\mathbb{R}}_+,
\label{eq:wfsumzeroR}
\end{equation}
integration of which yields \eqref{eq:wgsumzeroR},
and
\begin{equation}
f_+(\xi)+f_-(\xi)=2i\frac{dQ}{d\xi}(\xi;x,t)+\frac{d\overline{L}}{d\xi}(\xi),
\quad \xi\in\vec{\beta}.
\end{equation}
The latter shows that $\phi$ defined by \eqref{eq:wthetaphidef} is
constant in each arc of $\vec{\beta}$.  Since it follows from the
Schwartz symmetry \eqref{eq:wgsymmetry} of $g$ and the jump condition
\eqref{eq:wgsumzeroR} that $g$ takes purely imaginary boundary values
along $\mathbb{R}_+$, an examination of the functions $\theta$ and
$\phi$ along $\beta$ near its intersection point $w=1$ with
$\mathbb{R}_+$ shows that the limiting values of these functions taken
along $\vec{\beta}$ (either of the two arcs) as $\xi\in\vec{\beta}$
tends to $\xi=1$ are, respectively, real and imaginary.  Since
$\phi(\xi^*)=\phi(\xi)^*$, the identities \eqref{eq:wphiconstbetaCpm}
follow immediately.  Now, Schwartz symmetry of $g$
\eqref{eq:wgsymmetry} also shows that both $\theta(\xi)$ and
$\phi(\xi)$ are real-valued for $\xi\in(\vec{\Sigma}^\nabla
\cup\vec{\Sigma}^\Delta)\cap\mathbb{R}$, proving \eqref{eq:wthetareal}
and \eqref{eq:wphireal}.  If we are in case \rotational, so that
$\vec{\beta}$ contains two real arcs, we can now show that $\phi\equiv
0$ on these two arcs, proving \eqref{eq:wphizero}.  Indeed, for
$\xi\in\vec{\beta}\cap\mathbb{R}$, we may use constancy of $\phi$
along the arc to obtain
\begin{equation}
\begin{split}
\Re\{\phi(\xi)\} &= \mathop{\lim_{\xi'\to w^+}}_{\xi'\in\vec{\beta}\cap\mathbb{R}}
\Re\{\phi(\xi')\}\\
&= \Re\{2iQ+\overline{L}\}(w^+)-
\mathop{\lim_{\xi'\to w^+}}_{\xi'\in\vec{\beta}\cap\mathbb{R}}\Re\{g_+(\xi')+g_-(\xi')\},\quad \xi\in\vec{\beta}\cap\mathbb{R}.
\end{split}
\end{equation}
Note that while $2iQ(\xi)+\overline{L}(\xi)$ is not continuous on $\beta$
in a neighborhood of the self-intersection point $\xi=w^+$ due to jump discontinuities in $\overline{L}$ (see \eqref{eq:varphinabla} and
\eqref{eq:varphiDelta} and the discussion just below these definitions), 
its real part is, which makes $\Re\{2iQ +\overline{L}\}$ well-defined
at $\xi=w^+$ regardless of the arc along which this point is
approached, and explains the notation 
$\Re\{2iQ+\overline{L}\}(w^+)$.
But $g_+(\xi')+g_-(\xi')=2g_+(\xi')-i\theta(\xi')$ and $\theta(\xi')$ is real
for $\xi'\in\vec{\beta}\cap\mathbb{R}$, so
\begin{equation}
\Re\{\phi(\xi)\} = \Re\{2iQ+\overline{L}\}(w^+)-
2\mathop{\lim_{\xi'\to w^+}}_{\xi'\in\vec{\beta}\cap\mathbb{R}}\Re\{g_+(\xi')\},
\quad \xi\in\vec{\beta}\cap\mathbb{R}.
\end{equation}
Now if $U$ is a small neighborhood of the self-intersection point
$w=w^+$ of $\beta$, then $g:U\setminus \beta\to \mathbb{C}$ is
uniformly H\"older-$\tfrac{1}{2}$ continuous, so the 
latter limit may be taken along a different arc of $\beta$ with the 
same result:
\begin{equation}
\Re\{\phi(\xi)\} = \Re\{2iQ+\overline{L}\}(w^+)-
2\mathop{\lim_{\xi'\to w^+}}_{\xi'\in\vec{\beta}\cap\mathbb{C}_\pm}\Re\{g_+(\xi')\},
\quad \xi\in\vec{\beta}\cap\mathbb{R},
\end{equation}
where the upper or lower half-plane is used depending on whether the
original arc of $\vec{\beta}\cap\mathbb{R}$ was oriented to the right
or left respectively.  Now we can write this in terms of the limiting
values of $\phi$ and $\theta$ along $\vec{\beta}\cap\mathbb{C}_\pm$:
\begin{equation}
\begin{split}
\Re\{\phi(\xi)\} &=
\mathop{\lim_{\xi'\to w^+}}_{\xi'\in\vec{\beta}\cap\mathbb{C}_\pm}
\Re\{\phi(\xi')+i\theta(\xi')\}\\
&=\mathop{\lim_{\xi'\to w^+}}_{\xi'\in\vec{\beta}\cap\mathbb{C}_\pm}
\Re\{i\theta(\xi')\},\quad\xi\in\vec{\beta}\cap\mathbb{R},
\end{split}
\end{equation}
where the second line follows because $\phi$ is a purely imaginary
constant $\pm i\Phi$ along $\vec{\beta}\cap\mathbb{C}_\pm$.
Finally, since $\theta(\xi)$ has a real limiting value as $\xi\to 1$
with $\xi\in\vec{\beta}\cap\mathbb{C}_\pm$, we obtain that
$\Re\{\phi(\xi)\}\equiv 0$ for $\xi\in\vec{\beta}\cap\mathbb{R}$ as a
consequence of the integral condition $I=0$.  But from \eqref{eq:wphireal},
$\phi(\xi)$ is real for $\xi\in\mathbb{R}$, so \eqref{eq:wphizero} follows.

The decay condition \eqref{eq:wgdecay} is obvious from the definition
in case \rotational, and in case \librational\ it follows from the integral condition
$I=0$.  Indeed, writing $g(\infty)$ in case \librational\ as the the
integral
\begin{equation}
g(\infty)=\int_0^{-\infty}f(\xi)\,d\xi,
\end{equation}
we may split the integral into two equal parts, and in each part we deform
the contour into the right half-plane in opposite directions, bringing
the two contours against $\beta\cup\mathbb{R}_+$.  Using \eqref{eq:wfsumzeroR}
cancels the contributions to $g(\infty)$ coming from integrals along the
upper and lower edges of $\mathbb{R}_+$, leaving only
\begin{equation}
g(\infty)=\pm\frac{1}{2}\int_\beta\left(f_+(\xi)-f_-(\xi)\right)\,d\xi
\end{equation}
where the sign is different depending on whether $\beta\subset\Sigma^\nabla$
or $\beta\subset\Sigma^\Delta$.  But in either case, we now can use
\eqref{eq:wfdiffbeta} and the symmetry $f(\xi^*)=f(\xi)^*$ to see that
the condition $I=0$ guarantees that $g(\infty)=0$ in case \librational.
\end{proof}

If $\mathfrak{p}$ and $\mathfrak{q}$ can be eliminated by means of the equations $M(\mathfrak{p},\mathfrak{q},x,t)=0$ and $I(\mathfrak{p},\mathfrak{q},x,t)=0$,
then we may view $g$ as being a function of $w$ depending parametrically only
on $x$ and $t$.  Now, according to Proposition~\ref{prop:wgbasicproperties} 
there will be a real constant $\Phi$ associated with $g$.  
In the situation that $\mathfrak{p}$ and $\mathfrak{q}$ have been eliminated, we will have
$\Phi=\Phi(x,t)$, and it will be useful to characterize the
dependence of $\Phi(x,t)$ on the remaining parameters $x$ and $t$.

\begin{proposition}
  Suppose that $\mathfrak{p}=\mathfrak{p}(x,t)$ and $\mathfrak{q}=\mathfrak{q}(x,t)$ constitute a differentiable solution
of the equations $M(\mathfrak{p},\mathfrak{q},x,t)=0$ and $I(\mathfrak{p},\mathfrak{q},x,t)=0$, so that $g(w;x,t)=g(w;\mathfrak{p}(x,t),\mathfrak{q}(x,t),x,t)$,
and let the constant
$\Phi=\Phi(x,t)$ be obtained therefrom
as described in
  Proposition~\ref{prop:wgbasicproperties}.  Then $\Phi(x,t)$
is jointly differentiable in $x$ and $t$ with real-valued first-order partial derivatives,
depending on $x$ and $t$ only via $\mathfrak{p}$ and $\mathfrak{q}$, 
given by
\begin{equation}
\frac{\partial\Phi}{\partial x}=\frac{\pi}{4\mathcal{D}}\left[1-\frac{1}{\sqrt{\mathfrak{p}^2-\mathfrak{q}}}\right]\quad\text{and}\quad
\frac{\partial\Phi}{\partial t}=\frac{\pi}{4\mathcal{D}}\left[1+\frac{1}{\sqrt{\mathfrak{p}^2-\mathfrak{q}}}\right],
\label{eq:kappapartials}
\end{equation}
where in case \librational,
\begin{equation}
\mathcal{D}=\frac{K(m_\librational)}{(\mathfrak{p}^2-\mathfrak{q})^{1/4}},\quad
m_\librational:=\frac{1}{2}\left(1-\frac{\mathfrak{p}}{\sqrt{\mathfrak{p}^2-\mathfrak{q}}}\right)\in (0,1),
\label{eq:deltalibrational}
\end{equation}
and in case \rotational,
\begin{equation}
\mathcal{D} = \frac{2K(m_\rotational)}{\sqrt{-\mathfrak{p}+\sqrt{\mathfrak{q}}}+\sqrt{-\mathfrak{p}-\sqrt{\mathfrak{q}}}},
\quad
m_\rotational:=\frac{4\sqrt{\mathfrak{p}^2-\mathfrak{q}}}{(\sqrt{-\mathfrak{p}+\sqrt{\mathfrak{q}}}+\sqrt{-\mathfrak{p}-\sqrt{\mathfrak{q}}})^2}\in (0,1),
\label{eq:deltarotational}
\end{equation}
where $K(\cdot)$ denotes the complete elliptic integral of the first kind defined by \eqref{eq:ellipticKdef}.
In particular, defining a quantity $n_\mathrm{p}$ by
\begin{equation}
n_\mathrm{p}:=-\frac{\displaystyle\frac{\partial\Phi}{\partial x}}
{\displaystyle\frac{\partial\Phi}{\partial t}} = 
\frac{1-\sqrt{\mathfrak{p}^2-\mathfrak{q}}}{1+\sqrt{\mathfrak{p}^2-\mathfrak{q}}},
\label{eq:phasevelocity}
\end{equation}
and noting that $\mathfrak{p}^2-\mathfrak{q}>0$ in both
cases \librational\ and \rotational, we see that $n_\mathrm{p}$ is an algebraic function of $\mathfrak{p}$ and $\mathfrak{q}$ satisfying $|n_\mathrm{p}|<1$.  Also, $0<\mathfrak{p}^2-\mathfrak{q}<1$ implies $n_\mathrm{p}>0$ while $\mathfrak{p}^2-\mathfrak{q}>1$
implies $n_\mathrm{p}<0$.
\label{prop:phasevelocity}
\end{proposition}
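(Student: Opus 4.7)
The plan is to differentiate the identity $\phi(\xi)=i\Phi$ (valid for $\xi\in\vec\beta\cap\mathbb{C}_+$) at a convenient fixed point, exploiting the affine dependence of $Q$, $f$, and $g$ on $(x,t)$.

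Observe that \eqref{eq:wgprimedef} writes $f$ as $f_0(w;\mathfrak{p},\mathfrak{q})+x\tilde f_x(w;\mathfrak{p},\mathfrak{q})+t\tilde f_t(w;\mathfrak{p},\mathfrak{q})$ with
\[
\tilde f_x(w)=iE'(w)+\frac{R(w)}{8\sqrt{\mathfrak{p}^2-\mathfrak{q}}\,w\sqrt{-w}},\qquad \tilde f_t(w)=iD'(w)-\frac{R(w)}{8\sqrt{\mathfrak{p}^2-\mathfrak{q}}\,w\sqrt{-w}}.
\]
The $(-w)^{-3/2}$ singularities of $iE'$ and $iD'$ at $w=0$ exactly cancel those of the second term, since $R(0)=-\sqrt{\mathfrak{p}^2-\mathfrak{q}}$ (a real path from $\infty$ to $0$ crosses $\beta$ once at $w=1$ and flips the branch of $R$), so that $\tilde f_x,\tilde f_t$ are integrable at $0$. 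Consequently $g(w;\mathfrak{p},\mathfrak{q},x,t)=g_0(w)+xg_x(w)+tg_t(w)$ with $g_x,g_t$ functions of $(w;\mathfrak{p},\mathfrak{q})$ alone. As $Q(w;x,t)=xE(w)+tD(w)$ is also affine in $(x,t)$, so is $\phi=2iQ+\overline L-g_+-g_-$, and its partials $\partial\phi/\partial x,\partial\phi/\partial t$ at fixed $(\xi,\mathfrak{p},\mathfrak{q})$ are functions of those variables only.

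Choosing the branch-point endpoint $\xi_*=\mathfrak{p}+i\sqrt{-\mathfrak{q}}$ of $\beta\cap\mathbb{C}_+$ in case \librational\ (or its analogue in case \rotational), where $R(\xi_*)=0$ and hence $g_+(\xi_*)=g_-(\xi_*)=g(\xi_*)$, the identity $\phi(\xi_*)=i\Phi$ reads $i\Phi=2iQ(\xi_*;x,t)+\overline L(\xi_*)-2g(\xi_*)$. Applying the chain rule with $\mathfrak{p}(x,t),\mathfrak{q}(x,t)$ constrained by $M=I=0$ yields
\[
i\,\partial\Phi/\partial x=2iE(\xi_*)-2g_x(\xi_*;\mathfrak{p},\mathfrak{q})+\mathcal R_x(\mathfrak{p},\mathfrak{q}),
\]
where $\mathcal R_x$ collects the chain-rule contributions from $\partial\mathfrak{p}/\partial x$ and $\partial\mathfrak{q}/\partial x$. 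The central and most delicate step is proving $\mathcal R_x\equiv 0$ on the constraint surface: I expect this to follow from a variational identity in which the $(\mathfrak{p},\mathfrak{q})$-gradients of $\phi(\xi_*)$, after contour manipulations using the jump relation \eqref{eq:wfdiffbeta} and the vanishing of $R$ at $\xi_*$, reduce to explicit linear combinations of $M$ and $I$. Equivalently, $i\Phi$ serves as a generating function whose stationarity in $(\mathfrak{p},\mathfrak{q})$ is cut out by the equations $M=0$ and $I=0$, the rigorous shadow of Whitham's averaged-Lagrangian principle. This cancellation is the main technical obstacle.

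Granted $\mathcal R_x=0$, what remains is to evaluate $2iE(\xi_*)-2g_x(\xi_*)$ in closed form. Using the antiderivative $iE(w)=(w-1)/(4\sqrt{-w})$, the divergent piece $iE(0)$ in $\int_0^{\xi_*}iE'(w)\,dw$ cancels against the corresponding divergence in $\int_0^{\xi_*}R(w)/(8\sqrt{\mathfrak{p}^2-\mathfrak{q}}\,w\sqrt{-w})\,dw$. Deforming the path $[0,\xi_*]$ onto half of $\beta$, the surviving integral is a classical elliptic integral on the genus-one curve with branch points $\{0,\infty\}\cup\{\text{roots of }R^2\}$; the substitution $u=\sqrt{-w}$ recasts it as a rational integral in $u$ against the square root of a quartic, which by standard Jacobi reduction evaluates to a multiple of $K(m_\librational)$ in case \librational\ (matching $\mathcal D$ of \eqref{eq:deltalibrational}) and of $K(m_\rotational)$ in case \rotational\ (matching \eqref{eq:deltarotational}). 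Combining with the explicit $2iE(\xi_*)$ contribution and the boundary term at $w=0$ produces the factor $[1-1/\sqrt{\mathfrak{p}^2-\mathfrak{q}}]$. The $t$-derivative formula follows from the identical argument with $E$ replaced by $D$: the sign flip of the $R/(w\sqrt{-w})$ term in $\tilde f_t$ converts $-1$ into $+1$ in the bracket. Finally, the expression for $n_\mathrm p$ and the inequality $|n_\mathrm p|<1$ follow by direct algebra from $\mathfrak{p}^2-\mathfrak{q}>0$.
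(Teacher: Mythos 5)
Your plan reduces the proposition to the claim $\mathcal{R}_x = 0$, which you flag as ``the main technical obstacle'' and leave to an appeal to Whitham's averaged-Lagrangian principle.  This is not a gap to be filled in later; it is the entire content of the statement, and it is not at all evident that the chain-rule contributions through $\partial\mathfrak{p}/\partial x$ and $\partial\mathfrak{q}/\partial x$ vanish, since $g(\xi;\mathfrak{p},\mathfrak{q},x,t)$ depends non-trivially on $\mathfrak{p}$ and $\mathfrak{q}$ through $R$, through $\gamma$, and through the explicit $1/\sqrt{\mathfrak{p}^2-\mathfrak{q}}$ factor in $f$.  The paper avoids the decomposition entirely.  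It introduces $X(w):=\frac{\pi}{\sqrt{-w}\,R(w)}\,\partial g/\partial x$ (the \emph{total} $x$-derivative, with all implicit dependence on $\mathfrak{p}(x,t)$, $\mathfrak{q}(x,t)$ already absorbed), observes that $X$ solves a scalar additive Riemann--Hilbert problem on $\beta$ with $X(w)=\mathcal{O}(w^{-2})$ as $w\to\infty$, and derives the jump data for $X$ by differentiating the identity $\phi\equiv\pm i\Phi$ on the \emph{interior} of $\beta$.  The Plemelj formula then expresses $X$ as a Cauchy integral that automatically has $\mathcal{O}(w^{-1})$ decay; imposing the additional $\mathcal{O}(w^{-2})$ decay forces a moment condition on the density, and this moment condition \emph{is} the formula \eqref{eq:kappapartials}.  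Because the analysis is carried out with the total derivative from the start, the terms you call $\mathcal{R}_x$ never need to be isolated, let alone shown to vanish.

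There is a second, milder worry.  Evaluating the constancy identity at the endpoint $\xi_*$ requires differentiating $g(\xi_*(x,t);x,t)$ in $(x,t)$ while $g$ is only H\"older-$\tfrac{1}{2}$ in $w$ at $\xi_*$ and $\xi_*$ itself moves.  You use $\phi'(\xi_*)=0$ to kill the $\xi_*$-motion at leading order, but controlling the error and establishing differentiability of the composite needs careful cancellation between $f(\xi_*)\,d\xi_*/dx$ (where $f(\xi_*)$ is finite only because $R(\xi_*)=0$ kills the otherwise-divergent part of $f$) and the parametric terms.  The Cauchy-integral route reads off moments at $w=0$ and $w=\infty$, where no such regularity issue arises.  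Your final reduction of the remaining integral to $K(m)$ via the substitution $u=\sqrt{-w}$ is essentially the same computation the paper uses to evaluate $\mathcal{D}$ in \eqref{eq:deltalibrational} and \eqref{eq:deltarotational}, so that part is sound; but without an actual proof that $\mathcal{R}_x\equiv 0$ (and analogously $\mathcal{R}_t\equiv 0$), the central identity \eqref{eq:kappapartials} is not established.
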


\begin{proof}
Consider the functions $X(w)$ and $T(w)$ defined in terms of $g(w)=g(w;x,t)$
as follows:
\begin{equation}
X(w):=\frac{\pi}{\sqrt{-w}R(w)}\frac{\partial g(w)}{\partial x},\quad
T(w):=\frac{\pi}{\sqrt{-w}R(w)}\frac{\partial g(w)}{\partial t}.
\label{eq:XTdefine}
\end{equation}
These functions are both analytic for $w\in\mathbb{C}\setminus \beta$.
Indeed, from \eqref{eq:wgsumzeroR} in
Proposition~\ref{prop:wgbasicproperties} we see that in spite of the
explicit presence of the square root $\sqrt{-w}$ in the definitions, $X(w)$ and
$T(w)$ may be considered to be analytic in a neighborhood of the
positive real axis, and then analyticity for $w\in\mathbb{C}\setminus\beta$
follows from elementary properties of the remaining factors.  It is also
a consequence of \eqref{eq:wgdecay} and the relation $g'(w)=f(w)$ with
$f(w)$ given by \eqref{eq:wgprimedef} that $g(w)=\bo((-w)^{-1/2})$ for large $w$,
and this implies that $X(w)=\bo(w^{-2})$ and $T(w)=\bo(w^{-2})$ as $w\to\infty$.

Differentiation (with respect to $x$ and $t$) of the identities
$\phi(\xi)\equiv \pm i\Phi$ for $\xi\in
\vec{\beta}\cap\mathbb{C}_\pm$ and $\phi(\xi)\equiv 0$ for $\xi\in
\vec{\beta}\cap\mathbb{R}$ using the definition
\eqref{eq:wthetaphidef} of $\phi$ in terms of $g(w)$ yields the
following jump conditions for $X(w)$ and $T(w)$ along $\beta$:
\begin{equation}
\begin{split}
X_+(\xi)-X_-(\xi)&=\frac{\pi}{\sqrt{-\xi}R_+(\xi)}\begin{cases}
\displaystyle 2iE(\xi)\mp i\frac{\partial \Phi}{\partial x},\quad & 
\xi\in \vec{\beta}\cap\mathbb{C}_\pm,\\
2iE(\xi),\quad & \xi\in \vec{\beta}\cap\mathbb{R},
\end{cases}\\
T_+(\xi)-T_-(\xi)&=\frac{\pi}{\sqrt{-\xi}R_+(\xi)}\begin{cases}
\displaystyle 2iD(\xi)\mp i\frac{\partial \Phi}{\partial t},\quad & 
\xi\in \vec{\beta}\cap\mathbb{C}_\pm,\\
2iD(\xi),\quad & \xi\in \vec{\beta}\cap\mathbb{R}.
\end{cases}
\label{eq:XTjumps}
\end{split}
\end{equation}
(Note that $\vec{\beta}\cap\mathbb{R}=\emptyset$ in case \librational.)

Since we know directly from their definitions that the functions 
$X(w)$ and $T(w)$ must be $\bo(w^{-2})$ for large $w$, they are
necessarily given by Cauchy integrals via the Plemelj formula in terms of
the jump data \eqref{eq:XTjumps}:
\begin{equation}
\begin{split}
X(w)&=\int_\beta\frac{E(\xi)\,d\xi}{\sqrt{-\xi}R_+(\xi)(\xi-w)} -
\frac{\partial\Phi}{\partial x}
\left[\frac{1}{2}\int_{\beta\cap\mathbb{C}_+}\frac{d\xi}
{\sqrt{-\xi}R_+(\xi)(\xi-w)} -
\frac{1}{2}\int_{\beta\cap\mathbb{C}_-}\frac{d\xi}{\sqrt{-\xi}R_+(\xi)(\xi-w)}
\right],\\
T(w)&=\int_\beta\frac{D(\xi)\,d\xi}{\sqrt{-\xi}R_+(\xi)(\xi-w)} -
\frac{\partial\Phi}{\partial t}
\left[\frac{1}{2}\int_{\beta\cap\mathbb{C}_+}\frac{d\xi}
{\sqrt{-\xi}R_+(\xi)(\xi-w)} -
\frac{1}{2}\int_{\beta\cap\mathbb{C}_-}\frac{d\xi}{\sqrt{-\xi}R_+(\xi)(\xi-w)}
\right].
\end{split}
\label{eq:XTformulas}
\end{equation}
But while these formulae indeed exhibit decay for large $w$, without
further conditions we will have only $X(w)=\bo(w^{-1})$ and
$T(w)=\bo(w^{-1})$ as $w\to\infty$.  Imposing on these formulae the
more rapid required rate of decay of $\bo(w^{-2})$ as $w\to\infty$ reveals conditions determining
the partial derivatives of
$\Phi$ with respect to $x$ and $t$:
\begin{equation}
\frac{\partial\Phi}{\partial x}=\frac{1}{\mathcal{D}}\int_\beta\frac{E(\xi)}
{\sqrt{-\xi}R_+(\xi)}\,d\xi\quad\text{and}\quad
\frac{\partial\Phi}{\partial t}=\frac{1}{\mathcal{D}}
\int_\beta\frac{D(\xi)}{\sqrt{-\xi}R_+(\xi)}\,d\xi,
\end{equation}
where
\begin{equation}
\mathcal{D}:=\frac{1}{2}\int_{\beta\cap\mathbb{C}_+}
\frac{d\xi}{\sqrt{-\xi}R_+(\xi)}-\frac{1}{2}\int_{\beta\cap\mathbb{C}_-}
\frac{d\xi}{\sqrt{-\xi}R_+(\xi)}.
\label{eq:Denominator}
\end{equation}
Now the fractions $E(w)/\sqrt{-w}$ and $D(w)/\sqrt{-w}$ are
simple rational functions of $w$:
\begin{equation}
\frac{E(w)}{\sqrt{-w}}=\frac{i}{4}\left(1-\frac{1}{w}\right)\quad\text{and}
\quad
\frac{D(w)}{\sqrt{-w}}=\frac{i}{4}\left(1+\frac{1}{w}\right),
\end{equation}
and by elementary contour deformations, 
\begin{equation}
\int_\beta\frac{1\pm \xi^{-1}}{R_+(\xi)}\,d\xi = \frac{1}{2}\oint_0
\frac{1\pm w^{-1}}{R(w)}\,dw -\frac{1}{2}\oint_\infty\frac{1\pm w^{-1}}{R(w)}\,dw,
\end{equation}
where the first integral is over a small positively-oriented circle
surrounding only $w=0$ and the second integral is over a large
positively-oriented circle outside of which $R$ is analytic.  (This
result holds regardless of whether the branch point configuration is
of type \librational\ or of type \rotational.)  Evaluating these integrals by
residues at $w=0$ and $w=\infty$ respectively yields
\begin{equation}
\int_\beta\frac{1\pm \xi^{-1}}{R_+(\xi)}\,d\xi = 
\pm \frac{i\pi}{R(0)}- i\pi.
\end{equation}
Therefore, using $R(0)=-\sqrt{\mathfrak{p}^2-\mathfrak{q}}$ establishes the formulae \eqref{eq:kappapartials}.

It remains to characterize the denominator $\mathcal{D}$ in cases \librational\ and \rotational.
In case \librational, elementary contour deformations show that
\begin{equation}
\mathcal{D}=\frac{1}{2}\int_{0}^{-\infty}\frac{dw}{\sqrt{-w}R(w)} = \frac{1}{2}\int_{-\infty}^0
\frac{dw}{\sqrt{-w((w-\mathfrak{p})^2-\mathfrak{q})}}.
\end{equation}
By the substitution $w\mapsto s$ given by
\begin{equation}
w=\sqrt{\mathfrak{p}^2-\mathfrak{q}}\frac{z-1}{z+1}\quad\text{followed by}\quad z=\pm\sqrt{1-s^2}
\label{eq:wsBsubstKm}
\end{equation}
and comparing with the definition \eqref{eq:ellipticKdef} of the complete elliptic integral of the first kind, we establish \eqref{eq:deltalibrational}.
On the other hand, in case \rotational, by simple contour deformations,
\begin{equation}
  \mathcal{D} = \int_{0}^{\mathfrak{p}+\sqrt{\mathfrak{q}}}\frac{dw}{\sqrt{-w}R(w)}=\int_{u+\sqrt{v}}^0\frac{dw}{\sqrt{-w((w-\mathfrak{p})^2-\mathfrak{q})}}.
\end{equation}
By the substitution $w\mapsto s$ given by
\begin{equation}
w=-\left[\frac{\sqrt{-\mathfrak{p}+\sqrt{\mathfrak{q}}}+\sqrt{-\mathfrak{p}-\sqrt{\mathfrak{q}}}}{2s}-\sqrt{\frac{(\sqrt{-\mathfrak{p}+\sqrt{\mathfrak{q}}}+\sqrt{-\mathfrak{p}-\sqrt{\mathfrak{q}}})^2}{4s^2}-\sqrt{\mathfrak{p}^2-\mathfrak{q}}}
\right]^2, 
\label{eq:wsKsubstKm}
\end{equation}
a bijection mapping the real path from $w=w_1=\mathfrak{p}+\sqrt{\mathfrak{q}}<0$ to $w=0$ onto the real path from $s=1$ to $s=0$,
we then obtain \eqref{eq:deltarotational} by comparing with \eqref{eq:ellipticKdef}.
\end{proof}

\subsection{Finding $g$ when $t=0$}
\begin{proposition}
  Suppose $t=0$, and that $\Delta=\emptyset$ for $x\ge 0$ while
  $\nabla=\emptyset$ for $x\le 0$ (for $x=0$ we may choose either
  $\Delta=\emptyset$ or $\nabla=\emptyset$).  Then the equations $M(\mathfrak{p},\mathfrak{q},x,t)=0$
and $I(\mathfrak{p},\mathfrak{q},x,t)=0$ are satisfied identically if 
\begin{equation}
\mathfrak{p}=\mathfrak{p}(x)=1-\frac{1}{2}G(x)^2,\quad x\in\mathbb{R}
\label{eq:utzero}
\end{equation}
and $\mathfrak{q}=\mathfrak{q}(x)=\mathfrak{p}(x)^2-1$.  
\label{prop:utzero}
\end{proposition}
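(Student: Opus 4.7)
With $\mathfrak{q}=\mathfrak{p}^2-1$ we have immediately $\sqrt{\mathfrak{p}^2-\mathfrak{q}}=1$, and the sign of $\mathfrak{q}$ correctly selects between the two relevant regimes: case \librational\ when $|G(x)|<2$ (i.e., $|x|>x_\mathrm{crit}$, giving $\mathfrak{q}<0$) and case \rotational\ when $|G(x)|>2$ (i.e., $|x|<x_\mathrm{crit}$, giving $\mathfrak{q}>0$). A short check verifies that the roots of $R(w;\mathfrak{p},\mathfrak{q})^2$ lie on $P_\infty$ as required: on $S^1$ in case \librational\ since $|\mathfrak{p}\pm i\sqrt{-\mathfrak{q}}|^2=\mathfrak{p}^2-\mathfrak{q}=1$, and as reciprocal negatives in $[\mathfrak{a},\mathfrak{b}]$ in case \rotational. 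Setting $t=0$ and using $\sqrt{\mathfrak{p}^2-\mathfrak{q}}=1$ in \eqref{eq:wM0Cdef} reduces the moment condition $M=0$ to the single identity
\[
\int_C\frac{\theta_0'(\xi)\sqrt{-\xi}\,d\xi}{R(\xi)} \;=\; \pi x,
\]
which the plan is to prove by reducing it to the Abel inversion formula of Proposition~\ref{prop:AbelInverse}.

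The engine of the reduction is a specific algebraic factorization. Under the change of variable $\lambda=E(\xi)=iv/4$, equivalently $v=v(\xi):=\sqrt{-\xi}+1/\sqrt{-\xi}$, one has $\xi+1/\xi=2-v^2$, and with $\mathfrak{p}=1-\tfrac{1}{2}G(x)^2$ this yields
\[
R(\xi)^2 \;=\; \xi^2-2\mathfrak{p}\xi+1 \;=\; \xi\bigl((\xi+1/\xi)-2\mathfrak{p}\bigr) \;=\; -\xi\bigl(v(\xi)^2-G(x)^2\bigr).
\]
Hence on each branch $\sqrt{-\xi}/R(\xi)=\pm(v^2-G(x)^2)^{-1/2}$, and combined with $\theta_0'(\xi)\,d\xi=\Psi'(\lambda)\,d\lambda=\varphi(v)\,dv$, where $\varphi(v):=d\Psi/dv$, the integrand becomes $\pm\varphi(v)\,dv/\sqrt{v^2-G(x)^2}$. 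Since $\theta_0=\Psi\circ E$ is holomorphic on a neighborhood of $P_\infty$, the integrand of the target identity is analytic on $\Omega\setminus(\mathbb{R}_+\cup\beta)$, so $C$ can be collapsed against the jumps of $\sqrt{-\xi}$ and $R$. The map $\xi\mapsto v$ is $2$-to-$1$ on $P_\infty$, which doubly covers $[0,-G(0)]$; after careful tracking of signs the contributions combine into
\[
\int_C\frac{\theta_0'(\xi)\sqrt{-\xi}\,d\xi}{R(\xi)} \;=\; -4\int_{-G(x)}^{-G(0)}\frac{\varphi(v)\,dv}{\sqrt{v^2-G(x)^2}},
\]
which by Proposition~\ref{prop:AbelInverse} with $w=G(x)$ equals $\pi G^{-1}(G(x))=\pi x$, establishing $M=0$.

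For the integral condition $I=0$ I will reuse the same factorization. Substituting the defining formula \eqref{eq:wgeneralGdef} for $H$ (at $t=0$ with $\sqrt{\mathfrak{p}^2-\mathfrak{q}}=1$) into $\int_{\beta\cap\mathbb{C}_+}R_+(\xi)H(\xi)\,d\xi$ and using $R_+(\xi)/\sqrt{-\xi}=\pm\sqrt{v(\xi)^2-G(x)^2}$ on $\beta_+$, the expression reduces, upon exchanging the order of integration (justified by Fubini), to a combination of Abel-type integrals whose real part vanishes in view of the identity already established for $M$. Equivalently, in case \librational\ one may argue that $I=0$ is the same as $g(\infty)=0$ (as in the final computation of the proof of Proposition~\ref{prop:wgbasicproperties}), which reduces via the jump relation \eqref{eq:wfdiffbeta} to the vanishing of a closely related Abel-inverted integral; in case \rotational, $g(\infty)=0$ holds automatically, and $I=0$ follows from the reality of $\phi$ on the real arcs of $\beta$ together with Schwartz symmetry. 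The principal technical difficulty throughout is the careful bookkeeping of signs and branches during the contour deformation — specifically, tracking the two sheets of the map $\xi\mapsto v$ across the self-intersection of $\beta\cup\mathbb{R}_+$ at $w=1$ and near the critical point of $E$ at $w=-1$.
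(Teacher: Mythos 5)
Your treatment of the moment condition $M=0$ is essentially the paper's argument. The factorization you use, $R(\xi)^2=-\xi(v(\xi)^2-G(x)^2)$ with $v(\xi)=\sqrt{-\xi}+1/\sqrt{-\xi}$, is exactly the paper's identity $R(w;\mathfrak{p},\mathfrak{p}^2-1)=\sqrt{-w}\,\hat{R}(E(w);\mathfrak{p})$ with $\hat{R}(\lambda;\mathfrak{p})^2=-16\lambda^2-2(1-\mathfrak{p})$, since $v=-4i\lambda$ and $2(1-\mathfrak{p})=G(x)^2$. The change of variable $\lambda=E(\xi)$, the collapse of $C$ to the imaginary $\lambda$-axis, and the invocation of Proposition~\ref{prop:AbelInverse} all match the paper. (You gloss over the sign $\sigma=\pm 1$ arising from the distinction between $\Delta=\emptyset$ and $\nabla=\emptyset$, but that is a bookkeeping matter you flag honestly.)

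The argument for $I=0$, however, has a genuine gap. Your first argument asserts that after substituting $H$ and exchanging orders of integration the result is ``a combination of Abel-type integrals whose real part vanishes in view of the identity already established for $M$.'' But the $M$-identity evaluates a specific contour integral; it says nothing by itself about the \emph{reality} of the integrand appearing in $I$. Before reality can be read off, one must first use $M=0$ to eliminate $x$ from $H$ and then crucially exploit the involution $\xi\mapsto 1/\xi$, which maps $C$ onto itself with orientation preserved. Averaging $\int_C (\xi-w)^{-1}\theta_0'(\xi)\,d\xi/\hat{R}(E(\xi);\mathfrak{p})$ with its image under this involution produces the key formula
\begin{equation}
R(w)H(w)=\frac{1}{\pi}\hat{R}(E(w);\mathfrak{p})\,E(w)\,\frac{dE}{dw}\int_{E(C)}\frac{\Psi'(\lambda)\,d\lambda}{(E(w)^2-\lambda^2)\,\hat{R}(\lambda;\mathfrak{p})},
\end{equation}
whose integral factor is manifestly real when $E(w)$ is imaginary. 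Only then is $R_+(\xi)H(\xi)\,d\xi$ seen to be purely imaginary on the arc $\beta\cap\mathbb{C}_+$ of the unit circle (where $E(\xi)$, $E'(\xi)\,d\xi$, and the boundary value of $\hat{R}$ are all imaginary), giving $I=\Re\{\dots\}=0$ directly. This symmetrization step is what your sketch is missing, and without it the reality of the integrand is not apparent.

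Your alternative argument for $I=0$ is also circular. In the paper's logic, $g(\infty)=0$ in case \librational\ and $\phi\equiv 0$ on $\beta\cap\mathbb{R}$ in case \rotational\ are \emph{consequences} of $I=0$ established in Proposition~\ref{prop:wgbasicproperties}, so neither can be used to prove $I=0$. To repair the proposal you should carry out the symmetrization with respect to $\xi\mapsto 1/\xi$ explicitly and then observe the parity of the resulting integrand on the unit-circle arc of $\beta$.
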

Note that the proof will show that the equation $M(\mathfrak{p},\mathfrak{p}^2-1,x,0)=0$
has \emph{no real solution} $\mathfrak{p}$ if $x<0$ and $\Delta=\emptyset$.  This
explains the need of introducing in general the set $\Delta\subset P_N$
and shows that scattering theory ``from the right'' is insufficient to capture
the semiclassical asymptotics for all $(x,t)$.
\begin{proof}
Consistently with the condition $\mathfrak{q}=\mathfrak{p}^2-1$ we choose the branch cuts
$\beta$ of $R(w;\mathfrak{p},\mathfrak{p}^2-1)$ to lie on the unit circle (and some intervals
of the negative real axis in case \rotational).  Then it follows that
\begin{equation}
R(w;\mathfrak{p},\mathfrak{p}^2-1)=\sqrt{-w}\hat{R}(E(w);\mathfrak{p}),
\label{eq:RRhat}
\end{equation}
where for $\mathfrak{p}<1$, $\hat{R}(\lambda;\mathfrak{p})^2=-16\lambda^2-2(1-\mathfrak{p})$ with
$\hat{R}(\lambda,\mathfrak{p})$ being analytic away from the branch cut in the
$\lambda$-plane lying along the imaginary axis between the two
imaginary roots of $\hat{R}(\lambda;\mathfrak{p})^2$, and with branch chosen so
that $\hat{R}(\lambda,\mathfrak{p})=4i\lambda + \bo(\lambda^{-1})$ as
$\lambda\to\infty$.  Therefore, with $\mathfrak{q}=\mathfrak{p}^2-1$ and $t=0$ the moment condition
$M=0$ can be written in the form
\begin{equation}
x=\frac{1}{\pi}\int_C
\frac{\theta_0'(\xi)\,d\xi}{\hat{R}(E(\xi);\mathfrak{p})}.
\end{equation}
Recalling \eqref{eq:theta0def}, this suggests taking $\lambda=E(\xi)$ as the variable of integration, yielding
\begin{equation}
x=\frac{2}{\pi}\int_{E(C)}\frac{\Psi'(\lambda)\,d\lambda}{\hat{R}(\lambda;\mathfrak{p})},
\label{eq:M0EC}
\end{equation}
where the factor of $2$ appears because the contours of integration
making up the scheme denoted $C$ contain pairs of distinct points
$(\xi,1/\xi)$ that have the same image in the $\lambda$-plane. Here
$E(C)$ denotes the images of these contours in the $\lambda$-plane
counted with the same multiplicities as in the definition of $C$; this
simply means that in \eqref{eq:M0EC} 
\begin{itemize}
\item
if $\Delta=\emptyset$ we are integrating over two contours
from $\lambda=-iG(0)/4$ on the positive imaginary axis to $\lambda=0$, with
the two contours lying on opposite sides of the branch cut of $\hat{R}(\lambda;\mathfrak{p})$, while
\item
if $\nabla=\emptyset$ we are integrating over the same two contours
as in the case that $\Delta=\emptyset$ but with the orientation reversed.
\end{itemize}
By deforming these contours to the imaginary axis taking into account the change
of sign of $\hat{R}(\lambda;\mathfrak{p})$ across its branch cut, and using the
substitution $\lambda=iv/4$ we therefore obtain
\begin{equation}
x=-\frac{4\sigma}{\pi}\int_{\sqrt{2(1-\mathfrak{p})}}^{-G(0)}\frac{\varphi(v)\,dv}{\sqrt{v^2-2(1-\mathfrak{p})}},\quad
\varphi(v):=\frac{d}{dv}\Psi(iv/4),
\end{equation}
where $\sigma=1$ for $\Delta=\emptyset$ and $\sigma=-1$ for
$\nabla=\emptyset$.  It then follows from Proposition~\ref{prop:AbelInverse}
that
\begin{equation}
x=\sigma G^{-1}(-\sqrt{2(1-\mathfrak{p})})
\end{equation}
which can be solved for $\mathfrak{p}=\mathfrak{p}(x)$ if either $x=0$ or $x$ has the same sign as
$\sigma$, yielding in all of these cases the formula \eqref{eq:utzero}.

If the condition $M(\mathfrak{p},\mathfrak{q},x,0)=0$ is used to eliminate the parameter $x$ from
$H(w;\mathfrak{p},\mathfrak{q},x,0)$, then subject also to the condition $\mathfrak{q}=\mathfrak{p}^2-1$ we obtain
\begin{equation}
R(w;\mathfrak{p},\mathfrak{p}^2-1)H(w;\mathfrak{p},\mathfrak{p}^2-1,x,0)=-\frac{1}{2\pi}\hat{R}(E(w);\mathfrak{p})
\int_C\left(\frac{1}{\xi-w}+\frac{1}{2w}\right)
\frac{\theta_0'(\xi)\,d\xi}{\hat{R}(E(\xi);\mathfrak{p})},
\end{equation}
where $\mathfrak{p}=\mathfrak{p}(x)$ is given by \eqref{eq:utzero}.  But since $C$ is mapped
onto itself, with orientation preserved, by the involution $\xi\mapsto
1/\xi$, we have
\begin{equation}
\int_C\frac{1}{\xi-w}
\frac{\theta_0'(\xi)\,d\xi}{\hat{R}(E(\xi);\mathfrak{p})} = \int_C\frac{1}{\xi^{-1}-w}
\frac{\theta_0'(\xi)\,d\xi}
{\hat{R}(E(\xi);\mathfrak{p})},
\end{equation}
so averaging these two formulae we obtain for $\mathfrak{p}=\mathfrak{p}(x)$ given by 
\eqref{eq:utzero} that
\begin{equation}
R(w;\mathfrak{p},\mathfrak{p}^2-1)H(w;\mathfrak{p},\mathfrak{p}^2-1,x,0)=
\frac{1}{2\pi}\hat{R}(E(w);\mathfrak{p})E(w)\frac{dE}{dw}
\int_C\frac{1}{E(w)^2-E(\xi)^2}
\frac{\theta_0'(\xi)\,d\xi}{\hat{R}(E(\xi);\mathfrak{p})}.
\end{equation}
With the substitution $\lambda=E(\xi)$ this becomes
\begin{equation}
R(w;\mathfrak{p},\mathfrak{p}^2-1)H(w;\mathfrak{p},\mathfrak{p}^2-1,x,0)=
\frac{1}{\pi}\hat{R}(E(w);\mathfrak{p})E(w)\frac{dE}{dw}
\int_{E(C)}\frac{1}{E(w)^2-\lambda^2}
\frac{\Psi'(\lambda)\,d\lambda}{\hat{R}(\lambda;\mathfrak{p})}.
\label{eq:RGexacttzero}
\end{equation}
Note that the integral factor is real if $E(w)$ is imaginary.
Now, in the definition \eqref{eq:wgeneralI0def} of $I(\mathfrak{p},\mathfrak{q},x,t)$, the
integration variable $\xi$ lies on the unit circle in the upper half-plane,
and therefore $E(\xi)$ is imaginary, as is $E'(\xi)\,d\xi$ and the 
boundary value taken by $\hat{R}(E(\xi);\mathfrak{p})$ on the branch cut.
It then follows immediately that the integral condition $I=0$ holds.
\end{proof}

Note that the relation $\mathfrak{q}=\mathfrak{p}^2-1$ is suggested by the fact that when $t=0$ 
the inverse-scattering problem reduces to that for the Zakharov-Shabat system \eqref{eq:ZS}
under the mapping $\lambda=E(w)$.  The condition $\mathfrak{q}=\mathfrak{p}^2-1$ maps the radical
$R(w;\mathfrak{p},\mathfrak{q})$ into the radical $\hat{R}(\lambda;\mathfrak{p})$ whose branching
points are known (for Klaus-Shaw potentials, see \cite{Miller02}) to lie on the imaginary axis in the $\lambda$-plane. 

\begin{proposition}
Suppose that $t=0$, and that $\Delta=\emptyset$ for $x\ge 0$ while
$\nabla=\emptyset$ for $x\le 0$ (for $x=0$ we may choose either
$\Delta=\emptyset$ or $\nabla=\emptyset$).  Assume also that $\mathfrak{q}=\mathfrak{q}(x)=\mathfrak{p}(x)^2-1$
where $\mathfrak{p}=\mathfrak{p}(x)$ is given by \eqref{eq:utzero}, and that the contour
$\beta\cup\gamma$ coincides with the union of the unit circle $|w|=1$
and the real interval $[\mathfrak{a},\mathfrak{b}]$.  Then for each $x\in\mathbb{R}$, 
an analytic function $g:\mathbb{C}\setminus(\beta\cup\mathbb{R}_+)\to\mathbb{C}$
is well-defined by Proposition~\ref{prop:wgbasicproperties}, with
associated functions $\theta:\vec{\beta}\cup\vec{\gamma}\to\mathbb{C}$
and $\phi:\vec{\beta}\cup\vec{\gamma} \to\mathbb{C}$ defined by
\eqref{eq:wthetaphidef}, and the following hold:
\begin{itemize}
\item $\Phi=0$.
\item $\phi(\xi)<0$ for $\xi\in\vec{\gamma}$ if $\Delta=\emptyset$ and
$\phi(\xi)>0$ for $\xi\in\vec{\gamma}$ if $\nabla=\emptyset$.  Moreover,
$\phi(\xi)$ is bounded away from zero for $\xi\in\vec{\gamma}$ except
in a neighborhood of either of the two roots of $R(\xi;\mathfrak{p},\mathfrak{q})^2$ (which
are endpoints of $\vec{\gamma}$).
\item $\theta(\xi)$ and $\theta_0(\xi)-\theta(\xi)$ are both real and 
nondecreasing (nonincreasing) with orientation 
if $\Delta=\emptyset$ (if $\nabla=\emptyset$).  Moreover,
for $\xi\in\vec{\beta}$, 
$\theta'(\xi)$ is bounded away from zero except in neighborhoods of 
the two roots of $R(\xi;\mathfrak{p},\mathfrak{q})^2$ (endpoints of $\vec{\beta}$) and,
in case \rotational, the point $\xi=-1$.
\item
$H(\xi)=H(\xi;\mathfrak{p}(x),\mathfrak{q}(x),x,0)$ is bounded away from zero for $\xi\in
\beta\cup\gamma$ except in a neighborhood of $\xi=-1$ where $H(\xi)$
has a simple zero.  
\end{itemize}
\label{prop:wsolntzeroproperties}
\end{proposition}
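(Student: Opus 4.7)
The plan is to leverage the explicit formula for $R(w)H(w)$ derived in the proof of Proposition~\ref{prop:utzero}, namely
\[
R(w;\mathfrak{p},\mathfrak{p}^2-1)\,H(w) = \frac{1}{\pi}\,\hat{R}(E(w);\mathfrak{p})\,E(w)\,E'(w)\,\tilde{J}(E(w)),
\]
where $\tilde{J}(\lambda):=\int_{E(C)}\Psi'(\lambda')/[(\lambda^2-\lambda'^2)\hat{R}(\lambda';\mathfrak{p})]\,d\lambda'$. Combined with the factorization $R(w;\mathfrak{p},\mathfrak{p}^2-1)=\sqrt{-w}\,\hat{R}(E(w);\mathfrak{p})$ identified in that same proof, this gives $H(w)=E(w)E'(w)\tilde{J}(E(w))/(\pi\sqrt{-w})$, reducing every claim in the proposition to the analysis of the three factors $\hat{R}$, $E\cdot E'$, and $\tilde{J}$ after the substitution $\lambda=E(w)$.

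I would first establish $\Phi=0$. By Proposition~\ref{prop:wgbasicproperties}, $\phi\equiv i\Phi$ on $\vec{\beta}\cap\mathbb{C}_+$; since $\beta$ meets $\mathbb{R}_+$ transversally at $\xi=1$, taking the limit of $\phi$ along $\vec{\beta}\cap\mathbb{C}_+$ as $\xi\to 1$ and using $E(1)=D(1)=0$ (so $Q(1;x,0)=0$), $\overline{L}(1)=0$ (from $L_++L_-\equiv 0$ on $\mathbb{R}_+$, no transition point being present in the setting of this proposition), and $g_+(1)+g_-(1)=0$ (by \eqref{eq:wgsumzeroR}), one obtains $\Phi=0$. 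The simple zero of $H$ at $\xi=-1$ then follows directly from the factored formula for $H$ together with the elementary identity $E'(w)=i(w+1)/(8(-w)^{3/2})$, which has a simple zero at $w=-1$ with $E''(-1)\neq 0$; the other factors are finite and nonvanishing there, using $E(-1)=i/2$ and that $i/2$ is distinct from the branch points $\pm i|G(x)|/4$ of $\hat{R}(\cdot;\mathfrak{p})$ in both cases \librational\ and \rotational.

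For the sign and monotonicity statements, the strategy is to parameterize $\beta\cup\gamma$ by $\lambda=E(\xi)=i\mu$, $\mu\in[0,|G(0)|/4]$, in which $\mu\in(0,|G(x)|/4)$ (inside the branch cut of $\hat{R}$) parameterizes $\vec{\beta}$ while $\mu\in(|G(x)|/4,|G(0)|/4)$ parameterizes $\vec{\gamma}$. From the relations $d\theta=iR_+H\,d\xi$ on $\vec{\beta}$ and $d\phi=RH\,d\xi$ on $\vec{\gamma}$, together with $R_+=\sqrt{-\xi}\,\hat{R}_+$, one obtains
\[
d\theta=\frac{i}{\pi}\hat{R}_+(\lambda;\mathfrak{p})\,\lambda\,\tilde{J}(\lambda)\,d\lambda \text{ on }\vec{\beta},\qquad d\phi=\frac{1}{\pi}\hat{R}(\lambda;\mathfrak{p})\,\lambda\,\tilde{J}(\lambda)\,d\lambda \text{ on }\vec{\gamma}.
\]
On $\vec{\gamma}$, the portion of $E(C)$ off the branch cut of $\hat{R}$ contributes zero by two-sided cancellation, so $\tilde{J}(i\mu)$ reduces to a real integral of the form $\int_0^{|G(x)|/4}\chi(\nu)/[(\nu^2-\mu^2)\sqrt{G(x)^2-16\nu^2}]\,d\nu$ with $\chi(\nu):=(d/d\nu)\Psi(i\nu)<0$ by Proposition~\ref{prop:theta0}; since $\nu<\mu$ throughout the integration range, the integrand has a definite sign, and so does $\tilde{J}(i\mu)$. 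Integrating $d\phi$ from the root endpoint $\mu=|G(x)|/4$ (at which $\phi=0$ by continuity with $\beta$, where $\phi\equiv i\Phi=0$) then yields the asserted sign of $\phi$ on $\vec{\gamma}$, with the $\pm$ determined by the orientation reversal between the cases $\Delta=\emptyset$ and $\nabla=\emptyset$; the uniform lower bound on $|\phi'|$ away from $\mu=|G(x)|/4$ is immediate from the uniform positivity of the integrand away from that endpoint.

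The main technical obstacle is the analysis of $\theta$ on $\vec{\beta}$, where $\tilde{J}(i\mu)$ becomes a principal-value integral singular at $\nu=\mu$. The plan is to establish the pointwise inequality $0<d\theta/d\theta_0<1$ on the interior of $\vec{\beta}$, using $d\theta_0=\Psi'(\lambda)\,d\lambda$: this single two-sided estimate simultaneously yields strict monotonicity of both $\theta$ and $\theta_0-\theta$ in the same direction, with the direction determined by orientation (and hence by which of $\Delta=\emptyset$ or $\nabla=\emptyset$ holds). The key step is a Sokhotski--Plemelj decomposition of $\tilde{J}$ whose singular part, after evaluating the residue at $\lambda'=\lambda$, reproduces exactly $\Psi'(\lambda)$ up to the prefactor appearing in $d\theta$; the remaining regular part then represents both $d\theta/d\theta_0$ and $1-d\theta/d\theta_0$ as real principal-value integrals of definite sign. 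Boundedness of $\theta'$ away from zero except near the roots $\mathfrak{p}\pm i\sqrt{1-\mathfrak{p}^2}$ (case \librational) or $\mathfrak{p}\pm\sqrt{\mathfrak{p}^2-1}$ (case \rotational) of $R^2$ (where $\hat{R}_+$ vanishes) and, in case \rotational, $\xi=-1$ (where $E'=0$) then follows from the factored structure. The technical bulk of the proof lies in careful bookkeeping of orientations and boundary-value signs of $\hat{R}$ and $E$, together with the principal-value subtraction, but the underlying arguments are standard in one-cut steepest-descent analysis of semiclassical soliton ensembles.
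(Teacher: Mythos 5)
Your argument for $\Phi=0$ has a gap, and it departs from the paper's actual route in a way that exposes the problem. The paper establishes $\Phi=0$ by writing $g(w)=\sqrt{-w}\,R(w)\,s(w)$, extracting from the $\mathcal{O}(w^{-2})$ decay of $s$ the Cauchy-integral formula $\Phi=\frac{1}{2i\mathcal{D}}\int_\beta\frac{2iQ(\xi)+\overline{L}(\xi)}{\sqrt{-\xi}R_+(\xi)}\,d\xi$, and then observing that at $t=0$ the involution $\xi\mapsto 1/\xi$ maps $\beta$ to itself preserving orientation while reversing the sign of the integrand. Your local limit argument at $\xi=1$ instead asserts $\overline{L}(1)=0$ and $g_+(1)+g_-(1)=0$, but these are boundary values across $\vec{\beta}$ taken along $\vec{\beta}\cap\mathbb{C}_+$, whereas the identities $L_++L_-=0$ and $g_++g_-=0$ are jump relations across $\mathbb{R}_+$. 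Near the intersection point $w=1$ there are four sectors; the $\mathbb{R}_+$ jump relates the sector above and below $\mathbb{R}_+$ on the same side of $\beta$, while the quantity $\phi(1)$ requires the two sectors on the same side of $\mathbb{R}_+$. Combining the $\mathbb{R}_+$ jump with Schwartz symmetry gives only that both $\overline{L}(1)$ and $g_+(1)+g_-(1)$ are purely imaginary, which recovers $\Phi\in\mathbb{R}$ (already known from Proposition~\ref{prop:wgbasicproperties}) but does not give $\Phi=0$.

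Your treatment of the monotonicity of $\theta$ and $\theta_0-\theta$ also diverges from the paper and is not carried far enough to be convincing. You propose a Sokhotski--Plemelj decomposition of $\tilde{J}$ and assert without justification that both $d\theta/d\theta_0$ and $1-d\theta/d\theta_0$ reduce to principal-value integrals "of definite sign"; PV integrals have no such sign in general, and you give no structural reason why they should here. The paper instead differentiates the relation $\theta'(\xi)=iR_+(\xi)H(\xi)$ with respect to the \emph{external} parameter $x$ at fixed $\xi$, uses the closed-form expression \eqref{eq:Xequals} for $X(\xi)$ (which is available precisely because $\partial\Phi/\partial x=0$, i.e., because $\Phi\equiv0$, the very fact your proof has not yet secured), and then integrates from the turning point $x_0(\xi)=\sigma G^{-1}(4iE(\xi))$ back to $x$ to obtain \eqref{eq:iRplusGband}, an explicit $y$-integral of manifestly one sign. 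This bypasses the singular PV analysis entirely, and it also explains why $\Phi=0$ must be proven first: it is an input, not a by-product. Similarly, the paper's uniform nondegeneracy of $H$ away from $\xi=\pm1$ comes from the explicit inequalities \eqref{eq:Gineqgap}--\eqref{eq:Gineqband} extracted from that same integral representation, and a separate residue argument handles $\xi\to1$; your factored-formula approach is consistent with this but omits both the uniformity and the $\xi=1$ endpoint. Your identification of the simple zero at $\xi=-1$ via $E'(-1)=0$ with the other factors nonvanishing does match the paper and is correct (for $x\neq\pm x_{\mathrm{crit}}$).
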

Recalling the definition \eqref{eq:xcrit} of $x_\mathrm{crit}$, we note that \eqref{eq:utzero} shows that for $t=0$, $g$ is in case \rotational\ for
$|x|<x_\mathrm{crit}$ and in case \librational\ for $|x|>x_\mathrm{crit}$, while
$|x|=x_\mathrm{crit}$ is the borderline case.
\begin{proof}
According to Proposition~\ref{prop:wgbasicproperties}, 
$g(w)\to 0$ as $w\to\infty$ and the boundary values taken
by $g$ on $\beta\cup\mathbb{R}_+$ are related as follows:
\begin{equation}
g_+(\xi)+g_-(\xi) = \begin{cases} 0,\quad & \xi\in\mathbb{R}_+,\\
2iQ(\xi)+\overline{L}(\xi),\quad &\xi\in\beta\cap\mathbb{R},\\
2iQ(\xi)+\overline{L}(\xi)\mp i\Phi,\quad &\xi\in\beta\cap\mathbb{C}_\pm.
\end{cases}
\label{eq:gplusplusgminus}
\end{equation}
Writing $g(w)=\sqrt{-w}R(w) s (w)$, we see that $ s $ is a function
analytic for $w\in\mathbb{C}\setminus\beta$ satisfying $ s (w)=o(w^{-3/2})$
as $w\to\infty$ (and therefore since $\beta$ is bounded, $ s (w)=\bo(w^{-2})$
as $w\to\infty$).  Moreover, the differences of boundary values of $ s $
on $\beta$ are now determined from \eqref{eq:gplusplusgminus}:
\begin{equation}
 s _+(\xi)- s _-(\xi) = \frac{1}{\sqrt{-\xi}R_+(\xi)}\begin{cases} 
2iQ(\xi)+\overline{L}(\xi),\quad &\xi\in\beta\cap\mathbb{R},\\
2iQ(\xi)+\overline{L}(\xi)\mp i\Phi,\quad &\xi\in\beta\cap\mathbb{C}_\pm.
\end{cases}
\end{equation}
It follows that $ s $ is necessarily given by a Cauchy integral:
\begin{equation}
 s (w)=\frac{1}{2\pi i}\int_\beta\frac{2iQ(\xi)+\overline{L}(\xi)}{\sqrt{-\xi}R_+(\xi)}\frac{d\xi}{\xi-w} - \frac{\Phi}{2\pi}
\left[\int_{\beta\cap\mathbb{C}_+}\frac{1}{\sqrt{-\xi}R_+(\xi)}\frac{d\xi}{\xi-w} - \int_{\beta\cap\mathbb{C}_-}\frac{1}{\sqrt{-\xi}R_+(\xi)}\frac{d\xi}{\xi-w}\right].
\end{equation}
Now this formula provides apparent decay at the rate $ s (w)=\bo(w^{-1})$
as $w\to\infty$, so the true faster rate of decay $ s (w)=\bo(w^{-2})$ implies
that
\begin{equation}
\Phi = \frac{1}{2i\mathcal{D}}\int_\beta\frac{2iQ(\xi)+\overline{L}(\xi)}{\sqrt{-\xi}R_+(\xi)}\,d\xi,
\end{equation}
where $\mathcal{D}$ is given by \eqref{eq:deltalibrational} in case \librational\ and by \eqref{eq:deltarotational}
in case \rotational, and is obviously nonzero in both cases.
But if
$t=0$ and if $\mathfrak{p}=\mathfrak{p}(x)$ and $\mathfrak{q}=\mathfrak{q}(x)$ while the contour $\beta$ is as
specified in the hypotheses, then it is easy to check that $\beta$ is
mapped onto itself preserving orientation by the involution
$\xi\to\xi^{-1}$, while the integrand changes sign under this
involution.  This proves that $\Phi=0$ for all
$x$ at $t=0$.

Now Proposition~\ref{prop:wgbasicproperties}
also asserts that 
\begin{equation}
R(\xi)H(\xi) = \phi'(\xi) = 2iQ'(\xi) + \overline{L}'(\xi) 
-2g'(\xi),\quad \xi\in\vec{\gamma},
\end{equation}
where $R(\xi)=R(\xi;\mathfrak{p}(x),\mathfrak{q}(x))$.  Holding $\xi\in\vec{\gamma}$ fixed,
we differentiate with respect to $x$:
\begin{equation}
\frac{\partial}{\partial x}R(\xi)H(\xi) = 2iE'(\xi)-2\frac{\partial^2 g}{\partial\xi\partial x},\quad \xi\in\vec{\gamma}.
\end{equation}
Recalling the function $X(w)$ defined in the proof of 
Proposition~\ref{prop:phasevelocity} 
by \eqref{eq:XTdefine} and given in
explicit form by \eqref{eq:XTformulas}, we have
\begin{equation}
\frac{\partial}{\partial x}R(\xi)H(\xi)=2iE'(\xi)-
\frac{2}{\pi}\frac{\partial}{\partial \xi}\left[\sqrt{-\xi}R(\xi)X(\xi)\right],
\quad \xi\in\vec{\gamma}.
\end{equation}
But, $\partial \Phi/\partial x=0$, so we may evaluate $X(\xi)$ in
closed form using \eqref{eq:XTformulas} and simple contour
deformations:
\begin{equation}
X(\xi)=\int_\beta\frac{E(\zeta)\,d\zeta}{\sqrt{-\zeta}R_+(\zeta)(\zeta-\xi)}=
\frac{\pi}{4\xi R(\xi)}\left[1-\xi + R(\xi)\right],
\label{eq:Xequals}
\end{equation}
where we have used the fact that $\mathfrak{p}^2-\mathfrak{q}=1$ implies $R(0)=-1$.  Therefore,
recalling the definition \eqref{eq:DE} of $E(w)$ we find simply that
\begin{equation}
\frac{\partial}{\partial x}R(\xi)H(\xi)=\frac{1}{2}\frac{\partial}{\partial\xi}
\left[\frac{R(\xi)}{\sqrt{-\xi}}\right],
\quad \xi\in\vec{\gamma}.
\end{equation}
Using \eqref{eq:RRhat} and the identity
\begin{equation}
\frac{\partial\hat{R}}{\partial\lambda}(\lambda;\mathfrak{p}(x))=-\frac{16\lambda}{\mathfrak{p}'(x)}
\frac{\partial\hat{R}}{\partial x}(\lambda;\mathfrak{p}(x))
\end{equation}
then yields
\begin{equation}
\frac{\partial}{\partial x}R(\xi)H(\xi)=-\frac{8E(\xi)E'(\xi)}{\mathfrak{p}'(x)}
\frac{\partial \hat{R}}{\partial x}(E(\xi);\mathfrak{p}(x)),\quad \xi\in\vec{\gamma}.
\end{equation}
Now with $\sigma=1$ for $\Delta=\emptyset$ and $\sigma=-1$ for
$\nabla=\emptyset$ we integrate from $x_0(\xi):=\sigma
G^{-1}(4iE(\xi))$ to $x$:
\begin{equation}
R(\xi)H(\xi)=-8E(\xi)E'(\xi)\int_{\sigma G^{-1}(4iE(\xi))}^x
\frac{\partial\hat{R}}{\partial y}(E(\xi);\mathfrak{p}(y))\frac{dy}{\mathfrak{p}'(y)},\quad \xi\in\vec{\gamma},
\label{eq:RGgap}
\end{equation}
where we observe that the contribution from the lower limit of integration
vanishes because by definition (see \eqref{eq:wgeneralGdef}) 
$H(\xi;\mathfrak{p}(x_0),\mathfrak{q}(x_0),x_0,0)$ is finite and $R(\xi;\mathfrak{p}(x_0),\mathfrak{q}(x_0))=0$.  If
$w_k$ is either of the two roots of $R(w;\mathfrak{p}(x),\mathfrak{q}(x))^2$, then we have
$\phi(w_k)=\pm i\Phi=0$, so
\begin{equation}
\phi(\xi)=\int_{w_k}^{\xi}R(\zeta)H(\zeta)\,d\zeta = 
-8\int_{E(w_k)}^{E(\xi)}\lambda\int_{\sigma G^{-1}(4i\lambda)}^x
\frac{\partial\hat{R}}{\partial y}(\lambda;\mathfrak{p}(y))\frac{dy}{\mathfrak{p}'(y)}\,
d\lambda,\quad \xi\in\vec{\gamma}
\end{equation}
where we have made the substitution $\lambda=E(\zeta)$.  So, if
$\Delta=\emptyset$ (so $x\ge 0$) then $x\ge \sigma G^{-1}(4i\lambda)$
so $dy>0$ and $\mathfrak{p}'(y)\ge 0$ while $\partial
\hat{R}(\lambda;\mathfrak{p}(y))/\partial y\le 0$ and $\lambda\,d\lambda<0$,
yielding $\phi(\xi)<0$ for $\xi\in\vec{\gamma}$.  On the other
hand, if $\nabla=\emptyset$ (so $x\le 0$) then $x\le \sigma
G^{-1}(4i\lambda)$ so $dy<0$ and $\mathfrak{p}'(y)\le 0$ while $\partial
\hat{R}(\lambda;\mathfrak{p}(y))/\partial y\ge 0$ and $\lambda\,d\lambda<0$,
yielding $\phi(\xi)>0$ for $\xi\in\vec{\gamma}$.
In both cases we easily obtain from \eqref{eq:RGgap} the inequality
\begin{equation}
|H(\xi)|=\frac{8|E(\xi)||E'(\xi)|}{|\xi|^{1/2}|\hat{R}(E(\xi);\mathfrak{p}(x))|}
\left|\int_{\sigma G^{-1}(4iE(\xi))}^x
\frac{\partial \hat{R}}{\partial y}(E(\xi);\mathfrak{p}(y))\frac{dy}{\mathfrak{p}'(y)}\right|
\ge \frac{8|E(\xi)||E'(\xi)|}{|\xi|^{1/2}\sup_{y\in\mathbb{R}}|G(y)G'(y)|},
\quad \xi\in\vec{\gamma}.
\label{eq:Gineqgap}
\end{equation}

Proposition~\ref{prop:wgbasicproperties} asserts further that
\begin{equation}
iR_+(\xi)H(\xi)=\theta'(\xi)=-i\left[g'_+(\xi)
-g'_-(\xi)\right],\quad \xi\in\vec{\beta}.
\end{equation}
Differentiating with respect to $x$ for $\xi\in\vec{\beta}$ fixed gives
\begin{equation}
\frac{\partial}{\partial x}iR_+(\xi)H(\xi)=-i\left[
\frac{\partial^2 g_+}{\partial\xi\partial x}-\frac{\partial^2 g_-}
{\partial\xi\partial x}\right]=
-\frac{i}{\pi}\frac{\partial}{\partial\xi}\left[
\sqrt{-\xi}R_+(\xi)(X_+(\xi)+X_-(\xi))\right],\quad \xi\in\vec{\beta}.
\end{equation}
Substituting from the explicit formula \eqref{eq:Xequals} gives
\begin{equation}
\frac{\partial}{\partial x}iR_+(\xi)H(\xi)=\frac{i}{2}\frac{\partial}
{\partial\xi}\left[\frac{R_+(\xi)}{\sqrt{-\xi}}\right],\quad\xi\in\vec{\beta},
\end{equation}
which can be equivalently written in the form
\begin{equation}
\frac{\partial}{\partial x}iR_+(\xi)H(\xi)=-i\frac{8E(\xi)E'(\xi)}{\mathfrak{p}'(x)}
\frac{\partial\hat{R}_+}{\partial x}(E(\xi);\mathfrak{p}(x)),\quad \xi\in\vec{\beta},
\end{equation}
where $\hat{R}_+(E(\xi);\mathfrak{p}(x))$ refers to the boundary value from the
left as the (vertical) branch cut is traversed by $E(\xi)$ when $\xi$
moves along an oriented arc of $\vec{\beta}$.  Integrating from
$x'=\sigma G^{-1}(4iE(\xi))$ to $x'=x$ then gives
\begin{equation}
\frac{d\theta}{d\xi}(\xi)=iR_+(\xi)H(\xi)=-8iE(\xi)E'(\xi)\int_{\sigma G^{-1}(4iE(\xi))}^x
\frac{\partial\hat{R}_+}{\partial y}(E(\xi);\mathfrak{p}(y))\frac{dy}{\mathfrak{p}'(y)},
\quad \xi\in\vec{\beta}.
\label{eq:iRplusGband}
\end{equation}
Of course $d\theta/d\xi=0$ for $\xi\in\vec{\gamma}$.  This shows that
$d\theta/dv$ is a monotone function of $x$, where $v=-4iE(\xi)$ is a real
parameter for $\beta\cup\gamma$.  The extreme value is attained at $x=0$,
at which point $\gamma$ vanishes and according to 
\eqref{eq:WKBphaserewrite}, for each $\xi\in\vec{\beta}$ the
right-hand side of \eqref{eq:iRplusGband} becomes $\theta_0'(\xi)$.
This proves the desired monotonicity of $\theta_0(\xi)-\theta(\xi)$.

Now $\theta(w_k)=0$ when $w_k$ denotes either of the two roots
of $R(w;\mathfrak{p}(x),\mathfrak{q}(x))$, so
\begin{equation}
\theta(\xi)=i\int_{w_k}^{\xi}R_+(\zeta)H(\zeta)\,d\zeta=
-8i\int_{E(w_k)}^{E(\xi)}\lambda\int_{\sigma G^{-1}(4i\lambda)}^x
\frac{\partial\hat{R}_+}{\partial y}(\lambda;\mathfrak{p}(y))\frac{dy}{\mathfrak{p}'(y)}
\,d\lambda,\quad\xi\in\vec{\beta}.
\end{equation}
From this formula it follows that $\theta(\xi)> 0$ for $\xi\in\vec{\beta}$
regardless of whether $\Delta=\emptyset$ or $\nabla=\emptyset$.  Also,
from \eqref{eq:iRplusGband} we have the inequality
\begin{equation}
|H(\xi)|=\frac{8|E(\xi)||E'(\xi)|}{|\xi|^{1/2}|\hat{R}(E(\xi);\mathfrak{p}(x))|}
\left|\int_{\sigma G^{-1}(4iE(\xi))}^x\frac{\partial\hat{R}_+}{\partial y}
(E(\xi);\mathfrak{p}(y))\frac{dy}{\mathfrak{p}'(y)}\right|\ge 
\frac{8|E(\xi)||E'(\xi)|}{|\xi|^{1/2}\sup_{y\in\mathbb{R}}|G(y)G'(y)|},
\quad \xi\in\vec{\beta}.
\label{eq:Gineqband}
\end{equation}

Combining \eqref{eq:Gineqgap} and \eqref{eq:Gineqband} shows that $H$
is bounded away from zero along $\beta\cup\gamma$, uniformly with
respect to $x$, except possibly in neighborhoods of points
$\xi\in\beta\cup\gamma$ at which either $E(\xi)=0$ (corresponding to
$\xi=1$) or $E'(\xi)=0$ (corresponding to $\xi=-1$).  While $H(w)$
indeed has a simple zero at $w=-1$ as is clear from
\eqref{eq:RGexacttzero}, there is no zero at $w=1$ as we will now
show.  Suppose that $x\in\mathbb{R}$ is fixed, and
$\xi\in\vec{\beta}$.  Then by collapsing the contour $E(C)$ to the
imaginary axis and extracting a residue at $\lambda=E(\xi)$,
\eqref{eq:RGexacttzero} shows that
\begin{equation}
H(\xi;\mathfrak{p},\mathfrak{p}^2-1,x,0)=\frac{E'(\xi)\Psi'(E(\xi))}{i\sqrt{-\xi}
\hat{R}_+(E(\xi);\mathfrak{p})}
-\frac{2 E(\xi)}{\pi\sqrt{-\xi}}\frac{dE}{d\xi}(\xi)
\int_{E(\gamma)}\frac{1}{E(\xi)^2-\lambda^2}
\frac{\Psi'(\lambda)\,d\lambda}{\hat{R}(\lambda;\mathfrak{p})},\quad\xi\in\vec{\beta}.
\end{equation}
Taking the limit along $\beta$ of $\xi\to 1$ (either from the upper or
lower half-plane) shows that $H$ has nonzero limiting values
contributed by the first (residue) term in the above formula.

This completes the proof of the assertion regarding the function
$H$.  The assertions regarding the functions $\phi$ and $\theta$
then follow from this, the formulae $\phi'(\xi)=R(\xi)H(\xi)$
for $\xi\in\vec{\gamma}$ and $\theta'(\xi)=iR_+(\xi)H(\xi)$
for $\xi\in\vec{\beta}$, and the inequalities $\theta(\xi)>0$
for $\xi\in\vec{\beta}$, $\phi(\xi)<0$ for $\xi\in\vec{\gamma}$
if $\Delta=\emptyset$, and $\phi(\xi)>0$ or $\xi\in\vec{\gamma}$
if $\nabla=\emptyset$.
\end{proof}

\subsection{Continuation of $g$ to nonzero $t$}
We begin by establishing some differential identities.  Here we are viewing $M$ and $I$
as functions of the roots of $R(w;\mathfrak{p},\mathfrak{q})$ rather than as functions of $\mathfrak{p}$ and $\mathfrak{q}$ themselves.
\begin{proposition}
  Let $w_k$, $k=0,1$, denote either of the two roots of the quadratic
  $R(w;\mathfrak{p},\mathfrak{q})^2$.  Then
\begin{equation}
\frac{\partial M}{\partial w_k}=2\sqrt{-w_k}H(w_k)
\label{eq:M0partials}
\end{equation}
and
\begin{equation}
\frac{\partial I}{\partial w_k}=-\frac{1}{4}\sqrt{-w_k}H(w_k)
\left[\int_{\beta\cap\mathbb{C}_+}\frac{R_+(\xi)\,d\xi}{\sqrt{-\xi}(\xi-w_k)}
+\int_{\beta\cap\mathbb{C}_-}\frac{R_-(\xi)\,d\xi}{\sqrt{-\xi}(\xi-w_k)}
\right],
\label{eq:I0partials}
\end{equation}
where in each case the partial derivative with respect to $w_k$ is
calculated holding the other root fixed, that is, by the chain rule in
which $u$ and $v$ are expressed in terms of the two roots.
\label{prop:partials}
\end{proposition}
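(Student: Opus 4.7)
The plan is a direct computation exploiting a single algebraic identity for how $R(w;\mathfrak{p},\mathfrak{q})$ and its reciprocal depend on a band endpoint. Writing $R(w)^2=(w-w_0)(w-w_1)$, with $\mathfrak{p}=(w_0+w_1)/2$ and $\mathfrak{p}^2-\mathfrak{q}=w_0w_1$, holding $w_j$ ($j\ne k$) fixed gives
\begin{equation}
\frac{\partial R(w)}{\partial w_k}=-\frac{R(w)}{2(w-w_k)},\qquad \frac{\partial}{\partial w_k}\!\left[\frac{1}{R(w)}\right]=\frac{1}{2R(w)(w-w_k)},
\label{eq:Rderivtrick}
\end{equation}
where the second identity uses $(\xi-w_j)/R(\xi)^2=1/(\xi-w_k)$. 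The same chain rule yields $\partial\sqrt{\mathfrak{p}^2-\mathfrak{q}}/\partial w_k=w_j/(2\sqrt{\mathfrak{p}^2-\mathfrak{q}})$, equivalent to $1/(2w_k\sqrt{\mathfrak{p}^2-\mathfrak{q}}/(\mathfrak{p}^2-\mathfrak{q}))$ after using $w_kw_j=\mathfrak{p}^2-\mathfrak{q}$.

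For \eqref{eq:M0partials}, I will differentiate the definition \eqref{eq:wM0Cdef} of $M$ term-by-term using \eqref{eq:Rderivtrick} (the contour $C$ is independent of $w_k$). This produces two pieces proportional to $(x-t)$ and one Cauchy-type integral $\frac{1}{\pi}\int_C\theta_0'(\xi)\sqrt{-\xi}\,d\xi/[R(\xi)(\xi-w_k)]$. The latter is exactly $-2\sqrt{-w_k}H(w_k)-(x-t)/(2w_k\sqrt{\mathfrak{p}^2-\mathfrak{q}})$ by the definition \eqref{eq:wgeneralGdef} of $H$ evaluated at $w=w_k$. The $(x-t)$-proportional contributions then cancel exactly thanks to the identity $w_j/(\mathfrak{p}^2-\mathfrak{q})^{3/2}=1/(w_k\sqrt{\mathfrak{p}^2-\mathfrak{q}})$, leaving precisely $2\sqrt{-w_k}H(w_k)$.

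For \eqref{eq:I0partials}, the key calculation is $\partial(R(\xi)H(\xi))/\partial w_k$. Combining \eqref{eq:Rderivtrick} with a computation of $\partial H(\xi)/\partial w_k$ performed by the partial fraction $1/[(\zeta-w_k)(\zeta-\xi)]=(w_k-\xi)^{-1}[1/(\zeta-w_k)-1/(\zeta-\xi)]$ inside the integrand of $H$, and then invoking the definition of $H$ at $w_k$ and at $\xi$, I expect the $(x-t)$ contributions to again cancel and arrive at the remarkable simplification
\begin{equation}
\frac{\partial (R(\xi)H(\xi))}{\partial w_k}=-\frac{\sqrt{-w_k}\,H(w_k)}{2}\cdot\frac{R(\xi)}{\sqrt{-\xi}\,(\xi-w_k)}.
\label{eq:mainident}
\end{equation}
The factorization in \eqref{eq:mainident} is the heart of the proposition: it is independent of $\xi$ apart from an explicit elementary factor, so differentiating \eqref{eq:wgeneralI0def} under the integral sign yields \eqref{eq:I0partials} directly. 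The endpoint term from the $w_k$-dependence of the contour $\beta$ drops out because $R_\pm(w_k)=0$.

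The main obstacle I anticipate is the algebraic bookkeeping that produces the cancellations: in both calculations one starts with three apparently unrelated terms (an explicit Cauchy integral, a $\sqrt{\mathfrak{p}^2-\mathfrak{q}}$-derivative, and, for $I$, a derivative of the $\xi$-independent $R$-factor) and must show they collapse to a single multiple of $H(w_k)$. Once the identity $w_kw_j=\mathfrak{p}^2-\mathfrak{q}$ and the partial-fraction trick are recognized, the rest is a routine, if slightly tedious, verification.
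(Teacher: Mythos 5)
Your proposal is correct and follows essentially the same route as the paper: both hinge on the differentiation rule $\partial R/\partial w_k=-R(w)/(2(w-w_k))$ (holding the other root fixed), differentiation under the integral sign, re-identification of the resulting Cauchy integral with $H(w_k)$ via its definition \eqref{eq:wgeneralGdef}, and the factorization $\partial(R(\xi)H(\xi))/\partial w_k=-\tfrac{1}{2}(\sqrt{-w_k}/\sqrt{-\xi})R(\xi)H(w_k)/(\xi-w_k)$, which you arrive at by the same partial-fraction manipulation implicit in the paper's terse ``In a similar way, one shows that\ldots'' Your observation that the endpoint contribution vanishes because $R(w_k)=0$ is the correct justification for freely differentiating \eqref{eq:wgeneralI0def} under the integral sign despite the $w_k$-dependence of $\beta$, a point the paper passes over silently.
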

\begin{proof}
Writing $R(w)^2=(w-w_0)(w-w_1)$ it follows easily that 
\begin{equation}
\frac{\partial R}{\partial w_k}=-\frac{1}{2}\frac{R(w)}{w-w_k}
\label{eq:partialR}
\end{equation}
where both $w$ and the other root are held fixed.  The identity
\eqref{eq:M0partials} follows by substituting
$\sqrt{\mathfrak{p}^2-\mathfrak{q}}=\sqrt{-w_0}\sqrt{-w_1}$ into \eqref{eq:wM0Cdef},
differentiating under the integral sign using \eqref{eq:partialR} (the
contours of $C$ are independent of $w_k$), and comparing with the
definition \eqref{eq:wgeneralGdef} of $H$.  In a similar way, one shows
that 
\begin{equation}
\frac{\partial}{\partial w_k}R(\xi)H(\xi)=-\frac{1}{2}\frac{\sqrt{-w_k}}{\sqrt{-\xi}}
\frac{R(\xi)}{\xi-w_k}H(w_k),
\end{equation}
and using this in \eqref{eq:wgeneralI0def} proves \eqref{eq:I0partials}.
\end{proof}
\begin{proposition}
Let $w_0$ and $w_1$ denote the two roots of $R(w)^2$.  The Jacobian
\begin{equation}
\mathscr{J}(w_0,w_1):=\det\begin{bmatrix}
\displaystyle\frac{\partial M}{\partial w_0} &
\displaystyle\frac{\partial M}{\partial w_1}\\
\\
\displaystyle\frac{\partial I}{\partial w_0} &
\displaystyle\frac{\partial I}{\partial w_1}\end{bmatrix}
\label{eq:wJacdef}
\end{equation}
of the map $(w_0,w_1)\mapsto (M,I)$ is 
\begin{equation}
\mathscr{J}(w_0,w_1):=-\mathcal{D} \sqrt{-w_0}\sqrt{-w_1}H(w_0)H(w_1)(w_1-w_0)
\label{eq:wJacformula}
\end{equation}
where $H$ is defined by \eqref{eq:wgeneralGdef} and $\mathcal{D}$ is
defined by \eqref{eq:deltalibrational} in case \librational\ and by \eqref{eq:deltarotational}
in case \rotational.
\label{prop:wJacobian}
\end{proposition}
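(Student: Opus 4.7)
The proof is essentially a direct calculation assembling the formulas from Proposition~\ref{prop:partials} together with the branch relation $R_+(\xi)=-R_-(\xi)$ on $\beta$. The plan is as follows.

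First I would substitute the partial derivative formulas \eqref{eq:M0partials} and \eqref{eq:I0partials} into the definition \eqref{eq:wJacdef} and factor out the common $\sqrt{-w_0}\sqrt{-w_1}H(w_0)H(w_1)$. Specifically, expanding $\mathscr{J}=(\partial_{w_0}M)(\partial_{w_1}I)-(\partial_{w_1}M)(\partial_{w_0}I)$ and pulling out the common factor yields
\begin{equation}
\mathscr{J}(w_0,w_1)=-\tfrac{1}{2}\sqrt{-w_0}\sqrt{-w_1}H(w_0)H(w_1)\left\{\int_{\beta\cap\mathbb{C}_+}R_+(\xi)\Delta(\xi)\frac{d\xi}{\sqrt{-\xi}}+\int_{\beta\cap\mathbb{C}_-}R_-(\xi)\Delta(\xi)\frac{d\xi}{\sqrt{-\xi}}\right\},
\end{equation}
where $\Delta(\xi):=(\xi-w_1)^{-1}-(\xi-w_0)^{-1}=(w_1-w_0)/[(\xi-w_0)(\xi-w_1)]=(w_1-w_0)/R(\xi)^2$ after using $R(w)^2=(w-w_0)(w-w_1)$.

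Second, since $R(w)^2$ is entire and $\beta$ is the branch cut of $R$, the boundary values satisfy $R_\pm(\xi)^2=R(\xi)^2$ on $\beta$, so $R_\pm(\xi)\Delta(\xi)=(w_1-w_0)/R_\pm(\xi)$. Pulling out $w_1-w_0$ gives
\begin{equation}
\mathscr{J}(w_0,w_1)=-\tfrac{1}{2}(w_1-w_0)\sqrt{-w_0}\sqrt{-w_1}H(w_0)H(w_1)\left\{\int_{\beta\cap\mathbb{C}_+}\frac{d\xi}{\sqrt{-\xi}\,R_+(\xi)}+\int_{\beta\cap\mathbb{C}_-}\frac{d\xi}{\sqrt{-\xi}\,R_-(\xi)}\right\}.
\end{equation}

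Third, I would use the branch relation $R_-(\xi)=-R_+(\xi)$ on $\beta$ to replace $R_-$ by $-R_+$ in the second integral, so that the bracketed quantity becomes exactly $2\mathcal{D}$, where $\mathcal{D}$ is the denominator in \eqref{eq:Denominator} (which was already identified with \eqref{eq:deltalibrational} in case \librational\ and \eqref{eq:deltarotational} in case \rotational\ during the proof of Proposition~\ref{prop:phasevelocity}). Collecting the factors produces
\begin{equation}
\mathscr{J}(w_0,w_1)=-\mathcal{D}\sqrt{-w_0}\sqrt{-w_1}H(w_0)H(w_1)(w_1-w_0),
\end{equation}
which is \eqref{eq:wJacformula}.

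There is no serious obstacle: the only point that requires mild care is keeping the orientation of $\beta\cap\mathbb{C}_\pm$ and the identification of the $R_\pm$ boundary values consistent with how $\mathcal{D}$ was defined in \eqref{eq:Denominator}. Because the contours $\beta$, the regions $\Omega_\pm^{\nabla,\Delta}$, and the branch $R(w)=w+\mathcal{O}(1)$ at infinity have been fixed globally, this bookkeeping is automatic and both cases \librational\ and \rotational\ are handled by the same calculation.
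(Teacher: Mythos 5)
Your proposal is correct and takes essentially the same approach as the paper's proof, which simply asserts that the formula is an immediate consequence of Proposition~\ref{prop:partials}, the factored form of $R(w)^2$, and the definition \eqref{eq:Denominator} of $\mathcal{D}$; your write-up supplies precisely the bookkeeping that the paper leaves to the reader, including the crucial use of $R_-(\xi)=-R_+(\xi)$ on $\beta$ to recover the sign pattern in $\mathcal{D}$.
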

\begin{proof}
This is an immediate consequence of Proposition~\ref{prop:partials},
the definition of $R(w)$,  and the formula
\eqref{eq:Denominator} for $\mathcal{D}$, which applies in both cases \librational\ and \rotational.
\end{proof}
\subsubsection{Continuation from $x\not\in\{0,\pm x_\mathrm{crit}\}$}
\begin{proposition}
There exist disjoint open neighborhoods $\mathscr{O}_\librational^\pm$
and $\mathscr{O}_\rotational^\pm$ in the $(x,t)$-plane with 
\begin{equation}
\mathscr{O}_\librational^-\cap\mathbb{R}=(-\infty,-x_\mathrm{crit})\quad\text{and}
\quad
\mathscr{O}_\librational^+\cap\mathbb{R}=(x_\mathrm{crit},+\infty)
\end{equation}
and
\begin{equation}
\mathscr{O}_\rotational^-\cap\mathbb{R}=(-x_\mathrm{crit},0)\quad\text{and}
\quad\mathscr{O}_\rotational^+\cap\mathbb{R}=(0,x_\mathrm{crit}),
\end{equation}
such that, with $\mathscr{O}:=
\mathscr{O}_\librational^-\cup\mathscr{O}_\rotational^-\cup
\mathscr{O}_\rotational^+\cup\mathscr{O}_\librational^+$, the following
hold true.
\begin{itemize}
\item
There are differentiable maps $\mathfrak{p}:\mathscr{O}\to\mathbb{R}$
and $\mathfrak{q}:\mathscr{O}\to\mathbb{R}$
uniquely determined by the properties
that
\begin{equation}
\mathfrak{p}(x,0)=1-\frac{1}{2}G(x)^2\quad \text{and}\quad \mathfrak{q}(x,0)=\mathfrak{p}(x,0)^2-1,
\quad (x,0)\in\mathscr{O},
\end{equation}
and
\begin{equation}
M(\mathfrak{p}(x,t),\mathfrak{q}(x,t),x,t)=I(\mathfrak{p}(x,t),\mathfrak{q}(x,t),x,t)=0,\quad 
(x,t)\in\mathscr{O},
\end{equation}
where it is assumed that in the definition of $M$ and $I$,
$\Delta=\emptyset$ for
$(x,t)\in\mathscr{O}_\rotational^+\cup\mathscr{O}_\librational^+$ while
$\nabla=\emptyset$ for
$(x,t)\in\mathscr{O}_\rotational^-\cup\mathscr{O}_\librational^-$.  
\item
The quantity $n_\mathrm{p}(x,t)$ defined for $(x,t)\in\mathscr{O}$ in
terms of $\mathfrak{p}(x,t)$ and $\mathfrak{q}(x,t)$ by \eqref{eq:phasevelocity} satisfies
$n_\mathrm{p}(x,0)=0$ and
\begin{equation}
\frac{\partial n_\mathrm{p}}{\partial t}(x,0)<0,\quad
(x,0)\in \mathscr{O}_\rotational^-\cup \mathscr{O}_\librational^+
\end{equation}
and
\begin{equation}
\frac{\partial n_\mathrm{p}}{\partial t}(x,0)>0,\quad
(x,0)\in \mathscr{O}_\librational^-\cup \mathscr{O}_\rotational^+.
\end{equation}
\item
A connected contour
$\beta\cup\gamma$, consisting of the union of (i) 
a Schwartz-symmetric closed curve
passing through $w=1$ and enclosing the origin with (ii) 
the interval $[\mathfrak{a},\mathfrak{b}]$,
can be chosen so that for each $(x,t)\in\mathscr{O}$
an analytic function
$g:\mathbb{C}\setminus(\beta\cup\mathbb{R}_+)\to\mathbb{C}$ is well-defined
by Proposition~\ref{prop:wgbasicproperties}, with associated functions
$\theta:\vec{\beta}\cup\vec{\gamma}\to\mathbb{C}$ and $\phi:\vec{\beta}
\cup\vec{\gamma}\to\mathbb{C}$ defined by \eqref{eq:wthetaphidef},
and so that the following hold:
\begin{itemize}
\item
The function $\phi$ satisfies 
$\Re\{\phi(\xi)\}<0$ for 
$\xi\in\vec{\gamma}$ if $(x,t)\in\mathscr{O}_\rotational^+\cup
\mathscr{O}_\librational^+$ and $\Re\{\phi(\xi)\}>0$ for $\xi\in\vec{\gamma}$
if $(x,t)\in\mathscr{O}_\rotational^-\cup\mathscr{O}_\librational^-$.  Moreover,
$\Re\{\phi(\xi)\}$ is bounded away from zero for $\xi\in\vec{\gamma}$ except in
a neighborhood of either of the two roots of $R(\xi;\mathfrak{p}(x,t),\mathfrak{q}(x,t))^2$ (which
are endpoints of $\vec{\gamma}$).
\item
The function $\theta(\xi)$ 
is real and nondecreasing (nonincreasing) with orientation
for $\xi\in\vec{\beta}$ if $(x,t)\in\mathscr{O}_\rotational^+\cup\mathscr{O}_\librational^+$ (if $(x,t)\in\mathscr{O}_\rotational^-\cup\mathscr{O}_\librational^-$).  Moreover,
$\theta'(\xi)$ is bounded away from zero except in neighborhoods of the two
roots of $R(\xi;\mathfrak{p}(x,t),\mathfrak{q}(x,t))^2$ (endpoints of $\vec{\beta}$) and,
in case \rotational, a single point $\xi=w^+<0$ that converges to $\xi=-1$
as $t\to 0$.
\item The function 
$H(\xi)=H(\xi;\mathfrak{p}(x,t),\mathfrak{q}(x,t),x,t)$ is bounded away from zero for
$\xi\in\beta\cup\gamma$ except in a neighborhood of $\xi=w^+$ where
$H(\xi)$ has a simple zero.
\end{itemize}
\end{itemize}
\label{prop:tneq0continuegeneral}
\end{proposition}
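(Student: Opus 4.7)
My plan is to apply the implicit function theorem to the system $M(\mathfrak{p},\mathfrak{q},x,t)=0$, $I(\mathfrak{p},\mathfrak{q},x,t)=0$ around the explicit $t=0$ solutions provided by Proposition \ref{prop:utzero}, and then to propagate the qualitative statements of Proposition \ref{prop:wsolntzeroproperties} to small $|t|$ by continuity. The sign of $\partial n_\mathrm{p}/\partial t$ at $t=0$ will come out of implicit differentiation of $M=I=0$.

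The crux is the nonvanishing of the Jacobian at the base point.  By Proposition \ref{prop:wJacobian}, $\mathscr{J}=-\mathcal{D}\sqrt{-w_0}\sqrt{-w_1}H(w_0)H(w_1)(w_1-w_0)$.  At each $(x_0,0)$ with $x_0\in\mathbb{R}\setminus\{0,\pm x_\mathrm{crit}\}$, every factor is nonzero: $\mathfrak{p}(x_0,0)=1-\tfrac12 G(x_0)^2\notin\{-1,1\}$, so $w_0\neq w_1$; the roots lie on the unit circle (case \librational, $|\mathfrak{p}|<1$) or on $(\mathfrak{a},\mathfrak{b})\setminus\{-1\}$ (case \rotational, $\mathfrak{p}<-1$), hence avoid $w=0$ and $w=-1$; the last bullet of Proposition \ref{prop:wsolntzeroproperties} says $H$ vanishes on $\beta\cup\gamma$ only at $w=-1$, so $H(w_0)H(w_1)\neq 0$; and $\mathcal{D}>0$ by \eqref{eq:deltalibrational}--\eqref{eq:deltarotational}.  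The implicit function theorem then yields differentiable local continuations $\mathfrak{p}(x,t),\mathfrak{q}(x,t)$ solving $M=I=0$ near each such $(x_0,0)$; compact exhaustions of the four intervals of $\mathbb{R}\setminus\{0,\pm x_\mathrm{crit}\}$ assemble the open sets $\mathscr{O}_\librational^\pm,\mathscr{O}_\rotational^\pm$, and the initial conditions on $\mathfrak{p},\mathfrak{q}$ are inherited from Proposition \ref{prop:utzero}.

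Since $\sqrt{\mathfrak{p}^2-\mathfrak{q}}\,\big|_{t=0}=1$, formula \eqref{eq:phasevelocity} gives $n_\mathrm{p}(x,0)=0$, and chain-rule differentiation produces
\begin{equation*}
\frac{\partial n_\mathrm{p}}{\partial t}\bigg|_{t=0}=-\tfrac12\mathfrak{p}(x,0)\frac{\partial\mathfrak{p}}{\partial t}\bigg|_{t=0}+\tfrac14\frac{\partial\mathfrak{q}}{\partial t}\bigg|_{t=0}.
\end{equation*}
I would recover $\partial\mathfrak{p}/\partial t$ and $\partial\mathfrak{q}/\partial t$ at $t=0$ by implicit differentiation of $M=I=0$, combining the $(w_0,w_1)$-derivatives in Proposition \ref{prop:partials} with the explicit $t$-derivatives at fixed $(\mathfrak{p},\mathfrak{q},x)$.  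A useful simplification is that $\partial M/\partial t\,\big|_{\mathfrak{p},\mathfrak{q},x}=1-1/\sqrt{\mathfrak{p}^2-\mathfrak{q}}$ vanishes at $t=0$, leaving only the $\partial I/\partial t$ contribution, itself a contour integral evaluable by residues as in the proof of Proposition \ref{prop:utzero}.  The resulting sign depends on whether $|\mathfrak{p}(x,0)|<1$ or $\mathfrak{p}(x,0)<-1$ (case \librational\ vs.\ \rotational) and on whether $\Delta=\emptyset$ (so $x>0$) or $\nabla=\emptyset$ (so $x<0$); the four combinations produce the four inequalities in the proposition.

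The remaining qualitative properties of $\phi$, $\theta$, and $H$ propagate from $t=0$ by perturbation.  Proposition \ref{prop:wsolntzeroproperties} provides strict, locally uniform bounds on $\Re\{\phi\}$ along $\gamma$, on $\theta'$ along $\beta$, and on $|H|$ along $\beta\cup\gamma$ away from the band endpoints and (in case \rotational) from $w=-1$.  As $g$ and its derived quantities depend continuously on $(\mathfrak{p},\mathfrak{q},x,t)$, these strict inequalities persist for small $|t|$ on compact subsets, with $\beta\cup\gamma$ deforming continuously to track the moving band endpoints.  The zero of $H$ at $w=-1$ for $t=0$ is simple (as one reads off from \eqref{eq:RGexacttzero}), so a second application of the implicit function theorem to $H(w;\mathfrak{p}(x,t),\mathfrak{q}(x,t),x,t)=0$ yields a differentiable zero $w^+(x,t)\to -1$ as $t\to 0$, which also serves as the anomalous point for $\theta'$ in case \rotational.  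The main technical obstacle is controlling these estimates uniformly as $x$ approaches $\pm x_\mathrm{crit}$: there the two roots of $R^2$ collide at $-1$ and $H$ itself degenerates, forcing $\mathscr{O}^\pm$ to shrink near the transition points.
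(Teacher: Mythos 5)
Your overall strategy coincides with the paper's for the first two bullets: you apply the Implicit Function Theorem using the Jacobian of Proposition~\ref{prop:wJacobian}, check factor-by-factor that it is nonzero along $\{t=0\}\setminus\{0,\pm x_\mathrm{crit}\}$, and compute the sign of $\partial n_\mathrm{p}/\partial t$ via implicit differentiation, noting the simplification $\left.\partial M/\partial t\right|_{t=0}=0$. All of that is accurate and matches the paper's reasoning (the paper packages the same computation through $\Pi=\mathfrak{p}^2-\mathfrak{q}$ and \eqref{eq:partialPiidentity}, but the content is the same).

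Where there is a genuine gap is the third bullet, the construction of the contour $\beta\cup\gamma$. You write that the inequalities on $\phi$, $\theta$, and $H$ ``persist for small $|t|$ on compact subsets, with $\beta\cup\gamma$ deforming continuously to track the moving band endpoints,'' but continuity alone does not tell you \emph{how} to deform $\beta$. For $t=0$, $\beta$ is the unit circle and $\theta$ is real on it; for $t\neq 0$, realness of $\theta$ on $\beta$ is a nontrivial constraint that a generic nearby curve will not satisfy, and yet the proposition asserts the existence of a $\beta$ on which $\theta$ is real and monotone. The paper resolves this by constructing $\beta\cap\mathbb{C}_+$ as the integral curve of the autonomous ODE
\begin{equation*}
\frac{d\xi^*}{d\tau}=-iR_+(\xi;\mathfrak{p}(x,t),\mathfrak{q}(x,t))H(\xi;\mathfrak{p}(x,t),\mathfrak{q}(x,t),x,t),\qquad \xi(0)=1,
\end{equation*}
along which $\Im\{\theta\}$ is constant and $\Re\{\theta\}$ is monotone by construction; the persistence of the connecting orbit (i.e., that this trajectory hits the perturbed root of $R^2$ in case \librational, or the simple zero $w^+$ of $H$ in case \rotational) is exactly what the integral condition $I=0$ — a Melnikov-type condition — guarantees for $t\neq 0$. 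Without this vector-field construction and the structural role of $I=0$, the claim that the required contour exists is not established. Your IFT argument for the simple zero $w^+(x,t)\to -1$ is correct and matches the paper, but it is used \emph{inside} the ODE construction, not as a replacement for it. You would need to add this construction (or an equivalent one) to close the proof.
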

The notation is meant to suggest the fact that the configuration of the
roots of $R(w;\mathfrak{p}(x,t),\mathfrak{q}(x,t))^2$ is of type \rotational\ for 
$(x,t)\in\mathscr{O}^+_\rotational\cup\mathscr{O}^-_\rotational$ and is of type 
\librational\ for 
$(x,t)\in\mathscr{O}^+_\librational\cup\mathscr{O}^-_\librational$.
We have therefore defined the function $n_\mathrm{p}(x,t)$ appearing in the statements of
Theorems~\ref{thm:librational} and \ref{thm:rotational} in terms of $\mathfrak{p}(x,t)$ and $\mathfrak{q}(x,t)$ for $(x,t)\in\mathscr{O}$
by \eqref{eq:phasevelocity}. We are now also in a position to define the accompanying function $\mathcal{E}(x,t)$ for $(x,t)\in \mathscr{O}$:
\begin{equation}
\mathcal{E}(x,t):= -\frac{\mathfrak{p}(x,t)}{\sqrt{\mathfrak{p}(x,t)^2-\mathfrak{q}(x,t)}},\quad (x,t)\in\mathscr{O}.
\label{eq:Euv}
\end{equation}
We can
now also identify the region $S_\librational$ involved in the statement of Theorem~\ref{thm:librational}
as the union $\mathscr{O}_\librational^+\cup\mathscr{O}_\librational^-$.

\begin{proof}
  The existence of the maps $\mathfrak{p}$ and $\mathfrak{q}$ is a consequence of the
  Implicit Function Theorem.  Indeed, since for $x\neq 0$ the roots $w_0$
and $w_1$ of
  $R(w;\mathfrak{p}(x,0),\mathfrak{q}(x,0))^2$ lie within the domain of analyticity of $H(w)$, Proposition~\ref{prop:partials} shows that $M$ and $I$ are differentiable
with respect to these roots. Furthermore, since under the additional
hypothesis $|x|\neq x_\mathrm{crit}$ the roots of $R(w;\mathfrak{p}(x,0),\mathfrak{q}(x,0))^2$
are distinct and neither is equal to $-1$, it follows from
Propositions~\ref{prop:wsolntzeroproperties} and \ref{prop:wJacobian}
that the Jacobian \eqref{eq:wJacdef} is nonzero when
$(x,0)\in\mathscr{O}$.  Therefore we may solve uniquely for $w_0$ and
$w_1$ in terms of $(x,t)$ from the equations $M=I=0$, and since
$\mathfrak{p}=(w_0+w_1)/2$ and $\mathfrak{q}=(w_0-w_1)^2/4$ we also have $\mathfrak{p}$ and $\mathfrak{q}$.

According to \eqref{eq:phasevelocity} in
Proposition~\ref{prop:phasevelocity}, the fact that $\mathfrak{p}(x,0)^2-\mathfrak{q}(x,0)=1$
implies that $n_\mathrm{p}=0$ when $t=0$.  Also, 
\begin{equation}
\frac{\partial n_\mathrm{p}}{\partial t}= -\frac{1}{\sqrt{\Pi}
(1+\sqrt{\Pi})^2}\frac{\partial \Pi}{\partial t},\quad
\Pi:=\mathfrak{p}^2-\mathfrak{q}=w_0w_1
\end{equation}
so that $\partial n_\mathrm{p}/\partial t$ and $\partial \Pi/\partial t$ have opposite signs.  By
differentiation of the equations $M=I=0$ with respect to $t$
one obtains the system of equations
\begin{equation}
\frac{\partial M}{\partial w_0}\frac{\partial w_0}{\partial t}
+ \frac{\partial M}{\partial w_1}\frac{\partial w_1}{\partial t} +
\frac{\partial M}{\partial t} =0\quad\text{and}\quad
\frac{\partial I}{\partial w_0}\frac{\partial w_0}{\partial t}
+ \frac{\partial I}{\partial w_1}\frac{\partial w_1}{\partial t} +
\frac{\partial I}{\partial t} =0,
\end{equation}
from which follows the identity
\begin{equation}
\frac{\partial\Pi}{\partial t} = \frac{1}{\mathscr{J}(w_0,w_1)}
\left[\left(
w_1\frac{\partial M}{\partial w_1}-
w_0\frac{\partial M}{\partial w_0}\right)\frac{\partial I}{\partial t}
+ \left(w_0\frac{\partial I}{\partial w_0}-
w_1\frac{\partial I}{\partial w_1}\right)\frac{\partial M}{\partial t}
\right].
\label{eq:partialPiidentity}
\end{equation}
Now by noting the explicit $t$ dependence in $M$ and (via the definition
of $H$) in $I$ we have
\begin{equation}
\frac{\partial M}{\partial t} = \frac{\sqrt{\Pi}-1}{\sqrt{\Pi}}\quad
\text{and}\quad
\frac{\partial I}{\partial t} = \frac{1}{4\sqrt{\Pi}}\Re\left\{
\int_{\beta\cap\mathbb{C}_+}\frac{R_+(\xi;\mathfrak{p},\mathfrak{q})}{\xi\sqrt{-\xi}}\,d\xi\right\}.
\label{eq:MtIt}
\end{equation}
Therefore, as $t=0$ implies that $\Pi=1$, we also have $\partial
M/\partial t=0$ when $t=0$, simplifying the identity
\eqref{eq:partialPiidentity}.  Using also
Proposition~\ref{prop:partials} and the fact that when $t=0$,
$R_+(\xi;\mathfrak{p},\mathfrak{q})=\sqrt{-\xi}\hat{R}_+(E(\xi);\mathfrak{p}(x,0))$ gives
\begin{equation}
\left.\frac{\partial\Pi}{\partial t}\right|_{t=0}=
\frac{1}{2\mathscr{J}(w_0,w_1)}
\left[w_1\sqrt{-w_1}H(w_1)-w_0\sqrt{-w_0}
H(w_0)\right]\Re\left\{\int_{\beta\cap\mathbb{C}_+}
\hat{R}_+(E(\xi);\mathfrak{p}(x,0))\,\frac{d\xi}{\xi}\right\}.
\label{eq:dPidt1}
\end{equation}
Using \eqref{eq:RGgap} with $\sigma=\mathrm{sgn}(x)$ and taking the
limit as $\xi$ approaches either root $w_k$ of $R(w;\mathfrak{p},\mathfrak{q})^2$ from
$\gamma$, we have the formula
\begin{equation}
\sqrt{-w_k}H(w_k)=-4\frac{D(w_k)}{w_k}U(x),\quad t=0,
\label{eq:sqrtmwjHwjt0}
\end{equation}
where we have also used the identity $D(w)=2wE'(w)$, and 
where $U(x)$ is defined as
\begin{equation}
U(x):=E(w_k)\mathop{\lim_{\xi\to w_k}}_{\xi\in\gamma}
\left[\frac{1}{\hat{R}(E(\xi);\mathfrak{p}(x,0))}\int_{\mathrm{sgn}(x)G^{-1}(4iE(\xi))}^x
\frac{dy}{\hat{R}(E(\xi);\mathfrak{p}(y,0))}\right],\quad
t=0.
\end{equation}
The quantity $U(x)$ has the same value regardless of whether
$w_k=w_0$ or $w_k=w_1$ because $E(w_0)=E(w_1)$ 
at $t=0$, which
explains why we omit any notational dependence of $U$ on $k$.  
Finally, using \eqref{eq:sqrtmwjHwjt0} together with 
formula \eqref{eq:wJacformula} from Proposition~\ref{prop:wJacobian}
we write
\eqref{eq:dPidt1} in the form
\begin{equation}
\left.\frac{\partial\Pi}{\partial t}\right|_{t=0}=
\frac{1}{8\mathcal{D} D(w_0)D(w_1)U(x)}\frac{D(w_1)-D(w_0)}{w_1-w_0}
\Re\left\{\int_{\beta\cap\mathbb{C}_+}\hat{R}_+(E(\xi);\mathfrak{p}(x,0))\,
\frac{d\xi}{\xi}\right\}.
\label{eq:dPidt2}
\end{equation}
Here we have used the relationship $\Pi=w_0w_1=1$ which is valid 
at $t=0$.

We now determine the phases of the various factors in this formula.  
\begin{itemize}
\item
Since
$E(w_0)=E(w_1)$ is a positive imaginary number, and since
$G^{-1}(4iE(\xi))\to |x|$ as $\xi\to w_k$ with $\xi\in\gamma$, while
$\xi\in\gamma$ implies that $G^{-1}(4iE(\xi))<|x|$, and in the range
of integration $\hat{R}(E(\xi);\mathfrak{p}(y,0))<0$ for $\xi\in\gamma$, it
follows that $U(x)$ is imaginary and has the same sign as $x$.
\item
According to \eqref{eq:deltalibrational} and \eqref{eq:deltarotational} from Proposition~\ref{prop:phasevelocity},
$\mathcal{D}$ is a positive quantity.
\item
By explicit calculation,
\begin{equation}
\frac{D(w_1)-D(w_0)}{w_1-w_0}=-\frac{i}{4}\frac{1}{\sqrt{-w_0}+\sqrt{-w_1}}
\left(1+\frac{1}{\sqrt{\Pi}}\right)=-\frac{i}{2}\frac{1}{\sqrt{-w_0}+\sqrt{-w_1}},\quad t=0,
\end{equation}
a quantity that is negative imaginary.
\item  
A similar direct calculation shows that
\begin{equation}
D(w_0)D(w_1)=-\frac{1}{16\sqrt{\Pi}}(1+w_0)(1+w_1)=-\frac{1}{16}(1+w_0)(1+w_1),\quad t=0,
\end{equation}
and this quantity is positive real for $(x,0)\in\mathscr{O}_\rotational^\pm$ but
is negative real for $(x,0)\in\mathscr{O}_\librational^\pm$.
\item Since $\beta\cap\mathbb{C}_+$ is, for $t=0$, an arc of the unit
circle that we may take (without loss of generality) to be oriented in
the counterclockwise direction, we see that $d\xi/\xi=i\,d\theta$,
a positive imaginary increment, while the boundary value $\hat{R}_+(E(\xi);u(x,0))$ is negative imaginary, and hence
\begin{equation}
\int_{\beta\cap\mathbb{C}_+}\hat{R}_+(E(\xi);\mathfrak{p}(x,0))\,\frac{d\xi}{\xi}\in\mathbb{R}_+.
\end{equation}
\end{itemize}
Combining these phases then yields the sign structure of
$\partial\Pi/\partial t$ at $t=0$ that produces the desired sign structure
for $\partial n_{\mathrm{p}}/\partial t$ at $t=0$.

We now describe how to construct the contours $\beta$ and $\gamma$ to
guarantee all of the corresponding conditions in the statement of the
proposition.  Of course all of these conditions are generalizations
for $t\neq 0$ of corresponding conditions that hold true when $t=0$
according to Proposition~\ref{prop:wsolntzeroproperties} when
$\beta\cup\gamma$ is taken to coincide with the union of the unit
circle with the interval $[\mathfrak{a},\mathfrak{b}]$ so our argument will be a
perturbative one, in which the unit circle is replaced by a suitable
nearby curve.  Firstly, since when $t=0$, the function $H$ is bounded
away from zero on $\beta\cup\gamma$ except at $w=-1$ where $H$ has
a simple root, the same holds (also at $t=0$) throughout $\Omega^\circ$
if the latter is chosen without loss of generality to be close enough
to $\beta\cup\gamma$.  Now since $H(w;\mathfrak{p}(x,t),\mathfrak{q}(x,t),x,t)$ is an analytic
function of $w$ depending continuously on $(x,t)$ near $(x,0)$ and that
satisfies $H(w^*)=H(w)$ it will also be bounded away from zero in $\Omega^\circ$
for $t$ small except near some real point $w=w^+$ close to $w=-1$ where
it has a simple zero.  It is easy to check that $H(w;\mathfrak{p}(x,t),\mathfrak{q}(x,t),x,t)$
has (two different) analytic continuations to  a neighborhood of $w=1$
from $\Omega^\circ$ from the upper and lower half planes, so the limiting values
$H(1_\pm;\mathfrak{p}(x,t),\mathfrak{q}(x,t),x,t)$ are both finite and nonzero.  The contour
$\beta\cap\mathbb{C}_+$ is then obtained by solving the well-posed 
autonomous initial-value problem
\begin{equation}
\frac{d\xi^*}{d\tau}=-iR_+(\xi;\mathfrak{p}(x,t),\mathfrak{q}(x,t))H(\xi;\mathfrak{p}(x,t),\mathfrak{q}(x,t),x,t),\quad
\tau>0,
\quad \xi(0)=1
\label{eq:betavectorfield}
\end{equation}
where we interpret $H(1)$ as $H(1_+)$.  Clearly, $\tau$ parameterizes
a trajectory along which $\Im\{\theta\}$ is constant and $\Re\{\theta\}$
is nonincreasing with parametrization $\tau$, since
\begin{equation}
\frac{d\theta(\xi(\tau))}{d\tau} = \frac{d\theta}{d\xi}\cdot\frac{d\xi}{d\tau}
= \left[iR_+(\xi)H(\xi)\right]\left[-iR_+(\xi)H(\xi)\right]^* = 
-\left|iR_+(\xi)H(\xi)\right|^2\le 0.
\end{equation}
The vector field of \eqref{eq:betavectorfield} varies continuously
with time $t$, and the only critical points in $\Omega^\circ$ are
$\xi=w^+$ and the two distinct roots of $R^2$.  Since an integral
curve of this vector field for $t=0$ connected $\xi=1$ with the root
of $R^2$ in $\mathbb{C}_+$ (in case \librational) or with $\xi=-1$ (in
case \rotational) and since the Melnikov-type integral condition $I=0$ continues to
hold true for $t\neq 0$, the solution of the initial-value problem
\eqref{eq:betavectorfield} is a curve terminating at the perturbed
root of $R$ (in case \librational, in finite $\tau$) or at the point
$\xi=w^+<$ (in case \rotational, in infinite $\tau$).  This arc together
with its Schwartz reflection in $\mathbb{C}_-$ and, in case \rotational, the
real interval connecting the roots of $R^2$, is $\beta$ for $t\neq 0$.
Upon assigning $\vec{\beta}$ its orientation according to whether case
$\Delta=\emptyset$ or $\nabla=\emptyset$ holds, we easily see that
since no critical points of the vector field of
\eqref{eq:betavectorfield} lie in $\vec{\beta}$, the desired strict
inequality for $d\theta/d\xi$ holds along $\vec{\beta}$.  To construct
the contour $\gamma$ cruder methods suffice.  In case \librational,
$\gamma$ is the union of the real interval $[\mathfrak{a},\mathfrak{b}]$ with a
Schwartz-symmetrical arc connecting the two complex-conjugate roots of
$R^2$; we define $\gamma\cap\mathbb{C}_+$ for $t\neq 0$ as the image of
the same for $t=0$ under the linear mapping
\begin{equation}
\xi\mapsto\frac{w_0(x,t)-w^+}{w_0(x,0)+1}\xi + \frac{w_0(x,t)+w^+w_0(x,0)}
{w_0(x,0)+1}
\end{equation}
($w_0(x,t)$ is the root of $R^2$ in $\mathbb{C}_+$) taking
$\xi=w_0(x,0)$ to $w_0(x,t)$ and $\xi=-1$ to $w^+$.  In case
\rotational\ there is nothing to do since $\gamma$ has to be the union of
real intervals $[\mathfrak{a},w_0(x,t)]\cup[w_1(x,t),\mathfrak{b}]$ where
$w_0(x,t)<w_1(x,t)$ are the two roots of $R^2$.  In both cases, an
easy continuity argument together with the fact that $\Re\{\phi\}=0$
at the simple roots of $R^2$ and a local analysis of
$d\phi/d\xi=R(\xi)H(\xi)$ near these roots shows that the desired
inequality for $\Re\{\phi(\xi)\}$ holds on $\vec{\gamma}$.
\end{proof}
It should be noted that elements of this proof actually provide computationally
feasible numerical methods for continuation of $g$ as $x$ and $t$ vary.

\subsubsection{Continuation from $x=0$}
\label{continuation-from-x=0}
When $x=t=0$, the roots of $R(w;\mathfrak{p},\mathfrak{q})^2$ coincide with $\mathfrak{a}$ and $\mathfrak{b}$
and this implies that $M$ and $I$ are not differentiable with respect
to the roots at this point.  To circumvent this difficulty, it now
becomes necessary to exploit more than the two simplest configurations
of $\Delta$ and $\nabla$ first introduced in \S\ref{sec:choiceofDelta};
we must now consider the possibility that neither $\Delta$ nor $\nabla$
is empty.

We begin with several observations concerning the functions
$M(\mathfrak{p},\mathfrak{q},x,t)$ and $I(\mathfrak{p},\mathfrak{q},x,t)$.  Fix some small $\delta>0$ and
consider configurations of type \rotational\ in which the roots
$w_\prec=\mathfrak{p}-\sqrt{\mathfrak{q}}$ and $w_\succ=\mathfrak{p}+\sqrt{\mathfrak{q}}$ of $R(w;\mathfrak{p},\mathfrak{q})^2$ (in this section
this will be more suggestive notation than $w_0$ and $w_1$) 
satisfy $\mathfrak{a}<w_\prec<\mathfrak{a}+\delta<-1<\mathfrak{b}-\delta<w_\succ<\mathfrak{b}$, 
which bounds $\mathfrak{q}>0$
away from zero.  If we further fix two real values
$\tau^\prec_{\infty}$ and $\tau^\succ_\infty$ with
$\mathfrak{a}+\delta<\tau^\prec_\infty<-1<\tau^\succ_\infty<\mathfrak{b}-\delta$, then we
may compare $M$ and $I$ for the various cases listed in
\S\ref{sec:choiceofDelta}, in which we use the transition point
$\tau_\infty=\tau^\prec_\infty$ 
when we have $\Delta=P^{\prec\kink}_N$ or
$\nabla=P^{\prec\kink}_N$, and we use the transition point
$\tau_\infty=\tau^\succ_\infty$ 
when we have $\Delta=P^{\kink\succ}_N$ or
$\nabla=P^{\kink\succ}_N$.

The first observation is that in this situation the functions $(M,I)$
are the same in the case $\Delta=P^{\prec\kink}_N$ as in the case
$\nabla=P^{\kink\succ}_N$, 
and are the same in the case $\nabla=P^{\prec\kink}_N$
as in the case $\Delta=P^{\kink\succ}_N$.  To see this, it is useful
to note that by a simple contour deformation in which the components of $\partial\Omega_\pm^\nabla\setminus\Sigma^\nabla$ and 
$\partial\Omega_\pm^\Delta\setminus\Sigma^\Delta$ are collapsed toward $\beta\cup\gamma$ we may write $M$, originally defined by \eqref{eq:wM0Cdef},
in the form
\begin{equation}
M=\frac{x-t}{\sqrt{\mathfrak{p}^2-\mathfrak{q}}}+x+t+\frac{4}{\pi}\int_\gamma
\frac{\theta_0'(\xi)\sqrt{-\xi}\,d\xi}{R(\xi;\mathfrak{p},\mathfrak{q})}
\label{eq:M0rewrite}
\end{equation}
and we see that the only way that this formula depends on the
choice of $\Delta$ is via the orientation of the contour arcs in $\gamma$,
and so the desired equivalence for $M$ follows because the transition
points lie in the complementary contour $\beta$.  Exactly the same contour
deformations, when applied to the definition of $H(w)$, will result
in the additional contribution of a residue at $\xi=w$; if $w\in\beta$
we have:
\begin{equation}
H(w)=-\frac{1}{4\sqrt{-w}}\left[\frac{x-t}{w\sqrt{\mathfrak{p}^2-\mathfrak{q}}}-\frac{4}{\pi}
\int_\gamma
\frac{\theta_0'(\xi)\sqrt{-\xi}\,d\xi}{R(\xi;\mathfrak{p},\mathfrak{q})(\xi-w)}\right]+\frac{\theta_0'(w)}{iR_+(w;\mathfrak{p},\mathfrak{q})}\,.
\label{eq:Hrewrite}
\end{equation}
Our analysis of the formula \eqref{eq:M0rewrite} applies to all but
the last term.  This last term does indeed distinguish between
$\Delta=P^{\prec\kink}_N$ and $\nabla=P^{\kink\succ}_N$
and between $\Delta=P^{\kink\succ}_N$ and
$\nabla=P^{\prec\kink}_N$ due to a change of orientation of
$\beta\cap \mathbb{C}_+$, which changes the sign of the boundary value
$R_+(w;\mathfrak{p},\mathfrak{q})$.  However, this discrepancy contributes nothing to the integral $I$,
since recalling the definition \eqref{eq:wgeneralI0def} we have
\begin{equation}
\Re\left\{\int_{\beta\cap\mathbb{C}_+}R_+(\xi;\mathfrak{p},\mathfrak{q})\left[\frac{2\theta_0'(\xi)}{iR_+(\xi;\mathfrak{p},\mathfrak{q})}
\right]\,d\xi
\right\}=\pm 2\Im\{\theta_0(w^+)-\Psi(0)\} = 0,
\end{equation}
and for the remaining terms we note that integrating over the (oriented) contour
$\beta\cap\mathbb{C}_+$ against the (oriented) boundary value $R_+(\xi;\mathfrak{p},\mathfrak{q})$
is an orientation-invariant operation.

The second observation is that since $\mathfrak{q}$ is bounded away from zero, we
may consider $M$ and $I$, in any of the six choices of $\Delta$
listed in \S\ref{sec:choiceofDelta}, as well-defined functions of the
roots $w_\prec<w_\succ$.  
These are both analytic functions of $w_\prec$ and $w_\succ$
in the intervals $\mathfrak{a}<w_\prec<\mathfrak{a}+\delta$ and 
$\mathfrak{b}-\delta<w_\succ<\mathfrak{b}$.
Next, recall the part of Proposition~\ref{prop:theta0} guaranteeing that
$\theta_0(w)$ has an analytic continuation from $w>\mathfrak{a}$ and
$w<\mathfrak{b}$ to small neighborhoods of $w=\mathfrak{a}$ and of $w=\mathfrak{b}$
respectively.  Using this fact, we may construct the analytic
continuation $\mathcal{M}_\prec\{(M,I)\}$ of $(M,I)$ with respect
to $w_\prec$ (holding $w_\succ$, $x$, and $t$ fixed) about a small
positively-oriented loop about $w=\mathfrak{a}$ beginning and ending on the real
axis with $\mathfrak{a}<w_\prec<\mathfrak{a}+\delta$.  We may also construct a similar
analytic continuation denoted $\mathcal{M}_\succ\{(M,I)\}$ with
respect to $w_\succ$ (holding $w_\prec$, 
$x$, and $t$ fixed) about a small
positively-oriented loop about $w=\mathfrak{b}$ beginning and ending on the
real axis with $\mathfrak{b}-\delta<w_\succ<\mathfrak{b}$.  

The third and most important observation is that these two monodromy
operations $\mathcal{M}_\prec$ and $\mathcal{M}_\succ$ 
simply result in involutive permutations
among the various cases of choice of $\Delta$ enumerated in
\S\ref{sec:choiceofDelta}.  These relations are shown in 
Figure~\ref{fig:Monodromy}.
\begin{figure}[h]
\begin{center}
\includegraphics{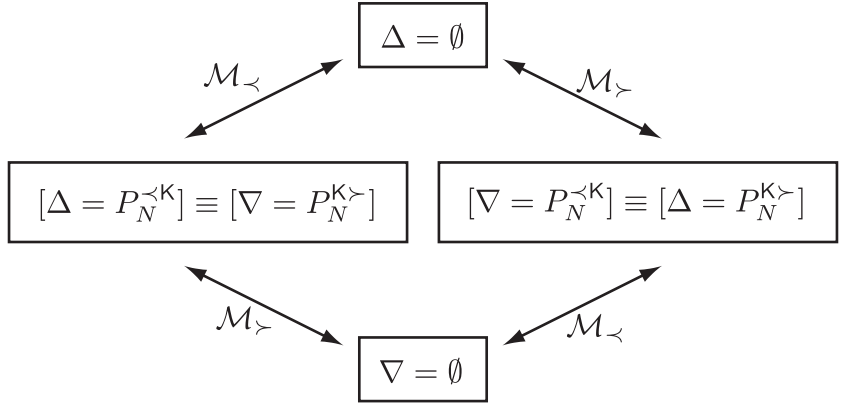}
\end{center}
\caption{\emph{The effect of the two monodromy generators on the four
distinct types of function pairs $(M,I)$.} }
\label{fig:Monodromy}
\end{figure}
These facts are elementary consequences of the formulae \eqref{eq:M0rewrite}
and \eqref{eq:Hrewrite}. Indeed, the analytic continuation 
operations both leave the non-integral terms invariant, and 
$\mathcal{M}_\prec$ (respectively $\mathcal{M}_\succ$) 
may be realized in these formulae by analytic
continuation of the integrand from the real axis and the generalization
of the real segment of $\gamma$ near $w=\mathfrak{a}$ (respectively $w=\mathfrak{b}$)
to a straight-line contour
connecting $w=\mathfrak{a}$ with $w=w_\prec$ (respectively connecting $w=\mathfrak{b}$
with $w=w_\succ$).  This latter operation results in a change of sign
of the square root in the integrand, which is equivalent to the reversal
of orientation of the corresponding segment of $\gamma$, and hence in
the desired permutation of formulae.

Let $\Gamma_\prec$ be the two-sheeted Riemann surface obtained
from taking two copies of the disc $|w-\mathfrak{a}|<\delta$ slit along $w<\mathfrak{a}$
and identified in the usual way.  Similarly, let $\Gamma_\succ$
be the two-sheeted Riemann surface obtained from taking two copies of
the disc $|w-\mathfrak{b}|<\delta$ slit along $w>\mathfrak{b}$ and identified in
the usual way.  The multivalued square roots
$k_\prec:=(w-\mathfrak{a})^{1/2}$ and $k_\succ:=(\mathfrak{b}-w)^{1/2}$
are global analytic coordinates for $\Gamma_\prec$ and
$\Gamma_\succ$ respectively. We now define a function
$\Gamma_\prec\times\Gamma_\succ\times\mathbb{R}^2_{x,t}
\to\mathbb{C}^2$ by setting
\begin{equation}
(\hat{M},\hat{I})(k_\prec,k_\succ,x,t):=
\begin{cases}
\displaystyle
\left.(M,I)(\mathfrak{p},\mathfrak{q},x,t)\right|_{\Delta=\emptyset},
\quad &\Re\{k_\prec\}\ge 0,\quad\Re\{k_\succ\}\ge 0\\
\displaystyle
\left.(M,I)(\mathfrak{p},\mathfrak{q},x,t)\right|_{\Delta=P^{\prec\kink}_N},
\quad &\Re\{k_\prec\}\le 0,\quad\Re\{k_\succ\}\ge 0\\
\displaystyle
\left.(M,I)(\mathfrak{p},\mathfrak{q},x,t)\right|_{\nabla=P^{\prec\kink}_N},
\quad &\Re\{k_\prec\}\ge 0,\quad\Re\{k_\succ\}\le 0\\
\displaystyle
\left.(M,I)(\mathfrak{p},\mathfrak{q},x,t)\right|_{\nabla=\emptyset},
\quad &\Re\{k_\prec\}\le 0,\quad\Re\{k_\succ\}\le 0,
\end{cases}
\label{eq:M0I0RSdef}
\end{equation}
where on the right-hand side, $\mathfrak{p}=(w_\prec+w_\succ)/2$, $\mathfrak{q}=(w_\prec-w_\succ)^2/4$, and
$w_\prec:=\mathfrak{a}+k_\prec^2$ while
$w_\succ:=\mathfrak{b}-k_\succ^2$.  In the second and third
lines we could have equivalently used $\nabla=P^{\kink\succ}_N$
and $\Delta=P^{\kink\succ}_N$ respectively.  The monodromy
arguments above show that $(\hat{M},\hat{I})$ is a pair of
single-valued analytic functions on
$\Gamma_\prec\times\Gamma_\succ$ for each
$(x,t)\in\mathbb{R}^2$, except possibly on the coordinate axes
$k_\prec=0$ or $k_\succ=0$.  But it is easy to see that
the pair $(\hat{M},\hat{I})$ extends continuously to the axes, and
hence any singularities are removable, so $(\hat{M},\hat{I})$ is a
pair of analytic functions defined on the whole product
$\Gamma_\prec\times\Gamma_\succ$.

In fact, given the above discussion, it is not hard to write down explicit formulae for the
functions $\hat{M}$ and $\hat{I}$.  Indeed, starting from the formula \eqref{eq:M0rewrite} for $M$ and taking into
account the change of orientation of the two arcs of $\gamma$ corresponding to changes in
sign of $k_\prec$ and $k_\succ$, we arrive at the formula
\begin{equation}
\begin{split}
\hat{M}(k_\prec,k_\succ,x,t)&=
\frac{x-t}{\sqrt{1-\mathfrak{a}k_\succ^2+\mathfrak{b}k_\prec^2-k_\prec^2k_\succ^2}}+x+t\\
&\qquad{}-\frac{4k_\prec}{\pi}\int_0^1\frac{\theta_0'(\mathfrak{a}+k_\prec^2s)\sqrt{-\mathfrak{a}-k_\prec^2s}\,ds}{\sqrt{1-s}\sqrt{\mathfrak{b}-\mathfrak{a}-k_\succ^2-k_\prec^2s}} +
\frac{4k_\succ}{\pi}\int_0^1\frac{\theta_0'(\mathfrak{b}-k_\succ^2s)\sqrt{-\mathfrak{b}+k_\succ^2s}\,ds}{\sqrt{1-s}\sqrt{\mathfrak{b}-\mathfrak{a}-k_\prec^2-k_\succ^2s}}.
\end{split}
\label{eq:HatM}
\end{equation}
Since the last term of the rewritten formula \eqref{eq:Hrewrite} for $H(w)$ contributes nothing to $I$,
we may omit it and apply the same process as used to arrive at the above formula for $\hat{M}$
to obtain
\begin{equation}
\hat{I}(k_\prec,k_\succ,x,t)=\Re\left\{\int_{\beta\cap\mathbb{C}_+}R_+(w;\mathfrak{p},\mathfrak{q})\hat{H}(w)\,dw\right\}
\label{eq:HatI}
\end{equation}
where $\mathfrak{p}=\tfrac{1}{2}(\mathfrak{a}+\mathfrak{b}+k_\prec^2-k_\succ^2)$ and $\mathfrak{q}=\tfrac{1}{4}(\mathfrak{b}-\mathfrak{a}-k_\prec^2-k_\succ^2)^2$
and where
\begin{equation}
\begin{split}
\hat{H}(w)&:=-\frac{1}{4\sqrt{-w}}\Bigg[
\frac{x-t}{w\sqrt{1-\mathfrak{a}k_\succ^2+\mathfrak{b}k_\prec^2-k_\prec^2k_\succ^2}}\\
&\qquad\qquad{}+\frac{4k_\prec}{\pi}\int_0^1\frac{\theta_0'(\mathfrak{a}+k_\prec^2s)\sqrt{-\mathfrak{a}-k_\prec^2s}\,ds}
{(\mathfrak{a}+k_\prec^2s-w)\sqrt{1-s}\sqrt{\mathfrak{b}-\mathfrak{a}-k_\succ^2-k_\prec^2s}}\\
&\qquad\qquad{}-\frac{4k_\succ}{\pi}\int_0^1\frac{\theta_0'(\mathfrak{b}-k_\succ^2s)\sqrt{-\mathfrak{b}+k_\succ^2s}\,ds}
{(\mathfrak{b}-k_\succ^2s-w)\sqrt{1-s}\sqrt{\mathfrak{b}-\mathfrak{a}-k_\prec^2-k_\succ^2s}}\Bigg].
\end{split}
\label{eq:HatH}
\end{equation}

In particular, it is easy to see that the two functions $\hat{M}$ and
$\hat{I}$ are differentiable with respect to $k_\prec$ and
$k_\succ$ at the origin $k_\prec=k_\succ=0$ where,
essentially, all six cases of $(M,I)$ coincide with the same
value, corresponding to the configuration $w_\prec=\mathfrak{a}$ and
$w_\succ=\mathfrak{b}$ that occurs when $x=t=0$.  Therefore, we may
compute the Jacobian of $(\hat{M},\hat{I})$ with respect to the
global analytic coordinates $(k_\prec,k_\succ)$ of the
manifold $\Gamma_\prec\times\Gamma_\succ$ (holding $x$
and $t$ fixed).  By analogy with \eqref{eq:wJacdef} we define
\begin{equation}
\hat{\mathscr{J}}(k_\prec,k_\succ):=\det
\begin{bmatrix}\displaystyle\frac{\partial\hat{M}}{\partial k_\prec} &
\displaystyle\frac{\partial\hat{M}}{\partial k_\succ}\\\\
\displaystyle\frac{\partial\hat{I}}{\partial k_\prec} &
\displaystyle\frac{\partial\hat{I}}{\partial k_\succ}
\end{bmatrix}.
\label{eq:wJactilde}
\end{equation}
\begin{proposition}
At the origin $k_\prec=k_\succ=0$, the Jacobian is
\begin{equation}
\hat{\mathscr{J}}(0,0)=-\frac{4\mathcal{D}_0}{\pi^2}(G(0)^2-4)\left[
\left.\frac{d}{dv}\Psi(iv/4)\right|_{v=-G(0)}\right]^2,
\label{eq:Jhatformula}
\end{equation}
where $\mathcal{D}_0$ denotes the positive 
quantity $\mathcal{D}$ defined by \eqref{eq:deltarotational}
in the case that the roots $\mathfrak{p}\pm\sqrt{\mathfrak{q}}$ of $R(w)^2$ are taken to be $\mathfrak{a}$ and $\mathfrak{b}$.  
Hence, $\hat{\mathscr{J}}(0,0)<0$ via Proposition~\ref{prop:theta0} 
and Assumption~\ref{assume:rotational}, and
$\hat{\mathscr{J}}(0,0)$ is also independent of $x$ and $t$.
\label{prop:sqrtjacobian}
\end{proposition}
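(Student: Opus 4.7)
My plan is to compute the four entries of the Jacobian $\hat{\mathscr{J}}(0,0)$ directly from the explicit formulas \eqref{eq:HatM}, \eqref{eq:HatI}, and \eqref{eq:HatH}; then to collapse the resulting $2\times 2$ determinant by means of a partial-fraction identity that reproduces the elliptic integral $\mathcal{D}_0$; and finally to re-express $\theta_0'(\mathfrak{a})\theta_0'(\mathfrak{b})$ in terms of $(d/dv)\Psi(iv/4)|_{v=-G(0)}$ using $\theta_0=\Psi\circ E$ and the closed form of $E'(w)$.

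For the $\hat{M}$-row, differentiating \eqref{eq:HatM} at the origin is straightforward: the first (algebraic) term depends on $(k_\prec,k_\succ)$ only through a radicand of the form $1+O(k^2)$ (because the linear-in-$k_j$ coefficients vanish), so contributes nothing at first order, while each of the two integral terms carries an explicit linear prefactor $k_\prec$ or $k_\succ$, so its first $k$-derivative at the origin is the integrand evaluated at $(0,0)$, where $\int_0^1(1-s)^{-1/2}\,ds=2$. This yields
\[
\frac{\partial\hat{M}}{\partial k_\prec}\Big|_0 = -\frac{8\theta_0'(\mathfrak{a})\sqrt{-\mathfrak{a}}}{\pi\sqrt{\mathfrak{b}-\mathfrak{a}}},\qquad \frac{\partial\hat{M}}{\partial k_\succ}\Big|_0 = +\frac{8\theta_0'(\mathfrak{b})\sqrt{-\mathfrak{b}}}{\pi\sqrt{\mathfrak{b}-\mathfrak{a}}},
\]
with cross-contributions vanishing because the remaining integral term has the opposite $k$-variable as prefactor, and with both quantities manifestly independent of $(x,t)$.

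For the $\hat{I}$-row I differentiate \eqref{eq:HatI} under the integral sign. In case $\rotational$ the contour $\beta\cap\mathbb{C}_+$ is independent of $(k_\prec,k_\succ)$, and since $R_+(w;\mathfrak{p},\mathfrak{q})^2=(w-\mathfrak{a}-k_\prec^2)(w-\mathfrak{b}+k_\succ^2)$, both $k$-partials of $R_+$ vanish at the origin. Of the three summands of $\hat{H}$, only the one with the matching prefactor survives each partial (the $(x-t)$-proportional summand vanishes for the same radicand reason as above), giving
\[
\frac{\partial\hat{I}}{\partial k_\prec}\Big|_0 = -\frac{2\theta_0'(\mathfrak{a})\sqrt{-\mathfrak{a}}}{\pi\sqrt{\mathfrak{b}-\mathfrak{a}}}\,\Re\,\mathcal{K}_\prec, \qquad \frac{\partial\hat{I}}{\partial k_\succ}\Big|_0 = +\frac{2\theta_0'(\mathfrak{b})\sqrt{-\mathfrak{b}}}{\pi\sqrt{\mathfrak{b}-\mathfrak{a}}}\,\Re\,\mathcal{K}_\succ,
\]
where $\mathcal{K}_\kappa:=\int_{\beta\cap\mathbb{C}_+}R_+(w)[\sqrt{-w}(\kappa-w)]^{-1}\,dw$ for $\kappa\in\{\mathfrak{a},\mathfrak{b}\}$. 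Forming the determinant and using $\sqrt{-\mathfrak{a}}\sqrt{-\mathfrak{b}}=1$ (from $\mathfrak{a}\mathfrak{b}=1$) reduces it to $\hat{\mathscr{J}}(0,0) = -16\pi^{-2}(\mathfrak{b}-\mathfrak{a})^{-1}\theta_0'(\mathfrak{a})\theta_0'(\mathfrak{b})\,\Re(\mathcal{K}_\succ-\mathcal{K}_\prec)$; the crucial simplification is the partial-fraction identity $(\mathfrak{b}-w)^{-1}-(\mathfrak{a}-w)^{-1}=-(\mathfrak{b}-\mathfrak{a})/R(w)^2$ combined with $R_+(w)/R(w)^2=1/R_+(w)$, which yields $\mathcal{K}_\succ-\mathcal{K}_\prec=-(\mathfrak{b}-\mathfrak{a})\int_{\beta\cap\mathbb{C}_+}dw/[\sqrt{-w}R_+(w)]$.

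The main technical step is then to identify the real part of this remaining contour integral with $\mathcal{D}_0$: I plan to apply Schwartz reflection $w\mapsto w^*$ to the defining formula \eqref{eq:Denominator}, where the integrand $1/[\sqrt{-w}R_+(w)]$ is Schwartz-symmetric away from $\mathbb{R}_+$ (which is disjoint from $\beta$), so that the integral over $\beta\cap\mathbb{C}_-$ becomes minus the complex conjugate of the integral over $\beta\cap\mathbb{C}_+$, and the defining expression collapses to $\mathcal{D}_0=\Re\int_{\beta\cap\mathbb{C}_+}dw/[\sqrt{-w}R_+(w)]$ for the limiting root configuration $w_\prec=\mathfrak{a}$, $w_\succ=\mathfrak{b}$; this is the step requiring the most care about orientation and branch conventions. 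To finish, I use $\theta_0=\Psi\circ E$ and the common value $E(\mathfrak{a})=E(\mathfrak{b})=-iG(0)/4$ together with the closed form $E'(w)=i(1+w)/[8(-w)^{3/2}]$; since $(-\mathfrak{a})(-\mathfrak{b})=1$ and $(1+\mathfrak{a})(1+\mathfrak{b})=2+(\mathfrak{a}+\mathfrak{b})=4-G(0)^2$, I compute $E'(\mathfrak{a})E'(\mathfrak{b})=(G(0)^2-4)/64$, and the chain rule $\Psi'(-iG(0)/4)=-4i\,(d/dv)\Psi(iv/4)|_{v=-G(0)}$ then gives $\theta_0'(\mathfrak{a})\theta_0'(\mathfrak{b})=-\tfrac14(G(0)^2-4)[(d/dv)\Psi(iv/4)|_{v=-G(0)}]^2$. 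Substituting into the determinant formula yields \eqref{eq:Jhatformula}, and positivity of $\mathcal{D}_0$ and of $G(0)^2-4$ (by Assumption~\ref{assume:rotational}), together with $(d/dv)\Psi(iv/4)|_{v=-G(0)}<0$ from Proposition~\ref{prop:theta0}, then gives $\hat{\mathscr{J}}(0,0)<0$; independence from $(x,t)$ is manifest at every step.
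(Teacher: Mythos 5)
Your proposal is correct and reaches the same value of $\hat{\mathscr{J}}(0,0)$, but by a genuinely different route from the paper's. The paper never differentiates $\hat{M}$ and $\hat{I}$ directly at the origin; instead it uses the chain rule $\hat{\mathscr{J}}(k_\prec,k_\succ)=-4k_\prec k_\succ\,\mathscr{J}(w_\prec,w_\succ)|_{\Delta=\emptyset}$ together with the closed form for $\mathscr{J}$ from Proposition~\ref{prop:wJacobian}, reducing everything to the computation of $\lim_{k\downarrow 0}k\,H_{\Delta=\emptyset}(\mathfrak{a}+k^2)$ and $\lim_{k\downarrow 0}k\,H_{\Delta=\emptyset}(\mathfrak{b}-k^2)$; these are delicate because $H$ has a logarithmic singularity as the argument approaches a root of $R^2$ that simultaneously coincides with $\mathfrak{a}$ or $\mathfrak{b}$, and the paper extracts the finite limit by a rescaling $\xi=\mathfrak{a}+k^2\zeta$ and a dominated-convergence argument. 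Your approach sidesteps that analysis entirely: by differentiating the rewritten integral representations \eqref{eq:HatM}, \eqref{eq:HatI}, \eqref{eq:HatH} (which were written down in the paper precisely because they are manifestly smooth at $k_\prec=k_\succ=0$), you get the entries of the $2\times2$ Jacobian by elementary Leibniz differentiation and then collapse the determinant with the partial-fraction identity $\left(\mathfrak{b}-w\right)^{-1}-\left(\mathfrak{a}-w\right)^{-1}=-(\mathfrak{b}-\mathfrak{a})/R(w)^2$ to produce $\mathcal{D}_0$; this trades the hard limit for a slightly more elaborate (but routine) determinant manipulation. All the entry computations check, the Schwartz-reflection identity $\mathcal{D}_0=\Re\int_{\beta\cap\mathbb{C}_+}dw/[\sqrt{-w}R_+(w)]$ is correct (it follows from $R_+(\xi)=-R_-(\xi)$ on $\vec{\beta}$ combined with the Schwartz symmetry $R(\xi^*)=R(\xi)^*$ and $\sqrt{-\xi^*}=\overline{\sqrt{-\xi}}$ off $\mathbb{R}_+$), and the final reduction of $\theta_0'(\mathfrak{a})\theta_0'(\mathfrak{b})$ via $E'(\mathfrak{a})E'(\mathfrak{b})=(G(0)^2-4)/64$ and the chain rule for $\Psi'$ is the same closing step as in the paper. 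The one place your write-up glosses is the claim that ``the contour $\beta\cap\mathbb{C}_+$ is independent of $(k_\prec,k_\succ)$'': this is not literally true as stated (the contour is constructed from $H$), but what matters and what is true is that $\hat{I}$ is the real part of a path integral of the exact differential $d\phi$, so its value depends only on the endpoints $w=1$ (fixed) and $w=w^+$ (where the boundary contribution vanishes because $H(w^+)=0$); thus one may hold the contour fixed when differentiating. With that small justification supplied, the proposal is complete. You might also note that the negativity of $(d/dv)\Psi(iv/4)|_{v=-G(0)}$ is not actually needed for the sign of $\hat{\mathscr{J}}(0,0)$ since that factor is squared — what is needed from Proposition~\ref{prop:theta0} is merely that it is nonzero.
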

\begin{proof}
We compute the Jacobian $\hat{\mathscr{J}}(0,0)$ by calculating the
partial derivatives of $\hat{\mathscr{J}}(k_\prec,k_\succ)$
for $k_\prec>0$ and $k_\succ>0$ and then letting 
$k_\prec\downarrow 0$ and $k_\succ\downarrow 0$.  In this situation we can use
the formula $(\hat{M},\hat{I})=\left.(M,I)\right|_{\Delta=\emptyset}$
with the right-hand side evaluated at $w_\prec=\mathfrak{a}+k_\prec^2$
and $w_\succ=\mathfrak{b}-k_\succ^2$.  But then, by 
Proposition~\ref{prop:wJacobian} we have
\begin{equation}
\begin{split}
\hat{\mathscr{J}}(k_\prec,k_\succ)&=
\det
\left.\begin{bmatrix}\displaystyle\frac{\partial M}{\partial w_\prec} &
\displaystyle\frac{\partial M}{\partial w_\succ}\\\\
\displaystyle\frac{\partial I}{\partial w_\prec} &
\displaystyle\frac{\partial I}{\partial w_\succ}
\end{bmatrix}\right|_{\Delta=\emptyset}
\cdot\det\begin{bmatrix}
\displaystyle\frac{\partial w_\prec}{\partial k_\prec} &
\displaystyle\frac{\partial w_\prec}{\partial k_\succ}\\\\
\displaystyle\frac{\partial w_\succ}{\partial k_\prec} &
\displaystyle\frac{\partial w_\succ}{\partial k_\succ}
\end{bmatrix}\\
&= -4k_\prec k_\succ\left.\mathscr{J}(w_\prec,w_\succ)
\right|_{\Delta=\emptyset}\\
&=4\mathcal{D}\sqrt{-\mathfrak{a}-k_\prec^2}\sqrt{-\mathfrak{b}+k_\succ^2}
(\mathfrak{b}-\mathfrak{a}-k_\prec^2-k_\succ^2)\\
&\quad\quad\quad{}\cdot 
k_\prec H_{\Delta=\emptyset}(\mathfrak{a}+k_\prec^2)\cdot
k_\succ H_{\Delta=\emptyset}(\mathfrak{b}-k_\succ^2),
\end{split}
\end{equation}
where the notation $H_{\Delta=\emptyset}(w)$ specifies that the formula 
\eqref{eq:wgeneralGdef} is to be interpreted in the case $\Delta=\emptyset$.
Taking the limit $k_\prec\downarrow 0$ and $k_\succ\downarrow 0$
and recalling that $\mathfrak{ab}=1$ gives
\begin{equation}
\hat{\mathscr{J}}(0,0)=4\mathcal{D}_0 (\mathfrak{b}-\mathfrak{a})\cdot
\lim_{k_\prec,k_\succ\downarrow 0} k_\prec H_{\Delta=\emptyset}
(\mathfrak{a}+k_\prec^2)\cdot
\lim_{k_\prec,k_\succ\downarrow 0} k_\succ H_{\Delta=\emptyset}
(\mathfrak{b}-k_\succ^2).
\label{eq:Jhatintermediate}
\end{equation}
By applying elementary contour deformations to \eqref{eq:wgeneralGdef} 
we see that for any sufficiently small $d>0$,
\begin{equation}
\lim_{k_\prec,k_\succ\downarrow 0} k_\prec H_{\Delta=\emptyset}
(\mathfrak{a}+k_\prec^2) = -\frac{1}{\pi}\lim_{k_\prec,k_\succ\downarrow 0}
\frac{k_\prec}{\sqrt{-\mathfrak{a}-k_\prec^2}}\int_{\mathfrak{a}-d}^{\mathfrak{a}}
\frac{\theta_0'(\xi)\sqrt{-\xi}\,d\xi}{R(\xi)(\xi-\mathfrak{a}-k_\prec^2)}
\end{equation}
and
\begin{equation}
\lim_{k_\prec,k_\succ\downarrow 0}k_\succ H_{\Delta=\emptyset}
(\mathfrak{b}-k_\succ^2) = -\frac{1}{\pi}
\lim_{k_\prec,k_\succ\downarrow 0}\frac{k_\succ}{\sqrt{-\mathfrak{b}+k_\succ^2}}
\int_{\mathfrak{b}+d}^{\mathfrak{b}}
\frac{\theta_0'(\xi)\sqrt{-\xi}\,d\xi}{R(\xi)(\xi-\mathfrak{b}+k_\succ^2)}.
\end{equation}
Now,
\begin{equation}
\mathfrak{a}-d<\xi<\mathfrak{a}\;\;\text{implies}\;\;R(\xi)(\xi-\mathfrak{a}-k_\prec^2)=
(\mathfrak{a}+k_\prec^2-\xi)^{3/2}\sqrt{\mathfrak{b}-k_\succ^2-\xi}
\end{equation}
and
\begin{equation}
\mathfrak{b}<\xi<\mathfrak{b}+d\;\;\text{implies}\;\;
R(\xi)(\xi-\mathfrak{b}+k_\succ^2)=-(\xi-\mathfrak{b}+k_\succ^2)^{3/2}
\sqrt{\xi-\mathfrak{a}-k_\prec^2}
\end{equation}
where in both cases we mean the positive $3/2$ power on the right-hand side.
Therefore, by the substitution $\xi=\mathfrak{a}+k_\prec^2\zeta$ and a dominated
convergence argument,
\begin{equation}
\lim_{k_\prec,k_\succ\downarrow 0}
\frac{k_\prec}{\sqrt{-\mathfrak{a}-k_\prec^2}}\int_{\mathfrak{a}-d}^{\mathfrak{a}}
\frac{\theta_0'(\xi)\sqrt{-\xi}\,d\xi}
{R(\xi)(\xi-\mathfrak{a}-k_\prec^2)}=
-\frac{2\theta_0'(\mathfrak{a})}{\sqrt{\mathfrak{b}-\mathfrak{a}}}.
\end{equation}
Similarly, but now using the substitution $\xi=\mathfrak{b}-k_\succ^2\zeta$,
\begin{equation}
\lim_{k_\prec,k_\succ\downarrow 0}
\frac{k_\succ}{\sqrt{-\mathfrak{b}+k_\succ^2}}\int_{\mathfrak{b}+d}^{\mathfrak{b}}
\frac{\theta_0'(\xi)\sqrt{-\xi}\,d\xi}
{R(\xi)(\xi-\mathfrak{b}+k_\succ^2)}=-\frac{2\theta_0'(\mathfrak{b})}{\sqrt{\mathfrak{b}-\mathfrak{a}}}.
\end{equation}
Therefore,  \eqref{eq:Jhatintermediate} becomes
\begin{equation}
\hat{\mathscr{J}}(0,0)=\frac{16\mathcal{D}_0}{\pi^2}
\theta_0'(\mathfrak{a})\theta_0'(\mathfrak{b}).
\end{equation}
Since $E(\mathfrak{a})=E(\mathfrak{b})=-iG(0)/4$, 
\begin{equation}
\hat{\mathscr{J}}(0,0)=\frac{16\mathcal{D}_0}{\pi^2}
\Psi'(-iG(0)/4)^2
E'(\mathfrak{a})E'(\mathfrak{b}).
\end{equation}
Next, since $E'(\mathfrak{a})E'(\mathfrak{b})=-(\mathfrak{a}+\mathfrak{b}+2)/64$ and $\lambda=iv/4$,
\begin{equation}
\hat{\mathscr{J}}(0,0)=\frac{4\mathcal{D}_0}{\pi^2}\left[\left.
\frac{d}{dv}\Psi(iv/4)\right|_{v=-G(0)}\right]^2(\mathfrak{a}+\mathfrak{b}+2).
\end{equation}
Finally, recalling the definitions \eqref{eq:abdef} of $a$ and $b$,  we obtain \eqref{eq:Jhatformula}.
\end{proof}

\begin{proposition}
\label{prop:inequalities}
There exists an open neighborhood $\mathscr{O}_\rotational^0$ of the origin $(0,0)$
in the $(x,t)$-plane such that the following hold true.
\begin{itemize}
\item
There are differentiable maps $k_\prec:\mathscr{O}_\rotational^0\to\mathbb{R}$
and $k_\succ:\mathscr{O}_\rotational^0\to\mathbb{R}$ uniquely characterized by 
the properties that  $k_\prec(0,0)=k_\succ(0,0)=0$ and 
\begin{equation}
\hat{M}(k_\prec(x,t),k_\succ(x,t),x,t)=
\hat{I}(k_\prec(x,t),k_\succ(x,t),x,t)=0,\quad (x,t)\in
\mathscr{O}_\rotational^0.
\end{equation}
Via the identification $w_\prec=\mathfrak{a}+k_\prec^2$ and 
$w_\succ=\mathfrak{b}-k_\succ^2$ and the relations 
$\mathfrak{p}=(w_\prec+w_\succ)/2$ and $\mathfrak{q}=(w_\prec-w_\succ)^2/4$,
we obtain a solution of the equations $M(\mathfrak{p}(x,t),\mathfrak{q}(x,t),x,t)=0$
and $I(\mathfrak{p}(x,t),\mathfrak{q}(x,t),x,t)=0$ corresponding to a configuration with
\begin{itemize}
\item
$\Delta=\emptyset$ when $k_\prec\ge 0$ and $k_\succ\ge 0$,
\item
$\Delta=P_N^{\prec\kink}$ or $\nabla=P_N^{\kink\succ}$ when
$k_\prec\le 0$ and $k_\succ\ge 0$,
\item
$\nabla=P_N^{\prec\kink}$ or $\Delta=P_N^{\kink\succ}$
when $k_\prec\ge 0$ and $k_\succ\le 0$,
\item
$\nabla=\emptyset$ when $k_\prec\le 0$ and $k_\succ\le 0$.
\end{itemize}
Also, the functions $k_\prec(x,t)$ and $k_\succ(x,t)$ 
satisfy
\begin{equation}
\frac{\partial k_\prec}{\partial t}(0,0)<0 \quad\text{and}\quad
\frac{\partial k_\succ}{\partial t}(0,0)>0
\label{eq:dkdts}
\end{equation}
and
\begin{equation}
\frac{\partial k_\prec}{\partial x}(0,0)>0\quad\text{and}\quad
\frac{\partial k_\succ}{\partial x}(0,0)>0.
\label{eq:dkdxs}
\end{equation}
\item Let $\Delta$ and $\nabla$ be chosen according to the signs of
$k_\prec(x,t)$ and $k_\succ(x,t)$ as above.  Then, there
is a Schwartz-symmetric closed curve transversely
  intersecting the real axis (only) at $w=1$ and $w=w^+$ (see below) 
such that with $\beta$ chosen as the
  union of this curve with the interval
  $[w_\prec(x,t),w_\succ(x,t)]$ and $\gamma$ chosen as
  the union of closed intervals $[\mathfrak{a},w_\prec(x,t)]$ and
  $[w_\succ(x,t),\mathfrak{b}]$ (recall that the local orientation of these
  contours depends on choice of $\Delta$),
there is for each
  $(x,t)\in\mathscr{O}_\rotational^0$ an analytic function
  $g:\mathbb{C}\setminus (\beta\cup\mathbb{R}_+)\to\mathbb{C}$
  well-defined by Proposition~\ref{prop:wgbasicproperties}, 
with associated
  functions $\theta:\vec{\beta}\cup\vec{\gamma}\to\mathbb{C}$ and
  $\phi:\vec{\beta}\cup\vec{\gamma}\to\mathbb{C}$ defined by
  \eqref{eq:wthetaphidef}, so that the following hold:
\begin{itemize}
\item
The function $\phi$ satisfies $\phi(\xi)<0$ for $\xi\in\vec{\gamma}\cap\Sigma^\nabla$ and $\phi(\xi)>0$ for $\xi\in\vec{\gamma}\cap\Sigma^\Delta$.
Moreover, $\phi(\xi)$ is bounded away from zero for $\xi\in\vec{\gamma}$
except in neighborhoods of $w_\prec$ and $w_\succ$ (which are
endpoints of $\vec{\gamma}$).
\item The function $\theta(\xi)$ is real and nondecreasing
  (nonincreasing) with orientation for
  $\xi\in\vec{\beta}\cap\Sigma^\nabla$ (for
  $\xi\in\vec{\beta}\cap\Sigma^\Delta$).  Moreover, $\theta'(\xi)$ is
  bounded away from zero except in neighborhoods of
  $\xi=w_\prec$ and $\xi=w_\succ$ (endpoints of $\beta$)
  and the simple root $\xi=w^+$ of $H$ converging to $\xi=-1$ as
  $(x,t)\to (0,0)$.
\item The function $H(\xi)=H(\xi;\mathfrak{p}(x,t),\mathfrak{q}(x,t),x,t)$ 
is bounded away from zero for $\xi\in\beta\cup\gamma$ except in a
neighborhood of $\xi=w^+$, a point converging to $\xi=-1$ as
$(x,t)\to (0,0)$, where $H(\xi)$ has a simple zero.
\end{itemize}
\end{itemize}
\label{prop:origin}
\end{proposition}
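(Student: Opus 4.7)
The plan is to apply the Implicit Function Theorem to the analytic map $(k_\prec,k_\succ)\mapsto (\hat{M},\hat{I})$ at $(k_\prec,k_\succ,x,t)=(0,0,0,0)$. First, note that at this point all four cases listed in the definition \eqref{eq:M0I0RSdef} reduce to the same configuration, namely $w_\prec=\mathfrak{a}$, $w_\succ=\mathfrak{b}$, $\mathfrak{p}=1-G(0)^2/2$, $\mathfrak{q}=\mathfrak{p}^2-1$. For this configuration Proposition~\ref{prop:utzero} gives $M=I=0$, so $\hat{M}(0,0,0,0)=\hat{I}(0,0,0,0)=0$. By Proposition~\ref{prop:sqrtjacobian}, $\hat{\mathscr{J}}(0,0)<0$, so the Implicit Function Theorem produces unique differentiable functions $k_\prec(x,t)$, $k_\succ(x,t)$ on a neighborhood $\mathscr{O}_\rotational^0$ of $(0,0)$ vanishing at the origin. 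The case of $\Delta$ or $\nabla$ to be chosen is then forced by the signs of $k_\prec(x,t)$ and $k_\succ(x,t)$ via \eqref{eq:M0I0RSdef}, which is the whole point of working on the Riemann surface $\Gamma_\prec\times\Gamma_\succ$.

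To establish the sign inequalities \eqref{eq:dkdts} and \eqref{eq:dkdxs}, I would apply Cramer's rule to the linear system obtained by implicit differentiation of $\hat{M}=\hat{I}=0$ with respect to $x$ and $t$. The necessary inputs are the partial derivatives of $\hat{M}$ and $\hat{I}$ at the origin. From \eqref{eq:HatM} one sees immediately that $\hat{M}(0,0,x,t)=2x$ (using $\mathfrak{a}\mathfrak{b}=1$ and the vanishing of the two integral terms at $k_\prec=k_\succ=0$), giving $\partial_x\hat{M}\big|_0=2$ and $\partial_t\hat{M}\big|_0=0$. From \eqref{eq:HatI} together with \eqref{eq:HatH}, the explicit $(x,t)$-dependence of $\hat{I}$ enters only through the combination $x-t$, so $\partial_t\hat{I}\big|_0=-\partial_x\hat{I}\big|_0$. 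The sign of $\partial_x\hat{I}\big|_0$ can then be extracted by a contour-phase calculation along $\beta\cap\mathbb{C}_+$ analogous to the one below \eqref{eq:dPidt2}. Combining these with $\hat{\mathscr{J}}(0,0)<0$ and the explicit values of $\partial_{k_\prec}\hat{M}\big|_0, \partial_{k_\succ}\hat{M}\big|_0$ (readily computed from \eqref{eq:HatM} as positive multiples of $\theta_0'(\mathfrak{a})$ and $\theta_0'(\mathfrak{b})$ via the same substitutions used in Proposition~\ref{prop:sqrtjacobian}) yields \eqref{eq:dkdts} and \eqref{eq:dkdxs}.

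The construction of the contours $\beta$ and $\gamma$ proceeds as in Proposition~\ref{prop:tneq0continuegeneral}. For each $(x,t)\in\mathscr{O}_\rotational^0$ I would define the nonreal portion of $\beta\cap\mathbb{C}_+$ as the trajectory of the autonomous initial-value problem \eqref{eq:betavectorfield} with initial data $\xi(0)=1$, in which $H(1)$ is interpreted as $H(1_+)$. The vector field depends continuously on $(x,t)$ and at the origin reduces (by Proposition~\ref{prop:wsolntzeroproperties}) to the one whose integral curve through $\xi=1$ is the upper half of the unit circle terminating at $\xi=-1$. A perturbation argument, in which the critical points of the vector field in $\Omega^\circ$ are $w^+$ (the simple zero of $H$) and the two roots of $R(w)^2$, and in which the Melnikov-type integral condition $\hat{I}=0$ ensures the trajectory reaches $w^+$ rather than escaping to infinity in finite time, produces the required Schwartz-symmetric closed curve. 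Appending the real interval $[w_\prec,w_\succ]$ yields $\beta$; the gap $\gamma$ is the union of the intervals $[\mathfrak{a},w_\prec]$ and $[w_\succ,\mathfrak{b}]$, with local orientation determined by the case of $\Delta$.

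The sign and boundedness-away-from-zero properties of $\phi$, $\theta$, and $H$ on $\beta\cup\gamma$ then follow by continuity from the $(x,t)=(0,0)$ case of Proposition~\ref{prop:wsolntzeroproperties}, after possibly shrinking $\mathscr{O}_\rotational^0$. The principal technical obstacle is the sign-of-$\phi$ assertion: when $k_\prec$ (or $k_\succ$) changes sign, the corresponding real arc of $\gamma$ reverses its orientation in passing between, say, $\Delta=\emptyset$ and $\nabla=P_N^{\prec\kink}$, which flips the sign of $\phi=\int R H\,d\xi$ along that arc. This is precisely why the statement splits the sign assertion into $\vec{\gamma}\cap\Sigma^\nabla$ versus $\vec{\gamma}\cap\Sigma^\Delta$. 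One must verify that the formulae defining $g$ in the four cases are globally compatible (this is the content of the monodromy analysis preceding Proposition~\ref{prop:sqrtjacobian}) and that $|\phi|$ is uniformly bounded below on any compact subarc of $\gamma$ disjoint from the endpoints $w_\prec,w_\succ$; the latter is obtained by combining the estimate $\phi'=RH$ with the fact that $H$ remains nonzero on $\gamma$ except at $w^+$ for $(x,t)$ small.
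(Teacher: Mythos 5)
Your application of the Implicit Function Theorem via Proposition~\ref{prop:sqrtjacobian} is correct, and your explicit computations $\hat{M}(0,0,x,t)=2x$ (so $\partial_x\hat{M}|_0=2$, $\partial_t\hat{M}|_0=0$) and $\partial_t\hat{I}|_0=-\partial_x\hat{I}|_0$ are a useful concretization of what the paper merely alludes to as ``arguments similar to those used in the proof of Proposition~\ref{prop:sqrtjacobian}.'' The construction of $\beta$ via the autonomous initial-value problem \eqref{eq:betavectorfield} and the perturbation of the simple zero $w^+$ of $H$ from $-1$ also match the paper's strategy. (Note, though, that since $\theta_0'(\mathfrak{a})>0$ while $\theta_0'(\mathfrak{b})<0$, your phrase ``positive multiples of $\theta_0'(\mathfrak{a})$ and $\theta_0'(\mathfrak{b})$'' for $\partial_{k_\prec}\hat{M}|_0$ and $\partial_{k_\succ}\hat{M}|_0$ is imprecise; both partial derivatives have the same sign, so the coefficients carry opposite signs.)

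There is, however, a genuine gap in your treatment of the sign-of-$\phi$ assertion on $\gamma$. You propose that it ``follows by continuity from the $(x,t)=(0,0)$ case of Proposition~\ref{prop:wsolntzeroproperties}, after possibly shrinking $\mathscr{O}_\rotational^0$.'' But at $(x,t)=(0,0)$ the gap contour $\gamma$ has collapsed to the two-point set $\{\mathfrak{a},\mathfrak{b}\}$, so there is no nondegenerate arc of $\gamma$ at the base point from which continuity can propagate a sign. The same degeneracy recurs all along the curves $t=t_\pm(x)$ (where $k_\prec$ or $k_\succ$ vanishes) that necessarily pass through $\mathscr{O}_\rotational^0$. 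A continuity argument has nothing to say there, and your observation about the orientation of a real arc flipping when one $k$-coordinate changes sign sidesteps rather than resolves the degeneracy. Moreover, ``$H$ remains nonzero on $\gamma$'' is not on its own enough: to conclude the sign of $\phi$ and its boundedness away from zero, you need the \emph{sign} of $H$ on each arc of $\gamma$ with a lower bound on $|H|$ that is uniform in $(x,t)\in\mathscr{O}_\rotational^0$, even as the arcs of $\gamma$ shrink toward the fixed endpoints $\mathfrak{a}$, $\mathfrak{b}$.

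The paper resolves this by a quantitative analysis rather than by continuity. It establishes (via the contour-deformed representation of $H$ and a term-by-term estimate, equations \eqref{eq:HwnearminusM}--\eqref{eq:Hin}) that $H(w)$ diverges logarithmically as $w\downarrow\mathfrak{a}$ and as $w\uparrow\mathfrak{b}$, with a sign governed by the case indicators $\sigma_\prec$, $\sigma_\succ$ determined by the choice of $\Delta$, and that the remainder in these expansions is $\bo(1)$ \emph{uniformly} for $(x,t)\in\mathscr{O}_\rotational^0$. The crucial interplay is that as $(x,t)\to(0,0)$ the interval $[\mathfrak{a},w_\prec]$ shrinks (since $w_\prec-\mathfrak{a}=k_\prec^2\to 0$), so for every $w$ in that interval one has $|\log(w-\mathfrak{a})|\ge|\log(w_\prec-\mathfrak{a})|\to\infty$, and the logarithmic term dominates the uniform $\bo(1)$ error throughout the whole interval once $\mathscr{O}_\rotational^0$ is shrunk. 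Since $R<0$ on both real intervals of $\gamma$, this fixes the sign of $\phi'=RH$ on $\gamma$, and integrating from the band endpoints $w_\prec$, $w_\succ$ (where $\phi=0$) gives both the sign of $\phi$ and its boundedness away from zero at one stroke. Your proposal should be amended to replace the continuity step by this explicit asymptotic analysis of $H$ near $\mathfrak{a}$ and $\mathfrak{b}$.
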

\begin{proof}
  By the Implicit Function Theorem, it follows from
  Proposition~\ref{prop:sqrtjacobian} that the equations $\hat{M}=0$
  and $\hat{I}=0$ may be solved uniquely near $(x,t)=(0,0)$ and
  $(k_\prec,k_\succ)=(0,0)$ for differentiable functions
  $k_\prec(x,t)$ and $k_\succ(x,t)$, which are both
  easily seen to be real-valued for real
  $(x,t)\in\mathscr{O}_\rotational^0$.  Sign changes in these two functions
  correspond to sheet exchanges on the Riemann surfaces
  $\Gamma_\prec$ and $\Gamma_\succ$, so from the definition
\eqref{eq:M0I0RSdef} we confirm that we are in fact solving
$M=I=0$ in various cases of choice of $\Delta$.  The inequalities
\eqref{eq:dkdts} follow from implicit differentiation of the equations
$\hat{M}=0$ and $\hat{I}=0$ with respect to $t$ at $(x,t)=(0,0)$
and $(k_\prec,k_\succ)=(0,0)$, and arguments similar to
those used in the proof of Proposition~\ref{prop:sqrtjacobian} to
compute the limiting values of various partial derivatives.  The
inequalities \eqref{eq:dkdxs} can be obtained similarly, but also may
be understood in the context of
Proposition~\ref{prop:tneq0continuegeneral} since increasing
(decreasing) $x$ with $t=0$ leads to an overlap of
$\mathscr{O}^0_\rotational$ with $\mathscr{O}^+_\rotational$ (with $\mathscr{O}^-_\rotational$)
and in the latter we have $\Delta=\emptyset$ ($\nabla=\emptyset$)
corresponding in the present context to $k_\prec$ and $k_\succ$
both positive (both negative).

The perturbation theory of the simple root $\xi=w^+$ of $H$ near
$\xi=-1$ and the proof of existence of the contour
$\beta\cap\mathbb{C}_+$ connecting $\xi=1$ with $\xi=w^+$ along
which $\theta(\xi)$ is real and monotone
both work exactly the same as in the proof
of Proposition~\ref{prop:tneq0continuegeneral}, although we should
point out that given $(x,t)\in\mathscr{O}_\rotational^0$, the contour
$\beta\cap\mathbb{C}_+$ will generally be a different curve, and
$w^+$ a different negative real value, for different allowed
choices of $\Delta$ (this situation is relevant if and only if
$k_\prec(x,t)k_\succ(x,t)\le 0$).  
The desired monotonicity of $\theta(\xi)$ along the intervals
$(w_\prec,w^+)$ and $(w^+,w_\succ)$ also follows
by simple perturbation arguments from $t=0$.  

Of course when $k_\prec(x,t)k_\succ(x,t)=0$ one or both
intervals of $\gamma$ have collapsed to points, so it remains to show
that $\phi$, necessarily real for
$\xi\in\vec{\gamma}\subset\mathbb{R}_-$, actually satisfies the desired
inequalities in $\gamma$ when the degenerate configuration
$\gamma=\{\mathfrak{a},\mathfrak{b}\}$ present at $(x,t)=(0,0)$ is unfolded.  This
will follow from an analysis of $H(w)$ valid when $w$ is near either
$\mathfrak{a}$ or $\mathfrak{b}$ and $(x,t)$ is near $(0,0)$.  For some small $d>0$
fixed, 
\begin{equation}
H(w)=-\frac{\sigma_\prec}{\pi\sqrt{-w}}\int_{\mathfrak{a}-d}^{\mathfrak{a}}
\frac{\theta_0'(\xi)\sqrt{-\xi}\,d\xi}{R(\xi;\mathfrak{p},\mathfrak{q})(\xi-w)}
+ \bo(1),\quad w\downarrow \mathfrak{a},
\label{eq:HwnearminusM}
\end{equation}
where $\sigma_\prec=1$ if $\Delta=\emptyset$,
$\Delta=P_N^{\kink\succ}$, or $\nabla=P_N^{\prec\kink}$
and $\sigma_\prec=-1$ if $\nabla=\emptyset$,
$\nabla=P_N^{\kink\succ}$, or
$\Delta=P_N^{\prec\kink}$, and where the error term $\bo(1)$ is uniform
for $(x,t)\in\mathscr{O}_\rotational^0$.  By choosing $d$ small enough, we have
from Proposition~\ref{prop:theta0} and $E(\mathfrak{a})=-iG(0)/4$ that 
\begin{equation}
\theta_0'(\xi)
\ge \frac{1}{2}
\theta_0'(\mathfrak{a})
 >0,\quad \mathfrak{a}-d\le\xi\le \mathfrak{a},
\end{equation}
Also, since $\xi$ lies to the left of both $w_\prec$ and $w_\succ$
where $R<0$,
\begin{equation}
-\frac{\sqrt{-\xi}}{R(\xi;\mathfrak{p},\mathfrak{q})}\ge -\frac{\sqrt{-\mathfrak{a}}}{R(\xi;\mathfrak{p},\mathfrak{q})}
\ge \frac{\sqrt{-\mathfrak{a}}}{\sqrt{w_\prec-\mathfrak{a}+d}\sqrt{w_\succ-\mathfrak{a}+d}}
\ge \frac{\sqrt{-\mathfrak{a}}}{\mathfrak{b}-\mathfrak{a}+d}>0
\end{equation}
holds in the same interval $\mathfrak{a}-d\le\xi\le \mathfrak{a}$.  
So from \eqref{eq:HwnearminusM} we have
\begin{equation}
-\sigma_\prec H(w)\ge \frac{\theta_0'(\mathfrak{a})\sqrt{-\mathfrak{a}}|\log(w-\mathfrak{a})|}{2\pi (\mathfrak{b}-\mathfrak{a}+d)\sqrt{-w}}
 +\bo(1)>0,\quad
w\downarrow \mathfrak{a}.
\label{eq:Hout}
\end{equation}
Similarly, 
\begin{equation}
H(w)=-\frac{\sigma_\succ}{\pi\sqrt{-w}}
\int_{\mathfrak{b}}^{\mathfrak{b}+d}
\frac{\theta_0'(\xi)\sqrt{-\xi}\,d\xi}{R(\xi;\mathfrak{p},\mathfrak{q})(\xi-w)}+\bo(1),\quad
w\uparrow \mathfrak{b},
\label{eq:HwnearminusoneoverM}
\end{equation}
where $\sigma_\succ=1$ if $\nabla=\emptyset$,
$\nabla=P_N^{\prec\kink}$, or $\Delta=P_N^{\kink\succ}$
and $\sigma_\succ=-1$ if $\Delta=\emptyset$, 
$\Delta=P_N^{\prec\kink}$, or $\nabla=P_N^{\kink\succ}$,
and where the error term $\bo(1)$ is uniform for $(x,t)\in\mathscr{O}_\rotational^0$.
Over the interval of integration we have from Proposition~\ref{prop:theta0}
and the fact that $E(\mathfrak{b})=-iG(0)/4$ that
\begin{equation}
-\theta_0'(\xi)
\ge -\frac{1}{2}
\theta_0'(\mathfrak{b})
>0,\quad
\mathfrak{b}\le\xi\le \mathfrak{b}+d,
\end{equation}
and since now $\xi$ lies to the right of both $w_\prec$ and 
$w_\succ$ in a region where again $R<0$,
\begin{equation}
-\frac{\sqrt{-\xi}}{R(\xi;\mathfrak{p},\mathfrak{q})}\ge 
-\frac{\sqrt{-\mathfrak{b}-d}}{R(\xi;\mathfrak{p},\mathfrak{q})}\ge
\frac{\sqrt{-\mathfrak{b}-d}}{\sqrt{d+\mathfrak{b}-w_\prec}
\sqrt{d+\mathfrak{b}-w_\succ}}\ge\frac{\sqrt{-\mathfrak{b}-d}}{\mathfrak{b}-\mathfrak{a}+d}>0
\end{equation}
also holds for 
$\mathfrak{b}\le\xi\le \mathfrak{b}+d$, so
\begin{equation}
-\sigma_\succ H(w)\ge -\frac{\theta_0'(\mathfrak{b})\sqrt{-\mathfrak{b}-d}|\log(\mathfrak{b}-w)|}{2\pi (\mathfrak{b}-\mathfrak{a}+d)\sqrt{-w}}
 +
\bo(1),\quad w\uparrow \mathfrak{b}.
\label{eq:Hin}
\end{equation}
The inequality \eqref{eq:Hout} shows that $\sigma_\prec H(w)$ is
large and negative for $w$ to the right of $w=\mathfrak{a}$, while
\eqref{eq:Hin} shows that $\sigma_\succ H(w)$ is large and
negative for $w$ to the left of $w=\mathfrak{b}$, with both statements
holding uniformly for $(x,t)\in\mathscr{O}_\rotational^0$.  Therefore, since
$R(\xi;\mathfrak{p},\mathfrak{q})<0$ both for $\mathfrak{a}\le\xi<w_\prec$ and for
$w_\succ<\xi\le \mathfrak{b}$, we learn that
$\phi'(\xi)=H(\xi)R(\xi;\mathfrak{p},\mathfrak{q})$ has the same sign as
$\sigma_\prec$ for $\mathfrak{a}\le\xi<w_\prec$ and has the same
sign as $\sigma_\succ$ for $w_\succ<\xi\le \mathfrak{b}$.
Since $\phi(\xi)=0$ both when $\xi=w_\prec$ and when
$\xi=w_\succ$, we obtain the desired inequalities on $\phi$ for
$\xi\in\vec{\gamma}$.
\end{proof}

Note that via the construction of $g$ for each
$(x,t)\in\mathscr{O}_\rotational^0$ and
Proposition~\ref{prop:wgbasicproperties},
Proposition~\ref{prop:origin} effectively guarantees the existence of
a real constant $\Phi=\Phi(x,t)$ that might at first sight
appear to depend upon various artificial details of the choice of
$\Delta$.  However, according to Proposition~\ref{prop:phasevelocity},
the partial derivatives $\partial\Phi/\partial x$ and
$\partial\Phi/\partial t$ are necessarily given in terms of the
functions $u:\mathscr{O}_\rotational^0\to\mathbb{R}$ and
$v:\mathscr{O}_\rotational^0\to\mathbb{R}$, and this makes $\Phi(x,t)$
well-defined for $(x,t)\in\mathscr{O}_\rotational^0$ up to a constant.  The
constant is then determined by the fact that $\Phi(0,0)=0$ as
guaranteed by Proposition~\ref{prop:wsolntzeroproperties} by taking
limits from nonzero $x$ at $t=0$.  Therefore,
$\Phi:\mathscr{O}_\rotational^0\to\mathbb{R}$ is indeed a well-defined
differentiable function that also agrees with corresponding functions
defined in terms of $g$ in $\mathscr{O}_\rotational^\pm$ where these domains
overlap.

The functions $u:\mathscr{O}_\rotational^0\to\mathbb{R}$ and $v:\mathscr{O}_\rotational^0\to\mathbb{R}$
agree with those previously defined in the overlap region $\mathscr{O}_\rotational^0\cap(\mathscr{O}_\rotational^+\cup\mathscr{O}_\rotational^-)$, and this allows us to extend the definitions \eqref{eq:phasevelocity} of 
$n_\mathrm{p}(x,t)$ and \eqref{eq:Euv} of $\mathcal{E}(x,t)$ in a consistent way to the full union
$\mathscr{O}_\rotational^0\cup\mathscr{O}_\rotational^+\cup\mathscr{O}_\rotational^-$.  The region $S_\rotational$
involved in the formulation of Theorem~\ref{thm:rotational} is exactly this union with two curves $t=t_\pm(x)$
omitted.  
The curve $t=t_+(x)$ is obtained by solving the equation $k_\prec(x,t)=0$ for $t$, which is possible
near the origin according to the inequalities \eqref{eq:dkdts}--\eqref{eq:dkdxs}, and the signs of $t-t_+(x)$ and $k_\prec(x,t)$
are opposite.  Similarly the curve $t=t_-(x)$ is obtained from the equation
$k_\succ(x,t)=0$, and the signs of $t-t_-(x)$ and $k_\succ(x,t)$ coincide.
Therefore along the curve $t=t_+(x)$, we have $w_\prec=\mathfrak{p}-\sqrt{\mathfrak{q}}=\mathfrak{a}$, while along the curve
$t=t_-(x)$, we have $w_\succ=\mathfrak{p}+\sqrt{\mathfrak{q}}=\mathfrak{b}$.  The inequalities \eqref{eq:rotationalineq1}--\eqref{eq:rotationalineq2}
are a consequence of the inequalities $w_\prec=\mathfrak{a}+k_\prec^2>\mathfrak{a}$ and $w_\succ=\mathfrak{b}-k_\succ^2<\mathfrak{b}$ that hold for $(x,t)\in S_\rotational$ because $k_\prec(x,t)$ and $k_\succ(x,t)$ are real.

\subsection{Solution of the Whitham modulation equations}
\label{sec:Whitham}
We are now in a position to describe how the functions $\mathfrak{p}=\mathfrak{p}(x,t)$ and
$\mathfrak{q}=\mathfrak{q}(x,t)$ relate to the formally-derived Whitham modulation equations described in the introduction.
\begin{proposition}  Let $\mathfrak{p}=\mathfrak{p}(x,t)$ and $\mathfrak{q}=\mathfrak{q}(x,t)$
be the unique solutions of the equations $M=I=0$ matching the given initial data as described in
Proposition~\ref{prop:tneq0continuegeneral} (or Proposition~\ref{prop:origin} for $(x,t)$ near $(0,0)$).
Let $n_\mathrm{p}=n_\mathrm{p}(x,t)$ and  $\mathcal{E}=\mathcal{E}(x,t)$ be calculated explicitly
from $\mathfrak{p}$ and $\mathfrak{q}$ via \eqref{eq:phasevelocity} and \eqref{eq:Euv} respectively.
Then these fields satisfy the Whitham modulation equations \eqref{eq:Whithamsystem_rewrite} for
superluminal rotational waves (when $\mathfrak{p}$ and $\mathfrak{q}$ correspond to case \rotational, that is, for $(x,t)\in S_\rotational$)
and for superluminal librational waves (when $\mathfrak{p}$ and $\mathfrak{q}$ correspond to case \librational, that is, for $(x,t)\in S_\librational$).
\label{prop:Whitham}
\end{proposition}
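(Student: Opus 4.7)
The plan is to exhibit $w_0, w_1$, the two roots of $R(w;\mathfrak{p},\mathfrak{q})^2$, as joint Riemann invariants for both the implicit system $M=I=0$ defining $(\mathfrak{p}(x,t), \mathfrak{q}(x,t))$ and the Whitham system \eqref{eq:Whithamsystem_rewrite}, and to verify that the corresponding characteristic velocities agree.

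For the first diagonalization, I would differentiate $M(\mathfrak{p}(x,t),\mathfrak{q}(x,t),x,t)\equiv 0$ and $I(\mathfrak{p}(x,t),\mathfrak{q}(x,t),x,t)\equiv 0$ with respect to $x$ and $t$.  Proposition~\ref{prop:partials} gives $\partial_{w_k}M$ and $\partial_{w_k}I$ in factored form, and the explicit partials are $\partial_x M|_{\mathrm{expl}} = 1+\Pi^{-1/2}$, $\partial_t M|_{\mathrm{expl}} = 1-\Pi^{-1/2}$ (from \eqref{eq:wM0Cdef}) and $\partial_x I|_{\mathrm{expl}} = -\partial_t I|_{\mathrm{expl}} = -\rho/(4\sqrt{\Pi})$, where the last equality follows from \eqref{eq:MtIt} together with the observation that the $(x,t)$-dependence of $H$ enters only through $(x-t)/(w\sqrt{\Pi})$; here $\Pi := \mathfrak{p}^2-\mathfrak{q} = w_0 w_1$ and $\rho := \mathrm{Re}\int_{\beta\cap\mathbb{C}_+} R_+(\xi)[\xi\sqrt{-\xi}]^{-1}\,d\xi$.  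Cramer's rule then produces
\begin{equation*}
c_k(w_0, w_1) := -\frac{\partial_t w_k}{\partial_x w_k} = \frac{(\sqrt{\Pi}-1)\,\mathcal{I}_j + 2\rho}{(\sqrt{\Pi}+1)\,\mathcal{I}_j - 2\rho},\qquad \{j,k\}=\{0,1\},
\end{equation*}
where $\mathcal{I}_j$ is the contour integral appearing in \eqref{eq:I0partials}; the crucial simplification is that the common factor $\sqrt{-w_k}\,H(w_k)$ in $\partial_{w_k}M$ and $\partial_{w_k}I$ cancels, so each $c_k$ depends only on $(w_0, w_1)$, and the system is diagonal: $\partial_t w_k + c_k(w_0, w_1)\,\partial_x w_k = 0$.

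For the second diagonalization, the characteristic velocities of \eqref{eq:Whithamsystem_rewrite}, extracted from its characteristic equation, are
\begin{equation*}
c_\pm = \frac{n_\mathrm{p}\,[JJ'' + J'^2] \pm (1-n_\mathrm{p}^2)\,J'\sqrt{-JJ''}}{n_\mathrm{p}^2\,JJ'' + J'^2},
\end{equation*}
real and distinct in case \rotational\ and complex-conjugate in case \librational.  To match $c_\pm$ with $c_k$, I would perform the substitutions \eqref{eq:wsBsubstKm} (case \librational) or \eqref{eq:wsKsubstKm} (case \rotational) in $\mathcal{D}$, $\mathcal{I}_k$, and $\rho$, reducing each to a closed-form combination of $K(m)$ and $E(m)$ with $m$ given by \eqref{eq:mlibrationalE} or \eqref{eq:mrotationalE}.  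In parallel, the substitution $\sin(\phi/2) = \sqrt{m}\sin\theta$ (with a Landen-type modification in case \rotational) in the definitions \eqref{eq:Ilibrational}, \eqref{eq:Irotational} expresses $J(\mathcal{E})$, $J'(\mathcal{E})$, $J''(\mathcal{E})$ in terms of $K(m)$, $E(m)$ via the standard formulas $dK/dm = [E-(1-m)K]/[2m(1-m)]$ and $dE/dm = (E-K)/(2m)$, combined with $dm/d\mathcal{E}$ computed from \eqref{eq:mlibrationalE} or \eqref{eq:mrotationalE}.  Comparison then reduces $c_k = c_\pm$ to a finite collection of elliptic-integral identities.

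The principal obstacle is executing this elliptic reduction without sign errors.  One must track the branches of $R_+(\xi)$ on the distinct components of $\beta$ (particularly in case \rotational, where $\beta$ has a self-intersection at $w^+$), the sign $\mu = \mathrm{sgn}(\omega^2-k^2)$, and the branch selector $\sigma = \pm 1$ from the formulas preceding \eqref{eq:Whithamsystem_general}; the Legendre relation $E(m)K(1-m)+E(1-m)K(m)-K(m)K(1-m)=\pi/2$ will likely enter in the final simplification.  As a partial shortcut, one can verify the nonlinear dispersion relation \eqref{eq:NLdisprelcommon} directly from the identity $\omega^2 - k^2 = \pi^2/(4\mathcal{D}^2\sqrt{\Pi})$ (an immediate consequence of \eqref{eq:kappapartials}) combined with the elliptic reduction of $\int d\phi/\sqrt{\cos\phi\pm\mathcal{E}}$, and observe that the conservation of waves \eqref{eq:conswaves} is automatic from $k = \partial_x\Phi$, $\omega = -\partial_t\Phi$ via Proposition~\ref{prop:phasevelocity}; only the matching of one characteristic velocity then remains to establish the variational modulation equation \eqref{eq:WhithamII} and hence \eqref{eq:Whithamsystem_rewrite}.
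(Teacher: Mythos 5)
Your plan follows the same overall architecture as the paper's proof: implicitly differentiate $M=I=0$, use the factored partial derivatives from Proposition~\ref{prop:partials} to cancel the $H$-dependent prefactor, obtain a diagonal quasilinear system for the roots $w_0,w_1$ with velocity functions depending only on $(w_0,w_1)$, and then match these to the Whitham characteristic speeds by elliptic reduction. The one genuine difference in emphasis is in the matching step. You propose to compute the Whitham eigenvalues $c_\pm$ explicitly from the characteristic polynomial (hence dealing with $\sqrt{-JJ''}$) and compare them to the $c_k$ obtained from Cramer's rule. The paper instead \emph{diagonalizes} the Whitham system structurally: it defines the change of variables $(w_0,w_1)\mapsto(n_\mathrm{p},\mathcal{E})$, computes its Jacobian $\mathbf{S}$, and shows that $\mathbf{S}^{-1}\mathbf{A}\mathbf{S}$ is diagonal as a direct consequence of the single second-order ODE $J''(\mathcal{E})=J(\mathcal{E})/[4(1-\mathcal{E}^2)]$ (equation~\eqref{eq:JODE}), which both $J=I_\librational$ and $J=I_\rotational$ are shown to satisfy. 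This ODE is the structural fact your sketch does not isolate but would presumably rediscover in the course of computing $J''$ from $K(m)$ and $E(m)$; making it explicit up front is what lets the paper avoid the branch-tracking of $\sqrt{-JJ''}$ you flag as the ``principal obstacle.'' The final elliptic matching in the paper is then cleanly reduced to two identities, $CU=J(\mathcal{E})$ and $CV=J'(\mathcal{E})$ for a single symmetric normalizing factor $C(w_0,w_1)$, and does \emph{not} invoke the Legendre relation you anticipate needing. Two small cautions: your displayed formula for $c_k$ is off by an overall sign (carrying through the Cramer's-rule computation gives $c_k = -\,[(\sqrt{\Pi}-1)\mathcal{I}_j+2\rho]/[(\sqrt{\Pi}+1)\mathcal{I}_j-2\rho]$), and the ``partial shortcut'' you mention at the end (verify the dispersion relation plus conservation of waves, then match one remaining equation) is a different and potentially shorter path to the result, but it is not the route the paper takes.
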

The rest of this section is devoted to the proof of this proposition.
In fact, the roots $w_0$ and $w_1$ of the quadratic $R(w;\mathfrak{p},\mathfrak{q})^2$, when expressed in terms of $n_\mathrm{p}$ and $\mathcal{E}$, will turn out to be \emph{Riemann invariants} for the system \eqref{eq:Whithamsystem_rewrite}.  

\subsubsection{Diagonalization of the Whitham system}
We begin by writing the Whitham system \eqref{eq:Whithamsystem_rewrite} in terms of standard
complete elliptic integrals by evaluating $J(\mathcal{E})$ given by \eqref{eq:Jdefine} in the two cases of 
modulated superluminal wavetrains of librational and rotational types.  The condition of superluminality ($\omega^2>k^2$) implies that for librational wavetrains we have $J(\mathcal{E})=I_\librational(-\mathcal{E})$ and for rotational wavetrains we have $J(\mathcal{E})=I_\rotational(-\mathcal{E})$.

For superluminal rotational waves, we recall the definition \eqref{eq:Irotational} and obtain:
\begin{equation}
J(\mathcal{E})=I_\rotational(-\mathcal{E})=\frac{1}{\pi\sqrt{2}}\int_{-\pi}^\pi\sqrt{\cos(\phi)+\mathcal{E}}\,d\phi=
\frac{\sqrt{2}}{\pi}\int_0^\pi\sqrt{\cos(\phi)+\mathcal{E}}\,d\phi,
\end{equation}
for $\mathcal{E}>1$.  By the substitution $\cos(\phi)=1-2z^2$, this becomes simply
\begin{equation}
J(\mathcal{E})=\frac{4}{\pi}\sqrt{\frac{1+\mathcal{E}}{2}}E\left(\frac{2}{1+\mathcal{E}}\right)\quad
\text{for superluminal rotational waves,}
\label{eq:Jrot}
\end{equation}
where $E(m)$ is the standard complete elliptic integral of the second kind:
\begin{equation}
E(m):=\int_0^1\frac{\sqrt{1-mz^2}}{\sqrt{1-z^2}}\,dz,\quad 0<m<1.
\label{eq:elliipticEdef}
\end{equation}

For superluminal librational waves, we recall instead the definition \eqref{eq:Ilibrational} and obtain:
\begin{equation}
J(\mathcal{E})=I_\librational(-\mathcal{E}) = \frac{\sqrt{2}}{\pi}\int_{-\cos^{-1}(-\mathcal{E})}^{\cos^{-1}(-\mathcal{E})}\sqrt{\cos(\phi)+\mathcal{E}}\,d\phi = \frac{2\sqrt{2}}{\pi}\int_0^{\cos^{-1}(-\mathcal{E})}
\sqrt{\cos(\phi)+\mathcal{E}}\,d\phi,
\end{equation}
for $-1<\mathcal{E}<1$.  By the substitution $\cos(\phi)=1-(1+\mathcal{E})z^2$, this becomes simply
\begin{equation}
J(\mathcal{E}) = \frac{8}{\pi}E\left(\frac{1+\mathcal{E}}{2}\right) -\frac{4}{\pi}(1-\mathcal{E})
K\left(\frac{1+\mathcal{E}}{2}\right)\quad\text{for superluminal librational waves,}
\label{eq:Jlib}
\end{equation}
where $K(m)$ is the standard complete elliptic integral of the first kind defined for $0<m<1$ by
\eqref{eq:ellipticKdef}.

Now recall the differential identities \cite{Akhiezer}
\begin{equation}
K'(m)=\frac{E(m)-(1-m)K(m)}{2m(1-m)},\quad E'(m)=\frac{E(m)-K(m)}{2m}.
\label{eq:KprimeEprime}
\end{equation}
Since in each case $J(\mathcal{E})$ is a linear combination of $K(m)$ and $E(m)$ and $m$ is a 
function of $\mathcal{E}$, it is clear that $K(m)$ and $E(m)$ can be eliminated between $J(\mathcal{E})$, $J'(\mathcal{E})$, and $J''(\mathcal{E})$ to yield a linear second-order differential equation satisfied
by $J(\mathcal{E})$.  We shall now show that this equation is the same in both cases.
In the case of superluminal rotational waves, the above formula for $E'(m)$ yields
\begin{equation}
J'(\mathcal{E})=\frac{1}{\pi}\sqrt{\frac{2}{1+\mathcal{E}}}K\left(\frac{2}{1+\mathcal{E}}\right)
\quad\text{for superluminal rotational waves,}
\label{eq:Jprimerot}
\end{equation}
and then differentiating again, eliminating $K$ and $E$, and comparing with \eqref{eq:Jrot} we obtain
\begin{equation}
J''(\mathcal{E}) = \frac{1}{4(1-\mathcal{E}^2)}J(\mathcal{E}).
\label{eq:JODE}
\end{equation}
Similarly, in the case of superluminal librational waves, we have
\begin{equation}
J'(\mathcal{E})=\frac{2}{\pi}K\left(\frac{1+\mathcal{E}}{2}\right)\quad\text{for superluminal
librational waves,}
\label{eq:Jprimelib}
\end{equation}
so after one further differentiation we eliminate $K$ and $E$ and compare with \eqref{eq:Jlib} to arrive again at exactly the same second-order linear differential equation \eqref{eq:JODE}.

Given two complex-conjugate or real and negative variables $w_0$ and $w_1$, we express $\mathcal{E}$ and $n_\mathrm{p}$ in terms of these by the relations
\begin{equation}
\mathcal{E}=-\frac{w_0+w_1}{2\sqrt{w_0w_1}},\quad n_\mathrm{p}=\frac{1-\sqrt{w_0w_1}}{1+\sqrt{w_0w_1}}.
\label{eq:Enpw0w1}
\end{equation}
It is obvious that if $w_0$ and $w_1$ are real and negative, then $\mathcal{E}>1$, and this corresponds
to the case of modulated superluminal rotational waves.  It is also clear that if $w_0$ and $w_1$ are complex conjugates, then $-1<\mathcal{E}<1$, and this corresponds to the case of modulated superluminal librational waves.
The Jacobian  of the map $(w_0,w_1)\mapsto (n_\mathrm{p},\mathcal{E})$ is 
\begin{equation}
\mathbf{S}:=\begin{bmatrix}\partial n_\mathrm{p}/\partial w_0 & \partial n_\mathrm{p}/\partial w_1\\
\partial\mathcal{E}/\partial w_0 & \partial\mathcal{E}/\partial w_1\end{bmatrix} = 
\begin{bmatrix}\displaystyle -\frac{w_1}{(1+\sqrt{w_0w_1})^2\sqrt{w_0w_1}} &
\displaystyle -\frac{w_0}{(1+\sqrt{w_0w_1})^2\sqrt{w_0w_1}}\\
\displaystyle\frac{w_1-w_0}{4w_0\sqrt{w_0w_1}} &\displaystyle
\frac{w_0-w_1}{4w_1\sqrt{w_0w_1}}\end{bmatrix}.
\end{equation}
Therefore, if we write the Whitham system in the form
\begin{equation}
\frac{\partial}{\partial t}\begin{bmatrix}n_\mathrm{p}\\\mathcal{E}\end{bmatrix}+
\mathbf{A}(n_\mathrm{p},\mathcal{E})\frac{\partial}{\partial x}\begin{bmatrix}
n_\mathrm{p}\\\mathcal{E}\end{bmatrix}=\mathbf{0},
\end{equation}
where $\mathbf{A}(n_\mathrm{p},\mathcal{E})$ is the coefficient matrix appearing in equation \eqref{eq:Whithamsystem_rewrite} (and of course we can equally well express this explicitly
in terms of $w_0$ and $w_1$ using \eqref{eq:Enpw0w1})
then we may equivalently write this as a system of equations for $w_0$ and $w_1$ as
\begin{equation}
\frac{\partial}{\partial t}\begin{bmatrix}w_0\\w_1\end{bmatrix}+\mathbf{S}^{-1}\mathbf{A}\mathbf{S}
\frac{\partial}{\partial x}\begin{bmatrix}w_0\\w_1\end{bmatrix}=\mathbf{0}.
\end{equation}
It is now a direct calculation to show that, \emph{as a consequence of the differential equation \eqref{eq:JODE}}, $\mathbf{S}^{-1}\mathbf{A}\mathbf{S}$ is a diagonal matrix, that is, the columns of
the Jacobian $\mathbf{J}$ are independent eigenvectors of $\mathbf{A}$.  Moreover, once 
$J''(\mathcal{E})$ is eliminated using \eqref{eq:JODE}, the denominator $\mathcal{N}(n_\mathrm{p},\mathcal{E})$
defined by \eqref{eq:calNdef} factors as a difference of squares, 
and the numerator of each of the diagonal elements of $\mathbf{S}^{-1}\mathbf{A}\mathbf{S}$
is divisible by exactly one of the two factors in $\mathcal{N}(n_\mathrm{p},\mathcal{E})$.  It follows that  the Whitham
system can be written in the diagonal form
\begin{equation}
\frac{\partial w_j}{\partial t} + c_j(w_0,w_1)\frac{\partial w_j}{\partial x}=0,\quad j=0,1
\end{equation}
where the characteristic velocities (eigenvalues of $\mathbf{A}$) are given by
\begin{equation}
c_j:=\frac{\sqrt{w_0w_1}(1+\sqrt{w_0w_1})J(\mathcal{E})+(w_j-w_{1-j})(1-\sqrt{w_0w_1})J'(\mathcal{E})}{\sqrt{w_0w_1}(1-\sqrt{w_0w_1})J(\mathcal{E})+(w_j-w_{1-j})(1+\sqrt{w_0w_1})J'(\mathcal{E})}.
\label{eq:cj}
\end{equation}
In other words, the variables $w_0$ and $w_1$ are Riemann invariants for the Whitham system \eqref{eq:Whithamsystem_rewrite}.

\subsubsection{Diagonal quasilinear system satisfied by the roots of $R(w)^2$}
We will now show that if $w_0(x,t)$ and $w_1(x,t)$ are obtained by solving the moment and integral conditions $M=I=0$, then they also satisfy a system of partial differential equations in Riemann-invariant form.  Implicitly differentiating $M$ and $I$ with respect to $x$ and $t$ gives
\begin{equation}
M_{w_0}w_{0,(x,t)} +M_{w_1}w_{1,(x,t)} + M_{(x,t)}=0\quad\text{and}\quad
I_{w_0}w_{0,(x,t)}+I_{w_1}w_{1,(x,t)} + I_{(x,t)}=0.
\end{equation}
Solving for the partial derivatives of $w_j$ with respect to $x$ and $t$ assuming that the Jacobian
determinant of $(M,I)$ with respect to $(w_0,w_1)$ is nonzero, we can easily confirm the identities
\begin{equation}
\frac{\partial w_j}{\partial t}+\hat{c}_j\frac{\partial w_j}{\partial x}=0,\quad j=0,1
\end{equation}
where
\begin{equation}
\hat{c}_j:=-\frac{I_{w_{1-j}}M_t - M_{w_{1-j}}I_t}{I_{w_{1-j}}M_x-M_{w_{1-j}}I_x},\quad j=0,1.
\label{eq:lambdaj}
\end{equation}

We will now show that while the various partial derivatives appearing in \eqref{eq:lambdaj} contain
explicit dependence on $x$ and $t$, as well as dependence on initial data through the function
$\theta_0(w)$, the combinations $\hat{c}_j$ are in fact functions of $w_0$ and $w_1$ alone.
Indeed, recalling \eqref{eq:MtIt}, we have
\begin{equation}
M_t = \frac{\sqrt{w_0w_1}-1}{\sqrt{w_0w_1}}\quad\text{and}\quad
I_t = \frac{1}{8\sqrt{w_0w_1}}\left[\int_{\beta\cap\mathbb{C}_+}\frac{R_+(\xi)\,d\xi}{\xi\sqrt{-\xi}} +
\int_{\beta\cap\mathbb{C}_-}\frac{R_-(\xi)\,d\xi}{\xi\sqrt{-\xi}}\right],
\end{equation}
and by similar explicit calculations using the definitions of $M$ and $I$ (the latter in terms of the function $H$) we obtain
\begin{equation}
M_x = \frac{\sqrt{w_0w_1}+1}{\sqrt{w_0w_1}}\quad\text{and}\quad
I_x = -\frac{1}{8\sqrt{w_0w_1}}\left[\int_{\beta\cap\mathbb{C}_+}\frac{R_+(\xi)\,d\xi}{\xi\sqrt{-\xi}} +
\int_{\beta\cap\mathbb{C}_-}\frac{R_-(\xi)\,d\xi}{\xi\sqrt{-\xi}}\right].
\end{equation}
These partial derivatives are obviously functions of $w_0$ and $w_1$ (independent of $x$ and $t$, and
also not depending on the initial data).  The partial derivatives of $M$ and $I$ with respect to $w_{1-j}$
are obtained from equations \eqref{eq:M0partials} and \eqref{eq:I0partials} respectively in the statement of Proposition~\ref{prop:partials}, and while these depend also
on $x$, $t$, and $\theta_0$ via a common factor of $H(w_{1-j})$, this factor will cancel out of the 
expression for $\hat{c}_j$.  The result is that
\begin{equation}
\hat{c}_j = \frac{A-(1-\sqrt{w_0w_1})B_j}{A-(1+\sqrt{w_0w_1})B_j},
\end{equation}
where
\begin{equation}
A:=\int_{\beta\cap\mathbb{C}_+}\frac{R_+(\xi)\,d\xi}{\xi\sqrt{-\xi}} +\int_{\beta\cap\mathbb{C}_-}
\frac{R_-(\xi)\,d\xi}{\xi\sqrt{-\xi}},
\end{equation}
and
\begin{equation}
\begin{split}
B_j:=&\int_{\beta\cap\mathbb{C}_+}\frac{R_+(\xi)\,d\xi}{\sqrt{-\xi}(\xi-w_{1-j})} +\int_{\beta\cap\mathbb{C}_-}\frac{R_-(\xi)\,d\xi}{\sqrt{-\xi}(\xi-w_{1-j})}\\
=&\int_{\beta\cap\mathbb{C}_+}\frac{\xi-w_j}{R_+(\xi)\sqrt{-\xi}}\,d\xi +\int_{\beta\cap\mathbb{C}_-}
\frac{\xi-w_j}{R_-(\xi)\sqrt{-\xi}}\,d\xi.
\end{split}
\end{equation}
It is now obvious that $\hat{c}_j=\hat{c}_j(w_0,w_1)$ is an explicit function of $w_0$ and $w_1$ only with no dependence on initial data.

We may put $\hat{c}_j$ into a form that admits a comparison with the characteristic velocities $c_j$
of the Whitham system \eqref{eq:Whithamsystem_rewrite} by noting that regardless of whether the
radical $R$ is in case \rotational\ or in case \librational\, and regardless of whether in the former case
there may exist transition points,  we may integrate by parts to write $A$ in the form
\begin{equation}
A=\int_{\beta\cap\mathbb{C}_+}\frac{2\xi-w_0-w_1}{R_+(\xi)\sqrt{-\xi}}\,d\xi +
\int_{\beta\cap\mathbb{C}_-}\frac{2\xi-w_0-w_1}{R_-(\xi)\sqrt{-\xi}}\,d\xi.
\end{equation}
Therefore, some elementary algebraic manipulations show that $\hat{c}_j$ may be written in
the form
\begin{equation}
\hat{c}_j = \frac{\sqrt{w_0w_1}(1+\sqrt{w_0w_1})U + (w_j-w_{1-j})(1-\sqrt{w_0w_1})V}
{\sqrt{w_0w_1}(1-\sqrt{w_0w_1})U +(w_j-w_{1-j})(1+\sqrt{w_0w_1})V}
\label{eq:lambdaj_rewrite}
\end{equation}
where
\begin{equation}
U:=\frac{A}{\sqrt{w_0w_1}}=\frac{1}{\sqrt{w_0w_1}}\left[\int_{\beta\cap\mathbb{C}_+}
\frac{2\xi-w_0-w_1}{\sqrt{-\xi}R_+(\xi)}\,d\xi +\int_{\beta\cap\mathbb{C}_-}\frac{2\xi-w_0-w_1}{\sqrt{-\xi}R_-(\xi)}\,d\xi\right]
\end{equation}
and
\begin{equation}
V:=\int_{\beta\cap\mathbb{C}_+}\frac{d\xi}{\sqrt{-\xi}R_+(\xi)}+\int_{\beta\cap\mathbb{C}_-}
\frac{d\xi}{\sqrt{-\xi}R_-(\xi)}.
\end{equation}
To complete the proof of Proposition~\ref{prop:Whitham}, it therefore remains to show that the functions
$\hat{c}_j(w_0,w_1)$ coincide with the Whitham characteristic velocities $c_j(w_0,w_1)$, a calculation that is different in cases \rotational\ and \librational.

\subsubsection{Equivalence of $c_j$ and $\hat{c}_j$ in case \rotational}
Comparing \eqref{eq:lambdaj_rewrite} with \eqref{eq:cj} we observe that to prove that $c_j=\hat{c}_j$ holds in case \rotational\ it is sufficient to show
that there is some nonvanishing quantity $C(w_0,w_1)$ that is symmetrical in its two variables
such that the following identities hold:
\begin{equation}
CU = J(\mathcal{E})=\frac{4}{\pi}\sqrt{\frac{1+\mathcal{E}}{2}}E\left(\frac{2}{1+\mathcal{E}}\right)
\label{eq:UidentityR}
\end{equation}
and
\begin{equation}
CV =J'(\mathcal{E})=\frac{1}{\pi}\sqrt{\frac{2}{1+\mathcal{E}}}K\left(\frac{2}{1+\mathcal{E}}\right)
\label{eq:VidentityR}
\end{equation}
where $\mathcal{E}=\mathcal{E}(w_0,w_1)$ is defined by \eqref{eq:Enpw0w1}.

First, we evaluate $V$ in case \rotational.  By elementary contour deformations, we have
\begin{equation}
V=2\int_0^{w_1}\frac{d\xi}{\sqrt{-\xi}R(\xi)} = 2\int_{w_1}^0\frac{dw}{\sqrt{-w(w-w_0)(w-w_1)}},
\end{equation}
where $w_0<w_1<0$.  By the substitution \eqref{eq:wsKsubstKm} we then obtain (after some
nontrivial algebra)
\begin{equation}
V=\frac{2}{(w_0w_1)^{1/4}}\sqrt{\frac{2}{1+\mathcal{E}}}K\left(\frac{2}{1+\mathcal{E}}\right).
\end{equation}
Therefore, the identity \eqref{eq:VidentityR} holds if
\begin{equation}
C(w_0,w_1)=\frac{(w_0w_1)^{1/4}}{2\pi}.
\end{equation}
With $C$ so determined, we carry out the same contour deformations and make exactly the
same substitution \eqref{eq:wsKsubstKm} to obtain
\begin{equation}
\begin{split}
CU&=\frac{1}{\pi(w_0w_1)^{1/4}}\int_0^{w_1}\frac{2\xi-w_0-w_1}{\sqrt{-\xi}R(\xi)}\,d\xi \\
&= 
\frac{1}{\pi(w_0w_1)^{1/4}}\int_{w_1}^0\frac{2w-w_0-w_1}{\sqrt{-w(w-w_0)(w-w_1)}}\,dw\\
&=\frac{4}{\pi}\sqrt{\frac{1+\mathcal{E}}{2}}\int_0^1\frac{\sqrt{1-ms^2}-(1-s^2)}{\sqrt{1-s^2}\sqrt{1-ms^2}}
\,\frac{ds}{s^2},\quad m:=\frac{2}{1+\mathcal{E}}\in (0,1).
\end{split}
\label{eq:CU}
\end{equation}
In the final integral, the numerator of the integrand cannot be broken up without introducing a nonintegrable
singularity at $s=0$.  Now let $R_1(s)^2=1-s^2$ and $R_2(s)^2=1-ms^2$, take the branch
cut of $R_1(s)$ to be the real interval $[-1,1]$ with $R_1\sim is$ as $s\to\infty$, and take the 
branch cut of $R_2(s)$ to be the union $(-\infty,-1/\sqrt{m}]\cup[1/\sqrt{m},+\infty)$ with $R_2(0)=1$.
Then by elementary contour deformations,
\begin{equation}
\int_0^1\frac{\sqrt{1-ms^2}-(1-s^2)}{\sqrt{1-s^2}\sqrt{1-ms^2}}\,\frac{ds}{s^2} = 
\frac{1}{2}\int_{-1}^1\frac{\sqrt{1-ms^2}-(1-s^2)}{\sqrt{1-s^2}\sqrt{1-ms^2}}\,\frac{ds}{s^2} = 
\frac{1}{4}\oint_L\frac{1-s^2-R_2(s)}{R_1(s)R_2(s)}\,\frac{ds}{s^2}
\end{equation}
where $L$ is a positively-oriented loop surrounding the branch cut of $R_1(s)$ but lying in
the domain of analyticity of $R_2(s)$.  In this formulation the integral can indeed be broken into
two separate integrals as $s=0$ is no longer on the path of integration.  Therefore,
\begin{equation}
\int_0^1\frac{\sqrt{1-ms^2}-(1-s^2)}{\sqrt{1-s^2}\sqrt{1-ms^2}}\frac{ds}{s^2} = 
\frac{1}{4}\oint_L\frac{R_1(s)}{R_2(s)}\,\frac{ds}{s^2} -\frac{1}{4}\oint_L\frac{ds}{sR_1(s)}.
\end{equation}
Since $sR_1(s)$ is analytic and nonvanishing outside of $L$, and since $sR_1(s)\sim is^2$
as $s\to\infty$, the second integral vanishes identically.  By another contour deformation and 
the substitution $x=1/(s\sqrt{m})$, we therefore
obtain
\begin{equation}
\int_0^1\frac{\sqrt{1-ms^2}-(1-s^2)}{\sqrt{1-s^2}\sqrt{1-ms^2}}\frac{ds}{s^2} = 
\int_{1/\sqrt{m}}^{+\infty}\frac{\sqrt{s^2-1}}{\sqrt{ms^2-1}}\,\frac{ds}{s^2} = 
\int_0^1\frac{\sqrt{1-mx^2}}{\sqrt{1-x^2}}\,dx = E(m).
\end{equation}
Using this result in \eqref{eq:CU} confirms the identity \eqref{eq:UidentityR} and completes the
proof that $c_j=\hat{c}_j$ in case \rotational.
\subsubsection{Equivalence of $c_j$ and $\hat{c}_j$ in case \librational}
To prove that $c_j=\hat{c}_j$ in case \librational, it is sufficient to show that for some symmetrical nonvanishing function $C(w_0,w_1)$ we
have the identities
\begin{equation}
CU=J(\mathcal{E}) = \frac{8}{\pi}E\left(\frac{1+\mathcal{E}}{2}\right)-\frac{4}{\pi}(1-\mathcal{E})
K\left(\frac{1+\mathcal{E}}{2}\right)
\label{eq:UidentityL}
\end{equation}
and
\begin{equation}
CV=J'(\mathcal{E})=\frac{2}{\pi}K\left(\frac{1+\mathcal{E}}{2}\right).
\label{eq:VidentityL}
\end{equation}
To prove these identities we consider, for $p=0,1$, the integral
\begin{equation}
I_p:=\int_{\beta\cap\mathbb{C}_+}\frac{(2\xi-w_0-w_1)^p\,d\xi}{\sqrt{-\xi}R_+(\xi)} +
\int_{\beta\cap\mathbb{C}_-}\frac{(2\xi-w_0-w_1)^p\,d\xi}{\sqrt{-\xi}R_-(\xi)}  = 
\frac{1}{2i}\oint_L\frac{(2\xi-w_0-w_1)^p\,d\xi}{\sqrt{\xi}R(\xi)}
\end{equation}
where the second equality follows from a simple contour deformation, and $L$ is a positively-oriented
loop surrounding the branch cut of $R$, which we recall for case \librational\ is an arc connecting
$w_0:=re^{i\theta}$ and $w_1=re^{-i\theta}$ with $r>0$ and $0<\theta<\pi$, passing through the
real axis only at $\xi=1$.  We make the substitution
\begin{equation}
\xi = r\frac{1+z}{1-z}
\end{equation}
which maps the points $\xi=re^{\pm i\theta}$ to $z=\pm i\tan(\theta/2)$ and maps $\xi=0$ to $z=-1$
and $\xi=\infty$ to $z=1$.  Taking care with the sign of $R(\xi(z))$ we find that $I_p$ becomes
\begin{equation}
I_p=\frac{1}{i\sqrt{r}\cos(\theta/2)}\int_{-i\tan(\theta/2)}^{+i\tan(\theta/2)}\left(\frac{2r}{1-z}\right)^p
\frac{[(1-\cos(\theta))+(1+\cos(\theta))z]^p\,dz}{\sqrt{1-z^2}\sqrt{z^2+\tan(\theta/2)^2}}.
\end{equation}
In the case $p=0$, set $z=ix\tan(\theta/2)$ to obtain
\begin{equation}
I_0=\frac{1}{\sqrt{r}\cos(\theta/2)}\int_{-1}^{+1}\frac{dx}{\sqrt{1-x^2}\sqrt{1+\tan(\theta/2)^2x^2}}
=\frac{2}{\sqrt{r}\cos(\theta/2)}\int_0^1\frac{dx}{\sqrt{1-x^2}\sqrt{1+\tan(\theta/2)^2x^2}}.
\end{equation}
Then, set $t=\sqrt{1-x^2}$ to obtain
\begin{equation}
I_0=\frac{2}{\sqrt{r}} K(\sin(\theta/2)^2) = \frac{2}{(w_0w_1)^{1/4}}K\left(\frac{1+\mathcal{E}}{2}\right).
\label{eq:I0formula}
\end{equation}
To evaluate $I_1$, first multiply the numerator and denominator of the integrand by $1+z$ to 
find
\begin{equation}
\begin{split}
I_1&=\frac{2\sqrt{r}}{i\cos(\theta/2)}\int_{-i\tan(\theta/2)}^{+i\tan(\theta/2)}
\frac{(1-\cos(\theta))+2z+(1-\cos(\theta))z^2}{(1-z^2)\sqrt{1-z^2}\sqrt{z^2+\tan(\theta/2)^2}}\,dz\\
&=\frac{2\sqrt{r}}{i\cos(\theta/2)}\int_{-i\tan(\theta/2)}^{+i\tan(\theta/2)}
\frac{(1-\cos(\theta))+(1-\cos(\theta))z^2}{(1-z^2)\sqrt{1-z^2}\sqrt{z^2+\tan(\theta/2)^2}}\,dz
\end{split}
\end{equation}
where we have used even/odd parity to simplify the result.  Since $(1-\cos(\theta))+(1+\cos(\theta))z^2=
2-(1+\cos(\theta))(1-z^2)$, we get
\begin{equation}
\begin{split}
I_1&=\frac{4\sqrt{r}}{i\cos(\theta/2)}\int_{-i\tan(\theta/2)}^{+i\tan(\theta/2)}\frac{dz}{(1-z^2)^{3/2}\sqrt{z^2+\tan(\theta/2)^2}}-2r(1+\cos(\theta))I_0\\
&=\frac{8\sqrt{r}}{i\cos(\theta/2)}\int_{0}^{+i\tan(\theta/2)}\frac{dz}{(1-z^2)^{3/2}\sqrt{z^2+\tan(\theta/2)^2}}-4(w_0w_1)^{1/4}(1-\mathcal{E})K\left(\frac{1+\mathcal{E}}{2}\right).
\end{split}
\end{equation}
Again using the substitution $z=ix\tan(\theta/2)$ followed by $t=\sqrt{1-x^2}$ we arrive at
\begin{equation}
I_1=8\sqrt{r}\cos(\theta/2)^2\int_0^1\frac{dt}{\sqrt{1-t^2}(1-mt^2)^{3/2}}-4(w_0w_1)^{1/4}(1-\mathcal{E})
K\left(\frac{1+\mathcal{E}}{2}\right),\quad m:=\frac{1+\mathcal{E}}{2}.
\end{equation}
Finally, making the monotone decreasing substitution $t\mapsto s$ given by $(1-mt^2)(1-ms^2)=1-m$, we obtain
\begin{equation}
I_1=8(w_0w_1)^{1/4}E\left(\frac{1+\mathcal{E}}{2}\right)-4(w_0w_1)^{1/4}(1-\mathcal{E})K\left(\frac{1+\mathcal{E}}{2}\right).
\label{eq:I1formula}
\end{equation}
Since $U=I_1/\sqrt{w_0w_1}$ and $V=I_0$, the formulae \eqref{eq:I0formula} and \eqref{eq:I1formula} confirm the identities \eqref{eq:UidentityL}
and \eqref{eq:VidentityL} upon making the choice of $C(w_0,w_1)=(w_0w_1)^{1/4}/\pi$.  This completes the 
proof that $c_j=\hat{c}_j$ in case \librational.

\section{Use of $g(w)$}
\subsection{Opening a lens}
\label{sec:opening-a-lens}
Recall the matrix unknown $\mathbf{N}(w)$ defined by \eqref{eq:NMw},
and the matrix functions $\mathbf{L}^\nabla(w)$ and
$\mathbf{L}^\Delta(w)$ defined by \eqref{eq:Lnabla} and
\eqref{eq:LDelta} respectively.  The most crucial step in the
Deift-Zhou steepest descent method is the opening of a ``lens'' about
the contour $\beta$, which in the current context takes the form of an
explicit substitution leading to a new unknown $\mathbf{O}(w)$ as
follows.  Let the lens $\Lambda$ be an open subset of $\Omega^\circ$
containing $\vec{\beta}$ as indicated in various configurations
(depending on case \librational\ or case \rotational\, and in the latter case
whether or where a transition point exists) in
Figures~\ref{fig:BcaseOcontour}--\ref{fig:KcaseOMinusOOM}.  The precise
shape of the lens is not important, but rather that it lies close
to $\vec{\beta}$ without coinciding except at the endpoints, that
it fully abuts the real segments $\vec{\Sigma}^\nabla_{>0}$ or
$\vec{\Sigma}^\Delta_{>0}$, and that if a transition point $\tau_N$ is
present in case \rotational\ then $\Lambda$ fully abuts the contours
$\vec{\Sigma}^{\nabla\Delta}$ and $\vec{\Sigma}^{\Delta\nabla}$.

Now set
\begin{equation}
\mathbf{O}(w):=\begin{cases}
\mathbf{N}(w)\mathbf{L}^\nabla(w),\quad &w\in\Omega^\nabla\cap\Lambda\\
\mathbf{N}(w)\mathbf{L}^\Delta(w),\quad &w\in\Omega^\Delta\cap\Lambda\\
\mathbf{N}(w),\quad & w\in\mathbb{C}\setminus (\overline{\Lambda}\cup\partial\Omega\cup\mathbb{R}_+).
\end{cases}
\label{eq:wOdef}
\end{equation}
The reason for including the contours $\vec{\Sigma}^{\nabla\Delta}$
and $\vec{\Sigma}^{\Delta\nabla}$ in the interior of $\Lambda$ 
should a transition point exist is now clear:  
according to Proposition~\ref{prop:nabladelta}, $\mathbf{O}(w)$ can be
analytically continued to the contours $\vec{\Sigma}^{\nabla\Delta}$
and $\vec{\Sigma}^{\Delta\nabla}$, so $\mathbf{O}(w)$ no longer has any 
jump discontinuity across any contour emanating into the complex plane
from a transition point.  
The contour of discontinuity for $\mathbf{O}(w)$ is therefore $\Sigma_\mathbf{O}:=
(\Sigma\setminus(\Sigma^{\nabla\Delta}\cup\Sigma^{\Delta\nabla}))\cup\partial\Lambda$,
as is illustrated
with solid curves in Figures~\ref{fig:BcaseOcontour}--\ref{fig:KcaseOMinusOOM}.
\begin{figure}[h]
\begin{center}
\includegraphics{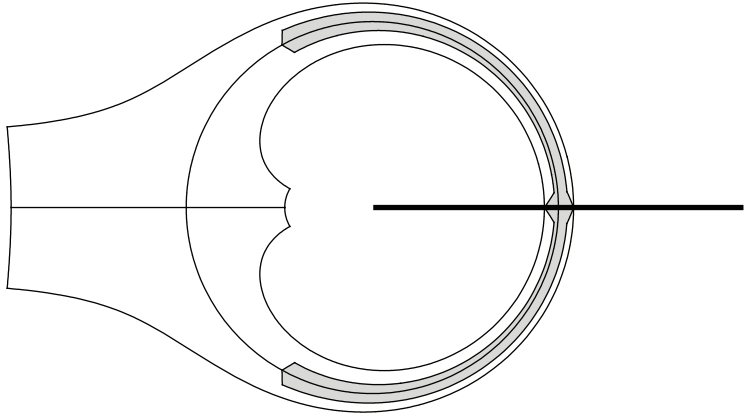}
\end{center}
\caption{\emph{The contour $\Sigma_\mathbf{O}$ of discontinuity of the sectionally analytic
function $\mathbf{O}(w)$ in case \librational\ assuming either
$\Delta=\emptyset$ or $\nabla=\emptyset$.  The lens $\Lambda$ is shaded.}}
\label{fig:BcaseOcontour}
\end{figure}
\begin{figure}[h]
\begin{center}
\includegraphics{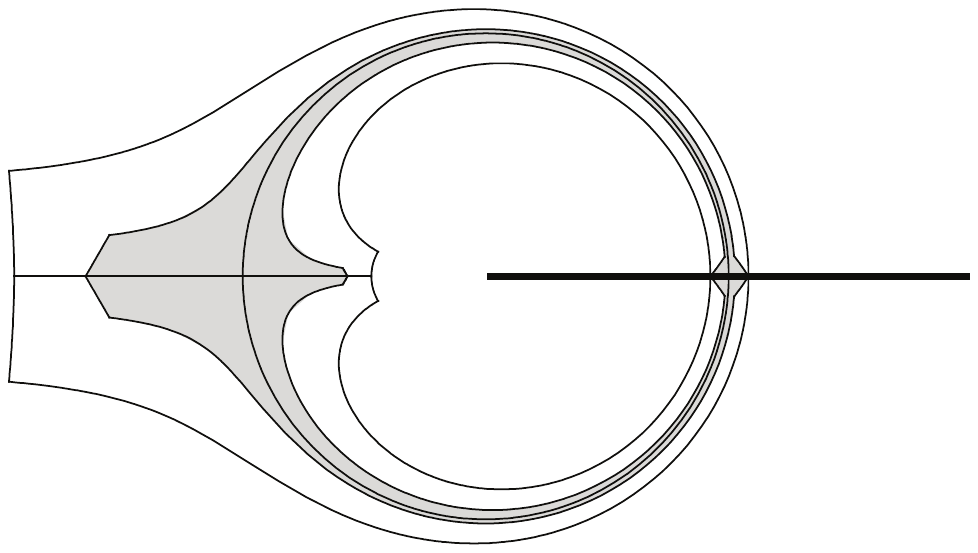}
\end{center}
\caption{\emph{The contour $\Sigma_\mathbf{O}$ of discontinuity of the sectionally analytic
function $\mathbf{O}(w)$ in case \rotational\ with either $\Delta=\emptyset$
or $\nabla=\emptyset$.  The lens $\Lambda$ is shaded.}}
\label{fig:KcaseOEmpty}
\end{figure}
\begin{figure}[h]
\begin{center}
\includegraphics{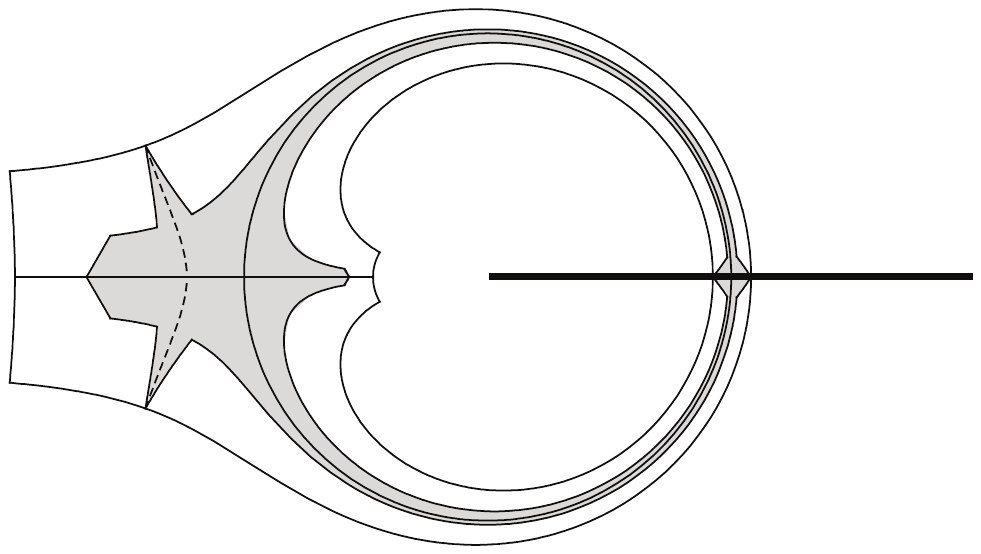}
\end{center}
\caption{\emph{The contour $\Sigma_\mathbf{O}$ of discontinuity of the sectionally
    analytic function $\mathbf{O}(w)$ in case \rotational\ with either
    $\Delta=P_N^{\prec\rotational}$ or $\nabla=P_N^{\prec\rotational}$.  
The lens $\Lambda$ is
    shaded, and the dashed curves emanating from the transition point
    $\tau_N\in\beta$ do not belong to $\Sigma_\mathbf{O}$ according to
    Proposition~\ref{prop:nabladelta}.}}
\label{fig:KcaseOMinusM}
\end{figure}
\begin{figure}[h]
\begin{center}
\includegraphics{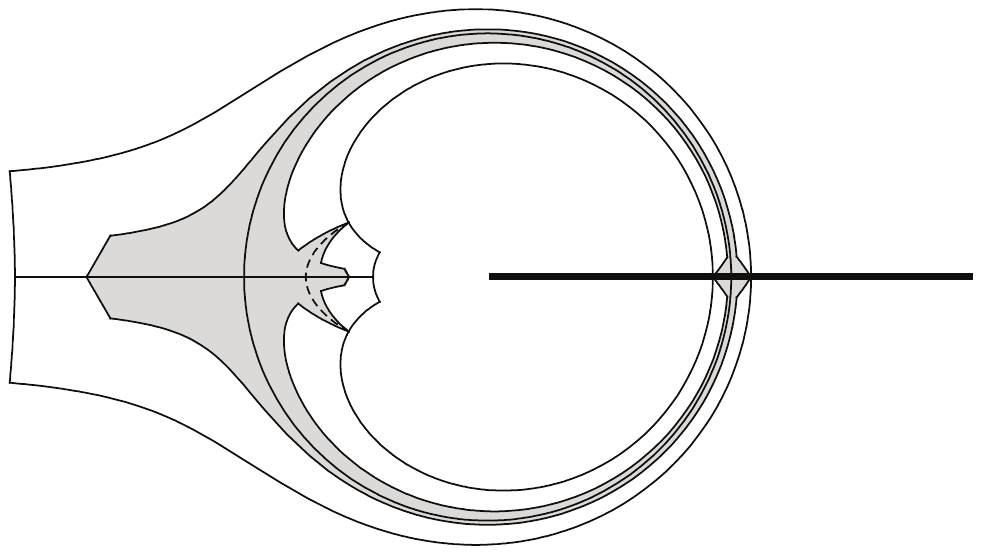}
\end{center}
\caption{\emph{The contour $\Sigma_\mathbf{O}$ of discontinuity of the sectionally
    analytic function $\mathbf{O}(w)$ in case \rotational\ with either
    $\Delta=P_N^{\rotational\succ}$ or $\nabla=P_N^{\rotational\succ}$.  The lens
    $\Lambda$ is shaded, and the dashed curves emanating from the
    transition point $\tau_N\in\beta$ do not belong to $\Sigma_\mathbf{O}$ 
    according to Proposition~\ref{prop:nabladelta}.}}
\label{fig:KcaseOMinusOOM}
\end{figure}

Recall that on the arcs $\vec{\beta}\cap\mathbb{C}_\pm$ we have
$\phi(\xi)\equiv \pm i\Phi$ whereas on $\vec{\beta}\cap\mathbb{R}$
we have $\phi(\xi)\equiv 0$. Set
\begin{equation}
\Phi_\Delta=\Phi_\Delta(x,t):=\Phi(x,t) + \pi\epsilon_N\#\Delta,
\end{equation}
where recall $\#\Delta$ is defined in \eqref{eq:cardDelta}.  
Then by direct calculation
using the definition \eqref{eq:Lnabla}, the jump condition \eqref{eq:Njumpnabla}
for $\mathbf{N}(w)$, the definitions \eqref{eq:wthetaphidef} of $\theta$
and $\phi$ in terms of boundary values of $g$, and the jump condition
\eqref{eq:Lplusminusdiff} for $L$,
 we see that $\mathbf{O}(w)$
satisfies the jump conditions
\begin{equation}  
\mathbf{O}_+(\xi)=\mathbf{O}_-(\xi)
\begin{bmatrix} 0 & -ie^{\mp i\Phi_\Delta/\epsilon_N}\\
-ie^{\pm i\Phi_\Delta/\epsilon_N} & 0\end{bmatrix},\quad\xi\in\vec{\beta}\cap\Sigma^\nabla\cap\mathbb{C}_\pm
\label{eq:Ojumpbetanablacomplex}
\end{equation}
and
\begin{equation}
\mathbf{O}_+(\xi)=\mathbf{O}_-(\xi)
\begin{bmatrix} 0 & -i\\
-i & 0\end{bmatrix},\quad\xi\in\vec{\beta}\cap\Sigma^\nabla\cap\mathbb{R}.
\label{eq:Ojumpbetanablareal}
\end{equation}
Similarly, from \eqref{eq:LDelta} and \eqref{eq:NjumpDelta} we obtain
\begin{equation}
\mathbf{O}_+(\xi)=\mathbf{O}_-(\xi)
\begin{bmatrix}0 & ie^{\mp i\Phi_\Delta/\epsilon_N}\\
ie^{\pm i\Phi_\Delta/\epsilon_N} & 0\end{bmatrix},\quad
\xi\in\vec{\beta}\cap\Sigma^\Delta\cap\mathbb{C}_\pm
\label{eq:OjumpbetaDeltacomplex}
\end{equation}
and
\begin{equation}
\mathbf{O}_+(\xi)=\mathbf{O}_-(\xi)
\begin{bmatrix}0 & i\\
i & 0\end{bmatrix},\quad
\xi\in\vec{\beta}\cap\Sigma^\Delta\cap\mathbb{R}.
\label{eq:OjumpbetaDeltareal}
\end{equation}
Therefore, the jump matrix for $\mathbf{O}(w)$ along $\vec{\beta}$
is piecewise constant with respect to $\xi$ (but has nontrivial dependence
on the parameters $x$ and $t$ via
$\Phi=\Phi(x,t)\in\mathbb{R}$).  
Moreover, we observe that
the jump matrices in \eqref{eq:Ojumpbetanablacomplex} and
\eqref{eq:Ojumpbetanablareal} are respectively inverse to those in
\eqref{eq:OjumpbetaDeltacomplex} and \eqref{eq:OjumpbetaDeltareal},
while the contours $\vec{\beta}\cap\Sigma^\nabla$
and $\vec{\beta}\cap\Sigma^\Delta$ are sub-arcs of $\vec{\beta}$ that
are oppositely oriented.  This shows that 
\eqref{eq:Ojumpbetanablacomplex}--\eqref{eq:Ojumpbetanablareal}
are describing \emph{the same} jump
conditions as are
\eqref{eq:OjumpbetaDeltacomplex}--\eqref{eq:OjumpbetaDeltareal}.  

Next, we consider the jump conditions relating the boundary values
taken by $\mathbf{O}(w)$ from the upper and lower half-planes along
the positive real axis, recalling Proposition~\ref{prop:JumpORplus}.
Clearly, from \eqref{eq:JumpNSigmagt0} and the fact that the boundary
values of $\mathbf{N}(w)$ and of $\mathbf{O}(w)$ agree for
$\xi\in\vec{\Sigma}_{>0}$, we have
\begin{equation}
\mathbf{O}_+(\xi)=\sigma_2\mathbf{O}_-(\xi)\sigma_2,\quad\xi\in\vec{\Sigma}_{>0}.
\label{eq:ORplusexactjump}
\end{equation}
To analyze the jump
conditions for $\mathbf{O}(w)$ on the contours $\Sigma^\nabla_{>0}$ or
$\Sigma^\Delta_{>0}$ (only one or the other of which is present in any
of the six choices of $\Delta$ under consideration), we need to
calculate the exponent $2iQ_+(\xi)+L_+(\xi)-2g_+(\xi)$ appearing in
\eqref{eq:OjumpSigmanablagt0exact} and
\eqref{eq:OjumpSigmadeltagt0exact}.  Of course this exponent has an
analytic continuation into $\Omega^\nabla_+$ or $\Omega^\Delta_+$ as
$2iQ(w)+L(w)-2g(w)$, and taking the boundary value on $\vec{\beta}$
near $\xi=1$ from the relevant domain we learn that
$2iQ_+(\xi)+L_+(\xi)-2g_+(\xi)$ is the analytic continuation through
the adjacent domain $\Omega_+^\nabla$ or $\Omega_+^\Delta$ of the
function $\phi(\xi)+i\theta_0(\xi)-i\theta(\xi) +
i\pi\epsilon_N\#\Delta \pmod{2\pi i\epsilon_N}$ defined on
$\vec{\beta}$.  Since $\phi(\xi)\equiv \pm i\Phi\in i\mathbb{R}$
along $\vec{\beta}$, we then have
\begin{equation}
B^\nabla(\xi)e^{\pm [2iQ_+(\xi)+L_+(\xi)-2g_+(\xi)]/\epsilon_N} =
\bo\left(\epsilon_N\frac{\lambda^2}{\epsilon_N^2}e^{-\alpha\lambda/\epsilon_N}
e^{|\Re\{i\theta(\xi)-i\theta_0(\xi)\}|/\epsilon_N}\right),\quad
\lambda=E_+(\xi)>0,\quad\xi\in\vec{\Sigma}^\nabla_{>0},
\end{equation}
and
\begin{equation}
B^\Delta(\xi)e^{\pm [2iQ_+(\xi)+L_+(\xi)-2g_+(\xi)]/\epsilon_N} =
\bo\left(\epsilon_N\frac{\lambda^2}{\epsilon_N^2}e^{-\alpha\lambda/\epsilon_N}
e^{|\Re\{i\theta(\xi)-i\theta_0(\xi)\}|/\epsilon_N}\right),\quad
\lambda=E_-(\xi)>0,\quad\xi\in\vec{\Sigma}^\Delta_{>0},
\end{equation}
where in both cases the aforementioned analytic continuation of
$\theta(\xi)-\theta_0(\xi)$ from $\vec{\beta}$ is implied.  Due to
Proposition~\ref{prop:theta0}, this difference has an analytic
continuation from $\vec{\beta}$ to a neighborhood of $\xi=1$ (note that
this will be a different analytic function depending upon which of the
two arcs of $\beta$ meeting at $\xi=1$ is involved), and 
furthermore, from the integral condition $I=0$, we will have $\Re\{i\theta(1)-i\theta_0(1)\}=0$.  
Using the formula
\eqref{eq:wthetaprimebeta}, we can easily obtain that
\begin{equation}
i\theta'(\xi)
-i\theta_0'(\xi)=
\frac{R_+(\xi;\mathfrak{p},\mathfrak{q})}{4\sqrt{-\xi}}\left(\frac{x-t}{\xi\sqrt{\mathfrak{p}^2-\mathfrak{q}}}-
\frac{4}{\pi}\int_\gamma
\frac{\theta_0'(\zeta)\sqrt{-\zeta}\,d\zeta}{R(\zeta;\mathfrak{p},\mathfrak{q})(\zeta-\xi)}\right),
\quad\xi\in\vec{\beta},
\end{equation}
so letting $\xi\to 1$ along $\beta$ yields a purely imaginary quantity
in the limit.  It follows that $\Re\{i\theta(\xi)-i\theta_0(\xi)\} =
\bo(\lambda^2)$ as $\lambda=E_\pm(\xi)\to 0$ for $\xi$ real.  Therefore,
by choosing the width parameter $\delta_1$ of the rectangles $D_\pm$ defined after the statement of
Proposition~\ref{prop:theta0} to be sufficiently small we will have
$|\Re\{i\theta(\xi)-i\theta_0(\xi)\}|\le \alpha \lambda/2$, and then from
Proposition~\ref{prop:JumpORplus} we will have that 
\begin{equation}
\mathbf{O}_+(\xi)=\sigma_2\mathbf{O}_-(\xi)\sigma_2\left(\mathbb{I}+\bo(\epsilon_N)
\right),\quad \xi\in\vec{\Sigma}^\nabla_{>0}\cup\vec{\Sigma}^\Delta_{>0}.
\label{eq:ORplusapproxjump}
\end{equation}

Now suppose that $\xi\in\vec{\gamma}$.  Then from \eqref{eq:wOdef}
we have $\mathbf{O}_\pm(\xi)=\mathbf{N}_\pm(\xi)$, so to evaluate the
jump conditions satisfied by $\mathbf{O}(w)$ we may recall
the jump conditions for $\mathbf{N}(w)$ in the form
\eqref{eq:Njumpnabla}--\eqref{eq:NjumpDeltacomplex}.
According to Proposition~\ref{prop:YTnablaDelta}, the factors $T^\nabla(\xi)$
and $T^\Delta(\xi)$ are uniformly bounded on $\gamma\cap\Sigma^\nabla$
and $\gamma\cap\Sigma^\Delta$ respectively.  Furthermore, by 
Proposition~\ref{prop:wgbasicproperties}, we have $\theta(\xi)\equiv 0$
for $\xi\in\gamma$.  Finally, Proposition~\ref{prop:tneq0continuegeneral}
or Proposition~\ref{prop:origin} guarantees that $\Re\{\phi(\xi)\}$
is strictly negative for $\xi\in\gamma\cap\Sigma^\nabla$ 
and strictly positive for $\xi\in\gamma\cap\Sigma^\Delta$ as long as
$\xi$ is bounded away from the band endpoints.  We conclude that
$\mathbf{O}_+(\xi)=\mathbf{O}_-(\xi)(\mathbb{I}+\text{exponentially small in $\epsilon_N$})$
holds uniformly for $\xi\in\vec{\gamma}$ bounded away from both band endpoints.

Consider next the jump of $\mathbf{O}(w)$ across the boundary of the lens 
$\Lambda$.  We assume that the arcs of $\partial\Lambda$ inherit orientation
from the arcs of the band $\vec{\beta}$ that they enclose.  
The matrix $\mathbf{N}(w)$ is continuous across $\partial \Lambda$,
so from \eqref{eq:wOdef}, 
\begin{equation}
\mathbf{O}_+(\xi)=\begin{cases}
\mathbf{O}_-(\xi)\mathbf{L}^\nabla(\xi)^{\mp 1},
&\quad\xi\in\partial\Lambda\cap\Omega_\pm^\nabla\\
\mathbf{O}_-(\xi)\mathbf{L}^\Delta(\xi)^{\mp 1},
&\quad\xi\in\partial\Lambda\cap\Omega_\pm^\Delta.
\end{cases}
\end{equation}
Referring to the definitions \eqref{eq:Lnabla}--\eqref{eq:LDelta}, 
we see that these may be written in the form
\begin{equation}
\begin{split}
\mathbf{O}_+(\xi)&=\mathbf{O}_-(\xi)
\left(\mathbb{I} + \bo(T^\nabla(\xi)^{1/2}-1) +
\bo(T^\nabla(\xi)^{-1/2}-1) + \bo(e^{-[2iQ(\xi)+L(\xi)\mp i\theta_0(\xi)
-2g(\xi)]/\epsilon_N})\right),\\
&\qquad\qquad\qquad \xi\in\partial\Lambda\cap
\Omega_\pm^\nabla,
\end{split}
\label{eq:lensnablaawayI}
\end{equation}
and 
\begin{equation}
\begin{split}
\mathbf{O}_+(\xi)&=\mathbf{O}_-(\xi)
\left(\mathbb{I}+\bo(T^\Delta(\xi)^{1/2}-1)+
\bo(T^\Delta(\xi)^{-1/2}-1) +\bo(e^{[2iQ(\xi)+L(\xi)\mp i\theta_0(\xi)-2g(\xi)]/\epsilon_N})\right),\\
&\qquad\qquad\qquad\xi\in\partial\Lambda\cap\Omega_\pm^\Delta,
\end{split}
\label{eq:lensDeltaawayI}
\end{equation}
assuming that in each case the three error terms on the right-hand
side are bounded.  If as $\epsilon_N\downarrow 0$ the point $\xi$
remains bounded away from the singular points $\mathfrak{a}$ and $\mathfrak{b}$ of
$T^\nabla(\cdot)$ and $T^\Delta(\cdot)$, then according to
Proposition~\ref{prop:YTnablaDelta} the first two error terms in each
case are $\bo(\epsilon_N)$.  The exponent $2iQ(\xi)+L(\xi)\mp
i\theta_0(\xi)-2g(\xi)$ is, modulo $i\pi\epsilon_N$, 
the analytic continuation from $\beta$
to $\Omega_\pm^\nabla$ of $\phi(\xi)\mp i\theta(\xi)$.  Since
according to Proposition~\ref{prop:wgbasicproperties} $\phi(\xi)$ is
an imaginary constant in $\beta$, and since according to
Proposition~\ref{prop:tneq0continuegeneral} or \ref{prop:origin}
$\theta(\xi)$ is analytic, real, and increasing along parts of
$\vec{\beta}$ in $\Sigma^\nabla$ with derivative bounded away from
zero away from the band endpoints, it follows from the Cauchy-Riemann
equations that $\Re\{2iQ(\xi)+L(\xi)\mp i\theta_0(\xi)-2g(\xi)\}$
is strictly positive for $\xi\in\partial\Lambda\cap\Omega_\pm^\nabla$
bounded away from the band endpoints.  Therefore the final error term
in \eqref{eq:lensnablaawayI} is exponentially small as $\epsilon_N\downarrow 0$
uniformly for $\xi$ bounded away from band endpoints.  Completely analogous
reasoning yields the same conclusion for the final error term in 
\eqref{eq:lensDeltaawayI}.  We conclude that 
$\mathbf{O}_+(\xi)=\mathbf{O}_-(\xi)(\mathbb{I}+\bo(\epsilon_N))$ holds 
for $\xi\in\partial\Lambda$ as long as $\xi$ is bounded away from the two
band endpoints (which, given the shape of the lens $\Lambda$, also implies
that $\xi$ is bounded away from $\mathfrak{a}$ and $\mathfrak{b}$).

Finally let us consider the jump conditions satisfied by $\mathbf{O}(w)$
across the contours $\Sigma^\nabla_\pm$ and $\Sigma^\Delta_\pm$, that is the
contours that make up the boundary of the whole region $\overline{\Omega}$.
For $\xi$ on any of these curves we have 
$\mathbf{O}_\pm(\xi)=\mathbf{N}_\pm(\xi)$.  Therefore, to calculate the
jump matrices on these curves we need to combine the definition \eqref{eq:NMw}
of $\mathbf{N}(w)$ in terms of $\mathbf{M}(w)$ with the jump conditions
\eqref{eq:MjumpSigmanablapm}--\eqref{eq:MjumpSigmaDeltapm}.  Since
$g(w)$ is analytic on $\Sigma^\nabla_\pm$ and $\Sigma^\Delta_\pm$, we obtain
the jump conditions
\begin{equation}
\mathbf{O}_+(\xi)=\mathbf{O}_-(\xi)\begin{bmatrix}
1 & 0\\
-iY(\xi)e^{[2iQ(\xi)+L(\xi)\pm i\theta_0(\xi)-2g(\xi)]/\epsilon_N} & 1
\end{bmatrix},\quad \xi\in\vec{\Sigma}_\pm^\nabla,
\label{eq:OjumpSigmanablapm}
\end{equation}
\begin{equation}
\mathbf{O}_+(\xi)=\mathbf{O}_-(\xi)\begin{bmatrix}
1 & -iY(\xi)^{-1}e^{-[2iQ(\xi)+L(\xi)\mp i\theta_0(\xi)-2g(\xi)]/\epsilon_N}\\
0 & 1\end{bmatrix},\quad \xi\in\vec{\Sigma}_\pm^\Delta.
\label{eq:OjumpSigmaDeltapm}
\end{equation}
The exponent $2iQ(\xi)+L(\xi)\pm i\theta_0(\xi)-2g(\xi)$ occurring
in \eqref{eq:OjumpSigmanablapm} is, modulo $i\pi\epsilon_N$, the analytic
continuation into the domain $\Omega^\nabla_\pm$ from the contour 
$\Sigma^\nabla$ of the function
$\phi(\xi)\mp i\theta(\xi)\pm 2i\theta_0(\xi)$.  If $t=0$, then $\phi(\xi)$
and $\theta_0(\xi)$ are real, and $\phi(\xi)\le 0$, for $\xi\in\Sigma^\nabla$.
According to Proposition~\ref{prop:wsolntzeroproperties}, 
$\theta_0(\xi)-\theta(\xi)$ is real and nondecreasing
for $\xi\in\vec{\Sigma}^\nabla$.  Also, by Proposition~\ref{prop:theta0},
$\theta_0(\xi)$ is strictly increasing with derivative bounded below
by a positive constant for $\xi\in\vec{\Sigma}^\nabla$, and hence the same
is true of $2\theta_0(\xi)-\theta(\xi)$.  It follows by a Cauchy-Riemann
argument that if the width $2\delta_1$ of the 
rectangle $\overline{D_+\cup D_-}$ that is the closure of the image of $\Omega$ under $E(\cdot)$ is sufficiently small,
then when $t=0$, 
$\Re\{2iQ(\xi)+L(\xi)\pm i\theta_0(\xi)-2g(\xi)\}$ will be strictly
negative on all parts of $\Sigma_\pm^\nabla$ with the possible exception of
points near $\mathfrak{a}$ and $\mathfrak{b}$ where the contours 
$\Sigma_\pm^\nabla$ meet $\Sigma^\nabla$.
But if the band endpoints are bounded away from $\mathfrak{a}$ and $\mathfrak{b}$, that is
(according to Proposition~\ref{prop:utzero}) 
if $x$ is bounded away from zero, then $\phi(\mathfrak{a})$ and $\phi(\mathfrak{b})$ will be
strictly negative by Proposition~\ref{prop:wsolntzeroproperties}, so the 
strict inequality on the exponent in \eqref{eq:OjumpSigmanablapm} 
persists right down to the real axis when $t=0$.  Since for $x$ away from $x=0$
the inequality is strict uniformly on $\Sigma_\pm^\nabla$, it also persists
uniformly on $\Sigma_\pm^\nabla$ 
for sufficiently small $t\neq 0$.  (Assuming $|t|$ sufficiently small also
ensures that the nonreal arcs of $\Sigma^\nabla$ or $\Sigma^\Delta$ are contained
within $\Omega$.)  Completely analogous arguments also show that the exponential
factor occurring in \eqref{eq:OjumpSigmaDeltapm} is exponentially small
uniformly on $\Sigma_\pm^\Delta$ if $x$ is bounded away from zero and 
$t$ is small enough.  If $x$ approaches zero, then the exponential decay
only fails near $\xi=\mathfrak{a}$ and $\xi=\mathfrak{b}$. Now, the factors $Y(\xi)$ and 
$Y(\xi)^{-1}$ are uniformly bounded on $\Sigma_\pm^\nabla$ and $\Sigma_\pm^\Delta$
respectively, according to Proposition~\ref{prop:YTnablaDelta}.
We conclude that for $t$ sufficiently small, 
$\mathbf{O}_+(\xi)=\mathbf{O}_-(\xi)(\mathbb{I}+\text{exponentially small})$ holds uniformly for $\xi\in\Sigma_\pm^\nabla$ and $\xi\in\Sigma_\pm^\Delta$ except near $\mathfrak{a}$ and $\mathfrak{b}$ when $x$ is also small.

We formalize these observations concerning the jump conditions satisfied by
$\mathbf{O}(w)$ in the following proposition.
\begin{proposition}
Suppose that the point $(x,t)$ lies 
in one of the domains $\mathscr{O}_\librational^\pm$,
$\mathscr{O}_\rotational^\pm$ (see Proposition~\ref{prop:tneq0continuegeneral}),
or $\mathscr{O}_\rotational^0$ (see Proposition~\ref{prop:origin}),
and $|t|$ is sufficiently small.  Let $U_0$  and $U_1$ 
be discs of small radius independent
of $x$, $t$, and $\epsilon_N$ centered at the band endpoints $w_0(x,t)$ and $w_1(x,t)$ (see Figures 
\ref{fig:BcaseEcontour}--\ref{fig:KcaseEMinusOOM}).
Then for $\xi\in\vec{\beta}$, $\mathbf{O}(w)$ satisfies exactly the
piecewise constant jump conditions \eqref{eq:Ojumpbetanablacomplex}--\eqref{eq:OjumpbetaDeltareal}.
For $\xi\in\mathbb{R}_+$, $\mathbf{O}(w)$ satisfies exactly the simple jump
condition \eqref{eq:ORplusexactjump} except in a small interval near $\xi=1$
where the $\bo(\epsilon_N)$ approximate version
\eqref{eq:ORplusapproxjump} of this condition holds.  Finally
uniformly for $\xi\in\Sigma_\mathbf{O}\setminus (\beta\cup\mathbb{R}_+\cup U_0\cup U_1)$,
we have simply 
$\mathbf{O}_+(\xi)=\mathbf{O}_-(\xi)(\mathbb{I}+\bo(\epsilon_N))$.
\label{prop:OjumpsAway}
\end{proposition}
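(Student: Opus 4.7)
The plan is to organize the proof by stratifying $\Sigma_\mathbf{O}$ into four pieces: the band $\vec{\beta}$, the positive real axis $\vec{\mathbb{R}}_+$, the lens boundary $\partial\Lambda\setminus(U_0\cup U_1)$, and the remnant $(\vec{\gamma}\cup\vec{\Sigma}_\pm^\nabla\cup\vec{\Sigma}_\pm^\Delta)\setminus(U_0\cup U_1)$. Almost all of the work has been carried out in the preceding discussion; what remains is to assemble it and verify the uniformity claims near $\xi=1$ and near the band endpoints, which are the two genuinely delicate points.

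On $\vec{\beta}$ the proof is a direct algebraic computation: starting from \eqref{eq:Njumpnabla}--\eqref{eq:NjumpDeltacomplex} and the definitions \eqref{eq:Lnabla}--\eqref{eq:LDelta} of $\mathbf{L}^\nabla$ and $\mathbf{L}^\Delta$, the factors of $T^\nabla(\xi)^{\pm 1/2}$ and $T^\Delta(\xi)^{\pm 1/2}$ collapse (by construction of their principal branches) against the sectionally-constant exponentials $e^{i\theta/\epsilon_N}$, and the properties $\theta\equiv 0$, $\phi\equiv\pm i\Phi$ or $\phi\equiv 0$ guaranteed by Proposition~\ref{prop:wgbasicproperties} together with \eqref{eq:Lplusminusdiff} produce precisely \eqref{eq:Ojumpbetanablacomplex}--\eqref{eq:OjumpbetaDeltareal} (the extra $i\pi\epsilon_N\#\Delta$ terms from \eqref{eq:Lplusminusdiff} combine with $\Phi$ to yield the constant $\Phi_\Delta$).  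On $\vec{\Sigma}_{>0}$ away from $\xi=1$ the identity \eqref{eq:ORplusexactjump} is immediate from \eqref{eq:JumpNSigmagt0}. On $\vec{\Sigma}^\nabla_{>0}\cup\vec{\Sigma}^\Delta_{>0}$ (a small interval around $\xi=1$) I would apply Proposition~\ref{prop:JumpORplus}: the only thing to check is that the exponential factors $e^{\pm[2iQ_+(\xi)+L_+(\xi)-2g_+(\xi)]/\epsilon_N}$ multiplying the $B^{\nabla,\Delta}$ entries do not destroy their $\bo(\epsilon_N)$ smallness. This exponent is the analytic continuation from $\vec{\beta}$ of $\phi(\xi)-i\theta(\xi)+i\theta_0(\xi)+i\pi\epsilon_N\#\Delta$ modulo $2\pi i\epsilon_N$; since $\phi\equiv\pm i\Phi$ on $\vec{\beta}$, the integral condition $I=0$ forces $\Re\{i\theta(1)-i\theta_0(1)\}=0$, and an explicit computation of $(i\theta-i\theta_0)'(\xi)$ using \eqref{eq:wthetaprimebeta} yields $\Re\{i\theta(\xi)-i\theta_0(\xi)\}=\bo(\lambda^2)$ as $\lambda=E_\pm(\xi)\to 0$. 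Choosing the width $\delta_1$ of the rectangles $D_\pm$ small enough absorbs this into $\alpha\lambda/2$ and gives \eqref{eq:ORplusapproxjump}.

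On $\vec{\gamma}\setminus(U_0\cup U_1)$ the argument is softest: $\mathbf{O}=\mathbf{N}$ here, $T^{\nabla,\Delta}$ are uniformly bounded by Proposition~\ref{prop:YTnablaDelta}, $\theta\equiv 0$, and Propositions~\ref{prop:tneq0continuegeneral}--\ref{prop:origin} provide a uniform strict sign of $\Re\{\phi\}$ as long as one stays away from the band endpoints; one then reads off exponential smallness from \eqref{eq:Njumpnabla}--\eqref{eq:NjumpDeltacomplex}. On $\partial\Lambda\setminus(U_0\cup U_1)$ the crucial estimate is \eqref{eq:lensnablaawayI}--\eqref{eq:lensDeltaawayI}: the $T^{\nabla,\Delta}-1$ terms are $\bo(\epsilon_N)$ by Proposition~\ref{prop:YTnablaDelta} (again away from endpoints), and the exponential term is the analytic continuation off of $\vec{\beta}$ of $\phi\mp i\theta$. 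Since $\phi$ is purely imaginary and constant on $\vec{\beta}$ while $\theta$ is real analytic and strictly monotone along $\vec{\beta}$ with derivative bounded below (Propositions~\ref{prop:tneq0continuegeneral}, \ref{prop:origin}), the Cauchy--Riemann equations give strict positivity of the real part of the relevant exponent as one moves off $\vec\beta$ transversely into the region containing $\partial\Lambda$.

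The main obstacle, and the last piece to handle, is the contour $\Sigma_\pm^{\nabla,\Delta}$ where one cannot invoke the constancy of $\phi$ along $\vec{\beta}$. Here I would argue by continuity from $t=0$: Proposition~\ref{prop:wsolntzeroproperties} gives $\phi(\xi)<0$ (or $>0$) on $\vec{\Sigma}^\nabla$ ($\vec{\Sigma}^\Delta$) with derivative bounds, and the combination $2\theta_0(\xi)-\theta(\xi)$ is real and strictly monotone there, so the Cauchy--Riemann equations produce strict negativity of $\Re\{2iQ(\xi)+L(\xi)\pm i\theta_0(\xi)-2g(\xi)\}$ uniformly on $\Sigma_\pm^{\nabla,\Delta}$ at $t=0$ provided the band endpoints stay away from $\mathfrak{a}$ and $\mathfrak{b}$; this survives for small $|t|$ by continuity. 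The only point where this continuation argument is genuinely subtle is the configuration in $\mathscr{O}_\rotational^0$ near $(x,t)=(0,0)$, where the band endpoints approach $\mathfrak{a},\mathfrak{b}$ and the factors $Y^{\pm 1}$ in \eqref{eq:OjumpSigmanablapm}--\eqref{eq:OjumpSigmaDeltapm} threaten to blow up; however, because $U_0$ and $U_1$ are disks of positive radius centered at $w_{0,1}(x,t)$, one stays bounded away from the singular points $\mathfrak{a},\mathfrak{b}$ of $Y$ so long as those radii exceed $|w_0(x,t)-\mathfrak{a}|$ and $|w_1(x,t)-\mathfrak{b}|$, which are controlled by $k_\prec(x,t)^2$ and $k_\succ(x,t)^2$ from Proposition~\ref{prop:origin}. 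Combining these four pieces then yields the uniform $\bo(\epsilon_N)$ estimate away from $U_0\cup U_1$ claimed in the proposition.
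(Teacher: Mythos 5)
Your proposal follows the same decomposition and draws on the same preliminary propositions as the paper's own argument: the exact jump on $\vec{\beta}$ by direct matrix computation, the exact jump on $\vec{\Sigma}_{>0}$ with the $\bo(\epsilon_N)$ correction near $\xi=1$ derived via Proposition~\ref{prop:JumpORplus} and the $\bo(\lambda^2)$ control of $\Re\{i\theta-i\theta_0\}$, the exponential smallness on $\vec{\gamma}$ from the sign of $\Re\{\phi\}$, the Cauchy--Riemann argument on $\partial\Lambda$, and the continuity argument from $t=0$ on $\Sigma_\pm^{\nabla,\Delta}$ with the near-origin case handled by the disks $U_0,U_1$ absorbing $\mathfrak{a},\mathfrak{b}$.

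Two small inaccuracies worth flagging, neither fatal. First, in the band computation the factors $T^{\nabla,\Delta}(\xi)^{\pm 1/2}$ do not ``collapse against'' the exponentials $e^{\pm i\theta/\epsilon_N}$; rather, the $T$-conjugation kills the $T^{\nabla,\Delta}$ factor sitting in the off-diagonal entry of the $\mathbf{N}$-jump (since $T^{\nabla,\Delta}$ are continuous across $\beta$ and so the same branch appears on both sides), while it is the triangular factors $\mathbf{L}^{\nabla,\Delta}_\pm$ together with the identity $2iQ+L_\pm\mp i\theta_0-2g_\pm = \phi\mp i\theta + i\pi\epsilon_N\#\Delta \pmod{2\pi i\epsilon_N}$ that eliminate $e^{\pm i\theta/\epsilon_N}$ from the diagonal. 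The principal-branch choice matters for $T^{1/2}\approx 1$ on $\partial\Lambda$, not for the exact identity on $\beta$. Second, in the last paragraph the dichotomy between $(x,t)$ near $(0,0)$ (where $\mathfrak{a},\mathfrak{b}$ lie inside the fixed-radius disks $U_0,U_1$) and $(x,t)$ bounded away from the origin (where $\phi(\mathfrak{a}),\phi(\mathfrak{b})$ are uniformly bounded away from zero by Proposition~\ref{prop:wsolntzeroproperties}) is only implicit; the paper makes the same split, so this is a fair match rather than a gap, but it would be cleaner to say explicitly that the fixed radius $r$ of $U_0$ forces $|k_\prec(x,t)|^2\ge r$ whenever $\mathfrak{a}\notin U_0$, which places $(x,t)$ outside a neighborhood of the origin and of $t=t_+(x)$, whereupon the $t=0$ continuity argument applies.
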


\subsection{Construction of a global parametrix}
\label{sec:global-parametrix}
A \emph{global parametrix} for $\mathbf{O}(w)$ is a
sectionally-analytic matrix function $\dot{\mathbf{O}}(w)$ designed to
satisfy the jump conditions of $\mathbf{O}(w)$ for $\xi\in\Sigma_\mathbf{O}\cap
(\beta\cup\mathbb{R}_+\cup U_0\cup U_1)$, with the only modification
being that we use the jump condition \eqref{eq:ORplusexactjump} on all
of $\mathbb{R}_+$ rather than omitting a small interval near $\xi=1$
where $\mathbf{O}(w)$ satisfies the approximate relation
\eqref{eq:ORplusapproxjump}.  The global parametrix
$\dot{\mathbf{O}}(w)$ will be analytic (have no jump discontinuity)
for $\xi\in\Sigma_\mathbf{O}\setminus(\beta\cup\mathbb{R}_+\cup U_0\cup U_1)$,
that is, on the arcs of the jump contour $\Sigma_\mathbf{O}$ where
Proposition~\ref{prop:OjumpsAway} guarantees that
$\mathbf{O}_+(\xi)=\mathbf{O}_-(\xi) (\mathbb{I}+\bo(\epsilon_N))$.
It is standard to construct $\dot{\mathbf{O}}(w)$ by patching together
(i) an \emph{outer parametrix} denoted $\dot{\mathbf{O}}^\mathrm{out}(w)$
expected to be a valid approximation of $\mathbf{O}(w)$ 
away from the two roots of $R(w;\mathfrak{p},\mathfrak{q})^2$ and (ii) two \emph{inner parametrices}
denoted $\dot{\mathbf{O}}^\mathrm{in}_0(w)$ and 
$\dot{\mathbf{O}}^\mathrm{in}_1(w)$ expected to be valid approximations of
$\mathbf{O}(w)$ for $w\in U_0$ and $w\in U_1$ respectively:
\begin{equation}
\dot{\mathbf{O}}(w):=\begin{cases}
\dot{\mathbf{O}}_k^\mathrm{in}(w),\quad &
w\in U_k,\quad k=0,1,\\
\dot{\mathbf{O}}^\mathrm{out}(w),\quad & \text{otherwise}.
\end{cases}
\label{eq:hatOdef}
\end{equation}
It is also an important part of the construction of $\dot{\mathbf{O}}(w)$
that the outer and inner parametrices are well-matched on the boundaries
$\partial U_k$ of the discs.

Our construction of $\dot{\mathbf{O}}(w)$ in this section rests essentially upon the following two facts
that hold true for $(x,t)\in S_\librational\cup S_\rotational$:
\begin{itemize}
\item The roots of $R(w;\mathfrak{p},\mathfrak{q})^2$ are distinct.  Error terms will become uncontrolled if the roots
are allowed to coalesce, which occurs only on the common boundary of $S_\librational$ and
$S_\rotational$ consisting of the two points $x=\pm x_\mathrm{crit}$.  In a forthcoming paper \cite{BuckinghamMseparatrix} we will give a complete analysis
of the interesting dynamics that occur for $x\approx \pm x_\mathrm{crit}$ and $|t|$ small.
\item No root of $R(w;\mathfrak{p},\mathfrak{q})^2$ coincides with $\mathfrak{a}$ or $\mathfrak{b}$.  Again, error estimates will fail
if one of the roots approaches either $\mathfrak{a}$ or $\mathfrak{b}$, as occurs along the two curves $t=t_\pm(x)$
excluded from $S_\rotational$ along which the choice of $\Delta$ must be changed as explained in
Proposition~\ref{prop:origin}.
Unlike the coalescence of two roots of $R(w;\mathfrak{p},\mathfrak{q})^2$, which as mentioned above leads to interesting
new dynamics, the exclusion of the curves $t=t_\pm(x)$ appears to us to be serving a merely technical purpose, allowing us to avoid more complicated inner parametrices.  As mentioned in \S\ref{sec:results} there
appears to be no exceptional behavior near the omitted curves in the $(x,t)$-plane.
\end{itemize}

The outer parametrix must be considered separately 
in the two cases (\librational\ and \rotational)
because the contour geometry is qualitatively different.  
Taking into account just the jump conditions for $\mathbf{O}(w)$
on the band $\beta$ and the positive real axis $\mathbb{R}_+$, the
outer parametrix $\dot{\mathbf{O}}^\mathrm{out}(w)$ is 
to be determined as a solution of one or the other of the following
two Riemann-Hilbert problems.  The outer parametrix will depend parametrically
on the fast phase $\nu:=\Phi_\Delta/\epsilon_N$ as well as on the geometry of
the contour $\beta$, although both of these dependencies are suppressed in our notation.

\begin{rhp}[Outer parametrix, case \librational]
Let $\nu$ be a real number, and let
$w_0\in\mathbb{C}_+$. Let $\vec{\beta}_+$ denote an oriented arc in
$\mathbb{C}_+$ from $w=1$ to $w=w_0$ and let $\vec{\beta}_-$ denote
the complex-conjugated arc from $w=1$ to $w=w_1=w_0^*$. Find a matrix
$\dot{\mathbf{O}}^\mathrm{out}(w)$ with the following properties.
\begin{itemize}
\item[]\textbf{Analyticity:} $\dot{\mathbf{O}}^\mathrm{out}(w)$ is an analytic
  function of $w$ for $w\in\mathbb{C}\setminus 
(\beta_+\cup\beta_-\cup\mathbb{R}_+)$ and  H\"older-$\gamma$
  continuous for any $\gamma\le 1$ with the exception of arbitrarily
  small neighborhoods of the points $w=w_0$ and  $w=w_1=w_0^*$.
  In the neighborhood $U_k$ of $w_k$,
  the elements of $\dot{\mathbf{O}}^\mathrm{out}(w)$ are bounded by a multiple
  of $|w-w_k|^{-1/4}$.
\item[]\textbf{Jump condition:} The boundary values taken by
  $\dot{\mathbf{O}}^\mathrm{out}(w)$ along $\vec{\beta}_\pm$ and
  $\vec{\mathbb{R}}_+$ (the latter oriented from the origin to $+\infty$)
  satisfy the following jump conditions:
\begin{equation}
\dot{\mathbf{O}}^\mathrm{out}_+(\xi)=\dot{\mathbf{O}}^\mathrm{out}_-(\xi)
i\sigma_1e^{\pm i\nu\sigma_3},\quad \xi\in\vec{\beta}_\pm,
\end{equation}
and
\begin{equation}
\dot{\mathbf{O}}^\mathrm{out}_+(\xi)=\sigma_2\dot{\mathbf{O}}^\mathrm{out}_-(\xi)
\sigma_2,\quad\xi\in\vec{\mathbb{R}}_+.
\end{equation}
\item[]\textbf{Normalization:}  The following normalization condition holds:
\begin{equation}
\lim_{w\to\infty}\dot{\mathbf{O}}^\mathrm{out}(w)=\mathbb{I}.
\end{equation}
\end{itemize}
\label{rhp:wOdotlibrational}
\end{rhp}
The jump conditions satisfied by $\dot{\mathbf{O}}^\mathrm{out}(w)$ in
case \librational\ are summarized in Figure~\ref{fig:wOdotB}.
\begin{figure}[h]
\begin{center}
\includegraphics{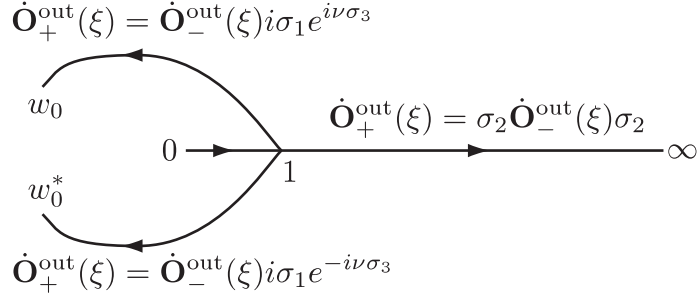}
\end{center}
\caption{\emph{The jump conditions satisfied by the matrix 
$\dot{\mathbf{O}}^\mathrm{out}(w)$
that we will show is normalized in the stronger sense that
$\dot{\mathbf{O}}^\mathrm{out}(w)=\mathbb{I}+\bo(|w|^{-1/2})$ as $w\to\infty$.
The only other singularities admitted are $\bo(|w-w_k|^{-1/4})$ near the
points $w_0$ and $w_0^*$.}}
\label{fig:wOdotB}
\end{figure}

\begin{rhp}[Outer parametrix, case \rotational]
  Let $\nu$ be a real number, 
and let $w_0<w^+<w_1<0$ 
  be given.  Let $\vec{\beta}_+$ denote an oriented arc in $\mathbb{C}_+$
from $w=1$ to $w=w^+$, let $\vec{\beta}_-$ denote the 
complex-conjugated arc  in $\mathbb{C}_-$ from $w=1$ to $w=w^+$,
and let $\vec{\beta}_\prec$ and $\vec{\beta}_\succ$ denote
real arcs oriented from $w=w^+$ to $w=w_0$ and $w=w_1$ respectively.  
Find a matrix 
$\dot{\mathbf{O}}^\mathrm{out}(w)$ with the following properties.
\begin{itemize}
\item[]\textbf{Analyticity:} $\dot{\mathbf{O}}^\mathrm{out}(w)$ is an analytic
  function of $w$ for $w\in\mathbb{C}\setminus (\beta_+\cup
\beta_-\cup\beta_\prec\cup\beta_\succ\cup\mathbb{R}_+)$, 
H\"older-$\gamma$
  continuous for any $\gamma\le 1$ with the exception of arbitrarily
  small neighborhoods of the points $w=w_0$ and $w=w_1$.
  In the neighborhood $U_k$ of $w_k$, the elements
  of $\dot{\mathbf{O}}^\mathrm{out}(w)$ are bounded by a multiple
of $|w-w_k|^{-1/4}$.
\item[]\textbf{Jump condition:}  
The boundary values taken by $\dot{\mathbf{O}}^\mathrm{out}(w)$
 along $\vec{\beta}_\pm$, $\vec{\beta}_\prec$, $\vec{\beta}_\succ$, and
$\vec{\mathbb{R}}_+$ (the latter oriented from the origin to $+\infty$) satisfy
the following jump conditions:
\begin{equation}
\dot{\mathbf{O}}^\mathrm{out}_+(\xi)=\dot{\mathbf{O}}^\mathrm{out}_-(\xi)
i\sigma_1e^{\pm i\nu\sigma_3},\quad\xi\in\vec{\beta}_\pm,
\end{equation}
\begin{equation}
\dot{\mathbf{O}}^\mathrm{out}_+(\xi)=\dot{\mathbf{O}}^\mathrm{out}_-(\xi)
i\sigma_1,\quad \xi\in \vec{\beta}_\prec\cup\vec{\beta}_\succ,
\end{equation}
and
\begin{equation}
\dot{\mathbf{O}}^\mathrm{out}_+(\xi)=\sigma_2\dot{\mathbf{O}}^\mathrm{out}_-(\xi)
\sigma_2,\quad\xi\in\vec{\mathbb{R}}_+.
\end{equation}
\item[]\textbf{Normalization:}  The following normalization condition holds:
\begin{equation}
\lim_{w\to\infty}\dot{\mathbf{O}}^\mathrm{out}(w)=\mathbb{I}.
\end{equation}
\end{itemize}
\label{rhp:wOdotrotational}
\end{rhp}
The jump conditions satisfied by $\dot{\mathbf{O}}^\mathrm{out}(w)$ in case \rotational\ are
summarized in Figure~\ref{fig:wOdotK}.
\begin{figure}[h]
\begin{center}
\includegraphics{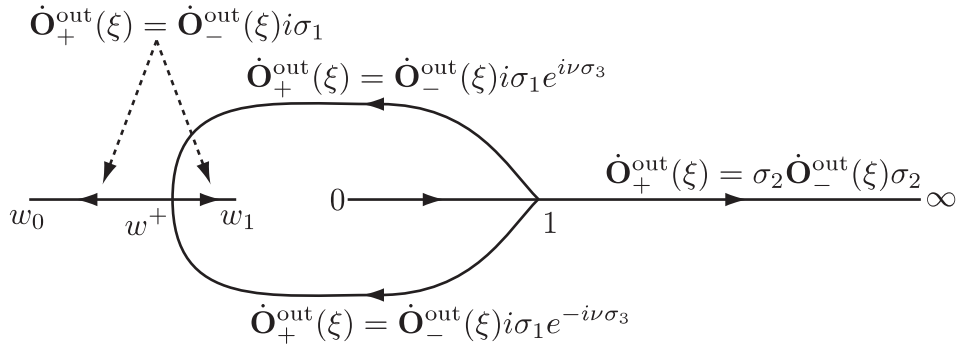}
\end{center}
\caption{\emph{The jump conditions satisfied by the matrix $\dot{\mathbf{O}}^\mathrm{out}(w)$
that we will show is normalized in the stronger sense that
$\dot{\mathbf{O}}^\mathrm{out}(w)=\mathbb{I}+\bo(|w|^{-1/2})$ as $w\to\infty$.
The only other singularities admitted are $\bo(|w-w_k|^{-1/4})$ near the
points $w_0$ and $w_0^*$.}}
\label{fig:wOdotK}
\end{figure}

A standard Liouville argument shows that these two Riemann-Hilbert problems
have at most one solution.  Existence of a unique solution may be accomplished
by explicit construction involving Riemann $\Theta$-functions of
genus one.  For us to be able
to continue the current line of argument it is sufficient to state the following
proposition, whose proof can be found in Appendix~\ref{app:outer}.  Recall that in using the outer
parametrix we will be setting $\nu = \Phi_\Delta/\epsilon_N$.
\begin{proposition}
Riemann-Hilbert Problems~\ref{rhp:wOdotlibrational} and \ref{rhp:wOdotrotational} 
each have a unique solution with the following properties:
\begin{itemize}
\item
$\dot{\mathbf{O}}^\mathrm{out}(w)$ depends continuously on $\nu$ and remains uniformly bounded (despite having no limit) as $\nu\to\infty$
for $w\in\mathbb{C}\setminus (U_0\cup U_1)$.
\item
$|w-w_k|^{1/4}\dot{\mathbf{O}}^\mathrm{out}(w)$ is uniformly bounded as
$\nu\to\infty$ for $w\in U_k$, $k=0,1$.
\item
For all $w$ where $\dot{\mathbf{O}}^\mathrm{out}(w)$ is defined, $\det(\dot{\mathbf{O}}^\mathrm{out}(w))=1$.
\end{itemize}
Whether it solves Riemann-Hilbert Problem~\ref{rhp:wOdotlibrational}
(case \librational) or Riemann-Hilbert Problem~\ref{rhp:wOdotrotational} (case
\rotational), the matrix $\dot{\mathbf{O}}^\mathrm{out}(w)$
has the following asymptotic forms: 
\begin{equation}
\dot{\mathbf{O}}^\mathrm{out}(w)=\dot{\mathbf{O}}^{0,0}+\dot{\mathbf{O}}^{0,1}
\sqrt{-w} + \bo(w),\quad w\to 0
\end{equation}
and
\begin{equation}
\dot{\mathbf{O}}^\mathrm{out}(w)=\mathbb{I} + \frac{\dot{\mathbf{O}}^{\infty,1}}
{\sqrt{-w}} + \bo(w^{-1}),\quad w\to\infty,
\end{equation}
and the matrix coefficients $\dot{\mathbf{O}}^{0,0}$, $\dot{\mathbf{O}}^{0,1}$, and $\dot{\mathbf{O}}^{\infty,1}$ are all uniformly bounded
as $\nu\to\infty$.
\label{prop:outer}
\end{proposition}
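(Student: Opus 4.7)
The plan is to first prove uniqueness via a standard Liouville argument, and then to establish existence and all the listed properties simultaneously by constructing $\dot{\mathbf{O}}^\mathrm{out}(w)$ explicitly on the genus-one hyperelliptic Riemann surface associated to the branch $R(w;\mathfrak{p},\mathfrak{q})$ of $\sqrt{(w-w_0)(w-w_1)}$ cut on $\beta$ and normalized so that $R(w) = w + \bo(1)$ at infinity.

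For uniqueness, note that every jump matrix appearing in both Riemann-Hilbert Problems~\ref{rhp:wOdotlibrational} and \ref{rhp:wOdotrotational} has unit determinant: $\det(i\sigma_1 e^{\pm i\nu\sigma_3}) = \det(i\sigma_1) = 1$, and the relation $\dot{\mathbf{O}}_+(\xi) = \sigma_2 \dot{\mathbf{O}}_-(\xi)\sigma_2$ on $\mathbb{R}_+$ is a similarity transformation. The admissible $\bo(|w-w_k|^{-1/4})$ singularity at each band endpoint allows $\det \dot{\mathbf{O}}^\mathrm{out}$ to blow up no faster than $|w-w_k|^{-1/2}$ there, hence any isolated singularity of the determinant is removable.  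Since $\det \dot{\mathbf{O}}^\mathrm{out}$ is sectionally analytic with no jumps, tends to $1$ at infinity, and has only removable singularities at $w_0$ and $w_1$, Liouville's Theorem gives $\det \dot{\mathbf{O}}^\mathrm{out} \equiv 1$. Given two solutions the ratio $\mathbf{R}(w) := \dot{\mathbf{O}}_1(w) \dot{\mathbf{O}}_2(w)^{-1}$ is then well-defined, analytic away from $\mathbb{R}_+$ with only removable singularities at the band endpoints, and satisfies $\mathbf{R}_+ = \sigma_2 \mathbf{R}_- \sigma_2$ on $\mathbb{R}_+$. Applying the unfolding $z = i\sqrt{-w}$ exactly as in \eqref{eq:zw-unfold} produces an entire function of $z$ with identity limit at infinity, which by Liouville must equal $\mathbb{I}$, forcing $\dot{\mathbf{O}}_1 \equiv \dot{\mathbf{O}}_2$.

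For existence I plan to work directly in the $w$-plane and design the matrix entries to be functions of $\sqrt{-w}$; this automatically yields the $\sigma_2$-conjugation jump on $\mathbb{R}_+$. The constant-coefficient part of the jump on $\beta$ (namely $i\sigma_1$ on the real portions, and on all of $\beta$ when $\nu=0$) is diagonalized by conjugating with the unitary matrix $\mathbf{U} := (\mathbb{I} - i\sigma_1)/\sqrt{2}$, reducing to a $\pm \sigma_3$-valued jump that is solvable by scalar $R(w)^{1/2}$-type factors. The additional twist $e^{\pm i\nu\sigma_3}$ carried only on the complex arcs $\beta_\pm$ is then absorbed into a scalar function $h(w;\nu)$ on the genus-one Riemann surface $\mathcal{R}$ of $R$, constructed as a ratio of Riemann theta functions whose arguments are shifted by $\nu$ times a real vector proportional to an Abelian integral of the third kind with poles at the two preimages of $w=1$. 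The hypotheses that the band endpoints are distinct and not equal to $\mathfrak{a}$ or $\mathfrak{b}$ keep the periods of $\mathcal{R}$ smoothly bounded and the imaginary $a$-period strictly positive, so the theta-function construction is nonsingular. The detailed formulas, together with their translation into Jacobi elliptic functions with moduli $m_\librational$ and $m_\rotational$, are carried out in Appendix~\ref{app:outer}.

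With the explicit formula in hand, all remaining properties can be read off by inspection. The $|w-w_k|^{-1/4}$ singularity at each band endpoint is contributed by a single factor of $R(w)^{1/2}$, the theta-function ratio being analytic and nonvanishing there; this simultaneously gives $|w-w_k|^{1/4}\dot{\mathbf{O}}^\mathrm{out}(w) = \bo(1)$ uniformly in $\nu$. Continuous dependence on $\nu$ is immediate since $\nu$ enters only through the entire theta function. The asymptotic expansions at $w=0$ and $w=\infty$ are obtained by Taylor-expanding the explicit formula in $\sqrt{-w}$, and the coefficient matrices $\dot{\mathbf{O}}^{0,0}$, $\dot{\mathbf{O}}^{0,1}$, and $\dot{\mathbf{O}}^{\infty,1}$ are bounded uniformly in $\nu$ for the same reason that $\dot{\mathbf{O}}^\mathrm{out}$ itself is: the $\nu$-dependence enters only as a real translation of the theta-function argument, so the quasi-periodicity of the theta function, together with the choice of characteristics forced by requiring $\dot{\mathbf{O}}^\mathrm{out}$ to be single-valued on $\mathbb{C}\setminus\beta$, makes the resulting matrix-valued function uniformly bounded in $\nu$ on the complement of the discs $U_0 \cup U_1$. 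The hardest part of the argument, and the main obstacle for the whole proposition, is precisely this theta-function bookkeeping: choosing characteristics, base points, and Abelian differentials so that the quasi-periodicity produces exactly the prescribed jump including the $\nu$-twist in each of the two cases \librational\ and \rotational, and then re-expressing the resulting theta-function formulas as Jacobi elliptic functions of the specific moduli introduced in Theorems~\ref{thm:librational} and \ref{thm:rotational}; it is this final translation that will be essential to obtain the clean asymptotic description of $u_N(x,t)$ stated there.
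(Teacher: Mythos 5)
Your high-level strategy --- uniqueness by Liouville, existence by an explicit Baker--Akhiezer/theta-function construction on a genus-one curve, followed by reduction to Jacobi elliptic functions --- is the same route the paper takes, and your uniqueness argument (unfolding $w=z^2$ via \eqref{eq:zw-unfold}) is exactly correct. However, there is a genuine error in your identification of the Riemann surface, and it is not a cosmetic one.

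You write that the construction takes place on the ``genus-one hyperelliptic Riemann surface associated to the branch $R(w;\mathfrak{p},\mathfrak{q})$ of $\sqrt{(w-w_0)(w-w_1)}$ cut on $\beta$.'' But the double cover branched only at $w_0$ and $w_1$ is rational, of genus zero; a theta-function construction on it would collapse to elementary functions and could not produce the elliptic modulation seen in the theorems. The curve that actually makes the problem genus one is the double cover of the equation $y^2 = S(w)^2 := w(w-w_0)(w-w_1)$, which carries two additional branch points, at $w=0$ and $w=\infty$. Those two branch points are precisely what the $\sigma_2$-conjugation jump on $\mathbb{R}_+$ contributes after the fold $w\mapsto\mathbf{P}(w)$: the condition $\dot{\mathbf{O}}^{\mathrm{out}}_+(\xi)=\sigma_2\dot{\mathbf{O}}^{\mathrm{out}}_-(\xi)\sigma_2$ on $\mathbb{R}_+$ does not simply force entries to be functions of $\sqrt{-w}$ and disappear; once the two half-planes are identified and $\sigma_2$ is diagonalized by $\mathbf{V}$, the scalar pieces acquire jumps across a contour that touches $w=0$, and the scalar Riemann--Hilbert problems close up only on the elliptic curve $y^2=w(w-w_0)(w-w_1)$. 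Without this extra branching the periods $\mathcal{H}$, $\kappa$ and the Abel map never arise, and the uniform-in-$\nu$ boundedness via quasi-periodicity of $\Theta$ is not available. You have, in effect, pushed the $\mathbb{R}_+$ jump out of view right at the point where it changes the topology of the problem.

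A secondary, smaller discrepancy: you propose to absorb the $\nu$-twist using ``an Abelian integral of the third kind with poles at the two preimages of $w=1$.'' What the construction actually requires (in order to satisfy the normalization $\dot{\mathbf{O}}^{\mathrm{out}}\to\mathbb{I}$ at $\infty$ while producing the $e^{\pm i\nu\sigma_3}$ twist on the complex arcs) is a normalized Abelian differential of the \emph{second} kind with a double pole at the branch point over $w=\infty$; the identity $\kappa=2c$ proved via the canonical dissection of the curve (Lemma~\ref{lem:dissection}) is what ties the exponential phase to the theta-function shift and makes the Krichever formula single-valued on $\mathbb{C}\setminus(\beta\cup\mathbb{R}_+)$. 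Your proposal of a third-kind integral with poles at the preimages of $w=1$ would introduce logarithmic singularities at $w=1$, where no singularity is allowed, rather than the prescribed exponential normalization at $\infty$. Fixing the Riemann surface first will make the correct choice of differential nearly forced; until then the bookkeeping you correctly identify as the crux of the proof cannot close.
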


In terms of the matrix elements of the coefficients 
$\dot{\mathbf{O}}^{0,0}$, $\dot{\mathbf{O}}^{0,1}$,
and $\dot{\mathbf{O}}^{\infty,1}$ we now define the following quantities:
\begin{equation}
\dot{C}:=(-1)^{\#\Delta}\dot{O}^{0,0}_{11},
\label{eq:dotClibrational}
\end{equation}
\begin{equation}
\dot{S}:=(-1)^{\#\Delta}\dot{O}^{0,0}_{21},
\label{eq:dotSlibrational}
\end{equation}
and
\begin{equation}
\begin{split}
\dot{G}:=&\dot{O}_{12}^{\infty,1} + \left[(\dot{\mathbf{O}}^{0,0})^{-1}\dot{\mathbf{O}}^{0,1}\right]_{12}\\
{}=& \dot{O}^{\infty,1}_{12} + \dot{O}_{22}^{0,0}\dot{O}_{12}^{0,1} - \dot{O}_{12}^{0,0}\dot{O}_{22}^{0,1},
\end{split}
\label{eq:dotvlibrational}
\end{equation}
where the second line follows from the first because according to
Proposition~\ref{prop:outer},
$\det(\dot{\mathbf{O}}^{0,0})=\det(\dot{\mathbf{O}}^\mathrm{out}(0))=1$.  

In Appendix~\ref{app:outer} the following simple formulae for $\dot{C}$,
$\dot{S}$, and $\dot{G}$ are established.
\begin{proposition}
Let $\nu=\Phi_\Delta/\epsilon_N = \Phi(x,t)/\epsilon_N +\pi\#\Delta$, and let
the contour $\beta$ depend on $(x,t)\in\mathbb{R}^2$ as described in
Proposition~\ref{prop:tneq0continuegeneral} or Proposition~\ref{prop:origin}.
If $\dot{\mathbf{O}}^\mathrm{out}(w)$ is the solution of Riemann-Hilbert 
Problem~\ref{rhp:wOdotlibrational} (case \librational), the derived quantities
defined by \eqref{eq:dotClibrational}--\eqref{eq:dotvlibrational} are given
by
\begin{equation}
\begin{split}
\dot{C}=\dot{C}_N(x,t)&=\mathrm{dn}\left(\frac{2\Phi K(m)}{\pi\epsilon_N};m\right),\\
\dot{S}=\dot{S}_N(x,t)&=-\sqrt{m}\mathrm{sn}\left(\frac{2\Phi K(m)}{\pi\epsilon_N};m
\right),\\
\dot{G}=\dot{G}_N(x,t)&=-\frac{4K(m)}{\pi}\frac{\partial\Phi}{\partial t}
\sqrt{m}\mathrm{cn}\left(\frac{2\Phi K(m)}{\pi\epsilon_N};m\right),
\end{split}
\label{eq:CNSNGNLibrational}
\end{equation}
where the elliptic parameter is 
\begin{equation}
m=m_\librational :=\sin(\zeta)^2,\quad\quad0<\zeta:=\frac{1}{2}\arg(w_0)<\frac{\pi}{2},
\end{equation}
which coincides with the function of $\mathcal{E}$ given by
\eqref{eq:mlibrationalE}.
On the other hand, if $\dot{\mathbf{O}}^\mathrm{out}(w)$ is the
solution of Riemann-Hilbert Problem~\ref{rhp:wOdotrotational}, the derived quantities
defined by \eqref{eq:dotClibrational}--\eqref{eq:dotvlibrational} are given by
\begin{equation}
\begin{split}
\dot{C}=\dot{C}_N(x,t)&=\mathrm{cn}\left(\frac{2\Phi K(m)}{\pi\epsilon_N};m\right),\\
\dot{S}=\dot{S}_N(x,t)&=-\mathrm{sn}\left(\frac{2\Phi K(m)}{\pi\epsilon_N};m\right),\\
\dot{G}=\dot{G}_N(x,t)&=-\frac{4K(m)}{\pi}\frac{\partial\Phi}{\partial t}
\mathrm{dn}\left(\frac{2\Phi K(m)}{\pi\epsilon_N};m\right),
\end{split}
\label{eq:CNSNGNRotational}
\end{equation}
where now 
\begin{equation}
m=m_\rotational :=\frac{4\sqrt{w_0w_1}}{(\sqrt{-w_0}+\sqrt{-w_1})^2},
\end{equation}
which coincides with the function of $\mathcal{E}$ given by \eqref{eq:mrotationalE}.
Here $K(\cdot)$ is the complete elliptic integral of the first kind as defined by \eqref{eq:ellipticKdef}, and 
both elliptic parameters correspond to the so-called normal case of 
$0<m<1$.

In both cases, the quantities $\dot{C}_N$, $\dot{S}_N$, and
$\dot{G}_N$ are all periodic in the fast phase variable $\Phi/\epsilon_N$ with period $2\pi$, and the differential relations
\begin{equation}
\epsilon_N\frac{\partial \dot{S}_N}{\partial t}(x,t) = 
\frac{1}{2}\dot{C}_N(x,t)\dot{G}_N(x,t) + \bo(\epsilon_N)\quad\text{and}\quad
\epsilon_N\frac{\partial \dot{C}_N}{\partial t}(x,t) = 
-\frac{1}{2}\dot{S}_N(x,t)\dot{G}_N(x,t) +\bo(\epsilon_N)
\label{eq:CNSNGNrelation}
\end{equation}
(where $m$ is a function of $x$ and $t$ through the roots $w_0=w_0(x,t)$
and $w_1=w_1(x,t)$ of $R(w;\mathfrak{p},\mathfrak{q})^2$) hold uniformly for bounded $(x,t)$.
\label{prop:outerelliptic}
\end{proposition}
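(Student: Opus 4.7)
\medskip

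\noindent\textbf{Proof proposal for Proposition~\ref{prop:outerelliptic}.}  My plan is to solve Riemann-Hilbert Problems \ref{rhp:wOdotlibrational} and \ref{rhp:wOdotrotational} explicitly via an elliptic-function construction, read off the expansion coefficients at $w=0$ and $w=\infty$, and then translate the resulting theta-function expressions into Jacobi \emph{sn}, \emph{cn}, \emph{dn} with the claimed moduli.  The detailed calculations are to be carried out in Appendix~\ref{app:outer}; here I sketch the strategy.

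First I would remove the jumps on the gap arcs and on $\mathbb{R}_+$ by a scalar Szeg\H{o}-type factor built from the radical $R(w)$, leaving a problem posed on the branch cut $\beta$ with a piecewise-constant unimodular jump depending only on the fast phase $\nu=\Phi_\Delta/\epsilon_N$.  Passing to the elliptic curve $y^2=R(w)^2$ (of genus one in both cases, thanks to the two simple endpoints of $\beta$), I would represent the parametrix through a ratio of Riemann $\Theta$-functions of genus one whose argument is an Abel image of $w$ shifted by a ``phase'' proportional to $\nu$ along an appropriately normalized $\mathfrak{a}$-cycle.  Uniqueness (by a standard Liouville argument using $\det\dot{\mathbf O}^{\mathrm{out}}\equiv 1$) guarantees that the construction is correct, and the required $|w-w_k|^{-1/4}$ singularities at the band endpoints arise from the standard $R(w)^{\pm 1/4}$ prefactors.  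Expanding this explicit formula at $w=0$ and $w=\infty$ produces closed-form expressions for $\dot{C}$, $\dot{S}$, and $\dot{G}$ in terms of $\Theta$ and its logarithmic derivative at specific points on the Jacobian.

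The technical heart of the argument, and what I expect to be the main obstacle, is the passage from these genus-one $\Theta$-function formulas to the clean Jacobi elliptic expressions in \eqref{eq:CNSNGNLibrational} and \eqref{eq:CNSNGNRotational}.  This requires (i) computing the elliptic modulus of the underlying curve via the substitutions \eqref{eq:wsBsubstKm} and \eqref{eq:wsKsubstKm} used in the proof of Proposition~\ref{prop:phasevelocity}, verifying that it equals $m_\librational=\sin^2\zeta$ in case \librational\ and $m_\rotational=4\sqrt{w_0w_1}/(\sqrt{-w_0}+\sqrt{-w_1})^2$ in case \rotational; (ii) identifying the ratio of complete periods with $iK(1-m)/K(m)$ so that Jacobi's inversion theorem converts $\Theta$-quotients into \emph{sn}, \emph{cn}, \emph{dn} with argument proportional to $\Phi/\epsilon_N$; and (iii) tracking the overall sign $(-1)^{\#\Delta}$ in \eqref{eq:dotClibrational}--\eqref{eq:dotSlibrational}, which compensates for the $\pi\epsilon_N\#\Delta$ shift that defines $\Phi_\Delta$, so that the final formulas depend only on $\Phi$ (not on the auxiliary set $\Delta$).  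That the argument of the Jacobi functions comes out to be exactly $2\Phi K(m)/(\pi\epsilon_N)$ is forced by the identification of the $\mathfrak{a}$-period $2K(m)$ with $\pi$ times the normalization of the Abelian differential used in the construction.  The $2\pi$-periodicity in $\Phi/\epsilon_N$ is then immediate from the $4K(m)$-periodicity of \emph{sn}, \emph{cn}, \emph{dn} in their argument.

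Finally, the differential relations \eqref{eq:CNSNGNrelation} I would verify by direct differentiation of \eqref{eq:CNSNGNLibrational}--\eqref{eq:CNSNGNRotational} using $\partial_t \mathrm{sn}(u;m) = \mathrm{cn}\,\mathrm{dn}\,\partial_t u + (\partial_t m)(\cdots)$ and the analogous formulas for $\mathrm{cn},\mathrm{dn}$.  The $\bo(\epsilon_N)$ error absorbs all contributions where $\partial_t$ falls on $m=m(x,t)$ or on $K(m)/\pi\epsilon_N$ (which produce $O(1)$ factors not multiplied by $1/\epsilon_N$), while the contributions where $\partial_t$ falls on the fast phase $2\Phi K(m)/(\pi\epsilon_N)$ reproduce, via the elliptic identities $\mathrm{sn}'=\mathrm{cn}\,\mathrm{dn}$, $\mathrm{cn}'=-\mathrm{sn}\,\mathrm{dn}$, $\mathrm{dn}'=-m\,\mathrm{sn}\,\mathrm{cn}$, precisely the products $\tfrac12 \dot C\dot G$ and $-\tfrac12 \dot S \dot G$.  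The continuity and uniform boundedness claims in Proposition~\ref{prop:outer} then follow from the joint continuity of $\Phi$, $m$, and the roots of $R(w;\mathfrak{p},\mathfrak{q})^2$ on $S_\librational\cup S_\rotational$ established in Propositions~\ref{prop:tneq0continuegeneral} and~\ref{prop:origin}.
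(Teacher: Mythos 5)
Your overall strategy matches the paper's: reduce the outer parametrix to scalar Baker--Akhiezer functions on a genus-one Riemann surface, express them via Krichever's $\Theta$-formula, expand at $w=0,\infty$, and convert to Jacobi elliptic functions with the moduli \eqref{eq:mlibrationalE}, \eqref{eq:mrotationalE}.  However, the sketch as written contains three concrete gaps that would derail the execution.

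First, the hyperelliptic curve is $y^2 = S(w)^2 = w\,R(w;\mathfrak{p},\mathfrak{q})^2 = w(w-w_0)(w-w_1)$, \emph{not} $y^2 = R(w)^2$; the latter is a conic, genus zero, and would yield only rational (not theta-function) formulae.  The branch points at $w=0$ and $w=\infty$ are essential and arise from the $\sigma_2$-conjugation jump on $\mathbb{R}_+$.  That jump is not removable by a scalar Szeg\H{o} factor: the paper handles it by the reflection $\mathbf{P}(w):=\sigma_2\dot{\mathbf{O}}^{\mathrm{out}}(w)\sigma_2$ for $\Im\{w\}<0$ (cf.\ \eqref{eq:dotPdefB}), after which the $\mathbb{R}_+$ cut reappears as a square-root branch in the scalar reduction.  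Also, the RHP for $\dot{\mathbf{O}}^{\mathrm{out}}$ has no jump on the gap arcs to begin with.

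Second, ``identifying the ratio of complete periods with $iK(1-m)/K(m)$'' and inverting is not immediate, because the $b/a$ period $\mathcal{H}$ that actually emerges from the homology basis is \emph{not} in normal form: in case \librational\ one finds $\mathcal{H}=\tfrac12(\mathcal{H}_0+2\pi i)$ (complex), and in case \rotational\ one finds $\mathcal{H}=2\mathcal{H}_0$, with $\mathcal{H}_0=-2\pi K(m')/K(m)$.  Converting the Krichever $\Theta$-ratios into Jacobi $\mathrm{sn},\mathrm{cn},\mathrm{dn}$ of parameter $m\in(0,1)$ therefore requires the first principal first-degree transformation \eqref{eq:1stprincipal1stdegree} and the Gauss (degree-two) transformations \eqref{eq:Gauss1}--\eqref{eq:Gauss2}, plus the identity \eqref{eq:Wolfram} in case \rotational.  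This is the technical heart of the proof and your sketch does not engage with it.

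Third, in the differential relations \eqref{eq:CNSNGNrelation} the bookkeeping you describe is incorrect.  Writing the fast argument as $u=2\Phi K(m)/(\pi\epsilon_N)$, the contribution where $\partial_t$ hits $K(m)$ inside $u$ is $\epsilon_N\partial_t u \supset \tfrac{2\Phi K'(m)}{\pi}\partial_t m$, which is $O(1)$, not $O(\epsilon_N)$.  Likewise the explicit $m$-derivative $\partial_m\mathrm{sn}(u;m)$ grows linearly in $u\sim 1/\epsilon_N$ (see \eqref{eq:dsndm}), so $\epsilon_N\,\partial_m\mathrm{sn}(u;m)\,\partial_t m$ is also $O(1)$.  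These two $O(1)$ pieces cancel exactly against each other --- the unbounded part $-\tfrac{K'(m)}{K(m)}u\,\mathrm{cn}\,\mathrm{dn}$ of $\partial_m\mathrm{sn}$ absorbs the $\partial_t(K(m))$-contribution in $\epsilon_N\partial_t u$ --- and only after this cancellation is the leading term $\tfrac12\dot{C}_N\dot{G}_N$ with the remainder $O(\epsilon_N)$.  Your claim that ``the $\bo(\epsilon_N)$ error absorbs all contributions where $\partial_t$ falls on $m$ or on $K(m)/\pi\epsilon_N$'' would, if taken at face value, produce a spurious $O(1)$ discrepancy.
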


Now we describe the inner parametrices $\dot{\mathbf{O}}^\mathrm{in}_k(w)$,
which are constructed in a fairly standard way from Airy functions.
The use of such ``Airy'' parametrices has been
a linchpin of many papers using the Deift-Zhou methodology going back
to the original reference \cite{DeiftZ95}. A reference that
describes a similar construction as in the present case where the functions
$T^\nabla(w)$ and $T^\Delta(w)$ appear in the jump conditions is
\cite{BaikKMM07}. The basic Airy parametrix is the solution of the
following Riemann-Hilbert Problem.
\begin{figure}[h]
\begin{center}
\includegraphics{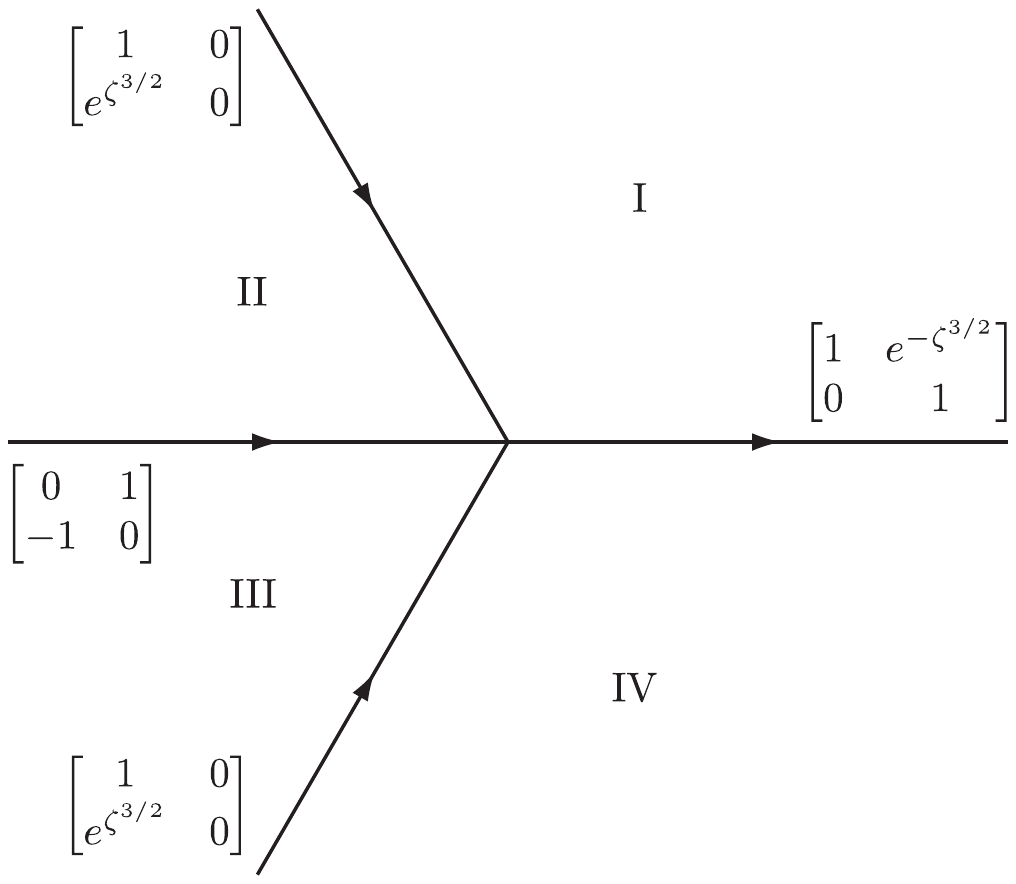}
\end{center}
\caption{\emph{The jump matrix $\mathbf{V}_\mathrm{Airy}(\zeta)$ defined
on the contour $\Sigma_\mathrm{Airy}$.}}
\label{fig:Airycontour}
\end{figure}
\begin{rhp}[Airy parametrix]
Consider the contour $\Sigma_\mathrm{Airy}$ 
illustrated in Figure~\ref{fig:Airycontour} consisting
of four rays with angles $\arg(\zeta)=0$, $\arg(\zeta)=\pm 2\pi/3$, and 
$\arg(-\zeta)=0$.  Find a $2\times 2$ matrix function $\mathbf{Z}(\zeta)$ 
with the following properties:
\begin{itemize}
\item[]\textbf{Analyticity:}  $\mathbf{Z}(\zeta)$ is an analytic function
of $\zeta\in\mathbb{C}\setminus\Sigma_\mathrm{Airy}$ and H\"older-$\gamma$
continuous for any $\gamma\le 1$ in each sector of analyticity. 
\item[]\textbf{Jump condition:} The boundary values taken by
  $\mathbf{Z}(\zeta)$ on the rays of $\Sigma_\mathrm{Airy}$ are
  related by the jump condition
  $\mathbf{Z}_+(\xi)=\mathbf{Z}_-(\xi)\mathbf{V}_\mathrm{Airy}(\xi)$
  for $\xi\in\vec{\Sigma}_\mathrm{Airy}$, where the jump matrix $\mathbf{V}_\mathrm{Airy}(\xi)$ is as defined in Figure~\ref{fig:Airycontour}.
\item[]\textbf{Normalization:}  $\mathbf{Z}(\zeta)$ satisfies the 
normalization condition
\begin{equation}
\lim_{\zeta\to\infty}\mathbf{Z}(\zeta)\mathbf{U}\zeta^{-\sigma_3/4} = \mathbb{I}
\label{eq:Airynorm}
\end{equation}
uniformly with respect to direction, where $\mathbf{U}$ is the unitary matrix
\begin{equation}
\mathbf{U}:=\frac{1}{\sqrt{2}}\begin{bmatrix}e^{-i\pi/4} & e^{i\pi/4}\\
e^{i\pi/4} & e^{-i\pi/4}\end{bmatrix}.
\end{equation}
\end{itemize}
\label{rhp:Airy}
\end{rhp}
It is well-documented \cite{DeiftZ95,BaikKMM07}
that this problem has a unique solution given by the following formulae.
Let $\tau:=(\tfrac{3}{4})^{2/3}\zeta$.  Then
\begin{equation}
\mathbf{Z}(\zeta):=\sqrt{2\pi}\left(\frac{4}{3}\right)^{\sigma_3/6}
\begin{bmatrix}
e^{-3\pi i/4}\mathrm{Ai}'(\tau) & e^{11\pi i/12}\mathrm{Ai}'(\tau e^{-2\pi i/3})\\
e^{-\pi i/4}\mathrm{Ai}(\tau) & e^{\pi i/12}\mathrm{Ai}(\tau e^{-2\pi i/3})
\end{bmatrix}e^{2\tau^{3/2}\sigma_3/3},\quad 0<\arg(\zeta)<\frac{2\pi}{3},
\end{equation}
\begin{equation}
\mathbf{Z}(\zeta):=\sqrt{2\pi}\left(\frac{4}{3}\right)^{\sigma_3/6}
\begin{bmatrix}
e^{-5\pi i/12}\mathrm{Ai}'(\tau e^{2\pi i/3}) & e^{11\pi i/12}
\mathrm{Ai}'(\tau e^{-2\pi i/3})\\
e^{-7\pi i/12}\mathrm{Ai}(\tau e^{2\pi i/3}) &
e^{\pi i/12}\mathrm{Ai}(\tau e^{-2\pi i/3})
\end{bmatrix}e^{2\tau^{3/2}\sigma_3/3},\quad \frac{2\pi}{3}<\arg(\zeta)<\pi,
\end{equation}
\begin{equation}
\mathbf{Z}(\zeta):=\sqrt{2\pi}\left(\frac{4}{3}\right)^{\sigma_3/6}
\begin{bmatrix}
e^{11\pi i/12}\mathrm{Ai}'(\tau e^{-2\pi i/3}) & e^{7\pi i/12}\mathrm{Ai}'(\tau e^{2\pi i/3})\\
e^{\pi i/12}\mathrm{Ai}(\tau e^{-2\pi i/3}) & e^{5\pi i/12}\mathrm{Ai}(\tau e^{2\pi i/3})
\end{bmatrix}e^{2\tau^{3/2}\sigma_3/3},\quad -\pi <\arg(\zeta)<-\frac{2\pi}{3},
\end{equation}
\begin{equation}
\mathbf{Z}(\zeta):=\sqrt{2\pi}\left(\frac{4}{3}\right)^{\sigma_3/6}
\begin{bmatrix} e^{-3\pi i/4}\mathrm{Ai}'(\tau) & e^{7\pi i/12}\mathrm{Ai}'(\tau e^{2\pi i/3})\\
e^{-\pi i/4}\mathrm{Ai}(\tau) & e^{5\pi i/12}\mathrm{Ai}(\tau e^{2\pi i/3})
\end{bmatrix}e^{2\tau^{3/2}\sigma_3/3},\quad -\frac{2\pi}{3}<\arg(\zeta)<0.
\end{equation}
Moreover, it is easy to show from standard asymptotic formulae for the Airy 
function $\mathrm{Ai}(\cdot)$ that the normalization condition 
\eqref{eq:Airynorm}
holds in the stronger sense that
\begin{equation}
\mathbf{Z}(\zeta)\mathbf{U}\zeta^{-\sigma_3/4} = \mathbb{I} +
\begin{bmatrix}\bo(\zeta^{-3/2}) & \bo(\zeta^{-1})\\
\bo(\zeta^{-2}) & \bo(\zeta^{-3/2})\end{bmatrix},\quad
\zeta\to\infty.
\label{eq:Airynormredo}
\end{equation}

To construct $\dot{\mathbf{O}}^\mathrm{in}_k(w)$ from $\mathbf{Z}(\zeta)$,
we must consider two cases, depending upon whether the point $w=w_k$ lies in
$\Sigma^\nabla$ or $\Sigma^\Delta$.  We assume that the corresponding disc
$U_k$ is small enough that all of $U_k\cap(\Sigma^\nabla\cup\Sigma^\Delta)$
lies also in $\Sigma^\nabla$ or $\Sigma^\Delta$ respectively.  

If $w_k\in\Sigma^\nabla$, then in the adjacent contour $\gamma$ the function
$\phi(\xi)$ satisfies $\Re\{\phi(\xi)\}<0$ and $\phi'(\xi)=R(\xi)H(\xi)$
where the quadratic $R(\xi)^2$ has a simple zero at $w_k$ and $H(\xi)$
is analytic and bounded away from zero near $w_k$.  It follows that the function
defined by taking the principal branch in the formula
\begin{equation}
W_\nabla(w):=(\phi(w_k)-\phi(w))^{2/3},\quad w\in\gamma
\end{equation}
(recall that $\phi(w_k)=0$ if $w_k\in\mathbb{R}$, while $\phi(w_k)=\pm
i\Phi$ if $w_k\in\mathbb{C}_\pm$) can be analytically continued
from $\gamma\cap U_k$ to the full neighborhood $U_k$, and (by taking
$U_k$ sufficiently small, but independent of $\epsilon_N$)
$W_\nabla'(w)\neq 0$ for $w\in U_k$.  Thus, $W=W_\nabla(w)$ defines a 
conformal map taking $U_k$ to a neighborhood of the origin in the $W$-plane.
At this point, we choose the parts of the four contour arcs meeting at $w_k$ 
within $U_k$ so that their images under $W_\nabla$ are straight segments
with angles $\arg(W)=0$ (for $W_\nabla(\gamma)$), $\arg(-W)=0$ (for $W_\nabla(\beta)$),
and $\arg(W)=\pm 2\pi/3$ (for $W_\nabla(\partial\Lambda\cap\Omega^\nabla_\mp)$).
To tie the independent variable $\zeta$ of $\mathbf{Z}(\zeta)$ to $w$, we set
\begin{equation}
\zeta:=\frac{W_\nabla(w)}{\epsilon_N^{2/3}},\quad w_k\in\Sigma^\nabla.
\label{eq:wzetanabla}
\end{equation}
Thus, the disc $U_k$ is mapped under $w\mapsto\zeta$ to a neighborhood
of the origin in the $\zeta$-plane whose outer boundary is expanding at
the uniform rate of $\epsilon^{-2/3}_N$ as $\epsilon_N\downarrow 0$.
Set $c_\nabla:=e^{i\pi/4 \pm i\Phi_\Delta/(2\epsilon_N)}$ if $w_k\in \mathbb{C}_\pm$
and $c_\nabla:=e^{i\pi/4}$ if $w_k\in\mathbb{R}$, and let a nonzero piecewise
analytic function 
$d^\nabla(w)$ be defined in $U_k$ as follows:
\begin{equation}
d^\nabla(w):=\begin{cases}1,\quad &|\arg(-W_\nabla(w))|<\pi/3\\
T^\nabla(w)^{1/2},\quad &|\arg(W_\nabla(w))|<2\pi/3.
\end{cases}
\end{equation}
It is then straightforward to verify that if $\mathbf{H}^\nabla(w)$ 
is any matrix
that is holomorphic for $w\in U_k$, with $\zeta$ defined in terms of 
$w$ by \eqref{eq:wzetanabla},
\begin{equation}
\dot{\mathbf{O}}^\mathrm{in}_k(w):=\mathbf{H}^\nabla(w)\mathbf{Z}(\zeta)
(-i\sigma_1)c_\nabla^{\sigma_3}d^\nabla(w)^{\sigma_3},\quad w\in U_k,\quad
w_k\in\Sigma^\nabla,
\label{eq:Odotinnabla}
\end{equation}
is analytic exactly where $\mathbf{O}(w)$ is and satisfies exactly the
same jump conditions as $\mathbf{O}(w)$ does within the neighborhood
$U_k$.  It remains to determine the holomorphic prefactor
$\mathbf{H}^\nabla(w)$, and this is done to achieve accurate matching onto
the outer parametrix $\dot{\mathbf{O}}^\mathrm{out}(w)$ on the disc
boundary $\partial U_k$.
To do this, we now observe that the unimodular matrices 
$\dot{\mathbf{O}}^\mathrm{out}(w)c_\nabla^{-\sigma_3}(i\sigma_1)$
and $W_\nabla(w)^{\sigma_3/4}\mathbf{U}^\dagger$ satisfy the same analyticity and
jump conditions for $w\in U$ and grow at the same rate as $w\to w_k$; 
therefore, their matrix ratio $\mathbf{B}^\nabla(w)$ is unimodular and
analytic in $U_k$, 
and moreover, it is
a consequence of Proposition~\ref{prop:outer} that this ratio is uniformly
bounded as $\epsilon_N\downarrow 0$.  We then set
\begin{equation}
\mathbf{H}^\nabla(w):=\mathbf{B}^\nabla(w)\epsilon_N^{\sigma_3/6},
\quad\text{where}\quad
\mathbf{B}^\nabla(w):=
\left[\dot{\mathbf{O}}^\mathrm{out}(w)c_\nabla^{-\sigma_3}(i\sigma_1)\right]
\cdot\left[W_\nabla(w)^{\sigma_3/4}\mathbf{U}^\dagger\right]^{-1}.
\label{eq:Hnabla}
\end{equation}
Then, it is a direct matter to check
that 
\begin{equation}
\dot{\mathbf{O}}^\mathrm{in}_k(w)\dot{\mathbf{O}}^\mathrm{out}(w)^{-1} = 
\mathbf{C}^\nabla(w)\zeta^{-\sigma_3/4}(\mathbf{Z}(\zeta)\mathbf{U}\zeta^{-\sigma_3/4})\zeta^{\sigma_3/4}\mathbf{C}^\nabla(w)^{-1}\mathbf{D}^\nabla(w),\quad
w\in U_k,\quad w_k\in\Sigma^\nabla,
\end{equation}
where
\begin{equation}
\mathbf{C}^\nabla(w):=\dot{\mathbf{O}}^\mathrm{out}(w)
c_\nabla^{-\sigma_3}(i\sigma_1)
\mathbf{U}\quad\text{and}\quad\mathbf{D}^\nabla(w):=
\dot{\mathbf{O}}^\mathrm{out}(w)d^\nabla(w)^{\sigma_3}\dot{\mathbf{O}}^\mathrm{out}(w)^{-1}.
\end{equation}
Now if $w=\xi\in\partial U_k$, then Proposition~\ref{prop:outer} guarantees that
$\dot{\mathbf{O}}^\mathrm{out}(\xi)$ and its inverse are bounded on
$\partial U_k$ uniformly as $\epsilon_N\downarrow 0$, while 
Proposition~\ref{prop:YTnablaDelta} guarantees that $d^\Delta(w)^{\sigma_3}=
\mathbb{I}+\bo(\epsilon_N)$ holds uniformly for $w\in U_k$, so it follows
that $\mathbf{C}^\nabla(\xi)=\bo(1)$, $\mathbf{C}^\nabla(\xi)^{-1}=\bo(1)$, and 
$\mathbf{D}^\nabla(\xi)=\mathbb{I}+\bo(\epsilon_N)$ are uniform estimates
for $\xi\in\partial U_k$.  Furthermore, since $w=\xi\in\partial U_k$
is equivalent to $\zeta^{-1}=\bo(\epsilon_N^{2/3})$, we may use the
large-$\zeta$ asymptotic formula \eqref{eq:Airynormredo} to obtain
$\zeta^{-\sigma_3/4}(\mathbf{Z}(\zeta)\mathbf{U}\zeta^{-\sigma_3/4})
\zeta^{\sigma_3/4}=\mathbb{I}+\bo(\zeta^{-3/2})=
\mathbb{I}+\bo(\epsilon_N)$ for $w=\xi\in\partial U_k$.
The result of these calculations is the uniform estimate
\begin{equation}
\dot{\mathbf{O}}_k^\mathrm{in}(\xi)\dot{\mathbf{O}}^\mathrm{out}(\xi)^{-1}=
\mathbb{I}+\bo(\epsilon_N),\quad \xi\in\partial U_k,\quad k=0,1.
\label{eq:discmatch}
\end{equation}
This relation shows that our choice of the holomorphic prefactor
$\mathbf{H}^\nabla(w)$ yields an accurate match between the inner and
outer parametrices on the disc boundary $\partial U_k$.

If instead $w_k\in\Sigma^\Delta$, then in the adjacent contour $\gamma$
the function $\phi(\xi)$ satisfies $\Re\{\phi(\xi)\}>0$ so the correct
conformal mapping is defined by taking the principal branch in
\begin{equation}
W_\Delta(w):=(\phi(w)-\phi(w_k))^{2/3},\quad w\in\gamma
\end{equation}
and analytically continuing from $\gamma\cap U_k$ to $U_k$ (using the fact
that $\phi(w)-\phi(w_k)$ behaves as $(w-w_k)^{3/2}$ near $w=w_k$).  Choosing
the contours within $U_k$ so that their images under $W=W_\Delta(w)$ lie along
rays with angles $\arg(W)=0$ (for $W_\Delta(\gamma)$), $\arg(-W)=0$ (for $W_\Delta(\beta)$), and $\arg(W)=\pm 2\pi/3$ (for $W_\Delta(\partial\Lambda\cap\Omega^\Delta_\pm)$), we choose the independent variable $\zeta$ in the Airy parametrix
$\mathbf{Z}(\zeta)$ to be given by 
\begin{equation}
\zeta:=\frac{W_\Delta(w)}{\epsilon_N^{2/3}},\quad w_k\in\Sigma^\Delta.
\label{eq:zetaDeltadef}
\end{equation}
Setting $c_\Delta:=e^{-\pi i/4 \pm i\Phi_\Delta/(2\epsilon_N)}$ if $w_k\in\mathbb{C}_\pm$ and $C_\Delta:=e^{-\pi i/4}$ if $w_k\in\mathbb{R}$, and defining $d^\Delta(w)$
for $w\in U_k$ by
\begin{equation}
d^\Delta(w):=\begin{cases}1,&\quad |\arg(-W_\Delta(w))|<\pi/3\\
T^\Delta(w)^{-1/2},&\quad |\arg(W_\Delta(w))|<2\pi/3,
\end{cases}
\end{equation}
we define the inner parametrix as
\begin{equation}
\dot{\mathbf{O}}^\mathrm{in}_k(w):=\mathbf{H}^\Delta(w)\mathbf{Z}(\zeta)
c_\Delta^{\sigma_3}d^\Delta(w)^{\sigma_3},\quad w\in U_k,\quad w_k\in\Sigma^\Delta,
\label{eq:OdotinDelta}
\end{equation}
where $\zeta$ is a function of $w$ and $\epsilon_N$ by \eqref{eq:zetaDeltadef},
and the holomorphic prefactor $\mathbf{H}^\Delta(w)$ is given by
\begin{equation}
\mathbf{H}^\Delta(w):=\mathbf{B}^\Delta(w)\epsilon_N^{\sigma_3/6},\quad
\text{where}\quad
\mathbf{B}^\Delta(w):=\left[\dot{\mathbf{O}}^\mathrm{out}(w)c_\Delta^{-\sigma_3}\right]\cdot
\left[W_\Delta(w)^{\sigma_3/4}\mathbf{U}^\dagger\right]^{-1}.
\label{eq:HDelta}
\end{equation}
It is again straightforward to confirm that for $w\in U_k$,
$\dot{\mathbf{O}}^\mathrm{in}_k(w)$ is analytic where $\mathbf{O}(w)$
is, and satisfies exactly the same jump conditions as does
$\mathbf{O}(w)$.  Also, 
\begin{equation}
\dot{\mathbf{O}}^\mathrm{in}_k(w)\dot{\mathbf{O}}^\mathrm{out}(w)^{-1}=
\mathbf{C}^\Delta(w)\zeta^{-\sigma_3/4}(\mathbf{Z}(\zeta)\mathbf{U}
\zeta^{-\sigma_3/4})\zeta^{\sigma_3/4}\mathbf{C}^\Delta(w)^{-1}
\mathbf{D}^\Delta(w),\quad w\in U_k,\quad w_k\in\Sigma^\Delta,
\end{equation}
where
\begin{equation}
\mathbf{C}^\Delta(w):=\dot{\mathbf{O}}^\mathrm{out}(w)c_\Delta^{-\sigma_3}\mathbf{U}
\quad\text{and}\quad
\mathbf{D}^\Delta(w):=\dot{\mathbf{O}}^\mathrm{out}(w)d^\Delta(w)^{\sigma_3}
\dot{\mathbf{O}}^\mathrm{out}(w)^{-1}.
\end{equation}
Completely analogous reasoning as in the case that $w_k\in\Sigma^\nabla$ then
shows that the estimate \eqref{eq:discmatch} holds also when 
$w_k\in\Sigma^\Delta$.

We formalize these results in the following proposition.
\begin{proposition}
Suppose that the root $w_k$ of the quadratic $R(w;\mathfrak{p},\mathfrak{q})^2$ is bounded away from
the other root and also from $\mathfrak{a}$ and $\mathfrak{b}$, and let the inner parametrix
$\dot{\mathbf{O}}_k^\mathrm{in}(w)$ be defined for $w$ in a suitably small (but
independent of $\epsilon_N$) neighborhood $U_k$ of $w_k$ by either
\eqref{eq:Odotinnabla}--\eqref{eq:Hnabla} (if $w_k\in\Sigma^\nabla$) or 
\eqref{eq:OdotinDelta}--\eqref{eq:HDelta} 
(if $w_k\in\Sigma^\Delta$).  Then 
$\det(\dot{\mathbf{O}}_k^\mathrm{in}(w))=1$ where
defined, 
$\mathbf{O}(w)\dot{\mathbf{O}}_k^\mathrm{in}(w)^{-1}$ is analytic for
$w\in U_k$, and the mismatch with the outer parametrix on $\partial U_k$
is characterized by the estimate \eqref{eq:discmatch}.
\label{prop:Airy}
\end{proposition}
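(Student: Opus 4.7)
The three assertions of the proposition follow, with some careful bookkeeping, from the explicit formulas \eqref{eq:Odotinnabla}--\eqref{eq:Hnabla} and \eqref{eq:OdotinDelta}--\eqref{eq:HDelta} together with the properties of the Airy parametrix $\mathbf{Z}(\zeta)$ of Riemann-Hilbert Problem \ref{rhp:Airy} and the outer parametrix $\dot{\mathbf{O}}^\mathrm{out}(w)$ of Proposition \ref{prop:outer}. The plan is to verify them in order, treating the $\Sigma^\nabla$ and $\Sigma^\Delta$ cases in parallel since the structure of the argument is identical.

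First, for the unimodularity $\det(\dot{\mathbf{O}}_k^\mathrm{in}(w))\equiv 1$, I would observe that each jump matrix $\mathbf{V}_\mathrm{Airy}(\xi)$ on $\vec{\Sigma}_\mathrm{Airy}$ is unimodular, so $\det(\mathbf{Z}(\zeta))$ extends to an entire function of $\zeta$ (any singularity at $\zeta=0$ is removable by the prescribed H\"older regularity), and the stronger normalization \eqref{eq:Airynormredo} together with $\det(\mathbf{U})=1$ forces $\det(\mathbf{Z})\equiv 1$ by Liouville. The remaining factors $c_{\nabla/\Delta}^{\sigma_3}$, $d^{\nabla/\Delta}(w)^{\sigma_3}$, $\epsilon_N^{\sigma_3/6}$, $W_{\nabla/\Delta}(w)^{\sigma_3/4}$, $\mathbf{U}^\dagger$, and $-i\sigma_1$ all have determinant $1$; combined with $\det(\dot{\mathbf{O}}^\mathrm{out})\equiv 1$ from Proposition \ref{prop:outer}, this gives $\det(\mathbf{B}^{\nabla/\Delta}(w))\equiv 1$ and hence $\det(\mathbf{H}^{\nabla/\Delta}(w))\equiv 1$, yielding the claim.

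Second, for analyticity of $\mathbf{O}(w)\dot{\mathbf{O}}_k^\mathrm{in}(w)^{-1}$ on $U_k$, the task reduces to showing that (i) $\dot{\mathbf{O}}_k^\mathrm{in}(w)$ has no singularity at $w=w_k$, and (ii) $\dot{\mathbf{O}}_k^\mathrm{in}(w)$ satisfies exactly the same jump relations on $U_k \cap (\beta \cup \gamma \cup \partial\Lambda)$ as $\mathbf{O}(w)$ does, as characterized by \eqref{eq:Ojumpbetanablacomplex}--\eqref{eq:OjumpbetaDeltareal}, the appropriate restrictions of \eqref{eq:Njumpnabla}--\eqref{eq:NjumpDeltacomplex} to $\gamma$, and the lens-boundary relations coming from \eqref{eq:Lnabla}--\eqref{eq:LDelta}. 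For (i), the conformal maps $W_{\nabla/\Delta}$ have simple zeros at $w=w_k$, so $\zeta$ ranges over a full neighborhood of the origin; the Airy functions are entire, so $\mathbf{Z}(\zeta)$ is bounded near $\zeta=0$, and the branch structure of the prefactor $W_{\nabla/\Delta}(w)^{\sigma_3/4}$ inside $\mathbf{H}^{\nabla/\Delta}$ (cut along $\beta\cap U_k$) is precisely what is needed for $\mathbf{B}^{\nabla/\Delta}(w)$ to extend to a single-valued analytic function on $U_k$, since the jump of $[\dot{\mathbf{O}}^\mathrm{out}(w)c_{\nabla/\Delta}^{-\sigma_3}(\pm i\sigma_1)]$ across $\beta\cap U_k$ matches that of $[W_{\nabla/\Delta}(w)^{\sigma_3/4}\mathbf{U}^\dagger]$. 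For (ii), one computes the four jump matrices produced by $\mathbf{Z}(\zeta)$ on the rays of $\Sigma_\mathrm{Airy}$, traces them back through the conjugations by $c_{\nabla/\Delta}^{\sigma_3}d^{\nabla/\Delta}(w)^{\sigma_3}$ and the right-multiplier $(-i\sigma_1)$ or $\mathbb{I}$, and checks by direct matrix multiplication that the results agree, arc by arc, with the jumps of $\mathbf{O}(w)$; the factor $d^{\nabla/\Delta}(w)$ was defined precisely to absorb the $T^{\nabla/\Delta}(w)^{\sigma_3/2}$ appearing in $\mathbf{L}^{\nabla/\Delta}(w)$ on the two arcs of $\partial\Lambda \cap U_k$.

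Third, for the matching estimate \eqref{eq:discmatch} on $\partial U_k$, the work has essentially been done in the calculations just preceding the statement: one rewrites
\begin{equation}
\dot{\mathbf{O}}_k^\mathrm{in}(\xi)\dot{\mathbf{O}}^\mathrm{out}(\xi)^{-1}=\mathbf{C}^{\nabla/\Delta}(\xi)\zeta^{-\sigma_3/4}\bigl(\mathbf{Z}(\zeta)\mathbf{U}\zeta^{-\sigma_3/4}\bigr)\zeta^{\sigma_3/4}\mathbf{C}^{\nabla/\Delta}(\xi)^{-1}\mathbf{D}^{\nabla/\Delta}(\xi),
\end{equation}
applies the refined Airy asymptotic \eqref{eq:Airynormredo} (which on $\partial U_k$, where $|\zeta|\asymp\epsilon_N^{-2/3}$, gives error $\bo(\zeta^{-3/2})=\bo(\epsilon_N)$), invokes $d^{\nabla/\Delta}(w)=1+\bo(\epsilon_N)$ from Proposition \ref{prop:YTnablaDelta} to obtain $\mathbf{D}^{\nabla/\Delta}(\xi)=\mathbb{I}+\bo(\epsilon_N)$, and uses the uniform boundedness of $\dot{\mathbf{O}}^\mathrm{out}(\xi)^{\pm 1}$ on $\partial U_k$ from Proposition \ref{prop:outer} to control the conjugating factors $\mathbf{C}^{\nabla/\Delta}(\xi)^{\pm 1}$; the conjugation $\zeta^{-\sigma_3/4}(\,\cdot\,)\zeta^{\sigma_3/4}$ preserves the order of the error. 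The main (minor) obstacle is the sign-and-branch bookkeeping required in step two to confirm that the particular choices of $c_{\nabla/\Delta}$, of the $\pm i\sigma_1$ multiplier, of the orientations of the images of $\beta$, $\gamma$, and $\partial\Lambda$ under $W_{\nabla/\Delta}$, and of the principal branch of $W_{\nabla/\Delta}^{1/4}$ do in fact all combine to give genuine equality of jumps, rather than equality up to a sign or up to a constant unitary factor; once a consistent convention is fixed, all three parts of the proposition reduce to calculations of a standard type.
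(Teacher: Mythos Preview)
Your proposal is correct and follows essentially the same route as the paper: in fact the paper does not give a separate proof after the proposition, but rather embeds all three verifications (unimodularity of the factors, exact matching of jump conditions via the construction, and the mismatch estimate via the $\mathbf{C}^{\nabla/\Delta}$--$\mathbf{D}^{\nabla/\Delta}$ decomposition together with \eqref{eq:Airynormredo} and Proposition~\ref{prop:YTnablaDelta}) in the discussion immediately preceding the statement. One small sharpening: when you say the conjugation $\zeta^{-\sigma_3/4}(\cdot)\zeta^{\sigma_3/4}$ ``preserves the order of the error,'' it is the specific entrywise structure of the error matrix in \eqref{eq:Airynormredo} (with $\bo(\zeta^{-1})$ in the $(1,2)$ slot and $\bo(\zeta^{-2})$ in the $(2,1)$ slot) that makes every entry become $\bo(\zeta^{-3/2})=\bo(\epsilon_N)$ after conjugation, exactly as the paper records.
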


We emphasize that it is a consequence of the differential identities
\eqref{eq:wthetaprimebeta} and \eqref{eq:wphiprimegamma} along with
the fact that $H$ is bounded away from zero near each of the distinct
roots of $R(w;\mathfrak{p},\mathfrak{q})^2$ that the matrix $\mathbf{Z}(\zeta)$ can be used
to construct the correct inner parametrix for $\mathbf{O}(w)$ in
$U_k$.  Indeed, were the function $\phi(w)-\phi(w_k)$ to vanish to
higher order due to coalescence of roots of $R(w;\mathfrak{p},\mathfrak{q})^2$ or the
presence of a zero of the analytic function $H$, a more exotic inner
parametrix would be required because $W_\nabla(w)$ or $W_\Delta(w)$
would fail to be a proper conformal map.  A more fruitful approach in
such a situation is to investigate the double-scaling limit where
$\phi$ degenerates due to allowing $(x,t)$ to converge to some
critical point at an appropriate rate as $\epsilon_N\downarrow 0$.  We
will carry out such analysis in a forthcoming paper \cite{BuckinghamMseparatrix}
 for the case when the two roots of
$R(w;\mathfrak{p},\mathfrak{q})^2$ coalesce on the real axis when $t=0$ and $x=\pm
x_\mathrm{crit}$.  Modified inner parametrices will
also be required if one or the other roots of $R(w;\mathfrak{p},\mathfrak{q})^2$ ``bounces
off'' of $\mathfrak{a}$ or $\mathfrak{b}$ as explained in
Proposition~\ref{prop:origin}, but this is a far less interesting
special case.

\subsection{The effect of conjugation.  Estimation of the error}
\label{sec:error}
Consider the matrix $\mathbf{E}(w)$ (the \emph{error} in approximating
$\mathbf{O}(w)$ with the global parametrix $\dot{\mathbf{O}}(w)$) defined
by
\begin{equation}
\mathbf{E}(w):=\mathbf{O}(w)\dot{\mathbf{O}}(w)^{-1}
\end{equation}
for all $w\in\mathbb{C}$ where both matrices on the right-hand side are 
well-defined. Since according to Propositions~\ref{prop:outer} and
\ref{prop:Airy} the outer parametrix $\dot{\mathbf{O}}(w)$ defined by
\eqref{eq:hatOdef} is a unimodular sectionally analytic matrix function,
it follows that the same is true of $\mathbf{E}(w)$.  
Also, since (i) the outer parametrix 
$\dot{\mathbf{O}}^\mathrm{out}(w)$ satisfies exactly the same
jump condition as does $\mathbf{O}(w)$ on the arcs of the contour $\vec{\beta}$
and (ii) the inner parametrices $\dot{\mathbf{O}}^\mathrm{in}_k(w)$ satisfy
exactly the same jump conditions as does $\mathbf{O}(w)$ on all contours
within the open discs $U_k$, the error $\mathbf{E}(w)$ can be analytically
continued to all of these contour arcs.  Thus, the jump contour for
$\mathbf{E}(w)$, denoted $\Sigma_\mathbf{E}$, is as illustrated in 
Figures~\ref{fig:BcaseEcontour}--\ref{fig:KcaseEMinusOOM}.
\begin{figure}[h]
\begin{center}
\includegraphics{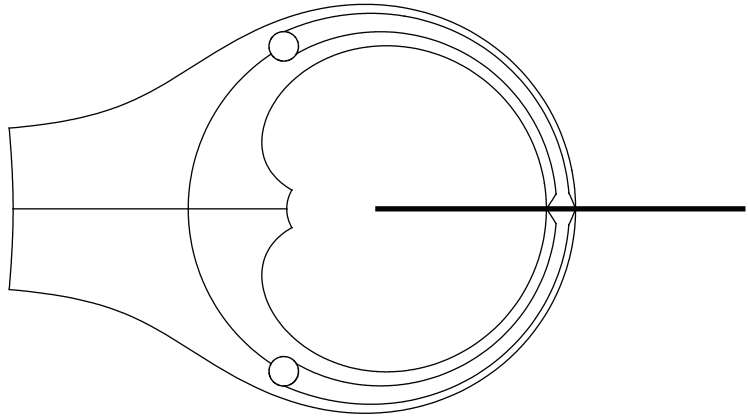}
\end{center}
\caption{\emph{The contour $\Sigma_\mathbf{E}$ of discontinuity of the sectionally analytic
function $\mathbf{E}(w)$ in case \librational.  The circles are the 
boundaries of the discs $U_1$ and $U_2$.}  }
\label{fig:BcaseEcontour}
\end{figure}
\begin{figure}[h]
\begin{center}
\includegraphics{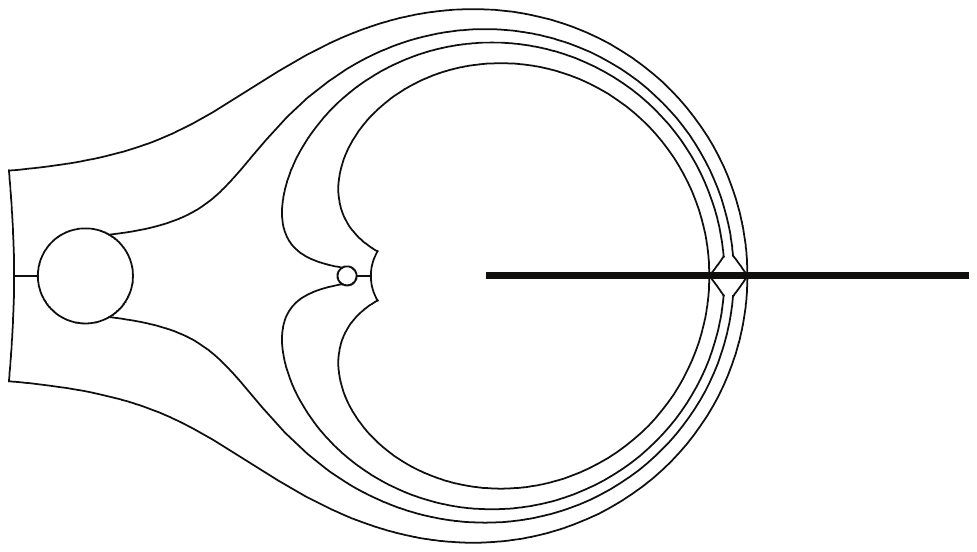}
\end{center}
\caption{\emph{The contour $\Sigma_\mathbf{E}$ of discontinuity of the sectionally analytic
function $\mathbf{E}(w)$ in case \rotational\ with either $\Delta=\emptyset$
or $\nabla=\emptyset$.  The circles are the 
boundaries of the discs $U_1$ and $U_2$.  }}
\label{fig:KcaseEEmpty}
\end{figure}
\begin{figure}[h]
\begin{center}
\includegraphics{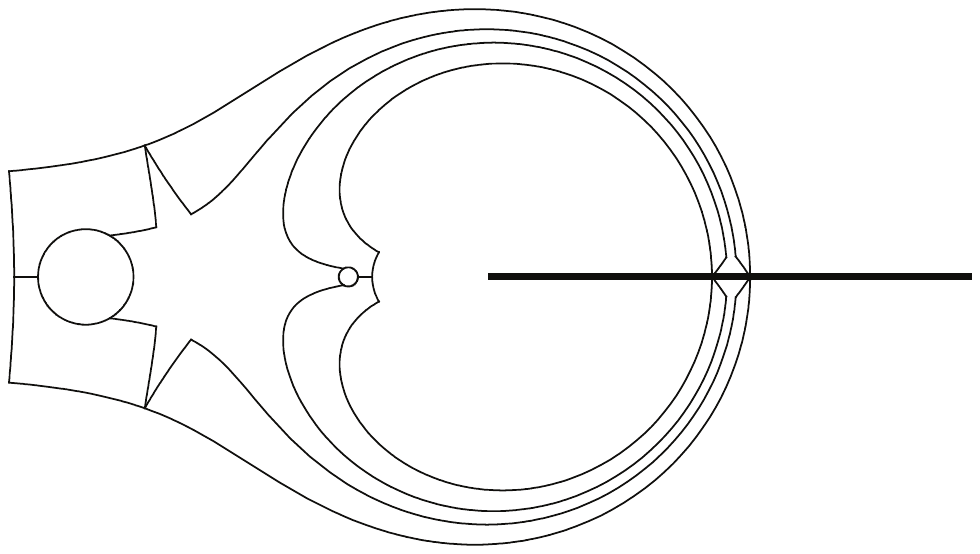}
\end{center}
\caption{\emph{The contour $\Sigma_\mathbf{E}$ of discontinuity of the sectionally analytic
function $\mathbf{E}(w)$ in case \rotational\ with either 
$\Delta=P^{\prec\rotational}_N$
or 
$\nabla=P^{\prec\rotational}_N$.  The circles are the 
boundaries of the discs $U_1$ and $U_2$.   }}
\label{fig:KcaseEMinusM}
\end{figure}
\begin{figure}[h]
\begin{center}
\includegraphics{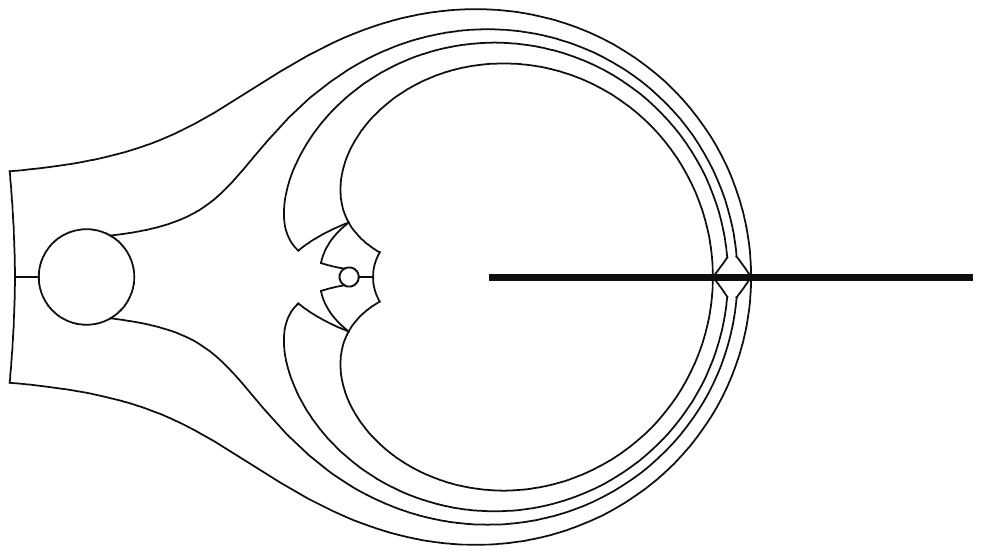}
\end{center}
\caption{\emph{The contour $\Sigma_\mathbf{E}$ of discontinuity of the sectionally analytic
function $\mathbf{E}(w)$ in case \rotational\ with either 
$\Delta=P^{\rotational\succ}_N$
or 
$\nabla=P^{\rotational\succ}_N$.  The circles are the 
boundaries of the discs $U_1$ and $U_2$. }}
\label{fig:KcaseEMinusOOM}
\end{figure}

Now while the global parametrix $\dot{\mathbf{O}}(w)$ is known, the
matrix $\mathbf{O}(w)$ is only characterized by being related via
explicit transformations to $\mathbf{H}(w)$, which in turn is
specified only as the (unknown) solution of Riemann-Hilbert
Problem~\ref{rhp:basicw}.  However, the conditions of Riemann-Hilbert
Problem~\ref{rhp:basicw} imply an equivalent Riemann-Hilbert problem
whose solution must give $\mathbf{E}(w)$.  As both $\mathbf{O}(w)$ and
$\dot{\mathbf{O}}(w)$ tend to the identity matrix as $w\to\infty$ (in
the case of $\mathbf{O}(w)$ this follows from the sequence of explicit
transformations relating it back to $\mathbf{H}(w)$ and the
normalization condition \eqref{eq:Hnorm} for the latter, and in the
case of $\dot{\mathbf{O}}(w)$ this follows from the fact that
$\dot{\mathbf{O}}(w)=\dot{\mathbf{O}}^\mathrm{out}(w)$ for large $|w|$
and from the normalization condition on the outer parametrix as specified
from the conditions of Riemann-Hilbert Problem~\ref{rhp:wOdotlibrational}
or \ref{rhp:wOdotrotational}), we must require that $\mathbf{E}(w)\to\mathbb{I}$
as $w\to\infty$.  To formulate the Riemann-Hilbert problem for the error
it remains to analyze the jump conditions satisfied by $\mathbf{E}(w)$
along the contour $\Sigma_\mathbf{E}$ pictured in 
Figures~\ref{fig:BcaseEcontour}--\ref{fig:KcaseEMinusOOM}.

First consider the jump of $\mathbf{E}(w)$ across the positive real
axis $\xi\in \mathbb{R}_+\subset\Sigma_\mathbf{E}$.  On either side of
$\mathbb{R}_+$ we have $\dot{\mathbf{O}}_\pm(\xi)=
\dot{\mathbf{O}}^\mathrm{out}_\pm(\xi)$, and according to the jump
conditions of Riemann-Hilbert Problem~\ref{rhp:wOdotlibrational} or
\ref{rhp:wOdotrotational} we therefore have the exact relation
$\dot{\mathbf{O}}_+(\xi)=\sigma_2\dot{\mathbf{O}}_-(\xi)\sigma_2$ for
$\xi\in\mathbb{R}_+$.  According to Proposition~\ref{prop:OjumpsAway}
the corresponding boundary values of $\mathbf{O}(w)$ are related by
$\mathbf{O}_+(\xi)=\sigma_2\mathbf{O}_-(\xi)\sigma_2
(\mathbb{I}+\bo(\epsilon_N))$, where the error term is identically
zero except in the interval $J$ where the lens $\Lambda$ abuts the
positive real axis from above and below.  Using the fact (see
Proposition~\ref{prop:outer}) that
$\dot{\mathbf{O}}_-(\xi)=\dot{\mathbf{O}}_-^\mathrm{out}(\xi)$ is,
along with its inverse, uniformly bounded for $\xi>0$, we then see that
$\mathbf{E}_+(\xi)=\sigma_2\mathbf{E}_-(\xi)\sigma_2(\mathbb{I}+\mathbf{X}(\xi))$ holds for $\xi>0$,
where $\mathbf{X}(\xi)=\bo(\epsilon_N)$ for $\xi\in J$ and otherwise
$\mathbf{X}(\xi)\equiv 0$.

Across the disc boundaries $\partial U_k\subset\Sigma_\mathbf{E}$ we
have no discontinuity of $\mathbf{O}(w)$, but $\mathbf{E}(w)$ is
discontinuous because the global parametrix $\dot{\mathbf{O}}(w)$ is
discontinuous due to the mismatch between the outer and inner parametrices.  
If we take the disc boundaries to be oriented in the clockwise direction,
then we have $\mathbf{E}_+(\xi)= \mathbf{E}_-(\xi)
[\dot{\mathbf{O}}^\mathrm{in}_k(\xi)\dot{\mathbf{O}}^\mathrm{out}(\xi)^{-1}]$
for $\xi\in\partial U_k$, and according to
Proposition~\ref{prop:Airy}, this jump condition can be written in the
form
$\mathbf{E}_+(\xi)=\mathbf{E}_-(\xi)(\mathbb{I}+\bo(\epsilon_N))$.

Finally, consider $\xi\in\Sigma_\mathbf{E}\setminus(\mathbb{R}_+\cup\partial U_0\cup\partial U_1)$.  According to Proposition~\ref{prop:OjumpsAway}, we have
$\mathbf{O}_+(\xi)=\mathbf{O}_-(\xi)(\mathbb{I}+\bo(\epsilon_N))$ holding 
uniformly for all such $\xi$.  But the global parametrix has no jump,
and we have $\dot{\mathbf{O}}(\xi)=
\dot{\mathbf{O}}^\mathrm{out}(\xi)$ on these contour arcs, and again
recalling Proposition~\ref{prop:outer} as $\epsilon_N\downarrow 0$
the global parametrix is uniformly bounded here along with its inverse.  
Therefore
$\mathbf{E}_+(\xi)=\mathbf{E}_-(\xi)\dot{\mathbf{O}}^\mathrm{out}(\xi)(\mathbb{I}+\bo(\epsilon_N))
\dot{\mathbf{O}}^\mathrm{out}(\xi)^{-1} = \mathbf{E}_-(\xi)(\mathbb{I}+\bo(\epsilon_N))$ holds uniformly for $\xi\in\Sigma_\mathbf{E}\setminus(\mathbb{R}_+\cup\partial U_0\cup\partial U_1)$.

It follows from these considerations that $\mathbf{E}(w)$ may be characterized
as the solution of the following Riemann-Hilbert problem.
\begin{rhp}[Error]
Seek a $2\times 2$ matrix $\mathbf{E}(w)$ with the following properties:
\begin{itemize}
\item[]\textbf{Analyticity:}  $\mathbf{E}(w)$ is analytic for $w\in\mathbb{C}\setminus\Sigma_\mathbf{E}$ where $\Sigma_\mathbf{E}$ is the contour (independent of
$\epsilon_N$) pictured in various cases in 
Figures~\ref{fig:BcaseEcontour}--\ref{fig:KcaseEMinusOOM}, and in each component
of the domain of analyticity is uniformly H\"older continuous with any
exponent $\gamma\le 1$.
\item[]\textbf{Jump conditions:}  The boundary values taken by $\mathbf{E}(w)$
on $\Sigma_\mathbf{E}$ are related as follows.  For $\xi\in\mathbb{R}_+$,
\begin{equation}
\mathbf{E}_+(\xi)=\sigma_2\mathbf{E}_-(\xi)\sigma_2(\mathbb{I}+\mathbf{X}(\xi))
\label{eq:Ejumptwist}
\end{equation}
where $\mathbf{X}(\xi)=\bo(\epsilon_N)$ for $\xi\in J$ and $\mathbf{X}(\xi)=0$
for $\xi\in\mathbb{R}_+\setminus J$.  For all remaining $\xi\in\Sigma_\mathbf{E}$
we have the uniform estimate
\begin{equation}
\mathbf{E}_+(\xi)=\mathbf{E}_-(\xi)(\mathbb{I}+\bo(\epsilon_N))
\label{eq:Ejumpnormal}
\end{equation}
as $\epsilon_N\downarrow 0$.
\item[]\textbf{Normalization:}  The matrix $\mathbf{E}(w)$ satisfies the condition
\begin{equation}
\mathop{\lim_{w\to\infty}}_{|\arg(-w)|<\pi}\mathbf{E}(w)=\mathbb{I}.
\end{equation}
\end{itemize}
\label{rhp:error}
\end{rhp}

This Riemann-Hilbert problem closely resembles a problem of
``small-norm'' type, except for the form of the jump condition along
the positive real axis (the branch cut of $\sqrt{-w}$).  But if we
consider unfolding the branch cut by setting $w=z^2$ and then defining
a matrix function $\mathbf{F}(z)$ in terms of $\mathbf{E}(w)$ by
\begin{equation}
\mathbf{F}(z):=\begin{cases}\mathbf{E}(z^2),\quad &\Im\{z\}>0\\
\sigma_2\mathbf{E}(z^2)\sigma_2,\quad &\Im\{z\}<0,
\end{cases}
\label{eq:EFtwist}
\end{equation}
then the jump contour $\Sigma_\mathbf{F}$ 
for $\mathbf{F}(z)$ in the $z$-plane includes
two disjoint images of $\Sigma_\mathbf{E}\setminus\mathbb{R}_+$, one
in the upper half-plane and one in the lower half-plane; moreover from
\eqref{eq:Ejumpnormal} the jump of $\mathbf{F}(z)$ across the arcs of
either of these two images is of the form
$\mathbf{F}_+(z)=\mathbf{F}_-(z)(\mathbb{I}+\bo(\epsilon_N))$.  If
$z\in\mathbb{R}$, then it follows from \eqref{eq:Ejumptwist} that
$\mathbf{F}_+(z)=\mathbf{F}_-(z)(\mathbb{I}+\bo(\epsilon_N))$ where
the error term vanishes identically if $z^2\not\in J$.  Thus, $\mathbf{F}(z)$
is the solution of a ``small-norm'' Riemann-Hilbert problem of standard
form.  It is a metatheorem in this subject that such problems have unique 
solutions that are uniformly close to the identity matrix on compact sets
that avoid the jump contour and in a full neighborhood of the point at 
infinity.   Indeed, solving the singular integral equations corresponding 
to such a problem
involves inverting an operator (on, say, $L^2(\Sigma_\mathbf{F})$) that
is a perturbation of the identity of size $\bo(\epsilon_N)$ in operator norm.
Such a problem can of course be solved by iteration, and the resulting 
Neumann series also functions as an asymptotic series as 
$\epsilon_N\downarrow 0$.  This yields a representation of $\mathbf{F}(z)$
in terms of a Cauchy integral:
\begin{equation}
\mathbf{F}(z)=\mathbb{I}+
\frac{1}{2\pi i}\int_{\Sigma_\mathbf{F}}\frac{\mathbf{Y}(\xi)\,d\xi}{\xi-z},
\end{equation}
where $\mathbf{Y}\in L^2(\Sigma_\mathbf{F})$ with 
$\|\mathbf{Y}(\cdot)\|_2=\bo(\epsilon_N)$.  Note that since $\Sigma_\mathbf{F}$
is compact, by Cauchy-Schwarz 
we also have $\mathbf{Y}\in L^1(\Sigma_\mathbf{F})$ with
$\|\mathbf{Y}(\cdot)\|_1=\bo(\epsilon_N)$.  It follows that if $K$ is
a compact subset of $\mathbb{C}$ disjoint from $\Sigma_\mathbf{F}$, then
\begin{equation}
\sup_{z\in K}\|\mathbf{F}(z)-\mathbb{I}\| = \bo(\epsilon_N)
\end{equation}
because the Cauchy kernel $(\xi-z)^{-1}$ is uniformly bounded for 
$\xi\in\Sigma_\mathbf{F}$ and $z\in K$.  Also, if $z$ lies outside of a sufficiently large disc containing $\Sigma_\mathbf{F}$, then the geometric series
$(\xi-z)^{-1}=-(z^{-1}+\xi z^{-2} + \xi^2 z^{-3} +\cdots)$ is uniformly convergent
for $\xi\in\Sigma_\mathbf{F}$, and so we obtain the convergent series expansion
for $\mathbf{F}(z)$ as $z\to\infty$
\begin{equation}
\mathbf{F}(z)=\mathbb{I}-\sum_{n=1}^\infty\frac{1}{2\pi iz^n}
\int_{\Sigma_\mathbf{F}}\mathbf{Y}(\xi)\xi^{n-1}\,d\xi,
\end{equation}
and the coefficient of each negative power of $z$ is $\bo(\epsilon_N)$.
Corresponding results hold for $\mathbf{E}(w)$ by restricting $z$ to the
upper half-plane and using \eqref{eq:EFtwist}.

We have thus shown that the significance of the global parametrix 
$\dot{\mathbf{O}}(w)$ defined by
\eqref{eq:hatOdef} is the following approximation result.
\begin{proposition}
Suppose $(x,t)$ is a point in one of the domains $\mathscr{O}_\librational^\pm$,
$\mathscr{O}_\rotational^\pm$ (see Proposition~\ref{prop:tneq0continuegeneral}), or
$\mathscr{O}_\rotational^0$ (see Proposition~\ref{prop:origin}), and that
$|t|$ is sufficiently small. If also the roots of the quadratic $R(w;\mathfrak{p},\mathfrak{q})^2$
do not coincide, nor does either root equal $\mathfrak{a}$ or $\mathfrak{b}$, then
Riemann-Hilbert Problem~\ref{rhp:basicw} has a unique solution
$\mathbf{H}(w)$, and the matrix $\mathbf{O}(w)$ obtained therefrom
by means of the systematic substitutions $\mathbf{H}\mapsto\mathbf{J}$
(see \eqref{eq:Htildedef}), 
$\mathbf{J}\mapsto\mathbf{M}$ (see \eqref{eq:Mdefinew}), 
$\mathbf{M}\mapsto\mathbf{N}$ (see \eqref{eq:NMw}), and 
$\mathbf{N}\mapsto\mathbf{O}$ (see \eqref{eq:wOdef})
has expansions for large and small $w$ of the form
\begin{equation}
\mathbf{O}(w)=\mathbf{O}_N^{0,0}(x,t) + \mathbf{O}_N^{0,1}(x,t)\sqrt{-w}
+\bo(w),\quad w\to 0
\label{eq:Ozeroexpand}
\end{equation}
and
\begin{equation}
\mathbf{O}(w)=\mathbb{I}+\frac{\mathbf{O}_N^{\infty,1}(x,t)}{\sqrt{-w}}
+\bo(w^{-1}),\quad w\to\infty,
\label{eq:Oinftyexpand}
\end{equation}
and the coefficients satisfy the estimates (see Proposition~\ref{prop:outer})
\begin{equation}
\begin{split}
\mathbf{O}_N^{0,0}(x,t)&=\dot{\mathbf{O}}^{0,0}+\bo(\epsilon_N)\\
\mathbf{O}_N^{0,1}(x,t)&=\dot{\mathbf{O}}^{0,1}+\bo(\epsilon_N)\\
\mathbf{O}_N^{\infty,1}(x,t)&=\dot{\mathbf{O}}^{\infty,1}+\bo(\epsilon_N),
\end{split}
\end{equation}
where the dependence on $N$ and $(x,t)$  on the right-hand side enters through
$\nu=\Phi(x,t)/\epsilon_N +\pi\#\Delta$ and the motion of the contour $\beta$.
The $\bo(\epsilon_N)$ error terms are also uniform
with respect to $(x,t)$ as long as the roots of the quadratic $R(w;\mathfrak{p},\mathfrak{q})^2$
are bounded away from each other and from $\mathfrak{a}$ and $\mathfrak{b}$.
\label{prop:Error}
\end{proposition}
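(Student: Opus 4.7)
The plan is to read off the expansions of $\mathbf{O}(w)$ at $w=0$ and $w=\infty$ from the corresponding expansions of the error $\mathbf{E}(w):=\mathbf{O}(w)\dot{\mathbf{O}}(w)^{-1}$, once the latter is shown to be $\mathbb{I}+\bo(\epsilon_N)$ uniformly. The bulk of the work has effectively already been done in the preceding text: Riemann-Hilbert Problem~\ref{rhp:error} has been derived, the outer parametrix has been shown (Proposition~\ref{prop:outer}) to be uniformly bounded with uniformly bounded inverse away from the disks $U_k$, and the matching estimate $\dot{\mathbf{O}}_k^{\mathrm{in}}\dot{\mathbf{O}}^{\mathrm{out}\,-1}=\mathbb{I}+\bo(\epsilon_N)$ has been established on $\partial U_k$ (Proposition~\ref{prop:Airy}). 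What remains is to assemble these ingredients into a genuine existence/uniqueness/estimation statement for $\mathbf{E}$ and then translate the outcome back to $\mathbf{O}$.

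First I would handle the non-standard jump on $\mathbb{R}_+$. Because the jump relation there is $\mathbf{E}_+(\xi)=\sigma_2\mathbf{E}_-(\xi)\sigma_2(\mathbb{I}+\mathbf{X}(\xi))$ rather than left-multiplication of $\mathbf{E}_-$ by a matrix close to $\mathbb{I}$, Riemann-Hilbert Problem~\ref{rhp:error} is not immediately of small-norm type. I would unfold the branch cut exactly as in the passage from $\mathbf{R}$ to $\mathbf{S}$ via \eqref{eq:zw-unfold}, defining
\begin{equation}
\mathbf{F}(z):=\begin{cases}\mathbf{E}(z^2),& \Im\{z\}>0,\\ \sigma_2\mathbf{E}(z^2)\sigma_2,& \Im\{z\}<0.\end{cases}
\end{equation}
The twist in \eqref{eq:Ejumptwist} is precisely absorbed by this unfolding, so the jump of $\mathbf{F}$ across $\mathbb{R}$ is $\mathbf{F}_+(z)=\mathbf{F}_-(z)(\mathbb{I}+\bo(\epsilon_N))$, with the perturbation supported on the preimage of $J$; the remaining arcs of $\Sigma_\mathbf{F}$ are two disjoint Schwarz-symmetric images of $\Sigma_\mathbf{E}\setminus\mathbb{R}_+$, and on them the estimate \eqref{eq:Ejumpnormal} transfers directly. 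Normalization $\mathbf{F}(z)\to\mathbb{I}$ as $z\to\infty$ is inherited from $\mathbf{E}$.

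Next I would invoke standard small-norm Riemann-Hilbert theory. The jump matrix $\mathbf{V}_\mathbf{F}$ of $\mathbf{F}$ satisfies $\|\mathbf{V}_\mathbf{F}-\mathbb{I}\|_{L^2\cap L^\infty(\Sigma_\mathbf{F})}=\bo(\epsilon_N)$, and $\Sigma_\mathbf{F}$ is a fixed compact self-intersecting contour on which the Cauchy projection operator $C_-$ is bounded on $L^2$; hence the singular integral operator $\mathbf{Y}\mapsto C_-(\mathbf{Y}(\mathbf{V}_\mathbf{F}-\mathbb{I}))$ has operator norm $\bo(\epsilon_N)$, so $\mathbf{I}-C_-(\,\cdot\,(\mathbf{V}_\mathbf{F}-\mathbb{I}))$ is invertible by Neumann series for $N$ sufficiently large. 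This gives existence and uniqueness of $\mathbf{F}$, and the integral representation
\begin{equation}
\mathbf{F}(z)=\mathbb{I}+\frac{1}{2\pi i}\int_{\Sigma_\mathbf{F}}\frac{\mathbf{Y}(\xi)\,d\xi}{\xi-z},\qquad \|\mathbf{Y}\|_{L^2\cap L^1}=\bo(\epsilon_N).
\end{equation}
Unwinding via \eqref{eq:EFtwist} yields the existence of $\mathbf{E}$ and, by reversing the explicit substitutions $\mathbf{H}\mapsto\mathbf{J}\mapsto\mathbf{M}\mapsto\mathbf{N}\mapsto\mathbf{O}$, the existence of the (unique) solution $\mathbf{H}$ of Riemann-Hilbert Problem~\ref{rhp:basicw}.

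Finally I would extract the expansions. Expanding $(\xi-z)^{-1}$ both at $z=\infty$ and in a neighborhood of $z=0$ (the origin is bounded away from $\Sigma_\mathbf{F}$ since the disks $U_k$ and the contour $\beta$ are chosen to avoid $w=0$), the representation for $\mathbf{F}$ gives convergent series whose coefficients are moments of $\mathbf{Y}$ against powers of $\xi$ and hence are all $\bo(\epsilon_N)$. Returning to $\mathbf{E}(w)$ for $w$ in an upper-half-plane neighborhood of $w=0$ and for $w\to\infty$, this gives
\begin{equation}
\mathbf{E}(w)=\mathbb{I}+\mathbf{E}^{0,0}+\mathbf{E}^{0,1}\sqrt{-w}+\bo(w),\qquad \mathbf{E}(w)=\mathbb{I}+\frac{\mathbf{E}^{\infty,1}}{\sqrt{-w}}+\bo(w^{-1}),
\end{equation}
with $\mathbf{E}^{0,0},\mathbf{E}^{0,1},\mathbf{E}^{\infty,1}=\bo(\epsilon_N)$. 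Multiplying $\mathbf{E}$ by the corresponding expansions of $\dot{\mathbf{O}}(w)=\dot{\mathbf{O}}^{\mathrm{out}}(w)$ near $w=0$ and $w=\infty$ from Proposition~\ref{prop:outer}, whose coefficients $\dot{\mathbf{O}}^{0,0},\dot{\mathbf{O}}^{0,1},\dot{\mathbf{O}}^{\infty,1}$ are uniformly bounded in $\nu$, yields the claimed expansions \eqref{eq:Ozeroexpand}, \eqref{eq:Oinftyexpand} for $\mathbf{O}(w)=\mathbf{E}(w)\dot{\mathbf{O}}(w)$ with coefficient errors $\bo(\epsilon_N)$.

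The main obstacle is the non-standard jump across $\mathbb{R}_+$, which prevents direct application of the small-norm theorem; the conjugation trick \eqref{eq:EFtwist} is essential, and one must verify that after unfolding the jump on $\mathbb{R}$ has the standard right-multiplicative form $\mathbf{F}_-(\mathbb{I}+\text{small})$ rather than picking up extra conjugations from the two branches glued across the real axis. Uniformity of the error as $(x,t)$ varies in a compact subset of one of the domains $\mathscr{O}$ follows because the contour $\Sigma_\mathbf{F}$, the disk radii, the Cauchy operator norm, and all of the estimates of Propositions~\ref{prop:YTnablaDelta}, \ref{prop:outer}, \ref{prop:Airy}, and \ref{prop:OjumpsAway} depend continuously on $(x,t)$ precisely under the stated hypothesis that the roots of $R(w;\mathfrak{p},\mathfrak{q})^2$ are bounded away from each other and from $\mathfrak{a},\mathfrak{b}$.
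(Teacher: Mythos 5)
Your proposal follows essentially the same route as the paper's own argument: define $\mathbf{E}=\mathbf{O}\dot{\mathbf{O}}^{-1}$, derive Riemann--Hilbert Problem~\ref{rhp:error}, unfold the branch cut via exactly the twist \eqref{eq:EFtwist} to obtain a compact small-norm problem for $\mathbf{F}(z)$, invoke the Cauchy-operator/Neumann-series argument to get the integral representation with $\|\mathbf{Y}\|_{L^2\cap L^1}=\bo(\epsilon_N)$, and then read off the expansion coefficients at $z=0$ and $z=\infty$ and multiply against the bounded expansions of $\dot{\mathbf{O}}^{\mathrm{out}}$ from Proposition~\ref{prop:outer}. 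Your side observation that one must check the unfolded jump on $z<0$ still takes the right-multiplicative form $\mathbf{F}_-(\mathbb{I}+\bo(\epsilon_N))$ (rather than picking up uncancelled $\sigma_2$-conjugations) is correct and does hold, since $\sigma_2(\mathbb{I}+\mathbf{X})^{-1}\sigma_2=\mathbb{I}+\bo(\epsilon_N)$; the paper's treatment of this point is terse but your verification matches it.
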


If $w$ lies in a small neighborhood of the origin, or alternatively if $|w|$ 
is sufficiently large, then according to \eqref{eq:wOdef} $\mathbf{O}(w)$ coincides with $\mathbf{N}(w)$,
and according to \eqref{eq:Mdefinew} $\mathbf{M}(w)$ coincides with $\mathbf{J}(w)$, the latter matrix function 
being the solution of Riemann-Hilbert Problem~\ref{rhp:modifiedw}.  Recalling
the relations \eqref{eq:Htildedef} and \eqref{eq:NMw}, we see that for such
$w$,
\begin{equation}
\mathbf{H}(w)=\mathbf{O}(w)e^{g(w)\sigma_3/\epsilon_N}\left(\prod_{y\in\Delta}
\frac{\sqrt{-w}+\sqrt{-y}}{\sqrt{-w}-\sqrt{-y}}\right)^{\sigma_3}.
\end{equation}
Since $g(0)=g(\infty)=0$, the diagonal factor relating $\mathbf{H}(w)$
and $\mathbf{O}(w)$ has the expansions
\begin{equation}
e^{g(w)\sigma_3/\epsilon_N}\left(\prod_{y\in\Delta}\frac{\sqrt{-w}+\sqrt{-y}}
{\sqrt{-w}-\sqrt{-y}}\right)^{\sigma_3} = 
\begin{cases}
(-1)^{\#\Delta} + \mathbf{C}_N^{0,1}(x,t)\sqrt{-w} +\bo(w),\quad & w\to 0\\
1+\mathbf{C}_N^{\infty,1}(x,t)/\sqrt{-w} +\bo(w^{-1}),\quad &w\to\infty
\end{cases}
\end{equation}
for some diagonal matrices $\mathbf{C}_N^{0,1}(x,t)$ and
$\mathbf{C}_N^{\infty,1}(x,t)$.  The matrices $\mathbf{A}_N(x,t)$,
$\mathbf{B}_N^0(x,t)$, and $\mathbf{B}_N^\infty(x,t)$ obtained from
$\mathbf{H}(w)$ via \eqref{eq:wzeroexpansion}--\eqref{eq:ABsdef}
are then written in terms of the expansion coefficients of $\mathbf{O}(w)$
from \eqref{eq:Ozeroexpand}--\eqref{eq:Oinftyexpand} as
\begin{equation}
\begin{split}
\mathbf{A}_N(x,t)&=(-1)^{\#\Delta}\mathbf{O}_N^{0,0}(x,t)\\
\mathbf{B}_N^0(x,t)&=(-1)^{\#\Delta}\mathbf{C}_N^{0,1}(x,t)+
\mathbf{O}_N^{0,0}(x,t)^{-1}\mathbf{O}_N^{0,1}(x,t)\\
\mathbf{B}_N^\infty(x,t)&=\mathbf{O}_N^{\infty,1}(x,t) +\mathbf{C}_N^{\infty,1}(x,t).
\end{split}
\end{equation}
Then using Proposition~\ref{prop:Error} these are expressed asymptotically
in terms of the corresponding expansion coefficients of the outer parametrix
$\dot{\mathbf{O}}^\mathrm{out}(w)$ as follows:
\begin{equation}
\begin{split}
\mathbf{A}_N(x,t)&=(-1)^{\#\Delta}\dot{\mathbf{O}}^{0,0} +\bo(\epsilon_N)\\
\mathbf{B}_N^0(x,t)&=(-1)^{\#\Delta}\mathbf{C}_N^{0,1}(x,t)+
(\dot{\mathbf{O}}^{0,0})^{-1}\dot{\mathbf{O}}^{0,1}+\bo(\epsilon_N)\\
\mathbf{B}_N^\infty(x,t)&=\dot{\mathbf{O}}^{\infty,1}+\mathbf{C}_N^{\infty,1}(x,t) +\bo(\epsilon_N)
\end{split}
\end{equation}
(recall that according to Proposition~\ref{prop:outer} the
coefficients $\dot{\mathbf{O}}^{0,0}$,
$\dot{\mathbf{O}}^{0,1}$, and
$\dot{\mathbf{O}}^{\infty,1}$ are bounded as
$\epsilon_N\downarrow 0$, along with
$(\dot{\mathbf{O}}^{0,0})^{-1}$).  Now we recall the definitions
of the quantities $\cos(\tfrac{1}{2}u_N(x,t))$ and
$\sin(\tfrac{1}{2}u_N(x,t))$ (see \eqref{eq:cossinuN}), and of
$\epsilon_Nu_{N,t}(x,t)$ (see \eqref{eq:epsilonut})
characterizing the fluxon condensate $\{u_N(x,t)\}_{N=N_0}^\infty$ according to
Definition~\ref{def:condensate}, and compare with the definitions
\eqref{eq:dotClibrational}--\eqref{eq:dotvlibrational} to obtain the asymptotic
formulae
\begin{equation}
\cos\left(\frac{1}{2}u_N(x,t)\right)=\dot{C}_N(x,t) + \bo(\epsilon_N)\quad
\text{and}\quad
\sin\left(\frac{1}{2}u_N(x,t)\right)=\dot{S}_N(x,t) +\bo(\epsilon_N)
\label{eq:cossinasymp1}
\end{equation}
and 
\begin{equation}
\epsilon_N\frac{\partial u_N}{\partial t}(x,t) = \dot{G}_N(x,t) +
\bo(\epsilon_N),
\label{eq:epsutasymp1}
\end{equation}
where the asymptotics are valid for the same ranges of $(x,t)$ and with the
same nature of convergence as in the statement of Proposition~\ref{prop:Error}.
The asymptotic formulae \eqref{eq:cossinasymp1} are differentiable with respect
to $t$ (yielding \eqref{eq:epsutasymp1}) according to \eqref{eq:CNSNGNrelation}
from Proposition~\ref{prop:outerelliptic}.

\appendix
\section{Proofs of Propositions Concerning Initial Data}
\label{app:initialdata}
\subsection{Proof of Proposition~\ref{prop:AbelInverse}}
Given a real-valued differentiable function $f(v)$ defined
on $0<v<V$, define
\begin{equation}
I[f](w):=-\frac{4}{\pi}\int_{-w}^V\frac{f'(v)\,dv}{\sqrt{v^2-w^2}}, \quad
-V<w<0
\end{equation}
as the right-hand side of \eqref{eq:inversetransform} with
$f'(v)=\varphi(v)$.  If $f$ is in the range of
\eqref{eq:WKBphaserewrite} for some $G$ satisfying
Assumptions~\ref{assume:KlausShaw} and \ref{assume:evenness}, then we know
that $V=-G(0)$ and
\begin{equation}
f'(v)=-\frac{v}{2}\int_0^{G^{-1}(-v)}\frac{ds}{\sqrt{G(s)^2-v^2}},
\quad 0<v<V=-G(0).
\end{equation}
Therefore in this case we have
\begin{equation}
I[f](w)=\frac{1}{\pi}\int_{-w}^{-G(0)}\int_{0}^{G^{-1}(-v)}
\frac{2v}{\sqrt{v^2-w^2}\sqrt{G(s)^2-v^2}}\,ds\,dv,\quad G(0)<w<0.
\end{equation}
Exchanging the order of integration and setting $\tau=v^2$ yields
\begin{equation}
I[f](w)=\frac{1}{\pi}\int_0^{G^{-1}(w)}\int_{w^2}^{G(s)^2}\frac{d\tau}
{\sqrt{\tau-w^2}\sqrt{G(s)^2-\tau}}\,ds.
\end{equation}
The inner integral evaluates to $\pi$ (independent of $s$ and $w$) so
\begin{equation}
I[f](w)=\int_0^{G^{-1}(w)}\,ds = G^{-1}(w)
\end{equation}
yielding the identity \eqref{eq:inversetransform} as desired.

\subsection{Proof of Proposition~\ref{prop:theta0}}
First note that analyticity and strict monotonicity of $G(x)$
automatically ensures the
analyticity and positivity of $\mathscr{G}$ in the open interval of its definition.
Now assume that $0<v<-G(0)$.  Introducing $m=G(s)^2$ as a change of
  variables in \eqref{eq:WKBphaserewrite} yields
\begin{equation}
\Psi(\lambda)=\frac{1}{2}\int_{v^2}^{G(0)^2}\sqrt{\frac{m-v^2}{G(0)^2-m}}
\mathscr{G}(m)\frac{dm}{m},\quad \lambda=\frac{iv}{4},
\end{equation}
where $\mathscr{G}$ is defined in terms of $G$ by \eqref{eq:hmdef} and satisfies the conditions of Assumption~\ref{assume:h}.
Let $S(m;v)$ be the analytic function of $m$ for 
$m\in\mathbb{C}\setminus [v^2,G(0)^2]$ that satisfies
\begin{equation}
S(m;v)^2 = \frac{m-v^2}{G(0)^2-m}\quad\text{and}\quad
\lim_{\delta\downarrow 0}S(m+i\delta;v)>0\quad\text{for $v^2<m<G(0)^2$}.
\end{equation}
Then, with $L_1$ being the contour loop shown in Figure~\ref{fig:CDcontours},
\begin{equation}
\Psi(\lambda)=\frac{1}{4}\oint_{L_1}S(m;v)\mathscr{G}(m)\frac{dm}{m}.
\end{equation}
\begin{figure}[h]
\begin{center}
\includegraphics{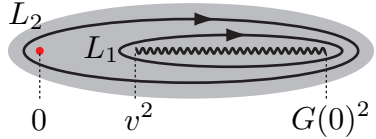}
\end{center}
\caption{\emph{The loop contour $L_1$ surrounds the branch cut
    $[v^2,G(0)^2]$ of $S(m;v)$ while the loop contour $L_2$ also
    encloses the origin.  Both contours lie within the presumed domain
    of analyticity of $\mathscr{G}(m)$ (shaded).}}
\label{fig:CDcontours}
\end{figure}
Since $S$ is also analytic as a function of $v$ when $m$ lies on $L_1$
and $v$ lies in the open region bounded by $L_1$, and since $S$ is
continuous in $m$ for such $v$ and $m$, this formula immediately shows
that $\Psi(\lambda)$ is analytic for each $v\in (0,-G(0))$.  To
analyze the behavior near the endpoints, we proceed as follows.  By a
simple contour deformation,
\begin{equation}
\begin{split}
\Psi(\lambda)&=\frac{i\pi}{2}S(0;v)\mathscr{G}(0) +\frac{1}{4}\oint_{L_2}S(m;v)\mathscr{G}(m)
\frac{dm}{m}\\
&=\frac{\pi v \mathscr{G}(0)}{2G(0)} +\frac{1}{4}\oint_{L_2}S(m;v)\mathscr{G}(m)\frac{dm}{m},
\end{split}
\label{eq:theta0D}
\end{equation}
where we used the fact that $S(0,v)=-iv/G(0)$.  Now $S(m;v)$ has the following
convergent expansions:
\begin{equation}
S(m;v)=S(m;0)\left[1-\frac{v^2}{m}\right]^{1/2}=S(m,0)\left[1-\sum_{k=1}^\infty
\frac{(2k-1)!!}{(2k)!!}\frac{v^{2k}}{m^{k}}\right],\quad |m|>|v|^2,
\end{equation}
\begin{equation}
\begin{split}
S(m;v)&=S(m;-G(0))
\left[1-\frac{-2G(0)(v+G(0))+(v+G(0))^2}{m-G(0)^2}\right]^{1/2}
\\
&=S(m;-G(0))\left[1-\sum_{k=1}^\infty
\frac{(2k-1)!!}{(2k)!!}\left(\frac{-2G(0)(v+G(0))+(v+G(0))^2}{m-G(0)^2}
\right)^k\right]\\
& = S(m;-G(0))\left[1+\frac{G(0)(v+G(0))}{m-G(0)^2}+\cdots
\right],\quad |m-G(0)^2|>|-2G(0)(v+G(0))+(v+G(0))^2|.
\end{split}
\label{eq:Sseriesmg0}
\end{equation}
Note also that $S(m;-G(0))\equiv i$.  Both of these expansions are
uniformly convergent on the contour $L_2$ if $v$ is confined to a
sufficiently small neighborhood of $v=0$ or $v=-G(0)$ respectively,
and hence the integral in \eqref{eq:theta0D} may be calculated
term-by-term.  In the case of the expansion for $v$ small, the result
is a convergent series in even nonnegative powers of $v$ with purely
imaginary coefficients.  In the case of the expansion for $v$ near
$-G(0)$, the result is a convergent series in (generally) all
nonnegative integer powers of $v+G(0)$, again with purely imaginary
coefficients.  Since power series always converge in disks, we now see that
we have constructed the analytic continuation of $\Psi(\lambda)$
valid for $v$ in full complex neighborhoods of $v=0$ and $v=-G(0)$
respectively.  The form of the Taylor series \eqref{eq:theta0series0}
is now clear, but it remains to confirm the positivity of $\alpha$
and the value of $\Psi(0)$.  But from \eqref{eq:theta0D},
\begin{equation}
\alpha = -\frac{2\pi \mathscr{G}(0)}{G(0)}>0
\end{equation}
since $G(0)<0$ and $\mathscr{G}(0)>0$ by hypothesis, and the constant term 
$\Psi(0)=\|G\|_1/4$ is easier to evaluate directly by passing to
the limit $\lambda\to 0$ in the formula \eqref{eq:WKBphaserewrite}
than by working with the series.  Finally, to confirm the relations
\eqref{eq:theta0neartop}, one may pass to the limit $\lambda\to -iG(0)/4$
in \eqref{eq:WKBphaserewrite} to obtain $\Psi(-iG(0)/4)=0$,
and then also from \eqref{eq:theta0D} and \eqref{eq:Sseriesmg0}
\begin{equation}
\left.\frac{d}{dv}\Psi(\lambda)\right|_{v=-G(0)}=\frac{\pi \mathscr{G}(0)}{2G(0)}+\frac{iG(0)}{4}
\oint_{L_2}\frac{\mathscr{G}(m)}{m-G(0)^2}\frac{dm}{m}= \frac{\pi \mathscr{G}(G(0)^2)}{2G(0)}<0
\end{equation}
since $\mathscr{G}(G(0)^2)>0$ by hypothesis, 
where the integral over $L_2$ is evaluated explicitly by residues.

\section{Details of the Outer Parametrix in Cases \librational\ and \rotational}
\label{app:outer}
Since many of the important details of the asymptotic behavior of the
sine-Gordon equation are derived from the appropriate outer parametrix,
we here provide all details of the solution of Riemann-Hilbert Problems~\ref{rhp:wOdotlibrational} and \ref{rhp:wOdotrotational} in terms of Riemann $\Theta$-functions of
genus one.  We also explain how the 
extracted potentials (approximate solutions of sine-Gordon) can be reduced to
a very simple form in terms of Jacobi elliptic functions. 
This appendix contains all details of 
the proofs of Propositions~\ref{prop:outer}
and \ref{prop:outerelliptic}.

\subsection{The outer parametrix in case \librational.  Proof of Proposition~\ref{prop:outer} in this case}
\subsubsection{Solution of Riemann-Hilbert Problem~\ref{rhp:wOdotlibrational} in
terms of Baker-Akhiezer functions}

The first step in solving Riemann-Hilbert Problem~\ref{rhp:wOdotlibrational} 
for $\dot{\mathbf{O}}^\mathrm{out}(w)$ is to introduce a new, 
equivalent, unknown $\mathbf{P}(w)$,
given in terms of $\dot{\mathbf{O}}^\mathrm{out}(w)$ by
\begin{equation}
\mathbf{P}(w):=\begin{cases} \dot{\mathbf{O}}^\mathrm{out}(w),\quad &
\Im\{w\}>0\\
\sigma_2\dot{\mathbf{O}}^\mathrm{out}(w)\sigma_2,\quad & \Im\{w\}<0.
\end{cases}
\label{eq:dotPdefB}
\end{equation}
The equivalent Riemann-Hilbert problem satisfied by $\mathbf{P}(w)$
is described by the scheme shown in Figure~\ref{fig:wPdotB}.
\begin{figure}[h]
\begin{center}
\includegraphics{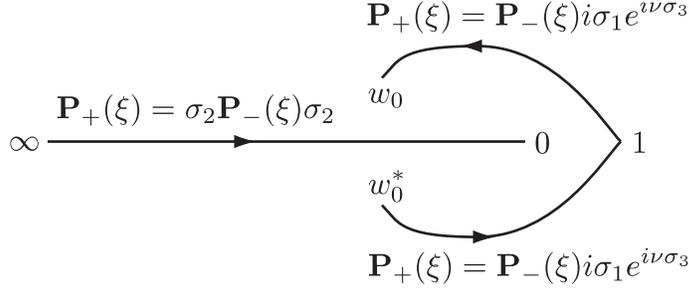}
\end{center}
\caption{\emph{The jump conditions satisfied by the matrix $\mathbf{P}(w)$
normalized as $\mathbf{P}(w)=\mathbb{I}+\bo(|w|^{-1/2})$ as $w\to\infty$.
}}
\label{fig:wPdotB}
\end{figure}
As was the case with $\dot{\mathbf{O}}^\mathrm{out}(w)$, 
the boundary values taken on the
two disjoint components of the jump contour are continuous and bounded with
the exception of the endpoints $w_0$ and $w_0^*$ where inverse fourth roots 
are tolerated.

Next, we remove the real parameter $\nu$ 
from the jump conditions by
defining the scalar function $h(w)$ as follows:
\begin{equation}
h(w):=-\frac{S(w)}{2\pi i}\int_{w_0^*}^{w_0}\frac{ds}{S_+(s)(s-w)},
\label{eq:hdefB}
\end{equation}
where the integration is along the jump contour shown in
Figure~\ref{fig:wPdotB} and where $S(w)^2:=w(w-w_0)(w-w_0^*)$, $S(w)$
is analytic in the complement of the jump contours, and
$S(w)=w^{3/2}(1+\bo(w^{-1}))$ as $w\to\infty$ (principal branch of
$w^{3/2}$).  Note that
\begin{equation}
h(w)=pw^{1/2}+\bo(|w|^{-1/2}),\quad w\to\infty,
\label{eq:hasympB}
\end{equation}
where
\begin{equation}
p:=\frac{1}{2\pi i}\int_{w_0^*}^{w_0}\frac{ds}{S_+(s)}.
\label{eq:h1defB}
\end{equation}
The function defined by \eqref{eq:hdefB} 
satisfies $h_+(\xi)+h_-(\xi)=0$ for $\xi$ on the negative real
axis, and $h_+(\xi)+h_-(\xi)=-1$ for $\xi$ on the contour connecting $w_0^*$
and $w_0$.  It takes continuous and bounded boundary values on the
entire jump contour.  The new unknown we define in place of
$\mathbf{P}(w)$ is then
\begin{equation}
\mathbf{Q}(w):=\mathbf{P}(w)e^{i\nu h(w)\sigma_3}.
\label{eq:dotQdefB}
\end{equation}
Direct calculations then show that the conditions determining
$\mathbf{Q}(w)$ are as indicated in Figure~\ref{fig:wQdotB}.
\begin{figure}[h]
\begin{center}
\includegraphics{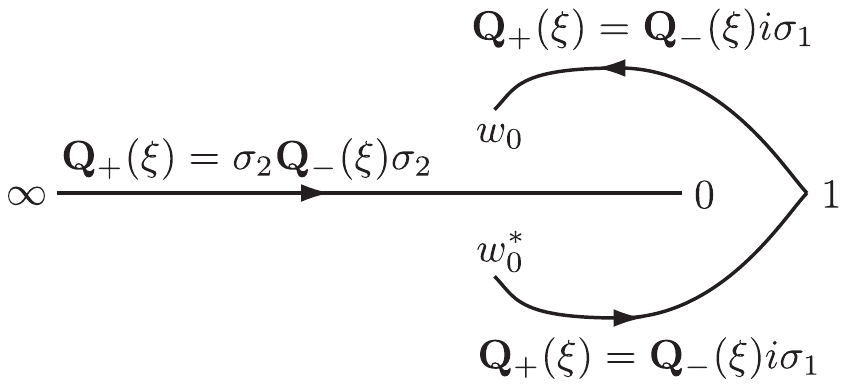}
\end{center}
\caption{\emph{The jump conditions satisfied by the matrix $\mathbf{Q}(w)$
normalized as $\mathbf{Q}(w)e^{-ip\nu w^{1/2}\sigma_3}=
\mathbb{I}+\bo(|w|^{-1/2})$ as $w\to\infty$.
}}
\label{fig:wQdotB}
\end{figure}
Inverse fourth root singularities are again admitted at $w_0$ and $w_0^*$.

The next transformation is undertaken to diagonalize the prefactor of
$\sigma_2$ in the jump condition on the negative real axis.  So, since
\begin{equation}
\sigma_2=\mathbf{V}\sigma_3\mathbf{V}^{-1},\quad\quad
\mathbf{V}:=\frac{1}{\sqrt{2}}\begin{bmatrix}e^{-i\pi/4} & -e^{-i\pi/4}\\
e^{i\pi/4} & e^{i\pi/4}\end{bmatrix},\quad \det(\mathbf{V})=1,\quad
\mathbf{V}^{-1}=\mathbf{V}^\dagger,
\label{eq:sigma2diag}
\end{equation}
the matrix defined by
\begin{equation}
\mathbf{R}(w):=\mathbf{V}^\dagger\mathbf{Q}(w)
\label{eq:dotRdefB}
\end{equation}
satisfies the conditions indicated in Figure~\ref{fig:wRdotB}.
\begin{figure}[h]
\begin{center}
\includegraphics{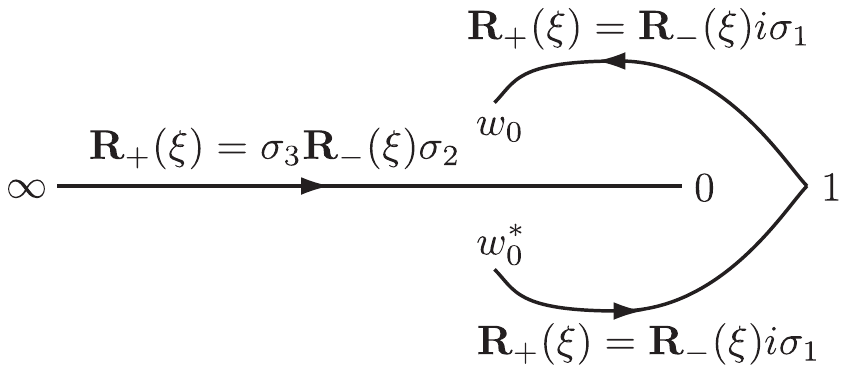}
\end{center}
\caption{\emph{The jump conditions satisfied by the matrix $\mathbf{R}(w)$
normalized as $\mathbf{R}(w)e^{-ip\nu w^{1/2}\sigma_3}=
\mathbf{V}^\dagger+\bo(|w|^{-1/2})$ as $w\to\infty$.
}}
\label{fig:wRdotB}
\end{figure}
Inverse fourth root singularities are once again admitted at $w_0$ and $w_0^*$.

The next transformations aim to convert the post-multiplicative jump
matrices both into the permutation matrix $\sigma_1$.  
To accomplish this on the arc connecting
$w_0^*$ to $w_0$, we introduce the function $q(w)$ satisfying
\begin{equation}
q(w)^4 = \frac{w-w_0}{w-w_0^*}, 
\end{equation}
that is uniquely specified as being analytic except on the contour arc
connecting $w_0^*$ to $w_0$ and satisfying $q(w)=1
+\bo(w^{-1})$ as $w\to\infty$.  This function satisfies
$q_+(w)=iq_-(w)$ for $w$ on the contour arc of discontinuity.
Setting 
\begin{equation}
\mathbf{S}(w):=q(w)^{-1}\mathbf{R}(w),
\label{eq:dotSdefB}
\end{equation}
we find that the new unknown $\mathbf{S}(w)$ corresponds to the
conditions in Figure~\ref{fig:wSdotB}.
\begin{figure}[h]
\begin{center}
\includegraphics{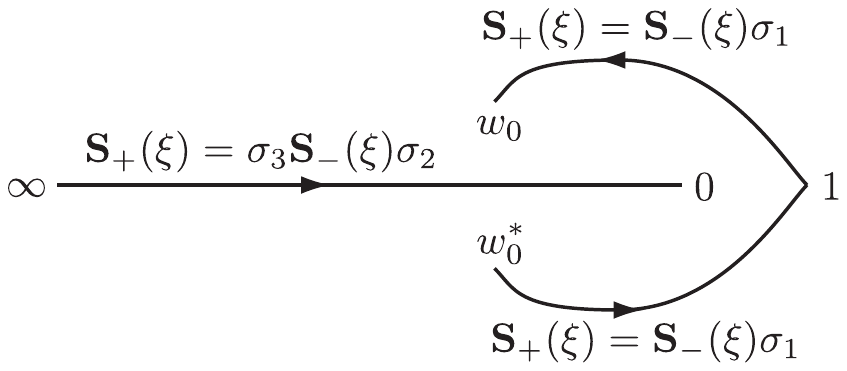}
\end{center}
\caption{\emph{The jump conditions satisfied by the matrix $\mathbf{S}(w)$
normalized as $\mathbf{S}(w)e^{-ip\nu w^{1/2}\sigma_3}=
\mathbf{V}^\dagger+\bo(|w|^{-1/2})$ as $w\to\infty$.  
Unlike $\dot{\mathbf{O}}^\mathrm{out}(w)$,
$\mathbf{P}(w)$, $\mathbf{Q}(w)$, and $\mathbf{R}(w)$,
the matrix $\mathbf{S}(w)$ is required to be bounded in a neighborhood
of $w=w_0^*$ while we allow a stronger singularity
 at $w=w_0$:  $\mathbf{S}(w)=\bo(|w-w_0|^{-1/2})$.
}}
\label{fig:wSdotB}
\end{figure}
Note that multiplication by $q(w)$ changes the nature of the
admissible singularities at $w_0$ and $w_0^*$: $\mathbf{S}(w)$
is now required to be bounded at $w_0^*$, and an inverse square root
singularity is admitted at $w_0$.

To convert the jump condition on the negative real axis to the same form,
we first separate the two rows of $\mathbf{S}(w)$ by writing
\begin{equation}
\mathbf{S}(w)=\begin{bmatrix}\trans{\mathbf{s}_1(w)}\\
\trans{\mathbf{s}_2(w)}
\end{bmatrix}
\end{equation}
and we also write $\mathbf{V}=[\mathbf{v}_1,\mathbf{v}_2]$ to give
notation for the normalized eigenvectors of $\sigma_2$ (see 
\eqref{eq:sigma2diag}).  As $\sigma_3$
is diagonal, the jump conditions for the individual rows
decouple and in each case these conditions only involve matrix
multiplication on the right.  Thus we have
$\trans{\mathbf{s}_{j+}(\xi)}=\trans{\mathbf{s}_{j-}(\xi)}\sigma_1$ for
$j=1,2$ and $\xi$ on the contour arc connecting $w_0^*$ and $w_0$, while
we have
$\trans{\mathbf{s}_{1+}(\xi)}=\trans{\mathbf{s}_{1-}(\xi)}\sigma_2$ and
$\trans{\mathbf{s}_{2+}(\xi)}=\trans{\mathbf{s}_{2-}(\xi)}(-\sigma_2)$ for
$\xi$ on the negative real axis.  Now set
\begin{equation}
\trans{\mathbf{t}_1(w)}:=\sqrt{2}\trans{\mathbf{s}_1(w)}e^{-i\pi
  k(w)\sigma_3/2}\quad \text{and}\quad
\trans{\mathbf{t}_2(w)}:=i\sqrt{2}\trans{\mathbf{s}_2(w)}e^{i\pi
  k(w)\sigma_3/2},\quad\text{where}\quad k(w):=\frac{1}{2}+h(w). 
\end{equation}
The function $k$ is analytic on the complement of the contours and
satisfies $k_+(\xi)+k_-(\xi)=0$ on the arc connecting $w_0^*$ with $w_0$
while $k_+(\xi)+k_-(\xi)=1$ for negative real $\xi$.  Using \eqref{eq:hasympB}
one sees that $k(w)$ has the asymptotic behavior
\begin{equation}
k(w)=pw^{1/2} +\frac{1}{2} + \bo(|w|^{-1/2}),\quad w\to\infty.
\label{eq:kasympB}
\end{equation}
From this information it is easy to see that on all jump contours we have
\begin{equation}
\trans{\mathbf{t}_{j+}(\xi)}=\trans{\mathbf{t}_{j-}(\xi)}\sigma_1,\quad
j=1,2,
\label{eq:Btvecjump}
\end{equation}
and we also have the normalization conditions
\begin{equation}
\trans{\mathbf{t}_j(w)}e^{-ip\varphi_jw^{1/2}\sigma_3}
=[1,1] + \bo(|w|^{-1/2}),\quad
w\to\infty
\end{equation}
where
\begin{equation}
\varphi_1:=\nu-\frac{\pi}{2}\quad\text{and}\quad
\varphi_2:=\nu+\frac{\pi}{2}.
\label{eq:GjB}
\end{equation}
Both $\trans{\mathbf{t}_1(w)}$ and $\trans{\mathbf{t}_2(w)}$ may
become unbounded in the finite $w$-plane only as $w\to w_0$, where all
four scalar components must be $\bo(|w-w_0|^{-1/2})$.

Finally, we implement the involutive jump conditions \eqref{eq:Btvecjump} by viewing
the elements of the row vectors $\trans{\mathbf{t}_j(w)}$ as single-valued scalar
functions on an appropriate Riemann surface.
Let $X$ be the Riemann surface of the equation $y^2=S(w)^2=w(w-w_0)(w-w_0^*)$
compactified at $y=w=\infty$.  We view the finite part of $X$ as two
copies of the $w$-plane (sheets) cut along the branch cuts of the
function $S(w)$ and glued together in the usual way.  We define on
$X\setminus\{\infty,w_0\}$ two scalar functions $t_1(P)$ and $t_2(P)$
as follows:
\begin{equation}
t_j(P):=\begin{cases}[\trans{\mathbf{t}_j(w(P))}]_1,\quad &
P\in \text{sheet 1}\\
[\trans{\mathbf{t}_j(w(P))}]_2,\quad &
P\in \text{sheet 2},
\end{cases}
\end{equation}
where for row vectors $[u_1,u_2]_j:=u_j$,
and $w(P)$ denotes the ``sheet projection'' function.
These definitions are consistent along the cuts where the sheets are
identified precisely because the jump matrices for
$\trans{\mathbf{t}_j(w)}$ have all been reduced to the simple permutation 
(sheet exchange) matrix $\sigma_1$.  The
\emph{Baker-Akhiezer functions} $t_j:X\setminus\{w_0,\infty\}\to\mathbb{C}$ 
are analytic in their domain of definition, which omits just two points of 
$X$.  Since
\begin{equation}
y_0(P):=\begin{cases} (w(P)-w_0)^{1/2},\quad & P\in\text{sheet 1}\\
-(w(P)-w_0)^{1/2},\quad & P\in\text{sheet 2}
\end{cases}
\label{eq:localcoordw0}
\end{equation}
is a holomorphic local coordinate for $X$ near the
branch point $P=w_0$, we see that $t_j(P)$ admits a simple pole at this
point.  Near the point $P=\infty$, $t_j(P)$ has exponential behavior:
\begin{equation}
t_j(P)e^{-ip\varphi_jy_\infty(P)^{-1}}=1 + \bo(y_\infty(P)),\quad P\to\infty
\label{eq:BAasympB}
\end{equation}
where
\begin{equation}
y_\infty(P):=\begin{cases}w(P)^{-1/2},\quad & P\in \text{sheet 1}\\
-w(P)^{-1/2},\quad & P\in\text{sheet 2}
\end{cases}
\label{eq:kinfty}
\end{equation}
is a holomorphic local coordinate for $X$ near the branch point
$P=\infty$.

\subsubsection{Construction of the Baker-Akhiezer functions}
To express $t_j(P)$ in terms of special functions requires a
few ingredients.  Firstly, one chooses a basis of homology cycles on
$X$ consisting of two noncontractible oriented closed paths, $a$ and
$b$, such that $b$ intersects $a$ exactly once, from the right of $a$.
For later convenience, we suppose that neither of these paths passes
through the point $P=\infty$.  The homology cycles we choose are illustrated
in Figure~\ref{fig:wHomologyB}.
\begin{figure}[h]
\begin{center}
\includegraphics{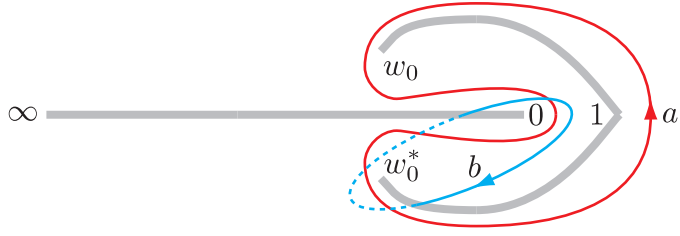}
\end{center}
\caption{\emph{The homology basis on the sheets of $X$.  Thick curves indicate
branch cuts where the two sheets are identified, solid curves are on sheet $1$
and dashed curves are on sheet $2$.
}}
\label{fig:wHomologyB}
\end{figure}
As $X$ is a genus one Riemann surface (elliptic curve), 
there is a one-dimensional space of holomorphic differentials; 
as a basis we choose the unique
holomorphic differential on $X$ of the form
\begin{equation}
\omega(P)=c\frac{dw(P)}{\tilde{S}(P)}
\label{eq:holodiffB}
\end{equation}
where $\tilde{S}(P)$ denotes the lift to $X$ of the function $S(w)$:
\begin{equation}
\tilde{S}(P):=\begin{cases}S(w(P)),\quad &P\in\text{sheet 1}\\
-S(w(P)),\quad &P\in\text{sheet 2},
\end{cases}
\end{equation}
and where the constant $c$ is chosen so that
\begin{equation}
\oint_a\omega(P)=2\pi i.
\label{eq:holodiffnormB}
\end{equation}
Denote by $\mathcal{H}$ the other loop integral of $\omega(P)$:
\begin{equation}
\mathcal{H}:=\oint_b\omega(P)=2\pi i\frac{\displaystyle\oint_b\frac{dw(P)}{\tilde{S}(P)}}
{\displaystyle\oint_a\frac{dw(P)}{\tilde{S}(P)}}.
\label{eq:wHdefineB}
\end{equation}
It is easy to see that although $\mathcal{H}$ is complex, $\Re\{\mathcal{H}\}<0$.
The \emph{Riemann $\Theta$-function} corresponding to $\mathcal{H}$ is the entire function
of $z\in\mathbb{C}$ given by the Fourier series
\begin{equation}
\Theta(z;\mathcal{H}):=\sum_{n=-\infty}^{\infty}e^{\tfrac{1}{2}\mathcal{H}n^2}e^{nz}.
\label{eq:RiemannTheta}
\end{equation}
A simple relabeling of the sum by $n\mapsto -n$ shows that 
\begin{equation}
\Theta(-z;\mathcal{H})=\Theta(z;\mathcal{H}).
\label{eq:Thetaeven}
\end{equation}
The Riemann $\Theta$-function satisfies the \emph{automorphic identities}:
\begin{equation}
\Theta(z+2\pi i;\mathcal{H})=\Theta(z;\mathcal{H})
\label{eq:thetaperiodic}
\end{equation}
and
\begin{equation}
\Theta(z+\mathcal{H};\mathcal{H})=e^{-\frac{1}{2}\mathcal{H}}e^{-z}\Theta(z;\mathcal{H}).
\label{eq:thetamonodromy}
\end{equation}
The function $\Theta(z;\mathcal{H})$ vanishes to first order at all points $z$ of the 
form $z=\mathcal{K}+2\pi
im+\mathcal{H}n$ and nowhere else, where $m$ and $n$ are integers and where
\begin{equation}
\mathcal{K}=\mathcal{K}(\mathcal{H}):=i\pi +\frac{1}{2}\mathcal{H}
\label{eq:RiemannConstantB}
\end{equation}
is the \emph{Riemann constant}.  We choose
as a base point on $X$ the branch point $P_0=w_0$, and then define
the \emph{Abel map} by
\begin{equation}
A(P):=\int_{P_0}^P\omega(P')\quad\pmod { 2\pi im+\mathcal{H}n},\quad m,n\in\mathbb{Z}.
\label{eq:AbelMapB}
\end{equation}
The value of $A(P)$ is not completely determined only because the path is only
determined modulo the cycles $a$ and $b$.  Finally, let $\Omega(P)$ denote
the abelian differential of the second kind with double pole at $P=\infty$:
\begin{equation}
\Omega(P):=\frac{w(P)+C}{2\tilde{S}(P)}\,dw(P)
\label{eq:OmegadefB}
\end{equation}
where the constant $C$ is chosen so that
\begin{equation}
\oint_a\Omega(P)=0.
\label{eq:secondkindnormB}
\end{equation}
Note that by asymptotic expansion of \eqref{eq:OmegadefB}, we have
\begin{equation}
\int_{P_0}^P\Omega(P')=\frac{1}{y_\infty(P)} + \bo(1),\quad P\to\infty.
\label{eq:OmegaasympB}
\end{equation}
Let $\kappa$ denote the other loop integral of $\Omega(P)$:
\begin{equation}
\kappa:=\oint_b\Omega(P).
\label{eq:PhidefB}
\end{equation}
\begin{lemma}[see \cite{Dubrovin81}]
$\kappa=2c$.
\label{lem:dissection}
\end{lemma}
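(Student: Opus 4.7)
My plan is to prove this classical identity via the Riemann bilinear relation applied to the holomorphic differential $\omega$ and the second-kind differential $\Omega$. First I would dissect the genus-one Riemann surface $X$ along the chosen homology basis $(a,b)$ into a simply-connected fundamental polygon $\Pi$, with opposite sides identified according to the cycles. To handle the pole of $\Omega$ at $P=\infty$, I would also cut $\Pi$ along a small arc joining the base point $P_0$ (or some interior point) to $\infty$ so that $\Omega$ has no pole on $\Pi$ after removing a small disc around $\infty$. On this cut polygon the Abel-map antiderivative $A(P)=\int_{P_0}^P\omega$ becomes a single-valued holomorphic function, since all periods are killed.

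Next I would apply the residue theorem to $A(P)\,\Omega(P)$ on $\Pi$ with the small disc around $\infty$ deleted. The integral over $\partial \Pi$ decomposes into contributions from the four sides corresponding to the cycles $a,a^{-1},b,b^{-1}$, where on matched sides $A(P)$ differs by a period of $\omega$ while $\Omega$ takes the same value. A standard calculation (see for example \cite{Dubrovin81}) yields
\begin{equation*}
\oint_{\partial \Pi}A(P)\,\Omega(P) = \oint_a\omega\cdot\oint_b\Omega - \oint_b\omega\cdot\oint_a\Omega = 2\pi i\cdot\kappa,
\end{equation*}
where the last equality uses the normalizations \eqref{eq:holodiffnormB} and \eqref{eq:secondkindnormB} together with the definition \eqref{eq:PhidefB}. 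Meanwhile, the contour around $\infty$ produces $-2\pi i\,\mathrm{Res}_{P=\infty}(A\Omega)$, so the residue theorem gives $\kappa = \mathrm{Res}_{P=\infty}(A\Omega)$.

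The remaining and most delicate step is to compute this residue using the holomorphic local coordinate $y_\infty(P)$ defined by \eqref{eq:kinfty}. From $w(P)=y_\infty(P)^{-2}$ and $\tilde{S}(P)^2 = w(P)^3 + O(w^2)$, with the sign convention built into the definition of $\tilde{S}$, one finds $\omega(P)=-2c\,dy_\infty + O(y_\infty^2\,dy_\infty)$, so that
\begin{equation*}
A(P) = A(\infty) - 2c\,y_\infty(P) + O(y_\infty(P)^2).
\end{equation*}
On the other hand, integrating \eqref{eq:OmegaasympB} shows that $\Omega(P) = -dy_\infty/y_\infty^2 + \text{(holomorphic)}\,dy_\infty$ near $P=\infty$. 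Multiplying and reading off the coefficient of $dy_\infty/y_\infty$ gives
\begin{equation*}
\mathrm{Res}_{P=\infty}\bigl(A(P)\,\Omega(P)\bigr) = (-2c)\cdot(-1) = 2c,
\end{equation*}
where the constant term $A(\infty)$ contributes nothing because $\Omega$ has no simple-pole part. Comparing with the bilinear identity yields $\kappa = 2c$ as claimed.

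The main obstacle I anticipate is bookkeeping in the residue computation: one must verify the sign conventions of $\tilde{S}$, $y_\infty$, and the sheet labels used in \eqref{eq:localcoordw0}--\eqref{eq:kinfty} carefully so that the leading coefficient of $\omega$ in the local coordinate is precisely $-2c$ and not $+2c$, since an erroneous sign here would reverse the identity. The dissection argument itself is standard, but it also requires some care to ensure that the cut to $\infty$ lies inside $\Pi$ and does not interfere with the $a$- and $b$-cycle identifications — this is why the excerpt stipulates in the paragraph containing \eqref{eq:AbelMapB} that the homology basis was chosen to avoid $P=\infty$.
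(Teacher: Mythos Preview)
Your proof is correct and follows essentially the same approach as the paper: both evaluate $\oint_{\partial\tilde X}A(P)\,\Omega(P)$ on the canonical dissection two ways, obtaining $2\pi i\kappa$ from the bilinear pairing of periods and $4\pi ic$ from the residue at $P=\infty$ computed in the local coordinate $y_\infty$. The only cosmetic difference is that the paper applies the residue theorem directly on $\tilde X$ (with the pole inside) rather than excising a disc, and it does not introduce the auxiliary cut from $P_0$ to $\infty$ --- that cut is unnecessary since $A(P)$ is already single-valued on the simply-connected polygon $\tilde X$.
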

\begin{proof}
Let $\tilde{X}$ denote the canonical dissection of $X$ obtained by
cutting $X$ along the cycles $a$ and $b$.  $\tilde{X}$ is a
simply-connected complex manifold with boundary illustrated in
Figure~\ref{fig:canonicaldissection}.
\begin{figure}[h]
\begin{center}
\includegraphics{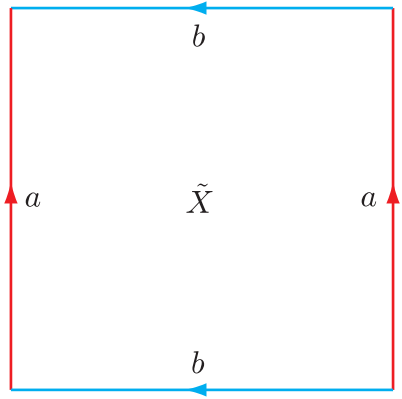}
\end{center}
\caption{\emph{The canonical dissection $\tilde{X}$ of the compact Riemann
surface $X$.  Note that if we omit the upper and right-hand boundaries, then
the points of $X$ and those of $\tilde{X}$ are in one-to-one correspondence.}}
\label{fig:canonicaldissection}
\end{figure}
Consider the meromorphic differential defined on $\tilde{X}$ given by the
following expression:
\begin{equation}
\Gamma(P):=A(P)\Omega(P).
\end{equation}
Here, the Abel mapping is well defined since $\tilde{X}$ is simply
connected.  Let $\partial\tilde{X}$ denote the positively-oriented boundary
of $\tilde{X}$, and consider the integral
\begin{equation}
I:=\oint_{\partial\tilde{X}}\Gamma(P).
\end{equation}
We will evaluate $I$ two different ways.  On the one hand, we may
evaluate $I$ by residues.  The only singularity of $\Gamma(P)$
is a double pole at the point $P=\infty\in\tilde{X}$, and while $\Omega(P)$
has no residue there, $\Gamma(P)$ indeed has one:
\begin{equation}
\begin{split}
\Gamma(P)&=\left[A(\infty) - \int_P^\infty\omega(P')\right]
\left[\left(\frac{1}{2}y_\infty(P)+\bo(y_\infty(P)^3)\right)
\left(-2\frac{dy_\infty(P)}{y_\infty(P)^3}\right)\right]\\
&=\left[A(\infty)-2cy_\infty(P) + \bo(y_\infty(P)^2)\right]
\left[\left(\frac{1}{2}y_\infty(P)+\bo(y_\infty(P)^3)\right)
\left(-2\frac{dy_\infty(P)}{y_\infty(P)^3}\right)\right]\\
&=\left[-\frac{A(\infty)}{y_\infty(P)^2} +\frac{2c}{y_\infty(P)} + \bo(1)
\right]\,dy_\infty(P),\quad P\to\infty,
\end{split}
\end{equation}
so by the Residue Theorem,
\begin{equation}
I=\oint_{\partial\tilde{X}}\Gamma(P) = 4\pi ic.
\label{eq:IresiduesB}
\end{equation}
On the other hand, we may evaluate $I$ directly:
\begin{equation}
I=\oint_{\partial\tilde{X}}\Gamma(P) = \int_a \left[A_\mathrm{right}(P)-A_\mathrm{left}(P)\right]\Omega(P) +
\int_b\left[A_\mathrm{top}(P)-A_\mathrm{bottom}(P)\right]\Omega(P),
\end{equation}
where the subscripts on $A$ indicate where the corresponding points $P$ live
on the diagram of $\tilde{X}$ shown in Figure~\ref{fig:canonicaldissection}.
But clearly,
\begin{equation}
A_\mathrm{right}(P)-A_\mathrm{left}(P) = -\oint_b\omega(P') = -\mathcal{H}\quad
\text{and}\quad
A_\mathrm{top}(P)-A_\mathrm{bottom}(P) = \oint_a\omega(P') = 2\pi i.
\end{equation}
Therefore,
\begin{equation}
I=-\mathcal{H}\oint_a\Omega(P) + 2\pi i\oint_b\Omega(P),
\end{equation}
and due to the choice of the constant $C$, we simply find
\begin{equation}
I=2\pi i\kappa.
\label{eq:IintegrateB}
\end{equation}
Comparing \eqref{eq:IresiduesB} with \eqref{eq:IintegrateB}, we see that
$ \kappa= 2c$, as desired.
\end{proof}
It then follows from \eqref{eq:h1defB}, \eqref{eq:holodiffB}, 
and the normalization condition \eqref{eq:holodiffnormB} that
\begin{equation}
p\kappa = \frac{1}{\pi i}\int_{w_0^*}^{w_0}\frac{c\,ds}{S_+(s)}
= -\frac{1}{2\pi i}\oint_a\omega = -1.
\label{eq:h1PhiB}
\end{equation}
We now give Krichever's formula \cite{Krichever77} 
for the Baker-Akhiezer functions $t_j(P)$:
\begin{equation}
\begin{split}
t_j(P) := &N_j\frac{\Theta(A(P)+\mathcal{K}+ip\kappa\varphi_j;\mathcal{H})}
{\Theta(A(P)+\mathcal{K};\mathcal{H})}
\exp\left(ip\varphi_j\int_{w_0}^P\Omega(P')\right)\\
{}=&N_j\frac{\Theta(A(P)+\mathcal{K}-i\varphi_j;\mathcal{H})}{\Theta(A(P)+\mathcal{K};\mathcal{H})}
\exp\left(ip\varphi_j\int_{w_0}^P\Omega(P')\right),
\end{split}
\label{eq:KricheverFormulaB}
\end{equation}
where $N_j$ is a normalizing constant.  Here, the path in the exponent
is intended to be the same as that in the Abel map $A(P)$.  This
expression is well-defined in spite of the indeterminacy of the path
of integration due to this identification of paths and the two automorphic
identities \eqref{eq:thetaperiodic} and \eqref{eq:thetamonodromy}
satisfied by $\Theta$.  Indeed: 
\begin{itemize}
\item Adding an $a$-cycle to the path does not change
the exponent due to \eqref{eq:secondkindnormB}, and the $\Theta$-functions
are also unchanged due to \eqref{eq:holodiffnormB} and \eqref{eq:thetaperiodic}.
\item
Adding a $b$-cycle to the path adds $-i\varphi_j$ to the exponent according to
\eqref{eq:PhidefB} and \eqref{eq:h1PhiB}, but this is compensated
by the ratio of $\Theta$-functions according to \eqref{eq:wHdefineB}
and \eqref{eq:thetamonodromy}.
\end{itemize}
Moreover, since by our choice of base point
$A(P)$ vanishes to first order in the holomorphic local parameter 
$y_0(P)$ when $P=w_0$,
it is clear that $t_j(P)$ may have a simple pole at this point.  Also,
from the asymptotic behavior \eqref{eq:OmegaasympB} of $\Omega(P)$ it 
is clear that if the constant $N_j$ is chosen correctly $t_j(P)$ 
will have the desired asymptotic behavior \eqref{eq:BAasympB} as $P\to\infty$.  

To compute $N_j$, let us select a path from $P=w_0$ to $P=\infty$ that
lies entirely on sheet 1 of $X$ and along which $\Im\{w(P)\}\ge 0$.  This
unambiguously determines $A(\infty)$ as well as the constant term in
the asymptotic expansion of the exponent.  It is easy to verify for such
a path that
\begin{equation}
A(\infty)=-\frac{1}{2}\oint_b\omega(P)=-\frac{1}{2}\mathcal{H}.
\label{eq:AinfinityB}
\end{equation}
Also, as $P$ tends to $P=\infty$ along such a path (which we take to lie
along the negative real axis for large $w$),
\begin{equation}
\begin{split}
\int_{w_0}^P\Omega(P')&=\frac{1}{2}\oint_a\Omega(P')-\frac{1}{2}\oint_b\Omega(P')
+\int_0^{w(P)}\frac{w+C}{2S_+(w)}\,dw\\
&=-\frac{1}{2}\kappa +\int_0^{w(P)}\frac{w+C}{2S_+(w)}\,dw,
\end{split}
\end{equation}
and
\begin{equation}
\begin{split}
\int_0^{w(P)}\frac{w+C}{2S_+(w)}\,dw &= \int_0^{w(P)}\left[
\frac{w+C}{2S_+(w)}-\frac{1}{2i}(-w)^{-1/2}\right]\,dw + i(-w(P))^{1/2}\\
&=\int_0^{w(P)}\left[
\frac{w+C}{2S_+(w)}-\frac{1}{2i}(-w)^{-1/2}\right]\,dw +\frac{1}{y_\infty(P)}.
\end{split}
\end{equation}
The remaining integrand is integrable at infinity, so by doubling the
contour of integration along the branch cut on the negative real axis
and then closing the contour in the right half-plane we may write
\begin{equation}
\begin{split}
\int_0^{w(P)}\left[
\frac{w+C}{2S_+(w)}-\frac{1}{2i}(-w)^{-1/2}\right]\,dw &=
\int_0^{-\infty}\left[
\frac{w+C}{2S_+(w)}-\frac{1}{2i}(-w)^{-1/2}\right]\,dw +\bo(y_\infty(P))\\
&=-\frac{1}{2}\oint_a\left[\frac{w(P)+C}{2\tilde{S}(P)} -\frac{1}{2\sqrt{w(P)}}
\right]\,dw(P) +\bo(y_\infty(P))\\
&=-\frac{1}{2}\oint_a\Omega(P) +\frac{1}{4}\oint_a\frac{dw(P)}{\sqrt{w(P)}} + 
\bo(y_\infty(P))\\
&=\bo(y_\infty(P)).
\end{split}
\end{equation}
Therefore,
\begin{equation}
\int_{w_0}^P\Omega(P')=\frac{1}{y_\infty(P)}-\frac{1}{2}\kappa+\bo(y_\infty(P)),
\end{equation}
and so we find
\begin{equation}
\lim_{P\to\infty}t_j(P)e^{-i\varphi_jpy_\infty(P)^{-1}}=N_j
\frac{\Theta(i\pi-i\varphi_j;\mathcal{H})}{\Theta(i\pi;\mathcal{H})}e^{i\varphi_j/2},
\end{equation}
It follows that the normalization constant is simply
\begin{equation}
N_j:=\frac{\Theta(i\pi;\mathcal{H})}{\Theta(i\pi-i\varphi_j;\mathcal{H})}e^{-i\varphi_j/2}.
\label{eq:NjdefB}
\end{equation}
This completes the construction of $t_j(P)$, and hence of
$\dot{\mathbf{O}}^\mathrm{out}(w)$.  To obtain an especially useful
formula for $\dot{\mathbf{O}}^\mathrm{out}(w)$, we need another
intermediate result.  The function
\begin{equation}
l(w):=p\int_{w_0}^w\frac{w'+C}{2S(w')}\,dw'
\label{eq:lfuncdefine}
\end{equation}
is well-defined for $w$ in the cut plane (the first sheet of $X$ as illustrated
in Figure~\ref{fig:wHomologyB}) due to the condition \eqref{eq:secondkindnormB},
which makes the integral independent of path.  
\begin{lemma}
$l(w)=k(w):=h(w)+\tfrac{1}{2}$.
\label{lem:lkB}
\end{lemma}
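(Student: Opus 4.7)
The plan is to prove $l=k$ via a Liouville-type uniqueness argument based on the additive boundary jumps of the two functions across the cuts $\beta\cup\mathbb{R}_+$ of $S(w)$, combined with their common asymptotic leading term $p\sqrt{w}$ at infinity and their boundedness at the three zeros of $S$ at $w_0$, $w_0^*$, and $0$.

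First I would extract the boundary data of $l$. Since $l'(w)=p(w+C)/(2S(w))$ and $S_+(\xi)=-S_-(\xi)$ across both $\beta$ and $\mathbb{R}_+$, one has $l'_++l'_-=0$, so $l_++l_-$ is piecewise constant. Continuity of $l$ at the branch point $w_0$ together with $l(w_0)=0$ pins this constant to $0$ on $\beta$, while continuity at $w=0$ pins it to $2l(0)$ on $\mathbb{R}_+$. Direct expansion at infinity gives $l(w)=p\sqrt{w}+c_\infty+O(|w|^{-1/2})$ for some constant $c_\infty$.

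Next I would verify the parallel properties for $k=h+\tfrac12$. Plemelj's formula for $I(w):=\int_{w_0^*}^{w_0}ds/(S_+(s)(s-w))$ gives $I_+-I_-=2\pi i/S_+$ on $\beta$ and no jump across $\mathbb{R}_+$; combining with $S_+=-S_-$ and $h=-SI/(2\pi i)$ yields $h_++h_-=-1$ on $\beta$ and $h_++h_-=0$ on $\mathbb{R}_+$, hence $k_++k_-=0$ on $\beta$ and $k_++k_-=1$ on $\mathbb{R}_+$. Continuity of $h$ at $w_0$ forces $h(w_0)=-\tfrac12$ and $k(w_0)=0$, and likewise $k(w_0^*)=0$. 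The expansion \eqref{eq:hasympB} gives $k(w)=p\sqrt{w}+\tfrac12+O(|w|^{-1/2})$ at infinity.

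Setting $g:=l-k$, I then have $g$ analytic on the cut plane with $g(w_0)=g(w_0^*)=0$, bounded at $w=0$, tending to $c_\infty-\tfrac12$ at infinity, with $g_++g_-=0$ on $\beta$ and $g_++g_-=\gamma:=2l(0)-1$ on $\mathbb{R}_+$. Define $F:=g-\gamma k$. Then $F_++F_-=0$ on both cuts, and the identity $(FS)_+-(FS)_-=(F_++F_-)S_+$ combined with boundedness of $F$ at the branch points shows that $FS$ extends across both cuts to an entire function. At infinity $F\sim-\gamma p\sqrt{w}+B$ for some constant $B$, and $S\sim w^{3/2}$, so $FS=O(|w|^2)$; the entireness and single-valuedness of $FS$ on $\mathbb{C}$ allow only integer powers in its Laurent expansion at infinity, which automatically forces the $Bw^{3/2}$ cross-term to vanish, so $B=0$ and $FS$ is a genuine polynomial of degree at most $2$. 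Since $FS$ vanishes at the three distinct points $w_0,w_0^*,0$, it is identically zero, whence $F\equiv 0$ and $g=\gamma k$. Boundedness of $g$ at infinity combined with $\gamma k\sim\gamma p\sqrt{w}$ and $p\neq 0$ then forces $\gamma=0$, so $g\equiv 0$ and $l=k$.

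The delicate step will be the entire-polynomial classification of $FS$: the \emph{a priori} half-integer power $w^{3/2}$ cross-term between the $\sqrt{w}$-leading part of $F$ and the $w^{3/2}$-leading part of $S$ must vanish by single-valuedness, and the growth must be read off carefully to ensure the polynomial degree bound $\leq 2$ together with the three distinct zeros $w_0,w_0^*,0$ really suffice to force $FS\equiv 0$.
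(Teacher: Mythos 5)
Your argument is correct and takes a genuinely different route from the paper's. Both proofs are Liouville-type, but the paper first evaluates $l(0)$ outright: it identifies the integrand of $l$ as $p$ times the differential $\Omega(P)$ restricted to sheet~$1$ of $X$ and invokes the period identity $p\kappa=-1$ to get $l(0)=\tfrac12$; with this known, $l$ and $k$ share identical additive boundary values on both cuts, so $m:=(l-k)/S$ extends continuously across the cuts, has removable singularities at the branch points, is entire and $\bo(w^{-2})$ at infinity, hence vanishes identically by Liouville, with no appeal to the zeros of $S$. You instead treat $l(0)$ as an unknown encoded in $\gamma:=2l(0)-1$, subtract $\gamma k$ to cancel the boundary jump on the real cut for \emph{any} $\gamma$, and show $FS$ is an entire function of order $\bo(w^2)$ — hence a polynomial of degree at most $2$ — vanishing at the three distinct points $0,w_0,w_0^*$ and therefore identically zero; the relation $g=\gamma k$ combined with boundedness of $g$ and $p\neq 0$ then forces $\gamma=0$. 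Your version is more elementary in that it bypasses the Riemann-surface computations $l(0)=-\tfrac12 p\kappa$ and $p\kappa=-1$, at the cost of invoking the three explicit zeros of $S$ rather than pure decay. Two small corrections: the real branch cut of $S$ lies along $\mathbb{R}_-$, not $\mathbb{R}_+$ (you write the latter consistently), though this is purely a labeling slip and does not change the substance; and the step you flag as delicate — forcing the coefficient $B$ of the $w^{3/2}$ cross-term to vanish — is not actually needed, since entireness of $FS$ together with $FS=\bo(w^2)$ already gives a polynomial of degree at most $2$ by the generalized Liouville theorem, and the three distinct zeros do the rest regardless of the value of $B$.
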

\begin{proof}
Clearly, $l$ and $k$ are analytic in the same domain, and are both
uniformly bounded on bounded subsets of this domain.  Since $S$
changes sign across the cut connecting $w_0$ and $w_0^*$ it is obvious
that along this cut, $l_+(w)+l_-(w)=k_+(w)+k_-(w)=0$.  Now because the
integrand of $l$ is the restriction of the differential $\Omega(P)$ to
the first sheet, it is easy to see that $l(0)=-\tfrac{1}{2}p\kappa$,
so by \eqref{eq:h1PhiB}, in fact $l(0)=\tfrac{1}{2}$.  Again
using the fact that $S$ changes sign across the negative real axis, we
see that along this cut $l_+(w)+l_-(w)=2l(0)=k_+(w)+k_-(w)=1$.  Also,
both $l(w)$ and $k(w)$ have the same leading asymptotic behavior as
$w\to \infty$, namely, $pw^{1/2}+\bo(w^{-1/2})$.  Setting
$m(w):=(l(w)-k(w))/S(w)$ we see that $m(w)$ is a function that is
analytic where $S$ is, with at worst inverse square-root singularities
at $w=0,w_0,w_0^*$.  Moreover, along the open arcs of discontinuity of
$S$, we have $m_+(w)\equiv m_-(w)$, so $m$ extends continuously to these arcs.
It follows that $m(w)$ is necessarily an entire function of $w$.  Then,
since $m(w)=\bo(w^{-2})$ as $w\to\infty$, $m(w)\equiv 0$ by Liouville's Theorem,
and the result follows.
\end{proof}

We now present a formula for $\dot{\mathbf{O}}^\mathrm{out}(w)$.  
For $w$ in the cut plane,
let $P_k(w)$ denote the preimage of $w$ under $w(P)$ on sheet $k$ of $X$.  Then,
by composing the transformations leading from $\dot{\mathbf{O}}^\mathrm{out}(w)$
 to the
Baker-Akhiezer functions $t_j(P)$, we find that for $\Im\{w\}>0$,
\begin{equation}
\dot{\mathbf{O}}^\mathrm{out}(w) = 
\begin{bmatrix}\displaystyle\tfrac{q}{2}
\left(t_1(P_1)e^{-i\varphi_1h}+t_2(P_1)e^{-i\varphi_2h}\right) &\displaystyle
\tfrac{q}{2i}\left(t_1(P_2)e^{i\varphi_1h}-t_2(P_2)e^{i\varphi_2h}\right)\\\\
\displaystyle\tfrac{q}{2i}\left(t_2(P_1)e^{-i\varphi_2h}-t_1(P_1)
e^{-i\varphi_1h}
\right) &
\displaystyle\tfrac{q}{2}\left(t_1(P_2)e^{i\varphi_1h}+t_2(P_2)e^{i\varphi_2h}
\right)
\end{bmatrix}
\label{eq:OdotBup}
\end{equation}
while for $\Im\{w\}<0$,
\begin{equation}
\dot{\mathbf{O}}^\mathrm{out}(w) = 
\begin{bmatrix}
\displaystyle\tfrac{q}{2}\left(t_1(P_2)e^{i\varphi_1h}+t_2(P_2)e^{i\varphi_2h}
\right)
&\displaystyle
\tfrac{q}{2i}\left(t_1(P_1)e^{-i\varphi_1h}-t_2(P_1)e^{-i\varphi_2h}
\right) 
\\\\
\displaystyle
\tfrac{q}{2i}\left(t_2(P_2)e^{i\varphi_2h}-t_1(P_2)e^{i\varphi_1h}\right)
&
\displaystyle\tfrac{q}{2}
\left(t_1(P_1)e^{-i\varphi_1h}+t_2(P_1)e^{-i\varphi_2h}\right) 
\end{bmatrix}.
\label{eq:OdotBdown}
\end{equation}
The dependence on $w$ enters these formulae via 
$q=q(w)$, $P_j=P_j(w)$, and $h=h(w)$.

To simplify $t_j(P_1(w))e^{-i\varphi_jh(w)}$,
we evaluate the formula \eqref{eq:KricheverFormulaB} by selecting 
in both the Abel map and the integral in the exponent 
a path from $w_0$ to $P_1(w)$ lying entirely in the finite $w$-plane 
on sheet $1$ of $X$.  
Therefore, using Lemma~\ref{lem:lkB} and \eqref{eq:NjdefB},
\begin{equation}
\begin{split}
t_j(P_1(w))e^{-i\varphi_jh(w)}&=N_j
\frac{\Theta(A(P_1(w))+\mathcal{K}-i\varphi_j;\mathcal{H})}
{\Theta(A(P_1(w))+\mathcal{K};\mathcal{H})}e^{i\varphi_jl(w)-i\varphi_jh(w)}\\ & = 
\frac{\Theta(i\pi;\mathcal{H})\Theta(A(P_1(w))+\mathcal{K}-i\varphi_j;\mathcal{H})}
{\Theta(i\pi-i\varphi_j;\mathcal{H})\Theta(A(P_1(w))+\mathcal{K};\mathcal{H})}.
\end{split}
\label{eq:tjP1B}
\end{equation}
To evaluate $t_j(P_2(w))e^{i\varphi_jh(w)}$, we proceed similarly, choosing a path
from $w_0$ to $P_2(w)$ lying in the finite $w$-plane on sheet $2$ of $X$
to obtain
\begin{equation}
\begin{split}
t_j(P_2(w))e^{i\varphi_jh(w)}&=N_j\frac{\Theta(A(P_2(w))+\mathcal{K}-i\varphi_j;\mathcal{H})}
{\Theta(A(P_2(w))+\mathcal{K};\mathcal{H})}e^{-i\varphi_jl(w)+i\varphi_jh(w)}\\ &=
\frac{\Theta(i\pi;\mathcal{H})\Theta(A(P_2(w))+\mathcal{K}-i\varphi_j;\mathcal{H})}
{\Theta(i\pi-i\varphi_j;\mathcal{H})\Theta(A(P_2(w))+\mathcal{K};\mathcal{H})}
e^{i\varphi_j}.
\end{split}
\end{equation}
This latter formula can be further simplified with the observation that
$A(P_2(w))=-A(P_1(w))$ because the base point $P_0$ has been chosen as the
branch point $w_0$, so with the use of the relations 
\eqref{eq:Thetaeven}--\eqref{eq:thetamonodromy} and the definition 
\eqref{eq:RiemannConstantB} of $\mathcal{K}$ we obtain ultimately
\begin{equation}
t_j(P_2(w))e^{i\varphi_jh(w)}=
\frac{\Theta(i\pi;\mathcal{H})\Theta(A(P_1(w))+\mathcal{K}+i\varphi_j;\mathcal{H})}
{\Theta(i\pi+i\varphi_j;\mathcal{H})\Theta(A(P_1(w))+\mathcal{K};\mathcal{H})}.
\label{eq:tjP2B}
\end{equation}
In the formulae \eqref{eq:tjP1B} and \eqref{eq:tjP2B}, $A(P_1(w))$ represents 
any value of the integral
\begin{equation}
A(P_1(w)):=\int_{w_0}^w\frac{c\,d\xi}{S(\xi)}.
\label{eq:AP1Librational}
\end{equation}
This essentially completes the proof of Proposition~\ref{prop:outer}
in case \librational.  Indeed, 
the uniform bounds on $\dot{\mathbf{O}}^\mathrm{out}(w)$ 
follow from the formulae \eqref{eq:OdotBup},
\eqref{eq:OdotBdown}, \eqref{eq:tjP1B}, and \eqref{eq:tjP2B} upon 
noting that the only
dependence on $\nu$ occurs via the angles $\varphi_j$ defined 
by \eqref{eq:GjB}, and that $\Theta$ satisfies the periodicity relation
\eqref{eq:thetaperiodic}.  That $\det(\dot{\mathbf{O}}^\mathrm{out}(w))=1$ 
is a consequence
of the fact that the jump matrices have determinant equal to $1$, via a 
standard Liouville argument.

\subsection{The outer parametrix in case \rotational.  Proof of Proposition~\ref{prop:outer} in this case}
\subsubsection{Solution of Riemann-Hilbert Problem~\ref{rhp:wOdotrotational}
in terms of Baker-Akhiezer functions}
The contour of discontinuity of $\dot{\mathbf{O}}^\mathrm{out}(w)$ as illustrated
in Figure~\ref{fig:wOdotK} divides the complementary region into a
bounded component $\Upsilon_0$ and an unbounded component $\Upsilon_\infty$.  The first step
in solving Riemann-Hilbert Problem~\ref{rhp:wOdotrotational} is to make an
explicit substitution to simplify the contour.  We therefore define a
new unknown $\mathbf{P}(w)$ in terms of $\dot{\mathbf{O}}^\mathrm{out}(w)$ by
\begin{equation}
\mathbf{P}(w):=\begin{cases}
\dot{\mathbf{O}}^\mathrm{out}(w),\quad & \text{$\Im\{w\}>0$ and $w\in \Upsilon_\infty$}\\
\dot{\mathbf{O}}^\mathrm{out}(w)i\sigma_1e^{i\nu\sigma_3},\quad
&\text{$\Im\{w\}>0$ and $w\in \Upsilon_0$}\\
\sigma_2\dot{\mathbf{O}}^\mathrm{out}(w)\sigma_2,\quad &\text{$\Im\{w\}<0$ and $w\in \Upsilon_\infty$}\\
\sigma_2\dot{\mathbf{O}}^\mathrm{out}(w)\sigma_3e^{i\nu\sigma_3},\quad
&\text{$\Im\{w\}<0$ and $w\in \Upsilon_0$}.
\end{cases}
\end{equation}
This substitution has the effect of collapsing the contour to the real
axis and removing the jump discontinuity along $\mathbb{R}_+$.  The
matrix $\mathbf{P}(w)$ may be analytically continued to the
domain $\mathbb{C}\setminus (-\infty,0]$, and its boundary values on
the negative real axis are necessarily continuous except at the points
$w_0$ and $w_1$ at which negative one-fourth power singularities are
admitted.  The jump conditions satisfied by $\mathbf{P}(w)$ are
illustrated in Figure~\ref{fig:wPdotK}.
\begin{figure}[h]
\begin{center}
\includegraphics{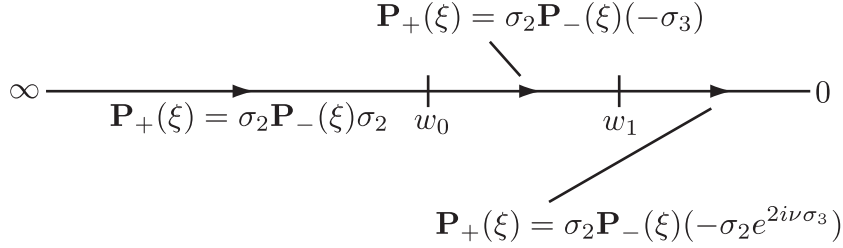}
\end{center}
\caption{\emph{The jump conditions satisfied by the matrix
    $\mathbf{P}(w)$ normalized as
    $\mathbf{P}(w)=\mathbb{I}+\bo(|w|^{-1/2})$ as
    $w\to\infty$.}}
\label{fig:wPdotK}
\end{figure}

Next, we remove the real parameter $\nu$ from the jump
conditions by defining the scalar function $h(w)$ by
\begin{equation}
h(w):=-\frac{S(w)}{\pi i}\int_{w_1}^0\frac{ds}{S_+(s)(s-w)},
\label{eq:hdefineK}
\end{equation}
where $S(w)^2 := w(w-w_0)(w-w_1)$, $S(w)$ is analytic in the complement of
the real intervals $-\infty<w\le w_0$ and $w_1\le w\le 0$, and 
$S(w)=w^{3/2}(1+\bo(w^{-1}))$ as $w\to\infty$ (principal branch of $w^{3/2}$).  
The asymptotic behavior of $h(w)$ as $w\to\infty$ is given by
\begin{equation}
h(w)=pw^{1/2}+\bo(|w|^{-1/2}),\quad w\to\infty,
\end{equation}
where
\begin{equation}
p:=\frac{1}{\pi i}\int_{w_1}^0\frac{ds}{S_+(s)}.
\end{equation}
The function defined by \eqref{eq:hdefineK} is analytic exactly where
$S(w)$ is, and it satisfies $h_+(\xi)+h_-(\xi)=0$
for $-\infty<\xi<w_0$ and $h_+(\xi)+h_-(\xi)=-2$ for $w_1<\xi<0$,
taking continuous and bounded boundary values.  
In place of $\dot{\mathbf{P}}(w)$
we now take the new unknown defined by
\begin{equation}
\mathbf{Q}(w):=\mathbf{P}(w)e^{i\nu h(w)\sigma_3}.
\end{equation}
By direct calculation we obtain the jump conditions determining
$\mathbf{Q}(w)$ as shown in Figure~\ref{fig:wQdotK}.
\begin{figure}[h]
\begin{center}
\includegraphics{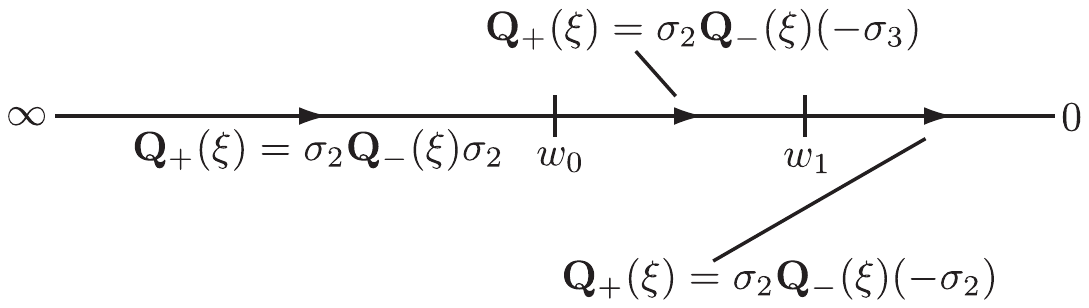}
\end{center}
\caption{\emph{The jump conditions satisfied by the matrix
    $\mathbf{Q}(w)$ normalized as
    $\mathbf{Q}(w)e^{-ip\nu
      w^{1/2}\sigma_3} = \mathbb{I}+\bo(|w|^{-1/2})$ as
    $w\to\infty$.}}
\label{fig:wQdotK}
\end{figure}
The boundary values taken by $\mathbf{Q}(w)$ are continuous except
at the points $w_0$ and $w_1$ at which inverse fourth-root singularities are
admitted.

We next diagonalize the prefactor of $\sigma_2$ with the use of the
eigenvector matrix $\mathbf{V}$ given in \eqref{eq:sigma2diag}.
Setting 
\begin{equation}
\mathbf{R}(w)=\mathbf{V}^\dagger\mathbf{Q}(w),
\end{equation}
one finds that the jump conditions on the negative real axis are reduced
to those shown in Figure~\ref{fig:wRdotK}, 
\begin{figure}[h]
\begin{center}
\includegraphics{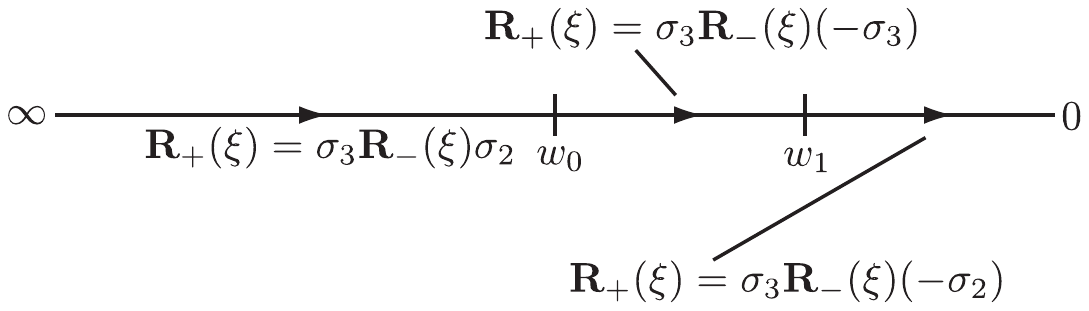}
\end{center}
\caption{\emph{The jump conditions satisfied by the matrix
    $\mathbf{R}(w)$ normalized as
    $\mathbf{R}(w)e^{-ip\nu
      w^{1/2}\sigma_3}=\mathbf{V}^\dagger +
    \bo(|w|^{-1/2})$ as $w\to\infty$.  }}
\label{fig:wRdotK}
\end{figure}
and the only discontinuities of the boundary values are once again inverse
fourth roots admitted at $w_0$ and $w_1$.

Now, let $q(w)$ be the function satisfying
\begin{equation}
q(w)^4=\frac{w-w_0}{w-w_1}
\end{equation}
and the normalization condition
$\lim_{w\to\infty}q(w)=1$, and taken to be analytic for 
$w\in\mathbb{C}\setminus[w_0,w_1]$.
Its boundary values taken on the branch cut $[w_0,w_1]$ are related by
$q_+(\xi)=-iq_-(\xi)$.  Note also that
\begin{equation}
q(0)=\left(\frac{w_0}{w_1}\right)^{1/4}>0.
\end{equation}
Now define a new unknown $\mathbf{S}(w)$ by
\begin{equation}
\mathbf{S}(w):=q(w)^{-1}\mathbf{R}(w).
\end{equation}
Taking into account the type of singularities that 
$\mathbf{R}(w)$ may have near $w=w_0$ and $w=w_1$, we see that
$\mathbf{S}(w)=\bo(|w-w_0|^{-1/2})$ for $w$ near $w_0$, while
$\mathbf{S}(w)$ is bounded near $w=w_1$.  The jump
conditions satisfied by $\mathbf{S}(w)$ on the negative real axis 
are illustrated in Figure~\ref{fig:wSdotK}.
\begin{figure}[h]
\begin{center}
\includegraphics{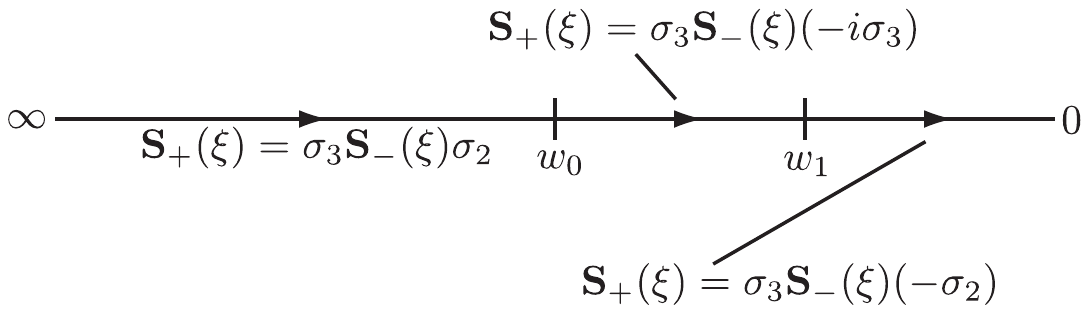}
\end{center}
\caption{\emph{The jump conditions satisfied by the matrix
    $\mathbf{S}(w)$ normalized as
    $\mathbf{S}(w)e^{-ip\nu
      w^{1/2}\sigma_3}=\mathbf{V}^\dagger +
    \bo(|w|^{-1/2})$ as $w\to\infty$.  Unlike $\dot{\mathbf{O}}^\mathrm{out}(w)$,
$\mathbf{P}(w)$, $\mathbf{Q}(w)$, and
$\mathbf{R}(w)$, the matrix $\mathbf{S}(w)$ is required to
be bounded in a neighborhood of $w=w_1$ while we admit a stronger singularity at $w=w_0$:
$\mathbf{S}(w)=\bo(|w-w_0|^{-1/2})$.}}
\label{fig:wSdotK}
\end{figure}

We now separate the rows of the matrix $\mathbf{S}(w)$ by writing
\begin{equation}
\mathbf{S}(w)=\begin{bmatrix}\trans{\mathbf{s}_1(w)}\\
\trans{\mathbf{s}_2(w)}\end{bmatrix},
\end{equation}
and introduce two new row vectors $\trans{\mathbf{t}_1(w)}$
and $\trans{\mathbf{t}_2(w)}$ by setting
\begin{equation}
\trans{\mathbf{t}_1(w)}=\sqrt{2}
\trans{\mathbf{s}_1(w)}e^{-i\pi k(w)\sigma_3/2}\quad
\text{and}\quad
\trans{\mathbf{t}_2(w)}=i\sqrt{2}\trans{\mathbf{s}_2(w)}e^{i\pi k(w)\sigma_3/2}
\end{equation}
where $k(w)$ is the function analytic for $w\in\mathbb{C}\setminus\mathbb{R}_-$
given by
\begin{equation}
k(w):=\frac{1}{2}+h(w)-\frac{S(w)}{2\pi i}\int_{w_0}^{w_1}\frac{ds}{S(s)(s-w)}.
\end{equation}
Note that $k(w)$ is bounded on compact sets in the $w$-plane and
has the asymptotic behavior
\begin{equation}
k(w)=\left[p+\frac{1}{2\pi i}\int_{w_0}^{w_1}\frac{ds}{S(s)}\right]w^{1/2} + \frac{1}{2} 
+ \bo(|w|^{-1/2}),\quad w\to\infty.
\end{equation}
Also, the jump conditions satisfied by $k$ on the negative real axis
are as follows:  $k_+(\xi)+k_-(\xi)=1$ for $\xi<w_0$, $k_+(\xi)-k_-(\xi)=-1$
for $w_0<\xi<w_1$, and $k_+(\xi)+k_-(\xi)=-1$ for $w_1<\xi<0$.

From this information it follows that $\trans{\mathbf{t}_j(w)}$ are analytic
for $w\in\mathbb{C}\setminus ((-\infty,w_0]\cup[w_1,0])$, and they satisfy
the involutive jump conditions
\begin{equation}
\trans{\mathbf{t}_{j+}(\xi)}=\trans{\mathbf{t}_{j-}(\xi)}\sigma_1,\quad
j=1,2,
\end{equation}
for $\xi$ in either of the two intervals of discontinuity, and we also 
have the normalization conditions
\begin{equation}
\trans{\mathbf{t}_j(w)}e^{-ip\varphi_j w^{1/2}\sigma_3}
=[1,1]+\bo(|w|^{-1/2}),\quad w\to\infty
\end{equation}
where $\varphi_1$ and $\varphi_2$ are given by 
\begin{equation}
\varphi_1:=\nu-\frac{\pi}{2}-\frac{1}{4ip}
\int_{w_0}^{w_1}\frac{ds}{S(s)}\quad
\text{and}\quad\varphi_2:=
\nu+\frac{\pi}{2}+
\frac{1}{4ip}\int_{w_0}^{w_1}\frac{ds}{S(s)}.
\end{equation}
Both $\trans{\mathbf{t}_1(w)}$ and $\trans{\mathbf{t}_2(w)}$ may become
unbounded in the finite $w$-plane only as $w\to w_0$, where all four
scalar components must be $\bo(|w-w_0|^{-1/2})$.

We are now in a position to identify the components of the row vectors
$\trans{\mathbf{t}_j(w)}$ as sheet projections of scalar Baker-Akhiezer
functions $t_j(P)$ onto the two sheets of the Riemann surface $X$ of
the equation $y^2=S(w)^2=w(w-w_0)(w-w_1)$ compactified at $y=w=\infty$.
Viewing $X$ as two copies (sheets) of the $w$-plane cut along the intervals
$(-\infty,w_0]$ and $[w_1,0]$ and appropriately glued together,  the
Baker-Akhiezer functions are then defined just as in case \librational:
\begin{equation}
t_j(P):=\begin{cases} [\trans{\mathbf{t}_j(w(P))}]_1,\quad &
P\in\text{sheet $1$}\\
[\trans{\mathbf{t}_j(w(P))}]_2,\quad & P\in\text{sheet $2$}.
\end{cases}
\end{equation}
These functions are analytic on $X$ except at exactly two points:  the branch
point $w=w_0$ at which $t_j(P)$ admits a simple pole (in the holomorphic
local coordinate $y_0(P)$ given by \eqref{eq:localcoordw0}), and the branch
point $w=\infty$ at which $t_j(P)$ has exponential behavior in terms of 
the holomorphic local coordinate $y_\infty(P)$ defined by \eqref{eq:kinfty}:
\begin{equation}
t_j(P)e^{-ip\varphi_jy_\infty(P)^{-1}}=1+\bo(y_\infty(P)),\quad 
P\to\infty.
\label{eq:tjnormK}
\end{equation}

\subsubsection{Construction of the Baker-Akhiezer functions}
To write down Krichever's formula for $t_j(P)$, we define a homology
basis on $X$ as shown in Figure~\ref{fig:rotationalhomology}.
\begin{figure}[h]
\begin{center}
\includegraphics{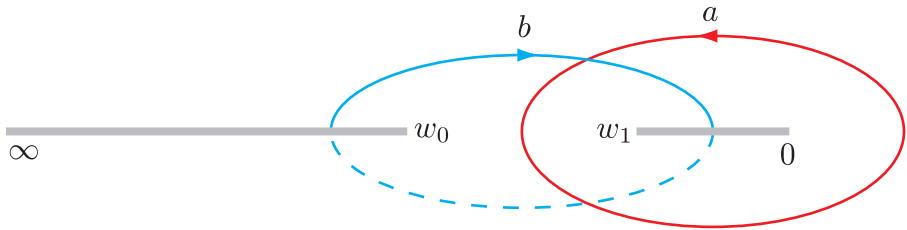}
\end{center}
\caption{\emph{A homology basis for the Riemann surface $X$.  Solid
    curves lie on sheet 1 and dashed curves lie on sheet 2.}}
\label{fig:rotationalhomology}
\end{figure}
With this basis selected, we define a holomorphic differential
$\omega(P)$, corresponding constants $\mathcal{H}$ and $\mathcal{K}$, Riemann
$\Theta$-function $\Theta(z;\mathcal{H})$, the Abel mapping $A(P)$ with base point
$P_0=w_0$, meromorphic differential $\Omega(P)$, and corresponding
constant $\kappa$ by exactly the same sequence of formulae as in case
\librational, namely \eqref{eq:holodiffB}--\eqref{eq:PhidefB}.
Lemma~\ref{lem:dissection} also holds in the current context, with the
implication that
\begin{equation}
p\kappa = \frac{2}{\pi i}\int_{w_1}^0\frac{c\,ds}{S_+(s)}=-\frac{1}{\pi i}
\oint_a\omega(P) = -2.
\label{eq:h1PhiK}
\end{equation}
Also, we see that $\varphi_j$ can be expressed in terms of $\mathcal{H}$ as follows:
\begin{equation}
\varphi_1 = \nu-\frac{\pi}{2}-\frac{i\mathcal{H}}{8}\quad
\text{and}\quad
\varphi_2 = \nu+\frac{\pi}{2}+\frac{i\mathcal{H}}{8}.
\end{equation}
The Krichever formula for the Baker-Akhiezer functions $t_j(P)$ is then
\begin{equation}
\begin{split}
t_j(P) := &N_j\frac{\Theta(A(P)+\mathcal{K}+ip\kappa\varphi_j;\mathcal{H})}
{\Theta(A(P)+\mathcal{K};\mathcal{H})}
\exp\left(ip\varphi_j\int_{w_0}^P\Omega(P')\right)\\
{}=&N_j\frac{\Theta(A(P)+\mathcal{K}-2i\varphi_j;\mathcal{H})}{\Theta(A(P)+\mathcal{K};\mathcal{H})}
\exp\left(ip\varphi_j\int_{w_0}^P\Omega(P')\right),
\end{split}
\label{eq:KricheverFormulaK}
\end{equation}
where $N_j$ is a constant chosen to enforce the normalization condition
\eqref{eq:tjnormK}.

To compute the normalization
constants, choose a path from $P=w_0$ to $P=\infty$ on sheet 1 with
$\Im\{w\}\ge 0$, and obtain
\begin{equation}
A(\infty)=-\frac{1}{2}\oint_a\omega(P)=-\pi i.
\end{equation}
Also, as $P$ tends to $P=\infty$ along such a path (we may take $P$ on sheet 1 
with $w(P)>0$),
\begin{equation}
\begin{split}
\int_{w_0}^P\Omega(P')&=\frac{1}{2}\oint_b\Omega(P') -\frac{1}{2}\oint_a
\Omega(P') + \int_0^{w(P)}\frac{w+C}{2S(w)}\,dw\\
&=\frac{1}{2}\kappa + \int_0^{w(P)}\frac{w+C}{2S(w)}\,dw,
\end{split}
\end{equation}
and
\begin{equation}
\begin{split}
\int_0^{w(P)}\frac{w+C}{2S(w)}\,dw&=\int_0^{w(P)}\left[
\frac{w+C}{2S(w)}-\frac{1}{2}w^{-1/2}\right]\,dw + w(P)^{1/2}\\
&=
\int_0^{w(P)}\left[
\frac{w+C}{2S(w)}-\frac{1}{2}w^{-1/2}\right]\,dw + \frac{1}{y_\infty(P)}.
\end{split}
\end{equation}
The remaining integrand is integrable at infinity, and doubling the
contour of integration and rotating the contours through the upper and
lower half-planes respectively to lie along the negative real axis
gives
\begin{equation}
\begin{split}
\int_0^{w(P)}\left[\frac{w+C}{2S(w)}-\frac{1}{2}w^{-1/2}\right]\,dw&=
\int_0^\infty\left[\frac{w+C}{2S(w)}-\frac{1}{2}w^{-1/2}\right]\,dw +
\bo(y_\infty(P))\\ &=-\frac{1}{2}\oint_b\Omega(P')+\bo(y_\infty(P))\\
&=-\frac{1}{2}\kappa+\bo(y_\infty(P)),
\end{split}
\end{equation}
and therefore
\begin{equation}
\int_{w_0}^P\Omega(P')=\frac{1}{y_\infty(P)} + \bo(y_\infty(P)).
\end{equation}
It follows that
\begin{equation}
\lim_{P\to\infty}t_j(P)e^{-ip\varphi_jy_\infty(P)^{-1}}=N_j
\frac{\Theta(\tfrac{1}{2}\mathcal{H}-2i\varphi_j;\mathcal{H})}
{\Theta(\tfrac{1}{2}\mathcal{H};\mathcal{H})},
\end{equation}
so the normalization constants are given by
\begin{equation}
N_j:=\frac{\Theta(\tfrac{1}{2}\mathcal{H};\mathcal{H})}{\Theta(\tfrac{1}{2}\mathcal{H}-2i\varphi_j;\mathcal{H})},
\end{equation}
completing the construction of the Baker-Akhiezer functions $t_j(P)$, and
hence of $\dot{\mathbf{O}}^\mathrm{out}(w)$.  To obtain a useful expression for 
$\dot{\mathbf{O}}^\mathrm{out}(w)$ we recall the function $l(w)$ defined by the formula
\eqref{eq:lfuncdefine} for $w\in\mathbb{C}\setminus((-\infty,w_0]\cup[w_1,0])$;
the analogue of Lemma~\ref{lem:lkB} in the current context is then 
the following.
\begin{lemma}
$l(w)=h(w)$.
\label{lem:lhK}
\end{lemma}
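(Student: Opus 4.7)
The plan is to show that $m(w) := (l(w) - h(w))/S(w)$ is entire with $\bo(|w|^{-2})$ decay at infinity, whence Liouville's theorem forces $m \equiv 0$. This is the direct analogue of the argument used for Lemma \ref{lem:lkB} in case \librational, with the roles of specific endpoints and period constants adjusted to the case \rotational\ geometry.

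First, both $l$ and $h$ are analytic on $\mathbb{C} \setminus ((-\infty, w_0] \cup [w_1, 0])$, bounded on compact subsets of this domain, and share the leading asymptotic behavior $pw^{1/2} + \bo(|w|^{-1/2})$ as $w \to \infty$. Since $S(w) \sim w^{3/2}$ at infinity, this immediately yields $m(w) = \bo(|w|^{-2})$, which will supply the decay at infinity needed at the end of the argument.

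Next I would verify that $l$ and $h$ share the same ``odd'' jump behavior across each cut, so that $(l-h)_+ = -(l-h)_-$, matching the sign flip of $S$ and making $m$ continuous across each cut. Because $l'(w) = p(w+C)/(2S(w))$ and $S_+ = -S_-$ across each cut, the sum $l_+(\xi) + l_-(\xi)$ has vanishing derivative and is therefore constant on each connected component; the same is true for $h$. On $(-\infty, w_0]$, continuity of $l$ at the branch point $w_0$ gives $l_+ + l_- = 2l(w_0) = 0 = h_+ + h_-$. The heart of the matter is the cut $[w_1, 0]$, where I must show $l_+ + l_- = -2$. Evaluating at $w_1$, where $l$ is continuous (as $w_1$ is a branch point and the integrand of $l$ has only integrable square-root singularity there), this reduces to proving $l(w_1) = -1$.

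To compute $l(w_1) = p\int_{w_0}^{w_1}\Omega|_{\text{sheet }1}$, I would recognize this integral as half of a period of $\Omega$: concatenating the sheet-1 path from $w_0$ to $w_1$ with its image under the sheet-exchange involution traversed in reverse produces a closed cycle on $X$. Since the involution fixes the branch points $w_0, w_1$ and pulls back $\Omega$ to $-\Omega$, the $\Omega$-integral around this closed cycle equals $2\int_{w_0}^{w_1}\Omega|_{\text{sheet }1} = 2 l(w_1)/p$. An orientation check against the homology basis of Figure \ref{fig:rotationalhomology} identifies this cycle with $\pm b$, whence $2l(w_1)/p = \pm\kappa$, and equation \eqref{eq:h1PhiK} (which records $p\kappa = -2$) pins down $l(w_1) = -1$ once the sign is correctly selected. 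This step mirrors the corresponding computation in the proof of Lemma \ref{lem:lkB}, where $l(0) = -\tfrac{1}{2}p\kappa$ played the analogous role.

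With the jumps matched, $m$ extends continuously across both cuts and is thus analytic on $\mathbb{C} \setminus \{0, w_0, w_1\}$. At each of these three branch points of $S$, the numerator $l-h$ remains bounded while $S$ vanishes like a square root, so $m$ has at worst removable inverse-square-root singularities there; hence $m$ is entire. Combined with the decay $m(w) = \bo(|w|^{-2})$ established above, Liouville's theorem yields $m \equiv 0$, which is the claim. The main technical obstacle is the orientation bookkeeping in the cycle computation pinning down the sign $l(w_1) = -1$; beyond that the argument is entirely parallel to Lemma \ref{lem:lkB}.
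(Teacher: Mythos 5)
Your proposal is correct and takes essentially the same route as the paper's own (one-line) proof, which simply defers to Lemma \ref{lem:lkB} and cites the identity \eqref{eq:h1PhiK}: match the additive jump constants of $l$ and $h$ on $(-\infty,w_0)$ and $(w_1,0)$, then conclude $(l-h)/S$ is entire and decays, hence vanishes. You supply the step the paper leaves implicit — that $l_+ + l_-$ is locally constant on each cut so the jump constant on $(w_1,0)$ is $2l(w_1)$, and that $l(w_1)$ is half a $b$-period of $p\Omega$, pinned to $-1$ by $p\kappa=-2$ — and you correctly flag the orientation sign in that period computation as the only bookkeeping detail remaining, which is exactly the level of detail at which the paper also leaves it.
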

\begin{proof}
  The proof is virtually the same as that of Lemma~\ref{lem:lkB}; one
  uses the identity \eqref{eq:h1PhiK} to show that $l(w)$ and $h(w)$
  satisfy the same additive jump conditions on the intervals
  $(-\infty,w_0)$ and $(w_1,0)$, and then uses this information to
  establish that the ratio $(l(w)-h(w))/S(w)$ is an entire function of $w$
  that vanishes as $w\to \infty$.
\end{proof}

To write formulae for $\dot{\mathbf{O}}^\mathrm{out}(w)$, it is convenient to introduce
the scalar function
\begin{equation}
n(w):=\frac{\mathcal{H}}{8}h(w)-\frac{S(w)}{4}\int_{w_0}^{w_1}\frac{ds}{S(s)(s-w)}.
\label{eq:nofwdef}
\end{equation}
Then, with $P_j(w)$ denoting the point on sheet $j$ of $X$ corresponding to $w$,
for $\Im\{w\}>0$ and $w\in \Upsilon_\infty$:
\begin{equation}
\dot{\mathbf{O}}^\mathrm{out}(w)=\begin{bmatrix}
\frac{q}{2}\left(t_1(P_1)e^{-i\varphi_1h+n}+t_2(P_1)e^{-i\varphi_2h-n}
\right) &
\frac{q}{2i}\left(t_1(P_2)e^{i\varphi_1h-n}-t_2(P_2)e^{i\varphi_2h+n}\right)
\\\\
\frac{q}{2i}\left(t_2(P_1)e^{-i\varphi_2h-n}-t_1(P_1)e^{-i\varphi_1h+n}\right)
&
\frac{q}{2}\left(t_1(P_2)e^{i\varphi_1h-n}+t_2(P_2)e^{i\varphi_2h+n}\right)
\end{bmatrix},
\label{eq:dotOrotationalformulaUup}
\end{equation}
for $\Im\{w\}>0$ and $w\in \Upsilon_0$:
\begin{equation}
\label{eq:dotOrotationalformulaBup}
\dot{\mathbf{O}}^\mathrm{out}(w)=\begin{bmatrix}
\frac{q}{2}\left(t_1(P_2)e^{i\varphi_1h-n}-t_2(P_2)e^{i\varphi_2h+n}\right)
e^{i\nu}
&-\frac{q}{2i}\left(t_1(P_1)e^{-i\varphi_1h+n}+t_2(P_1)e^{-i\varphi_2h-n}
\right)e^{-i\nu} 
\\\\
-\frac{q}{2i}\left(t_1(P_2)e^{i\varphi_1h-n}+t_2(P_2)e^{i\varphi_2h+n}\right)
e^{i\nu}
&
\frac{q}{2}\left(t_2(P_1)e^{-i\varphi_2h-n}-t_1(P_1)e^{-i\varphi_1h+n}\right)
e^{-i\nu}
\end{bmatrix},
\end{equation}
for $\Im\{w\}<0$ and $w\in \Upsilon_0$:
\begin{equation}
\dot{\mathbf{O}}^\mathrm{out}(w)=\begin{bmatrix}
\frac{q}{2}\left(t_2(P_1)e^{-i\varphi_2h-n}-t_1(P_1)e^{-i\varphi_1h+n}\right)
e^{-i\nu}
&\frac{q}{2i}\left(t_1(P_2)e^{i\varphi_1h-n}+t_2(P_2)e^{i\varphi_2h+n}\right)
e^{i\nu}
\\\\
\frac{q}{2i}\left(t_1(P_1)e^{-i\varphi_1h+n}+t_2(P_1)e^{-i\varphi_2h-n}
\right)e^{-i\nu} 
&
\frac{q}{2}\left(t_1(P_2)e^{i\varphi_1h-n}-t_2(P_2)e^{i\varphi_2h+n}\right)
e^{i\nu}
\end{bmatrix},
\end{equation}
and for $\Im\{w\}<0$ and $w\in \Upsilon_\infty$:
\begin{equation}
\dot{\mathbf{O}}^\mathrm{out}(w)=\begin{bmatrix}
\frac{q}{2}\left(t_1(P_2)e^{i\varphi_1h-n}+t_2(P_2)e^{i\varphi_2h+n}\right)&
-\frac{q}{2i}\left(t_2(P_1)e^{-i\varphi_2h-n}-t_1(P_1)e^{-i\varphi_1h+n}\right)
\\\\
-\frac{q}{2i}\left(t_1(P_2)e^{i\varphi_1h-n}-t_2(P_2)e^{i\varphi_2h+n}\right)
&
\frac{q}{2}\left(t_1(P_1)e^{-i\varphi_1h+n}+t_2(P_1)e^{-i\varphi_2h-n}
\right) 
\end{bmatrix}.
\end{equation}
It is then easy to check from these formulae that the exponentials
$e^{\pm i\varphi_jh(w)}$ are exactly cancelled out by corresponding
exponentials in the Baker-Akhiezer functions; indeed assuming the path
of integration to be confined to the half-plane containing $w$ for
$\Im\{w\}\neq 0$,
\begin{equation}
t_j(P_1(w))e^{-i\varphi_jh(w)}=\frac{\Theta(\tfrac{1}{2}\mathcal{H};\mathcal{H})\Theta(A(P_1(w))+\mathcal{K}-2i\varphi_j;\mathcal{H})}{\Theta(\tfrac{1}{2}\mathcal{H}-2i\varphi_j;\mathcal{H})\Theta(A(P_1(w))+\mathcal{K};\mathcal{H})}
\label{eq:tjP1K}
\end{equation}
and
\begin{equation}
t_j(P_2(w))e^{i\varphi_jh(w)}=
\frac{\Theta(\tfrac{1}{2}\mathcal{H};\mathcal{H})\Theta(A(P_2(w))+\mathcal{K}-2i\varphi_j;\mathcal{H})}{\Theta(\tfrac{1}{2}\mathcal{H}-2i\varphi_j;\mathcal{H})\Theta(A(P_2(w))+\mathcal{K};\mathcal{H})},
\label{eq:tjP2K}
\end{equation}
due to Lemma~\ref{lem:lhK}.
Therefore, 
all remaining dependence
on $\nu$ is either within the arguments of the $\Theta$-functions or in 
the exponential factors $e^{\pm i\nu}$, leading in both
cases to bounded oscillations as $\nu\to \infty$.
This essentially completes the proof of Proposition~\ref{prop:outer} in case
\rotational.

\subsection{Recovery of the potentials.  Proof of Proposition~\ref{prop:outerelliptic}}
Here we obtain explicit formulae for $\dot{C}$, $\dot{S}$,
and $\dot{G}$ as defined from $\dot{\mathbf{O}}^\mathrm{out}(w)$
by \eqref{eq:dotClibrational}--\eqref{eq:dotvlibrational}.

\subsubsection{Formulae for $\dot{C}$, $\dot{S}$, and $\dot{G}$
in case \librational}
To obtain such formulae in case \librational, 
it is enough to evaluate the asymptotic behavior
of $\dot{\mathbf{O}}^\mathrm{out}(w)$ for $\Im\{w\}>0$ using \eqref{eq:OdotBup}.
The value of $A(P_1(w))$ for $w\approx
0$ and $w\approx\infty$ may be determined 
by choosing the path in \eqref{eq:AP1Librational}
to lie in the upper half-plane, avoiding (except of course at the endpoints
$w=w_0$ and $w=0$ or $w=\infty$)
the branch cuts of $S$.  Then it is easy to see that
\begin{equation}
A(P_1(0))=\frac{1}{2}\oint_a\omega(P)-\frac{1}{2}\oint_b\omega(P) = 
\pi i-\frac{1}{2}\mathcal{H},
\end{equation}
and from \eqref{eq:AinfinityB} we have $A(P_1(\infty))=-\mathcal{H}/2$.
Also, using the fact that $w^{1/2}=i\sqrt{-w}$ (principal branches) 
for $\Im\{w\}>0$, we have
\begin{equation}
\begin{split}
A(P_1(w))-A(P_1(0))&=\int_0^w\frac{c\,d\xi}{S(\xi)}\\
&=\int_0^w\frac{c\,d\xi}{-|w_0|\xi^{1/2}(1+\bo(\xi))}\\
&=-\frac{2c}{|w_0|}w^{1/2} + 
\bo(w^{3/2})\\
&=-\frac{2ic}{|w_0|}\sqrt{-w} + \bo(w^{3/2}),\quad w\to 0,\quad\Im\{w\}>0
\end{split}
\end{equation}
and
\begin{equation}
\begin{split}
A(P_1(w))-A(P_1(\infty))&=\int_\infty^w\frac{c\,d\xi}{S(\xi)}\\
&=\int_{\infty}^w\frac{c\,d\xi}{\xi^{3/2}(1+\bo(\xi^{-1}))} \\
&= -2cw^{-1/2} + \bo(w^{-3/2})\\
&=\frac{2ic}{\sqrt{-w}} + \bo(w^{-3/2}),\quad w\to\infty,\quad \Im\{w\}>0.
\end{split}
\end{equation}
The final ingredients needed are the asymptotic formulae for $q(w)$:
\begin{equation}
q(w)=e^{i\zeta} + \bo(w),\quad 0<\zeta:=\frac{1}{2}\arg(w_0)<\frac{\pi}{2},
\quad w\to 0
\end{equation}
and
\begin{equation}
q(w)=1+\bo(w^{-1}),\quad w\to\infty.
\end{equation}

Using the Taylor expansion of the entire function $\Theta(z;\mathcal{H})$ with respect 
to $z$ and the identities 
\eqref{eq:Thetaeven} and \eqref{eq:thetaperiodic}, it then follows from 
substituting \eqref{eq:tjP1B} and \eqref{eq:tjP2B} into \eqref{eq:OdotBup}
and using $\varphi_2-\varphi_1=\pi$ that
\begin{equation}
\dot{O}^{0,0}_{11} = \dot{O}^{0,0}_{22}=
\frac{e^{i\zeta}}{2}\frac{\Theta(i\pi;\mathcal{H})}{\Theta(0;\mathcal{H})}
\left[\frac{\Theta(i\varphi_1;\mathcal{H})}{\Theta(i\varphi_2;\mathcal{H})} +
\frac{\Theta(i\varphi_2;\mathcal{H})}{\Theta(i\varphi_1;\mathcal{H})}\right]
\label{eq:Odot00B11}
\end{equation}
and
\begin{equation}
\dot{O}^{0,0}_{12}= -\dot{O}^{0,0}_{21}=\frac{e^{i\zeta}}{2i}
\frac{\Theta(i\pi;\mathcal{H})}{\Theta(0;\mathcal{H})}
\left[\frac{\Theta(i\varphi_1;\mathcal{H})}{\Theta(i\varphi_2;\mathcal{H})}-
\frac{\Theta(i\varphi_2;\mathcal{H})}{\Theta(i\varphi_1;\mathcal{H})}\right].
\label{eq:Odot00B12}
\end{equation}
Note that the matrix
$\dot{\mathbf{O}}^{0,0}=\dot{\mathbf{O}}^\mathrm{out}(0)$
is invariant under conjugation by
$\sigma_2$, as must be the case since by the conditions of Riemann-Hilbert
Problem~\ref{rhp:wOdotlibrational}, $\dot{\mathbf{O}}^\mathrm{out}(w)$ is to be
H\"older continuous up to $\mathbb{R}_+$.  
These formulae are sufficient to calculate $\dot{C}$ and
$\dot{S}$: substitution into \eqref{eq:dotClibrational} gives
\begin{equation}
\dot{C}=(-1)^{\#\Delta}\frac{e^{i\zeta}}{2}\frac{\Theta(i\pi;\mathcal{H})}{
\Theta(0;\mathcal{H})}\left[\frac{\Theta(i\varphi_2;\mathcal{H})}{\Theta(i\varphi_1;\mathcal{H})}+
\frac{\Theta(i\varphi_1;\mathcal{H})}{\Theta(i\varphi_2;\mathcal{H})}\right]
\label{eq:CNthetaslibrational}
\end{equation}
and substitution into \eqref{eq:dotSlibrational} gives
\begin{equation}
\dot{S}=(-1)^{\#\Delta}\frac{e^{i\zeta}}{2i}
\frac{\Theta(i\pi;\mathcal{H})}{\Theta(0;\mathcal{H})}\left[\frac{\Theta(i\varphi_2;\mathcal{H})}
{\Theta(i\varphi_1;\mathcal{H})}-\frac{\Theta(i\varphi_1;\mathcal{H})}{\Theta(i\varphi_2;\mathcal{H})}
\right].
\label{eq:SNthetaslibrational}
\end{equation}
The higher-order coefficients
that we need to calculate $\dot{G}_N(x,t)$ are:
\begin{equation}
\dot{O}_{22}^{0,1}= -\frac{ice^{i\zeta}}{|w_0|}
\frac{\Theta(i\pi;\mathcal{H})}{\Theta(0;\mathcal{H})}
\left[\frac{\Theta'(i\varphi_1;\mathcal{H})}{\Theta(i\varphi_2;\mathcal{H})}
+\frac{\Theta'(i\varphi_2;\mathcal{H})}{\Theta(i\varphi_1;\mathcal{H})}\right],
\end{equation}
\begin{equation}
\dot{O}_{12}^{0,1} =
-\frac{ce^{i\zeta}}{|w_0|}\frac{\Theta(i\pi;\mathcal{H})}{\Theta(0;\mathcal{H})}
\left[\frac{\Theta'(i\varphi_1;\mathcal{H})}{\Theta(i\varphi_2;\mathcal{H})}
-\frac{\Theta'(i\varphi_2;\mathcal{H})}{\Theta(i\varphi_1;\mathcal{H})}
\right],
\end{equation}
and
\begin{equation}
\dot{O}_{12}^{\infty,1} =c\left[
\frac{\Theta'(i\varphi_2;\mathcal{H})}{\Theta(i\varphi_2;\mathcal{H})} -
\frac{\Theta'(i\varphi_1;\mathcal{H})}{\Theta(i\varphi_1;\mathcal{H})} \right].
\label{eq:Odotinfinity1B12}
\end{equation}
Here by $\Theta'(z;\mathcal{H})$ we mean the partial derivative with respect to $z$ holding $\mathcal{H}$ fixed.
Substituting into
\eqref{eq:dotvlibrational} we have
\begin{equation}
\dot{G}=
c\left[1+\frac{e^{2i\zeta}}{|w_0|}\frac{\Theta(i\pi;\mathcal{H})^2}{\Theta(0;\mathcal{H})^2}
\right]\left[\frac{\Theta'(i\varphi_2;\mathcal{H})}{\Theta(i\varphi_2;\mathcal{H})}-
\frac{\Theta'(i\varphi_1;\mathcal{H})}
{\Theta(i\varphi_1;\mathcal{H})}\right].
\end{equation} 

\subsubsection*{Evaluation of $\mathcal{H}$.  Transformations of $\Theta$ with respect to $\mathcal{H}$}
We now seek to simplify the formulae for $\dot{C}$,
$\dot{S}$, and $\dot{G}$.  Of course one would like to
introduce Jacobi elliptic functions at this stage, but it turns out
that since $\mathcal{H}$ is complex the resulting formulae involve elliptic
parameters $m$ that are not in the so-called \emph{normal case} of
$0<m<1$.  To arrive at the simplest formulae in terms of Jacobi elliptic
functions with a parameter $m\in (0,1)$ we
will therefore first make some transformations of $\mathcal{H}$.  
Elementary contour definitions in the two integrals involved in the definition
\eqref{eq:wHdefineB} show that $\mathcal{H}$ can be written in the form
\begin{equation}
\mathcal{H}=\frac{1}{2}(\mathcal{H}_0 + 2\pi i),
\end{equation}
where $\mathcal{H}_0$ is given by the ratio of integrals
\begin{equation}
\mathcal{H}_0:=-2\pi \frac{\displaystyle\int_0^{+\infty}
\frac{dw}{\sqrt{w(w-w_0)(w-w_0^*)}}}{\displaystyle\int_{-\infty}^0\frac{dw}{\sqrt{-w(w-w_0)(w-w_0^*)}}},
\end{equation}
where the square roots are both positive.
Using the substitution
\begin{equation}
w=-|w_0|\frac{z-1}{z+1}\quad\text{followed by}\quad z=\pm\sqrt{1-s^2}
\end{equation}
and recalling the complete elliptic integral of the first kind defined by \eqref{eq:ellipticKdef} we obtain
\begin{equation}
\int_0^{+\infty}\frac{dw}{\sqrt{w(w-w_0)(w-w_0^*)}}=\frac{2K(\cos(\zeta)^2)}{\sqrt{|w_0|}}.
\end{equation}
Similarly by means of the substitution \eqref{eq:wsBsubstKm}
we find
\begin{equation}
\int_{-\infty}^0\frac{dw}{\sqrt{-w(w-w_0)(w-w_0^*)}}=\frac{2K(\sin(\zeta)^2)}{\sqrt{|w_0|}}.
\end{equation}
Therefore,
\begin{equation}
\mathcal{H}_0=-2\pi\frac{K(m')}{K(m)}
\label{eq:H0m}
\end{equation}
where  the elliptic parameter $m\in (0,1)$ and its complementary parameter $m':=1-m\in (0,1)$ are given by 
\begin{equation}
m:=\sin(\zeta)^2,\quad m':=\cos(\zeta)^2.
\label{eq:mzetalibrational}
\end{equation}
Note that $\mathcal{H}_0$ is a negative real number.
According to the theory of elliptic functions (see \cite{Akhiezer}) it
is Riemann $\Theta$-functions with parameter $\mathcal{H}_0$ that are associated with
Jacobi elliptic functions of elliptic parameter $m\in (0,1)$.  

Adding $2\pi i$ to $\mathcal{H}_0$ and then dividing by two amounts to the composition
of two classical transformations of $\Theta$-functions \cite{Akhiezer}.  
Indeed, the so-called \emph{first principal first-degree transformation}
implies that
\begin{equation}
\Theta(z;\mathcal{H}+2\pi i)=\Theta(z+i\pi;\mathcal{H}),\quad z\in\mathbb{C},\quad\Re\{\mathcal{H}\}<0,
\label{eq:1stprincipal1stdegree}
\end{equation}
an identity that allows us to add $2\pi i$ to $\mathcal{H}_0$.  Also,
the so-called \emph{Gauss transformation} (a second-degree transformation)
implies that
\begin{equation}
\Theta(i\pi;\tfrac{1}{2}\mathcal{H})\Theta(z;\tfrac{1}{2}\mathcal{H})=
\Theta(z+i\pi;\mathcal{H})^2+e^{-z}e^{\mathcal{H}/4}\Theta(z+i\pi-\tfrac{1}{2}\mathcal{H};\mathcal{H})^2,
\quad z\in\mathbb{C},\quad \Re\{\mathcal{H}\}<0,
\label{eq:Gauss1}
\end{equation}
and
\begin{equation}
\Theta(0;\tfrac{1}{2}\mathcal{H})\Theta(z+i\pi;\tfrac{1}{2}\mathcal{H})=
\Theta(z+i\pi;\mathcal{H})^2-e^{-z}e^{\mathcal{H}/4}\Theta(z+i\pi-\tfrac{1}{2}\mathcal{H};\mathcal{H})^2,
\quad z\in\mathbb{C},\quad \Re\{\mathcal{H}\}<0,
\label{eq:Gauss2}
\end{equation}
identities that allow us to divide $\mathcal{H}_0+2\pi i$ by two.  Combining
these together and using the identity \eqref{eq:thetaperiodic} yields
\begin{equation}
\frac{\Theta(i\pi;\mathcal{H})\Theta(z;\mathcal{H})}{\Theta(z+i\pi;\mathcal{H})\Theta(0;\mathcal{H})}=
\frac{\Theta(z;\mathcal{H}_0)^2+ie^{-z}e^{\mathcal{H}_0/4}\Theta(z+i\pi-\tfrac{1}{2}\mathcal{H}_0;\mathcal{H}_0)^2}
{\Theta(z;\mathcal{H}_0)^2-ie^{-z}e^{\mathcal{H}_0/4}\Theta(z+i\pi-\tfrac{1}{2}\mathcal{H}_0;\mathcal{H}_0)^2},\quad
\mathcal{H}=\frac{1}{2}(\mathcal{H}_0+2\pi i).
\label{eq:HH0thetaslibrational}
\end{equation}
The elliptic parameter $m$ 
can be expressed directly in terms of special values of $\Theta$-functions 
with parameter $\mathcal{H}_0$ as follows
\cite{Akhiezer}:
\begin{equation}
\frac{\Theta(i\pi;\mathcal{H}_0)^4}{\Theta(0;\mathcal{H}_0)^4}=1-m\quad\text{and}\quad
e^{\mathcal{H}_0/2}\frac{\Theta(-\tfrac{1}{2}\mathcal{H}_0;\mathcal{H}_0)^4}{\Theta(0;\mathcal{H}_0)^4}=m.
\label{eq:thetasmlibrational}
\end{equation}
Since $\mathcal{H}_0<0$ we see from \eqref{eq:RiemannTheta} 
that all four $\Theta$-function values appearing in this identity
are real, so we may take the positive square root of both sides.

We may now introduce the Jacobi elliptic functions \cite{Akhiezer}
with parameter $m$ related
to $\mathcal{H}_0$ by \eqref{eq:H0m} or equivalently by \eqref{eq:thetasmlibrational}:
\begin{equation}
\mathrm{sn}\left(\frac{K(m)z}{\pi i};m\right):=ie^{-z/2}\frac{\Theta(0;\mathcal{H}_0)
\Theta(z+i\pi-\tfrac{1}{2}\mathcal{H}_0;\mathcal{H}_0)}{\Theta(-\tfrac{1}{2}\mathcal{H}_0;\mathcal{H}_0)
\Theta(z+i\pi;\mathcal{H}_0)},\quad z\in\mathbb{C},
\label{eq:sndef}
\end{equation}
\begin{equation}
\mathrm{cn}\left(\frac{K(m)z}{\pi i};m\right):=e^{-z/2}
\frac{\Theta(i\pi;\mathcal{H}_0)\Theta(z-\tfrac{1}{2}\mathcal{H}_0;\mathcal{H}_0)}{\Theta(-\tfrac{1}{2}\mathcal{H}_0;\mathcal{H}_0)\Theta(z+i\pi;\mathcal{H}_0)},\quad z\in\mathbb{C},
\label{eq:cndef}
\end{equation}
and
\begin{equation}
\mathrm{dn}\left(\frac{K(m)z}{\pi i};m\right):=
\frac{\Theta(i\pi;\mathcal{H}_0)\Theta(z;\mathcal{H}_0)}
{\Theta(0;\mathcal{H}_0)\Theta(z+i\pi;\mathcal{H}_0)},\quad z\in\mathbb{C}.
\label{eq:dndef}
\end{equation}
Setting 
\begin{equation}
Z_j:=\frac{K(m)}{\pi}\varphi_j,
\label{eq:Zjdefine}
\end{equation}
and substituting into \eqref{eq:CNthetaslibrational} and 
\eqref{eq:SNthetaslibrational} firstly from 
\eqref{eq:HH0thetaslibrational}--\eqref{eq:thetasmlibrational} and then
from \eqref{eq:sndef}--\eqref{eq:dndef} we obtain:
\begin{equation}
\dot{C}=
(-1)^{\#\Delta}\frac{e^{i\zeta}}{2}
\left[\frac{\mathrm{dn}(Z_1;m)^2-i\sqrt{mm'}\mathrm{sn}(Z_1;m)^2}
{\mathrm{dn}(Z_1;m)^2+i\sqrt{mm'}\mathrm{sn}(Z_1;m)^2}+
\frac{\mathrm{dn}(Z_2;m)^2-i\sqrt{mm'}\mathrm{sn}(Z_2;m)^2}
{\mathrm{dn}(Z_2;m)^2+i\sqrt{mm'}\mathrm{sn}(Z_2;m)^2}\right]
\end{equation}
\begin{equation}
\dot{S}=
(-1)^{\#\Delta}\frac{e^{i\zeta}}{2i}
\left[\frac{\mathrm{dn}(Z_2;m)^2-i\sqrt{mm'}\mathrm{sn}(Z_2;m)^2}
{\mathrm{dn}(Z_2;m)^2+i\sqrt{mm'}\mathrm{sn}(Z_2;m)^2}-
\frac{\mathrm{dn}(Z_1;m)^2-i\sqrt{mm'}\mathrm{sn}(Z_1;m)^2}
{\mathrm{dn}(Z_1;m)^2+i\sqrt{mm'}\mathrm{sn}(Z_1;m)^2}\right].
\end{equation}
To express $\dot{G}$ in terms of Jacobi elliptic functions, first
we note that using $z=i\pi$ in \eqref{eq:HH0thetaslibrational}, taking
into account the periodicity relation \eqref{eq:thetaperiodic}, and then
using the positive square roots of the identities \eqref{eq:thetasmlibrational}
together with \eqref{eq:mzetalibrational} gives
\begin{equation}
e^{2i\zeta}\frac{\Theta(i\pi;\mathcal{H})^2}{\Theta(0;\mathcal{H})^2}=1.
\end{equation}
Now, by simple contour deformations and the use of the normalization
condition \eqref{eq:holodiffnormB} defining $c$, it follows that
$2c\mathcal{D}=\pi$, where $\mathcal{D}$ is the denominator defined
by \eqref{eq:deltalibrational}.
Therefore, if we allow the branch points $w_0$ and $w_0^*$ to depend on $(x,t)$ via the
moment and integral conditions $M=I=0$, it follows from \eqref{eq:kappapartials} of Proposition~\ref{prop:phasevelocity} that we
may write $\dot{G}$ in the form
\begin{equation}
\dot{G}=2\frac{\partial\Phi}{\partial t}
\left[\frac{\Theta'(i\varphi_2;\mathcal{H})}{\Theta(i\varphi_2;\mathcal{H})}-
\frac{\Theta'(i\varphi_1;\mathcal{H})}{\Theta(i\varphi_1;\mathcal{H})}\right].
\end{equation}
Since $\varphi_2=\varphi_1+\pi$, this can be written as a logarithmic derivative:
\begin{equation}
\dot{G}=-2i\frac{\partial\Phi}{\partial t}
\frac{d}{d\varphi_1}\log\left(\frac{\Theta(i\varphi_1+i\pi;\mathcal{H})}{\Theta(i\varphi_1;\mathcal{H})}\right).
\end{equation}
Applying \eqref{eq:HH0thetaslibrational} and the positive square roots
of \eqref{eq:thetasmlibrational}, we may then substitute from 
\eqref{eq:sndef}--\eqref{eq:dndef} to obtain
\begin{equation}
\dot{G}=-2i\frac{\partial\Phi}{\partial t}
\frac{d}{d\varphi_1}\log\left(\frac{\displaystyle \mathrm{dn}
\left(Z_1;m\right)^2+
i\sqrt{mm'}\mathrm{sn}\left(Z_1;m\right)^2}
{\displaystyle \mathrm{dn}
\left(Z_1;m\right)^2-
i\sqrt{mm'}\mathrm{sn}\left(Z_1;m\right)^2}
\right).
\end{equation}

\subsubsection*{Use of elliptic function identities for a fixed elliptic parameter}
Now, we simplify $\dot{C}$, $\dot{S}$, and $\dot{G}$ 
further by recalling
some identities relating elliptic functions at a fixed value of the
elliptic parameter $m$.  
Firstly, using the Pythagorean identities \cite{Akhiezer}
\begin{equation}
\mathrm{sn}(\cdot;m)^2+\mathrm{cn}(\cdot;m)^2=
\mathrm{dn}(\cdot;m)^2+m\mathrm{sn}(\cdot;m)^2=1
\label{eq:JacobiPythagoras}
\end{equation}
to eliminate $\mathrm{dn}(Z_j;m)^2$ 
and recalling that $e^{i\zeta}=\sqrt{m'}+i\sqrt{m}$, we have
\begin{equation}
\dot{C}=(-1)^{\#\Delta}\frac{e^{i\zeta}}{2}
\left[\frac{1-i\sqrt{m}e^{-i\zeta}\mathrm{sn}(Z_1;m)^2}
{1+i\sqrt{m}e^{i\zeta}\mathrm{sn}(Z_1;m)^2}+
\frac{1-i\sqrt{m}e^{-i\zeta}\mathrm{sn}(Z_2;m)^2}
{1+i\sqrt{m}e^{i\zeta}\mathrm{sn}(Z_2;m)^2}\right],
\end{equation}
\begin{equation}
\dot{S}=(-1)^{\#\Delta}\frac{e^{i\zeta}}{2i}
\left[\frac{1-i\sqrt{m}e^{-i\zeta}\mathrm{sn}(Z_2;m)^2}
{1+i\sqrt{m}e^{i\zeta}\mathrm{sn}(Z_2;m)^2}-
\frac{1-i\sqrt{m}e^{-i\zeta}\mathrm{sn}(Z_1;m)^2}
{1+i\sqrt{m}e^{i\zeta}\mathrm{sn}(Z_1;m)^2}\right],
\end{equation}
and
\begin{equation}
\dot{G}=-2i\frac{\partial\Phi}{\partial t}
\frac{d}{d\varphi_1}\log\left(\frac{1+i\sqrt{m}e^{i\zeta}\mathrm{sn}(Z_1;m)^2}
{1-i\sqrt{m}e^{-i\zeta}\mathrm{sn}(Z_1;m)^2}\right).
\end{equation}
Next, using the double-angle identity
\begin{equation}
\mathrm{sn}(Z;m)^2=\frac{m\mathrm{cn}(2Z;m)-\mathrm{dn}(2Z;m)+m'}
{m\mathrm{cn}(2Z;m)-m\mathrm{dn}(2Z;m)},
\label{eq:doubleangle}
\end{equation}
these can be written in the form
\begin{equation}
\begin{split}
\dot{C}=&(-1)^{\#\Delta}\frac{e^{i\zeta}}{2}
\left[\frac{\sqrt{m}e^{-i\zeta}\mathrm{cn}(2Z_1;m)+i\mathrm{dn}(2Z_1;m)-i
\sqrt{m'}e^{-i\zeta}}{\sqrt{m}e^{i\zeta}\mathrm{cn}(2Z_1;m)-i\mathrm{dn}(2Z_1;m)
+i\sqrt{m'}e^{i\zeta}}\right.\\
&\quad\quad\quad\quad\quad\quad{}+\left.
\frac{\sqrt{m}e^{-i\zeta}\mathrm{cn}(2Z_2;m)+i\mathrm{dn}(2Z_2;m)-i
\sqrt{m'}e^{-i\zeta}}{\sqrt{m}e^{i\zeta}\mathrm{cn}(2Z_2;m)-i\mathrm{dn}(2Z_2;m)
+i\sqrt{m'}e^{i\zeta}}\right],
\end{split}
\end{equation}
\begin{equation}
\begin{split}
\dot{S}=&(-1)^{\#\Delta}\frac{e^{i\zeta}}{2i}
\left[\frac{\sqrt{m}e^{-i\zeta}\mathrm{cn}(2Z_2;m)+i\mathrm{dn}(2Z_2;m)-i
\sqrt{m'}e^{-i\zeta}}{\sqrt{m}e^{i\zeta}\mathrm{cn}(2Z_2;m)-i\mathrm{dn}(2Z_2;m)
+i\sqrt{m'}e^{i\zeta}}\right.\\
&\quad\quad\quad\quad\quad\quad{}-\left.
\frac{\sqrt{m}e^{-i\zeta}\mathrm{cn}(2Z_1;m)+i\mathrm{dn}(2Z_1;m)-i
\sqrt{m'}e^{-i\zeta}}{\sqrt{m}e^{i\zeta}\mathrm{cn}(2Z_1;m)-i\mathrm{dn}(2Z_1;m)
+i\sqrt{m'}e^{i\zeta}}\right],
\end{split}
\end{equation}
and
\begin{equation}
\dot{G}=-2i\frac{\partial\Phi}{\partial t}
\frac{d}{d\varphi_1}\log\left(\frac{\sqrt{m}e^{i\zeta}\mathrm{cn}(2Z_1;m)
-i\mathrm{dn}(2Z_1;m)+i\sqrt{m'}e^{i\zeta}}
{\sqrt{m}e^{-i\zeta}\mathrm{cn}(2Z_1;m)+i\mathrm{dn}(2Z_1;m)-i\sqrt{m'}e^{-i\zeta}}\right).
\end{equation}
Then, since
\begin{equation}
2Z_1=W-K(m)\quad\text{and}\quad 2Z_2=W+K(m)\quad\text{where}\quad
W:=\frac{2\nu K(m)}{\pi},
\end{equation}
the use of the identities
\begin{equation}
\mathrm{cn}(W\pm K(m);m)=\mp\sqrt{m'}\frac{\mathrm{sn}(W;m)}{\mathrm{dn}(W;m)}
\quad\text{and}\quad
\mathrm{dn}(W\pm K(m);m)=\frac{\sqrt{m'}}{\mathrm{dn}(W;m)}
\end{equation}
yields
\begin{equation}
\begin{split}
\dot{C}=& (-1)^{\#\Delta}\frac{e^{i\zeta}}{2}
\left[\frac{\sqrt{m}e^{-i\zeta}\mathrm{sn}(W;m)+i-ie^{-i\zeta}
\mathrm{dn}(W;m)}
{\sqrt{m}e^{i\zeta}\mathrm{sn}(W;m)-i+ie^{i\zeta}
\mathrm{dn}(W;m)}\right.\\
&\quad\quad\quad\quad\quad\quad{}+\left.
\frac{-\sqrt{m}e^{-i\zeta}\mathrm{sn}(W;m)+i-ie^{-i\zeta}
\mathrm{dn}(W;m)}
{-\sqrt{m}e^{i\zeta}\mathrm{sn}(W;m)-i+ie^{i\zeta}
\mathrm{dn}(W;m)}\right],
\end{split}
\end{equation}
\begin{equation}
\begin{split}
\dot{S}=& (-1)^{\#\Delta}\frac{e^{i\zeta}}{2i}
\left[\frac{-\sqrt{m}e^{-i\zeta}\mathrm{sn}(W;m)+i-ie^{-i\zeta}
\mathrm{dn}(W;m)}
{-\sqrt{m}e^{i\zeta}\mathrm{sn}(W;m)-i+ie^{i\zeta}
\mathrm{dn}(W;m)}\right.\\
&\quad\quad\quad\quad\quad\quad{}-\left.
\frac{\sqrt{m}e^{-i\zeta}\mathrm{sn}(W;m)+i-ie^{-i\zeta}
\mathrm{dn}(W;m)}
{\sqrt{m}e^{i\zeta}\mathrm{sn}(W;m)-i+ie^{i\zeta}
\mathrm{dn}(W;m)}\right],
\end{split}
\end{equation}
and
\begin{equation}
\dot{G}=\frac{4K(m)}{i\pi}\frac{\partial\Phi}{\partial t}
\frac{d}{dW}\log\left(\frac{\sqrt{m}\mathrm{sn}(W;m)-ie^{-i\zeta}+i\mathrm{dn}(W;m)}{\sqrt{m}\mathrm{sn}(W;m)+ie^{i\zeta}-i\mathrm{dn}(W;m)}\right).
\end{equation}
Again applying \eqref{eq:JacobiPythagoras}
and $e^{i\zeta}=\sqrt{m'}+i\sqrt{m}$, and the differential identities 
\cite{Akhiezer}
\begin{equation}
\begin{split}
\frac{d}{dW}\mathrm{sn}(W;m)&=\mathrm{cn}(W;m)\mathrm{dn}(W;m)\\
\frac{d}{dW}\mathrm{cn}(W;m)&=-\mathrm{sn}(W;m)\mathrm{dn}(W;m)\\
\frac{d}{dW}\mathrm{dn}(W;m)&=-m\mathrm{cn}(W;m)\mathrm{sn}(W;m),
\end{split}
\label{eq:sncndndiff}
\end{equation}
these become simply
\begin{equation}
\begin{split}
\dot{C}&=(-1)^{\#\Delta}\mathrm{dn}(W;m)\\
\dot{S}&=-(-1)^{\#\Delta}\sqrt{m}\mathrm{sn}(W;m)\\
\dot{G}&=-\frac{4K(m)}{\pi}\frac{\partial\Phi}{\partial t}
\sqrt{m}\mathrm{cn}(W;m).
\end{split}
\label{eq:CSGBpenultimate}
\end{equation}

We have already introduced $(x,t)$-dependence via the conditions $M=I=0$.  If we also
recall that $\nu=\Phi/\epsilon_N + \pi\#\Delta$, then $W$ takes the form
\begin{equation}
W=\frac{2\Phi K(m)}{\pi\epsilon_N}+2\#\Delta K(m).
\end{equation}
But, since $\#\Delta$ is even in case \librational, and since
$\mathrm{sn}(\cdot;m)$, $\mathrm{cn}(\cdot;m)$, and $\mathrm{dn}(\cdot;m)$ are all periodic with period $4K(m)$, the formulae \eqref{eq:CSGBpenultimate} reduce to
the expressions \eqref{eq:CNSNGNLibrational}, as desired.  

This nearly completes the proof of Proposition~\ref{prop:outerelliptic}
in case \librational.  It only remains to confirm the differential relations
\eqref{eq:CNSNGNrelation}.
Note, however, 
that by partial differentiation with respect to $m$ of the relation
\begin{equation}
\int_0^{\mathrm{sn}(u;m)}\frac{dt}{\sqrt{1-t^2}\sqrt{1-mt^2}}=u,
\end{equation}
the use of the Pythagorean identities \eqref{eq:JacobiPythagoras} shows
that
\begin{equation}
\frac{\partial}{\partial m}\mathrm{sn}(u;m)=-\mathrm{cn}(u;m)
\mathrm{dn}(u;m)\int_0^{\mathrm{sn}(u;m)}\frac{\partial}{\partial m}
\left[\frac{1}{\sqrt{1-t^2}\sqrt{1-mt^2}}\right]\,dt.
\end{equation}
Now the integral increases by $4K'(m)$ when $u$ increases by 
$4K(m)$, the fundamental real period of $\mathrm{sn}(\cdot;m)$.  Therefore,
\begin{equation}
\frac{\partial}{\partial m}\mathrm{sn}(u;m)=-\mathrm{cn}(u;m)\mathrm{dn}(u;m)
\left[\frac{K'(m)}{K(m)}u + f(u;m)\right],
\label{eq:dsndm}
\end{equation}
where $f(u+4K(m);m)=f(u;m)$, making $f$ periodic and hence bounded with
respect to $u$.  By partial differentiation of $\dot{S}_N(x,t)$ as given
by the formula \eqref{eq:CNSNGNLibrational} we obtain
\begin{equation}
\begin{split}
\epsilon_N\frac{\partial \dot{S}_N}{\partial t}(x,t)&=
-\frac{\epsilon_N}{2\sqrt{m}}\frac{\partial m}{\partial t}\mathrm{sn}(u;m)
-\epsilon_N\sqrt{m}\frac{\partial m}{\partial t}\frac{\partial}{\partial m}
\mathrm{sn}(u;m)\\
&\quad\quad\quad\quad{}
-\sqrt{m}\frac{\partial}{\partial u}\mathrm{sn}(u;m)
\left[\frac{2K(m)}{\pi}\frac{\partial\Phi}{\partial t} +
\frac{2\Phi K'(m)}{\pi}\frac{\partial m}{\partial t}\right],\quad\quad
u = \frac{2\Phi K(m)}{\pi\epsilon_N}.
\end{split}
\end{equation}
Using \eqref{eq:sncndndiff} and \eqref{eq:dsndm} to evaluate the partial 
derivatives of $\mathrm{sn}(u;m)$ (note that the derivatives in 
\eqref{eq:sncndndiff} are in fact partial derivatives with $m$ held fixed)
this becomes
\begin{equation}
\begin{split}
\epsilon_N\frac{\partial\dot{S}_N}{\partial t}(x,t)&=
-\frac{2K(m)}{\pi}\frac{\partial\Phi}{\partial t}
\sqrt{m}\mathrm{cn}(u;m)\mathrm{dn}(u;m)\\
&\quad\quad\quad\quad{}+\epsilon_N\frac{\partial m}{\partial t}\left[
\sqrt{m}\mathrm{cn}(u;m)
\mathrm{dn}(u;m)f(u;m)
-\frac{1}{2\sqrt{m}}\mathrm{sn}(u;m)\right],
\end{split}
\end{equation}
and the first of the relations \eqref{eq:CNSNGNrelation} then follows
upon comparing with the formulae \eqref{eq:CNSNGNLibrational}.  The
second relation is established in a completely analogous manner.  This
finally completes the proof of Proposition~\ref{prop:outerelliptic} in
case \librational.

\subsubsection{Formulae for $\dot{C}$, $\dot{S}$, and $\dot{G}$ in
case \rotational}
Once again to obtain the asymptotic behavior of $\dot{\mathbf{O}}^\mathrm{out}(w)$
as $w\to 0$ and $w\to\infty$, it is sufficient to assume that
$\Im\{w\}>0$, and therefore to use the formula
\eqref{eq:dotOrotationalformulaUup} to analyze the limit $w\to \infty$ and
the formula \eqref{eq:dotOrotationalformulaBup} to analyze the limit
$w\to\infty$.  By choosing a path in the open upper half-plane, we easily
obtain the following asymptotic formulae for the Abel mapping:
\begin{equation}
A(P_1(w))=\frac{1}{2}H-i\pi + \frac{2ic}{\sqrt{w_0w_1}}\sqrt{-w} + \bo(w^{3/2}),\quad w\to 0,\quad \Im\{w\}>0,
\end{equation}
and
\begin{equation}
A(P_1(w))=-i\pi +\frac{2ic}{\sqrt{-w}} + \bo(w^{-3/2}),\quad w\to\infty,
\quad \Im\{w\}>0,
\end{equation}
and of course $A(P_2(w))=-A(P_1(w))$.  Also, we have the following asymptotic
formulae for $q(w)$:
\begin{equation}
q(w)=\left(\frac{w_0}{w_1}\right)^{1/4} + \bo(w),\quad w\to 0
\end{equation}
and
\begin{equation}
q(w)=1+\bo(w^{-1}),\quad w\to\infty.
\end{equation}
We will also require asymptotic expansions of $n(w)$ defined by 
\eqref{eq:nofwdef} valid for large and small $w$.  Beginning with the exact
formula
\begin{equation}
n(w)=-\frac{S(w)}{4}\left[\frac{\mathcal{H}}{2\pi i}\int_{w_1}^0\frac{ds}{S_+(s)(s-w)}
+\int_{w_0}^{w_1}\frac{ds}{S(s)(s-w)}\right]
\end{equation}
we may use the expansion $S(w)=w^{-3/2}(1+\bo(w^{-1}))$ as $w\to\infty$
and expand $(s-w)^{-1}$ in a geometric series for large $w$ to obtain
\begin{equation}
\begin{split}
n(w)&=\frac{w^{1/2}}{4}(1+\bo(w^{-1}))\left[\frac{\mathcal{H}}{2\pi i}
\int_{w_1}^0\frac{ds}{S_+(s)} +\int_{w_0}^{w_1}\frac{ds}{S(s)} \right.\\
&\quad\quad\quad\quad\quad\quad{}+\left. 
\left(\frac{\mathcal{H}}{2\pi i}\int_{w_1}^0\frac{s\,ds}{S_+(s)} +\int_{w_0}^{w_1}\frac{s\,ds}{S(s)}\right)w^{-1} + \bo(w^{-2})\right],\quad w\to\infty.
\end{split}
\end{equation}
But, by definition of $\mathcal{H}$, we may write this in the form
\begin{equation}
n(w)=\frac{1}{2w^{1/2}}\left[\frac{\mathcal{H}}{2\pi i}\int_{w_1}^0\frac{s+C}{2S_+(s)}\,ds +
\int_{w_0}^{w_1}\frac{s+C}{2S(s)}\,ds + \bo(w^{-1})\right],\quad w\to\infty.
\end{equation}
Identifying the integrals as cycles of the meromorphic differential $\Omega(P)$
then yields
\begin{equation}
\begin{split}
n(w)&=
\frac{1}{2w^{1/2}}\left[\frac{1}{2}\kappa + \bo(w^{-1})\right]\\
&= \frac{c}{2i\sqrt{-w}} + \bo(w^{-3/2}),\quad w\to\infty,\quad\Im\{w\}>0,
\end{split}
\end{equation}
where in the second line we have used Lemma~\ref{lem:dissection}.  On the other
hand, we may write $n(w)$ in the form
\begin{equation}
n(w)=-\frac{S(w)}{4}\left[-\frac{\mathcal{H}}{4\pi i}\oint_a\frac{ds}{S(s)(s-w)} +
\int_{w_0}^{w_1}\frac{ds}{S(s)(s-w)}\right]
\end{equation}
where it is understood that $w$ lies outside of the loop contour $a$
pictured in Figure~\ref{fig:rotationalhomology}.  If we let $w$ cross
the contour $a$ to approach the origin, then we obtain a residue contribution:
\begin{equation}
n(w)=-\frac{\mathcal{H}}{8}-\frac{S(w)}{4}\left[-\frac{\mathcal{H}}{4\pi i}\oint_a
\frac{ds}{S(s)(s-w)} +\int_{w_0}^{w_1}\frac{ds}{S(s)(s-w)}\right]
\end{equation}
where it is now understood that $w$ lies \emph{inside} of $a$.  In
particular, the expression in brackets has a well-defined value at
$w=0$ (it is in fact analytic in a neighborhood of $w=0$).  Since
$S(w)=\sqrt{w_0w_1}w^{1/2}(1+\bo(w))$ as $w\to 0$, we therefore see that
\begin{equation}
n(w)=-\frac{\mathcal{H}}{8}-\frac{\sqrt{w_0w_1}}{8}\left[-\frac{\mathcal{H}}{2\pi i}\oint_a
\frac{ds}{S(s)s} + 2\int_{w_0}^{w_1}\frac{ds}{S(s)s}\right]w^{1/2} + \bo(w^{3/2}),\quad w\to 0.
\end{equation}
The differential identity
\begin{equation}
d(w):=\frac{(w-w_0)(w-w_1)}{S(w)}\quad\implies\quad d'(w)=\frac{w}{2S(w)}-
\frac{w_0w_1}{2wS(w)}
\end{equation}
allows integration by parts, leading to the equivalent expansion
\begin{equation}
n(w)=-\frac{\mathcal{H}}{8}-\frac{1}{8\sqrt{w_0w_1}}\left[-\frac{\mathcal{H}}{2\pi i}\oint_a
\frac{s\,ds}{S(s)} +2\int_{w_0}^{w_1}\frac{s\,ds}{S(s)}\right]w^{1/2} +\bo(w^{3/2}),\quad w\to 0.
\end{equation}
Using the definitions of $\mathcal{H}$ and the meromorphic differential $\Omega(P)$,
this may be written in the form
\begin{equation}
\begin{split}
n(w)&=-\frac{\mathcal{H}}{8} -\frac{1}{4\sqrt{w_0w_1}}\left[-\frac{\mathcal{H}}{2\pi i}\oint_a\Omega(P) +\oint_b\Omega(P)\right]w^{1/2} + \bo(w^{3/2})\\
&= -\frac{\mathcal{H}}{8}-\frac{\kappa}{4\sqrt{w_0w_1}}w^{1/2} + \bo(w^{3/2})\\
&=-\frac{\mathcal{H}}{8}+\frac{c}{2i\sqrt{w_0w_1}}\sqrt{-w} +\bo(w^{3/2}),\quad w\to 0,\quad \Im\{w\}>0,
\end{split}
\end{equation}
where Lemma~\ref{lem:dissection} has again been used.

Applying these expansions for $w$ small with $\Im\{w\}>0$ to the
formula \eqref{eq:dotOrotationalformulaBup} with the help of
\eqref{eq:tjP1K}--\eqref{eq:tjP2K} then yields
\begin{equation}
\dot{O}_{11}^{0,0}=\dot{O}_{22}^{0,0}=
\frac{1}{2}\left(\frac{w_0}{w_1}\right)^{1/4}\frac{\Theta(\tfrac{1}{2}\mathcal{H};\mathcal{H})}
{\Theta(0;\mathcal{H})}e^{\mathcal{H}/8}\left[\frac{\Theta(2i\varphi_1;\mathcal{H})}{\Theta(2i\varphi_2;\mathcal{H})}
e^{i\nu}+
\frac{\Theta(2i\varphi_2;\mathcal{H})}{\Theta(2i\varphi_1;\mathcal{H})}e^{-i\nu}
\right]
\end{equation}
and
\begin{equation}
\dot{O}^{0,0}_{12}=-\dot{O}^{0,0}_{21}=
\frac{1}{2i}\left(\frac{w_0}{w_1}\right)^{1/4}
\frac{\Theta(\tfrac{1}{2}\mathcal{H};\mathcal{H})}{\Theta(0;\mathcal{H})}e^{\mathcal{H}/8}
\left[\frac{\Theta(2i\varphi_1;\mathcal{H})}{\Theta(2i\varphi_2;\mathcal{H})}e^{i\nu}-\frac{\Theta(2i\varphi_2;\mathcal{H})}{\Theta(2i\varphi_1;\mathcal{H})}e^{-i\nu}\right].
\end{equation}
As in case \librational, we observe that the matrix
$\dot{\mathbf{O}}^{0,0}=\dot{\mathbf{O}}^\mathrm{out}(0)$ is invariant under 
conjugation by
$\sigma_2$.  Substitution from these formulae into \eqref{eq:dotClibrational} and
\eqref{eq:dotSlibrational} gives, respectively,
\begin{equation}
\dot{C}=\frac{(-1)^{\#\Delta}}{2}
\left(\frac{w_0}{w_1}\right)^{1/4}\frac{\Theta(\tfrac{1}{2}\mathcal{H};\mathcal{H})}{\Theta(0;\mathcal{H})}
e^{\mathcal{H}/8}\left[\frac{\Theta(2i\varphi_1;\mathcal{H})}{\Theta(2i\varphi_2;\mathcal{H})}e^{i\nu} +\frac{\Theta(2i\varphi_2;\mathcal{H})}{\Theta(2i\varphi_1;\mathcal{H})}e^{-i\nu}\right]
\label{eq:CNthetasrotational}
\end{equation}
and
\begin{equation}
\dot{S}=-\frac{(-1)^{\#\Delta}}{2i}
\left(\frac{w_0}{w_1}\right)^{1/4}\frac{\Theta(\tfrac{1}{2}\mathcal{H};\mathcal{H})}{\Theta(0;\mathcal{H})}
e^{\mathcal{H}/8}\left[\frac{\Theta(2i\varphi_1;\mathcal{H})}{\Theta(2i\varphi_2;\mathcal{H})}e^{i\nu} -\frac{\Theta(2i\varphi_2;\mathcal{H})}{\Theta(2i\varphi_1;\mathcal{H})}e^{-i\nu}\right].
\label{eq:SNthetasrotational}
\end{equation}
The higher-order coefficients we need to calculate $\dot{G}$
are obtained by continuing the expansion for $w$ near the origin to
higher order, and also by expanding \eqref{eq:dotOrotationalformulaUup} as
$w\to\infty$ with $\Im\{w\}>0$ with the help of
\eqref{eq:tjP1K}--\eqref{eq:tjP2K}:
\begin{equation}
\begin{split}
\dot{O}^{0,1}_{22}&=-\frac{ic}{\sqrt{w_0w_1}}\left(\frac{w_0}{w_1}\right)^{1/4}\frac{\Theta(\tfrac{1}{2}\mathcal{H};\mathcal{H})}{\Theta(0;\mathcal{H})}e^{\mathcal{H}/8}
\left[\frac{\Theta'(2i\varphi_1;\mathcal{H})}{\Theta(2i\varphi_2;\mathcal{H})}e^{i\nu}+\frac{\Theta'(2i\varphi_2;\mathcal{H})}{\Theta(2i\varphi_1;\mathcal{H})}e^{-i\nu}\right.\\
&\quad\quad\quad\quad\quad\quad\quad\quad\quad\quad\quad\quad\quad\quad{}+\left.\frac{1}{4}\frac{\Theta(2i\varphi_1;\mathcal{H})}{\Theta(2i\varphi_2;\mathcal{H})}e^{i\nu}-
\frac{1}{4}
\frac{\Theta(2i\varphi_2;\mathcal{H})}{\Theta(2i\varphi_1;\mathcal{H})}e^{-i\nu}
\right],
\end{split}
\end{equation}
\begin{equation}
\begin{split}
\dot{O}^{0,1}_{12}&=-\frac{c}{\sqrt{w_0w_1}}
\left(\frac{w_0}{w_1}\right)^{1/4}\frac{\Theta(\tfrac{1}{2}\mathcal{H};\mathcal{H})}{\Theta(0;\mathcal{H})}
e^{\mathcal{H}/8}\left[\frac{\Theta'(2i\varphi_1;\mathcal{H})}{\Theta(2i\varphi_2;\mathcal{H})}
e^{i\nu}-
\frac{\Theta'(2i\varphi_2;\mathcal{H})}{\Theta(2i\varphi_1;\mathcal{H})}e^{-i\nu}
\right.\\
&\left.\quad\quad\quad\quad\quad\quad\quad\quad\quad\quad\quad\quad\quad\quad{}+\frac{1}{4}
\frac{\Theta(2i\varphi_1;\mathcal{H})}{\Theta(2i\varphi_2;\mathcal{H})}e^{i\nu}
+\frac{1}{4}\frac{\Theta(2i\varphi_2;\mathcal{H})}{\Theta(2i\varphi_1;\mathcal{H})}e^{-i\nu}
\right],
\end{split}
\end{equation}
and
\begin{equation}
\dot{O}^{\infty,1}_{12}=c\left[
\frac{\Theta'(2i\varphi_2;\mathcal{H})}{\Theta(2i\varphi_2;\mathcal{H})}
-\frac{\Theta'(2i\varphi_1;\mathcal{H})}{\Theta(2i\varphi_1;\mathcal{H})}-\frac{1}{2}\right].
\end{equation}
Substituting into \eqref{eq:dotvlibrational} we have
\begin{equation}
\dot{G}=
c\left[1-\frac{1}{w_1}\frac{\Theta(\tfrac{1}{2}\mathcal{H};\mathcal{H})^2e^{\mathcal{H}/4}}{\Theta(0;\mathcal{H})^2}\right]
\left[\frac{\Theta'(2i\varphi_2;\mathcal{H})}{\Theta(2i\varphi_2;\mathcal{H})}-
\frac{\Theta'(2i\varphi_1;\mathcal{H})}{\Theta(2i\varphi_1;\mathcal{H})}-\frac{1}{2}\right].
\label{eq:GNrotational}
\end{equation}

\subsubsection*{Evaluation of $\mathcal{H}$.  Transformations of $\Theta$ with respect to $\mathcal{H}$}
By simple contour deformations (referring to Figure~\ref{fig:rotationalhomology}) and
the definitions of $\tilde{S}(P)$ and of $S(w)$, we see that
\begin{equation}
\mathcal{H} = -2\pi\frac{\displaystyle\int_{w_0}^{w_1}\frac{dw}{\sqrt{w(w-w_0)(w-w_1)}}}
{\displaystyle\int_{w_1}^0\frac{dw}{\sqrt{-w(w-w_0)(w-w_1)}}}
\end{equation}
where in each case the positive square root is meant.  Although this is a
negative real quantity and hence is associated with an elliptic parameter that
is in the normal case, it will be in fact convenient to rewrite $\mathcal{H}$ in the 
form $\mathcal{H}=2\mathcal{H}_0$ and associate $\mathcal{H}_0$ with an elliptic parameter $m$.  To this end,
we recall the substitution \eqref{eq:wsKsubstKm}
and obtain
\begin{equation}
\int_{w_1}^{0}\frac{dw}{\sqrt{-w(w-w_0)(w-w_1)}} = 
\frac{2K(m)}{\sqrt{-w_0}+\sqrt{-w_1}}
\end{equation}
where
\begin{equation}
m:=\frac{4\sqrt{w_0w_1}}{(\sqrt{-w_0}+\sqrt{-w_1})^2}\in (0,1).
\label{eq:modulusK}
\end{equation}
For the denominator of $\mathcal{H}$, we use the globally bijective 
fractional-linear substitution
\begin{equation}
w=-\sqrt{w_0w_1}\frac{(\sqrt{-w_0}+\sqrt{-w_1})-(\sqrt{-w_0}-\sqrt{-w_1})s}
{(\sqrt{-w_0}+\sqrt{-w_1})+(\sqrt{-w_0}-\sqrt{-w_1})s}
\end{equation}
to obtain
\begin{equation}
\int_{w_0}^{w_1}\frac{dw}{\sqrt{w(w-w_0)(w-w_1)}}=
\frac{4K(m')}{\sqrt{-w_0}+\sqrt{-w_1}}
\end{equation}
where $m':=1-m$.  Therefore, it follows that
\begin{equation}
\mathcal{H}=2\mathcal{H}_0,\quad \mathcal{H}_0:=-2\pi\frac{K(m')}{K(m)}
\end{equation}
where the elliptic modulus $m$ is given by \eqref{eq:modulusK}.

Multiplying $\mathcal{H}_0$ by two corresponds to a Gauss transformation (in reverse).
If we try directly to apply \eqref{eq:Gauss1}--\eqref{eq:Gauss2} to express
$\Theta$-functions with parameter $\mathcal{H}$ in terms of those with parameter $\mathcal{H}_0$ we
will need to choose branches of square roots.  This difficulty can, however,
be circumvented by using the formula \cite{Wolfram}
\begin{equation}
\Theta(2z;2\mathcal{H})=\frac{\Theta(z;\mathcal{H})^2 +\Theta(z+i\pi;\mathcal{H})^2}{2\Theta(0;2\mathcal{H})},\quad
z\in\mathbb{C},\quad \Re\{\mathcal{H}\}<0.
\label{eq:Wolfram}
\end{equation}
We may now rewrite 
\eqref{eq:CNthetasrotational} and \eqref{eq:SNthetasrotational} in terms of
$\Theta$-functions with parameter $\mathcal{H}_0$ with the help of \eqref{eq:Wolfram}, 
and then substitute into these formulae the definitions of the Jacobi 
elliptic functions \eqref{eq:sndef}--\eqref{eq:dndef}.  
Using also the positive square roots of
the identities \eqref{eq:thetasmlibrational}, the standard $\Theta$-function
properties \eqref{eq:Thetaeven}--\eqref{eq:thetamonodromy} along with the
fact that $\Theta(i\pi+\tfrac{1}{2}\mathcal{H}_0;\mathcal{H}_0)=0$, and also the identity
\begin{equation}
\left(\frac{w_0}{w_1}\right)^{1/4}=\frac{1+\sqrt{m'}}{\sqrt{m}}
\end{equation}
following from \eqref{eq:modulusK}, we see that 
\begin{equation}
\dot{C}=\frac{(-1)^{\#\Delta}}{2i}\left[\frac{\mathrm{dn}(Z_1;m)^2+\sqrt{m'}}
{\sqrt{m m'}\mathrm{sn}(Z_1;m)^2-\sqrt{m}\mathrm{cn}(Z_1;m)^2}+
\frac{\mathrm{dn}(Z_2;m)^2+\sqrt{m'}}
{\sqrt{m m'}\mathrm{sn}(Z_2;m)^2-\sqrt{m}\mathrm{cn}(Z_2;m)^2}\right]
\end{equation}
and
\begin{equation}
\dot{S}=\frac{(-1)^{\#\Delta}}{2}\left[\frac{\mathrm{dn}(Z_1;m)^2+\sqrt{m'}}
{\sqrt{m m'}\mathrm{sn}(Z_1;m)^2-\sqrt{m}\mathrm{cn}(Z_1;m)^2}-
\frac{\mathrm{dn}(Z_2;m)^2+\sqrt{m'}}
{\sqrt{m m'}\mathrm{sn}(Z_2;m)^2-\sqrt{m}\mathrm{cn}(Z_2;m)^2}\right],
\end{equation}
where $Z_j$ are defined in terms of $\varphi_j$ by \eqref{eq:Zjdefine} (although
in the current case $\varphi_j$ and $m$ have different meanings than they did in
case \librational).

To express $\dot{G}$ given by \eqref{eq:GNrotational} in terms of elliptic
functions, we first note that since $\mathcal{H}=2\mathcal{H}_0$, \eqref{eq:Wolfram} together
with the positive square roots of the identities \eqref{eq:thetasmlibrational},
the definition \eqref{eq:modulusK}, and the fact that 
$\Theta(\tfrac{1}{2}\mathcal{H}_0+i\pi;\mathcal{H}_0)=0$ imply that
\begin{equation}
\frac{1}{w_1}\frac{\Theta(\tfrac{1}{2}\mathcal{H};\mathcal{H})^2e^{\mathcal{H}/4}}{\Theta(0;\mathcal{H})^2}=-\frac{1}{\sqrt{w_0w_1}}.
\end{equation}
Elementary contour deformations show that $c\mathcal{D}=\pi$, where $\mathcal{D}$
is defined for case \rotational\ by \eqref{eq:deltarotational}. If we now let $w_0$ and $w_1$
depend on $(x,t)$ such that the moment and integral conditions $M=I=0$ are satisfied, then
Proposition~\eqref{prop:phasevelocity} applies, and from \eqref{eq:kappapartials} one finds
that
\begin{equation}
c\left[1-\frac{1}{w_1}
\frac{\Theta(\tfrac{1}{2}\mathcal{H};\mathcal{H})^2e^{\mathcal{H}/4}}{\Theta(0;\mathcal{H})^2}\right]=4\frac{\partial\Phi_\Delta}{\partial t} = 4\frac{\partial\Phi}{\partial t}.
\end{equation}
Also, since $\varphi_2=\varphi_1+\pi+i\mathcal{H}/4$, we may write $\dot{G}$
in the form
\begin{equation}
\dot{G}=-2i\frac{\partial\Phi}{\partial t}\frac{d}{d\varphi_1}
\log\left[e^{-i\varphi_1}\frac{\Theta(2i\varphi_1-\tfrac{1}{2}\mathcal{H};\mathcal{H})}
{\Theta(2i\varphi_1;\mathcal{H})}\right].
\end{equation}
Substituting from \eqref{eq:sndef}--\eqref{eq:dndef} after first using
\eqref{eq:Wolfram} and the positive square roots of \eqref{eq:thetasmlibrational}
we arrive at the formula
\begin{equation}
\dot{G}=-2i\frac{\partial\Phi}{\partial t}
\frac{d}{d\varphi_1}\log\left(\frac{\mathrm{cn}(Z_1;m)^2-\sqrt{m'}
\mathrm{sn}(Z_1;m)^2}{\mathrm{dn}(Z_1;m)^2+\sqrt{m'}}\right).
\end{equation}

\subsubsection*{Use of elliptic function identities for a fixed elliptic parameter}  Using
the Pythagorean identities \eqref{eq:JacobiPythagoras} allows us to write
$\dot{C}$ and $\dot{S}$ in terms of $\mathrm{sn}(Z_1;m)^2$ and
$\mathrm{sn}(Z_2;m)^2$ only, and $\dot{G}$ as a logarithmic derivative
of a quantity involving $\mathrm{sn}(Z_1;m)^2$ only.  
Next, we apply the double-angle identity
\eqref{eq:doubleangle} and note that 
\begin{equation}
2Z_1 = W-K(m)+iK(m')\quad\text{and}\quad 2Z_2=W+K(m)-iK(m')\quad\text{where}
\quad W:=\frac{2\nu K(m)}{\pi},
\end{equation}
allowing the use of the identities
\begin{equation}
\mathrm{cn}(W\pm K(m)\mp iK(m');m)=\frac{i\sqrt{m'}}{\sqrt{m}\mathrm{cn}(W;m)}
\quad\text{and}\quad
\mathrm{dn}(W\pm K(m)\mp iK(m');m)=\mp i\sqrt{m'}\frac{\mathrm{sn}(W;m)}{\mathrm{cn}(W;m)}.
\end{equation}
Finally, we again apply \eqref{eq:JacobiPythagoras}, 
and in the case of $\dot{G}$ 
the differential identities \eqref{eq:sncndndiff},
to express $\dot{C}$, $\dot{S}$, and $\dot{G}$ 
simply as:
\begin{equation}
\begin{split}
\dot{C}&=(-1)^{\#\Delta}\mathrm{cn}(W;m)\\
\dot{S}&=-(-1)^{\#\Delta}\mathrm{sn}(W;m)\\
\dot{G}&=-\frac{4K(m)}{\pi}\frac{\partial\Phi}{\partial t}
\mathrm{dn}(W;m).
\end{split}
\label{eq:KCSGpenultimate}
\end{equation}
Finally, inserting the value of $\nu$ as $\nu=\Phi/\epsilon_N + \pi\#\Delta$, the phase
$W$ becomes
\begin{equation}
W=\frac{2\Phi K(m)}{\pi\epsilon_N}+2\#\Delta K(m),
\end{equation}
and since $\mathrm{sn}(u+2K(m);m)=-\mathrm{sn}(u;m)$,
$\mathrm{cn}(u+2K(m);m)=-\mathrm{cn}(u;m)$, while
$\mathrm{dn}(u+2K(m);m)=\mathrm{dn}(u;m)$, we see that even though $\#\Delta$ is not
necessarily even in case \rotational, the formulae \eqref{eq:KCSGpenultimate} indeed have the desired form
\eqref{eq:CNSNGNRotational}.

To prove that \eqref{eq:CNSNGNrelation} holds also in case \rotational,
we differentiate $\dot{S}_N(x,t)$ given by \eqref{eq:CNSNGNRotational}, keeping
in mind that $m=m(x,t)$:
\begin{equation}
\epsilon_N\frac{\partial\dot{S}_N}{\partial t}(x,t)=
-\epsilon_N\frac{\partial m}{\partial t}
\frac{\partial}{\partial m}\mathrm{sn}(u;m) - 
\frac{\partial}{\partial u}\mathrm{sn}(u;m)
\left[\frac{2\Phi K'(m)}{\pi}\frac{\partial m}{\partial t} +
\frac{2K(m)}{\pi}\frac{\partial\Phi}{\partial t}\right],\quad\quad
u=\frac{2\Phi K(m)}{\pi\epsilon_N}.
\end{equation}
Computing the partial derivatives of $\mathrm{sn}(u;m)$ using
\eqref{eq:sncndndiff} and \eqref{eq:dsndm} this becomes
\begin{equation}
\epsilon_N\frac{\partial\dot{S}_N}{\partial t}(x,t)=
-\frac{2K(m)}{\pi}\frac{\partial\Phi}{\partial t}\mathrm{cn}(u;m)
\mathrm{dn}(u,m) +\epsilon_N\frac{\partial m}{\partial t}
\mathrm{cn}(u;m)\mathrm{dn}(u;m)f(u;m)
\end{equation}
from which the first of the formulae 
\eqref{eq:CNSNGNrelation} follows by comparison with
\eqref{eq:CNSNGNRotational} (the second is proved analogously).  
This completes the proof of
Proposition~\ref{prop:outerelliptic} in case \rotational.


\begin{thebibliography}{99}
\bibitem{Akhiezer} N. I. Akhiezer, \textit{Elements of the Theory of Elliptic Functions}, Translations of
Mathematical Monographs \textbf{79}, American Mathematical Society, Providence, 1990.
\bibitem{BaikKMM07} J. Baik, T. Kriecherbauer, K. T.-R. McLaughlin, and
P. D. Miller, \textit{Discrete Orthogonal Polynomials:  Asymptotics and 
Applications}, Annals of Mathematics Studies \textbf{164}, Princeton 
University Press, Princeton, 2007.
\bibitem{BaroneEMS71} A. Barone, F. Esposito, C. J. Magee, and
  A. C. Scott, ``Theory and applications of the sine-Gordon
  equation,'' \textit{Riv. Nuovo Cim.}, \textbf{1}, 227--267, 1971.
\bibitem{BuckinghamM08} R. Buckingham and P. D. Miller, ``Exact
  solutions of semiclassical non-characteristic Cauchy problems for
  the sine-Gordon equation,'' \textit{Physica D}, \textbf{237},
  2296--2341, 2008.
  \bibitem{BuckinghamMseparatrix} R. Buckingham and P. D. Miller, ``The sine-Gordon equation
  in the semiclassical limit:  critical behavior near a separatrix", preprint, 2011.
\bibitem{DeiftVZ97} P. Deift, S. Venakides, and X. Zhou, ``New results in small dispersion KdV by
an extension of the steepest descent method for Riemann-Hilbert problems," \textit{Internat. Math.
Res. Notices}, \textbf{1997}, 285--299, 1997.
\bibitem{DeiftZ93} P. Deift and X. Zhou, ``A steepest-descent method for
oscillatory Riemann-Hilbert problems:  asymptotics for the mKdV equation,'' 
\textit{Ann. of Math.}, \textbf{137}, 295--368, 1993.
\bibitem{DeiftZ95} P. Deift and X. Zhou, ``Asymptotics for the Painlev\'e II
equation,'' \textit{Comm. Pure Appl. Math.}, \textbf{48}, 277--337, 1995.
\bibitem{Dubrovin81} B. A. Dubrovin, ``Theta functions and non-linear
equations,'' \textit{Russian Math. Surveys}, \textbf{36}, 11--92, 1981.
\bibitem{ErcolaniFM84} N. Ercolani, M. G. Forest, and D. W. McLaughlin, ``Modulational 
stability of two-phase sine-Gordon wavetrains,'' \textit{Stud. Appl. Math.}, \textbf{71}, 91--101, 1984.
\bibitem{ErcolaniFMM87} N. Ercolani, M. G. Forest, D. W. McLaughlin, and R. Montgomery,
``Hamiltonian structure for the modulation equations of a sine-Gordon wavetrain,''
\textit{Duke Math. J.}, \textbf{55}, 949--983, 1987.
\bibitem{ForestM83} M. G. Forest and D. W. McLaughlin, 
``Modulations of sinh-Gordon and sine-Gordon wavetrains,''
\textit{Stud. Appl. Math.},  \textbf{68}, 11--59, 1983.
\bibitem{Goldstein} H. Goldstein, \textit{Classical Mechanics}, 
Second edition, Addison-Wesley, Reading, Mass., 1980.
\bibitem{KamvissisMM03} S. Kamvissis, K. T.-R. McLaughlin, and P. D. Miller, \textit{Semiclassical Soliton Ensembles for the Focusing Nonlinear 
Schr\"odinger Equation}, Annals of Mathematics Studies, Volume 154, Princeton University Press,
Princeton, NJ, 2003.
\bibitem{KlausS02} M. Klaus and J. K. Shaw, ``Purely imaginary eigenvalues
of Zakharov-Shabat systems,'' \textit{Phys. Rev. E}, \textbf{65}, 36607--36611, 2002.
\bibitem{Krichever77} I. M. Krichever, ``Methods of algebraic geometry in
the theory of nonlinear equations,'' \textit{Russian Math. Surveys},
\textbf{32}, 185--213, 1977.
\bibitem{LyngM07} G. D. Lyng and P. D. Miller, ``The $N$-soliton of
  the focusing nonlinear Schr\"odinger equation for $N$ large,''
  \textit{Comm. Pure Appl. Math.}, \textbf{60}, 951--1026, 2007.
\bibitem{Miller02} P. D. Miller, ``Asymptotics of semiclassical
  soliton ensembles: rigorous justification of the WKB
  approximation,'' \textit{Internat. Math. Res. Not.}, \textbf{2002},
  383--454, 2002.
  \bibitem{MillerK98} P. D. Miller and S. Kamvissis, ``On the semiclassical limit of the focusing nonlinear
  Schr\"odinger equation," \textit{Phys. Lett. A}, \textbf{247}, 75--86, 1998.
\bibitem{SatsumaY74} J. Satsuma and N. Yajima, ``Initial value problems of one-dimensional
self-modulation of nonlinear waves in dispersive media,''
\textit{Supp. Prog. Theo. Phys.}, \textbf{55}, 284--306, 1974.
\bibitem{Scott69} A. C. Scott, ``Waveform stability on a nonlinear Klein-Gordon equation,'' \textit{Proc. IEEE (Lett.)}, \textbf{57}, 1338--1339, 1969. 
\bibitem{ScottCR76} A. C. Scott, F. Chu, S. Reible, ``Magnetic flux propagation on a Josephson
transmission line,'' \textit{J. Appl. Phys.}, \textbf{47}, 3272--3286, 1976.
\bibitem{TovbisVZ04} A. Tovbis, S. Venakides, and X. Zhou, ``On semiclassical (zero dispersion limit) solutions of the focusing nonlinear Schr\"odinger equation," \textit{Comm. Pure Appl. Math.}, \textbf{57}, 877--985, 2004.
\bibitem{Whitham65} G. B. Whitham, ``Non-linear dispersive waves,''
\textit{Proc. Roy. Soc. Lond. Ser. A, Math. and Phys. Sci.}, 
\textbf{283}, 238--261, 1965.
\bibitem{Whitham74} G. B. Whitham, \textit{Linear and Nonlinear Waves},
Wiley-Interscience, New York, NY, 1974.
\bibitem{Wolfram}
\texttt{http://functions.wolfram.com/EllipticFunctions/EllipticTheta1/18/02/02/01/}.
\end{thebibliography}
\end{document}